 \titleformat{\subparagraph}[hang]{\normalfont}{\thesubparagraph}{0pt}{\underline}
 \titleformat{\paragraph}[hang]{\normalfont}{\theparagraph}{0pt}{\myuline}
\renewcommand{\todo}[2][]{\tikzexternaldisable\@todo[#1]{#2}\tikzexternalenable}
\setlist[enumerate]{itemsep=2.0pt plus 1.0 pt minus 0.5pt, topsep=4.0pt plus 2.0 pt minus 1.0pt}
\setlist[itemize]{itemsep=2.0pt plus 1.0 pt minus 0.5pt, topsep=4.0pt plus 2.0 pt minus 1.0pt}
\newcolumntype{M}[1]{>{\centering\arraybackslash}m{#1}}
\newcommand{\myuline}[1]{%
  \uline{\phantom{#1}}%
  \llap{\contour{white}{#1}}%
}
\DeclareFontShape{U}{wasy}{b}{n}{ <-10> ssub * wasy/m/n
 <10> <10.95> <12> <14.4> <17.28> <20.74> <24.88>wasyb10 }{}
\DeclareFontShape{U}{wasy}{b}{it}{ <-10> ssub * wasy/m/n
 <10> <10.95> <12> <14.4> <17.28> <20.74> <24.88>wasyb10 }{}
\DeclareMathAlphabet\mathbfcal{OMS}{cmsy}{b}{n}
\newcommand\numberthis{\addtocounter{equation}{1}\tag{\theequation}}
\renewcommand{\Re}{\operatorname{Re}}
\renewcommand{\Im}{\operatorname{Im}}
\DeclareMathOperator{\sign}{sign}
\numberwithin{equation}{section}
\theoremstyle{plain}
\newtheorem{alphtheorem}{Theorem}
\newtheorem{arabictheorem}{Theorem}
\newtheorem{theorem}{Theorem}[section]
\newtheorem*{theorem*}{Theorem}
\newtheorem*{conjecture*}{Conjecture}
\newtheorem*{corollary*}{Corollary}
\newtheorem{proposition}{Proposition}[subsection]
\newtheorem{lemma}[proposition]{Lemma}
\theoremstyle{definition}
\newtheorem{definition}{Definition}[subsection]
\newtheorem{remark}{Remark}[subsection]
\newtheorem*{remark*}{Remark}
\newtheorem{bigquestion}{Question}
\newcommand{\mb}[1]{\mathbb{#1}}
\newcommand{\mc}[1]{\mathcal{#1}}
\newcommand{\mr}[1]{\mathrm{#1}}
\newcommand{\p}{\partial}
\newcommand{\lp}{\left}
\newcommand{\rp}{\right}
\newcommand\supp{\mathrm{supp}}
\newlength{\dhatheight}
\newcommand{\uL}{\underline{L}}
\newcommand{\swei}[2]{#1^{[{#2}]}}
\newcommand{\sml}[2]{#1_{ml}^{[{#2}],\,a\omega}}
\newcommand{\smlambda}[2]{#1_{m\Lambda}^{[{#2}],\,a\omega}}
\begin{document}
 \title{\LARGE \textbf{Boundedness and decay for the Teukolsky equation \\ on Kerr in the full subextremal range $|a|<M$:\\ frequency space analysis}}

\author[1]{{\Large Yakov Shlapentokh-Rothman}}
\author[2]{{\Large  Rita \mbox{Teixeira da Costa}\vspace{0.4cm}}}

\affil[1]{\small  Princeton University, Department of Mathematics, Washington Road, Princeton, NJ 08544,
United States \vspace{0.2cm} \ }

\affil[2]{\small 
University of Cambridge, Department of Pure Mathematics and Mathematical Statistics, Wilberforce Road, Cambridge CB3 0WA, United Kingdom}

\date{\today}

\maketitle

\begin{abstract}
This paper is the first of a series regarding the Teukolsky equation of spin $\pm 1$ and spin $\pm 2$ on Kerr backgrounds in the full subextremal range of parameters $|a|<M$. In the present paper, we study fixed frequency solutions of the transformed system of equations introduced by Dafermos, Holzegel and Rodnianski, obtaining estimates which are uniform in the separation parameters. A corollary of our result, to be laid out in the second paper of the series, is that solutions of the Teukolsky equation on subextremal Kerr arising from regular initial data remain bounded and decay in time. This is a key step in establishing the full linear stability of Kerr under electromagnetic and gravitational perturbations. Our estimates can also be applied to understanding more delicate features of the Teukolsky equation, such as their scattering properties.
\end{abstract}

\bigskip\bigskip

\tableofcontents

\section{Introduction}

The Einstein vacuum equations 
\begin{equation}
\mr{Ric}(g)=0\,, \label{eq:einstein-vacuum}
\end{equation}
to be satisfied by a 4-dimensional Lorentzian manifold $(\mc{M},g)$ representing spacetime, are the system of geometric partial differential equations which lie at the core of the theory of General Relativity. 

Aside from the flat Minkowski space, several other explicit solutions of \eqref{eq:einstein-vacuum} are known, most notably the Kerr family of black holes \cite{Kerr1963}. Kerr black holes are axisymmetric, asymptotically flat spacetimes parametrized by a mass, $M>0$, and specific angular momentum, $a$, satisfying the bound $|a|\leq M$. We say these are subextremal if $|a|<M$, with the spherically symmetric case $a=0$ being known as the Schwarzschild solution \cite{Schwarzschild1916}, and extremal if $|a|=M$.  

Subextremal Kerr black holes have been shown to be isolated and, under stronger assumptions, unique in the class of regular stationary, asymptotically flat black hole solutions to \eqref{eq:einstein-vacuum} (see \cite{Alexakis2010}, the review \cite{Chrusciel2012} and  references therein); thus, they play a privileged role in our understanding of vacuum General Relativity. For this reason, a great deal of literature on General Relativity is centered around the question

\noindent
\begin{center}
\noindent \textit{What are the dynamics of perturbations of Kerr under the Einstein vacuum equations \eqref{eq:einstein-vacuum}?}
\end{center}

Indeed, the Einstein vacuum equations \eqref{eq:einstein-vacuum} can be seen as evolution equations  for given data: they are well-posed for sufficiently regular data given on a Cauchy hypersurface (see \cite{Foures-Bruhat1952,Choquet-Bruhat1969} and the recent \cite{Sbierski2016}) or on a characteristic hypersurface (see \cite{Rendall1990,Luk2012}). One can view Kerr black hole spacetimes as arising from some explicit initial data on a suitable Cauchy hypersurface or alternatively from some characteristic data for \eqref{eq:einstein-vacuum} in the ancient past or far future, i.e.\ an asymptotic past or future state, respectively.  Thus, we can realize the previous question in the more concrete problem:

\begin{bigquestion} \label{qn:Kerr-perturbations-nonlinear-2}
In evolution under the Einstein vacuum equations \eqref{eq:einstein-vacuum}, 
\begin{enumerate}[label=\bfseries\Alph*.,ref=\Alph*]
\item \textit{Stability for Cauchy data}. Do initial data sets close to that corresponding to a Kerr black hole give rise to a spacetime whose exterior remains close to said member of the Kerr family, and does the spacetime asymptotically approach a Kerr solution with nearby mass and specific angular momentum? \label{it:Kerr-perturbations-nonlinear-2-Cauchy}
\item \textit{Stability for scattering data}.  Is an asymptotic past (resp.\ future) state close to that corresponding to a Kerr black hole mapped to an asymptotic future (resp.\ past) state which is close to that of some member of the Kerr family?\label{it:Kerr-perturbations-nonlinear-2-scattering}
\end{enumerate}
\end{bigquestion}

The answer to these questions is thought to be positive, at least in the subextremal $|a|<M$ case. To date, stability for general perturbations of Cauchy data has only been shown to hold for the Minkowski solution, first by the monumental work of Christodoulou and Klainerman \cite{Christodoulou1993} and then by an alternative method in \cite{Lindblad2010}; see also \cite{Bieri2009,Huneau2018b,Hintz2020,Keir2018} for further developments along the lines of these two approaches. Question~\ref{qn:Kerr-perturbations-nonlinear-2}\ref{it:Kerr-perturbations-nonlinear-2-Cauchy} has inspired a lot of work which we will address in more detail shortly. We highlight already two particular results. In \cite{Dafermos2016a}, Dafermos, Holzegel and Rodnianski showed the linear stability of the Schwarzschild ($a=0$) subfamily. In \cite{Klainerman2017},  Klainerman and Szeftel then showed that the Schwarzschild subfamily is nonlinearly stable under the special class of polarized axisymmetric perturbations of Cauchy data; see also the forthcoming \cite{Taylor2019}. These results pertain to the black hole exterior; the question of stability in the \textit{interior} of the Kerr black hole is of an entirely different nature, see \cite{Dafermos2017a}.

On the other hand, questions relating to stability for scattering data are not yet fully understood even for Minkowski spacetime in $3+1$ dimensions (though note \cite{Wang2013} for $4+1$ and higher dimensions). We also point out work \cite{Dafermos2013} related to Question~\ref{qn:Kerr-perturbations-nonlinear-2}\ref{it:Kerr-perturbations-nonlinear-2-scattering}, where the authors construct a nontrivial family of spacetimes solving \eqref{eq:einstein-vacuum} whose exterior geometry settles down to that of a fixed Kerr black hole.

Attempts have also been made to answer analogues of Question~\ref{qn:Kerr-perturbations-nonlinear-2} for solutions to the Einstein equations with a cosmological constant, $\Lambda$, i.e.\ 
\begin{align}
\mr{Ric}(g)-\frac12\Lambda g=0\,. \label{eq:einstein-vacuum-Lambda}
\end{align}
In the case $\Lambda>0$, stability of the trivial solution of \eqref{eq:einstein-vacuum-Lambda}, known as de Sitter space, under perturbations of Cauchy data was shown in \cite{Friedrich1986}. The problem is considerably easier than the stability of Minkowski settled in \cite{Christodoulou1993}, as $\Lambda>0$ ensures faster decay rates than for $\Lambda=0$. Indeed, more progress has been made in the analogue of Question~\ref{qn:Kerr-perturbations-nonlinear-2}\ref{it:Kerr-perturbations-nonlinear-2-Cauchy} in this setting: Kerr--de Sitter, the $\Lambda>0$ cousin of the Kerr black hole, has been shown to be nonlinearly stable for very small angular momentum in \cite{Hintz2016}.  

In the case $\Lambda<0$, the initial value problem for the Einstein equations \eqref{eq:einstein-vacuum-Lambda} requires an additional boundary condition at infinity. Assuming reflecting boundary conditions at infinity, one expects instability rather than stability for the trivial solution, known as Anti-de Sitter space \cite{Dafermos2006}. This has indeed been proven for \eqref{eq:einstein-vacuum-Lambda} coupled to several spherically symmetric matter models by Moschidis \cite{Moschidis2017a,Moschidis2018}.

%\medskip

In the study of questions of nonlinear stability such as Question~\ref{qn:Kerr-perturbations-nonlinear-2} and their $\Lambda\neq 0$ analogues, one needs to first define what one means by \textit{perturbations} of a given spacetime. One approach, known in the physics literature as the Newman--Penrose formalism \cite{Newman1962}, is to consider perturbations of the Ricci curvature and connection components expressed in terms of a null frame; this leads to \eqref{eq:einstein-vacuum} being expressed as a nonlinear system of equations called the \textit{null structure equations}. Naturally, the system only becomes tractable once the diffeomorphism invariance of \eqref{eq:einstein-vacuum} has been eliminated by a particular choice of gauge which leads to a well-posed system of PDEs.

In order to understand nonlinear stability, one must first understand linear stability. For the Kerr black hole exterior, there is a particular choice of frame in the Newman--Penrose formalism, called the \textit{algebraically special frame}, for which a miracle occurs: the equations for two gauge-invariant components of the Ricci curvature tensor, known as \textit{extremal} components, fully decouple from the remaining ones at the linear level. Indeed, they satisfy the Teukolsky equation \cite{Teukolsky1973}, which assumes the form
\begin{align}
\begin{split}
\bigg[\Box_g &+\frac{2s}{\rho^2}(r-M)\p_r +\frac{2s}{\rho^2}\lp(\frac{a(r-M)}{\Delta}+i\frac{\cos\theta}{\sin^2\theta}\rp)\p_\phi\\
&+\frac{2s}{\rho^2}\lp(\frac{M(r^2-a^2)}{\Delta}-r -ia\cos\theta\rp)\p_t +\frac{1}{\rho^2}\lp(s-s^2\cot^2\theta\rp)\bigg]\upalpha^{[s]}  =0\,,
\end{split}\label{eq:Teukolsky-equation-intro}
\end{align}
with $s=\pm 2$. Here, $\Box_g$ is the covariant scalar wave operator, $\swei{\upalpha}{s}$ is an $s$-spin weighted function (see already Definition~\ref{def:smooth-spin-weighted}) and we have used Boyer--Lindquist coordinates $(t,r,\theta,\phi)\in\mathbb{R}\times(r_+,\infty)\times\mathbb{S}^2$ and 
\begin{align*}
\Delta:=(r-r_+)(r-r_-)\,, \quad r_\pm :=M\pm \sqrt{M^2-a^2}\,,\qquad \rho^2=r^2+a^2\cos^2\theta\,.
\end{align*}
In order to analyze the system of linearized null structure equations,  $\swei{\upalpha}{\pm 2}$ are a natural starting point: their vanishing completely characterizes pure gauge solutions and, indeed, one can in principle hope to control the remaining quantities in the system of linearized null structure equations by $\swei{\upalpha}{\pm 2}$ once a suitable gauge has been fixed (see for instance \cite{Dafermos2016a}).  

The Teukolsky variable $\swei{\upalpha}{s}$ can be understood in a suitable space of functions for more general spin $s\in\frac12\mathbb{Z}$, and indeed physical meaning can be ascribed to the Teukolsky equation for spins other than $\pm 2$ (see for instance \cite{Chandrasekhar}): for $s=0$, it reduces to the wave equation; for $s=\pm 1/2$, it describes neutrino propagation and can be related to the Dirac equation; for $s=\pm 3/2$, it models propagation of spin-3/2 fermions and can be related to the Rarita--Schwinger equation. Arguably, it is the case $s=\pm 1$ that most resembles $s=\pm 2$. Similarly to what occurs for linear gravitational perturbations, in the Newman--Penrose formalism, the linearized Maxwell equations form a system of coupled equations where those for the extremal components of the electromagnetic tensor decouple; the Teukolsky equation for $s=\pm 1$, which describes the dynamics of these gauge-invariant components, is also a natural starting point for any analysis of electromagnetic perturbations. 

This brings us to the main question we address in the present series of papers:
\begin{bigquestion} \label{qn:Teukolsky-physical-space}
Describe the behavior of general solutions to the Teukolsky equation \eqref{eq:Teukolsky-equation-intro} for all Kerr parameters $|a|\leq M$; specifically,
\begin{enumerate}[label=\bfseries\Alph*., ref=\Alph*]
\item \textit{Solutions arising from Cauchy data.} In the black hole exterior, do solutions to \eqref{eq:Teukolsky-equation-intro} arising from  regular Cauchy data remain bounded, and do they also decay at a sufficiently fast rate, in time?\label{it:Teukolsky-physical-space-bddness-decay}
\item \textit{Solutions arising from scattering data.}  Is there a ``natural'' energy space such that a scattering theory for \eqref{eq:Teukolsky-equation-intro} on the black hole exterior is well defined in the sense of existence, uniqueness and asymptotic completeness of asymptotic past and future states? In particular, can we control the strength of superradiant reflection? \label{it:Teukolsky-physical-space-scattering}
\end{enumerate}
\end{bigquestion}

The answer is thought to be affirmative at least for subextremal parameters $|a|<M$, similarly to Question~\ref{qn:Kerr-perturbations-nonlinear-2}. Question~\ref{qn:Teukolsky-physical-space} has been answered in the full subextremal range $|a|<M$ for the wave equation, $s=0$, by Dafermos, Rodnianski and the first author in \cite{Dafermos2016b}, for Cauchy data (Question~\ref{qn:Teukolsky-physical-space}\ref{it:Teukolsky-physical-space-bddness-decay}), and in \cite{Dafermos2014}, for scattering data (Question~\ref{qn:Teukolsky-physical-space}\ref{it:Teukolsky-physical-space-scattering}). On the other hand, in the cases $s=\pm 1,\pm 2$, progress on Question~\ref{qn:Teukolsky-physical-space} has only been achieved under restrictions on the rotation parameter. Regarding Question~\ref{qn:Teukolsky-physical-space}\ref{it:Teukolsky-physical-space-bddness-decay} for Schwarzschild, $a=0$, the case $s=\pm 2$ was first dealt with by Dafermos, Holzegel and Rodnianski in \cite{Dafermos2016b}, followed by a similar approach to $s=\pm 1$ by Pasqualotto in \cite{Pasqualotto2016} (see also \cite{Blue2008}); both works then go on to establish full linear stability under the gravitational or electromagnetic, respectively, perturbations to Cauchy data. Generalizations of \cite{Dafermos2016b} to very slowly rotating Kerr, $|a|\ll M$, were obtained independently by Dafermos, Holzegel and Rodnianski in \cite{Dafermos2017} for $s=\pm 2$ and Ma \cite{Ma2017,Ma2017a} for $s=\pm 1, \pm 2$, respectively. Finally, Question~\ref{qn:Teukolsky-physical-space}\ref{it:Teukolsky-physical-space-scattering} for $s=\pm 2$ and $a=0$ is considered by Masaood in \cite{Masaood2020}. See already Section~\ref{sec:intro-previous-work} for further reading on the topic.

We remark that, in the extremal case $|a|=M$, it is yet unclear what to expect the answer to Question~\ref{qn:Kerr-perturbations-nonlinear-2} to be (see already Section~\ref{sec:intro-previous-work-extremal}). Already for axisymmetric solutions to \eqref{eq:Teukolsky-equation-intro}, an instability mechanism, known as the Aretakis instability, for higher order derivatives transverse to the event horizon was discovered in \cite{Aretakis2012} for $s=0$ and generalized to $s\in\mathbb{Z}$ in \cite{Lucietti2012}. Nevertheless, the second author has shown that general solutions to the Teukolsky equation~\eqref{eq:Teukolsky-equation-intro} exhibit a weak stability property known as mode stability \cite{TeixeiradaCosta2019} to which we will return shortly. 

From the previous discussion, we see that for electromagnetic and gravitational perturbations alike there is good reason to consider Question~\ref{qn:Teukolsky-physical-space}, together with the issue of gauge, to be the key missing piece to  address Question~\ref{qn:Kerr-perturbations-nonlinear-2}. In this paper and in our upcoming \cite{SRTdC2022}, we will settle Question~\ref{qn:Teukolsky-physical-space}\ref{it:Teukolsky-physical-space-bddness-decay} in the full subextremal range $|a|<M$:

\begin{alphtheorem} \label{thm:bddness-decay-big} Fix $s\in\{0,\pm 1,\pm 2\}$, $M>0$ and $|a|<M$. On the exterior of a subextremal Kerr black hole spacetime with parameters $(a,M)$, general solutions to the Teukolsky equation \eqref{eq:Teukolsky-equation-intro} arising from sufficiently regular initial data on a Cauchy surface 
\begin{itemize}[noitemsep]
\item have uniformly bounded energy fluxes through a suitable spacelike foliation of the black hole exterior, through the future event horizon, $\mc{H}^+$, and through future null infinity, $\mc{I}^+$, in terms of an energy flux of initial data at the same level of regularity; 
\item satisfy a suitable version of ``integrated local energy decay'' with loss of derivatives at trapping;
\item and satisfy similar statements for higher-order energies.
\end{itemize}
Moreover, solutions to \eqref{eq:Teukolsky-equation-intro} remain uniformly bounded pointwise and, in fact, decay at a suitable inverse polynomial rate, with estimates which depend only on $M$, $|a|$, $s$ and a suitable initial data quantity.
\end{alphtheorem}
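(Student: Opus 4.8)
The plan is to reduce Question~\ref{qn:Teukolsky-physical-space}\ref{it:Teukolsky-physical-space-bddness-decay} to a family of frequency-localized estimates which are uniform in the separation parameters, to establish those estimates in the present paper, and to reassemble and upgrade them to the physical-space statement of Theorem~\ref{thm:bddness-decay-big} in \cite{TeixeiradaCosta}. Since the first-order terms in \eqref{eq:Teukolsky-equation-intro} obstruct a direct energy estimate for $\swei{\upalpha}{s}$, the first step is to pass to the transformed system of Dafermos--Holzegel--Rodnianski: one applies $|s|$ suitable null derivatives to $\swei{\upalpha}{s}$, producing a hierarchy $\sweil{\uppsi}{s}{0},\dots,\sweil{\uppsi}{s}{|s|}$ whose top element satisfies a Regge--Wheeler-type equation with a potential, coupled to the lower-order elements through lower-order terms, and whose potential is nonetheless \emph{not} manifestly favourably signed for all parameters. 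After cutting off in time so that all quantities are compactly supported --- absorbing the resulting inhomogeneities as in \cite{Dafermos2016b,Dafermos2014} --- one takes the Fourier transform in $t$ and expands in spin-weighted spheroidal harmonics, reducing the transformed system to a coupled family of second-order ODEs in the tortoise coordinate $r^*$, parametrized by the triple $(\omega,m,\sml{\Lambda}{s})$.

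The heart of the paper is then to prove frequency-localized multiplier estimates --- of virial ($\p_{r^*}$-current), $T$-energy, and redshift/horizon type --- for this ODE system, with constants depending only on $M$, $|a|$ and $s$ but \emph{not} on the frequency triple. The analysis splits into frequency regimes: a bounded-frequency regime, a large-frequency non-trapped regime, a trapped regime in which the estimate must necessarily lose derivatives, and the superradiant regime $0<\omega<m\omega_+$ (with $\omega_+:=\frac{a}{2Mr_+}$, and its reflection for $m<0$), in which the Wronskian flux that would otherwise be conserved carries an unfavourable sign. The difficult part --- and the reason the full subextremal range has resisted purely physical-space methods --- is the simultaneous possibility of superradiance and trapping, compounded by the loss of the clean structure of the Schwarzschild potential. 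I expect the main obstacle to be the construction of multipliers that are coercive in the superradiant and near-threshold regimes, together with the verification that superradiant frequencies are never trapped; the decisive input here is the quantitative mode stability statement of \cite{TeixeiradaCosta2019} and its refinements, which rule out real-axis and exponentially growing modes and supply the uniform control needed to close the bounded-frequency and threshold estimates.

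Finally, summing the frequency-localized estimates by Plancherel returns a degenerate (at trapping) integrated local energy decay statement together with uniform boundedness of the energy fluxes through the spacelike foliation, through $\hor$ and through $\scri$, first for the $\sweil{\uppsi}{s}{i}$ and then, by inverting the transformation, for $\swei{\upalpha}{s}$ itself. To obtain decay one combines this with an $r^p$-weighted hierarchy of estimates near $\scri$ in the style of Dafermos--Rodnianski, commutes with the Killing fields $\p_t$ and $\p_\phi$ and with the redshift vector field to propagate the estimates to higher-order energies, and concludes pointwise boundedness and inverse-polynomial decay by Sobolev embedding on the spheres and along the foliation. The treatment of the time-cutoff inhomogeneities and the passage back from the $\sweil{\uppsi}{s}{i}$ to $\swei{\upalpha}{s}$ are routine given the frequency-space estimates, but must be carried out carefully so as not to reintroduce dependence on $a$.
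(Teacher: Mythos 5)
Your overall architecture coincides with the paper's: pass to the Dafermos--Holzegel--Rodnianski transformed system, cut off in time, separate variables, prove frequency-localized multiplier (virial, energy, redshift) estimates split into bounded, non-trapped, trapped and superradiant regimes, use quantitative mode stability for the bounded-frequency regime, reassemble by Plancherel (justified by a continuity argument in $a$), invert the transformation, and upgrade to decay via an $r^p$-hierarchy, commutation and Sobolev. The division of labour you describe --- frequency estimates in the present paper, physical-space reassembly in \cite{TeixeiradaCosta} --- is exactly the paper's.

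There is, however, one genuine gap. You describe the transformed equation as ``coupled to the lower-order elements through lower-order terms,'' and treat the closing of the multiplier estimates as if these couplings could be absorbed as perturbations. The paper's central observation is that precisely this fails for general $|a|<M$ when $s\neq 0$: the coupling $\swei{\mathfrak{J}}{s}$ carries an $O(a)$ prefactor but is \emph{not} a lower-order perturbation of the radial ODE for $\swei{\Psi}{s}$ in a fixed bounded $r$-region, nor in frequency weight relative to the degenerate bulk one actually controls at trapping. Because of this, one cannot close the estimate at the top of the hierarchy by the transport relation \eqref{eq:basic-estimate-1-intro} alone, as is done for $|a|\ll M$; one needs (i) multiplier estimates on the full hierarchy of $\sweil{\uppsi}{s}{k}$, not just the top variable, exploiting that the potentials $\Re\mc{V}_{(k)}$ share favourable structure for every $k$; (ii) an ``improved'' elliptic-type estimate of the form \eqref{eq:basic-estimate-2-intro}, derived from the radial ODEs for the $\sweil{\uppsi}{s}{k}$, which gains $\Lambda$ without losing derivatives and is verified to be available precisely in the trapped, non-superradiant regime; and (iii) a Teukolsky--Starobinsky energy current replacing the Killing $T$- and $K$-energies, since the latter fail to be conserved when $a,s\neq 0$ and their coupling errors are too large to absorb. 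None of these appear in your outline, and without them the frequency-localized estimates do not close uniformly in $|a|<M$ --- this is the step at which the very-slowly-rotating argument of \cite{Dafermos2017,Ma2017a} would break down and the actual novelty of the paper lies. Your proposal also does not note that the Teukolsky--Starobinsky constants $\mathfrak{C}_s,\mathfrak{D}_s^{\mc I},\mathfrak{D}_s^{\mc H}$ can degenerate on a non-trivial set of real frequencies, which forces a further trade-off between Teukolsky--Starobinsky and Killing energy currents in some ranges.
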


Though understanding scattering is not the goal of this series of works, our methods also allow us to infer what can be seen as a partial answer to Question~\ref{qn:Teukolsky-physical-space}\ref{it:Teukolsky-physical-space-scattering} in the subextremal $|a|<M$ case; see Theorem~\ref{thm:frequency-estimates-big}B.

Ours are the first quantitative results regarding general electromagnetic and gravitational perturbations valid for the entire subextremal range $|a|<M$ of the Kerr family, and pave the way for a positive resolution of Question~\ref{qn:Kerr-perturbations-nonlinear-2}\ref{it:Kerr-perturbations-nonlinear-2-Cauchy}.  We highlight the fact that the energy boundedness statement we obtain does not lose derivatives, making Theorem~\ref{thm:bddness-decay-big} a true orbital stability result, in addition to a statement of asymptotic stability. Finally, we recall the case $s=0$ had already been settled by the first author together with Dafermos and Rodnianski in \cite{Dafermos2016b, Dafermos2014}.

%\medskip

In our analysis, instead of \eqref{eq:Teukolsky-equation-intro}, we often consider a transformed system (see already Section~\ref{sec:intro-transformed}): for $s\in\mathbb{Z}$ and $0\leq k\leq |s|$, let $\swei{\uppsi}{s}_{(k)}$ be obtained by taking $k$ appropriately $r$-weighted null derivatives of $\swei{\upalpha}{s}$, i.e.\
\begin{align}
\uppsi_{(0)}^{[s]}=j_{s,|s|}^{-1}(r)\swei{\upalpha}{s}\,;\qquad \uppsi_{(k+1)} = j_{s,k}^{-1}(r)\mc{L} \uppsi_{(k)}^{[s]}\,,\,\, k=0,...,|s|-1\,, \label{eq:def-upppsi-k-intro-basic}
\end{align}
where $\mc{L}$ denotes null vector fields and $j_{s,k}$ are appropriate functions of $r$. Then $\swei{\Psi}{s}=\swei{\uppsi}{s}_{(|s|)}$ satisfies
\begin{equation}
\swei{\mathfrak{R}}{s} \swei{\Psi}{s} = a \swei{\mathfrak{J}}{s}\hspace{-2pt}\lp(\uppsi_{(k)}, \p_\phi\uppsi_{(k)} \rp)\,, \quad  k=0,...,|s|-1\,,\label{eq:transformed-equation-intro-basic}
\end{equation}
for a linear operator $\swei{\mathfrak{J}}{s}$. For $a=0$, the differential operator $\swei{\mathfrak{R}}{s}$ is called the Regge--Wheeler operator when $s=\pm 2$ and the Fakerell--Ipser operator when $s=\pm 1$. For any $|a|<M$, it behaves very much like $\Box_g$; indeed, it can be seen as a wave operator for spin-weighted, rather than scalar, functions as it does not contain first order terms, unlike the Teukolsky equation~\eqref{eq:Teukolsky-equation-intro}. In our work, we often begin by proving estimates for the system formed by \eqref{eq:transformed-equation-intro-basic} and the PDEs for each $\swei{\uppsi}{s}_{(k)}$ so that estimates on the Teukolsky equation~\eqref{eq:Teukolsky-equation-intro} follow by integrating the transport equations \eqref{eq:def-upppsi-k-intro-basic} that define the $k$th variable in terms of the $(k+1)$th variable. 

We note that the transformed system we will consider is a simple extension to $s\in\mathbb{Z}$ of that given in the analogous work of Dafermos, Holzegel and Rodnianski \cite{Dafermos2017} for $|a|\ll M$ and $s=\pm 2$, which can be taken as a prequel to our series of papers. In fact, the system considered in \cite{Dafermos2017} and here is a generalization to $a\neq 0$ of earlier work of the same authors for $a=0$ and $s=\pm 2$ in the aforementioned \cite{Dafermos2016a} and of the transformations for $a=0$ and $s=\pm 1$ considered in \cite{Pasqualotto2016}. The independent work \cite{Ma2017,Ma2017a} in the very slowly rotating, $|a|\ll M$ setting is based on the same strategy, but relies on a different generalization of the system in \cite{Dafermos2016a}.

%\medskip

In our analysis, we will apply a mix of physical and frequency space methods. As understood by Carter \cite{Carter1968} for $s=0$ and generalized by Teukolsky \cite{Teukolsky1973} for $s\in\frac12\mathbb{Z}$, the Teukolsky equation is formally separable and, hence, so is the aforementioned transformed system. Our frequency space techniques make use of this separability property, which in principle reduces the study of the PDEs to the study of ODEs for the angular and radial components of $\swei{\upalpha}{s}$, $\swei{\uppsi}{s}_{(k)}$ and $\swei{\Psi}{s}$. For instance, by abuse of notation, writing  $\swei{\upalpha}{s}$ and $\swei{\Psi}{s}$ for their radial component, the radial ODE corresponding to \eqref{eq:Teukolsky-equation-intro} is
\begin{equation}
\frac{\Delta}{r^2+a^2}\frac{d}{dr}\lp[\frac{\Delta}{r^2+a^2}\frac{d}{dr}\lp((r^2+a^2)^{\frac12}\Delta^{\frac{s}{2}}\swei{\alpha}{s}\rp)\rp]+\lp(\omega^2-\smlambda{{V}}{s}(r)\rp)(r^2+a^2)^{\frac12}\Delta^{\frac{s}{2}}\swei{\upalpha}{s} = 0\,,\label{eq:radial-ODE-alpha-intro-basic}
\end{equation}
for an explicit potential $\smlambda{{V}}{s}(r)$, and the radial ODE corresponding to \eqref{eq:transformed-equation-intro-basic} is
\begin{equation}
\frac{\Delta}{r^2+a^2}\frac{d}{dr}\lp[\frac{\Delta}{r^2+a^2}\frac{d}{dr}\lp(\swei{\Psi}{s}\rp)\rp]+\lp(\omega^2-\smlambda{\mc{V}}{s}(r)\rp)\swei{\Psi}{s} = a \swei{\mathfrak{J}}{s}\hspace{-2pt}\lp(\uppsi_{(k)}, im \uppsi_{(k)} \rp)\,, \quad  k=0,...,|s|-1\,,\label{eq:radial-ODE-Psi-intro-basic}
\end{equation}
for an explicit  potential $\smlambda{\mc{V}}{s}(r)$, where $\omega$ is the time frequency associated with $\p_t$, $m$ is the azimuthal frequency associated with $\p_\phi$ and $\Lambda$ is a separation constant (see already Section~\ref{sec:intro-separability}).

The radial ODE \eqref{eq:radial-ODE-alpha-intro-basic} arising from the Teukolsky equation (and thus also those of the transformed system, such as \eqref{eq:radial-ODE-Psi-intro-basic}) already captures the rich structure of waves on rotating Kerr black holes, be they scalar, electromagnetic or gravitational waves. For real $\omega$,  as long as the asymptotics of $\swei{\upalpha}{s}$, $\swei{\uppsi}{s}_{(k)}$ and $\swei{\Psi}{s}$ are such that Fourier transform in the variable $t$ is well defined, the relation between the PDEs and corresponding ODEs follows by Fourier inversion. Thus, any $r$-weighted $L^2$ estimates for the radial ODEs \eqref{eq:radial-ODE-alpha-intro-basic} and \eqref{eq:radial-ODE-Psi-intro-basic} which are uniform in the \textit{real} frequency parameters $(\omega,m,\Lambda)$ immediately imply analogous estimates in physical space by Plancherel's theorem\footnote{Note that, though the separable anzatz allows us to take $\omega\in\mathbb{C}$,  it is not clear that solutions to the resulting angular ODE always form a complete basis for the space of suitably regular spin-weighted functions.}. In order to justify the Fourier transform, we will use a continuity argument in the Kerr parameter $a$ in the style of that developed by Dafermos, Rodnianski and the first author in \cite{Dafermos2016b} for the $s=0$ case. We flesh out this, and other by now standard physical space arguments, in our upcoming \cite{SRTdC2022}. In all the steps mentioned, it will be technically convenient to apply cutoffs to our PDEs, thus it is useful to also consider inhomogenous versions of \eqref{eq:Teukolsky-equation-intro} and \eqref{eq:transformed-equation-intro-basic} (c.f.\ \eqref{eq:teukolsky-alpha} and Proposition~\ref{prop:transformed-system}), as well as \eqref{eq:radial-ODE-alpha-intro-basic} and \eqref{eq:radial-ODE-Psi-intro-basic}, among others. For the remainder of this section, we will focus on obtaining estimates on these, possibly inhomogeneous, ODEs which are uniform in the \textit{real} frequency parameters $(\omega,m,\Lambda)$.

%\medskip

The first step is to establish \textit{quantitative mode stability} (see already Section~\ref{sec:intro-separability}).  Let us begin with a \textit{qualitative} version. A mode solution is a separable solution of the homogeneous Teukolsky~\eqref{eq:Teukolsky-equation-intro} with finite initial energy on a suitable spacelike hypersurface which is regular at the event horizon. In the physics literature, mode stability is often taken to mean that there are no mode solutions which are exponentially growing in time, i.e.\ which have $\Im\omega>0$.  In the context of our analysis, we are most interested in the \textit{real} $\omega$ case: we say that \eqref{eq:Teukolsky-equation-intro} is \textit{real} modally stable if there are no mode solutions which are bounded but not decaying in time, i.e.\ with $\omega\in\mathbb{R}\backslash\{0\}$. The conservation law for \eqref{eq:radial-ODE-alpha-intro-basic} (see already Section~\ref{sec:intro-energy-identity}) can be used to infer such a non-existence result in a subset of the $\omega$ upper half-plane and real axis, but not its entirety unless $a= 0$. For instance, when $\omega$ is real and $|s|\in\lp\{0, 1, 2\rp\}$, \textit{superradiance} of the frequency parameters, i.e.\
\begin{align*}
\omega\lp(\omega-\frac{am}{2Mr_+}\rp)<0\,,
\end{align*}
prevents us from inferring mode stability from the conservation law. Nevertheless, the Teukolsky equation \eqref{eq:Teukolsky-equation-intro} is indeed both real and complex modally stable  for subextremal $|a|<M$ \cite{Whiting1989,Shlapentokh-Rothman2015,Andersson2017,TeixeiradaCosta2019} and even for extremal $|a|=M$ \cite{TeixeiradaCosta2019} Kerr! We remark that this property seems very specific to the algebraic structure of the Kerr metric (see \cite{Shlapentokh-Rothman2014,Moschidis2017b}) and similar metrics (see, e.g.\ \cite{Cvetic2020}).

In order to state a \textit{quantitative} version of mode stability, let us introduce some notation. Denote by $\swei{\alpha}{s}_{\mc{H}^\mp }$ (resp.\ $\swei{\alpha}{s}_{\mc{I}^\mp }$) the radial components of separable solutions of the homogeneous Teukolsky equation~\eqref{eq:Teukolsky-equation-intro} which are characterized by some real $(\omega,m,\Lambda)$ and which are regular, with respect to the algebraically special frame in which the Teukolsky variable is defined, at the past or future event horizon  (resp.\ past or future null infinity). We define the Wronskian
$$\swei{\mathfrak{W}}{s}(\omega,m,\Lambda):=\Delta^{1+s}\lp[\frac{d}{dr}\lp(\swei{\alpha}{s}_{\mc{I}^+}\rp)\cdot \swei{\alpha}{s}_{\mc{H}^+}-\frac{d}{dr}\lp(\swei{\alpha}{s}_{\mc{H}^+}\rp)\cdot \swei{\alpha}{s}_{\mc{I}^+}\rp]\,.$$ 
\noindent As a consequence of the mode stability property of \eqref{eq:Teukolsky-equation-intro}, proved in \cite{Whiting1989,Shlapentokh-Rothman2015,Andersson2017,TeixeiradaCosta2019}, one has that $\swei{\mathfrak{W}}{s}\neq 0$; more concretely, one in fact has
\begin{arabictheorem}[Quantitative real mode stability] Fix $s\in\{0,\pm 1,\pm 2\}$, $M>0$ and $|a|<M$. For each set of real frequency parameters $\mc{A}$ such that $C_{\mc{A}}:=\sup_{\mathcal{A}} \lp(|\omega|+|\omega|^{-1}+|m|+|\Lambda|\rp)<\infty$,
there is a constant $B(a,M,|s|,C_{\mc{A}})>0$ such that 
$$\lp|\swei{\mathfrak{W}}{s}\rp|^{-1}\leq B(a,M,|s|,C_{\mc{A}})<\infty \text{~~for~}(\omega,m,\Lambda)\in\mc{A}\,.$$ 
The constant $B(a,M,|s|,C_{\mc{A}})$ can be {\normalfont explicitly} computed in terms of $a$, $M$, $s$ and $C_{\mc{A}}$ \cite{Shlapentokh-Rothman2015,TeixeiradaCosta2019}. The result also holds for $|a|=M$ except for $\omega\approx \frac{am}{2Mr_+}$. 
\label{thm:mode-stability-intro-AB}
\end{arabictheorem}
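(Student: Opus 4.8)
The plan is to deduce the quantitative bound from the \emph{qualitative} mode stability statement $\swei{\mathfrak{W}}{s}(\omega,m,\Lambda)\neq 0$ for all real $(\omega,m,\Lambda)$ with $\omega\neq 0$, which is established in \cite{Whiting1989,Shlapentokh-Rothman2015,Andersson2017,TeixeiradaCosta2019}, by combining it with the continuous dependence of $\swei{\mathfrak{W}}{s}$ on the frequency parameters and a compactness argument. First I would realize the two distinguished solutions $\swei{\alpha}{s}_{\hor}$ and $\swei{\alpha}{s}_{\scri}$ of the radial ODE~\eqref{eq:radial-ODE-alpha-intro-basic} as solutions of Volterra integral equations: $\swei{\alpha}{s}_{\hor}$ normalized by its regular behavior at $r=r_+$ with respect to the algebraically special frame --- which, crucially, remains a genuine regular solution even at the superradiant threshold $\omega=\tfrac{am}{2Mr_+}$, where the horizon oscillation rate $\omega-\tfrac{am}{2Mr_+}$ vanishes --- and $\swei{\alpha}{s}_{\scri}$ normalized by its outgoing oscillatory behavior at $r=\infty$, which is non-degenerate precisely because we stay away from $\omega=0$. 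The successive-approximation estimates for these Volterra problems have kernels depending jointly real-analytically on $(\omega,m,\Lambda)$ and converging uniformly on $r$-compacta, so $\swei{\alpha}{s}_{\hor}$, $\swei{\alpha}{s}_{\scri}$ and their first $r$-derivatives --- hence $\swei{\mathfrak{W}}{s}$ itself --- are continuous (in fact real-analytic in $(\omega,\Lambda)$ for each fixed integer $m$) on $\{(\omega,m,\Lambda)\in\mathbb{R}\times\mathbb{Z}\times\mathbb{R}:|\omega|\geq\delta\}$ for every $\delta>0$.

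Given this, the existence of $B$ is immediate. For a set $\mc{A}$ with $C_{\mc{A}}<\infty$, its closure $\overline{\mc{A}}$ is compact and contained in $\{|\omega|\geq C_{\mc{A}}^{-1},\ |m|\leq C_{\mc{A}},\ |\Lambda|\leq C_{\mc{A}}\}$; in particular $m$ ranges over a finite set of integers. For each such $m$ the slice $\overline{\mc{A}}\cap(\mathbb{R}\times\{m\}\times\mathbb{R})$ is compact in the remaining variables $(\omega,\Lambda)$, and on it $|\swei{\mathfrak{W}}{s}|$ is continuous and, by qualitative mode stability, strictly positive; hence it attains a positive minimum there. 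Taking the least of these finitely many positive numbers yields a constant $B(a,M,|s|,C_{\mc{A}})<\infty$ with $|\swei{\mathfrak{W}}{s}|^{-1}\leq B$ on $\mc{A}$. The same reasoning applies verbatim when $|a|=M$, provided one additionally excises a neighborhood of $\omega=\tfrac{am}{2Mr_+}$, where the extremal degeneracy of $\Delta$ at $r_+$ obstructs both the regular normalization of $\swei{\alpha}{s}_{\hor}$ and the underlying mode stability proof.

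The remaining --- and hardest --- point is to make $B(a,M,|s|,C_{\mc{A}})$ \emph{explicit} in its arguments, the compactness argument above being purely existential. For this one cannot avoid revisiting the proof of mode stability: either (i) quantify Whiting's transformation, under which a putative real mode of~\eqref{eq:Teukolsky-equation-intro} is carried to a solution of a Regge--Wheeler-type equation with a real, positive potential for which a virial/energy identity forces the solution to vanish unless the mode was trivial, tracking every constant through the transformation, its inversion, and the ODE estimate; or (ii) invoke the already quantitative arguments of \cite{Shlapentokh-Rothman2015,TeixeiradaCosta2019}, which furnish exactly such bounds, the latter uniformly down to the thresholds. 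I expect the main difficulty to be the uniformity of these constants as $(\omega,m,\Lambda)$ approach the two ``bad'' loci, $\omega\to 0$ and $\omega\to\tfrac{am}{2Mr_+}$, where the conservation law associated with~\eqref{eq:radial-ODE-alpha-intro-basic} degenerates and superradiance rules out any soft argument; ensuring that the Volterra iterations and the transformed ODE estimates are controlled by constants that blow up only through $C_{\mc{A}}$ --- in particular only through $|\omega|^{-1}$ near $\omega=0$ --- is precisely the technical content of the quantitative refinement, and is also exactly what breaks down in the extremal case near the superradiant threshold.
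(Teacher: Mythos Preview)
The paper does not actually prove this theorem: it is stated in the introduction as an input from the literature, with the qualitative non-vanishing of $\swei{\mathfrak{W}}{s}$ attributed to \cite{Whiting1989,Shlapentokh-Rothman2015,Andersson2017,TeixeiradaCosta2019} and the explicit quantitative bound to \cite{Shlapentokh-Rothman2015,TeixeiradaCosta2019}. A slight refinement (Proposition~\ref{prop:subextremal-mode-stability-refined}) is likewise quoted without proof from \cite{TeixeiradaCosta-thesis}. So there is no ``paper's own proof'' to compare against.

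That said, your sketch is correct as far as it goes, and it accurately reflects how the existential part of the statement is obtained in the cited works: continuity of $\swei{\mathfrak{W}}{s}$ in $(\omega,m,\Lambda)$ (via Volterra/asymptotic series representations of $\swei{\alpha}{s}_{\mc{H}^+}$ and $\swei{\alpha}{s}_{\mc{I}^+}$) combined with qualitative mode stability and compactness of $\overline{\mc{A}}$. You are also right that this compactness argument yields no explicit constant, and that the explicit bound genuinely requires tracking constants through the Whiting-type integral transformation and the resulting ODE estimates, which is precisely the content of \cite{Shlapentokh-Rothman2015} (subextremal) and \cite{TeixeiradaCosta2019} (extremal). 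One minor clarification: the restriction $\omega\not\approx m\upomega_+$ in the extremal case is not because qualitative mode stability fails there (it does not, by \cite{TeixeiradaCosta2019}), but because the quantitative control currently available degenerates in that double limit; your phrasing ``obstructs\ldots the underlying mode stability proof'' slightly overstates this.
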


%\medskip

While Theorem~\ref{thm:mode-stability-intro-AB} may be shown to imply, using standard ODE theory, estimates for homogeneous and inhomogeneous versions of the radial ODEs~\eqref{eq:radial-ODE-alpha-intro-basic} and \eqref{eq:radial-ODE-Psi-intro-basic} in any set of real frequencies $\mc{A}$ as in its statement (see e.g.\ \cite{Shlapentokh-Rothman2015,TeixeiradaCosta2019}), it cannot be used to probe high frequencies, $|\omega|+|m|+|\Lambda|\to \infty$, nor the zero-frequency limit, $\omega\to 0$. This is precisely the  goal of the current paper:

\begin{arabictheorem} \label{thm:frequency-estimates-big} Fix $s\in\{0,\pm 1,\pm 2\}$, $M>0$ and $|a|<M$. Consider a separable solution, $\swei{\Psi}{s}$, to \eqref{eq:transformed-equation-intro-basic} which arises from a separable solution of the homogeneous or inhomogeneous Teukolsky equation~\eqref{eq:Teukolsky-equation-intro} and is characterized by real $(\omega,m,\Lambda)$ with $\omega\neq 0$. By abuse of notation, write $\swei{\Psi}{s}$ for its radial component, which then satisfies the  ODE \eqref{eq:radial-ODE-Psi-intro-basic}.
\begin{enumerate}[font=\normalfont,label=\bfseries\Alph*., ref=\Alph*]
\item {\normalfont \textbf{Integrated estimates}}.  One may fix a sufficiently large $R$ and sufficiently small $\epsilon$ so that the following holds. There is a value $r_{\rm trap}(\omega,m,\Lambda)$ which is either 0 or satisfies the uniform bounds $|r_{\rm trap}|+|r_{\rm trap}-r_+|^{-1}\leq B(a,M,s)$ such that for all $(\omega,m,\Lambda)$, any smooth $\swei{\Psi}{s}(r)$ arising from an \myuline{inhomogeneous} Teukolsky equation as described and which has appropriate boundary conditions as $r\to r_+,r\to \infty$ satisfies
\begin{gather}
\begin{gathered}
\lp(\omega-\frac{am}{2Mr_+}\rp)^2\lp|\swei{\Psi}{s}\rp|^2_{r=r_+}+\omega^2\lp|\swei{\Psi}{s}\rp|^2_{r=\infty}\\
+\int_{r_+}^\infty \frac{\Delta}{r^4}\lp\{\lp|\frac{\Delta}{r^2}\p_r\swei{\Psi}{s}\rp|^2+\lp[\lp(\omega^2+r^{-1}\Lambda\rp)\lp(1-\frac{r_{\rm trap}(\omega,m,\Lambda)}{r}\rp)+r^{-1}(s^2+r^{-1})\rp]\lp|\swei{\Psi}{s}\rp|^2\rp\}dr\\
\leq B(a,M,s) \sum_{k=0}^{|s|} \lp\{\int_{r_+}^\infty\mathfrak{G}_{(k)}\cdot \lp(\uppsi_{(k)},\uppsi_{(k)}'\rp) dr+\int_{r_++\epsilon}^{R}\lp(|\uppsi_{(k)}|^2+|\uppsi_{(k)}|^2\rp)dr^*\rp\}\,. 
\end{gathered}\label{eq:intro-main-thm-A}
\end{gather}
Here, $\mathfrak{G}_{(k)}$ is the inhomogeneity in the equation for the $k$th transformed variable and $\mathfrak{G}_{(k)}\cdot \lp(\uppsi_{(k)},\uppsi_{(k)}'\rp)$ is a nontrivial function of the inhomogeneity and the transformed variable. 
 
\item {\normalfont \textbf{Scattering estimates}}. For any real frequency triple $(\omega,m,\Lambda)$ with $\omega\neq 0,\frac{am}{2Mr_+}$, any smooth $\swei{\Psi}{s}(r)$ arising from a \myuline{homogeneous} Teukolsky equation  admits a decomposition
\begin{align*}
\swei{\Psi}{s}= \swei{A}{s}_{\mc{H}^+}\swei{\Psi}{s}_{\mc{H}^+}+\swei{A}{s}_{\mc{H}^-}\swei{\Psi}{s}_{\mc{H}^-}=\swei{A}{s}_{\mc{I}^+}\swei{\Psi}{s}_{\mc{I}^+}+\swei{A}{s}_{\mc{I}^-}\swei{\Psi}{s}_{\mc{I}^-}
\end{align*}
for coefficients $\swei{A}{s}_{\mc{H}^\mp}(\omega,m,\Lambda),\swei{A}{s}_{\mc{I}^\mp}(\omega,m,\Lambda)\in\mathbb{C}$, and where $\swei{\Psi}{s}_{\mc{H}^\pm}$ and $\swei{\Psi}{s}_{\mc{I}^\pm}$ denote normalized transformed variables arising from $\swei{\upalpha}{s}_{\mc{H}^\pm}$ and $\swei{\upalpha}{s}_{\mc{I}^\pm}$, respectively, and satisfies
\begin{align}
\begin{split}
&\omega^2\frac{\mathfrak{C}_s}{\mathfrak{D}_s^{\mc I}}\lp|\swei{A}{s}_{\mc{I}^+}\rp|^2+\lp(\omega-\frac{am}{2Mr_+}\rp)^2\lp|\swei{A}{s}_{\mc{H}^+}\rp|^2\\
&\quad\leq B(a,M,s)\lp[\omega^2\lp|\swei{A}{s}_{\mc{I}^-}\rp|^2+\lp(\omega-\frac{am}{2Mr_+}\rp)^2\frac{\mathfrak{C}_s}{\mathfrak{D}_s^{\mc H}}\lp|\swei{A}{s}_{\mc{H}^-}\rp|^2\rp]\,,\,\, s\leq 0,\\
&\omega^2\lp|\swei{A}{s}_{\mc{I}^+}\rp|^2+\lp(\omega-\frac{am}{2Mr_+}\rp)^2\frac{\mathfrak{C}_s}{\mathfrak{D}_s^{\mc H}}\lp|\swei{A}{s}_{\mc{H}^+}\rp|^2\\
&\quad\leq B(a,M,s)\lp[\omega^2\frac{\mathfrak{C}_s}{\mathfrak{D}_s^{\mc I}}\lp|\swei{A}{s}_{\mc{I}^-}\rp|^2+\lp(\omega-\frac{am}{2Mr_+}\rp)^2\lp|\swei{A}{s}_{\mc{H}^-}\rp|^2\rp]\,, \,\, s>0,
\end{split}\label{eq:intro-main-thm-B}
\end{align}
where the Teukolsky--Starobinsky constants $\mathfrak{C}_s$, and the $\mathfrak{D}_s^{\mc I}$ and $\mathfrak{D}_s^{\mc H}$ are explicit, real, non-negative constants which depend only on the black hole parameters $(a,M)$, the spin $|s|$, and the frequency triple $(\omega,m,\Lambda)$ and which may only vanish on a discrete set of real frequencies.
\end{enumerate}
\end{arabictheorem}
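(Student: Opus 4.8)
The plan is to reduce both parts to the Schr\"odinger form of the radial ODE~\eqref{eq:radial-ODE-Psi-intro-basic}: introducing a tortoise coordinate $r^\ast$ with $dr^\ast/dr=(r^2+a^2)/\Delta$, \eqref{eq:radial-ODE-Psi-intro-basic} becomes
\begin{equation*}
\partial_{r^\ast}^2\swei{\Psi}{s}+\lp(\omega^2-\smlambda{\mc{V}}{s}\rp)\swei{\Psi}{s}=\mathfrak{H}\,,
\end{equation*}
where $\mathfrak{H}$ collects the coupling term $a\,\swei{\mathfrak{J}}{s}(\uppsi_{(k)},im\uppsi_{(k)})$ together with the genuine inhomogeneities $\mathfrak{G}_{(k)}$, and the relevant solutions satisfy $\swei{\Psi}{s}\sim e^{\pm i\omega r^\ast}$ as $r\to\infty$ and $\swei{\Psi}{s}\sim e^{\pm i(\omega-\frac{am}{2Mr_+})r^\ast}$ as $r\to r_+$. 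For Part~A I would run a frequency-localized virial argument in the spirit of the $s=0$ analysis of Dafermos, Rodnianski and the first author and of the $s=\pm2$, $|a|\ll M$ analysis of Dafermos, Holzegel and Rodnianski: first partition the admissible range $(\omega,m,\Lambda)\not\in\mc{A}$ --- which consists of high frequencies together with the regime $\omega\to 0$ --- into a bounded number of subregimes according to the relative sizes of $\omega^2$, $\Lambda$, $m^2$ and the sign of the superradiance factor $\omega(\omega-\tfrac{am}{2Mr_+})$, following the usual trichotomy of non-superradiant--untrapped, non-superradiant--trapped, and superradiant frequencies, with the small-$\omega$ range treated separately.

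In each subregime I would form the virial identity obtained by multiplying the homogeneous operator by a finite combination of frequency-dependent multipliers and taking real parts: a Morawetz multiplier of the form $f\,\partial_{r^\ast}\overline{\swei{\Psi}{s}}+\tfrac12 f'\overline{\swei{\Psi}{s}}$, with $f=f_{\omega m\Lambda}(r^\ast)$ chosen so that the resulting bulk dominates the bracketed integrand on the left-hand side of \eqref{eq:intro-main-thm-A}; a $T$-energy multiplier $h\,\partial_{r^\ast}\overline{\swei{\Psi}{s}}$, with $h$ a frequency-dependent constant, modified in the superradiant regime so that the boundary terms it produces are exactly $(\omega-\tfrac{am}{2Mr_+})^2|\swei{\Psi}{s}|^2_{r=r_+}$ and $\omega^2|\swei{\Psi}{s}|^2_{r=\infty}$ with favourable signs; and a lower-order multiplier $y\,\overline{\swei{\Psi}{s}}$ to absorb the zeroth-order errors. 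In the trapped regime $f$ must be required to vanish at the unique critical point of the effective potential governing the Morawetz bulk; that point is then taken to be $r_{\rm trap}(\omega,m,\Lambda)$, and the structure and asymptotics of $\smlambda{\mc{V}}{s}$ yield the uniform bounds $|r_{\rm trap}|+|r_{\rm trap}-r_+|^{-1}\le B(a,M,s)$, whereas $r_{\rm trap}=0$ (no degeneration) in the other regimes. Pairing these multipliers against the genuine inhomogeneities $\mathfrak{G}_{(k)}$ produces precisely the right-hand side of \eqref{eq:intro-main-thm-A}.

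The main obstacle --- and the reason previous work was confined to $|a|\ll M$ --- is the coupling term $a\,\swei{\mathfrak{J}}{s}(\uppsi_{(k)},im\uppsi_{(k)})$, which can no longer be absorbed by smallness of $a$. I would close the estimate hierarchically: supplement the $\swei{\Psi}{s}$-estimate above with transport estimates for $\swei{\uppsi}{s}_{(k)}$, $k=|s|-1,\dots,0$, obtained by integrating the transport relations \eqref{eq:def-upppsi-k-intro-basic} --- crucially these transport equations see neither trapping nor the superradiant degeneration --- and then reinsert the resulting control of the $\uppsi_{(k)}$ (and of the $\mathfrak{G}_{(k)}$) into the virial identity and absorb the coupling contribution into the coercive bulk. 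Making this absorption uniform over $|a|<M$ is delicate: one must match the precise $\Delta$- and $r$-weights generated by the transport estimates against those appearing in \eqref{eq:intro-main-thm-A}, and tune the multipliers --- in particular the normalizations of $f$ near $\hor$ and $\scri$, and the $y$- and $T$-multipliers --- so that the coupling terms are genuinely of lower order rather than merely small. I expect this weight bookkeeping, together with the construction of $f$ across the transition regions between frequency regimes and in the small-$\omega$ range, to be the technical heart of Part~A; note that mode stability (Theorem~\ref{thm:mode-stability-intro-AB}) plays no role here, being reserved for the bounded frequency set $\mc{A}$ and entering only when these fixed-frequency estimates are summed in the companion paper.

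For Part~B, I would start from the fact that for a homogeneous solution the $T$-energy current is conserved; evaluating its flux as $r\to r_+$ and $r\to\infty$ in the two bases $\{\swei{\Psi}{s}_{\mc{H}^\pm}\}$ and $\{\swei{\Psi}{s}_{\mc{I}^\pm}\}$ gives Wronskian/unitarity relations among the $\swei{A}{s}_{\mc{H}^\pm}$ and $\swei{A}{s}_{\mc{I}^\pm}$, which already yield \eqref{eq:intro-main-thm-B} with the stated signs for non-superradiant frequencies. The superradiant frequencies are precisely those for which the horizon flux carries the ``wrong'' sign; to handle them I would invoke the Teukolsky--Starobinsky identities, which express $\swei{\upalpha}{-s}$ (hence $\swei{\Psi}{-s}$) as an order-$2|s|$ differential operator applied to $\swei{\upalpha}{s}$ and relate $\swei{\upalpha}{s}_{\mc{H}^\pm},\swei{\upalpha}{s}_{\mc{I}^\pm}$ to their spin-$(-s)$ counterparts through the explicit real constants $\mathfrak{C}_s$, $\mathfrak{D}_s^{\mc H}$, $\mathfrak{D}_s^{\mc I}$. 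Applying the conservation law at spin $-s$ --- for which the roles of the horizon and infinity boundary terms, and hence the sign of the superradiant contribution, are interchanged relative to spin $s$ --- and translating back via these constants converts the wrong-sign identity into the asserted one; non-vanishing of $\mathfrak{C}_s$, $\mathfrak{D}_s^{\mc H}$, $\mathfrak{D}_s^{\mc I}$ off a discrete frequency set is exactly non-degeneracy of the Teukolsky--Starobinsky map. Finally, near $\omega\approx\tfrac{am}{2Mr_+}$ and near $\omega\approx 0$, where this algebra degenerates, I would instead quote the integrated estimate of Part~A in its homogeneous form to bound the output amplitudes; combining the cases gives \eqref{eq:intro-main-thm-B}.
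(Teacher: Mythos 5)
Your overall scaffolding for Part~A — Schr\"odinger form, a partition of frequency space into superradiant/trapped/untrapped and small-$\omega$ regimes, frequency-localized Morawetz/energy/lower-order multipliers applied to the top equation, $r_{\rm trap}$ set by the degeneration point of the Morawetz bulk — matches the paper's outline. But the mechanism you propose for closing the coupling term $a\,\swei{\mathfrak{J}}{s}$ is precisely the one the paper identifies as \emph{failing} outside the very slowly rotating regime. You plan to ``supplement the $\Psi$-estimate with transport estimates for $\uppsi_{(k)}$ obtained by integrating the transport relations'' and then tune multipliers so the coupling becomes lower order. The transport relations alone yield estimates of the type \eqref{eq:basic-estimate-1-intro}--\eqref{eq:basic-estimate-1'-intro}, and in the trapped, $\omega^2\sim\Lambda\sim m^2$ regime the resulting coupling error is a bulk term $a^2m^2|\uppsi_{(k)}|^2$ localized at the trapped $r$, which is not controlled by the degenerate Morawetz bulk; the only way transport alone closes this is if one has something like $(\omega-am/(r^2+a^2))^2-a^2m^2\gtrsim m^2$, i.e.\ $|a|\ll M$. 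The paper's key new ingredient, and what is genuinely missing from your proposal, is that one must also use the \emph{radial ODEs} satisfied by $\uppsi_{(k)}$, $k<|s|$ (not merely the transport relations): these give the ``improved'' zeroth-order estimate $\int w\Lambda|\uppsi_{(k)}|^2\lesssim \int w|\uppsi_{(k+1)}|^2$ whenever $\Lambda-2am\omega\gtrsim\Lambda$, which holds in particular for all trapped, nonsuperradiant frequencies on subextremal Kerr, and this is what allows absorption at trapping for the full range $|a|<M$. Relatedly, the Killing $T$-energy multiplier you propose produces coupling errors with weights too strong to absorb unless $a$ is small; the paper replaces or supplements it with a Teukolsky--Starobinsky energy current (built from the conserved Wronskian $W(u^{[+s]},\overline{u^{[-s]}})$), which sees no coupling. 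In the bounded small-$\omega$ regime the paper must in fact run a full multiplier hierarchy at \emph{every} level $k=0,\dots,|s|$ and iterate; transport alone again does not close there.

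For Part~B the starting point is wrong: the $T$-energy current is \emph{not} conserved for the transformed equation~\eqref{eq:transformed-equation-intro-basic} when $a\neq 0$ and $s\neq 0$ (nor for the Teukolsky radial ODE, whose potential has a nontrivial imaginary part), precisely because of the coupling $\swei{\mathfrak{J}}{s}$; the conserved quantity is the Teukolsky--Starobinsky energy, built from the cross-spin Wronskian. Your account of what the Teukolsky--Starobinsky identities do is also off: superradiance is a property of the frequency triple, not of $\operatorname{sign}s$, so passing to spin $-s$ does not interchange the sign of the superradiant contribution, and the Teukolsky--Starobinsky identity by itself does not repair the wrong-sign boundary term. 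In the paper, superradiant frequencies outside $\mathcal A$ are handled by the fact that superradiance is quantitatively disjoint from trapping on subextremal Kerr (so virial currents absorb the localization errors), while the bounded set $\mathcal A$ is covered by quantitative real mode stability (Theorem~\ref{thm:mode-stability-intro-AB} and its refinement). Your fallback ``quote the integrated estimate of Part~A near $\omega\approx0$ and $\omega\approx am/(2Mr_+)$'' does not close the argument, both because Part~A \emph{excludes} the bounded set $\mathcal A$ and because the degeneracies you cite are not where the Teukolsky--Starobinsky constants actually misbehave (near $\omega=0$ the paper shows $\mathfrak{C}_s\sim\mathfrak{D}_s^{\mathcal I}\sim\mathfrak{D}_s^{\mathcal H}\sim1$). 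Mode stability is indispensable for Part~B; it is not merely ``reserved for the companion paper.''
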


\begin{remark*} Some remarks regarding the main theorem of this paper, Theorem~\ref{thm:frequency-estimates-big}, are in order.
\begin{enumerate}[font=\normalfont,label=\bfseries\Alph*.,ref=\Alph*]
\item \label{it:remark-intro-A} Theorem~\ref{thm:frequency-estimates-big}A, along with Theorem~\ref{thm:mode-stability-intro-AB}, is the main ingredient in the proof of Theorem~\ref{thm:bddness-decay-big}, see already Section~\ref{sec:intro-to-freq-space}.

\item \label{it:remark-intro-B} While~\eqref{eq:intro-main-thm-B} is unconditional, one can only establish direct control over the boundary terms with the usual frequency weights if the Teukolsky--Starobinsky and boundary term constants satisfy
\begin{equation}
\mathfrak{C}_s\sim \mathfrak{D}_s^{\mc I}\sim \mathfrak{D}_s^{\mc H}\sim 1\,. \label{eq:constants-are-comparable}
\end{equation}
If $a=0$, or $s=0$, by the explicit formulas for these constants, one easily has \eqref{eq:constants-are-comparable}. For general subextremal  $|a|<M$ parameters and $s\in\{\pm 1,\pm 2\}$, \eqref{eq:constants-are-comparable} may fail to hold in a certain set of real frequencies which depends on the Kerr parameters $(a,M)$ and spin $s$, and which may be nontrivial even in the very slowly rotating $|a|\ll M$ case. We note, however, that \eqref{eq:constants-are-comparable} holds in a neighborhood of the zero time frequency $|\omega|=0$. Thus, in the zero frequency limit $\omega\to 0$, \eqref{eq:intro-main-thm-B} indeed implies effective control over the weighted boundary terms of $\swei{\Psi}{s}$.
\end{enumerate}

We further note that, while the boundary terms in the estimates \eqref{eq:intro-main-thm-A} and \eqref{eq:intro-main-thm-B} are not well defined for $\omega=0$ nor, in the latter case, $\omega=\frac{am}{2Mr_+}$, for applications to the Teukolsky equation \eqref{eq:Teukolsky-equation-intro} and the transformed system, such as equation \eqref{eq:transformed-equation-intro-basic}, in physical space, by Plancherel, it suffices to obtain estimates for $\omega\in\mathbb{R}\backslash\lp\{0, \frac{am}{2Mr_+}\rp\}$ which are \textit{uniform} in the limit $\omega\to 0$ and, if $|a|<M$,  $\omega\to \frac{am}{2Mr_+}$. Estimates \eqref{eq:intro-main-thm-A} and \eqref{eq:intro-main-thm-B} indeed have this property. Finally, we remark that estimate \eqref{eq:intro-main-thm-A} is uniform in the extremal Kerr limit $|a|\to M$ as long as the frequency parameters are taken to be outside a neighborhood of $\omega=\frac{am}{2Mr_+}$.
\end{remark*}

The proof of Theorem~\ref{thm:frequency-estimates-big} not only requires the $s=0$ scalar wave equation techniques developed previously, but also essential features of the $s\neq 0$ Teukolsky equation which have not yet been explored in this setting. Let us give a telegraphic account of the main obstacles to the proof of our Theorem~\ref{thm:frequency-estimates-big} when compared to the $a=0$ or $s=0$ cases (see already Section~\ref{sec:intro-proof-freq-estimates}).

The main challenge we face is that, when $a\neq 0$ and $s\neq 0$, the equation \eqref{eq:transformed-equation-intro-basic} for $\swei{\Psi}{s}$ is coupled to those of the $k<|s|$ transformed variables through $\swei{\mathfrak{J}}{s}$: one needs to obtain a hierarchy of estimates for the transformed variables. While this idea is not new, and has been implemented successfully in the very slowly rotating case, the hierarchy must be much more refined in order to be of any use in the general subextremal $|a|<M$ Kerr case. In fact, in the very slowly rotating case $|a|\ll M$, since $\swei{\mathfrak{J}}{s}$ comes multiplied by a small parameter, it was shown by Dafermos, Holzegel and Rodnianski in \cite{Dafermos2017} for $s=\pm 2$ that such a hierarchy can be constructed using only the transport equations connecting the $k$th to the $(k+1)$th transformed variable. Hence, the analysis of \eqref{eq:transformed-equation-intro-basic} for $|a|\ll M$ is essentially reduced to that of studying the $s=0$ scalar wave equation in this setting.  

For general subextremal Kerr black holes, however, this is no longer possible: \textbf{the coupling terms are often not lower order perturbations} in the radial ODE~\eqref{eq:radial-ODE-Psi-intro-basic}. The consequences can be felt both at the level of bulk estimates such as in \eqref{eq:intro-main-thm-A} and at the level of boundary estimates such as in \eqref{eq:intro-main-thm-A} and \eqref{eq:intro-main-thm-B}. 

For instance, while in obtaining bulk estimates such as that in \eqref{eq:intro-main-thm-A}, we rely on frequency-localized virial currents which are identical to those considered for $s=0$, the analogy with the scalar wave equation ends there. Indeed, for any frequency triple $(\omega,m,\Lambda)$ for which the coupling $\swei{\mathfrak{J}}{s}$ is not obviously lower order, we need precise estimates on each of the transformed variables $\swei{\uppsi}{s}_{(k)}$. To obtain them, we rely on an analysis, through frequency-localized virial currents, of the entire transformed system at once, including the radial ODEs for $\swei{\uppsi}{s}_{(k)}$ (see already \eqref{eq:radial-ODE-uppsi-intro}). Note that these retain the bad properties of Teukolsky radial ODE~\eqref{eq:radial-ODE-alpha-intro-basic}, compared to the transformed radial ODE~\eqref{eq:radial-ODE-Psi-intro-basic}, to which we have alluded earlier. It is by exploiting the underlying structure of the system that we are finally able to establish the bulk estimate of Theorem~\ref{thm:frequency-estimates-big}A.

Let us now turn to the issue of controlling boundary terms. Any Killing energy current, i.e.\ any energy current associated to Killing vector fields, necessarily fails to be conserved by \eqref{eq:transformed-equation-intro-basic} when $a,s\neq 0$,  due to the coupling terms $\swei{\mathfrak{J}}{s}$. Moreover, it cannot even be seen as an ``approximate'' conservation law unless $|a|\ll M$. We can build an energy current which is blind to $\swei{\mathfrak{J}}{s}$ from  the natural conservation law for the Teukolsky radial ODE~\eqref{eq:radial-ODE-alpha-intro-basic}, as it implies an identity for the radial component of $\swei{\Psi}{s}$: e.g., if $s\leq 0$,
\begin{equation}
\omega^2\frac{\mathfrak{C}_s}{\mathfrak{D}_s^{\mc I}}\lp|\swei{A}{s}_{\mc{I}^+}\rp|^2+\omega\lp(\omega-\frac{am}{2Mr_+}\rp)\lp|\swei{A}{s}_{\mc{H}^+}\rp|^2=\omega^2\lp|\swei{A}{s}_{\mc{I}^-}\rp|^2+\omega\lp(\omega-\frac{am}{2Mr_+}\rp)\frac{\mathfrak{C}_s}{\mathfrak{D}_s^{\mc H}}\lp|\swei{A}{s}_{\mc{H}^-}\rp|^2\,, \label{eq:Wronskian-conservation-Psi-intro}
\end{equation}
in the notation of Theorem~\ref{thm:frequency-estimates-big}. We refer to  \eqref{eq:Wronskian-conservation-Psi-intro} as the Teukolsky--Starobinsky energy identity, because it makes use of the so-called Teukolsky--Starobinsky identities \cite{Teukolsky1974,Starobinsky1973,Kalnins1989} (see already Section~\ref{sec:energy-identity-transformed}). This explains the presence of the  the Teukolsky--Starobinsky constant $\mathfrak{C}_s$; the constants $\mathfrak{D}_s^{\mc I}$ and $\mathfrak{D}_s^{\mc H}$ simply relate the boundary terms of the Teukolsky and the transformed variables. As we have mentioned in Remark \ref{it:remark-intro-B}, for general $s\in\{0,\pm 1, \pm 2\}$ and $|a|<M$, there is a set of frequencies $(\omega, m,\Lambda)$ for which \eqref{eq:constants-are-comparable} may not hold, hence for which the Teukolsky--Starobinsky energy identity  does not ensure control over one of the past and one of the future boundary terms: e.g.\ we may lose control over the terms $\omega^2\lp|\swei{A}{s}_{\mc{I}^+}\rp|^2$ and $\omega\lp(\omega-\frac{am}{2Mr_+}\rp)\lp|\swei{A}{s}_{\mc{H}^-}\rp|^2$ in \eqref{eq:Wronskian-conservation-Psi-intro} for certain real $(\omega,m,\Lambda)$. However, one can attempt to recover the ``lost'' boundary term using the Killing energy associated to $\p_t$: though it is not conserved by the Teukolsky equation~\eqref{eq:Teukolsky-equation-intro} nor by the transformed equation~\eqref{eq:transformed-equation-intro-basic} when $s\neq 0$, for some frequencies $(\omega, m,\Lambda)$ we can appeal to the hierarchy of estimates we develop in this paper to control the errors which arise. It is by this procedure that in \eqref{eq:intro-main-thm-A} we avoid the dependence on the constants $\mathfrak{C}_s$, $\mathfrak{D}_s^{\mc I}$ and $\mathfrak{D}_s^{\mc H}$ altogether.

%\medskip

In the remainder of this introduction, we will try to unpack many of the ideas to which we have alluded above, as well as explain our strategy to address the main new difficulties. We begin, in Section~\ref{sec:intro-previous-work}, by reviewing the previous work concerning the Teukolsky equation~\eqref{eq:Teukolsky-equation-intro} with a focus on Question~\ref{qn:Teukolsky-physical-space}. To tackle Question~\ref{qn:Teukolsky-physical-space} in its full generality, as we have said, we will actually consider the transformed system of PDEs, including \eqref{eq:transformed-equation-intro-basic}, discussed above; this will be discussed more carefully in Section~\ref{sec:intro-transformed}.  In Section~\ref{sec:intro-separability}, we discuss the formal separability properties of the Teukolsky equation and of the transformed system: we review  Theorem~\ref{thm:mode-stability-intro-AB} and other aspects of the Teukolsky and transformed system which separability allows us to easily probe, and explain how it can be used to reduce Theorem~\ref{thm:bddness-decay-big} to Theorem~\ref{thm:frequency-estimates-big}A. Finally, Section~\ref{sec:intro-proof-freq-estimates} is meant to guide the reader through the key points in our proof of Theorem~\ref{thm:frequency-estimates-big}: we first give an overview of the ideas used to analyze the transformed operator $\swei{\mathfrak{R}}{s}$, then we focus on the novel ideas required to deal with the coupling terms $\swei{\mathfrak{J}}{s}$ on the right hand side of \eqref{eq:transformed-equation-intro-basic}.

\subsection{Previous work on the Teukolsky equation}
\label{sec:intro-previous-work}

Our work comes after more than a decade of intense research on stability of Schwarzschild and Kerr, of which we now give a brief account, using the Teukolsky equation~\eqref{eq:Teukolsky-equation-intro} as a reference.

\subsubsection{Schwarzschild \texorpdfstring{($a=0$)}{} and subextremal \texorpdfstring{($|a|<M$)}{} Kerr}

The wave equation case, $s=0$, which can already be seen as the ``poor man's stability problem,'' was settled in the full subextremal range $|a|<M$ by Dafermos, Rodnianski and the first author in \cite{Dafermos2016b}, where boundedness and decay of \eqref{eq:Teukolsky-equation-intro} were established, and \cite{Dafermos2014}, where a scattering theory was developed (see also \cite{Georgescu2017} for an earlier, fixed $m$, scattering theory for very slowly rotating Kerr-dS). The first result has also been generalized to the subextremal Kerr--Newman family of charged, rotating black holes in \cite{Civin2014}.  We also note the earlier results on the Schwarzschild family which are summarized in the lecture notes \cite{Dafermos2008},  and their extensions to very slowly rotating $|a|\ll M$ Kerr in the independent works \cite{Dafermos2010},  \cite{Tataru2011} and \cite{Andersson2015}. For applications to nonlinear wave equations, see for instance \cite{Luk2013,Ionescu2015,Lindblad2018,Lindblad2020,Oliver2020}.

For electromagnetic perturbations, $s=\pm 1$, the work of \cite{Pasqualotto2016} to which we have alluded was preceded by a proof, using different methods, of linear stability to perturbations of Cauchy data by Blue \cite{Blue2008} (see also \cite{Sterbenz2014}). We also note the extensions to very slowly rotating Kerr \cite{Andersson2015a,Ma2017}. Pasqualotto's work on the $s=\pm 1$ the linear setting was the basis for his resolution of a nonlinear problem, the analogue of Question~\ref{qn:Kerr-perturbations-nonlinear-2}\ref{it:Kerr-perturbations-nonlinear-2-Cauchy} for the Maxwell--Born--Infeld model of nonlinear electromagnetism \cite{Pasqualotto2017}. Moreover, together with \cite{Dafermos2016a}, to which we return below, it was an important ingredient in the proof of linear stability for the charged version of Schwarzschild ($a=0$), known as Reissner--Nordstr\"{o}m spacetime, in \cite{Giorgi2019a,Giorgi2020}.

Gravitational perturbations, $s=\pm 2$ have received the most attention in recent years. As we mentioned, boundedness and decay of \eqref{eq:Teukolsky-equation-intro} was first established in \cite{Dafermos2016a}, who also used this fact to derive linear stability for the entire system of perturbations in a well-posed double null gauge. Later, Johnson
established the linear stability of Schwarzschild in a generalized wave gauge \cite{Johnson2018} (see also the more recent \cite{Hung2018a,Hung2019}); approaches based on metric perturbations were developed in  \cite{Hung2017}. More recently, boundedness and decay of $\swei{\upalpha}{\pm 2}$ was generalized to very slowly rotating Kerr by Dafermos, Holzegel and Rodnianski \cite{Dafermos2017} and Ma \cite{Ma2017a} independently.  \cite{Andersson2019} have constructed a gauge in which these results could be exploited to obtain estimates for the rest of the metric components in the linearized setting. Using different methods, \cite{Hafner2019} have also shown decay for linearized metric perturbations in a generalized wave gauge for very slowly rotating, $|a|\ll M$, Kerr spacetimes. We note also the papers \cite{Finster2009,Finster2017}.

For results concerning the half-integer spin cases $s=\pm 1/2,\pm 3/2$, see \cite{Hafner2004,Mason1999} and references therein, respectively.

\subsubsection{A remark on extremal \texorpdfstring{($|a|=M$)}{} Kerr}
\label{sec:intro-previous-work-extremal}

Conspicuously absent from most of the previous discussion is the extremal Kerr case, $|a|=M$. In fact, as we have said, it is not clear whether in this limiting case Question~\ref{qn:Teukolsky-physical-space}, let alone Question~\ref{qn:Kerr-perturbations-nonlinear-2}, should have a positive or negative answer: already in the linearized setting, a mixture of stability and instability results have been obtained. As we mentioned, in \cite{Aretakis2012}, Aretakis showed that higher order transversal derivatives of axisymmetric solutions to the wave equation along the event horizon blow up, a phenomenon that is known as Aretakis instability; the result was extended to the Teukolsky equation \eqref{eq:Teukolsky-equation-intro} for general integer spin in \cite{Lucietti2012}.    In recent work, Casals, Gralla and Zimmerman \cite{Casals2016} suggest that outside of axisymmetry the Aretakis instability might set in even at a lower level of differentiability and, for solutions arising from smooth initial data supported away from the horizon, compute the expected decay rate along and away from the event horizon (the latter originally obtained in \cite{Andersson2000}), suggesting it is significantly slower than in the subextremal, $|a|<M$, case. However, axisymmetric solutions to the wave equation, $s=0$, are bounded and do decay, albeit at a slower rate than in the subextremal, $|a|<M$, case \cite{Aretakis2012a}. Moreover, mode stability has been shown by the second author \cite{TeixeiradaCosta2019} to hold for general solutions to \eqref{eq:Teukolsky-equation-intro} of any spin.

Recent work has largely focused on a spherically symmetric toy model for extremal Kerr, extremal Reissner--Nordstr\"{o}m, in the case of the wave equation, $s=0$. Aretakis in \cite{Aretakis2011,Aretakis2011b} and Aretakis, Angelopoulos and Gajic in \cite{Angelopoulos2019a} have answered an analogue of Question~\ref{qn:Teukolsky-physical-space} for $s=0$ in both cases A and B, respectively. As expected for Kerr, the sharp decay rates for the wave equation, obtained by the aforementioned authors in \cite{Angelopoulos2018a}, are significantly slower than for subextremal Reissner--Nordstr\"{o}m. It is unclear what the implications of the slow decay and Aretakis instability in the linear setting could be for the full nonlinear problem as described in Question~\ref{qn:Kerr-perturbations-nonlinear-2};  see however  \cite{Angelopoulos2019} for stability and instability results concerning nonlinear wave equations satisfying the null condition on the extremal Reissner--Nordstr\"{o}m black hole background and the numerics \cite{Murata2013} concerning nonlinear stability of the extremal Reissner--Nordstr\"{o}m exterior in the Einstein-scalar field model.

\subsection{From the Teukolsky equation to a transformed system of PDEs}
\label{sec:intro-transformed}

 As the Teukolsky equation~\eqref{eq:Teukolsky-equation-intro} is a wave-type equation, one is tempted to apply the techniques that have been developed to study the $s=0$ case, $\Box_g\swei{\alpha}{0}=0$, on black hole spacetimes in the past decade. However, there are some fundamental differences. For instance, as \eqref{eq:Teukolsky-equation-intro} does not arise from a Lagrangian formulation, there is no clear physical space conserved energy  when $s\neq 0$ (though there is one in the separated picture; see already Section~\ref{sec:intro-energy-identity}). Moreover, the first order terms present in \eqref{eq:Teukolsky-equation-intro} have poor decay in $r$ as $r\to\infty$ and $r\to r_+$. These novel features for $s\neq 0$ represent substantial difficulties in order for us to apply to \eqref{eq:Teukolsky-equation-intro} the methods which have been developed over the past two decades to analyze wave equations on black hole spacetimes. A natural question that arises, then, is whether one can substitute the $s\neq 0$ Teukolsky equation by another PDE which behaves more like the $s=0$ case.

\subsubsection{A transformed system in Schwarzschild \texorpdfstring{($a=0$)}{}}

Let us begin by reviewing how the difficulties discussed in the previous paragraphs are overcome for $a=0$. In \cite{Dafermos2016a}, the authors substitute the $s=\pm 2$ Teukolsky equation by the equation satisfied by a variable $\swei{\Psi}{\pm 2}$ which is obtained from $\swei{\upalpha}{\pm 2}$ by taking two appropriately $r$-weighted derivatives in a suitable null direction. As Chandrasekhar \cite{Chandrasekhar1975a} had observed in frequency space (\cite{Dafermos2016a} generalized his findings to physical space), $\swei{\Psi}{\pm 2}$ satisfies a decoupled equation in Schwarzschild ($a=0$), known as the Regge--Wheeler equation \cite{Regge1957}:
\begin{equation}
\lp[\frac12 \lp(L\underline{L}+\underline{L}L\rp) + \frac{\Delta}{r^4}\mathring{\slashed{\triangle}}^{[\pm 2]}+\frac{2\Delta}{r^4}\lp(2 -\frac{ 3M}{r}\rp)\rp]\swei{\Psi}{\pm 2}=0 \label{eq:RW-Schw-2}
\end{equation}
where $L$ and $\uL$ are linearly independent null vectors and $\mathring{\slashed{\triangle}}^{[\pm 2]}$ is the $(\pm 2)$-spin-weighted laplacian (see already Section~\ref{sec:prelims-vector-fields-operators}). Equation~\eqref{eq:RW-Schw-2} does not involve any first order terms in $\p_t$ or $\p_\phi$, c.f.\ \eqref{eq:transformed-equation-intro-basic}; futhermore, the zeroth order terms are of a repulsive nature, which means the Regge--Wheeler equation behaves even better than the wave equation. A similar situation occurs in the $s=\pm 1$ case:  $\swei{\phi}{\pm 1}$, obtained from $\swei{\upalpha}{\pm 1}$ by taking one appropriately $r$-weighted derivative in a suitable null direction satisfies a good decoupled wave equation, known as the Fackerell--Ipser \cite{Fackerell1972} equation (see also \cite{Pasqualotto2016}). 

In both \cite{Pasqualotto2016,Dafermos2016a}, where we recall Question~\ref{qn:Kerr-perturbations-nonlinear-2}\ref{it:Teukolsky-physical-space-bddness-decay} is settled for $|s|=1,2$, respectively, in the Schwarzschild $a=0$ case, estimates are first obtained for $\swei{\Psi}{s}$ in an analogous way to the wave equation case; then, they are integrated along null directions to provide estimates on $\swei{\upalpha}{s}$, $|s|=1,2$.

\subsubsection{A transformed system for general subextremal Kerr \texorpdfstring{($|a|< M$)}{}}

In the general subextremal $|a|< M$ Kerr case, it would be convenient to obtain a transformation, similarly to the $a=0$ case, which eliminates the first order terms in \eqref{eq:Teukolsky-equation-intro}. If we assume that a new variable $\swei{\Psi}{s}$ is again obtained by taking $|s|$ $r$-weighted derivatives along a suitable null direction, estimates for $\swei{\upalpha}{s}$ can be obtained by integrating the transport equations relating $\swei{\Psi}{s}$ to the Teukolsky variable, as in the Schwarzschild case. Thus, recall, from the preamble to this introduction, \eqref{eq:def-upppsi-k-intro-basic} i.e\
$$\swei{\uppsi}{s}_{(0)}=j_{s,0}^{-1}\swei{\upalpha}{s}\,,\qquad \swei{\uppsi}{s}_{(k+1)}=j_{s,k}^{-1}\mc{L}\swei{\uppsi}{s}_{(k)}\,, \quad 0\leq k \leq |s|-1\,, \qquad \swei{\Psi}{s}:=\swei{\uppsi}{s}_{|s|}\,,$$ 
for $\mc{L}$ representing a null direction and $j_{s,k}=j_{s,k}(r)$ be some $r$-weights for $k\in\{0,...,|s|\}$.  One can easily check that there is no choice of $r$-weights $j_{s,k}$ which would lead to a full decoupling of the equation for $\swei{\Psi}{s}$ from the equation for $\swei{\upalpha}{s}$ unless $a=0$. As we must accept some degree of coupling between the equation for $\swei{\Psi}{s}$ and those for $\swei{\uppsi}{s}_{(k)}$, $0\leq k <|s|$, the goal now is to choose our weights $j_{s,k}(r)$ in a way which, first and foremost, produces an equation for $\swei{\Psi}{s}$ without the troublesome $\p_t$ and $\p_\phi$ first order derivatives we encounter in the Teukolsky equation~\eqref{eq:Teukolsky-equation-intro} and, secondly, ``minimizes'' the degree of coupling.

In the present paper, we demonstrate that the choice
\begin{align}
\uppsi_{(0)}^{[s]}=(r^2+a^2)^{-|s|+1/2}\Delta^{|s|/2(1+\sign s)}\swei{\upalpha}{s}\,;\qquad\frac{\Delta}{(r^2+a^2)^2}\uppsi_{(k+1)} = \mc{L} \uppsi_{(k)}^{[s]}\,,\,\, k=0,...,|s|\,, \label{eq:def-upppsi-k-intro}
\end{align}
where $\mc{L}=L$ when $s<0$ and $\mc{L}=\uL$ when $s>0$ is sufficient to satisfy the first requirement, and to have the coupling to the variables indexed by $k=0,...,|s|$ occur only through dependence on those variables and their azimuthal, $\p_\phi$, derivative. Indeed, we show that if, for any $s\in\mathbb{Z}$, $\swei{\upalpha}{s}$ solves the homogeneneous Teukolsky equation~\eqref{eq:Teukolsky-equation-intro}, then the partial differential equation satisfied by $\swei{\Psi}{s}$ is (c.f.\ the compressed preliminary version given in \eqref{eq:transformed-equation-intro-basic})
\begin{equation}
\swei{\mathfrak{R}}{s}\swei{\Psi}{s}=a\swei{\mathfrak{J}}{s}\,,\qquad \swei{\mathfrak{J}}{s}=\sum_{k=0}^{|s|-1} \frac{\Delta}{(r^2+a^2)^2} \lp[ c^{\Phi}_{s,\,|s|,\,k}(r) \Phi+c^\mr{id}_{s,\,|s|,\,k}(r)\rp]\swei{\uppsi}{s}_{(k)}\,, \label{eq:transformed-equation-intro}
\end{equation}
where $c_{s,\,|s|,\,k}^\Psi$ and $c_{s,\,|s|,\,k}^{\rm id}$ are bounded functions in $[r_+,\infty)$ that can be explicitly computed by a recursive formula. The operator $\swei{\mathfrak{R}}{s}$, given by
\begin{align}
\begin{split}
\swei{\mathfrak{R}}{s}&:=\frac12 \lp(L\underline{L}+\underline{L}L\rp) +\frac{\Delta}{(r^2+a^2)^2}\lp(\mathring{\slashed{\triangle}}^{[s]}-a^2\sin^2\theta\p_t^2-2a\p_t\p_\phi+2ias\cos\theta\p_t+ s^2\rp) \\
&\qquad+\frac{\Delta\lp[(1-2s^2)a^2\Delta + 2(1-s^2)Mr(r^2-a^2)\rp]}{(r^2+a^2)^4}\,, \end{split}\label{eq:transformed-R-intro}
\end{align}
where $\mathring{\slashed{\triangle}}^{[s]}$ is the $s$-spin-weighted laplacian (see already Section~\ref{sec:prelims-vector-fields-operators}), can truly be seen as an analogue of $\Box_g$ (acting on the radiation field) for spin-weighted functions: it reduces to the wave operator acting on the radiation field when $s=0$, it has no first order terms, and its zeroth order terms, which are of a repulsive nature, are in fact better behaved than for $s=0$. The coupling terms are multiplied by $a$, so we recover the decoupled $a=0$ transformed equation in \eqref{eq:RW-Schw-2} and, moreover, they have ``good'' $r$-weights, that decay with $\Delta$ as $r\to r_+$ and with $r^{-2}$ as $r\to \infty$. Our framework is in fact a generalization to $s\in\mathbb{Z}$ of that developed in  \cite{Dafermos2017} for $s=\pm 2$.

We note the following trade-off when $a\neq 0$: if we opt for physical space transformations of the Teukolsky equation \eqref{eq:Teukolsky-equation-intro} similar to those for $a=0$, we have to accept in \eqref{eq:transformed-equation-intro} some degree of coupling, absent if $a=0$, to the Teukolsky equation. Conversely, if our main goal is to obtain a fully decoupled new equation, we may have to consider transformations with no clear physical space representation. This path was taken in much of the physics literature on rotating $a\neq 0$ Kerr: for $s=\pm 2$, \cite{Chandrasekhar1976,Chandrasekhar1976a,Sasaki1982} have indeed succeeded in finding transformed quantities that satisfy decoupled equations by considering transformations that can only be interpreted in the separated picture (see also \cite{Glampedakis2017} for an overview of previously considered transformations). We do not consider such transformations since, as they cannot be easily translated into the physical space picture, they pose a greater obstacle to our methods than the mild coupling in \eqref{eq:transformed-equation-intro}.

\subsection{The role of separation of variables}
\label{sec:intro-separability}

Let $\swei{\upalpha}{s}$ be a solution to the Teukolsky equation \eqref{eq:Teukolsky-equation-intro}. By taking a  Fourier--Laplace transform with respect to $t$ and noting \eqref{eq:Teukolsky-equation-intro} commutes with $\p_\phi$, we can formally decompose $\swei{\upalpha}{s}$ into functions
\begin{align}
\upalpha^{[s],\,(a\omega)}_{m}(t,r,\theta,\phi) = e^{-i\omega t} e^{im\phi}\tilde{\upalpha}^{[s],\,(a\omega)}_m(r,\theta)\,, \label{eq:unseparated-mode}
\end{align}
for $\omega\in\mathbb{C}$ and $m$ an integer or half-integer depending on $s$.  Plugging in the ansatz \eqref{eq:unseparated-mode} into the Teukolsky equation \eqref{eq:Teukolsky-equation-intro}, we obtain a PDE for $\tilde{\upalpha}^{[s],\,(a\omega)}_m$ in the $r$ and $\theta$ variables. This PDE is clearly separable if $a=0$, as \eqref{eq:Teukolsky-equation-intro} commutes with the generators of the spherical symmetries of the spacetime. 

For rotating Kerr black holes, $a\neq 0$, it would seem that there are not enough symmetries for the same result to hold; yet, as we will see in Section~\ref{sec:intro-carter-constant}, it does. In our series of papers, we view separation of variables as a concrete, convenient and explicit approach to frequency analysis, intimately tied to the Kerr geometry, that allows one to address several aspects of the Teukolsky equation simultaneously under the same formalism, such as: the existence of a conservation law for Teukolsky with no clear physical space analogue (Section~\ref{sec:intro-energy-identity}), the possibility of superradiance and a proof of the so-called mode stability of the Teukolsky equation (Section~\ref{sec:intro-mode-stability}), the phenomenon of trapping for the transformed system (ignoring coupling terms) as well as its quantitative non-superradiance (for $|a|<M$, see Section~\ref{sec:intro-trapping}), etc. All these aspects are reflected in Theorem~\ref{thm:mode-stability-intro-AB} and in Theorem~\ref{thm:frequency-estimates-big} (see already Section~\ref{sec:intro-proof-freq-estimates}). In Section~\ref{sec:intro-to-freq-space}, we explain how Theorem~\ref{thm:bddness-decay-big} can be reduced to obtaining the uniform in frequency estimates of Theorem~\ref{thm:frequency-estimates-big}A.

\subsubsection{Carter's constant, formal separation and the Teukolsky radial ODE}
\label{sec:intro-carter-constant}

Besides the conserved quantities induced by the stationary, $\p_t$, and axisymmetric, $\p_\phi$, Killing fields on Kerr, it turns out that there is a hidden conserved quantity for geodesics. The conservation law was first found by Carter \cite{Carter1968}, with the quantity being subsequently known as \textit{Carter's constant}, and was later shown to be associated with a Killing tensor \cite{Walker1970}. Using this additional symmetry, Carter was able to show that, if $s=0$, the PDE for $\tilde{\alpha}^{[s],\,(a\omega)}_m(r,\theta)$ admits separable solutions in the entire $|a|\leq M$ range; his result was then extended by Teukolsky \cite{Teukolsky1973} for $s\neq 0$ by analogy.

In what follows, a separable solution characterized by the frequency triple $(\omega,m,\Lambda)$, where $m$ is an integer or half-integer depending on $s$ and  $\omega,\Lambda\in\mathbb{C}$, is written as 
\begin{align}
\upalpha^{[s],\,(a\omega)}_{m\Lambda}(t,r,\theta,\phi) = e^{-i\omega t}e^{im\phi}S^{[s],\,(a\omega)}_{m\Lambda}(\theta)(r^2+a^2)^{-1/2}\Delta^{-s/2}u^{[s],\,(a\omega)}_{m\Lambda}(r)\,. \label{eq:separable-intro}
\end{align}
Here, $S^{[s],\,(a\omega)}_{m\Lambda}$ satisfies an angular ODE together with boundary conditions that constrain the values of the separation constant $\Lambda$; e.g.\ (see for instance \cite[Proposition 2.1]{TeixeiradaCosta2019})
\begin{align}
\omega\in\mathbb{R}\Rightarrow \Lambda\in\mathbb{R}\,;\qquad \Im\omega>0\Rightarrow\Im\lp[\lp(\Lambda-a^2\omega^2\rp)\overline\omega\rp]<0\,. \label{eq:intro-angular-constraints}
\end{align}
The function $u^{[s],\,(a\omega)}_{m\Lambda}$ satisfies the radial ODE given before as \eqref{eq:radial-ODE-alpha-intro-basic}, which is written in full as
\begin{equation}\label{eq:radial-ODE-u-intro}
\begin{gathered}
\lp(\swei{u}{s}\rp)''+\lp(\omega^2-\swei{V}{s}\rp)\swei{u}{s}=0\,, \\
\begin{aligned}
\swei{V}{s} &= \frac{\Delta\Lambda+4Mram\omega -a^2m^2}{(r^2+a^2)} + s^2\frac{(r-M)^2}{(r^2+a^2)^2}+\frac{\Delta\lp(a^2\Delta+2Mr(r^2-a^2)\rp)}{(r^2+a^2)^4}\\
&\qquad-2is\frac{\omega \lp(r(r^2+a^2)+M(a^2-3r^2)\rp)+am(r-M)}{(r^2+a^2)^2}\,,
\end{aligned}
\end{gathered}
\end{equation}
with $'$ denoting a derivative with respect to a modified radial coordinate $r^*$ that maps $r=r_+$ to $r^*=-\infty$ leaving $r=\infty$ unchanged. We note the last line of the potential, which decays at a significantly slower rate as $r\to \infty$ than the first two lines, is only present when $s\neq 0$.

\subsubsection{The Teukolsky--Starobinsky conservation law for the Teukolsky equation}
\label{sec:intro-energy-identity}

As was mentioned already in the preamble to this introduction, the Killing energy associated  to the vector field $\p_t$ is not conserved for the Teukolsky equation~\eqref{eq:Teukolsky-equation-intro} due to the equation's first order terms. Nonetheless, a conservation law for the Teukolsky equation exists in the separated picture\footnote{It may be possible to find additional conservation laws, associated to the Teukolsky equation itself or arising from the entire system of linearized gravitational or electromagnetic perturbations,  which hold in physical space, see \cite{Holzegel2016} for $a=0$.}. It follows from the fact that,  since $\swei{V}{s}=\overline{\swei{V}{-s}}$ in \eqref{eq:radial-ODE-u-intro}, the Wronskian between $\swei{u}{+s}$ and $\overline{\swei{u}{-s}}$ is conserved for the homogeneous Teukolsky equation. When $(\omega,m,\Lambda)$ are real, using the Teukolsky--Starobinsky identities \cite{Teukolsky1974,Starobinsky1973,Kalnins1989} to map solutions of the radial ODE of spin $+s$ to spin $-s$ or vice-versa, we can upgrade the conservation of the Wronskian to a conservation law involving only one sign of spin; we refer to the conserved quantity as the Teukolsky--Starobinsky energy. For $s\neq 0$, the Teukolsky--Starobinsky energy assumes the role that the Killing energy has for $s=0$.

In order to state the Teukolsky--Starobinsky conservation law for real frequencies, let us define $\swei{u}{s}_{\mc{I}^\mp}$ to be a solution to \eqref{eq:radial-ODE-u-intro} such that the function $r^{\frac{s}{2}(1\mp 1)}\swei{u}{s}_{\mc{I}^\mp}$ is regular at past or future null infinity, respectively, and $\swei{u}{s}_{\mc{H}^\mp}$ to be a solution to \eqref{eq:radial-ODE-u-intro} such that the function $\Delta^{\frac{s}{2}(1\mp 1)}\swei{u}{s}_{\mc{H}^\mp}$ is regular at the past or future event horizon. Then, any solution to the homogeneous radial ODE~\eqref{eq:radial-ODE-u-intro} can be written as, dropping most sub and superscripts,
\begin{align*}
\swei{u}{s}=\swei{a}{s}_{\mc{I}^+}\swei{u}{s}_{\mc{I}^+}+\swei{a}{s}_{\mc{I}^-}\swei{u}{s}_{\mc{I}^-} =\swei{a}{s}_{\mc{H}^+}\swei{u}{s}_{\mc{H}^+}+\swei{a}{s}_{\mc{H}^-}\swei{u}{s}_{\mc{H}^-}
\end{align*}
for some complex coefficients $\swei{a}{s}_{\mc{I}^\pm}$ and $\swei{a}{s}_{\mc{H}^\pm}$.  Without loss of generality, let $s\geq 0$. For integer $s$ and real $(\omega,m,\Lambda)$, the Teukolsky--Starobinsky energy conservation in the separated picture can be written in terms of these coefficients as
\begin{gather}
\begin{gathered}
(2\omega)^{2s}\lp|\swei{a}{\pm s}_{\mc{I}^\pm }\rp|^2 +\frac{\omega-m\upomega_+}{\omega}
\mathfrak{C}_s\lp\{\begin{array}{lr} 
\frac{\prod_{j=1}^{|s|}\lp\{\lp[4Mr_+(\omega-m\upomega_+)\rp]^2+(s-j)^2(r_+-r_-)^2\rp\}}{(r_+-r_-)^{\frac{s}{2}(1\pm 1)}},\,&|a|<M\\
\lp[4M^2(\omega-m\upomega_+)\rp]^{2s},\,&|a|=M
\end{array}\rp\}^{-1}
\lp|\swei{a}{\pm s}_{\mc{H}^\pm}\rp|^2\\
=\frac{\mathfrak{C}_s}{(2\omega)^{2s}}\lp|\swei{a}{\pm s}_{\mc{I}^\mp}\rp|^2+\frac{\omega-m\upomega_+}{\omega}
\lp\{\begin{array}{lr} 
\frac{\prod_{j=0}^{|s|-1}\lp\{\lp[4Mr_+(\omega-m\upomega_+)\rp]^2+(s-j)^2(r_+-r_-)^2\rp\}}{(r_+-r_-)^{\frac{s}{2}(1\mp 1)}},\,&|a|<M\\
\lp[4M^2(\omega-m\upomega_+)\rp]^{2s},\,&|a|=M
\end{array}\rp\}
\lp|\swei{a}{\pm s}_{\mc{H}^\mp}\rp|^2
\end{gathered} \label{eq:Wronskian-conservation-u-intro}
\end{gather}
where we used the notation 
$$\upomega_+:=\frac{a}{2Mr_+}\,,$$
and $\mathfrak{C}_s$ are the so-called Teukolsky--Starobinsky constants, which depend only on $a$, $M$, $|s|$ and the frequency triple $(\omega,m,\Lambda)$. For $|s|\leq 2$, these constants satisfy
\begin{align*}
\mathfrak{C}_0=1\,,\,\,\mathfrak{C}_s\geq 0\,,\text{~for~} |a|\leq M\,; \qquad \mathfrak{C_s}\geq (2|s|)^{2|s|}>0\,,\text{~for~} a=0\,.
\end{align*}
Note that, for $s=0$, \eqref{eq:Wronskian-conservation-u-intro} reduces to the Killing energy identity for the vector field $\p_t$. We also note that, for spin $+s\geq 0$, the identity \eqref{eq:Wronskian-conservation-u-intro} had already appeared at the level of the transformed variable $\swei{\Psi}{s}$ in the preamble to this introduction in \eqref{eq:Wronskian-conservation-Psi-intro}.

\subsubsection{Superradiance and mode stability for the Teukolsky equation}
\label{sec:intro-mode-stability}

In virtue of the separability property of the Teukolsky equation, which admits solutions \eqref{eq:separable-intro} for $\omega,\Lambda\in\mathbb{C}$ and $m\in\frac12\mathbb{Z}$ integer or half-integer depending on $s$, one can immediately ask whether \textit{mode solutions} exist with $\Im\omega>0$ or $\omega\in\mathbb{R}$; if not, as we have seen, we say \textit{mode stability}  holds for \eqref{eq:Teukolsky-equation-intro}. Mode solutions are solutions to \eqref{eq:Teukolsky-equation-intro} taking the separable form \eqref{eq:separable-intro} with the boundary conditions $\swei{a}{s}_{\mc{H}^-}=\swei{a}{s}_{\mc{I}^-}=0$ (in the notation of Section~\ref{sec:intro-energy-identity}); note such solutions, if they exist, are exponentially growing in time, or bounded but non-decaying, respectively. We have seen already that the answer is no, i.e.\ mode stability holds (see Theorem~\ref{thm:mode-stability-intro-AB}); in the remainder of this section, we recall to the reader why this result is not immediate and give context to the versions of the result we have already cited in the preamble to the introduction.

For $a=0$, mode stability is in fact immediate from the Teukolsky--Starobinsky energy conservation, which for integer $s$ takes the form \eqref{eq:Wronskian-conservation-u-intro}, and the fact that $\mathfrak{C}_s>0$. For $s=0$, where it is the Killing field $\p_t$ which gives rise to the conservation law, we can interpret this result as a consequence of the fact that $\p_t$ is timelike everywhere in the black hole exterior, except at the horizon. 

On the other hand, if $a\neq 0$, there is a region close to the event horizon, called the ergoregion, where $\p_t$ becomes spacelike. Thus waves coming in from null infinity may acquire more energy when passing through the ergoregion, which could, of course, be worrisome from the point of view of the stability statements we are after. As an example, let $\omega\in\mathbb{R}\backslash\{0\}$ and $s=0$; conservation of the Killing energy associated to $\p_t$ (equivalently, specializing \eqref{eq:Wronskian-conservation-u-intro} to $s=0$) gives the identity
\begin{align}
\omega(\omega-m\upomega_+)\lp|\swei{a}{0}_{\mc{H}^+}\rp|^2+\omega^2\lp|\swei{a}{0}_{\mc{I}^+}\rp|^2=0\,,  \qquad \upomega_+:=\frac{a}{2Mr_+}\,.\label{eq:energy-identity-wave-intro}
\end{align}
Clearly, if $\omega(\omega-m\upomega_+)\geq 0$ (note this is always true if $a=0$), we conclude that $\swei{a}{0}_{\mc{I}^+}=0$, which, since also $\swei{a}{0}_{\mc{I}^-}=0$ by assumption, implies $u\equiv 0$ by standard ODE theory. In this case, the wave does not gain any extra energy, and we see that the wave equation is modally stable. On the other hand, \eqref{eq:energy-identity-wave-intro} fails to provide an upper bound on the values $u$ can take if the frequency parameters are  \textit{superradiant}, i.e.\ if
\begin{align}
\omega(\omega-m\upomega_+)< 0 \,. \label{eq:superradiance-intro}
\end{align}
Similar statements holds true for real $\omega$ and spin $s=\pm 1,\pm 2$ when one considers the frequency space Teukolsky--Starobinsky energy conservation  \eqref{eq:Wronskian-conservation-u-intro}. For $\Im\omega>0$, though the identity following from the Teukolsky--Starobinsky energy conservation does not take the form \eqref{eq:Wronskian-conservation-u-intro}, one can again see (at least in the $s=0$ case) that there are frequencies for which mode stability cannot be inferred from it. In conclusion, the conservation law for the Teukolsky equation of spin $s\in\{0,\pm 1, \pm 2\}$ does not guarantee its mode stability for rotating, $a\neq 0$, Kerr black holes.

Yet, remarkably, as we have mentioned, mode stability does hold for the Teukolsky equation \eqref{eq:Teukolsky-equation-intro}, as summarized in Theorem~\ref{thm:mode-stability-intro-AB}. The pioneering result is due to Whiting \cite{Whiting1989}, who showed that, for subextremal $|a|<M$ parameters, there are no nontrivial mode solutions with $\Im\omega>0$ for any $s\in\frac12 \mathbb{Z}$. Whiting's proof is based on an injective transformation of the $s\leq 0$ mode solutions into solutions of the scalar wave equation on a spacetime with no ergoregion: the Killing energy with respect to $\p_t$ is conserved for the new solutions, and $\p_t$ is never spacelike in the exterior in the new spacetime. Specifically, he considers an \textit{integral} transformation of the radial component,
\begin{align}
\begin{split}
\tilde{u}(x) &:=(x^2+a^2)^{1/2}(x-r_-)^{-s}(x-r_+)^{-2iM\omega}e^{-i\omega x}  \times\\
&\qquad\times  \int_{r_+}^\infty\lp\{ e^{\frac{2i\omega}{r_+-r_-}(x-r_-)(r-r_-)}(r-r_-)^{i\frac{2Mr_-\omega-am}{r_+-r_-}}(r-r_+)^{-i\frac{2Mr_+\omega-am}{r_+-r_-}} e^{i\omega r}\times\rp.\\
&\qquad\qquad\qquad\qquad\lp.\times\Delta^{-s/2}(r^2+a^2)^{-1/2}\swei{u}{s}(r)\rp\} dr\,,
\end{split} \label{eq:intro-subextremal-transformation}
\end{align}
and a \textit{differential} transformation of the angular component which are well-defined for $s\leq 0$; to conclude mode stability for $s>0$ from the $s\leq 0$ result, he them appeals to the Teukolsky--Starobinsky identities we have mentioned in the previous section. More recently, the first author understood how to extend Whiting's radial transformation \eqref{eq:intro-subextremal-transformation} to  $\omega\in\mathbb{R}\backslash\{0\}$, thus establishing a real mode stability result for the wave equation, $s=0$, on subextremal, $|a|<M$, Kerr in \cite{Shlapentokh-Rothman2015}; this was later generalized to $s\in\frac12\mathbb{Z}$ \cite{Andersson2017} and to the wave equation $s=0$ in the Kerr--Newman (Kerr's charged cousin) subextremal black hole \cite{Civin2014a}. Whiting's transformation \eqref{eq:intro-subextremal-transformation} fails in the extremal limit $|a|\to M$; however, the second author has shown that another transformation of the mode solutions, specifically tailored to a degenerate horizon, can be employed to show mode stability for extremal $|a|=M$ Kerr as well \cite{TeixeiradaCosta2019}, both in the upper half-plane and on the real axis.  We emphasize that the known proofs of mode stability on the real axis, which rely on transformations like \eqref{eq:intro-subextremal-transformation} of separable mode solutions  and, for $s\neq 0$, on the Teukolsky--Starobinsky identities, crucially use the separability property of the Teukolsky equation~\eqref{eq:Teukolsky-equation-intro}.

To conclude the section, let us recall to the reader that, as we have mentioned in the preamble, we will be interested in the decompositions into separable solutions of the form \eqref{eq:separable-intro} only for \textbf{real frequencies} $(\omega, m,\Lambda)$. In this case, one can show separable solutions form a complete basis of a space of ``sufficiently integrable in time'' solutions to the Teukolsky equation~\eqref{eq:Teukolsky-equation-intro} (see already Section~\ref{sec:intro-to-freq-space}). Thus, from this point onwards, $(\omega, m,\Lambda)$ will always denote a set of real frequency parameters.

\subsubsection{Trapping in the separated transformed system}
\label{sec:intro-trapping}

Another important feature of Kerr black holes is the presence of \textit{trapped} null geodesics. These are null geodesics which never intersect the event horizon nor escape to future null infinity, $\mc{I}^+$. An analysis of the equations of null geodesic flow easily yields that the future trapped null geodesics on the Schwarzschild black hole asymptote to a hypersurface of codimension one, characterized by $r=3M$ in Boyer--Lindquist coordinates, usually called the photon sphere. By contrast, for rotating Kerr $a\neq 0$, they do not asymptote to a single hypersurface and there is a range of $r$ values for which trapping can occur \cite[Section 62]{Chandrasekhar}. Yet in the separated picture, trapping for waves on $a\neq 0$ Kerr is immediately seen to be similar to that on Schwarzschild  ($a=0$). 

To understand the previous statement, let us consider our transformed system from Section~\ref{sec:intro-transformed}, which naturally inherits the separability property of the Teukolsky equation: if $\swei{\uppsi}{s}_{(k)}$ be obtained from a solution to \eqref{eq:Teukolsky-equation-intro} of the form \eqref{eq:separable-intro} by the procedure in Section~\ref{sec:intro-transformed} and, by abuse of notation, we identify  the variables with their radial component, we can write the radial ODE corresponding to \eqref{eq:transformed-equation-intro} as
\begin{equation}\label{eq:radial-ODE-Psi-intro}
\begin{gathered}
\lp(\swei{\Psi}{s}\rp)''+\lp(\omega^2-\swei{\mc{V}}{s}\rp)\swei{\Psi}{s}=a\sum_{k=0}^{|s|-1} \frac{\Delta}{(r^2+a^2)^2} \lp[ im c^{\Phi}_{s,\,|s|,\,k}(r) +c^\mr{id}_{s,\,|s|,\,k}(r)\rp]\swei{\uppsi}{s}_{(k)}\,, \\
\begin{aligned}
\swei{\mc{V}}{s} &= \frac{\Delta\Lambda+4Mram\omega -a^2m^2}{(r^2+a^2)} + s^2\frac{\Delta}{(r^2+a^2)^2}\\
&\quad+\frac{\Delta}{(r^2+a^2)^4}\lp[(1-2s^2)a^2\Delta+2(1-s^2)Mr(r^2-a^2)\rp]\,.
\end{aligned}
\end{gathered}
\end{equation}
Note that this radial ODE has a real potential (because left the hand side of \eqref{eq:transformed-equation-intro} has no first order derivatives) and that it behaves even better when $s\neq 0$  than in the wave equation case $s=0$ due to an additional, repulsive, term in the potential in the former case. 
 
Let us focus on the radial ODE~\eqref{eq:radial-ODE-Psi-intro} either for $s=0$, or for any $s\in\mathbb{Z}$ if we also disregard all coupling terms. It is easy to see that, for each frequency triple $(\omega,m,\Lambda)$,  the potential admits a unique maximum at $r=r_{\max}(\omega,m,\Lambda)$. As the frequency triple $(\omega,m,\Lambda)$ is trapped if $\mc{V}(r_{\rm max})=\omega^2$, we find that the set where trapping can occur corresponds to a single $r$-value as in the Schwarzschild ($a=0$) case---hence it is an \textit{unstable} trapping---but where now this value of $r$ depends on the frequency triple. 

Returning to the physical space picture, one notices that, unless $|a|\ll M$ is sufficiently small, some of the $r$ values for which trapping are inside the ergoregion. Thus one could be concerned with the existence of waves that are trapped inside a certain $r$-range and continuously interact with the ergoregion to increasingly gain energy; such waves would likely not be bounded or decay in time. In the separated picture however, at least for subextremal, $|a|<M$, black holes, one can immediately see that this does not happen. There (see \cite{Dafermos2016b} for the $s=0$ case), one can show that solutions of the form \eqref{eq:separable-intro} which are superradiant, i.e.\ for which \eqref{eq:superradiance-intro} holds, and high energy ($|\omega|\gg 1$) are quantitatively non-trapped in the sense that there is an $\epsilon>0$ such that
\begin{align}
\omega(\omega-m\upomega_+)\leq 0\,, \,\, |\omega|\gg 1 \implies V(r_{\rm max})-\omega^2\geq \epsilon \omega^2\gg 1\,. \label{eq:superrad-trapping-qtitative-intro}
\end{align}
We note that $\epsilon$ can be made independent of $\omega$ in the entire subextremal Kerr range $|a|<M$; however, in the limit $|a|\to M$, we have that $\epsilon\to 0$ as $\omega\to m\upomega_+$, so that the quantitative disjointess of trapping and superradiance does not always hold.

We emphasize that \eqref{eq:superrad-trapping-qtitative-intro} is one of the main observations that allowed for a proof of boundedness and decay for the wave equation, $s=0$, in the full subextremal $|a|<M$ range \cite{Dafermos2016b}. Its failure in the extremal $|a|=M$ case is one of the main reasons why general, non-axisymmetric solutions to the wave equation on extremal Kerr black holes are still poorly understood.

\subsubsection{The reduction of Theorem~\ref{thm:bddness-decay-big} \texorpdfstring{in \cite{SRTdC2022} }{}to Theorem~\ref{thm:frequency-estimates-big}A in this paper}
\label{sec:intro-to-freq-space}

Having discussed separation of variables, let us briefly return to Theorem~\ref{thm:bddness-decay-big}. One reason to consider some kind of frequency space analysis in proving Theorem~\ref{thm:bddness-decay-big} comes from the observation, due to Alinhac \cite{Alinhac2009}, that unless $a=0$, already for $s=0$, there are no classical physical space vector field multipliers which can capture trapping  (see Section~\ref{sec:intro-trapping}). In view of Section~\ref{sec:intro-trapping}, this is a difficulty that can easily be addressed with the help of separation of variables. In fact,  separation of variables is a convenient technique for frequency analysis as it allows us to address all the issues discussed in the present section at once: specifically, we can
\begin{enumerate}[noitemsep,label=(\roman*)]
\item Have a very explicit characterization of trapping: it may occur only at a single $r$-value which depends on the frequency triple $(\omega, m,\Lambda)$ and, moreover, if it does then $(\omega, m,\Lambda)$ are quantitatively non-superradiant in the subextremal $|a|<M$ case; see Section~\ref{sec:intro-trapping}.
\item For $s\neq 0$, make use of the Teukolsky--Starobinsky energy conservation law (see Section~\ref{sec:intro-energy-identity}) and an improved estimate to handle the coupling terms $\swei{\mathfrak{J}}{s}$ at trapping (see already Section~\ref{sec:outline-proof-estimate-unbdd}) which are both well understood in the separated picture only.
\item Maintain the same formalism when invoking mode stability: mode stability, which is essential to any proof of Theorem~\ref{thm:bddness-decay-big} outside the realm of very slowly rotating $|a|\ll M$ parameters, is proven using a transformation and, if $s\neq 0$, Teukolsky--Starobinsky identities, both with no clear analogue outside the separated picture; see Section~\ref{sec:intro-mode-stability}.
\end{enumerate}
The frequency localized estimates exploiting the features (i) and (ii) constitute Theorem~\ref{thm:frequency-estimates-big}A; (iii) is embodied by Theorem~\ref{thm:mode-stability-intro-AB}.

Let us now give a preview of how Theorem~\ref{thm:bddness-decay-big} is actually proven from Theorems~\ref{thm:mode-stability-intro-AB} and ~\ref{thm:frequency-estimates-big}A in \cite{SRTdC2022} (see also Remark~\ref{it:remark-intro-A}).  As we have mentioned in Section~\ref{sec:intro-carter-constant}, the Teukolsky equation \eqref{eq:Teukolsky-equation-intro} is formally separable, in that it admits separable solutions of the form \eqref{eq:separable-intro} for, if $s\in\mathbb{Z}$, $(\omega,m,\Lambda)\in\mathbb{C}\times\mathbb{Z}\times\mathbb{C}$. Moreover, for $\omega,\Lambda\in \mathbb{R}$ ($\Lambda\in\mathbb{R}$ follows from $\omega\in\mathbb{R}$ by  \eqref{eq:intro-angular-constraints}), modes of form the \eqref{eq:separable-intro} form a complete basis for ``sufficiently integrable in time'', in the sense that they admit a Fourier transform in $t$, solutions of \eqref{eq:Teukolsky-equation-intro}. An equivalent statement holds for the transformed system described in Section~\ref{sec:intro-transformed}, in particular for the PDE \eqref{eq:transformed-equation-intro}: $L^2$ estimates for ``sufficiently integrable in time'' solutions of this system follow from $L^2$ estimates on the corresponding system of radial ODEs, such as \eqref{eq:radial-ODE-Psi-intro}, which are uniform in the frequency parameters $(\omega,m,\Lambda)$, by Plancherel's theorem. Hence Theorem~\ref{thm:frequency-estimates-big}A, combined with Theorem~\ref{thm:mode-stability-intro-AB}, can be shown to imply an estimate for the PDEs \eqref{eq:Teukolsky-equation-intro} and \eqref{eq:transformed-equation-intro} (see our upcoming \cite{SRTdC2022} for details).  

We highlight again that the assumption that $(\omega,m,\Lambda)$ are real frequency parameters is key; for  $\omega\in\mathbb{C}$, it is not clear that the separable ansatz provides a basis for sufficiently integrable in time solutions. 

Given that our analysis is based entirely on the real frequencies, an important technical difficulty is obtaining \textit{a priori} that, if $a\neq 0$, general solutions to \eqref{eq:Teukolsky-equation-intro} and to the transformed system, e.g.\ \eqref{eq:transformed-equation-intro}, are sufficiently integrable in time. As we have mentioned in the preamble, this requires a continuity argument in $a$, of the style obtained by the first author together with Dafermos and Rodnianski to address the case $s=0$ in \cite{Dafermos2016b}, which we flesh out in our upcoming \cite{SRTdC2022}.

\subsubsection{Alternatives to separation of variables for \texorpdfstring{$a=0$}{the scalar wave equation} \texorpdfstring{and $|a|\ll M$}{on very slowly rotating Kerr}}

To conclude the section, we contrast our methods with others which can be used in the very slowly rotating case $|a|\ll M$. Only in this perturbative setting, where one need not appeal to the mode stability of Section~\ref{sec:intro-mode-stability}, can one hope to be able to avoid separation of variables altogether. Indeed, several techniques bypassing explicit separation of variables have been developed for $s=0$ and $|a|\ll M$. 

Uniform boundedness of energy for the scalar wave equation $s=0$ in the very slowly rotating case $|a|\ll M$  was first proven in \cite{Dafermos2011}. The argument had the interesting feature that it was able to avoid any detailed analysis of trapping, and in particular the use of separation of variables, by only relying on a decomposition of the solution into a superradiant and non-superradiant piece.  

In order to show decay, rather than just boundedness, one must understand trapping. Microlocal analysis based on a local reduction to Euclidean space provides an alternative framework to connect the unstable trapping of the geodesic flow with so-called positive commutator estimates; this approach has been used successfully, for $s = 0$, to establish integrated energy decay statements in the very slowly rotating $|a| \ll M$ Kerr case \cite{Tataru2011} (also later in the work \cite{Hafner2019}). See also \cite{Wunsch2011}, \cite{Dyatlov2015}, \cite{Hintz2018}, and the references therein for results which use microlocal analysis to relate the high frequency behavior of waves with the underlying normal hyperbolic trapping. If one imposes control on higher norms in the initial data, thus leaving the scope of Alinhac's observation, an analogue of Theorem~\ref{thm:bddness-decay-big} for $s = 0$ and $|a|\ll M$ can also be established using a physical space multiplier, based on the Killing tensor representing Carter's hidden conservation law, to capture trapping \cite{Andersson2015}.

\subsection{The proof of Theorem~\ref{thm:frequency-estimates-big}}
\label{sec:intro-proof-freq-estimates}

For $(\omega,m,\Lambda)\not\in\mc{A}$ where $\mc{A}$ is a set of frequencies where $|\omega|+|\omega|^{-1}+|m|+|\Lambda|<\infty$, Theorem~\ref{thm:frequency-estimates-big} is proved by applying multiplier currents to the separated version of the transformed system from Section~\ref{sec:intro-transformed}, in particular to the radial ODE~\eqref{eq:radial-ODE-Psi-intro}. These currents, which are of virial and energy type, are tailored to the frequency triple $(\omega,m,\Lambda)$, though they result in uniform estimates. Roughly speaking, we employ virial currents to produce control over a bulk term with good properties, such as that in \eqref{eq:intro-main-thm-A}; energy currents are used to control boundary terms as in \eqref{eq:intro-main-thm-A} and \eqref{eq:intro-main-thm-B}. In this section, we give an overview of our strategy to build such currents and, thus, prove Theorem~\ref{thm:frequency-estimates-big}. The discussion here is quite detailed, so we encourage the reader to return to it upon reading the proof.

We begin by reviewing, in Section~\ref{sec:intro-estimates-Box}, the techniques which are useful in studying the transformed operator $\swei{\mathfrak{R}}{s}$, when we disregard the coupling terms $\swei{\mathfrak{J}}{s}$; the discussion is guided by the analogy between the tranformed operator $\swei{\mathfrak{R}}{s}$ and the wave operator $\Box_g$ and the analysis of the later in \cite{Dafermos2016b} (see the section summary in Table~\ref{tab:current-wave-RW-operator}). The coupling terms $\swei{\mathfrak{J}}{s}$ and the additional difficulties they pose are then discussed in Section~\ref{sec:intro-overview-coupling-difficulties} (see the section summary in Table~\ref{tab:difficulties-coupling}).

The remainder of the section discusses the novel ideas on which we rely in the present paper. One such idea, presented in Section~\ref{sec:energy-identity-transformed}, is the adaptation of the Teukolsky--Starobinsky energy conservation law that exists for the general $s\in\frac12\mathbb{Z}$ Teukolsky equation in the separated picture (see  Section~\ref{sec:intro-energy-identity}) to an energy current which is suitable for application to the transformed system. The remaining techniques, which are more specifically tailored to some frequency ranges, are introduced in Section~\ref{sec:outline-proof-estimate-bdd}, if they are best suited to bounded frequency regimes, or in Section \ref{sec:outline-proof-estimate-unbdd}, if they are best suited to  unbounded regimes. Table~\ref{tab:currents-small-vs-large-a} summarizes the key differences in the frequency space analysis we conduct in this series of papers, when compared to the approach to the very slowly rotating $|a|\ll M$ Kerr case given by Dafermos, Holzegel and Rodnianski in \cite{Dafermos2017}.

\subsubsection{Frequency space analysis of the transformed operator\texorpdfstring{ $\swei{\mathfrak{R}}{s}$}{}}
\label{sec:intro-estimates-Box}

Since $\swei{\mathfrak{R}}{s}$ can be seen as a spin-weighted analogue of the $s=0$ case, where  $\swei{\mathfrak{R}}{0}$ is the wave operator $\Box_g$ acting on the radiation field, the techniques we rely on for its analysis (while ignoring the coupling terms of \eqref{eq:transformed-equation-intro}) are precisely those one would consider when studying the scalar wave equation on a Kerr background. Thus, while the contents of this section build heavily on all of the previous literature, they are based especially on the results for the wave equation on subextremal Kerr \cite{Dafermos2008,Dafermos2016b,Dafermos2014} and those for the Teukolsky equation of spin $|s|=1,2$, respectively \cite{Pasqualotto2016,Dafermos2016a}, on Schwarzschild.

Recall the transformed radial ODE \eqref{eq:radial-ODE-Psi-intro} with any $s\in\mathbb{Z}$, focusing only on the left hand side (i.e.\ ignoring the coupling) or setting $s=0$. In order to understand the behavior of the ODE, the main phenomena to keep in mind are superradiance, mode stability and trapping, which have been discussed in the previous two sections. Concretely, we partition the space of admissible frequency parameters as follows (see Table~\ref{tab:current-wave-RW-operator} for a summary).
\begin{itemize}[noitemsep]
\item \textit{Bounded frequency parameters}. Either (i) or (ii), as below, hold.
\begin{enumerate}[label=(\roman*),noitemsep]
\item $1\lesssim |\omega|\lesssim 1$, i.e.\ $(\omega,m,\Lambda)\in\mc{A}$ where $\mc{A}$ is the set which is excluded from Theorem~\ref{thm:frequency-estimates-big}A. Nevertheless, let us remark how one can obtain estimates in this case. The very slowly rotating case $|a|\ll M$, one can use the smallness of this parameter to construct suitable multiplier currents (see \cite{Dafermos2010}). In the general subextremal case $|a|<M$ considered in our series of works, one can appeal to Theorem~\ref{thm:mode-stability-intro-AB} to obtain a scattering statement such as Theorem~\ref{thm:frequency-estimates-big}B and to complement Theorem \ref{thm:frequency-estimates-big}A (see the Remark in the preamble and our upcoming \cite{SRTdC2022}).

\item  $|\omega|\ll 1$. In this case, it is the frequency-independent part of the potential that drives the radial ODE; one can construct virial multiplier currents that take advantage of its properties.
\end{enumerate}
\item \textit{Unbounded frequency parameters}. When one or more of the frequency parameters is large, we again want to build targeted virial multiplier currents for the radial ODE under consideration. The frequency-independent part of the potential becomes a lower order correction to the behavior of solutions of the radial ODE; to determine what the virial currents should look like, we are mainly interested to know whether it is $\omega^2$ or $\Lambda$ (a proxy for the size of the potential) which is driving the ODE. If $\omega^2\gg \Lambda$ or if $\Lambda\gg \omega^2$ are large, we can exploit the largeness of the difference between $\omega^2$ and the potential or vice-versa and there is no trapping. Only when $\omega^2\sim \Lambda$ do we not know whether it is $\omega^2$ or the potential which is dominating and, then, trapping can occur; implying a loss, at the trapping $r$ value, in our estimates.
\end{itemize}

For $(\omega,m,\Lambda)\not\in\mc{A}$, where $\mc{A}$ is as in Theorem~\ref{thm:frequency-estimates-big}A, the previous discussion is our guide in the construction of virial multiplier currents for the radial ODE \eqref{eq:radial-ODE-Psi-intro} which, coupling $\swei{\mathfrak{J}}{s}$ aside, are effective in capturing the properties of $\swei{\mathfrak{R}}{s}$ in a good bulk estimate. Indeed, the virial currents we apply to the radial ODE \eqref{eq:radial-ODE-Psi-intro} are exactly those developed for the $s=0$ case in \cite{Dafermos2016b}, aside from some minor technical improvements (see already Section~\ref{sec:virial-current-templates} for the virial current templates). Their interaction with the coupling $\swei{\mathfrak{J}}{s}$, however, means that for general $|a|<M$, we cannot rely on scalar wave techniques alone, as we will see in the remainder of the introduction.

To conclude, note that even when $\swei{\mathfrak{J}}{s}$ is absent, one produces boundary terms which must be controlled by applications of some energy current.  If superradiance \eqref{eq:superradiance-intro} occurs, then one of the boundary terms in our energy current will have bad sign (recall e.g.\  \eqref{eq:energy-identity-wave-intro} for $s=0$), hence the current must be applied in a localized fashion, thus producing localization errors. It is fortunate that, on subextremal, $|a|<M$, Kerr black holes, if trapping occurs and brings about the expected loss in our estimates, the frequency parameters cannot be superradiant, hence, one does not need to deal with such localization errors. Nevertheless, the necessity of appealing to energy currents leads to the inevitable question: which energy? Should we invoke the Killing energy associated to, for instance, $\p_t$ which past work on the case $s=0$  on \cite{Dafermos2016b} and even, if $|a|\ll M$, $s=\pm 1,\pm 2$ \cite{Dafermos2017,Ma2017,Ma2017a} has relied on; or, departing from scalar wave equation techniques, should we invoke  the Teukolsky--Starobinsky energy from Section~\ref{sec:intro-energy-identity} with no physical space analogue? On this question, we refer the reader to Section~\ref{sec:energy-identity-transformed} already.

\begin{table}[!htb]
\begin{center}
\caption{The frequency space analysis of the operator $\swei{\mathfrak{R}}{s}$ (ignoring coupling) in the full subextremal range $|a|<M$ and their treatment. The analysis is based on the $\Box_{g}$ case addressed in \cite{Dafermos2016b}. We denote by $\mc{V}$ the potential in the radial ODE associated with these operators.}
\begin{small}
\begin{tabular}{|c|M{1.4cm}|M{1.2cm}|M{11.6cm}|} \hline
\multicolumn{3}{|c|}{$(\omega,m,\Lambda)$} & $|a|<M$\\ \hline
\multicolumn{3}{|c|}{all} & Virial currents produce boundary terms which must be controlled by energy currents. If superradiance occurs, these must be applied in a localized fashion, producing bulk localization errors.\\ \hline
\multirow{2}{*}[-2.2em]{\rotatebox{90}{bounded}}& \multicolumn{2}{c|}{$|\omega|\gtrsim 1$  } &  Use quantitative mode stability(*) or, if $|a|\ll M$, the smallness of superradiance.\\ \cline{2-4}
&\multicolumn{2}{c|}{$|\omega|\ll 1$}& Superradiance may occur; localization errors are absorbed by a virial current which takes advantage of  positivity of $\mc{V}-\omega^2$ in large region. 
This virial current is glued in to another which uses $\mc{V}'<0$ as $r\to \infty$ and, as $r\to r_+$,
\begin{itemize}[noitemsep,nolistsep]
\item if $|\omega-m\upomega_+|\ll 1$, to a virial current using $\mc{V}'>0$ as $r\to r_+$;
\item if $|\omega-m\upomega_+|\gtrsim 1$, to a global virial current using the non-smallness of this parameter.
\end{itemize} \vspace{-\baselineskip}~\\ \hline

\multirow{5}{*}[-9.2em]{\rotatebox{90}{unbounded}}& \multicolumn{2}{c|}{all} & The potential has at most one maximum.\\  \cline{2-4}
& $\omega^2\gg \Lambda$, $\omega^2\gg 1$& \multirow{3}{*}[-6em]{\shortstack[c]{non-\\ trapped}} &No superradiance or trapping can occur. It is easy to build a global virial current which takes advantage of the positivity and largeness of $\mc{V}-\omega^2$.\\ \cline{2-2}\cline{4-4}
&$\Lambda\gg \omega^2$, $\Lambda\gg 1$ &  &Trapping cannot occur; frequencies may be superradiant. Due to $r$-decay, $\mc{V}-\omega^2$ is not globally positive. However, $\mc{V}'$ has a single zero at a finite $r$ value, which can be captured with a global virial current. $\mc{V}$ attains a maximum at that point of size $\Lambda\gg \omega^2$, leading to a large gain in the bulk term which is enough to absorb energy localization errors. \\ \cline{2-2}\cline{4-4}
&\multirow{2}{*}[-3em]{\shortstack[c]{$\Lambda\sim \omega^2$,\\ $\Lambda,\omega^2\gg 1$}} & & 

\begin{itemize}[leftmargin=*,topsep=0pt]
\item If superradiant, then cannot be trapped(**): the potential behaves as in the $\Lambda\gg \omega^2$, $\Lambda\gg 1$ case and can use the same strategy.
\item If quantitatively non-superradiant,  $\mc{V}-\omega^2$ is quantitatively negative near $r=r_+$. 
\end{itemize}
\begin{itemize}[noitemsep,nolistsep,leftmargin=3.5em]
\item[--] Either $\mc{V}-\omega^2$ globally quantitatively negative, in which case trapping cannot occur. We can apply the strategy for $\omega^2\gg \Lambda$, $\omega^2\gg 1$.
\end{itemize}\vspace{-\baselineskip}~\\\cline{3-4}
&& trapped &
\begin{itemize}[noitemsep,topsep=0pt, leftmargin=3.5em]
\item[--] Or else $\mc{V}$ surely has a maximum away from $r=r_+$, which is trapped. Then, use a virial current capturing sign and degeneracy of $\mc{V}'$, coupled to one exploit negativity of $\mc{V}-\omega^2$ near $r=r_+$. The bulk term controlled exhibits a loss at the maximum of $\mc{V}$ which cannot be eliminated; this would make absorption of energy localization errors  difficult if it were necessary.
\end{itemize}\vspace{-\baselineskip}~
\\ \hline
\end{tabular}
\label{tab:current-wave-RW-operator}
\end{small}
\end{center}

\begin{small}
\noindent(*) The quantitative control of mode stability only degenerates in the double limit $|a|\to M$, $|\omega-m\upomega_+|\to 0$ \cite{TeixeiradaCosta2019}.

\smallskip
\noindent (**) If $|a|<M$, the two alternatives given for $\omega^2\sim \Lambda\gg 1$ exhaust the space of possible frequencies (see Section~\ref{sec:intro-trapping}), because superradiant frequencies are non-trapped by a quantifiable amount. As $|a|/M$ increases, this amount becomes smaller; when $|a|=M$, the conclusion holds only if frequencies approaching the superradiant threshold are excluded.
\end{small}
\end{table}

\subsubsection{Frequency space analysis of the coupling\texorpdfstring{ $\swei{\mathfrak{J}}{s}$}{}: main difficulties}
\label{sec:intro-overview-coupling-difficulties}

In the previous section, we reviewed the methods for frequency-localized virial currents developed in the last decade to analyze the operator $\Box_g$ in the subextremal range $|a|<M$ and which apply to our transformed operator $\swei{\mathfrak{R}}{s}$ in \eqref{eq:transformed-equation-intro}, provided that the coupling to $\uppsi_{(k)}^{[s]}$ for $k<|s|$ could be ignored. This section focuses precisely on these coupling terms in $\swei{\mathfrak{J}}{s}$.

It is convenient to already recall what knowledge we have of the lower level transformed variables. For each $k<|s|$, $\uppsi_{(k)}^{[s]}$ is related to $\swei{\Psi}{s}$ by a sequence of $|s|-k$ transport equations, as given in \eqref{eq:def-upppsi-k-intro}. By abuse of notation, as usual let us identify a fixed frequency $\uppsi_{(k)}^{[s]}$ with its radial component; then \eqref{eq:def-upppsi-k-intro} read in the frequency space picture as
\begin{align}
\frac{\Delta}{(r^2+a^2)^2}\uppsi_{(k+1)}^{[s]}=-\sign s \lp(\uppsi_{(k)}^{[s]}\rp)'-i\lp(\omega-\frac{am}{r^2+a^2}\rp)\uppsi_{(k)}^{[s]}\label{eq:def-upppsi-k-freq-intro}\,.
\end{align}
On the other hand, by virtue of their dependence on the Teukolsky variable, $\uppsi_{(k)}^{[s]}$ themselves satisfy wave-type equations, which translate into the radial ODEs, c.f.\ \eqref{eq:radial-ODE-u-intro} and \eqref{eq:radial-ODE-Psi-intro},
\begin{align}
\begin{split}
&\lp(\uppsi_{(k)}^{[s]}\rp)''+(|s|-k)\frac{2(r^3-3Mr^2+a^2r+a^2M)}{(r^2+a^2)^2}\lp(\uppsi_{(k)}^{[s]}\rp)'+\lp(\omega^2-\mc{V}_{(k)}^{[s]}\rp)\uppsi_{(k)}^{[s]} \\
&\quad= a\sum_{i=0}^{k-1}\frac{\Delta}{(r^2+a^2)^2} \lp[ c^{\Phi}_{s,\,k,\,i}(r) im+c^\mr{id}_{s,\,k,\,i}(r)\rp]\swei{\uppsi}{s}_{(i)}\,,
\end{split}\label{eq:radial-ODE-uppsi-intro}
\end{align}
for some $c^{\Phi}_{s,\,k,\,i}(r)$ and $c^\mr{id}_{s,\,k,\,i}(r)$ which are bounded for $r\in[r_+,\infty)$. The potential $\mc{V}_{(k)}^{[s]}$ has a nontrivial imaginary component with poor decay as $r^*\to \pm \infty$  only if $k\neq |s|$. From now on, we may drop the superscript from $\swei{\uppsi}{s}_{(k)}$ and $\mc{V}_{(k)}^{[s]}$ or $\swei{\Psi}{s}$ and $\mc{V}^{[s]}$ if this does not lead to confusion.

It should be clear that, in order to control the coupling terms in \eqref{eq:radial-ODE-Psi-intro}, we must use either \eqref{eq:def-upppsi-k-freq-intro}, or \eqref{eq:radial-ODE-uppsi-intro}, or both. Given the fact that \eqref{eq:radial-ODE-uppsi-intro} for $k<|s|$ retains more of the Teukolsky equation's properties (an imaginary component to the potential which is hard to treat) than of the linear wave equation, it is tempting to start by ignoring these problematic-looking ODEs and focusing on the transport equations \eqref{eq:def-upppsi-k-freq-intro} only. 

An immediate estimate one can obtain from the transport equation \eqref{eq:def-upppsi-k-freq-intro} is, for instance, 
\begin{align}
\int_a^B c'(r)\lp|\uppsi_{(k)}\rp|^2dr^* = 2\lp[c(r)\lp|\uppsi_{(k)}\rp|^2\rp]_{r^*=A}^{r^*=B} +4\int_A^B \frac{\Delta^2}{(r^2+a^2)^4}\frac{c^2}{c'}\lp|\uppsi_{(k+1)}\rp|^2dr^*\,. \label{eq:basic-estimate-1-intro}
\end{align}
where $c(r)$ is a continuous, piecewise $C^1$, $r$-weight with $c'>0$, and $A,B\in\mathbb{R}\cup \{\pm \infty\}$. Estimate \eqref{eq:basic-estimate-1-intro} provides an easy way of climbing the hierarchy, i.e.\ relating integrated estimates for $\uppsi_{(k)}$ to integrated estimates for $\uppsi_{(k+1)}$ and, eventually, to integrated estimates for $\Psi$. However, it turns out \eqref{eq:basic-estimate-1-intro} is not enough: Table~\ref{tab:difficulties-coupling} summarizes the difficulties one encounters when trying to apply the methods of Section~\ref{sec:intro-estimates-Box}  and the rudimentary estimate \eqref{eq:basic-estimate-1-intro} to address \eqref{eq:transformed-equation-intro}. 

\begin{table}[!htbp]
\begin{center}
\caption{Difficulties associated with the frequency space analysis of the coupling $\swei{\mathfrak{J}}{s}$ in \eqref{eq:transformed-equation-intro}, in the case $|a|\ll M$ and in the full subextremal range $|a|<M$. We assume here that the left hand side of \eqref{eq:transformed-equation-intro} has been treated by the methods contained in Table~\ref{tab:current-wave-RW-operator} and examine coupling errors arising from those methods in view of the rudimentary estimate \eqref{eq:basic-estimate-1-intro} that is easily derived from the transport equations relating the transformed variables.}
\begin{small}
\begin{tabular}{|c|M{1.4cm}|M{1.2cm}|M{4.3cm}|M{7.3cm}|} \hline
\multicolumn{3}{|c|}{$(\omega,m,\Lambda)$} & $|a|\ll M$ \cite{Dafermos2017} &$|a|<M$\\ \hline

\multirow{2}{*}[-.5em]{\rotatebox{90}{bounded}}& \multicolumn{2}{c|}{$|\omega|\gtrsim 1$  } & \multirow{5}{*}[-5em]{\parbox{4cm}{\centering control coupling errors in applications of virial and classical energy currents by  \eqref{eq:basic-estimate-1-intro} and smallness of $|a|$}}  & need quantitative mode stability(*)\\ \cline{2-3}\cline{5-5}
&\multicolumn{2}{c|}{$|\omega|\ll 1$}& & the frequency independent part of the coupling terms can always be large; \eqref{eq:basic-estimate-1-intro} fails to give sufficient control over coupling errors in general \\ \cline{1-3}\cline{5-5}

\multirow{5}{*}[-6em]{\rotatebox{90}{unbounded}}&  $\omega^2\gg \Lambda$, $\omega^2\gg 1$& \multirow{3}{*}[-3em]{\shortstack[c]{non-\\ trapped}} & & $m^2\ll \omega^2$ so coupling terms are small and errors due to both virial and classical energy currents can be controlled using \eqref{eq:basic-estimate-1-intro}\\ \cline{2-2}\cline{5-5}
&$\Lambda\gg \omega^2$, $\Lambda\gg 1$ &  &  &$m^2\sim \Lambda$ so coupling errors due to virial currents or classical energy current involving $m$ weights are not small; only errors due to classical energy currents with $\omega$ multipliers can be absorbed using \eqref{eq:basic-estimate-1-intro} \\ \cline{2-2}\cline{5-5}
&\multirow{2}{*}[-4.5em]{\shortstack[c]{$\Lambda\sim \omega^2$,\\ $\Lambda,\omega^2\gg 1$}} & & ~\newline & \multirow{2}{*}[-.7em]{\parbox{7.3cm}{\centering $m^2\sim\Lambda$ so coupling errors due to both virial or classical energy currents could be large; in the trapped case especially, a simple aplication of \eqref{eq:basic-estimate-1-intro} gives quadratic frequency weights on $|\Psi|^2$ which are not controlled near the maximum of the potential and cannot be absorbed elsewhere as they have no smallness parameter}}\\\cline{3-4}
&& trapped non-superra-diant(**) & the global energy current has errors weighted by $m\omega$, so a simple application of \eqref{eq:basic-estimate-1-intro} gives quadratic frequency weights on $|\Psi|^2$ which are not controlled near the maximum of the potential &\\ \hline 
\end{tabular}
\label{tab:difficulties-coupling}
\end{small}
\end{center}
\begin{small}
(*), (**) See the analogous notes in Table~\ref{tab:current-wave-RW-operator}.
\end{small}
\end{table}

In the next sections, we break down the issues arising in each of the frequency ranges identified in Table~\ref{tab:difficulties-coupling} and explain the novel ideas required to deal with them.

\subsubsection{Energy currents for the transformed system: the Killing and the Teukolsky--Starobinsky energy currents}
\label{sec:energy-identity-transformed}

In our proof of Theorem~\ref{thm:frequency-estimates-big}, an important role will be played by two distinct types of energy currents: Teukolsky--Starobinsky energy currents and Killing energy currents.

The Teukolsky--Starobinsky energy is a conserved quantity for the Teukolsky radial ODE~\eqref{eq:radial-ODE-u-intro} without a clear physical space analogue (see Section~\ref{sec:intro-energy-identity}). Transformed variables arising from the Teukolsky equation as described in Section~\ref{sec:intro-transformed} inherit this separated picture conservation law. Indeed, focusing on separable solutions now, let (the radial component of) $\swei{\Psi}{s}_{\mc{H}^\pm}$ (resp.\ $\swei{\Psi}{s}_{\mc{I}^\pm}$) be obtained from $\swei{u}{s}_{\mc{H}^\pm}$ (resp.\ $\swei{u}{s}_{\mc{I}^\pm}$), solutions of \eqref{eq:radial-ODE-u-intro} which are regular at the past or future event horizon (resp.\ past or future null infinity), as described in Section~\ref{sec:intro-transformed}. Then, a solution to \eqref{eq:radial-ODE-Psi-intro} arising from a solution of the Teukolsky radial ODE \eqref{eq:radial-ODE-u-intro} can be written as
\begin{align*}
\swei{\Psi}{s}=\swei{A}{s}_{\mc{I}^+}\swei{\Psi}{s}_{\mc{I}^+}+\swei{A}{s}_{\mc{I}^-}\swei{\Psi}{s}_{\mc{I}^-} =\swei{A}{s}_{\mc{H}^+}\swei{\Psi}{s}_{\mc{H}^+}+\swei{A}{s}_{\mc{H}^-}\swei{\Psi}{s}_{\mc{H}^-}\,,
\end{align*}
for some coefficients $\swei{A}{s}_{\mc{H}^\pm}$ and $\swei{A}{s}_{\mc{I}^\pm}$. Without loss of generality, restricting to $s\geq 0$, the Teukolsky--Starobinsky energy conservation \eqref{eq:Wronskian-conservation-u-intro}  becomes (c.f.\ \eqref{eq:Wronskian-conservation-Psi-intro} in the preamble)
\begin{gather}
\omega^2\lp[\lp|\swei{A}{\pm s}_{\mc{I}^\pm }\rp|^2-\frac{\mathfrak{C}_s}{\mathfrak{D}_s^\mc{I}}\lp|\swei{A}{\pm s}_{\mc{I}^\mp}\rp|^2\rp] +\omega(\omega-m\upomega_+)
\lp[\frac{\mathfrak{C}_s}{\mathfrak{D}_s^{\mc H}}\lp|\swei{A}{\pm s}_{\mc{H}^\pm}\rp|^2-\lp|\swei{A}{\pm s}_{\mc{H}^\mp}\rp|^2\rp]=0\,.\label{eq:Wronskian-conservation-Psi-intro-2}
\end{gather}
Another version of the Teukolsky--Starobinsky energy conservation \eqref{eq:Wronskian-conservation-Psi-intro-2} is obtained trading a factor of $\omega$ by a factor $\omega-m\upomega_+$ in all terms. Here, $\mathfrak{D}_s^{\mc H}$ and $\mathfrak{D}_s^{\mc H}$  are constants which, like the Teukolsky--Starobinsky constant, can be explicitly given in terms of $a$, $M$, $s$ and $(\omega, m,\Lambda)$ and that, for $s=0,\pm 1, \pm 2$, satisfy
\begin{align*}
\mathfrak{C}_0=\mathfrak{D}_0^{\mc I}=\mathfrak{D}_0^{\mc H}=1\,,\,\,\,\mathfrak{C}_s,\mathfrak{D}_s^{\mc I},\mathfrak{D}_s^{\mc H}\geq 0\,,\text{~for~} |a|\leq M\,; \qquad \mathfrak{C_s},\mathfrak{D}_s^{\mc I},\mathfrak{D}_s^{\mc H}\geq (2|s|)^{2|s|}>0\,,\text{~for~} a=0\,.
\end{align*}

The Teukolsky--Starobinsky conservation \eqref{eq:Wronskian-conservation-Psi-intro-2} should be compared with the identity
\begin{align}
\begin{split}
&\omega^2\lp[\lp|\swei{A}{\pm s}_{\mc{I}^+}\rp|^2-\lp|\swei{A}{\pm s}_{\mc{I}^-}\rp|^2\rp] +\omega(\omega-m\upomega_+)
\lp[\lp|\swei{A}{\pm s}_{\mc{H}^+}\rp|^2-\lp|\swei{A}{\pm s}_{\mc{H}^-}\rp|^2\rp]=-\int_{-\infty}^\infty a\omega\Im\lp[\overline{\Psi}\cdot\mathfrak{J}\rp] dr^* \,,
\end{split}\label{eq:classical-energy-Psi-intro}
\end{align} 
arising from an application of the Killing energy current associated to the stationary Killing field $\p_t$ to the radial ODE~\eqref{eq:radial-ODE-Psi-intro}. (We can also compare \eqref{eq:Wronskian-conservation-Psi-intro-2} where one $\omega$ weight is replaced by $\omega-m\upomega_+$ with a version of \eqref{eq:classical-energy-Psi-intro} where the same procedure has occurred, in which case the latter identity is the Killing energy associated to the null generator of the event horizon.) In contrast with Teukolsky--Starobinsky energy current, the Killing energy current has a physical space analogue. We note however that the right hand side of \eqref{eq:classical-energy-Psi-intro} is only trivial if $a=0$ or if $s=0$, in which cases the Teukolsky--Starobinsky and Killing energies coincide, i.e.\ the Killing energy is, in fact, a conserved quantity for the top equation in the transformed system, \eqref{eq:transformed-equation-intro}.

In our analysis for $s=\pm 1,\pm 2$ and general subextremal $|a|<M$ Kerr parameters, we will apply Teukolsky--Starobinsky energy currents by default, as they are the natural, conserved, energy for the transformed system and for the radial ODE~\eqref{eq:radial-ODE-Psi-intro} in particular. However, for rotating $a\neq 0$ Kerr black holes there may be admissible frequency triples $(\omega,m,\Lambda)$ for which the ratios $\mathfrak{C}_s/\mathfrak{D}_s^{\mc I}$ or $\mathfrak{C}_s/\mathfrak{D}_s^{\mc H}$ are not be upper or lower bounded (see Remark \ref{it:remark-intro-B} in the preamble to the introduction), in which case the Teukolsky--Starobinsky energy identities, such as \eqref{eq:Wronskian-conservation-Psi-intro-2}, do not guarantee good control over the boundary terms themselves, weighted by $\omega^2$ or $(\omega-m\upomega_+)^2$ for terms at $\mc{I}^\pm$ or $\mc{H}^\pm$, respectively. 

The degeneration of terms in \eqref{eq:Wronskian-conservation-Psi-intro-2} is problematic if the end goal is to control boundary terms as in  \eqref{eq:intro-main-thm-A}. It is also an issue if the estimate we seek requires the use of virial currents or of the rudimentary estimate \eqref{eq:basic-estimate-1-intro} that produce contributions to the boundary term ``lost'' by the Teukolsky--Starobinsky energy identity. Note that we employ virial currents in both cases of Theorem~\ref{thm:frequency-estimates-big}: for statement A, we do so to control a good bulk integral; for statement B, they will be necessary to absorb localization errors whenever $(\omega,m,\Lambda)\not\in\mc{A}$ are superradiant. 

By contrast, though the Killing energies are manifestly not conserved if $a,s\neq 0$ and introduce errors due to coupling, the boundary terms produced, e.g.\ in \eqref{eq:classical-energy-Psi-intro} are not weighted by any constants $\mathfrak{C}_s$, $\mathfrak{D}_s^{\mc H}$ or $\mathfrak{D}_s^{\mc{I}}$, and hence their $\omega^2$ or $(\omega-m\upomega_+)^2$ (for terms at $\mc{I}^\pm$ or $\mc{H}^\pm$) weighted versions do not degenerate for any $(\omega,m,\Lambda)$.  Thus, for any frequency triple $(\omega,m,\Lambda)$ for which $\mathfrak{C}_s/\mathfrak{D}_s^{\mc I}$ or $\mathfrak{C}_s/\mathfrak{D}_s^{\mc H}$ may degenerate, we must add a multiple of a Killing energy current, such as \eqref{eq:classical-energy-Psi-intro}, to our multiplier estimates and find a way of treating the coupling error terms introduced by this energy current. Moreover, the multiple cannot be small if the boundary term which identity \eqref{eq:Wronskian-conservation-Psi-intro-2} ``loses'' is one to which virial currents contribute. 

Over the next sections, we explain how to resolve these issues for each of the frequency regimes in the partition introduced in Section~\ref{sec:intro-estimates-Box}.

\subsubsection{Frequency space analysis of the coupling\texorpdfstring{ $\swei{\mathfrak{J}}{s}$}{}: bounded frequency parameters}
\label{sec:outline-proof-estimate-bdd}

When all frequency parameters are bounded, $|\omega|+|m|+|\Lambda|\lesssim 1$, if $|a|\ll M$, as in  \cite{Dafermos2017} (or \cite{Ma2017,Ma2017a}, obtained independently), we see that in \eqref{eq:transformed-equation-intro} the coupling terms appear to be error terms which do not affect the overall behavior of the equation. Indeed, the transformed equation \eqref{eq:radial-ODE-Psi-intro} can be treated analogously to the wave equation for $|a|\ll M$ in \cite{Dafermos2010}, and coupling terms can be absorbed by the resulting estimate simply by climbing the hierarchy with the rudimentary estimate~\eqref{eq:basic-estimate-1-intro}. 

For general subextremal $|a|< M$, it is already clear from \eqref{eq:transformed-equation-intro} that there should be no hope that the coupling terms are always neglectable. From the relation \eqref{eq:def-upppsi-k-freq-intro} or, more concretely, from the estimate \eqref{eq:basic-estimate-1-intro}, we could hope for a gain in $r$-weights that would make the coupling terms less significant as $r\to \infty$ and/or $r\to r_+$. However, the gain is hard to establish  simultaneously at both ends due to the fact that \eqref{eq:def-upppsi-k-freq-intro} is a \textit{non-local} relation. Moreover, it is unclear how to use \eqref{eq:basic-estimate-1-intro} to further provide smallness in the coupling errors supported for a range of bounded $r$-values which are bounded away from $r_+$: there, the coupling terms compete with the frequency-independent part of the potential $\swei{\mc{V}}{s}$, even when the frequency parameters are at their smallest.

In what follows, we describe how these difficulties are overcome for all bounded frequencies. 

\paragraph{Bounded frequency parameters with $1\lesssim |\omega|\lesssim 1$}

This is the set $\mc{A}$, excluded from Theorem~\ref{thm:frequency-estimates-big}A, but where estimates for Theorem~\ref{thm:frequency-estimates-big}B are recovered invoking Theorem~\ref{thm:mode-stability-intro-AB} (see our upcoming \cite{SRTdC2022}).

\paragraph{Bounded frequency parameters with $|\omega|\ll 1$}

In the small $|\omega|$ regime, in applying the methodology from Section~\ref{sec:intro-estimates-Box}, we produce several error terms due to coupling. The coupling errors are not multiplied by any small constant in the general subextremal $|a|< M$ case, hence they cannot be seen as lower order and, as we have seen, the estimate \eqref{eq:basic-estimate-1-intro} is not enough. Thus, if $|a|$ is not small, we need to use the radial ODEs~\eqref{eq:radial-ODE-uppsi-intro} for all $k=0,...,|s|$.

To do so, we resort to two observations:
\begin{itemize}[noitemsep]
\item The real component of the potential in \eqref{eq:radial-ODE-uppsi-intro}, $\Re\mc{V}_{(k)}^{[s]}$ enjoys similar properties, for all $k<|s|$, as the potential at the top level $k=|s|$. Thus, if we could ignore the imaginary component of the potential, the dependence on the first derivatice of the $k$th transformed variable and the coupling terms in \eqref{eq:radial-ODE-uppsi-intro}, all of the multiplier estimate techniques employed in for the wave equation, $s=0$, in \cite{Dafermos2016a}, could be applied for each level $k\leq |s|$.
\item The imaginary component of the potential and the dependence on the first derivative of the $k$th transformed variable in \eqref{eq:radial-ODE-uppsi-intro} can be combined to produce a ``coupling term'' to the $(k+1)$th equation, if $k<|s|$, plus an imaginary component to the potential which has good decay properties.
\end{itemize}
When we apply virial multipliers current to \eqref{eq:radial-ODE-uppsi-intro}  with the previous two remarks in mind, we obtain very similar estimates for each $k$: boundary terms aside, a bulk term of $\uppsi_{(k)}$ is controlled in terms of
\begin{itemize}[noitemsep]
\item if $k> 0$, bulk terms of $\uppsi_{(i)}$ for all $i<k$;
\item if $k< |s|$, a bulk term of $\uppsi_{(k+1)}$.
\end{itemize}
By judicious choices of currents, we can obtain multiplier estimates for each $k$ where at least one of the previous two kinds of terms comes with a small parameter. Then, while estimates for $k$ cannot be closed at level $k$ alone, they can be closed by borrowing from the multiplier estimates at level $k+1$ if $k<|s|$. Iterating along $k=0,\dots, |s|$, we can obtain a good estimate on the $\swei{\Psi}{s}$ bulk term. 

The simplified description of our strategy does not do justice to the multitude of boundary terms that we generate by it. These should be controlled by applications of energy currents. As the coupling terms generated by the Killing energy currents, such as \eqref{eq:classical-energy-Psi-intro}, come with too strong $r$-weights to be absorbed, we cannot tolerate but a small portion of these currents unless $|a|\ll M$ is a small parameter. Hence, for general subextremal $|a|< M$, the boundary terms we generate through virial currents must be absorbed using the Teukolsky--Starobinsky energy current \eqref{eq:Wronskian-conservation-Psi-intro-2} adapted to $s\neq 0$, which is immune to coupling terms. This strategy relies crucially on the fact that in the $\omega\to 0$ limit, one can establish suitable upper and lower bounds on the constants $\mathfrak{C}_s$, $\mathfrak{D}_s^{\mc I}$ and $\mathfrak{D}_s^{\mc H}$ (see Remark \ref{it:remark-intro-B} in the preamble and Section~\ref{sec:energy-identity-transformed}).

We also refer the reader to the later Section~\ref{sec:bounded-smallness-overview} for a more detailed discussion.

\subsubsection{Frequency space analysis of the coupling\texorpdfstring{ $\swei{\mathfrak{J}}{s}$}{}: unbounded frequency parameters}
\label{sec:outline-proof-estimate-unbdd}

In the case of unbounded frequency parameters, let us begin by identifying the coupling errors we need to contend with, arising from interaction of $\swei{\mathfrak{J}}{s}$ with virial or Killing energy currents applied to the radial ODE \eqref{eq:radial-ODE-Psi-intro}. By a simple application of Cauchy--Schwarz, we can show that the coupling errors due to the virial currents are no worse than bulk error terms of the form $a^2(m^2+1)|\uppsi_{(k)}|^2$, with suitable $r$-weights; by the rudimentary estimate \eqref{eq:basic-estimate-1-intro}: the errors $a^2(m^2+1)|\uppsi_{(k)}|^2$ should be controlled by $r$-weighted $a^2(m^2+1)|\Psi|^2$ bulk terms. On the other hand, Killing energy currents, which are not conserved for the radial ODE~\eqref{eq:radial-ODE-Psi-intro} when $a,s\neq 0$, also produce these errors in addition to a small multiple of $(\omega^2+m^2)|\Psi|^2$, with suitable $r$-weights.

We must now distinguish between non-trapped frequencies, for which $\omega^2-\mc{V}(r_{\rm max})$ has a clear sign, and trapped frequencies, for which $\omega^2\sim \mc{V}(r_{\rm max})$. Outside of trapping, controlling $(\omega^2+m^2)|\Psi|^2$ for all $r$ is precisely what the wave equation techniques from the Section~\ref{sec:intro-estimates-Box} guarantee; thus, if $|a|\ll M$, all coupling terms can be seen as small errors for any frequency triple which is not trapped. Absorbing the coupling errors in the general subextremal case $|a|<M$ will, of course, require extra knowledge even outside of trapping (see also Table~\ref{tab:difficulties-coupling}). 

Let us begin, however, by discussing the more involved case of trapped frequencies.

\paragraph{Trapped frequencies}

In the case of trapping, by a naive application of Cauchy--Schwarz as described above, we produce bulk terms $a^2(m^2+1)|\Psi|^2$ and, if a classical energy current is used,  $a^2(\omega^2+m^2)|\Psi|^2$, with suitable $r$-weights, which cannot be absorbed at the trapped $r$-value even in the very slowly rotating Kerr case.

However, there is still information to be extracted from the transport equation \eqref{eq:def-upppsi-k-freq-intro}. To start with, we note the following identities, originally obtained in \cite[Propositions 5.3.1 and 8.4.1]{Dafermos2017} but which hold for any $|a|\in[0,M]$: for $k=|s|-1$, 
\begin{align}
\begin{split}
- c\Im\lp[im\omega \overline{\Psi}\uppsi_{(k)}\rp]\sign s &= -\lp[cm\lp(\Im\lp[\overline{\Psi}\uppsi_{(k)}\rp]+\frac{am^2(r^2+a^2)}{2\Delta}\lp|\uppsi_{(k)}\rp|^2\rp)\rp]'\\
&\qquad+mc\Im\lp[\overline{\Psi}'\uppsi_{(k)}\rp]+\frac12\lp(\frac{(r^2+a^2)c}{\Delta}\rp)'am^2\lp|\uppsi_{(k)}\rp|^2 \,;
\end{split}\label{eq:basic-identity-3-intro}
\end{align}
and a similar identity can be obtained for $k<|s|-1$. Identity \eqref{eq:basic-identity-3-intro} and its $k<|s|-1$ analogue are especially useful in dealing with the coupling errors arising from applications of the Killing energy currents of waves: in the right hand side, applications of Cauchy--Schwarz allow us to shift all of the frequency weights onto the lower level transformed variables. Hence, we can treat error terms arising from the energy currents more delicately to conclude that, when the Killing energy current applied comes with multiplier $\omega$ as is the case at trapping (see e.g.\ Table~\ref{tab:current-wave-RW-operator}), the coupling errors reduce to $a^2(m^2+1)|\uppsi_{(k)}|^2$ bulk terms, $k<|s|$.

Absorbing $a^2(m^2+1)|\uppsi_{(k)}|^2$ bulk terms is still challenging for trapped frequencies: using estimate \eqref{eq:basic-estimate-1-intro} means that we must be able to control the coupling errors by bulk terms in $\Psi$ with the same frequency weights; however, at the trapping $r$-value we lose control over $(m^2+1)|\Psi|^2$. The estimate we seek, then, is one which controls $m^2|\uppsi_{(k)}|^2$ by $|\Psi'|^2+|\Psi|^2$ at least near the trapping $r$-value.

In the very slowly rotating $|a|\ll M$ case, the necessary estimate comes once more from the transport relation \eqref{eq:def-upppsi-k-intro}. With a bit more work, one can show \cite[Proof of Proposition 8.3.1]{Dafermos2017} 
\begin{align}
\begin{split}
\int_A^B c'\lp(\omega-\frac{am}{r^2+a^2}\rp)^2\lp|\uppsi_{(k)}\rp|^2dr^*&\leq \lp[c(r)\lp|\uppsi_{(k)}'\rp|^2\rp]_{r^*=A}^{r^*=B}+\int_{A}^B \frac{2\Delta^2}{(r^2+a^2)^4}\lp(\frac{2c^2}{c'}+c'\rp)\lp|\uppsi_{(k+1)}'\rp|^2dr^*\\
&\qquad+\int_A^B \frac{16 r^2\Delta^2}{(r^2+a^2)^6}\frac{c^2}{c'}a^2m^2|\uppsi_{(k)}|^2dr^*\,,
\end{split}\label{eq:basic-estimate-1'-intro}
\end{align}
where $c(r)$ is a continuous, piecewise $C^1$, $r$-weight with $c'>0$, and $A,B\in\mathbb{R}\cup \{\pm \infty\}$. In the trapping regime where $m^2\lesssim \Lambda\sim \omega^2$, if $|a|\ll M$ is sufficiently small, we have
\begin{align}
\lp(\omega-\frac{am}{r^2+a^2}\rp)^2-a^2m^2 \gtrsim \omega^2\gtrsim m^2\,, \label{eq:trapping-slowly-rotating-hack}
\end{align}
thus by \eqref{eq:basic-estimate-1'-intro} the coupling error terms can be absorbed at trapping for $|a|\ll M$. 

Turning to the general subextremal case $|a|<M$, we see that \eqref{eq:trapping-slowly-rotating-hack} cannot remain valid, hence \eqref{eq:basic-estimate-1'-intro} is of no use. Once again we look to the seemingly bad radial ODEs \eqref{eq:radial-ODE-uppsi-intro} to circumvent the issue. By a very simple procedure, we derive from \eqref{eq:radial-ODE-uppsi-intro} an ``improved estimate'' relating  $\uppsi_{(k)}$ and $\uppsi_{(k+1)}$,
\begin{align}
\int_{-\infty}^\infty w\Lambda|\uppsi_{(k)}|^2dr^*\leq B(M,|s|) \int_{-\infty}^\infty w|\uppsi_{(k+1)}|^2\,, \label{eq:basic-estimate-2-intro}
\end{align}
that \textbf{does not lose derivatives}, allowing us to bound bulk terms $a^2(m^2+1)|\uppsi_{(k)}|^2$ by bulk terms in $|\Psi|^2$ with suitable $r$-weights. Note that \eqref{eq:basic-estimate-2-intro} in combination with \eqref{eq:basic-identity-3-intro} and its $k<|s|-1$ analogue allows us to control any coupling error terms arising from application of a Killing energy current, such as \eqref{eq:classical-energy-Psi-intro}. 

Indeed, estimate \eqref{eq:basic-estimate-2-intro} is one of the key points in our proof. Though \eqref{eq:basic-estimate-2-intro} cannot hold in every unbounded frequency regime, we show that, remarkably, it holds for the entire range of trapped, nonsuperradiant frequencies. Likewise, the estimate $\mathfrak{C}_s\sim\mathfrak{D}_s^{\mc H}\sim \mathfrak{D}_s^{\mc{I}}$ holds for trapped, nonsuperradiant $(\omega,m,\Lambda)$ so that, in this frequency regime, the boundary terms produced by virial currents can be controlled by the Teukolsky--Starobinsky energy current \eqref{eq:Wronskian-conservation-Psi-intro-2} or by the Killing energy current \eqref{eq:classical-energy-Psi-intro}.

Since, for subextremal Kerr black holes, $|a|<M$, all superradiant frequencies are quantitatively nontrapped (see Section~\ref{sec:intro-trapping}), this concludes the analysis for trapped frequency parameters.

\paragraph{Non-trapped frequencies}

As we have mentioned, if $|a|\ll M$ is very small, then $a^2(m^2+1)|\uppsi_{(k)}|^2$ error terms can easily be absorbed outside of trapping; for general subextremal $|a|<M$, smallness must come from elsewhere. 
\begin{itemize}[itemsep=0pt]
\item \textit{The cases $\Lambda\gg \max\{\omega^2,1\}$, and $\Lambda\sim\omega^2\gg 1$ with $(\omega,m,\Lambda)$ superradiant non-trapped.} In these frequency regimes, our improved estimate \eqref{eq:basic-estimate-2-intro} provides the necessary smallness. Indeed, $a^2(m^2+1)|\uppsi_{(k)}|^2$ bulk terms for $k<|s|$ are controlled by $|\Psi|^2$ bulk terms by \eqref{eq:basic-estimate-1-intro}; these can easily be absorbed into the bulk term we control by the methods of Section \ref{sec:intro-estimates-Box}, which is an appropriately $r$-weighted version of $\Lambda|\Psi|^2\gg |\Psi|^2$. The  energy currents we consider can either be of Teukolsky--Starobinsky  energies or Killing energies: the latter's coupling error terms are controlled by application of \eqref{eq:basic-identity-3-intro} and \eqref{eq:basic-estimate-2-intro}; the former, as we show $\mathfrak{C}_s\sim\mathfrak{D}_s^{\mc H}\sim \mathfrak{D}_s^{\mc{I}}$ for all $(\omega,m,\Lambda)$ in these frequency ranges, gives direct control over the boundary terms with natural ($\omega^2$ or $(\omega-m\upomega_+)^2$) frequency weights.
\item \textit{The cases $\omega^2\gg \max\{\Lambda,1\}$, and $\Lambda\sim\omega^2\gg 1$ with $(\omega,m,\Lambda)$ non-superradiant and non-trapped.} In both these frequency regimes, one can show $(\omega,m,\Lambda)$ are not superradiant, hence no virial currents are needed to establish estimate \eqref{eq:intro-main-thm-B} in the scattering setting: an application of the Teukolsky--Starobinsky energy identity \eqref{eq:Wronskian-conservation-Psi-intro-2} suffices. Therefore, we focus on the setting of Theorem~\ref{thm:frequency-estimates-big}A, where both virial and energy estimates are required. As we cannot ensure that  \eqref{eq:basic-estimate-2-intro} holds, a different strategy is needed to treat the coupling terms, which are roughly represented by a bulk term in $a^2(m^2+1)|\uppsi_{(k)}|^2$. Moreover, we do not have appropriate control over $\mathfrak{C}_s$, $\mathfrak{D}_s^{\mc H}$ and $\mathfrak{D}_s^{\mc{I}}$, so Killing energy currents such as \eqref{eq:classical-energy-Psi-intro} must be invoked to deal with any boundary terms arising from virial currents or from applications of the rudimentary estimate \eqref{eq:basic-estimate-1-intro}.
\begin{itemize}[itemsep=0pt]
\item \textit{The case $\omega^2\gg \max\{\Lambda,1\}$.} Since $m^2,\Lambda\ll \omega^2$, the smallness we require to absorb coupling errors due to the Killing energy current or to virial currents is built-in to the frequency range. In exploiting this smallness, we appeal to the rudimentary estimate \eqref{eq:basic-estimate-1-intro}. The weight $c$ we choose either produces two boundary terms for $\uppsi_{(k)}$ at $r=r_+$, related to $A_{\mc{H}^\pm}$, or two boundary terms at $r=\infty$, related to $A_{\mc{I}^\pm}$; but in the relation between $\uppsi_{(k)}$ and $\Psi$ boundary terms, the constant $\mathfrak{D}_s^{\mc H}$ appears multiplying one of $A_{\mc{H}^\pm}$ and the constant $\mathfrak{D}_s^{\mc I}$ appears multiplying one of $A_{\mc{I}^\pm}$. In the setting of Theorem~\ref{thm:frequency-estimates-big}A, where only  the future boundary terms of $\Psi$ are nonzero, we can choose $c$ to avoid these problematic constants. Finally, we are ready to control the boundary terms generated: the Killing energy current \eqref{eq:classical-energy-Psi-intro} is certainly strong enough for the task and its coupling terms are controlled; the Teukolsky--Starobinsky energy identity \eqref{eq:Wronskian-conservation-Psi-intro-2} can also be relied on to control one (but possibly not both, see above) of the future boundary terms.

\item \textit{ The case $\Lambda\sim\omega^2\gg 1$ with $(\omega,m,\Lambda)$ non-superradiant and non-trapped.} Unlike in the previous case, there is no smallness built into the frequency range: by the rudimentary estimate \eqref{eq:basic-estimate-1-intro}, we can only hope to convert bulk terms like $a^2(m^2+1)|\uppsi_{(k)}|^2$ into bulk terms of size $\Lambda|\Psi|^2$, i.e.\ bulk terms comparable with what we control by the methods of Section~\ref{sec:intro-estimates-Box}. We take two routes to circumvent this difficulty:
\begin{itemize}[itemsep=0pt]
\item  If errors arise due to application of the virial currents from Section~\ref{sec:intro-estimates-Box}, we adjust the currents' growth so that a small parameter falls on $a^2(m^2+1)|\uppsi_{(k)}|^2$ during application of Cauchy--Schwarz.
\item If errors arise due to application of the Killing energy current \eqref{eq:classical-energy-Psi-intro}, we multiply the current by a small parameter. Because the Teukolsky--Starobinsky energy identity \eqref{eq:Wronskian-conservation-Psi-intro-2} can only relied on to control one of the future boundary terms, this strategy can only hold if the virial current is constructed so as not to provide a contribution with a bad sign to the boundary term ``lost'' by the Teukolsky--Starobinsky energy identity and ``recovered'' by the Killing energy current.
\end{itemize}
\end{itemize}
\end{itemize}
\medskip

Table~\ref{tab:currents-small-vs-large-a} summarizes the previous exposition concerning our methods to address the difficulties in the frequency space analysis in the current paper, as raised in Table~\ref{tab:difficulties-coupling}. We also refer the reader to the later Section~\ref{sec:unbounded-overview} for a more detailed discussion.

\begin{table}[!htbp]
\centering
\caption{The frequency space analysis of the coupling $\swei{\mathfrak{J}}{s}$ in \eqref{eq:transformed-equation-intro}: treatment of the difficulties, identified in Table~\ref{tab:difficulties-coupling}, associated with the frequency space analysis of the coupling $\swei{\mathfrak{J}}{s}$ in \eqref{eq:transformed-equation-intro}, in the case $|a|\ll M$ and in the full subextremal range $|a|<M$. Note that the case $|a|\ll M$ can be treated ignoring all but one of the equations in the transformed system, whereas in the general $a$ case, we occasionally appeal to the radial ODE for $\uppsi_{(k)}$.} 
\begin{small}
\begin{tabular}{|c|M{1.2cm}|M{4.0cm}|M{9.0cm}|} \hline
\multicolumn{2}{|c|}{$(\omega,m,\Lambda)$} &$|a|\ll M$ \cite{Dafermos2017} &  $|a|<M$\\ \hline
\multirow{2}{*}[-1.5em]{\rotatebox{90}{bounded}}& $|\omega|\gtrsim 1$  & \multirow{2}{4.0cm}[-1em]{\centering coupling terms are  small, treat as wave equation for $|a|\ll M$} & appeal to quantitative mode stability \cite{Shlapentokh-Rothman2015,TeixeiradaCosta2019}\\ \cline{2-2} \cline{4-4}
&$|\omega|\ll 1$& & apply multiplier estimates to all $k$, controlling $k$th bulk in terms of $i<k$ bulk and $(k+1)$ bulk, one of these with smallness; iterate along hierarchy to obtain good bulk estimate for $k=|s|$ and finalize with Teukolsky--Starobinsky energy currents, e.g.\ \eqref{eq:Wronskian-conservation-Psi-intro-2} \\ \hline
\multirow{4}{*}[-7em]{\rotatebox{90}{unbounded}}& \multirow{2}{*}[-1em]{all} & \multicolumn{2}{M{13.0cm}|}{use transport equation estimates to show worse coupling error terms are bulk terms $a^2(m^2+1)\Delta r^{-4}|\uppsi_{(k)}|^2$}\\ \cline{3-4}
& & -- & use equation for $\uppsi_{(k)}$ to show that, when $\Lambda-2am\omega\gtrsim \Lambda\gg 1$, have improved estimate: the bulk term $a^2(1+m^2)\Delta r^{-4}|\uppsi_{(k)}|^2$ is controlled by $\Delta r^{-4}|\Psi|^2$\\ \cline{2-4}
&non-trapped & use smallness of $a$ & have smallness from the frequency range OR use improved elliptic estimate for $\uppsi_{(k)}$ if available  OR create smallness by: (1) for errors due multiplier current, manipulating the ratio current-to-derivative; (2) for errors due to the energy current, adding only a small multiple of it and supplementing by Teukolsky--Starobinsky energy currents, e.g.\ \eqref{eq:Wronskian-conservation-Psi-intro-2} \\ \cline{2-4}
&trapped & use transport equqation alone to prove, using smallness of $|a|$, $a^2(1+m^2)\Delta r^{-4}|\uppsi_{(k)}|^2$ bulk is controlled by trapped bulk term $\Delta r^{-4}|\Psi|^2$ & use potential to show that improved estimate must always be available when trapped \\ \hline
\end{tabular}
\end{small}
\label{tab:currents-small-vs-large-a}
\end{table}

\subsection{Outline and acknowledgments} 

This paper is organized as follows.
\begin{itemize}
\item \textbf{Section 2.} We recall to the reader the Kerr background geometry and fix our notation for relevant vector fields,  operators and other shorthand notation. We also introduce the class of smooth $s$-spin-weighted functions, allowing us to make sense of \eqref{eq:Teukolsky-equation-intro}.
\item \textbf{Section 3.} We introduce the main PDEs we will study in our series: the Teukolsky equation and the transformed system. We justify our choices in constructing the transformed system and establish its relevant properties for our analysis. These were already discussed in Section~\ref{sec:intro-transformed} of this introduction.
\item \textbf{Section 4.} We derive the angular and radial ODEs associated to the differential equations of the previous section under the separability assumption discussed in Sections~\ref{sec:intro-separability} and \ref{sec:intro-to-freq-space} of this introduction. We perform a crude analysis of the angular ODE which allows us to obtain rudimentary bounds for the separation constant $\Lambda$ in terms of the remaining frequency parameters, $\omega$ and $m$. For the radial ODEs, we perform a finer asymptotic analysis and discuss the boundary conditions for the solutions we seek to consider. Finally, we recall the precise statement of the Teukolsky--Starobinsky identities verified by the Teukolsky angular and radial ODE.
\item \textbf{Section 5.} We define the multiplier currents we will consider in our analysis of the radial ODEs introduced in the previous section. These involve virial estimates, the classical energy currents from wave analysis, and the Teukolsky--Starobinsky energy current discussed in Section~\ref{sec:energy-identity-transformed} of this introduction. 

\item \textbf{Section 6.} This section contains the proof of our main result in the present paper, Theorem~\ref{thm:frequency-estimates-big}. We begin, in Section~\ref{sec:frequency-partitions} by partitioning  frequency space into the frequency regimes which we will address individually. Then, in Section~\ref{sec:basic-estimates-transformed} we prove estimates for the lower level transformed variables based only on the transport equations. Finally, we give proofs of our main result in the bounded frequency regimes in Sections~\ref{sec:bounded-smallness} and \ref{sec:intermediate} and in the unbounded frequency regimes in Section~\ref{sec:unbounded}; in each of these sections we give an overview of our constructions, reinforcing the relevant points made in Section~\ref{sec:intro-proof-freq-estimates} and introducing the extra estimates on the lower level transformed variables we will need. The estimates for high and low frequency are combined in Section~\ref{sec:combining-ode-estimates} to yield Theorem~\ref{thm:frequency-estimates-big}A; and, by combining our estimates with those arising from real mode stability (Theorem~\ref{thm:mode-stability-intro-AB}), we also show Theorem~\ref{thm:frequency-estimates-big}B.
\end{itemize}

\vspace{\baselineskip}

\noindent\textbf{Acknowledgements.} During the period this research was carried out, YS was supported by  NSF grant DMS-1900288.  RTdC acknowledges support through EPSRC grant EP/L016516/1 as well as the hospitality of Princeton University. The authors thank Mihalis Dafermos and Igor Rodnianski for discussions and comments on the paper.

\section{Preliminaries}

\subsection{The Kerr exterior spacetime and metric}

In this section, we introduce the Kerr exterior. We refer the reader to \cite[Section 2]{Dafermos2010} for more detail.

For Kerr black hole parameters $(a,M)$ satisfying $M>0$ and $|a|\leq M$ and define
\begin{equation}
r_\pm:=M\pm\sqrt{M^2-a^2}\,. \label{eq:rpm}
\end{equation} 
The  Kerr exterior spacetime, $\mc{R}$, is a manifold-with-boundary which is covered by Kerr-star coordinates  $(t^*,r,\theta^*,\phi^*)\in \mathbb{R}\times [r_+,\infty)\times \mathbb{S}^2$ globally, apart from the usual degeneration of spherical coordinates. The future event horizon is defined to be $\mc{H^+}:=\p \mc{R}=\{r=r_+\}$. 

It will also be convenient to define a new radial coordinate $r^*\colon(r_+,\infty)\to (-\infty,\infty)$ which is the unique function satisfying $r^*(3M)=0$ and
\begin{gather}
\frac{dr^*}{dr}=\frac{r^2+a^2}{\Delta} \text{,~~with~~} \Delta:=r^2-2Mr+a^2=(r-r_+)(r-r_-)\,. \label{eq:r-star}
\end{gather}

When we are interested in $\mr{int}(\mc{R})$ only, it will be convenient to consider  Boyer--Lindquist coordinates $(t,r,\theta,\phi)\in\mathbb{R}\times (r_+,\infty)\times \mathbb{S}^2$. These are obtained from Kerr-star coordinates by the relations
\begin{gather}
t(t^*,r):=t^*-\overline{t}(r)\,,  \quad \quad \theta:=\theta^*\, , \quad \quad \phi(\phi^*,r):=\phi^* -\overline{\phi}(r) \mod 2\pi\,, \label{eq:kerr-star}
\end{gather}
where the functions $\overline{\phi}$ and $\overline{t}(r)$ can be defined so as to vanish for $r\geq 9/4M$ (see \cite[Section 2.4]{Dafermos2010}). With respect to Boyer--Lindquist coordinates, the Kerr metric becomes
\begin{align}
\begin{split}
g_{a,M} &= -\frac{\Delta -a^2\sin^2\theta}{\rho^2}dt^2- \frac{4Mar\sin^2\theta}{\rho^2} dtd\phi  \\ 
&\qquad+\left(\frac{(r^2+a^2)^2-\Delta a^2 \sin^2\theta}{\rho^2}\right)\sin^2\theta d\phi^2 +\frac{\rho^2}{\Delta}dr^2+\rho^2 d\theta^2\,,
\end{split} \label{eq:kerr-metric}
\end{align}
where 
\begin{gather}
\rho^2 := r^2+a^2\cos^2\theta, \quad\quad \Delta:=r^2-2Mr+a^2=(r-r_+)(r-r_-) \,. \label{eq:rho-delta}
\end{gather}

One can further extend $\mc{R}$ as follows. Starting from Boyer--Lindquist coordinates, we define coordinates $(^*t,r,^*\theta,^*\phi)\in\mathbb{R}\times[r_+,\infty)\times\mathbb{S}^2$ by 
\begin{gather}
^*t(t,r):=t-\overline{t}(r)\,,  \quad \quad ^*\theta:=\theta\,, \quad \quad ^*\phi(\phi,r):=\phi -\overline{\phi}(r) \mod 2\pi\,. \label{eq:kerr-star-2}
\end{gather}
In such coordinates, the metric extends smoothly to $\mc{H}^-:=\{r=r_+\}$ (by $r$ we mean the variable in the $(^*t,r,^*\theta,{}^*\phi)$ coordinates), so $\mc{R}$ can be extended to a larger manifold with boundary $\mc{R}\cup\mc{H}^-$.

\subsection{Smooth spin-weighted functions}

In this sequence of papers, we will consider partial differential equations in the space of spin-weighted function. The present section will contain the relevant regularity spaces for this type of function.

Letting $(\theta,\phi^*)$ denote standard spherical coordinates in $\mathbb{S}^2$, consider the vector fields
\begin{gather}
\begin{gathered} \label{eq:Zi}
\tilde{Z}_1 = -\sin \phi^* \partial_\theta + \cos \phi^*  \left( -is \csc \theta - \cot \theta \partial_{\phi^*}\right) \,, \\
\tilde{Z}_2 = - \cos \phi^* \partial_\theta - \sin \phi^*  \left( -is \csc \theta - \cot \theta \partial_{\phi^*}\right) \,, \quad 
\tilde{Z}_3 = \partial_{\phi^*} \, .
\end{gathered}
\end{gather}

We recall the following notion of smooth $s$-spin-weighted function (see \cite[Section 2.2.1]{Dafermos2017}):
\begin{definition}[Smooth spin-weighted functions] \label{def:smooth-spin-weighted}
Fix some $s\in\frac12\mathbb{Z}$.

\begin{enumerate}[label=(\roman*)]
\item Let $f$ be a complex-valued function of $(\theta,\phi^*)\in\mathbb{S}^2$. We say $f$ is a {\normalfont smooth $s$-spin-weighted function on $S^2$}, i.e.\ $f\in \swei{\mathscr{S}}{s}_\infty$,  if for any $k_1,k_2,k_3 \in \mathbb{N}_0$, 
$$(\tilde{Z}_1)^{k_1} (\tilde{Z}_2)^{k_2} (\tilde{Z}_3)^{k_3} f$$ 
is a function of $(\theta,\phi)$ which is smooth for $\theta\neq 0,\pi$ and  such that
$$e^{is\phi}(\tilde{Z}_1)^{k_1} (\tilde{Z}_2)^{k_2} (\tilde{Z}_3)^{k_3} f \text{~and~}e^{-is\phi}(\tilde{Z}_1)^{k_1} (\tilde{Z}_2)^{k_2} (\tilde{Z}_3)^{k_3} f$$ 
extend continuously to, respectively, $\theta=0$ and $\theta=\pi$.  \label{it:smooth-spin-weighted-S2}

\item Now, fix Kerr parameters $M>0$ and $|a|\leq M$. Consider a function $f$ of the Kerr-star variables $(t^*,r,\theta,\phi^*)\in\mathbb{R}\times [r_+,\infty)\times\mathbb{S}^2$. We say $f$ is a {\normalfont smooth $s$-spin-weighted function on $\mc{R}$}, i.e.\ $f\in\mathscr{S}_\infty^{[s]}(\mathcal{R})$, if for any $k_4,k_5 \in \mathbb{N}_0$, the function
$$(\p_{t^*})^{k_4} (\p_{r})^{k_5} f\Big|_{(t^*,r)=(T^*,R)}$$
is a smooth $s$-spin-weighted function on $\mathbb{S}^2$ for any $T^*\in\mathbb{R}$ and $R\in[r_+,\infty)$. \label{it:smooth-spin-weighted-R}
\end{enumerate}
\end{definition}

\subsection{Relevant vector fields and differential operators}
\label{sec:prelims-vector-fields-operators}

We will often use the shorthand notation $T:=\p_{t^*}$ and $\Phi:=\p_{\phi^*}$, where $t^*$ and $\phi^*$ are part of the Kerr-star system of coordinates.  Moreover, letting
\begin{align*}
\upomega_+:=\frac{a}{2Mr_+}\,,
\end{align*}
we let $K=T+\upomega_+\Phi$ be the null generator of the future event horizon, $\mc{H}^+$.

Recall the definition of $r^*$ in \eqref{eq:r-star}. In $(t,r^*,\theta,\phi)$ coordinates, let
\begin{align*}
L:=\p_{r^*}+T+\frac{a}{r^2+a^2}\Phi\,, \qquad \uL:=-\p_{r^*}+T+\frac{a}{r^2+a^2}\Phi\,,
\end{align*}
define two principal null directions on Kerr, with the normalization condition
\begin{align*}
g(L,\uL)=-2\rho^2\frac{\Delta}{(r^2+a^2)^2}\,.
\end{align*}

Now, consider functions in $\mathscr{S}_\infty^{[s]}$. For any $s\in\frac{1}{2}\mb{Z}$, the spin-weighted laplacian 
\begin{align} \label{eq:spin-weighted-laplacian}
\mathring{\slashed\triangle}^{[s]}&=  -\frac{1}{\sin \theta} \frac{\partial}{\partial \theta} \left(\sin \theta \frac{\partial}{\partial \theta}\right) - \frac{1}{\sin^2 \theta} \partial_\phi^2 - 2s i\frac{ \cos \theta}{\sin^2 \theta} \partial_\phi + s^2 \cot^2 \theta - s \\
&=-\tilde{Z}_1^2-\tilde{Z}_2^2-\tilde{Z}_3^2-s-s^2 \nonumber
\end{align}
where $\tilde Z_1$, $\tilde Z_2$ and $\tilde Z_3$ are defined in \eqref{eq:Zi} and the spinorial gradient 
\begin{align} \label{eq:spin-gradient}
\mathring{\slashed{\nabla}}^{[s]}=( \partial_\theta\,,~ 
\partial_\phi + is \cos \theta)\,,
\end{align}
are smooth differential operators on $\mathscr{S}_\infty^{[s]}$. For functions in $\mathscr{S}_\infty^{[s]}(\mc{R})$, the following notation  will be convenient:
\begin{align*}
\mathring{\slashed\triangle}^{[s]}_T:=\mathring{\slashed\triangle}^{[s]}-a^2\sin^2\theta TT-2aT\Phi+2ias\cos\theta T\,.
\end{align*}

\subsection{Parameters and conventions}

Throughout the paper, we rely on the notation
\begin{align}
w:=\frac{\Delta}{(r^2+a^2)^2}\,,
\end{align}
for an $r$-weight that will be heavily used; recall $\Delta$ is given in \eqref{eq:rho-delta}.

In our estimates, we use $B$ to denote possibly large positive constants and $b$ to denote possibly small positive constants depending only on $M>0$. Whenever the constant depends, additionally, on another parameter that has not yet been fixed, say $x$, we write $B(x)$ or $b(x)$; we revert to $B$ and $b$ once it has been fixed. We also note the algebra of constants
$$B+B=BB=B\,, \qquad b+b=bb=b\,, B+b=B\,, \qquad Bb =B\,, \qquad b^{-1}=B\,,\text{~etc}.$$

%------------------------------------------------------------------

\section{The Teukolsky equation and the transformed system}

\subsection{The Teukolsky equation}

Consider the differential operator, $\mathfrak{T}^{[s]}$, which is given in Boyer--Lindquist coordinates by
\begin{equation} 
\begin{split}
\rho^2 \mathfrak{T}^{[s]}
&=\Delta^{-s}\p_r(\Delta^{s+1}\p_r)-\frac{(r^2+a^2)^2}{\Delta}\lp(\p_t+\frac{a}{r^2+a^2}\p_\phi\rp)^2+s\frac{w'}{w^2}\lp(\p_t+\frac{a}{r^2+a^2}\p_\phi\rp)+\frac{4sar}{r^2+a^2}\p_\phi\\
&\qquad -\mathring{\slashed{\triangle}}^{[s]} +(2a\p_t\p_\phi+a^2\sin^2\theta \p_t^2-2ias\cos\theta\p_t)\,.
%&=\rho^2\Box_g +2s(r-M)\p_r +2s\lp[\frac{a(r-M)}{\Delta}+i\frac{\cos\theta}{\sin^2\theta}\rp]\p_\phi 
\\
%&\qquad +2s\lp(\frac{w'(r^2+a^2)}{2w} -ia\cos\theta\rp)\p_t +s-s^2\cot^2\theta 
\end{split}\label{eq:teukolsky-operator}
\end{equation}

We say $\swei{\upalpha}{s}\in\mathscr{S}_\infty^{[s]}(\mc{R}\backslash \mc{H}^+)$ satisfies the \textit{inhomogeneous} Teukolsky equation if
\begin{equation}
\rho^{2}\mathfrak{T}^{[s]}\upalpha^{[s]}=\frac{\Delta^{|s|/2(1-\sign s)}}{(r^2+a^2)^{|s|}}\swei{F}{s}. \label{eq:teukolsky-alpha}
\end{equation}
for some $\swei{F}{s}\in\mathscr{S}_\infty^{[s]}(\mc{R}\backslash \mc{H}^+)$ and the homogeneous Teukolsky equation if $\swei{F}{s}$ vanishes.

\subsection{The transformed system}
\label{sec:transformed-system}

The Teukolsky equation~\eqref{eq:teukolsky-alpha} differs from the wave equation by a collection of lower order derivatives, the most concerning being those taken in $\p_t$ and $\p_\phi$ direction. In the separated picture (see Section~\ref{sec:radial-ODE-Teukolsky-transformed}), they translate into an imaginary, slowly decaying potential for the radial equation, even for Schwarzschild, for which one cannot expect the strategy employed for the wave equation in \cite{Dafermos2016b} to be successful.

To circumvent issues associated with the imaginary component of the potential, several authors have introduced transformations that, applied to the Teukolsky equation, could alter it. The idea goes back to the fixed-frequency work of Chandrasekhar for Schwarzschild \cite{Chandrasekhar1975a}, who constructed differential transformations connecting the Bardeen--Press \cite{Bardeen1973} equation (Teukolsky equation for $a=0$) and the Regge--Wheeler equation \cite{Regge1957} or the Zerilli equation \cite{Zerilli1970a}. In the separated picture, the latter have real, short-range potentials, similar to that of the wave equation, allowing for the same type of estimates as for the wave equation. For this reason, a generalization of Chandrasekhar's work to physical space was crucially used in the proof of linear stability for Schwarzschild \cite{Dafermos2016a}.

Several attempts have been made to generalize Chandrasekhar's transformations for Kerr spacetimes, such as \cite{Chandrasekhar1976,Sasaki1982,Hughes2000} and, more recently, \cite{Dafermos2017}. However, when $a\neq 0$, the type of differential transformations considered cannot, as this section will demonstrate, yield an equation with a wave-type potential in frequency space which decouples from the Teukolsky equation.

In this section, based on the framework of \cite{Hughes2000}, we introduce a system of equations which will be the starting point for our estimates, where the latter coupling for $a\neq 0$ is at least \textit{mild} from the point of view of the estimates to follow. The transformations and the equations derived here reduce to the transformations already introduced for $s=\pm 1$ in \cite{Pasqualotto2016}
 and $s=\pm 2$ in \cite{Dafermos2016a,Dafermos2017}.

\subsubsection{Constructing the relevant transformations}

We begin with a sufficiently general transformation, for integer spins, $s\in\mb{Z}$. Let $\swei{\upalpha}{s}\in\swei{\mathscr{S}}{s}_\infty(\mc{R}\backslash \mc{H}^+)$ be a solution to the inhomogeneous Teukolsky equation (\ref{eq:teukolsky-alpha}). For some $j_{s,\,k}=j_{s,\,k}(r)$ with $k=0,...,s$, consider the rescaling of the Teukolsky variable
\begin{align}
\swei{\uppsi}{s}_{(0)}:=j_{s,\,|s|}^{-1}\swei{\upalpha}{s}\,, \label{eq:def-psi0-general}
\end{align}
and the transformed quantities
\begin{align} \label{eq:Psi-general-transformation}
\swei{\uppsi}{s}_{(k)}:=j_{s,\,|s|-k}^{-1}\mathcal{L}\swei{\uppsi}{s}_{(k-1)}\,,\qquad   k=1,...,|s|\,, 
\end{align}
with $\mathcal{L}=L$ if $s<0$, $\mathcal{L}=\underline{L}$ if $s>0$ and $\mathcal{L}$ being the identity if $s=0$.

To obtain a wave-type equation for $\swei{\Psi}{s}:=\swei{\uppsi}{s}_{(|s|)}$, we iterate the following procedure. At each step $k=0,...,|s|$, divide by $j_{s,\,|s|-k}$ and apply $\mc{L}$. To obtain the $k$th level equation, simplify each term containing two derivatives, except the term where both derivatives are along the null directions $L$ and $\uL$, by using the definitions of $\swei{\uppsi}{s}_{(i)}$ for $i=0,...,k$; then, rewrite $\mc{L}\underline{\mc{L}}$ using the relation
$$\underline{L}L=\frac{1}{2}\lp(L\underline{L}+\underline{L}L\rp)+\frac{2raw}{r^2+a^2}\Phi\,.$$
Otherwise, write $\mc{L}\underline{\mc{L}}=\underline{\mc{L}}\mc{L}+[\mc{L},\underline{\mc{L}}]$, simplify further introducing $\swei{\uppsi}{s}_{k+1}$, and repeat the procedure. For each $k=0,...,|s|$, we find that 
\begin{gather*}
\mc{L}\underline{\mc{L}}\swei{\uppsi}{s}_{(k)}+\mr{sgn}(s)\lp(\sum_{i=1}^k\frac{j_{s,\,|s|-i}'}{j_{s,\,|s|-i}}-\swei{h}{s}\rp)\mc{L}\swei{\uppsi}{s}_{(k)} + \sum_{x\in\mathfrak{X}} \sum_{i=0}^k c_{s,\,k,\,i}^X \lp(X\swei{\uppsi}{s}_{(k)}\rp)\\= \lp[\prod_{i=0}^{k-1} j_{s,\,|s|-k+i}^{-1}\mathcal{L}\rp]\lp(\frac{\Delta^{|s|/2(1-\sign s)}}{(r^2+a)^{|s|}j_{s,\,|s|}}wF \rp)\,, \numberthis \label{eq:pde-Psi-i}
\end{gather*}
where the differential operator on the inhomogeneity is absent if $k=0$ and where $\mathfrak{X}=\{T,\,\Phi,\,\mathring{\slashed{\triangle}}^{[s]}_T,\, \mr{id}\}$. Here,  $c_{s,\,k,\,i}^X$ are functions of $r$ which can be obtained from the recursive relation
\begin{align}
\begin{dcases}
c_{s,\,k,\,0}^X&=-\mr{sgn}(s)\lp(\frac{c_{s,\,k-1,\,0}^X}{j_{s,|s|-k}}\rp)' \\
c_{s,\,k,\,i}^V&=-\mr{sgn}(s)\lp(\frac{c_{s,\,k-1,\,i}^X}{j_{s,|s|-k}}\rp)'+\frac{j_{s,\,|s|-k+1}}{j_{s,\,|s|-k}}c_{s,\,k-1,\,i-1}^X \quad \text{~ for ~} i=0,...,k-1 
\end{dcases}\,, \label{eq:recursive-relation-coupling-coefs}
\end{align} 
with initial values given at any $k=i=0,...,|s|$ by
\begin{equation}
\begin{split}
c_{s,\,k,\,k}^{\mathring{\slashed{\triangle}}}&= w\,,\\
c_{s,\,k,\,k}^T&=2s\frac{M(r^2-a^2)-r\Delta}{(r^2+a^2)^2}-\mr{sgn}(s)\swei{h}{s}(r)\,,\\ 
c_{s,\,k,\,k}^\Phi &= 2s\frac{a(r-M)}{(r^2+a^2)^2}-\mr{sgn}(s)\frac{a}{r^2+a^2}\swei{h}{s}(r)+\mr{sgn}(s)\frac{(4k-2)arw}{r^2+a^2}\,, \\
c_{s,\,k,\,k}^{\mr{id}} &= -\Delta^{-s}\p_r\lp(\Delta^{s+1}\p_r j_{s,\,s}(r)\rp) \frac{w}{j_{s,\,s}(r)}+\lp(k\swei{h}{s}-\sum_{i=1}^k(k-i+1)\frac{j_{s,\,|s|-i}'}{j_{s,\,|s|-i}}\rp)'\,, \\
\swei{h}{s}(r)&= -\frac{w}{j_{s,\,s}(r)}\frac{r^2+a^2}{\Delta}\lp(\Delta\p_r j_{s\,,s}(r) +\Delta^s\p_r\lp(\Delta^{s+1}j_{s,\,s}(r)\rp)\rp)-w\Delta\p_r\lp(\frac{r^2+a^2}{\Delta}\rp)\,.
\end{split}\label{eq:recursive-relation-coupling-coefs-i=k}
\end{equation}

For $\swei{\Psi}{s}$ to satisfy a PDE which is decoupled from the remaining $\swei{\uppsi}{s}_{(k)}$, the weights $c^X_{s,|s|,i}$ must vanish for all $i=0,...,|s|-1$ and all $X\in\mathfrak{X}$; the requirement can be cast as a system of $4|s|$ coupled ODEs ($3|s|$ if $a=0$, because then $c^\Phi_{s,|s|,i}=0$ automatically) for $|s|+1$ unknowns -- $j_{s,\,0}(r)$ through $j_{s,\,|s|}(r)$. Though, for $a=0$, this overdetermined system has more than one solution, it is easy to check that there are no solutions even for $|s|=1$ when $a\neq 0$:

\begin{lemma} \label{lemma:no-decoupling-general-Kerr} Consider the recursive relation \eqref{eq:recursive-relation-coupling-coefs} initialized by \eqref{eq:recursive-relation-coupling-coefs-i=k}. For $|s|=1,2$, the system
$$c^X_{s,|s|,i}=0\quad \forall i=0,...,|s|-1\,, \,\,\forall X \in\mathfrak{X}:=\{T,\,\Phi,\,\mathring{\slashed{\triangle}}^{[s]}_T,\, \mr{id}\},$$ 
has no solutions for $|s|=1,2$ unless $a=0$.
\end{lemma}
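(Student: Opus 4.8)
The plan is to attack the system $c^X_{s,|s|,i}=0$ directly for $|s|=1$ and then for $|s|=2$, treating the spin as a fixed small integer so that everything reduces to a finite collection of explicit first-order ODEs in the unknowns $j_{s,0}(r),\dots,j_{s,|s|}(r)$. By the symmetry $s\mapsto -s$ built into the transformation (with $\mathcal L=L$ versus $\mathcal L=\uL$), it suffices to treat $s<0$; the $s>0$ cases follow mutatis mutandis. First I would specialize the recursion \eqref{eq:recursive-relation-coupling-coefs}--\eqref{eq:recursive-relation-coupling-coefs-i=k} to $|s|=1$: here the only constraints are $c^X_{s,1,0}=0$ for $X\in\{T,\Phi,\mathring{\slashed{\triangle}}^{[s]}_T,\mr{id}\}$, and by \eqref{eq:recursive-relation-coupling-coefs} each of these is $-\mr{sgn}(s)\bigl(c^X_{s,0,0}/j_{s,0}\bigr)'$, so the constraint is simply that $c^X_{s,0,0}/j_{s,0}$ is \emph{constant} in $r$ for each $X$. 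The four functions $c^X_{s,0,0}$ are read off from \eqref{eq:recursive-relation-coupling-coefs-i=k} with $k=0$ (so they involve only $j_{s,1}$ and $\swei h s$, hence only $j_{s,1}$). The $\mathring{\slashed{\triangle}}$ constraint forces $w/j_{s,0}$ to be constant, fixing $j_{s,0}$ proportional to $w$; the $T$ constraint then pins down $\swei h s$ and hence $j_{s,1}$; and one checks that the resulting $c^\Phi_{s,0,0}/j_{s,0}$ is \emph{not} constant unless $a=0$, because the term $2s\,a(r-M)/(r^2+a^2)^2$ carries an $r$-dependence (after dividing by $w\propto \Delta/(r^2+a^2)^2$, i.e. multiplying by $(r^2+a^2)^2/\Delta$) that cannot be cancelled by the $\swei h s$ contribution for $a\neq0$. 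This is a short explicit computation: compute $c^\Phi_{s,0,0}/w$ using the formula for $\swei h s$ with $j_{s,0}=cw$, simplify, and exhibit a non-constant term proportional to $a$.

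For $|s|=2$ the plan is the same but with two layers. Now there are eight constraints: $c^X_{s,2,0}=0$ and $c^X_{s,2,1}=0$ for the four operators $X$. I would first use the $i=1$ (``near the top'') constraints together with \eqref{eq:recursive-relation-coupling-coefs}, which read $c^X_{s,2,1}=-\mr{sgn}(s)(c^X_{s,1,1}/j_{s,0})'+(j_{s,1}/j_{s,0})c^X_{s,1,0}$; since $c^X_{s,1,0}=-\mr{sgn}(s)(c^X_{s,1,1}/j_{s,1})'$ by the same recursion one level down, these become explicit relations among $j_{s,0},j_{s,1},j_{s,2}$ and their derivatives involving the \emph{known} top-level weights $c^X_{s,1,1}$ from \eqref{eq:recursive-relation-coupling-coefs-i=k}. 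The $\mathring{\slashed{\triangle}}$ constraints (for which $c^{\mathring{\slashed{\triangle}}}_{s,k,k}=w$ exactly) are the cleanest and fix the $r$-dependence of $j_{s,1}$ and then $j_{s,2}$ in terms of $j_{s,0}$ and $w$; the $T$ constraints then determine $\swei h s$, hence constrain $j_{s,2}$ further; and finally one arrives at a $\Phi$ constraint that, once all other freedom is used up, cannot be satisfied for $a\neq0$. The mechanism is exactly as in the $|s|=1$ case: the azimuthal weights $c^\Phi_{s,k,k}$ in \eqref{eq:recursive-relation-coupling-coefs-i=k} contain the genuinely $a$-dependent pieces $2s\,a(r-M)/(r^2+a^2)^2$ and $\mr{sgn}(s)(4k-2)arw/(r^2+a^2)$, which have no counterpart to cancel them in the purely radial $\mathring{\slashed{\triangle}}$ and $\mr{id}$ sectors, and which are not proportional (as functions of $r$) to the $\swei h s$-term $\mr{sgn}(s)\frac{a}{r^2+a^2}\swei h s$ that one might hope to use for cancellation.

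The main obstacle I anticipate is purely computational bookkeeping: for $|s|=2$ the recursion \eqref{eq:recursive-relation-coupling-coefs} produces weights involving up to second $r$-derivatives of the $j$'s (through the $\mr{id}$ sector and through $\swei h s$, which already contains $\p_r(\Delta^{s+1}\p_r(\cdot))$), so one must carry several rational functions of $r$ with denominators that are powers of $\Delta$ and of $r^2+a^2$ and verify a non-cancellation. The clean way to organize this, which I would adopt, is to \emph{not} solve for all $j$'s explicitly but instead to extract from the full constraint set a single scalar ODE or algebraic identity that must hold and whose $a\to 0$ limit is an identity but which fails for $a\neq0$ — e.g. by taking an appropriate linear combination of the $c^\Phi$ and $c^T$ constraints designed to kill the $\swei h s$-dependence, leaving a forced relation among $j_{s,0},\dots,j_{s,|s|}$ that is incompatible with the relations already forced by the $\mathring{\slashed{\triangle}}$ (and, for $|s|=2$, $\mr{id}$) constraints. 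Isolating that combination cleanly, so that the contradiction is visibly proportional to $a^2$ times something nonvanishing on $(r_+,\infty)$, is the delicate point; everything else is routine differentiation and simplification. Since the paper only needs $|s|=1,2$, there is no need to treat general $|s|$, which keeps the casework finite.
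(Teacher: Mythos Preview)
Your approach is essentially the same as the paper's: for $|s|=1$, use the $\mathring{\slashed{\triangle}}$ constraint to force $j_{s,0}\propto w$, use the $T$ constraint to determine $\swei{h}{s}$ (equivalently $j_{s,1}$, up to a free constant $B$), and then verify by direct computation that the $\Phi$ constraint fails for $a\neq 0$; the paper then dismisses $|s|=2$ as similar, as you do. One small indexing slip in your $|s|=2$ sketch: the recursion gives $c^X_{s,1,0}=-\mr{sgn}(s)\bigl(c^X_{s,0,0}/j_{s,1}\bigr)'$, not $-\mr{sgn}(s)\bigl(c^X_{s,1,1}/j_{s,1}\bigr)'$ as you wrote (the recursion lowers $k$, not $i$), so the level-$(k,i)=(1,0)$ coefficient is built from the $k=0$ initial data, not from $c^X_{s,1,1}$; this doesn't affect the strategy but would need correcting when you actually compute.
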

\begin{proof}
Consider first $|s|=1$. We have to solve four ODEs for the two unknowns $j_{\pm 1, 1}(r)$ and $j_{\pm 1,0}(r)$. We choose the two ODEs
\begin{align*}
&\lp(\frac{c_{\pm 1,\,0,\,0}^{\mathring{\slashed{\triangle}}}}{j_{\pm 1,0}}\rp)'=0  \Leftrightarrow j_{\pm 1,0}=Aw\,,\\
&\lp(\frac{c_{\pm 1,\,0,\,0}^T}{j_{\pm 1,0}}\rp)'=0 \Leftrightarrow
 j_{\pm 1,1} =\Delta^{\mp (1\pm 1)/2}(r^2+a^2)^{1/2}\times 
 \begin{dcases}
 \exp \left(\pm \frac{B}{2a} \arctan  \left(\frac{r}{a}\right)\right) &\text{~if~} a \neq 0\\
 \exp \left(\mp \frac{B}{2r} \right) &\text{~if~} a = 0
   \end{dcases}\,,
\end{align*}
and conclude
\begin{gather*}
\lp(\frac{c^\Phi_{\pm 1,\,0,\,0}}{j_{\pm 1,0}}\rp)' = \frac{2 a \lp[Br\pm(r^2-a^2)\rp]}{A(r^2+a^2)}\,,\qquad
\lp(\frac{c^{\rm id}_{\pm 1,\,0,\,0}}{j_{\pm 1,0}}\rp)'=\frac{\left(4 a^2+B^2\right) \left(a^2 (M+r)+r^2 (r-3 M)\right)}{2 A\left(a^2+r^2\right)\Delta}\,,
\end{gather*}
hence, a full decoupling is only possible if $B=0$ and, in addition, $a=0$. The case $|s|=2$ is similar.
\end{proof}

Our only hope, therefore, is to choose weights which allow the coupling to be as mild as possible.

\subsubsection{The DHR transformed system}

In this section, we will introduce our choice of transformed quantities, which are a  generalization of \cite{Dafermos2017}. Our goal is to choose weights $j_{s,\,k}(r)$, $k=0,...,|s|$ such that our transformations yield a more tractable, from the point of view of the estimates to follow, system than the Teukolsky equation.

We would like the coupling of any $\swei{\uppsi}{s}_{(k)}$ to the $\swei{\uppsi}{s}_{(i)}$, $i=0,...,k-1$, to involve of at most first derivatives of the latter quantities. Such a requirement is equivalent to asking that the coefficients of $\mathring{\slashed{\triangle}}_T^{[s]}$ vanish, for which a sufficient condition is having $j_{s,\,k}=w$ for any $k=0,...,|s|-1$.  Under this choice, the $\mc{L}$ derivative of $\swei{\Psi}{s}$ is absent in the ODE for this quantity (i.e.\ the equation will be in canonical form). 

On the other hand, to avoid a complex potential in the next section, when we perform separation of variables, we would like the equation for $\swei{\Psi}{s}$ not to involve time or azimuthal derivatives of this quantity. Luckily, these conditions are compatible and they yield 
\begin{equation}
j_{s,\,|s|}(r)= w^{|s|/2}(r^2+a^2)^{-1/2}\Delta^{-s/2}\,.
\end{equation}

\begin{proposition}\label{prop:transformed-system} Fix $s\in\mathbb{Z}$. Let $\swei{\upalpha}{s}\in\swei{\mathscr{S}}{s}_\infty(\mc{R}\backslash\mc{H}^+)$ be a solution to the Teukolsky equation (\ref{eq:teukolsky-alpha}). Let $\swei{\uppsi}{s}_{(0)}\in\swei{\mathscr{S}}{s}_\infty(\mc{R}\backslash\mc{H}^+)$ be the rescaling of the Teukolsky variable given by
\begin{equation}\label{eq:def-psi0}
\swei{\uppsi}{s}_{(0)}:=(r^2+a^2)^{-|s|+1/2}\Delta^{|s|/2(1+\mr{sign}\,s)}\swei{\alpha}{s}\,.
\end{equation}
Define $\swei{\uppsi}{s}_{(k)}\in\swei{\mathscr{S}}{s}_\infty(\mc{R}\backslash\mc{H}^+)$ by the system of transport equations
\begin{equation}
\swei{\uppsi}{s}_{(k)}:=\frac{1}{w}\mc{L}\swei{\uppsi}{s}_{(k-1)}\,,\qquad  k=1,...,|s|\,, \label{eq:transformed-transport}
\end{equation}
with $\mathcal{L}=L$ if $s<0$, $\mathcal{L}=\underline{L}$ if $s>0$ and $\mathcal{L}$ being the identity if $s=0$. Then, for $k=0,...,|s|$, $\swei{\uppsi}{s}_{(k)}$ satisfy
\begin{equation}
\begin{gathered}
\frac12 \lp(L\underline{L}+\underline{L}L\rp)\swei{\uppsi}{s}_{(k)}+w\lp[\mathring{\slashed{\triangle}}^{[s]}_T+|s|+k(2|s|-k-1)\rp]\swei{\uppsi}{s}_{(k)}+\mr{sign}(s)(k-|s|)\frac{w'}{w}\mc{L}\swei{\uppsi}{s}_{(k)}\\
-\frac{2arw}{r^2+a^2}\sign{s}\lp(2|s|-2k+1\rp)\Phi\swei{\uppsi}{s}_{(k)}
+ \frac{a^2\Delta^2}{(r^2+a^2)^4}\lp[1-2|s|-2k(2|s|-k-1)\rp]\swei{\uppsi}{s}_{(k)}\\
+\frac{2Mr(r^2-a^2)\Delta}{(r^2+a^2)^4}\lp[1-3|s|+2s^2-3k(2|s|-k-1)\rp]\swei{\uppsi}{s}_{(k)}\\
=\lp(\frac{\mc{L}}{w}\rp)^{k}\lp(\frac{\Delta^{1+|s|}}{(r^2+a^2)^{2|s|+3/2}}\swei{F}{s}\rp)+aw\sum_{i=0}^{k-1}\lp(c_{s,\,k,\,i}^{\Phi}(r)\Phi+ c_{s,\,k,\,i}^{\mr{id}}(r)\rp)\swei{\uppsi}{s}_{(i)}\,,
\end{gathered} \label{eq:transformed-k}
\end{equation}
where $a c_{s,\,k,\,i}^\mr{id}$ should be replaced by $c_{s,\,k,\,i}^\mr{id}$ if and only if $|s|\neq 1$, $k=1$ and $i=0$. Here, $c_{s,\,k,\,i}^\Phi$ and $c_{s,\,k,\,i}^\mr{id}$, $i=0,...,|s|$ are functions of $r$ which can be explicitly computed for each $s,k,i$ through a recursive relation \eqref{eq:recursive-relation-coupling-coefs-final} initialized by \eqref{eq:recursive-relation-coupling-coefs-i=k-final}; the functions are, at most, of $O(1)$ as $r^*\to\pm\infty$, their derivatives with respect to $r^*$ are, at most, of $O(w)$ as $r^*\to\pm\infty$, and $\lp(c_{s,\,1,\,0}^\mr{id}\rp)'=a\times O(w)$ as $r^*\to\pm\infty$.

In particular, $\swei{\Psi}{s}:=\swei{\uppsi}{s}_{(|s|)}$ solves the transformed equation
\begin{equation}
\swei{\mathfrak{R}}{s}\swei{\Psi}{s}=\swei{\mathfrak{J}}{s}+\swei{\mathfrak{G}}{s}\,, \label{eq:transformed-equation}
\end{equation}
where
\begin{align}
\swei{\mathfrak{R}}{s}&:=\frac12 \lp(L\underline{L}+\underline{L}L\rp) + w\lp(\mathring{\slashed{\triangle}}^{[s]}_T+ s^2\rp) +\frac{\Delta\lp[(1-2s^2)a^2\Delta + 2(1-s^2)Mr(r^2-a^2)\rp]}{(r^2+a^2)^4}\,, \label{eq:transformed-R}\\
\swei{\mathfrak{J}}{s}&:= \sum_{k=0}^{|s|-1} aw \lp[ c^{\Phi}_{s,\,|s|,\,k}(r) \Phi+c^\mr{id}_{s,\,|s|,\,k}(r)\rp]\swei{\uppsi}{s}_{(k)}\,,  \label{eq:transformed-J}\\
\swei{\mathfrak{G}}{s}&:=\lp(\frac{\mc{L}}{w}\rp)^{|s|}\lp(\frac{\Delta^{1+|s|}}{(r^2+a^2)^{2|s|+3/2}}\swei{F}{s}\rp) \label{eq:transformed-G}\,.
\end{align}
\end{proposition}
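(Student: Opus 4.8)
\textbf{Proof proposal for Proposition~\ref{prop:transformed-system}.}

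The plan is to derive \eqref{eq:transformed-k} by specializing the general computation carried out in the previous subsection, equations \eqref{eq:pde-Psi-i}--\eqref{eq:recursive-relation-coupling-coefs-i=k}, to the specific choice of weights $j_{s,k}=w$ for $0\le k\le |s|-1$ and $j_{s,|s|}=w^{|s|/2}(r^2+a^2)^{-1/2}\Delta^{-s/2}$, and then to check carefully that the surviving coupling coefficients have the claimed $r^*$-asymptotics. First I would record the two algebraic simplifications that this choice of weights forces. With $j_{s,k}=w$ for $k<|s|$ we have $j_{s,|s|-i}'/j_{s,|s|-i}=w'/w$ for $i=1,\dots,k$ when $k<|s|$, while only the last step $k=|s|$ involves the distinguished weight $j_{s,|s|}$; plugging these into \eqref{eq:recursive-relation-coupling-coefs-i=k} shows that $c_{s,k,k}^{\mathring{\slashed\triangle}}=w$, so the $\mathring{\slashed\triangle}_T^{[s]}$-coefficient at level $k$ is exactly $w$ and there is no coupling to $\mathring{\slashed\triangle}_T^{[s]}\uppsi_{(i)}^{[s]}$ for $i<k$ (the other $\mathring{\slashed\triangle}$-coefficients vanish by construction of the recursion, since $\mathring{\slashed\triangle}$ only enters at the top index). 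Similarly, the choice of $j_{s,|s|}$ is designed precisely so that $c_{s,|s|,|s|}^T=0$ and $c_{s,|s|,|s|}^\Phi=0$: one computes $\swei{h}{s}(r)$ from the last line of \eqref{eq:recursive-relation-coupling-coefs-i=k} using $j_{s,s}(r)=w^{|s|/2}(r^2+a^2)^{-1/2}\Delta^{-s/2}$ and verifies directly, by substitution into the formulas for $c_{s,k,k}^T$ and $c_{s,k,k}^\Phi$ at $k=|s|$, that they cancel. Hence the equation for $\swei{\Psi}{s}$ has no $T$, $\Phi$ or $\mc L$ derivatives of $\swei{\Psi}{s}$ itself, only of the lower levels.

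Next I would identify the frequency-independent (``potential'') part. After the two cancellations above, the remaining contributions to the operator at level $k$ are: the $\frac12(L\uL+\uL L)$ term (universal), the zeroth-order term $w\,c^{\mr{id}}_{s,k,k}$ coming from \eqref{eq:recursive-relation-coupling-coefs-i=k}, and the $\mc L$-derivative term with coefficient $\mathrm{sgn}(s)\big(\sum_{i=1}^k j_{s,|s|-i}'/j_{s,|s|-i}-\swei{h}{s}\big)$. Using $j_{s,|s|-i}=w$ for $i<|s|$ this sum telescopes to $(k-|s|)w'/w$ plus the term involving $j_{s,|s|}$; the explicit form of $\swei{h}{s}$ then produces the displayed coefficient $\mathrm{sgn}(s)(k-|s|)\tfrac{w'}{w}$ of $\mc L\uppsi_{(k)}^{[s]}$ together with the $-\tfrac{2arw}{r^2+a^2}\mathrm{sgn}(s)(2|s|-2k+1)\Phi$ term and the two zeroth-order $r$-weights with the combinatorial factors $|s|+k(2|s|-k-1)$, $1-2|s|-2k(2|s|-k-1)$ and $1-3|s|+2s^2-3k(2|s|-k-1)$. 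This is a finite, if lengthy, bookkeeping exercise: one expands $\mathring{\slashed\triangle}_T^{[s]}$ via Section~\ref{sec:prelims-vector-fields-operators}, substitutes $w=\Delta/(r^2+a^2)^2$, and matches coefficients of $\Delta^2/(r^2+a^2)^4$ and $Mr(r^2-a^2)\Delta/(r^2+a^2)^4$ against the Teukolsky operator \eqref{eq:teukolsky-operator}. For the inhomogeneity, \eqref{eq:pde-Psi-i} directly gives $\big(\tfrac{\mc L}{w}\big)^k$ applied to $\tfrac{\Delta^{|s|/2(1-\mathrm{sign}\,s)}}{(r^2+a^2)^{|s|}j_{s,|s|}}wF=\tfrac{\Delta^{1+|s|}}{(r^2+a^2)^{2|s|+3/2}}\swei{F}{s}$ once $j_{s,|s|}$ is substituted.

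The coupling terms are then read off from the recursion \eqref{eq:recursive-relation-coupling-coefs}: with $j_{s,|s|-k}=w$ the relation becomes $c_{s,k,i}^X=-\mathrm{sgn}(s)(c_{s,k-1,i}^X/w)'+c_{s,k-1,i-1}^X$ for the intermediate steps, which I would repackage as a recursion \eqref{eq:recursive-relation-coupling-coefs-final} initialized by \eqref{eq:recursive-relation-coupling-coefs-i=k-final} (the $k=i$ data, extracted from \eqref{eq:recursive-relation-coupling-coefs-i=k} with the weights inserted). The factor of $a$ in \eqref{eq:transformed-J} appears because $c_{s,k,k}^\Phi$ and $c_{s,k,k}^{\mr{id}}$ are each $O(a)$ once the $T$- and $\Phi$-cancellations have been used (the leading pieces that would survive at $a=0$ are exactly those that cancel, leaving an overall $a$); for the single exceptional case $|s|\neq 1$, $k=1$, $i=0$ the $O(a)$ does not factor through the recursion cleanly, which is why the statement flags that $a\,c^{\mr{id}}_{s,1,0}$ should there be read as $c^{\mr{id}}_{s,1,0}$.

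The main obstacle I anticipate is the asymptotic bookkeeping: showing that $c^\Phi_{s,k,i}$ and $c^{\mr{id}}_{s,k,i}$ are $O(1)$ as $r^*\to\pm\infty$ with $r^*$-derivatives $O(w)$, and the sharper statement $(c^{\mr{id}}_{s,1,0})'=a\cdot O(w)$. The issue is that each application of the recursion involves the operator $g\mapsto -\mathrm{sgn}(s)(g/w)'$, and since $w=\Delta/(r^2+a^2)^2\to 0$ as $r\to r_+$ and $\sim r^{-2}$ as $r\to\infty$, dividing by $w$ is dangerous near both ends. One must check that at each step the factor $1/w$ is compensated: near $r=r_+$, $\Delta$-vanishing of $g$ (inherited from the initial data $c_{s,k,k}^{\mr{id}}$, which carry a $\Delta$ because $w$ does) exactly cancels the $1/w$ blow-up after differentiation; near $r=\infty$ one tracks powers of $r$ and uses that $(r^2+a^2)^2 w=\Delta$ grows only quadratically. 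I would handle this by an induction on $k$: assume $c_{s,k-1,i}^X=O(1)$ with $(c_{s,k-1,i}^X)'=O(w)$, then $(c_{s,k-1,i}^X/w)'=(c_{s,k-1,i}^X)'/w-c_{s,k-1,i}^X w'/w^2$; the first term is $O(1)$ by hypothesis, and for the second one uses that $w'/w^2=(r^2+a^2)^2\big(\tfrac{1}{r^2+a^2}-\tfrac{2r\Delta}{(r^2+a^2)^3}\big)'/\Delta$-type expressions are $O(1)$, so that $c_{s,k,i}^X=O(1)$ and, differentiating once more, $(c_{s,k,i}^X)'=O(w)$ again. The exceptional refinement at $(|s|,k,i)=(\text{anything},1,0)$ requires isolating the precise $O(a)$ coefficient at the first recursion step and checking no subsequent step degrades the $O(w)$ derivative bound; this is the one place where the bookkeeping is genuinely delicate rather than merely tedious.
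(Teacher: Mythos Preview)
Your overall strategy---specialize the general framework \eqref{eq:pde-Psi-i}--\eqref{eq:recursive-relation-coupling-coefs-i=k} to the chosen weights, read off the recursion and its initialization, and then verify the asymptotic claims---is exactly the paper's. The derivation of the operator on the left of \eqref{eq:transformed-k} and of the inhomogeneity is fine (minor quibble: $c_{s,k,k}^T=0$ holds for \emph{all} $k$, not just $k=|s|$, which is what kills every $T$-coupling at once).

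The real gap is in your asymptotic argument. Your proposed induction hinges on the claim that $w'/w^2=O(1)$, but this is false: since $w\sim r^{-2}$ at infinity one has $w'/w^2\sim -2r$, and near the horizon $w'/w^2$ blows up like $(r-r_+)^{-2}$ (subextremal) or $(r-M)^{-3}$ (extremal). Hence the step ``$(c/w)'=(c)'/w-c\,w'/w^2=O(1)+O(1)\cdot O(1)$'' collapses, and the induction with hypothesis ``$c=O(1)$, $c'=O(w)$'' does not close. What the paper does instead is to exploit the \emph{specific} rational-function form of the initial data: each $c_{s,k,k}^\Phi$ is a constant times $\tfrac{arw}{r^2+a^2}$, and one checks by direct computation that the map $g\mapsto (g/w)'$ sends $w\cdot\tfrac{ar}{r^2+a^2}$ to (a constant times) $w\cdot\tfrac{r^2-a^2}{r^2+a^2}$ and back, picking up an explicit factor of $a$ each time. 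The same closed-form computation is done for the two $r$-weights appearing in $c_{s,k,k}^{\mathrm{id}}$. This cycling is an algebraic accident that a generic $O$-bound induction cannot see; you need to carry the exact functions through.

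Relatedly, your explanation of where the overall factor of $a$ in \eqref{eq:transformed-J} comes from is off: the diagonal $c_{s,k,k}^{\mathrm{id}}$ is \emph{not} $O(a)$ (its leading piece $w\,[|s|+k(2|s|-k-1)]$ survives at $a=0$). The $a$ appears only in the \emph{off-diagonal} coefficients, and it is produced by the recursion via the explicit computation above. The exceptional case $|s|\neq1$, $k=1$, $i=0$ is then pinned down precisely: the first weighted derivative of $\tfrac{r(r^2-a^2)}{(r^2+a^2)^2}$ contributes a factor $(16a^2)^{(n-1)/2}=1$ at $n=1$, so no $a$ emerges---except when $|s|=1$, where the combinatorial coefficient $1-3|s|+2s^2$ in front of that term vanishes, restoring the $a$-factor. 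Your proposal flags this case but does not identify the mechanism.
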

\begin{proof} With our choice of weights, the PDEs (\ref{eq:pde-Psi-i}) for the transformed quantities take the form 
 \begin{gather*}
\mc{L}\underline{\mc{L}}\swei{\uppsi}{s}_{(k)} +\mr{sgn}(s)(k-|s|)\frac{w'}{w}\mc{L}\swei{\uppsi}{s}_{(k)}+\sum_{X\in\mathfrak{X}}  \sum_{i=0}^k c_{s,\,k,\,i}^X \swei{\uppsi}{s}_{(i)}=\lp(\frac{\mc{L}}{w}\rp)^{k}\lp(\frac{\Delta^{1+|s|}}{(r^2+a^2)^{2|s|+3/2}}\swei{F}{s}\rp)\,, 
\end{gather*}
where $\mathfrak{X}=\{T,\,\Phi,\,\mathring{\slashed{\triangle}}^{[s]}_T,\, \mr{id}\}$ and coefficients $c^X_{s,\,k,\,i}$ satisfy the recursive relation \eqref{eq:recursive-relation-coupling-coefs}
\begin{align}
\begin{dcases}
c_{s,\,k,\,0}^X&=-\mr{sgn}(s)\lp(\frac{c_{s,\,k-1,\,0}^X}{w}\rp)' \\
c_{s,\,k,\,i}^X&=-\mr{sgn}(s)\lp(\frac{c_{s,\,k-1,\,i}^X}{w}\rp)'+c_{s,\,k-1,\,i-1}^X \quad \text{~ for ~} i=1,...,k-1 
\end{dcases}\,, \label{eq:recursive-relation-coupling-coefs-final}
\end{align} 
initialized by the relations, for $i=k=0,...,|s|$, (see \eqref{eq:recursive-relation-coupling-coefs-i=k})
\begin{align*}
\begin{split}
c_{s,\,k,\,k}^T&=0\,, \qquad c_{s,\,k,\,k}^{\mathring{\slashed{\triangle}}}= w\,,\qquad
c_{s,\,k,\,k}^\Phi = \frac{[-4s+(4k-2)\mr{sgn}(s)]arw}{r^2+a^2}\,, \\
c_{s,\,k,\,k}^{\mr{id}} &= w\lp(|s|+\frac{a^2\Delta (1-2|s|) +2Mr(r^2-a^2)(1-3|s|+2s^3)}{(r^2+a^2)^2}\rp)+\frac{k(2|s|-k-1)}{2}\lp(\frac{w'}{w}\rp)'\\
&=w\lp[|s|+k(2|s|-k-1)\rp]+\frac{a^2\Delta w}{(r^2+a^2)^2} [1-2|s|-2k(2|s|-k-1)]\\
&\qquad +\frac{2Mr(r^2-a^2)w}{(r^2+a^2)^2}[1-3|s|+2s^2-3k(2|s|-k-1)]\,. 
\end{split} \label{eq:recursive-relation-coupling-coefs-i=k-final}
\end{align*}

The recursive relations for each $c_{s,\,k,\,i}^X$ can be used to reduce these coefficients, for any $i=0,...,k-1$, to a linear combination of derivatives, weighted by $w^{-1}$, of $c_{s,\,i,\,i}^X$  along $r^*$, for $i=0,...,k-1$. We would like to show that, for $i\leq k-1$,
\begin{align*}
c_{s,\,k,\,i}^X= a w \tilde{c}_{s,\,k,\,i}^X 
\end{align*}
suppressing the $a$ when $(k,i)=(1,0)$ and $|s|\neq 1$, where $\tilde{c}_{s,\,k,\,i}^X$ is at most of $O(1)$  and $(d/dr^*)^n\tilde{c}_{s,\,k,\,i}^X=O(w)$ as $r^*\to\pm \infty$. For $X=T,\swei{\mathring{\slashed{\triangle}}}{s}_T$, this is clear, since
\begin{gather*}
\lp(\frac{c_{s,\,k,\,k}^T}{w}\rp)'=\lp(\frac{c_{s,\,k,\,k}^{\mathring{\slashed{\triangle}}}}{w}\rp)'=0\,.
\end{gather*}
For $X=\Phi$, the result follows from the simple computation
\begin{gather*}
\underbrace{\lp(\frac{1}{w}\cdots\lp(\frac{c_{s,\,k,\,k}^\Phi}{w}\rp)'\cdots\rp)'}_\text{$n$ weighted derivatives} = -(-4)^{\lfloor n/2\rfloor}[-4s+(4k-2)\mr{sgn}(s)]a^nw\begin{dcases}
 \frac{r^2-a^2}{r^2+a^2} &\text{~ if $n$ odd}\\
 \frac{a r }{r^2+a^2} &\text{~ if $n$ even}
\end{dcases}\,.
\end{gather*}
For $X=\mr{id}$, we compute
\begin{align*}
\frac{1}{w}\underbrace{\lp(\frac{1}{w}\cdots\lp(\frac{w'}{w}\rp)'\cdots\rp)'}_\text{$n$ weighted derivatives}&= O(1) \text{~as~}r^*\to\pm \infty\,, \\
\frac{1}{w}\underbrace{\lp(\frac{1}{w}\cdots\lp(\frac{r(r^2-a^2)}{(r^2+a^2)^2}\rp)'\cdots\rp)'}_\text{$n$ weighted derivatives} &= \frac{1}{(r^2+a^2)^2}
\begin{dcases}
(16a^2)^{(n-1)/2}(r^4-6a^2r^2+a^4) &\text{~if $n$ odd} \\
(16a^2)^{n/2}r(r^2-a^2) &\text{~if $n$ even}
\end{dcases}\,.
\end{align*}
By the computations above and the recursive relations, we see that the only possible obstruction to the claim is for $c_{s,\,1\, 0}^\mr{id}$, when $n=1$ and so $(16a^2)^{(n-1)/2} =1$. For $s=\pm 1$, this is not an issue, however:
$$c_{\pm 1,\,1,\, 0}^\mr{id}=\lp(\frac{c_{\pm 1,\, 0,\, 0}^\mr{id}}{w}\rp)'=-\frac{a^2 w'}{w} \,,$$
since  $[1-3|s|+2s^2-3k(2|s|-k-1)]=0$ for $(s,k)=(\pm 1,0)$.
\end{proof}

\begin{remark}[Computation of the auxiliary functions]
In Proposition~\ref{prop:transformed-system}, we have written the equations for the transformed system \eqref{eq:transformed-k} and \eqref{eq:transformed-equation} in terms of some functions 
$c_{s,\,k,\,i}^\Phi$ and $c_{s,\,k,\,i}^\mr{id}$,  $i<k=0,...,|s|$, that can be explicitly computed: for $i=k$, these are given by \eqref{eq:recursive-relation-coupling-coefs-i=k-final}, whereas for $i=0,...,k-1$ they are obtained from the recursive relation \eqref{eq:recursive-relation-coupling-coefs-final}. We compute the resulting functions in the physically interesting cases $|s|=1,2$. For $|s|=1$,
\begin{align*}
c_{\pm 1,\,1,\,0}^\mr{id}=\frac{2a(r^3-3Mr^2+a^2r+a^2M)}{(r^2+a^2)^2}\,, \qquad
c_{\pm 1,\,1,\,0}^\Phi= \sign s\frac{r^2-a^2}{r^2+a^2}\,.
\end{align*}
For $|s|=2$ and $k=2$,
\begin{alignat*}{3}
&c_{\pm 2,\,2,\,1}^\mr{id}=\frac{20a\sign s(r^3-3Mr^2+a^2r+a^2M)}{(r^2+a^2)^2}\,, \qquad
&&c_{\pm 2,\,2,\,1}^\Phi= 8\frac{r^2-a^2}{r^2+a^2}\,,\\
&c_{\pm 2,\,2,\,0}^\mr{id}=-\frac{6a(r^4+10Mr^3-6a^2Mr-a^4)}{(r^2+a^2)^2}\,, \qquad
&&c_{\pm 2,\,2,\,0}^\Phi= -24\sign s\frac{a^2r}{r^2+a^2}\,;
\end{alignat*}
finally, for $k=1$,
\begin{align*}
c_{\pm 2,\,1,\,0}^\mr{id}=\frac{6r\sign s (Mr^3-a^2r^2-3Mr a^2-a^4)}{(r^2+a^2)^2}\,, \qquad
c_{\pm 2,\,1,\,0}^\Phi=-10\frac{r^2-a^2}{r^2+a^2}\,,
\end{align*}
where we recall that, in \eqref{eq:transformed-k} with $(s,k,i)=(\pm 2,1,0)$, $a c_{\pm 2,\,1,\,0}^\mr{id}$ should be replaced by $c_{\pm 2,\,1,\,0}^\mr{id}$.
\end{remark}

It is clear that decoupling into a wave-type PDE with real potential (in the separated picture) is ensured through this strategy if the $\Phi$ derivative and the zeroth order term for levels 0 to $|s|-1$ of $\Psi$ are made to vanish under these simplifications, at least. By explicity computing these weights, we find that this is the case if and only if $a=0$:

\begin{lemma} \label{prop:decoupling-a} Fix $s\in\mathbb{Z}$. In Proposition~\ref{prop:transformed-system}, we have $\swei{\mathfrak{J}}{s}=0$  if and only if $a=0$. 
\end{lemma}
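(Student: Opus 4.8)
\emph{Proof strategy.} The ``if'' direction is immediate: by \eqref{eq:transformed-J} every summand of $\swei{\mathfrak{J}}{s}$ carries an explicit prefactor $aw$ (the exceptional rescaling flagged in Proposition~\ref{prop:transformed-system}, in which $a c^{\mr{id}}_{s,k,i}$ is replaced by $c^{\mr{id}}_{s,k,i}$, is attached to the recursion index $k=1$, whereas $\swei{\mathfrak{J}}{s}$ lives at index $k=|s|$ and so is unaffected once $|s|\neq 1$, while for $|s|=1$ the replacement is not invoked anyway). Hence $\swei{\mathfrak{J}}{s}=0$ when $a=0$. The substance is the converse, which I would establish by contraposition: assuming $a\neq 0$, I want to show $\swei{\mathfrak{J}}{s}\neq 0$. (For $s=0$ the statement degenerates, $\swei{\mathfrak{J}}{0}=0$ identically, so one restricts to $s\neq 0$.)

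The first step is to reduce vanishing of the \emph{operator} $\swei{\mathfrak{J}}{s}$ to pointwise vanishing of finitely many coefficient functions. Since $\Phi$ and $\mr{id}$ are linearly independent and act in $\swei{\mathfrak{J}}{s}$ on the distinct arguments $\swei{\uppsi}{s}_{(k)}$, $k=0,\dots,|s|-1$, one has $\swei{\mathfrak{J}}{s}=0$ precisely when $c^{\Phi}_{s,|s|,k}\equiv 0$ and $c^{\mr{id}}_{s,|s|,k}\equiv 0$ for every $k<|s|$. Next I would note that, with the weights $j_{s,k}=w$ for $k<|s|$ fixed in Proposition~\ref{prop:transformed-system}, the couplings through $T$ and $\mathring{\slashed{\triangle}}^{[s]}_T$ are automatically absent: \eqref{eq:recursive-relation-coupling-coefs-i=k-final} gives $c^{T}_{s,k,k}=0$ and $c^{\mathring{\slashed{\triangle}}}_{s,k,k}=w$, and an induction on $k$ through the recursion \eqref{eq:recursive-relation-coupling-coefs-final} then propagates $c^{T}_{s,k,i}\equiv 0$ and $c^{\mathring{\slashed{\triangle}}}_{s,k,i}\equiv 0$ for all $i<k$. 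Therefore $\swei{\mathfrak{J}}{s}=0$ is equivalent to $c^{X}_{s,|s|,k}\equiv 0$ for all $X\in\{T,\Phi,\mathring{\slashed{\triangle}}^{[s]}_T,\mr{id}\}$ and all $k<|s|$; that is, to the fixed weights of Proposition~\ref{prop:transformed-system} being a solution of the overdetermined system of Lemma~\ref{lemma:no-decoupling-general-Kerr}. For $|s|=1,2$ that lemma already shows this is impossible unless $a=0$, contradicting $a\neq 0$ and settling the physically relevant cases.

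For general $s\in\mathbb{Z}$, where Lemma~\ref{lemma:no-decoupling-general-Kerr} is unavailable, I would instead exhibit one nonvanishing coefficient by running the recursion \eqref{eq:recursive-relation-coupling-coefs-final} a single step on the lowest off-diagonal entry. Starting from the diagonal value $c^{\Phi}_{s,|s|-1,|s|-1}$ in \eqref{eq:recursive-relation-coupling-coefs-i=k-final}, which is a nonzero constant times $arw/(r^2+a^2)$, one finds that $c^{\Phi}_{s,|s|,|s|-1}$ equals a nonzero constant times $aw\,(a^2-r^2)/(r^2+a^2)$; when $|s|\geq 2$ the additional carry-over terms produced by the recursion are themselves multiples of the same profile, and since the diagonal seeds $c^{\Phi}_{s,k,k}$ all share one sign for $0\leq k\leq|s|-1$ these contributions add rather than cancel. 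As $(a^2-r^2)/(r^2+a^2)\not\equiv 0$ and $aw\not\equiv 0$ when $a\neq 0$, this yields $\swei{\mathfrak{J}}{s}\neq 0$; for $|s|=1,2$ this coefficient is exactly $c^{\Phi}_{\pm 1,1,0}$, resp.\ $c^{\Phi}_{\pm 2,2,1}$, as computed in the Remark preceding this lemma. I expect the only point requiring care to be the first reduction --- checking that $\swei{\mathfrak{J}}{s}=0$ forces each individual coefficient to vanish and that the $T$ and $\mathring{\slashed{\triangle}}^{[s]}_T$ couplings drop out identically under the Proposition's weights --- after which one simply quotes Lemma~\ref{lemma:no-decoupling-general-Kerr} or carries out the one-step recursion, with no genuine obstacle remaining.
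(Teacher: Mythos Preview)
Your proof is correct and follows essentially the same path as the paper. Both arguments handle the ``if'' direction by the explicit $a$ prefactor in \eqref{eq:transformed-J}, and both prove the converse by exhibiting a single nonvanishing subdiagonal coefficient $c^{X}_{s,|s|,|s|-1}$, computed via the telescoping identity $c^{X}_{s,|s|,|s|-1}=-\mr{sgn}(s)\sum_{n=0}^{|s|-1}(c^{X}_{s,n,n}/w)'$ that the recursion \eqref{eq:recursive-relation-coupling-coefs-final} yields on the first subdiagonal.

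The only real difference is which $X$ is used. The paper chooses $X=\mr{id}$ and evaluates the sum to the closed form $-\tfrac{2a^{2}(4s^{2}-1)s\,(a^{2}(M+r)+r^{2}(r-3M))}{3(a^{2}+r^{2})^{2}}$, relying on a cancellation among the non-leading pieces of $c^{\mr{id}}_{s,n,n}$. You choose $X=\Phi$, where the seeds $c^{\Phi}_{s,n,n}=[-4s+(4n-2)\mr{sgn}(s)]\,arw/(r^{2}+a^{2})$ all have the same sign for $0\le n\le |s|-1$ (since $-4|s|+4n-2<0$ there), so the sum is manifestly a nonzero multiple of $aw(r^{2}-a^{2})/(r^{2}+a^{2})$; this is a bit more elementary. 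Your preliminary reduction --- $\swei{\mathfrak{J}}{s}=0$ forces each $c^{\Phi}$ and $c^{\mr{id}}$ coefficient to vanish, with the $T$ and $\mathring{\slashed{\triangle}}$ couplings automatically absent --- is sound and lets you invoke Lemma~\ref{lemma:no-decoupling-general-Kerr} for $|s|=1,2$ exactly as the paper does. You also correctly flag that the statement is vacuous for $s=0$.
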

\begin{proof}
From the form of the coupling in \eqref{eq:transformed-J}, it is clear that $a=0$ provides a decoupling; we now want to show that there is no possible decoupling if $a\neq 0$ via this strategy. Note that, in the particular cases $|s|=1,2$, the result already follows from Lemma~\ref{lemma:no-decoupling-general-Kerr}.

Consider the coupling coefficient of the second highest order $\Psi$. From \eqref{eq:recursive-relation-coupling-coefs-final} and \eqref{eq:recursive-relation-coupling-coefs-i=k-final}, we have
\begin{align*}
c^\mr{id}_{s,\,|s|,\,|s|-1} = -\mr{sgn}(s) \sum_{n=0}^{|s|-1}\lp(\frac{c^\mr{id}_{s,\,n,\,n}}{w}\rp)' = -\frac{2 a^2 \left(4 s^2-1\right) s  \left(a^2 (M+r)+r^2 (r-3 M)\right)}{3 \left(a^2+r^2\right)^2}\,,
\end{align*}
since $\sum_{n=0}^{|s|-1} [1-3|s|-3k(2|s|-k-1)]=0$. This term is obviously not identically zero in $(r_+,\infty)$ unless $a=0$, which gives the negative result for rotating Kerr black holes.
\end{proof}

%-------------------------------------------------

\section{The Teukolsky and transformed system ODEs}

In this section, we introduce the ordinary differential equations arising from a formal separation of variables of the Teukolsky equation~\eqref{eq:teukolsky-alpha} and of the transformed system introduced in Proposition~\ref{prop:transformed-system}.

\subsection{Admissible frequencies}
\label{sec:admissible-freq}

For $\omega\in\mathbb{R}$ and $m\in\frac12\mathbb{Z}$, it will be convenient to define:
\begin{equation}
\xi := -i\frac{2M r_+}{r_+-r_-}(\omega-m\upomega_+) \,,\quad \beta:=2iM^2(\omega-m\upomega_+)\,, \quad \upomega_+:=\frac{a}{2Mr_+}\,. \label{eq:xi-upomega+}
\end{equation}

For the remainder of this paper, we will be interested in the following parameters:
\begin{definition}[Admissible frequencies] \label{def:admissible-freqs} Fix $s\in\frac12\mathbb{Z}$ and $M>0$. 
\begin{enumerate}[noitemsep]
\item We say the frequency parameter $m$ is admissible with respect to $s$ when, if $s$ is an integer, $m$ is also an integer and when, if $s$ is a half-integer, so is $m$.
\item We say the frequency pair $(m,l)$ is admissible with respect to $s$ when $m$ is admissible with respect to $s$, $l$ is an integer or half-integer if $s$ is an integer or half-integer, respectively, and $l\geq \max\{|m|,|s|\}$. 
\item We say the frequency triple $(\omega,m,l)$ is admissible with respect to $s$ when the pair $(m,l)$ is admissible with respect to $s$ and $\omega\in\mathbb{R}$.
\item We say the frequency triple $(\omega,m,\Lambda)$ is admissible with respect to $s$ when $m$ is admissible with respect to $s$ and $\Lambda\in\mathbb{R}$ satisfies
\begin{enumerate}[label = (\alph*), ref=\theenumi{}(\alph*),itemsep=0pt]
\item if $s=0$, $\Lambda \geq 2|am\omega|$ and  $\Lambda \geq m^2$; \label{it:admissible-freqs-triple-wave}
\item $\Lambda \geq \max\{|m|,|s|\}(\max\{|m|,|s|\}+1)-s^2-2|s||a\omega|$; \label{it:admissible-freqs-triple-Lambda-lower-bound}
\item $|\Lambda| \leq \Lambda +s^2+a^2\omega^2+4|s||a\omega|$; \label{it:admissible-freqs-triple-Lambda-upper-bound}
\item assuming $q:=a\omega/m$, if $q<0$ then, for sufficiently large $|m|$, 
$\Lambda -2m\nu \geq \frac12 |q|m^2$;
if $b<q<1$ then, for sufficiently large $|m|$ depending on $b$, $\Lambda-2m\nu \geq  (q-1)m^2$. \label{it:admissible-freqs-triple-Lambda-q-bound}
\item $\mathfrak{C}_s\geq 0$, for $|s|\leq 2$, for $\mathfrak{C}_s$ as given in \eqref{eq:TS-radial-constants}. \label{it:admissible-freqs-triple-Cs-positivity}
\end{enumerate}
\item We say the frequency triples $(\omega,m,l)$ and $(\omega,m,\Lambda)$ are admissible with respect to $s$ and $a$ when they are admissible with respect to $s$ and, moreover, $\omega$ satisfies $\omega\in\mathbb{R}\backslash\{0\}$ and, if $|a|=M$, $\omega\neq m\upomega_+$. In this case, we write   $(\omega,m,\Lambda)\in\mc{F}_{\mathrm{ admiss}(a,M,s)}$.
\end{enumerate}
\end{definition}

\subsection{The angular ODE}
\label{sec:angular-ODE}

In this section, we introduce the ODE satisfied by  the angular part of the Teukolsky variable under separation of variables and derive some of the properties of the separation constant.

\begin{proposition}[Smooth spin-weighted spheroidal harmonics] \label{prop:angular-ode}
Fix $s\in\frac12\mathbb{Z}$, let $m$ be admissible with respect to $s$, and assume $\nu\in\mathbb{R}$. Consider the angular ODE
\begin{gather}
\begin{gathered}
-\frac{d}{d\theta}\lp(\sin\theta\frac{d}{d\theta}\rp)S_{m,\bm\uplambda}^{[s],\,(\nu)}(\theta)
+ \lp(\frac{(m+s\cos\theta)^2}{\sin^2\theta}-\nu^2\cos^2\theta+2\nu s \cos\theta\rp)S_{m,\bm\uplambda}^{[s],\,(\nu)}(\theta)\\ =\bm\uplambda_{m}^{[s],\,(\nu)} S_{m,\bm\uplambda}^{[s],\,(\nu)}(\theta) \,,
\end{gathered} \label{eq:angular-ode}
\end{gather}
with the boundary condition that $e^{im\phi}S_{m,\bm\uplambda}^{[s],\,(\nu)}$ is a non-trivial smooth $s$-spin-weighted function. 

For each $\nu\in\mathbb{R}$, there are countably many solutions to the problem; using $l$ as an index, we write such solutions, also called $s$-spin-weighted spheroidal harmonics with spheroidal parameter $\nu$, as $e^{im\phi}S_{ml}^{[s],\,(\nu)}$ and denote the corresponding eigenvalues by $\bm\uplambda_{ml}^{[s],\,(\nu)}$. The parameter $l$ is chosen to be admissible with respect to $s$ and such that $\bm\uplambda_{ml}^{[s],\,(0)}=l(l+1)-s^2$ for $\nu =0$ and $\bm\uplambda_{ml}^{[s],\,(\nu)}$ varies smoothly with $\nu$. The eigenvalues satisfy
\begin{align}
\bm\uplambda^{[s],\,(\nu)}_{ml}=\bm\uplambda^{[-s],\,(\nu)}_{ml}=\bm\uplambda^{[-s],\,(-\nu)}_{-m,\,l}\,. \label{eq:lambda-symmetries}
\end{align}

The $s$-spin weighted spheroidal harmonics $\lp\{e^{im\phi}S_{ml}^{[s],\,(\nu)}\rp\}_{ml}$ form a complete orthonormal basis on the space of smooth $s$-spin-weighted spheroidal functions (see Definition \ref{def:smooth-spin-weighted}).
\end{proposition}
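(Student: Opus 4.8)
The plan is to realise \eqref{eq:angular-ode} as the restriction to a fixed azimuthal mode of the eigenvalue equation of a self-adjoint operator on $\mathbb{S}^2$ which is a bounded perturbation of the spin-weighted Laplacian, and then to import the known spectral theory of the latter. Concretely, I would fix $s$ and $\nu$ and introduce, on the $L^2(\mathbb{S}^2)$-completion of $\swei{\mathscr{S}}{s}_\infty$, the operator
\begin{equation*}
\mc{A}^{[s],(\nu)} := \mathring{\slashed\triangle}^{[s]} - s + U_\nu\,, \qquad U_\nu(\theta) := -\nu^2\cos^2\theta + 2\nu s\cos\theta\,,
\end{equation*}
with $\mathring{\slashed\triangle}^{[s]}$ the spin-weighted Laplacian of \eqref{eq:spin-weighted-laplacian}. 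Recall (see e.g.\ \cite{Dafermos2017}, and note that $\mathring{\slashed\triangle}^{[s]}$ is an elliptic, formally self-adjoint operator on the compact manifold $\mathbb{S}^2$) that $\mathring{\slashed\triangle}^{[s]}$ is essentially self-adjoint on $\swei{\mathscr{S}}{s}_\infty$ with compact resolvent, its eigenfunctions being the spin-weighted spherical harmonics $e^{im\phi}S_{ml}^{[s],(0)}$ with $m$ admissible, $l$ admissible with $l\ge\max\{|m|,|s|\}$, and eigenvalue $l(l+1)-s^2$; these form a complete orthonormal system of $L^2(\mathbb{S}^2)$. Since $U_\nu\in C^\infty([0,\pi])$ acts as a bounded self-adjoint multiplication operator, $\mc{A}^{[s],(\nu)}$ is self-adjoint on the same domain, has compact resolvent, spectrum bounded below and accumulating only at $+\infty$, and a complete orthonormal eigenbasis; elliptic regularity for $\mathring{\slashed\triangle}^{[s]}$, whose coefficients are smooth, forces each eigenfunction to lie in $\swei{\mathscr{S}}{s}_\infty$.

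Next I would observe that $\mc{A}^{[s],(\nu)}$ commutes with $\p_\phi$, hence preserves each closed subspace $\mathscr{H}_m$ of spin-weighted $L^2$ functions of the form $e^{im\phi}g(\theta)$ with $m$ admissible, and that on $\mathscr{H}_m$ it acts precisely as the left-hand side of \eqref{eq:angular-ode}. This yields, for each admissible $m$, a countable family of eigenpairs $(\bm\uplambda, S_{m\cdot}^{[s],(\nu)})$ with $e^{im\phi}S_{m\cdot}^{[s],(\nu)}\in\swei{\mathscr{S}}{s}_\infty$ nontrivial, which is the existence claim; the demand that $e^{im\phi}S$ be a smooth spin-weighted function is exactly the boundary condition singling out this self-adjoint realisation at the regular singular endpoints $\theta=0,\pi$, so no separate endpoint analysis is needed. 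Orthogonality across distinct $m$ comes from the $\phi$-integration and, within a fixed $m$, from self-adjointness (using Gram--Schmidt inside any degenerate eigenspace); after normalisation, completeness of $\{e^{im\phi}S_{ml}^{[s],(\nu)}\}_{m,l}$ in $L^2(\mathbb{S}^2)$, and hence on its dense subspace $\swei{\mathscr{S}}{s}_\infty$, follows from $\bigoplus_m\mathscr{H}_m=L^2(\mathbb{S}^2)$ together with completeness on each summand, which is the last assertion.

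For the dependence on $\nu$, I would note that $\nu\mapsto\mc{A}^{[s],(\nu)}$ is a self-adjoint holomorphic family of type (A) (polynomial in $\nu$, $\nu$-independent domain) with compact resolvent, so by Rellich's theorem its eigenvalues and eigenprojections organise into real-analytic, in particular smooth, functions on $\mathbb{R}$; restricting to $\mathscr{H}_m$ and labelling branches by $l$ so that $\bm\uplambda_{ml}^{[s],(0)}=l(l+1)-s^2$ fixes $l$ for all $\nu$ by continuity and makes $\nu\mapsto\bm\uplambda_{ml}^{[s],(\nu)}$ smooth (crossings are permitted; the analytic branches stay well defined). For \eqref{eq:lambda-symmetries} I would use the reflection $R\colon\theta\mapsto\pi-\theta$, a unitary of $L^2((0,\pi),\sin\theta\,d\theta)$: a direct substitution gives, for the potential $V_{m,\nu}^{[s]}$ of \eqref{eq:angular-ode}, the identities $V_{m,\nu}^{[s]}(\pi-\theta)=V_{m,\nu}^{[-s]}(\theta)$ and $V_{-m,-\nu}^{[s]}(\pi-\theta)=V_{m,\nu}^{[s]}(\theta)$, while the kinetic term $-\tfrac{d}{d\theta}(\sin\theta\tfrac{d}{d\theta})$ is $R$-invariant and the endpoint conditions are $R$-compatible (with $\theta=0$ and $\theta=\pi$ interchanged, which matches the swap $|m+s|\leftrightarrow|m-s|$ of the spin-weighted exponents when $s\to-s$). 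Hence $R$ intertwines the three eigenvalue problems; since the $\nu=0$ labels $l(l+1)-s^2$ are even in $s$ and in $m$, continuity in $\nu$ propagates the matching and gives $\bm\uplambda_{ml}^{[s],(\nu)}=\bm\uplambda_{ml}^{[-s],(\nu)}=\bm\uplambda_{-m,l}^{[-s],(-\nu)}$.

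I expect the one genuinely delicate point to be the interaction with the singular endpoints $\theta=0,\pi$: one must make sure that the self-adjoint realisation used above — whose form domain is the $H^1$-type completion of $\swei{\mathscr{S}}{s}_\infty$ — is exactly the one implicitly selected by the smoothness (spin-weighted) boundary condition in the statement. This is handled once and for all by the spectral theory of $\mathring{\slashed\triangle}^{[s]}$: the indicial roots of \eqref{eq:angular-ode} at $\theta=0$ are $\pm|m+s|$ and at $\theta=\pi$ are $\pm|m-s|$, and the $L^2$-admissible solution is precisely the one with the non-negative exponent, which is also the one for which $e^{\pm is\phi}(\cdots)$ extends continuously to the poles as in Definition~\ref{def:smooth-spin-weighted}. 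Once this identification is in place, the perturbation-theoretic $\nu$-dependence and the reflection symmetries are routine.
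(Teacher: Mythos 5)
Your proof is correct and covers more ground than the paper's, which is a two-sentence appeal to Sturm--Liouville theory (citing \cite{Dafermos2017}) for existence/completeness and to Meixner--Sch\"afke \cite[\S3.22]{Meixner1954} for smooth dependence on $\nu$, and omits a proof of \eqref{eq:lambda-symmetries} entirely. The main methodological difference is that you realise the problem as a bounded perturbation of $\mathring{\slashed\triangle}^{[s]}$ acting on all of $L^2(\mathbb{S}^2)$ and then disintegrate over azimuthal modes, whereas the paper works from the start with the 1D Sturm--Liouville operator on $(0,\pi)$ at fixed $m$. Your route buys you a clean treatment of the singular endpoints --- the $1/\sin^2\theta$ singularity is absorbed into the smooth elliptic operator on the spin bundle, so the domain and hence the correct self-adjoint realisation are inherited for free from the $\nu=0$ case --- at the cost of carrying around the $m$-decomposition; the 1D approach is more elementary but must address the endpoint behaviour by hand, which the paper does only implicitly. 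Your Kato/Rellich argument for smooth $\nu$-dependence (type (A) holomorphic family, analytic eigenvalue branches) is essentially the content of the Meixner--Sch\"afke result the paper cites, so the two are equivalent. Finally, your reflection argument $\theta\mapsto\pi-\theta$ for \eqref{eq:lambda-symmetries} is the natural one and is a genuine addition, since the paper leaves that identity unproved: the checks $V_{m,\nu}^{[s]}(\pi-\theta)=V_{m,\nu}^{[-s]}(\theta)$ and $V_{-m,-\nu}^{[s]}(\pi-\theta)=V_{m,\nu}^{[s]}(\theta)$ are correct, the reflection is unitary on $L^2((0,\pi),\sin\theta\,d\theta)$ and respects the indicial exponents after the swap $|m+s|\leftrightarrow|m-s|$, and matching branches at $\nu=0$ (where the eigenvalue $l(l+1)-s^2$ is even in both $s$ and $m$) propagates the identification for all $\nu$ by continuity, as you say.
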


\begin{proof}  For $\nu\in\mathbb{R}$, the operator 
$$\mathring{\slashed{\triangle}}^{[s]}_m=-\frac{d}{d\theta}\lp(\sin\theta\frac{d}{d\theta}\rp)
+ \lp(\frac{(m+s\cos\theta)^2}{\sin^2\theta}-\nu^2\cos^2\theta+2\nu s \cos\theta\rp)$$
is self-adjoint and it follows from Sturm-Liouville theory that it has a countable set of eigenfunctions, which form a complete basis os the space of $s$-spin-weighted spheroidal functions, and countable set of corresponding eigenvalues (see, for instance, \cite{Dafermos2017}). For each $s$, $m$ and $\nu$, these can be indexed by $l\in\frac12\mathbb{Z}$ satisfying the constraints in the statement, so that, in particular, $\lambda_{ml}^{[s],\,(\nu)}$ is smooth in $\nu$ (see \cite[Section 3.22, Proposition 1]{Meixner1954}). 
\end{proof}

\begin{lemma}[Properties of the angular eigenvalues] \label{lemma:angular-eigenvalues-properties}
Let $\bm\Lambda_{ml}^{[s],\,(\nu)}$ be defined by 
\begin{align}
\bm\Lambda^{[s],\,(\nu)}_{ml}:=\bm\uplambda^{[s],\,(\nu)}_{ml}+\nu^2 \label{eq:def-Lambda}
\end{align}
where $\bm\uplambda^{[s],\,(\nu)}_{ml}$ is an angular eigenvalue identified in Proposition~\ref{prop:angular-ode}. 
\begin{enumerate}[noitemsep]
\item $\bm\Lambda^{[s],\,(\nu)}_{ml}$ depends only on $m$, $\nu$ and $|s|$.
\item Basic bounds for $s=0$: $\bm\Lambda^{[0],\,(\nu)}_{ml} \geq 2|m\nu|$ and $\bm\Lambda^{[0],\,(\nu)}_{ml} \geq m^2$.
\item Basic bounds for general $|s|$:$\bm\Lambda^{[s],\,(\nu)}_{ml} \geq l(l+1)-s^2-2|s||\nu|$ and $\lp|\bm\Lambda^{[s],\,(\nu)}_{ml}\rp| \leq \bm\Lambda^{[s],\,(\nu)}_{ml}+s^2+a^2\omega^2+4|s||\nu|$.

\item Set $q=\nu/m$. If $q<0$ then, for sufficiently large $|m|$, $\bm\Lambda^{[s],\,(\nu)}_{m l} -2m\nu \geq \frac12 |q|m^2$; if $0\leq q<1-b$ then, for sufficiently large $|m|$ depending on $b$, $\bm\Lambda^{[s],\,(\nu)}_{m l} -2m\nu \geq (q-1)m^2$.
\end{enumerate}
\end{lemma}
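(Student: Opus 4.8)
The strategy is to reduce everything to well-known facts about the spin-weighted spheroidal operator $\mathring{\slashed{\triangle}}^{[s]}_m$ and its eigenvalues. First I would record that, by the symmetries \eqref{eq:lambda-symmetries}, $\bm\uplambda^{[s],\,(\nu)}_{ml}=\bm\uplambda^{[-s],\,(\nu)}_{ml}$ and $\bm\uplambda^{[s],\,(\nu)}_{ml}=\bm\uplambda^{[-s],\,(-\nu)}_{-m,l}$; the first gives that $\bm\uplambda^{[s],\,(\nu)}_{ml}$, hence $\bm\Lambda^{[s],\,(\nu)}_{ml}=\bm\uplambda^{[s],\,(\nu)}_{ml}+\nu^2$, depends only on $|s|$, and the second (together with $\nu^2=(-\nu)^2$) shows it is invariant under $(m,\nu)\mapsto(-m,-\nu)$. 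That settles item 1 (the dependence only on $m,\nu,|s|$; the $(m,\nu)\to(-m,-\nu)$ symmetry will be convenient for the sign bookkeeping in items 2 and 4).

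For item 2 ($s=0$), the plan is a Rayleigh-quotient / variational argument. Writing $\mathring{\slashed{\triangle}}^{[0]}_m = -\frac{d}{d\theta}(\sin\theta\frac{d}{d\theta}) + \frac{m^2}{\sin^2\theta} - \nu^2\cos^2\theta$ and adding $\nu^2$ to pass to $\bm\Lambda$, I would complete the square: $\frac{m^2}{\sin^2\theta}+\nu^2\sin^2\theta \ge 2|m\nu|$ pointwise, which on testing against a normalized eigenfunction $S$ gives $\bm\Lambda^{[0],\,(\nu)}_{ml}\ge \int (|S'|^2\sin\theta + (\tfrac{m^2}{\sin^2\theta}+\nu^2\sin^2\theta)|S|^2) \ge 2|m\nu|$ after discarding the nonnegative gradient term. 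For the bound $\bm\Lambda^{[0],\,(\nu)}_{ml}\ge m^2$ one instead keeps $\frac{m^2}{\sin^2\theta}\ge m^2$ and drops the $\nu^2\sin^2\theta$ and gradient terms. (Alternatively, both are classical monotonicity statements for spheroidal eigenvalues; I would cite \cite{Dafermos2016b} or \cite{TeixeiradaCosta2019} if a clean reference is preferred, but the one-line Rayleigh-quotient estimate is self-contained.)

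For item 3, the lower bound $\bm\Lambda^{[s],\,(\nu)}_{ml}\ge l(l+1)-s^2-2|s||\nu|$ follows from writing $\bm\uplambda^{[s],\,(\nu)}_{ml}$ as a perturbation of the $\nu=0$ value $\bm\uplambda^{[s],\,(0)}_{ml}=l(l+1)-s^2$: the $\nu$-dependent part of the potential is $-\nu^2\cos^2\theta+2\nu s\cos\theta+\nu^2 = \nu^2\sin^2\theta+2\nu s\cos\theta$, whose infimum over $\theta$ is $\ge -2|s||\nu|$ (bounding $\nu^2\sin^2\theta\ge 0$ and $2\nu s\cos\theta\ge -2|s||\nu|$); min-max then gives the claim. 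The upper bound $|\bm\Lambda^{[s],\,(\nu)}_{ml}|\le \bm\Lambda^{[s],\,(\nu)}_{ml}+s^2+a^2\omega^2+4|s||\nu|$ is only nontrivial when $\bm\Lambda^{[s],\,(\nu)}_{ml}<0$; in that regime I would show $\bm\Lambda^{[s],\,(\nu)}_{ml}$ is bounded below by $-(s^2+a^2\omega^2+4|s||\nu|)/2$ — equivalently $\bm\uplambda^{[s],\,(\nu)}_{ml}\ge -(\text{something})$ — again by a Rayleigh quotient: the only negative contributions to $\langle \mathring{\slashed{\triangle}}^{[s]}_m S, S\rangle$ come from the $-s^2$ in \eqref{eq:spin-weighted-laplacian} (absorbed by definition into $\bm\uplambda$ vs.\ the eigenvalue normalization) and from $2\nu s\cos\theta\ge -2|s||\nu|$, while $\frac{(m+s\cos\theta)^2}{\sin^2\theta}\ge 0$ and the gradient term are nonnegative, and $\nu^2\cos^2\theta\le\nu^2=a^2\omega^2$; collecting these and using $\nu=a\omega$ gives the stated inequality with room to spare.

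Item 4 is the high-$m$ asymptotic bound and is the main obstacle. With $q=\nu/m$ fixed, the potential in \eqref{eq:angular-ode} after adding $\nu^2$ is $\frac{(m+s\cos\theta)^2}{\sin^2\theta}+m^2 q^2\sin^2\theta+2 q m^2 s\cos\theta$, and $\bm\Lambda^{[s],\,(\nu)}_{ml}-2m\nu = \bm\Lambda^{[s],\,(\nu)}_{ml}-2q m^2$. The plan is: by min-max, $\bm\Lambda^{[s],\,(\nu)}_{ml}\ge \inf_\theta\big(\frac{(m+s\cos\theta)^2}{\sin^2\theta}+m^2q^2\sin^2\theta+2qm^2 s\cos\theta\big)$ once $l$ is in the range where the eigenvalue is governed by the potential minimum (true for $|m|$ large since $l\ge|m|$). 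Ignoring the $O(|m|)$ cross term $\frac{2ms\cos\theta}{\sin^2\theta}$ and the $O(1)$ piece $\frac{s^2\cos^2\theta}{\sin^2\theta}$, which are lower order in $m$, the leading behavior is $m^2\inf_\theta\big(\frac{1}{\sin^2\theta}+q^2\sin^2\theta\big)$ so $\bm\Lambda\ge m^2\inf_\theta(\frac{1}{\sin^2\theta}+q^2\sin^2\theta) - C|m|$; then I must bound $\inf_\theta(\csc^2\theta+q^2\sin^2\theta) - 2q$ from below. Setting $x=\sin^2\theta\in(0,1]$, $f(x)=\tfrac1x+q^2 x-2q$. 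If $q<0$: $f(x)\ge \tfrac1x - 2q \ge 1 + 2|q| \ge |q|$ for $|q|$ bounded below, and in general $f(x)\ge 1+2|q|\ge \tfrac12|q|$ once $|q|\ge 0$; more carefully $f(x)\ge \tfrac1x+|q|^2 x + 2|q|\ge 2|q|+2|q| = 4|q|$ — in any case $\ge\tfrac12|q|$, absorbing the $-C|m| = o(m^2)$ error for $|m|$ large gives $\bm\Lambda-2m\nu\ge\tfrac12|q|m^2$. If $0\le q<1-b$: the unconstrained minimum of $\tfrac1x+q^2x$ is at $x=1/q$, which exceeds $1$ precisely when $q<1$, so on $(0,1]$ the minimum is at $x=1$, giving $f(1)=1+q^2-2q=(1-q)^2$; but we need the weaker $\bm\Lambda-2m\nu\ge(q-1)m^2$, i.e.\ $f\ge q-1$, which since $q-1<0\le(1-q)^2$ is immediate with a comfortable gap of order $b^2$ (actually of order $1$, since $(1-q)^2-(q-1)=(1-q)^2+(1-q)\ge b^2$), again enough to absorb the $o(m^2)$ corrections for $|m|$ large depending on $b$. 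The delicate points to get right are (i) justifying that for $l\ge\max\{|m|,|s|\}$ and $|m|$ large the eigenvalue is bounded below by the essential infimum of the potential up to $o(m^2)$ — this needs a genuine localization argument (e.g.\ splitting $\theta$ into a neighborhood of the minimizing angle and its complement, or invoking a quantitative form of the semiclassical/WKB heuristic for \eqref{eq:angular-ode}), and is where one must be careful because the potential is singular at $\theta=0,\pi$; and (ii) tracking that the cross and spin terms really are $O(|m|)$ uniformly, which they are since $\big|\frac{2ms\cos\theta}{\sin^2\theta}\big|$ is not uniformly $O(|m|)$ near the poles — so one should instead note that near the poles $\frac{(m+s\cos\theta)^2}{\sin^2\theta}$ itself blows up and dominates, confining any near-minimizer to a region where $\sin\theta$ is bounded below, on which all error terms are genuinely $O(|m|)$. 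I would handle this by a two-region argument: on $\{\sin^2\theta\le\delta\}$ the potential is $\ge \tfrac{(|m|-|s|)^2}{\delta}\gg m^2$, so the minimizer lies in $\{\sin^2\theta>\delta\}$ where $\frac{(m+s\cos\theta)^2}{\sin^2\theta}\ge \frac{m^2}{\sin^2\theta} - \frac{2|m||s|}{\delta}$ and all other terms are controlled, yielding the clean $m^2 f(\sin^2\theta) - C_\delta|m|$ lower bound; optimizing in $\delta$ (or just fixing $\delta=b/4$, say) then closes item 4.
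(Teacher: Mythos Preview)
Your approach to items 1--3 matches the paper's: symmetries \eqref{eq:lambda-symmetries} for item 1, and the Rayleigh quotient representation
\[
\bm\Lambda^{[s],\,(\nu)}_{ml}=\int \Big(|\partial_\theta S|^2+V_{\mathrm{ang}}(\theta)\,|S|^2\Big)\sin\theta\,d\theta\,d\phi,
\qquad V_{\mathrm{ang}}=\frac{(m+s\cos\theta)^2}{\sin^2\theta}+\nu^2\sin^2\theta+2\nu s\cos\theta,
\]
for items 2--3. (Your discussion of the second bound in item 3 is slightly muddled --- there is no stray $-s^2$ to track, and the remark $\nu^2\cos^2\theta\le\nu^2$ is irrelevant for a \emph{lower} bound on $\bm\Lambda$ --- but the essential observation $V_{\mathrm{ang}}\ge -2|s||\nu|$ pointwise is correct and suffices.)

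For item 4, your argument would work but is more elaborate than needed. Two simplifications the paper exploits:

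First, your worry (i) about needing a ``genuine localization argument'' to justify $\bm\Lambda\ge \inf_\theta V_{\mathrm{ang}}$ is misplaced: this inequality holds for \emph{every} eigenvalue, not only the ground state, simply by plugging the eigenfunction into the Rayleigh quotient and dropping the nonnegative kinetic term. No semiclassical input or localization is required.

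Second, the paper bypasses the two-region argument entirely via the algebraic identity
\[
V_{\mathrm{ang}}(\theta)-2m\nu \;=\; \frac{m^2}{\sin^2\theta}\,(1-q\sin^2\theta)^2 \;+\; \frac{2sm\cos\theta}{\sin^2\theta}\,(1+q\sin^2\theta) \;+\; \frac{s^2\cos^2\theta}{\sin^2\theta}.
\]
The point is that the $O(m^2)$ main term and the $O(|m|)$ cross term carry the \emph{same} $\sin^{-2}\theta$ weight, so the cross term is absorbed \emph{pointwise} by a fraction of the main term once $|m|$ is large enough --- depending only on $|s|$ for $q<0$, and on $(1-q)^{-2}\le b^{-2}$ for $0\le q<1-b$ (using the expansion $(1-q\sin^2\theta)^2=[(1-q)+q\cos^2\theta]^2$). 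This gives the claimed pointwise lower bounds on $V_{\mathrm{ang}}-2m\nu$ directly, without any need to split into regions near and away from the poles. Your two-region argument reaches the same destination but with more bookkeeping.

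(Minor typo: the spin cross term after adding $\nu^2$ is $2qms\cos\theta$, not $2qm^2 s\cos\theta$.)
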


\begin{remark} With Lemma~\ref{lemma:angular-eigenvalues-properties}, setting $\nu=a\omega$, we can now justify statement 4 in Definition~\ref{def:admissible-freqs}, which is the concretization of statements 2 and 3 for such a choice of $\nu$. 
\end{remark}

\begin{proof}[Proof of Lemma~\ref{lemma:angular-eigenvalues-properties}]
First note that statement 1 follows from \eqref{eq:lambda-symmetries}.

For statements 2 and 3, let $\swei{\Xi}{s}$ be an $s$-spin-weighted spheroidal harmonic normalized to have unit $L^2$ norm (see Proposition~\ref{prop:angular-ode}). Then, by construction, $\bm\Lambda^{[s],\,(\nu)}_{ml}$, defined in \eqref{eq:def-Lambda}, satisfies
\begin{gather}
\begin{gathered}
\bm\Lambda^{[s],\,(\nu)}_{ml}=\int_0^\pi\int_0^{2\pi} \lp(\lp|\p_\theta\swei{\Xi}{s}\rp|^2+V_{\mr{ang}}(\theta)\lp|\swei{\Xi}{s}\rp|^2\rp)\sin\theta d\theta d\phi\,,
\end{gathered}\label{eq:angular-variational-problem}
\end{gather}
where, setting $q=\nu/m$
\begin{equation}
\begin{split}
V_{\mr{ang}}(\theta)&=\frac{(m+s\cos\theta)^2}{\sin^2\theta} +\nu^2\sin^2\theta +2\nu s \cos\theta \\
&= \frac{m^2}{\sin^2\theta}(1-q\sin^2\theta)^2+\frac{2sm\cos\theta}{\sin^2\theta}(1+q\sin^2\theta) +\frac{s^2\cos^2\theta}{\sin^2\theta}+2m\nu\,.
\end{split}\label{eq:angular-potential}
\end{equation}

In the case $\nu=0$ (see Proposition~\ref{prop:angular-ode}), 
\begin{align}\label{eq:eigenvalues-schwarzschild}
\bm\Lambda^{[s],\,(\nu)}_{ml}=l(l+1)-s^2\,.
\end{align}
Hence, as $V_{\mr{ang}}=V_{\mr{ang}}|_{\nu=0}+ \nu^2\sin^2\theta+2s\nu \cos\theta$, we can immediately obtain the first bound in statement 3 as well as the second, since
\begin{align*}
\lp|\bm\Lambda^{[s],\,(\nu)}_{ml}\rp|\leq l(l+1)+\nu^2+2|s||\nu|\leq \bm\Lambda^{[s],\,(\nu)}_{ml}+s^2+\nu^2+4|s||\nu|,. 
\end{align*}

From the first bound in statement 3, the second bound from statement 2 follows. The first bound in statement 2, follows from the fact that, if $s=0$, Young's inequality gives
\begin{equation*}
V_{\mr{ang}}(\theta)=\frac{m^2}{\sin^2\theta} +\nu^2\sin^2\theta \geq 2m\nu \,.
\end{equation*} 
Finally, we turn to statement 4. If $q<0$, then
\begin{align*}
V_{\mr{ang}}(\theta)-2m\nu \geq \frac{1}{\sin^2	\theta}(1+|q|\sin^2\theta)\max\{1,|q|\sin^2\theta\}(m^2-2|s||m|)\geq \frac12 |q|m^2\,,
\end{align*}
assuming $|m|\geq 4|s|$. Hence, assume $q>0$ and, without loss of generality, $m>0$. In case $q<1$, we also easily have 
\begin{align*}
V_{\mr{ang}}(\theta)-2m\nu \geq \frac{m^2}{\sin^2\theta}+2q(1-q)\frac{\cos^2\theta}{\sin^2\theta}m^2-\frac{4|s||m|\cos\theta}{\sin^2\theta} \geq m^2(q-1)^2\,,
\end{align*}
as long  $q-1>b$ and $m$ is sufficiently large depending on $b$. 
\end{proof}

\subsection{The radial ODEs}
\label{sec:radial-ODEs}

In this section, we introduce the radial ODEs whose analysis is the main goal of the present paper. The ODEs are given in Section \ref{sec:radial-ODE-Teukolsky-transformed}, paving the way to examining their asymptotics in Section~\ref{sec:radial-ODE-asymptotics} and establishing suitable boundary conditions and notation for our applications in Sections \ref{sec:radial-ODE-bdry-conditions-Teukoslsky} and \ref{sec:radial-ODE-bdry-conditions-transformed}. We conclude by recalling the Teukolsky--Starobinsky identities in the separated picture in Section~\ref{sec:Teukolsky--Starobinsky}.

\subsubsection{The Teukolsky radial ODE and the transformed radial ODEs}
\label{sec:radial-ODE-Teukolsky-transformed}

For some spin $s\in\mathbb{Z}$ and $a\in[-M,M]$, let $(\omega,m,\Lambda)$ be an admissible frequency triple with respect to $a$ and $s$. We are interested in solutions $\smlambda{\upalpha}{s}(r)$ of the inhomogeneous Teukolsky radial ODE
\begin{align*}
&\lp[\Delta^{-s}\frac{d}{dr}\lp(\Delta^{s+1}\frac{d}{dr}\rp)  +\frac{[\omega(r^2+a^2)-am]^2-2is(r-M)[\omega(r^2+a^2)-am]}{\Delta}\rp] \smlambda{\upalpha}{s}(r)\\
&\qquad\quad+\lp(4is\omega r -\Lambda-s+2am\omega\rp)\smlambda{\upalpha}{s}(r)=\frac{\Delta^{|s|/2(1-\sign s)}}{(r^2+a^2)^{|s|}}\smlambda{F}{s}(r)\,,\numberthis \label{eq:radial-ODE-alpha}
\end{align*}
where $\smlambda{F}{s}(r)$ is bounded for $r\in [r_+,\infty)$. Defining 
\begin{equation}
\smlambda{u}{s}:=\Delta^{s/2}(r^2+a^2)^{1/2}\smlambda{\upalpha}{s}\,, \label{eq:def-u}
\end{equation}
 the Teukolsky radial ODE \eqref{eq:radial-ODE-alpha} can also be rewritten as
\begin{gather} \label{eq:radial-ODE-u}
\lp(\smlambda{u}{s}\rp)''  + \lp(\omega^2- V^{[s]}\rp) \smlambda{u}{s} =\smlambda{H}{s}\,,\qquad  \sml{H}{s}:=\frac{\Delta^{|s|/2+1}}{(r^2+a^2)^{|s|+3/2}} \smlambda{F}{s}\,,
\end{gather}
letting $'$ denote $r^*$ derivatives. Here, $\swei{V}{s}=\swei{V}{s}_0+V_1+i\swei{V}{s}_2$, where
\begin{equation}
\begin{split}
\swei{V}{s}_0&=\frac{4Mra m\omega -a^2m^2+\Delta\Lambda_{ml}^{[s]}}{(r^2+a^2)^2}\,, \\
\swei{V}{s}_1&=s^2\frac{(r-M)^2}{(r^2+a^2)^2}+\frac{\Delta}{(r^2+a^2)^4}\lp(a^2\Delta+2Mr(r^2-a^2)\rp) \geq 0 \,,  \\
\swei{V}{s}_2&=-\frac{2s\omega \lp(r(r^2+a^2)+M(a^2-3r^2)\rp)+2sam(r-M)}{(r^2+a^2)^2}\,.
\end{split} \label{eq:radial-ODE-u-potentials} 
\end{equation}

Moreover, define
\begin{align}
\smlambda{\lp(\uppsi_{(0)}\rp)}{s}&:=(r^2+a^2)^{-|s|+1/2}\Delta^{|s|/2(1+\sign s)}\smlambda{\alpha}{s}=w^{-|s|/2}\sml{u}{s}\,,\label{eq:def-psi0-separated}\\
\smlambda{\lp(\uppsi_{(k)}\rp)}{s}&:=\frac{1}{w}\mc{L}\smlambda{\lp(\uppsi_{(k-1)}\rp)}{s}\,,\qquad  k=1,...,|s|\,, \qquad \smlambda{\Psi}{s}:=\smlambda{\lp(\uppsi_{(k)}\rp)}{s}\,, \label{eq:transformed-transport-separated}
\end{align}
with $\mathcal{L}=L$ if $s<0$, $\mathcal{L}=\underline{L}$ if $s>0$ and $\mathcal{L}$ being the identity if $s=0$; here and throughout the section we are considering, by abuse of notation, $L$ and $\uL$ to be given by their separated picture analogues,
\begin{align*}
L=\frac{d}{dr^*}-i\omega+\frac{iam}{r^2+a^2}\,,\quad \uL=-\frac{d}{dr^*}-i\omega+\frac{iam}{r^2+a^2}\,.
\end{align*}

From the Teukolsky radial ODE \eqref{eq:radial-ODE-alpha}, we derive the radial ODEs for  $\smlambda{\lp(\uppsi_{(k)}\rp)}{s}$ when $k=0,...,|s|$:
\begin{equation}
\begin{gathered}
\lp(\smlambda{\lp(\uppsi_{(k)}\rp)}{s}\rp)''-(|s|-k)\lp(\frac{w'}{w}\rp)\lp(\smlambda{\lp(\uppsi_{(k)}\rp)}{s}\rp)' + \lp(\omega^2-\smlambda{\lp({\mc{V}}_{(k)}\rp)}{s}\rp)\smlambda{\lp(\uppsi_{(k)}\rp)}{s}\\
= \smlambda{\lp(\mathfrak{G}_{(k)}\rp)}{s}+aw\sum_{i=0}^{k-1}\lp(im c_{s,\,k,\,i}^{\Phi}+ c_{s,\,k,\,i}^{\mr{id}}\rp)\smlambda{\lp(\uppsi_{(i)}\rp)}{s}\,,
\end{gathered} \label{eq:transformed-k-separated}
\end{equation}
where  $c_{s,\,k,\,i}^{\Phi},c_{s,\,k,\,i}^{\mr{id}}=O(1)$ as $r^*\to \pm \infty$ are the functions introduced in Proposition~\ref{prop:transformed-system} (recall $ac_{s,\,1,\,0}^{\mr{id}}$ in \eqref{eq:transformed-k-separated} should be replaced by $c_{s,\,1,\,0}^{\mr{id}}$ when $|s|\neq 1$) and
\begin{equation}
\smlambda{\lp(\mathfrak{G}_{(k)}\rp)}{s}=\lp(\frac{\mc{L}}{w}\rp)^{k}\lp(\frac{\Delta^{1+|s|}}{(r^2+a^2)^{2|s|+3/2}}\smlambda{F}{s} \rp)\,.
\end{equation}
The potential $\smlambda{\lp({\mc{V}}_{(k)}\rp)}{s}$ is a complex-valued function of $r$ involving the frequency parameters, where
\begin{align*}
\Re{\mc{V}}_{(k)}&=\frac{\Delta\Lambda+4Mram\omega-a^2m^2}{(r^2+a^2)^2}+w\lp[|s|+k(2|s|-k-1)\rp]+\frac{a^2\Delta w}{(r^2+a^2)^2} [1-2|s|-2k(2|s|-k-1)]\\
&\qquad +\frac{2Mr(r^2-a^2)w}{(r^2+a^2)^2}[1-3|s|+2s^2-3k(2|s|-k-1)]\,,\numberthis \label{eq:radial-ODE-potential-k-tilde}\\
\Im{\mc{V}}_{(k)}&=\sign s (|s|-k)\lp[\frac{w'}{w}\lp(\omega-\frac{am}{r^2+a^2}\rp)-\frac{4amrw}{(r^2+a^2)}\rp]\,. 
\end{align*}

In particular, $\smlambda{\Psi}{s}:=\smlambda{\lp(\uppsi_{(|s|)}\rp)}{s}$ solves the transformed radial ODE
\begin{align}
\lp(\smlambda{\Psi}{s}\rp)''+\lp(\omega^2-\smlambda{\mc{V}}{s}\rp)\smlambda{\Psi}{s}=\smlambda{\mathfrak{G}}{s}+aw\sum_{k=0}^{|s|-1}\lp(im c_{s,\,|s|,\,k}^{\Phi}+ c_{s,\,|s|,\,k}^{\mr{id}}\rp)\smlambda{\lp(\uppsi_{(k)}\rp)}{s}\,, \label{eq:radial-ODE-Psi}
\end{align}
where $\smlambda{\mc{V}}{s}=\mc{V}^{s}_{0}+\swei{\mc{V}}{s}_{1}$ is real and given by
\begin{equation}
\begin{split}
\swei{\mc{V}}{s}_{0}&:=\frac{4Mram\omega-a^2m^2+\Delta\Lambda}{(r^2+a^2)^2} \\
\swei{\mc{V}}{s}_{1}&:=s^2\frac{\Delta}{(r^2+a^2)^2}+\frac{\Delta}{(r^2+a^2)^4}\lp[(1-2s^2)a^2\Delta + 2(1-s^2)Mr(r^2-a^2)\rp] \,.
\end{split} \label{eq:radial-ODE-Psi-potentials}
\end{equation}

\subsubsection{Asymptotic analysis}
\label{sec:radial-ODE-asymptotics}

Consider a second order ODE. Around any point where it is defined, we can span the space of solutions by two linearly independent asymptotic (to arbitrary order) solutions; these asymptotic solutions, whether around a regular or singular point of the ODE, are standard (we refer the reader to \cite[Chapters 5 and 7]{Olver1973} or \cite{Erdelyi1956,Ince1956} for more detail). In this section, we apply this theory to derive information regarding the asymptotic behavior of general solutions of  the Teukolsky and transformed radial ODEs, respectively \eqref{eq:radial-ODE-alpha} and \eqref{eq:transformed-k-separated}, as $r\to r_+,\infty$. First, some notation:
\begin{lemma} \label{lemma:asymp-f-g}
Fix real $(a,M)$ such that $|a|\leq M$, and some real $s$, $\omega$, $m$, $\Lambda$. Define $f,g\colon (r_+,\infty)\to \mathbb{C}$ by  
\begin{align*}
f(r)&=\frac{2(s+1)(r-M)}{\Delta}\,, \\
 g(r)&= \frac{[\omega(r^2+a^2)-am]^2-2is(r-M)[\omega(r^2+a^2)-am]}{\Delta^2}+\frac{4is\omega r -\Lambda-s+2am\omega}{\Delta}\,.
\end{align*}
The functions $f$ and $g$ admit asymptotic expansions as $r\to r_+$ and $r\to \infty$. Indeed, for some complex $f_k^{\infty}$ and $g_k^{\infty}$, one has 
\begin{align}
f(r)&=\sum_{k=0}^{2|s|}f_k^\infty r^{-k}+O(r^{-2|s|-1})\,,  \quad g(r)=\sum_{k=0}^{2|s|}g_k^\infty r^{-k}+O(r^{-2|s|-1})\,,  \label{eq:asymp-f-g-infinity}
\end{align}
as $r\to \infty$ and, as $r\to r_+$, either
\begin{align}
\begin{split}
f(r)(r-M)^2&=\sum_{k=0}^{2|s|}f_k^{M}(r-M)^{k}+O\lp((r-M)^{2|s|+1}\rp)\,,   \text{~~as~$r\to M$ if $|a|=M$}\\
 g(r)(r-M)^4&=\sum_{k=0}^{2|s|}g_k^{M} (r-M)^{k}+O\lp((r-M)^{2|s|+1}\rp)\,,   \text{~~as~$r\to M$ if $|a|=M$}
 \end{split}\label{eq:asymp-f-g-M}
\end{align}
for some $f_k^{M},g_k^{M}\in\mathbb{C}$, if $|a|=M$ or, if $|a|<M$,
\begin{align}
\begin{split}\label{eq:asymp-f-g-r+}
f(r)(r-r_+)&=\sum_{k=0}^{2|s|}f_k^{r_+}(r-r_+)^{k}+O\lp((r-r_+)^{2|s|+1}\rp)\,,  \text{~~as~$r\to r_+$ if $|a|<M$}\\
 g(r)(r-r_+)^2&=\sum_{k=0}^{2|s|}g_k^{r_+} (r-r_+)^{k}+O\lp((r-r_+)^{2|s|+1}\rp)\,, \text{~~as~$r\to r_+$ if $|a|<M$}\,,
 \end{split}
\end{align}
for some $f_k^{r_+},g_k^{r_+}\in\mathbb{C}$. The coefficients in the expansions can be that can be explicity computed in terms of $s$, $a$, $M$, $\omega$, $m$ and $\Lambda$: for instance, one has for $f$
\begin{alignat*}{4}
f_0^{\infty}&=0\,, \quad ~~~~~f_1^\infty &&=2(1+s)\,, \quad f_2^\infty&&=2M(1+s)\,,\\
f_0^{M}&=0\,, \quad ~~~~~f_1^M &&=-2s\,, \quad ~~~~~f_2^M&&=0\,,\\
f_0^{r_+}&=1+s\,, \quad f_1^{r_+} &&=\frac{1+s}{r_+-r_-}\,, \quad f_2^{r_+}&&=-\frac{1+s}{(r_+-r_-)^2}\,,
\end{alignat*}
and, for $g$,
\begin{align*}
&\begin{alignedat}{4}
g_0^{\infty}&=-(i\omega)^2\,, \quad g_1^\infty &&=2i\omega(s-2iM\omega)\,, \quad ~~~~~~g_2^\infty&&=-\Lambda+s+2iM\omega(s-6iM\omega)\,,\\
g_0^{M}&=-\beta^2\,, \quad ~~~g_1^M &&=-2\beta(s+2iM\omega)\,, \quad ~~~~~g_2^M&&=-\Lambda+s+8M^2\omega^2\,,
\end{alignedat}\\
&g_0^{r_+}=\xi(s-\xi)\,, \quad g_1^{r_+} =\frac{\Lambda-s-2am\omega-2\xi^2-2i\omega r_+(s+\xi)}{r_+-r_-}\,, \\
&g_2^{r_+}=\frac{\Lambda-s-2am\omega -\xi(s+3\xi)+2i\omega[\sqrt{M^2-a^2}(s-2\xi)-2M(s+2\xi)]+4r_+^2\omega^2}{(r_+-r_-)^2}\,.
\end{align*}
\end{lemma}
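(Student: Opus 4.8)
\textbf{Proof proposal for Lemma~\ref{lemma:asymp-f-g}.}

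The plan is to treat $f$ and $g$ as explicit rational functions of $r$ and extract their asymptotic expansions at the two relevant points by elementary manipulations of the formulas, rather than by invoking any nontrivial ODE theory. First I would note that $f(r) = 2(s+1)(r-M)/\Delta$ is already a rational function whose only finite singularities are the roots $r_\pm$ of $\Delta$. For the expansion at $r\to\infty$, I would write $\Delta = r^2(1 - 2M/r + a^2/r^2)$ and expand $(1 - 2M/r + a^2/r^2)^{-1}$ as a geometric-type series in $1/r$, truncating after the $O(r^{-2|s|-1})$ term; multiplying by $2(s+1)(r-M) = 2(s+1)r(1 - M/r)$ and collecting powers of $r^{-1}$ yields the claimed expansion \eqref{eq:asymp-f-g-infinity}, and reading off the first few coefficients gives $f_0^\infty = 0$, $f_1^\infty = 2(1+s)$, $f_2^\infty = 2M(1+s)$ (here I use $\Delta \sim r^2$, $(r-M)/\Delta \sim 1/r + M/r^2 + \cdots$). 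For $g$, the same procedure applies: $g$ is a rational function with denominator a power of $\Delta$, so after expanding $\Delta^{-1}$ and $\Delta^{-2}$ in powers of $1/r$ and multiplying out the polynomial numerators, one collects terms; the leading behavior $g \sim \omega^2 + \cdots$ (note $-(i\omega)^2 = \omega^2$) comes from the $[\omega(r^2+a^2)-am]^2/\Delta^2 \sim \omega^2 r^4 / r^4$ term, and the subleading coefficients follow by keeping track of the $1/r$ corrections to both numerator and denominator, together with the contribution of the $-2is(r-M)[\omega(r^2+a^2)-am]/\Delta^2$ and $(4is\omega r - \Lambda - s + 2am\omega)/\Delta$ pieces.

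For the expansions near $r = r_+$ when $|a| < M$, the key observation is that $\Delta = (r-r_+)(r-r_-)$, so $f(r)(r-r_+) = 2(s+1)(r-M)/(r-r_-)$ is smooth and nonvanishing at $r = r_+$; I would expand $(r-r_-)^{-1} = (r_+ - r_- + (r-r_+))^{-1}$ as a geometric series in $(r-r_+)/(r_+-r_-)$ and multiply by $2(s+1)(r-M) = 2(s+1)(r_+ - M + (r-r_+))$, using $r_+ - M = \sqrt{M^2-a^2}$ and $r_+ - r_- = 2\sqrt{M^2-a^2}$ to simplify; this gives $f_0^{r_+} = 1+s$ (since $(r_+-M)/(r_+-r_-) = 1/2$), and the higher coefficients $f_1^{r_+}, f_2^{r_+}$ follow from the next terms in the series. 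Similarly $g(r)(r-r_+)^2$ is smooth at $r_+$: one has $g(r)(r-r_+)^2 = [\omega(r^2+a^2)-am]^2/(r-r_-)^2 - 2is(r-M)[\omega(r^2+a^2)-am](r-r_+)/(r-r_-)^2 + (4is\omega r - \Lambda - s + 2am\omega)(r-r_+)/(r-r_-)$, and I would Taylor-expand each factor around $r = r_+$, recalling the definition $\xi = -2iMr_+(\omega - m\upomega_+)/(r_+-r_-)$ from \eqref{eq:xi-upomega+} and the identity $\omega(r_+^2+a^2) - am = 2Mr_+(\omega - m\upomega_+) = i\xi(r_+ - r_-)$ (using $2Mr_+\upomega_+ = a$ and $r_+^2 + a^2 = 2Mr_+$); this makes the leading term $[\omega(r_+^2+a^2)-am]^2/(r_+-r_-)^2 = -\xi^2$, and combined with the $-2is(r_+-M)[\omega(r_+^2+a^2)-am]/(r_+-r_-)^2 = -is\xi$ contribution from the second term gives $g_0^{r_+} = \xi(s - \xi) = s\xi - \xi^2$. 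The coefficients $g_1^{r_+}, g_2^{r_+}$ are obtained by carrying the Taylor expansions one or two orders further, a routine but somewhat lengthy algebraic computation.

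The extremal case $|a| = M$ is handled identically but with $r_+ = r_- = M$, so $\Delta = (r-M)^2$ and one expands around the double root: $f(r)(r-M)^2 = 2(s+1)(r-M)^3/(r-M)^2 = 2(s+1)(r-M)$ is already a polynomial, giving $f_0^M = 0$, $f_1^M = -2s$ (after re-examining the correct normalization — here one should note $f(r)(r-M)^2 = 2(s+1)(r-M)$, but the stated coefficient $f_1^M = -2s$ indicates the intended object is slightly different and I would double-check the precise power of $(r-M)$ in the statement), $f_2^M = 0$; and $g(r)(r-M)^4$ is a polynomial in $(r-M)$ whose coefficients one reads off directly, using $\beta = 2iM^2(\omega - m\upomega_+)$ from \eqref{eq:xi-upomega+} and $\omega(M^2+a^2) - am = \omega\cdot 2M^2 - am = 2M^2(\omega - m a/(2M^2))$; since at extremality $\upomega_+ = a/(2Mr_+) = a/(2M^2)$, this equals $2M^2(\omega - m\upomega_+)$, so $[\omega(M^2+a^2)-am]^2 = -\beta^2$, giving $g_0^M = -\beta^2$, and the subleading coefficients $g_1^M = -2\beta(s+2iM\omega)$, $g_2^M = -\Lambda + s + 8M^2\omega^2$ follow by the same bookkeeping. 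The main obstacle is not conceptual — everything reduces to expanding explicit rational functions — but rather the bookkeeping: one must be careful to correctly identify which power of $(r-r_+)$, $(r-M)$, or $r^{-1}$ multiplies $f$ and $g$ so that the resulting object is regular and nonvanishing at the expansion point, and one must consistently use the algebraic identities relating $\omega(r_+^2+a^2)-am$, $\upomega_+$, $\xi$, $\beta$, $r_\pm$, and $M^2 - a^2$ to bring the coefficients into the stated closed form; I would organize the computation as three parallel sub-cases ($r\to\infty$; $r\to r_+$ with $|a|<M$; $r\to M$ with $|a|=M$) and verify the first three coefficients in each by direct substitution, the general pattern being clear from the structure of the denominators $\Delta^{-1}$ and $\Delta^{-2}$.
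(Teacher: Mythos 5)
Your approach is correct and is in fact the only natural one: the lemma is a direct computation, obtained by expanding the explicit rational functions $f$ and $g$ about the relevant point, and the paper gives no separate proof of it. Your computations at $r=\infty$ and at $r=r_+$ (for $|a|<M$) check out, and the identity $\omega(r_+^2+a^2)-am = 2Mr_+(\omega-m\upomega_+) = i\xi(r_+-r_-)$, together with $r_+-M=(r_+-r_-)/2$, is exactly the right simplification for putting the $g_k^{r_+}$ in the stated closed form.

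The one genuine loose end is the discrepancy you flagged for $f_1^{M}$, which you correctly noticed but did not resolve. The resolution is that the display \eqref{eq:asymp-f-g-M} for $f$ is, as written, a slip: the coefficients $f_k^M$ that actually enter the recursion in Lemma~\ref{lemma:aymptotic-analysis-radialODE-Teukolsky} are the Taylor coefficients of $\frac{2}{t}-\frac{f}{t^2} = 2(r-M) - f(r)(r-M)^2$, not of $f(r)(r-M)^2$ itself. This is the coefficient of $d\upalpha/dt$ when the ODE $\upalpha''+f\upalpha'+g\upalpha=0$ is rewritten in the variable $t=(r-M)^{-1}$, which maps the rank-1 irregular singularity at $r=M$ to the standard one at $t=\infty$. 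Since $f(r)=2(s+1)/(r-M)$, one finds $2(r-M)-2(s+1)(r-M) = -2s(r-M)$, whence $f_0^M=0$, $f_1^M=-2s$, $f_2^M=0$ as stated. No such correction is needed for $g$ — the transformed coefficient is $g/t^4 = g(r)(r-M)^4$ exactly — which is why your evaluation $g_0^M = -\beta^2$ comes out right. A cross-check: feeding $f_1^M=-2s$ and $g_1^M=-2\beta(s+2iM\omega)$ into the indicial formula $\mu = -\tfrac12 f_1^M - g_1^M/(2\sigma)$ with $\sigma = \pm\beta$ reproduces the indices $\mu_+ = 2s+2iM\omega$, $\mu_- = -2iM\omega$ given in Lemma~\ref{lemma:aymptotic-analysis-radialODE-Teukolsky}, whereas $f_1^M=2(s+1)$ would not.
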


Having fixed the relevant notation, we apply standard ODE theory to the Teukolsky radial ODE \eqref{eq:radial-ODE-alpha}:

\begin{lemma}[Asymptotic analysis of the Teukolsky radial ODE] \label{lemma:aymptotic-analysis-radialODE-Teukolsky}
Let $s\in\mathbb{Z}$, $a\in[-M,M]$, and $(\omega,m,\Lambda)$ be an admissible frequency triple with respect to $a$ and $\Lambda$ such that $\omega\in\mathbb{R}\backslash\{0,m\upomega_+\}$. For $k=0,...,|s|$, let $f_k^\infty$, $f_k^M$, $f_k^{r_+}$ and $g_k^\infty$, $g_k^M$, $g_k^{r_+}$ be determined by \eqref{eq:asymp-f-g-infinity}, \eqref{eq:asymp-f-g-M} and \eqref{eq:asymp-f-g-r+}. Assuming $\smlambda{F}{s}(r)$ is bounded for $r\in[r_+,\infty)$, a solution, $\smlambda{\upalpha}{s}(r)$, of the radial ODE~\eqref{eq:radial-ODE-alpha} admits an asymptotic expansion which is given by linear combinations of the two following solutions indexed by $\pm$:
\begin{itemize}
\item as $r\to \infty$, 
\begin{align*}
e^{\sigma_\pm  r}r^{\mu_+}\lp[\sum_{k=0}^{|s|}c_k^{[s],\infty,\pm} r^{-k}+O(r^{-2|s|-1})\rp]\,, 
\end{align*}
where $c_k^{[s],\infty,\pm}$ are fully determined by $c_k^{[s],\infty,\pm}$ and by the recursive relation 
\begin{align}
(f_0^\infty+2\sigma)k c_k^{[s],\infty,\pm} &=(k-\mu_\pm)(k-1-\mu_\pm)c_{k-1}^{[s],\infty,\pm} \nonumber\\
&\qquad+\sum_{j=1}^k\lp(\sigma_\pm f_{j+1}^\infty+g_{j+1}^\infty-(k-j-\mu_\pm)f_j^\infty\rp)c_{k-j}^{[s],\infty,\pm}\,, \label{eq:recursion-irregular-singularity}
\end{align}
and where $\sigma_\pm =\pm i\omega$, $\mu_+=2iM\omega -1-2s$ and $\mu_-=-2iM\omega -1$;
\item as $r\to M$, if $|a|=M$,
\begin{align*}
&e^{\sigma_\pm(r-M)^{-1}}(r-M)^{-\mu_\pm}\lp[\sum_{k=0}^{|s|}c_k^{[s],M,\pm} (r-M)^{k}+O\lp((r-M)^{2|s|+1}\rp)\rp]\,, 
\end{align*}
where $c_k^{[s],M,\pm}$ are fully determined by $c_0^{[s],M,\pm}$ and by the recursive relation  \eqref{eq:recursion-irregular-singularity} with superscript $M$ instead of $\infty$ and with $\sigma_\pm=\pm 2iM^2(\omega-m\upomega_+)$, $\mu_+=2iM\omega+2s$ and $\mu_-=-2iM\omega$;
\item as $r\to r_+$,
\begin{align*}
(r-r_+)^{\sigma_\pm-\mu_\pm}\lp[\sum_{k=0}^{|s|}c_k^{[s],r_+,\pm} (r-r_+)^{k}+O\lp((r-r_+)^{|s|+1}\rp)\rp]\,,
\end{align*}
where $\sigma_\pm=\pm\xi$, $\mu_\pm = \pm s$ and $c_k^{[s],r_+,\pm}$ are fully determined by $c_0^{[s],r_+,\pm}$ and by the recursive relation 
\begin{align}
\lp[(\sigma_\pm+k)(\sigma_\pm +k-1)+f_0^{r_+}\sigma_\pm\rp]c_k^{[s],r_+,\pm}=-\sum_{j=0}^{k-1}\lp[(\sigma_\pm+j)f_{k-j}^{r_+}+g_{k-j}^{r_+}\rp]c_j^{[s],r_+,\pm}\, \label{eq:recursion-regular-singularity}
\end{align}
\end{itemize}

If $\smlambda{F}{s}\equiv 0$, the expansions above are valid up to arbitrary order; in particular to order $2|s|$. 
\end{lemma}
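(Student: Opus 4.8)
The statement is a standard application of the theory of asymptotic expansions for linear second-order ODEs near regular and irregular singular points (as recalled in \cite{Olver1973,Erdelyi1956,Ince1956}), so the plan is to set up the radial ODE in the Frobenius/Liouville--Green normal form and then read off the indicial and recursive data from Lemma~\ref{lemma:asymp-f-g}. First I would rewrite the Teukolsky radial ODE \eqref{eq:radial-ODE-alpha} in the form
\begin{align*}
\lp(\smlambda{\upalpha}{s}\rp)''+f(r)\lp(\smlambda{\upalpha}{s}\rp)'+g(r)\smlambda{\upalpha}{s}=\frac{\Delta^{|s|/2(1-\sign s)}}{(r^2+a^2)^{|s|}\Delta^{s+1}}\smlambda{F}{s}\,,
\end{align*}
with $f,g$ exactly the functions in Lemma~\ref{lemma:asymp-f-g}; since $\Delta^{s+1}(r^2+a^2)^{-|s|}\Delta^{|s|/2(1-\sign s)}$ is bounded away from zero near $r=r_+$ (resp.\ decays like a power of $r$ near $r=\infty$) and $\smlambda{F}{s}$ is bounded, the inhomogeneity is controlled by $O\lp((r-r_+)^{s+1}\rp)$ near the horizon and an integer power of $r$ near infinity, which is why it only contaminates the expansion beyond a fixed order. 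For the homogeneous equation, the three singular points $r=\infty$, $r=r_+$ (or $r=M$ when $|a|=M$) are treated separately.

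\emph{The point at infinity.} Here $r=\infty$ is an irregular singular point of rank one: $f$ has a finite limit $f_1^\infty/r+O(r^{-2})$ and $g$ has a nonzero limit $g_0^\infty=-(i\omega)^2\neq 0$ (using $\omega\neq 0$). The Liouville--Green/Fabry ansatz $\smlambda{\upalpha}{s}\sim e^{\sigma r}r^\mu\sum_{k\geq 0}c_k r^{-k}$ then forces $\sigma^2+f_0^\infty\sigma+g_0^\infty=0$, i.e.\ $\sigma_\pm=\pm i\omega$ (since $f_0^\infty=0$), and matching the next order fixes $\mu_\pm$: substituting and collecting the $r^{-1}$ coefficient gives $2\sigma_\pm\mu_\pm+\sigma_\pm f_1^\infty+g_1^\infty+\text{(corrections)}=0$, yielding $\mu_+=2iM\omega-1-2s$ and $\mu_-=-2iM\omega-1$ after inserting the explicit $f_1^\infty,g_1^\infty$ from Lemma~\ref{lemma:asymp-f-g}. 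Plugging the full series into the ODE and equating powers of $r^{-k}$ produces precisely the recursion \eqref{eq:recursion-irregular-singularity}; the condition for this recursion to determine $c_k$ from $c_0,\dots,c_{k-1}$ is that $(f_0^\infty+2\sigma_\pm)k=\pm 2i\omega k\neq 0$, which holds for all $k\geq 1$ because $\omega\neq 0$ and the two roots $\sigma_+\neq\sigma_-$ are distinct (again $\omega\neq 0$). Borel--Ritt-type summation (or the classical theorem on asymptotic solutions at irregular singular points) then guarantees genuine solutions with these asymptotic series, and the error term is $O(r^{-2|s|-1})$ once the series is truncated at order $|s|$ in the inhomogeneous case, or to arbitrary order when $\smlambda{F}{s}\equiv 0$.

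\emph{The points $r=r_+$ and $r=M$.} When $|a|<M$, $r=r_+$ is a regular singular point: $(r-r_+)f(r)$ and $(r-r_+)^2 g(r)$ are analytic there with the Taylor coefficients $f_k^{r_+},g_k^{r_+}$ of Lemma~\ref{lemma:asymp-f-g}. The Frobenius ansatz $\smlambda{\upalpha}{s}\sim(r-r_+)^\rho\sum c_k(r-r_+)^k$ gives the indicial equation $\rho(\rho-1)+f_0^{r_+}\rho+g_0^{r_+}=0$; using $f_0^{r_+}=1+s$ and $g_0^{r_+}=\xi(s-\xi)$ one checks the two roots are $\rho_\pm=\pm\xi\mp(\pm s)=\sigma_\pm-\mu_\pm$ in the notation of the statement, i.e.\ $\{\,\xi-s,\ -\xi+s\,\}$ --- here the hypothesis $\omega\neq m\upomega_+$ ensures $\xi\neq 0$ so the two roots differ by $2\xi\notin 2\mathbb{Z}$ generically, and the recursion \eqref{eq:recursion-regular-singularity} determines the $c_k$ as long as $\rho_\pm+k$ is not the other indicial root; since we only expand to finite order $2|s|$, one only needs this for $k\leq 2|s|$, which is where a genuine (but harmless, given the excerpt's phrasing "admits an asymptotic expansion") caveat about resonant frequencies would enter. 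When $|a|=M$, $r_+=r_-=M$ and the point $r=M$ becomes an irregular singular point of rank one, handled exactly as the point at infinity: $(r-M)^2 f$ and $(r-M)^4 g$ are analytic with coefficients $f_k^M,g_k^M$, the substitution $t=(r-M)^{-1}$ converts it to the infinity-type analysis, giving $\sigma_\pm=\pm 2iM^2(\omega-m\upomega_+)=\pm\beta$ from $\sigma^2=-g_0^M=-\beta^2$ and $\mu_\pm$ from the next order using $g_1^M,f_1^M$; the recursion is \eqref{eq:recursion-irregular-singularity} with superscript $M$, solvable because $\omega\neq m\upomega_+$ makes $\beta\neq 0$.

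\emph{Main obstacle.} The genuinely delicate point is not any single computation but the bookkeeping: verifying that the indicial roots, the exponents $\sigma_\pm,\mu_\pm$, and the recursion coefficients stated in the lemma are exactly what the general theory produces when fed the explicit $f_k,g_k$ of Lemma~\ref{lemma:asymp-f-g} --- in particular getting the signs and the shift by $\mu_\pm$ right in $\rho_\pm=\sigma_\pm-\mu_\pm$ at the regular singular point, and correctly tracking how the bounded-but-not-vanishing inhomogeneity $\smlambda{F}{s}$ limits the order of the expansion (it contributes starting at a power dictated by $\Delta^{|s|/2+1}/(r^2+a^2)^{|s|+3/2}$, which is why the homogeneous case alone extends to arbitrary order). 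I would also need to confirm that the non-resonance needed for \eqref{eq:recursion-regular-singularity} to run up to order $2|s|$ is automatic for admissible frequencies, or else note it as the implicit genericity already built into the statement; everything else follows by directly quoting \cite[Chapters 5, 7]{Olver1973}.
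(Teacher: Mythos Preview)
Your approach is correct and is precisely the paper's: the paper's proof is two sentences, simply identifying the singularity types (irregular of rank one at $r=\infty$ and at $r=M$ when $|a|=M$, regular at $r=r_+$ when $|a|<M$) and invoking the standard theory in \cite{Olver1973,Erdelyi1956,Ince1956}, together with a one-line remark that a bounded inhomogeneity only perturbs the expansion from order $|s|+1$ onward. Your write-up is simply a fleshed-out version of that citation.

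Two small points on the details you supplied. First, your resonance worry at $r=r_+$ is unnecessary: the indicial roots differ by $2\xi-s$, and since $\xi$ is purely imaginary and nonzero (from $\omega\neq m\upomega_+$) while $s$ is an integer, this difference has nonzero imaginary part and is never an integer, so the Frobenius recursion never degenerates. Second, if you actually carry out the indicial computation you sketched, with $f_0^{r_+}=1+s$ and $g_0^{r_+}=\xi(s-\xi)$, the equation $\rho^2+s\rho+\xi s-\xi^2=0$ has roots $\{\xi-s,\,-\xi\}$, not $\{\xi-s,\,-\xi+s\}$; this is consistent with Definition~\ref{def:uhor-uout} (where $\upalpha_{\mc{H}^-}\sim(r-r_+)^{-\xi}$) and suggests the lemma's stated $\mu_-$ contains a typo rather than that your verification would succeed as written. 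You correctly flagged the bookkeeping as the main hazard.
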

\begin{proof}
In the homogeneous case of \eqref{eq:radial-ODE-alpha}, the ODE has a singularity at $r=r_\pm$, which is regular if $|a|<M$ but irregular of rank 1 if $|a|=M$, and an irregular singularity of rank 1 at $r=\infty$; the result follows by application of the theory laid out in the references at the beginning of the section. If the equation is inhomogeneous but with $\smlambda{F}{s}$ bounded for $r\in[r_+,\infty)$, then it does not affect the expansions near the singularities until order $|s|+1$ and higher.
\end{proof}

For $0\leq k\leq |s|$, the transformed variables are defined by \eqref{eq:def-psi0-separated} and \eqref{eq:transformed-transport-separated}, i.e.\ by differentiating $r$-weighted solutions of \eqref{eq:radial-ODE-alpha}. As a corollary of Lemma~\ref{lemma:aymptotic-analysis-radialODE-Teukolsky},  we obtain their asymptotic series expansions:

\begin{lemma}[Asymptotic analysis of the transformed radial ODEs] \label{lemma:aymptotic-analysis-radialODE-transformed}
Let $s\in\mathbb{Z}$, $k\in\{0,..,|s|\}$, $a\in[-M,M]$, and $(\omega,m,\Lambda)$ be an admissible frequency triple with respect to $a$ and $\Lambda$ such that $\omega\in\mathbb{R}\backslash\{0,m\upomega_+\}$. Assuming $\smlambda{F}{s}(r)$ is bounded for $r\in[r_+,\infty)$, a function, $\smlambda{\lp(\uppsi_{(k)}\rp)}{s}(r)$, defined inductively by the relation~\eqref{eq:transformed-transport-separated} from a function $\smlambda{\upalpha}{s}(r)$ which solves the  radial ODE~\eqref{eq:radial-ODE-alpha}, admits an asymptotic expansion which is given by linear combinations of the two following solutions indexed by $\pm$:
\begin{itemize}
\item as $r\to \infty$, 
\begin{align*}
e^{\sigma_\pm  r}r^{\mu_{(k),\pm}}\lp[\sum_{j=0}^{|s|-k}b_{(k),j}^{[s],\infty,\pm} r^{-j}+O(r^{-|s|+k-1})\rp]\,, 
\end{align*}
where $\sigma_{\pm}=\pm i\omega$, $\mu_{(k),\pm}=\pm 2iM\omega-(|s|-k)(1\pm \sign s)$ and $b_{(k),j}^{[s],\infty,\pm}$ are fully determined by $a$, $s$, $k$, $\omega$, $m$ and the coefficients of the asymptotic expansion for $\smlambda{\upalpha}{s}$ as $r\to \infty$; 
\item as $r\to M$, if $|a|=M$,
\begin{align*}
&e^{\sigma_\pm(r-M)^{-1}}(r-M)^{-\mu_{(k),\pm}}\lp[\sum_{j=0}^{|s|-k}b_{(k),j}^{[s],M,\pm} (r-M)^{j}+O\lp((r-M)^{|s|-k+1}\rp)\rp]\,, 
\end{align*}
where  $\sigma_\pm=\pm 2iM^2(\omega-m\upomega_+)$, $\mu_{(k),\pm}=2iM\omega+(|s|-k)(1\mp \sign s)$ and $b_{(k),j}^{[s],M,\pm}$ are fully determined by $a$, $s$, $k$, $\omega$, $m$ and the coefficients of the asymptotic expansion for $\smlambda{\upalpha}{s}$ as $r\to M$ and $|a|=M$;
\item as $r\to r_+$,
\begin{align*}
(r-r_+)^{\sigma_\pm - \mu_{(k),\pm}}\lp[\sum_{j=0}^{|s|-k}b_{(k),j}^{[s],r_+,\pm} (r-r_+)^{j}+O\lp((r-r_+)^{|s|-k+1}\rp)\rp]\,,
\end{align*}
where $\sigma_\pm=\pm$, $\mu_{(k),\pm} = \frac12(|s|-k)(1\pm \sign s)$ and $b_{(k),j}^{[s],r_+,\pm}$ are fully determined by $a$, $s$, $k$, $\omega$, $m$ and the coefficients of the asymptotic expansion for $\smlambda{\upalpha}{s}$ as $r\to r_+$ and $|a|<M$.
\end{itemize}

In the previous asymptotic expansions, the coefficients can be related to those in Lemma~\ref{lemma:aymptotic-analysis-radialODE-Teukolsky}. For instance, at infinity,
\begin{alignat*}{4}
\lp|b_{(k),0}^{[s],\infty, -\sign s}\rp|^2&= \frac{\mathfrak{D}_{s,k}^\mc{I}}{(2\omega)^{2k}}\lp|c_0^{[s],\infty, -\sign s}\rp|^2\,, &&\qquad \lp|b_{(k),0}^{[s],\infty, \sign s}\rp|^2&&= (2\omega)^{2k}\lp|c_0^{[s],\infty, \sign s}\rp|^2\,; 
\end{alignat*}
at the horizon, if $|a|=M$,
\begin{align*}
\lp|b_{(k),0}^{[s],M, \sign s}\rp|^2&= (2M^2)^{1+2(k-|s|)}\frac{\mathfrak{D}_{s,k}^{\mc{H}}}{\lp[4M^2(\omega-m\upomega_+)\rp]^{2k}}\lp|c_0^{[s],M, \sign s}\rp|^2\,, \\
\lp|b_{(k),0}^{[s],M, -\sign s}\rp|^2&=(2M^2)\lp[4M^2(\omega-m\upomega_+)\rp]^{2k} \lp|c_0^{[s],M, -\sign s}\rp|^2\,;
\end{align*}
and finally, if $|a|<M$, letting $s\geq 0$ without loss of generality, at the horizon,
\begin{align*}
\lp|b_{(k),0}^{[s], r_+, +}\rp|^2&= (2Mr_+)^{1+2(k-|s|)} \frac{\mathfrak{D}_{s,k}^{\mc{H}}(r_+-r_-)^{2|s|}}{\prod_{j=1}^{k}\lp\{[4Mr_+(\omega-m\upomega_+)]^2+(s-j)^2(r_+-r_-)^2\rp\}}\lp|c_0^{[s],r_+, +}\rp|^2\,, \\
\lp|b_{(k),0}^{[-s], r_+, -}\rp|^2&= (2Mr_+)^{1+2(k-|s|)} \frac{\mathfrak{D}_{s,k}^{\mc{H}}}{\prod_{j=1}^{k}\lp\{[4Mr_+(\omega-m\upomega_+)]^2+(s-j)^2(r_+-r_-)^2\rp\}}\lp|c_0^{[-s],r_+, -}\rp|^2\,, \\
\lp|b_{(k),0}^{[\pm s],r_+, \mp}\rp|^2&= (2Mr_+)\prod_{j=0}^{k-1}\lp[\lp(\frac{4Mr_+}{r_+-r_-}\rp)^2(\omega-m\upomega_+)^2+(s-j)^2\rp]\lp|c_0^{[\pm s],r_+, \mp}\rp|^2\,,
\end{align*}
(the product denoted by $\prod$ should be replaced by the identity if $k=0$ or $s=0$), 
for some $\mathfrak{D}_{s,k}^\mc{I},\mathfrak{D}_{s,k}^\mc{H}\geq 0$ depending only on the frequency parameters. While $\mathfrak{D}_{s,0}^\mc{I}=\mathfrak{D}_{s,0}^\mc{H}=1$, the remaining coefficients are generally non-trivial: for $|s|\in\{0,1,2\}$, these are
\begin{align} 
\begin{split}
\mathfrak{D}_{s,1}^{\mc{I}}&=(\Lambda-2am\omega+|s|)^2\,,\\
\mathfrak{D}_{s,2}^{\mc{I}}&=\lp[(\Lambda-2am\omega+2)(\Lambda-2am\omega+3|s|-2)+4(2|s|-1)am\omega\rp]^2+16(|s|-1)^2(2|s|-1)^2M^2\omega^2\,;
\end{split}\label{eq:Ds-infinity}\\
\begin{split}
\mathfrak{D}_{s,1}^{\mc{H}}&=\lp(\Lambda-2am\omega+|s|+(|s|-1)(2|s|-1)\frac{r_+-M}{M}\rp)^2+\frac{a^2m^2}{M^2}(2|s|-1)^2\,,\\
\mathfrak{D}_{s,2}^{\mc{H}}&=\lp[(\Lambda-2am\omega+|s|)(\Lambda-2am\omega+3|s|-2)+\frac{4(r_+-M)}{M}(\Lambda-2am\omega+|s|)(|s|-2)(|s|-1)\rp.\\
&\qquad\qquad \lp. +4am\omega\frac{r_+-M}{M}(2|s|-1)-m^2(2|s|-1)\lp(\frac{a^2}{M^2}(2|s|-3)+\frac{2r_-(r_+-M)}{M^2}\rp)\rp.\\
&\qquad\qquad \lp. +\frac{(r_+-M)^2}{M^2}(|s|-2)(|s|-1)(2|s|-1)\lp(2|s|-1-\frac{2M}{r_+}\rp)\rp]^2\\
&\qquad+\lp[\frac{2\omega}{M}(2|s|-1)\lp(2a^2(|s|-1)-M(r_+-M)-(r_+-M)^2(2|s|-3)\rp)\rp.\\
&\qquad\qquad\lp.+\frac{am(r_+-M)}{Mr_+}(2|s|-3)(2|s|-1)\lp(2|s|-3+\frac{r_+-M}{M}(2|s|-1)\rp)\rp.\\
&\qquad\qquad\lp.+\frac{4am}{M}(|s|-1)(\Lambda-2am\omega+|s|)\rp]^2\,.
\end{split}\label{eq:Ds-horizon}
\end{align}
For convenience, when $k=|s|$, we write $\mathfrak{D}_{s}^{\mc H}=\mathfrak{D}_{s,k}^{\mc H}$, $\mathfrak{D}_{s}^{\mc I}=\mathfrak{D}_{s,k}^{\mc I}$, $\swei{A}{s}_{\mc{I}^\pm}=\swei{A}{s}_{|s|,\,\mc{I}^\pm}$ and $\swei{A}{s}_{\mc{H}^\pm}=\swei{A}{s}_{|s|,\,\mc{H}^\pm}$.
\end{lemma}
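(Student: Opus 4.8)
\textbf{Plan of proof for Lemma~\ref{lemma:aymptotic-analysis-radialODE-transformed}.}
The starting point is Lemma~\ref{lemma:aymptotic-analysis-radialODE-Teukolsky}, which already records the two-parameter family of asymptotic solutions of the Teukolsky radial ODE~\eqref{eq:radial-ODE-alpha} near $r=r_+$, $r=M$ (when $|a|=M$) and $r=\infty$, together with explicit leading coefficients $c_0^{[s],\bullet,\pm}$ and recursions for the higher coefficients. The transformed variables $\smlambda{\lp(\uppsi_{(k)}\rp)}{s}$ are obtained from $\smlambda{\upalpha}{s}$ by the purely algebraic-differential procedure \eqref{eq:def-psi0-separated}--\eqref{eq:transformed-transport-separated}: first an $r$-weight $w^{-|s|/2}$ (equivalently $\Delta^{s/2}(r^2+a^2)^{1/2}$), then $k$-fold application of $w^{-1}\mc{L}$, where $\mc{L}=L$ or $\uL$ is a first-order operator with rational coefficients in the separated picture. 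Since each of these operations maps a convergent asymptotic series of the stated form to another such series — multiplying an exponential-times-power-times-series ansatz by a rational function, or differentiating it, only shifts the power $\mu$ by an integer and produces a new series in the same variable — the first step is simply to track how the triple $(\sigma,\mu,\{\text{series}\})$ transforms under each operation. This gives the exponents $\sigma_\pm$ and $\mu_{(k),\pm}$ claimed in the statement and shows the remainder terms improve/degrade by the advertised integer amounts.

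Concretely, I would argue by induction on $k$. The base case $k=0$ is \eqref{eq:def-psi0-separated}: multiplying the $r\to\infty$ expansion of $\smlambda{\upalpha}{s}$ by $\Delta^{s/2}(r^2+a^2)^{1/2}=r^{s+1}(1+O(r^{-1}))$ turns $\mu_\pm=\pm 2iM\omega-1-2s$ for $\alpha$ into $\mu_{(0),+}=2iM\omega-(|s|)(1+\operatorname{sign}s)$ etc.; one checks the three cases $r\to r_+$, $r\to M$, $r\to\infty$ separately, using the expansions of $\Delta$ near each point recorded implicitly in Lemma~\ref{lemma:asymp-f-g}. For the inductive step, observe that in the separated picture $\mc{L}=\pm\frac{d}{dr^*}-i\omega+\frac{iam}{r^2+a^2}$, and $\frac{d}{dr^*}=\frac{\Delta}{r^2+a^2}\frac{d}{dr}=w\,(r^2+a^2)\,\frac{d}{dr}$, so $w^{-1}\mc{L}=\pm(r^2+a^2)\frac{d}{dr}+w^{-1}\bigl(-i\omega+\frac{iam}{r^2+a^2}\bigr)$. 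Applying this to $e^{\sigma r}r^{\mu}(\sum b_j r^{-j}+\dots)$ near $r=\infty$: the exponential derivative $\sigma$ combines with the constant term $-i\omega$ of $w^{-1}\mc{L}$ and, crucially, the leading behaviours cancel precisely for one sign — this is exactly the mechanism by which $\mu_{(k),+}=\mu_{(k-1),+}$ versus $\mu_{(k),-}=\mu_{(k-1),-}-2$ (or the horizon analogue) arises, encoding that the "outgoing at $\mc I^+$" solution is the one along which $\mc L$ acts by the exponential factor. The same cancellation analysis at $r=r_+$ (using $w\sim (r-r_+)$, $w^{-1}\sim (r-r_+)^{-1}$) produces the exponents $\sigma_\pm-\mu_{(k),\pm}$ with $\mu_{(k),\pm}=\tfrac12(|s|-k)(1\pm\operatorname{sign}s)$, and at $r=M$ in the extremal case similarly.

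The final, and most delicate, part is the explicit identification of the leading coefficients $|b_{(k),0}^{[s],\bullet,\pm}|^2$ in terms of $|c_0^{[s],\bullet,\pm}|^2$ and the constants $\mathfrak{D}_{s,k}^{\mc I},\mathfrak{D}_{s,k}^{\mc H}$. For the "$\pm\operatorname{sign}s$" solution along which $\mc L$ acts essentially by the exponential, each application of $w^{-1}\mc L$ multiplies the leading coefficient by a clean factor: near $\mc I^+$ this factor is $-2i\omega$ (whence the $(2\omega)^{2k}$, since we take modulus squared of $k$ factors), near $\mc H^+$ for $|a|<M$ it is $\pm i\cdot\frac{4Mr_+}{r_+-r_-}(\omega-m\upomega_+)$ up to a shift by $(s-j)$ coming from the regular-singular-point power (whence the product $\prod_{j=0}^{k-1}[\cdots]$), and near $\mc H^+$ for $|a|=M$ it is $\pm 4M^2(\omega-m\upomega_+)$. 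For the other solution the leading coefficient is \emph{not} simply rescaled: $\mc L$ must hit the power/series part, so $b_{(k),0}$ is a $k$-step linear combination of $c_0^{[s],\bullet,\mp}$ \emph{and lower coefficients} $c_j^{[s],\bullet,\mp}$; bundling all of this into a single non-negative constant $\mathfrak{D}_{s,k}^{\mc I}$ (resp.\ $\mathfrak{D}_{s,k}^{\mc H}$) — manifestly a polynomial in the recursion data, hence in $a,M,\omega,m,\Lambda$ — and computing it by brute force for $k\le 2$, $|s|\le 2$ via the recursions \eqref{eq:recursion-irregular-singularity} and \eqref{eq:recursion-regular-singularity} yields the displayed formulas \eqref{eq:Ds-infinity}--\eqref{eq:Ds-horizon}. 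Non-negativity is automatic since $\mathfrak{D}_{s,k}^{\mc I}=|b_{(k),0}|^2/|c_0|^2$ times a positive power of $2\omega$ when $c_0\ne 0$, and the explicit formulas confirm it is a sum of squares in the real case. The main obstacle here is purely bookkeeping: organizing the $k$-fold recursion so that the "transmitted" versus "mixed" solutions are consistently tracked, and carrying out the (lengthy but mechanical) symbolic computation of $\mathfrak{D}_{s,1}^{\mc H}$ and especially $\mathfrak{D}_{s,2}^{\mc H}$ — all the conceptual content is already in Lemma~\ref{lemma:aymptotic-analysis-radialODE-Teukolsky} and the transport structure. The identity $\mathfrak{D}_{s,0}^{\mc I}=\mathfrak{D}_{s,0}^{\mc H}=1$ is trivial since at $k=0$ no $\mc L$ has been applied. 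Finally, the statement about arbitrary-order validity when $\smlambda{F}{s}\equiv 0$ is inherited directly from the corresponding clause of Lemma~\ref{lemma:aymptotic-analysis-radialODE-Teukolsky}, since $w^{-1}\mc L$ preserves asymptotic series to all orders.
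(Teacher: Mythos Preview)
Your proposal is correct and follows essentially the same approach as the paper: apply $(\mc{L}/w)^k$ to the asymptotic series of Lemma~\ref{lemma:aymptotic-analysis-radialODE-Teukolsky} after the initial rescaling, track the cancellation/non-cancellation dichotomy between the two $\pm$ branches (the paper phrases this as $\mc{L}$ either ``creating cancellation'' with the exponential or not), and for the branch with cancellation compute the resulting constants $\mathfrak{D}_{s,k}^{\mc I},\mathfrak{D}_{s,k}^{\mc H}$ by brute force from the recursions. One minor slip: the $k=0$ weight on $\upalpha$ is $(r^2+a^2)^{-|s|+1/2}\Delta^{|s|(1+\operatorname{sign}s)/2}$, not $\Delta^{s/2}(r^2+a^2)^{1/2}$ (the latter gives $u$, not $\uppsi_{(0)}$); and your claim ``$\mu_{(k),-}=\mu_{(k-1),-}-2$'' has the sign of the shift reversed (the exponent \emph{increases} by $1\pm\operatorname{sign}s$ as $k\to k+1$), but these do not affect the argument.
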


\begin{proof}
The entire statement follows by considering the action of $\lp(\frac{\mc{L}}{w}\rp)^{k}$ on the asymptotic series identified in Lemma~\ref{lemma:aymptotic-analysis-radialODE-Teukolsky}, after multiplication by a suitable weight. 

As $r\to \infty$, recalling 
$$\mc{L} = -\sign s\lp(1-\frac{2Mr}{r^2+a^2}\rp)\frac{d}{dr}-i\omega+\frac{iam}{r^2+a^2}\,,$$
we find that there is a cancellation between $-i\omega$ in the operator if the $r$ derivative hits $e^{-\sign s i\omega r}r^{-2iM \sign s}$:
\begin{align*}
\lp(-\sign s\frac{\Delta}{r^2+a^2}\frac{d}{dr}-i\omega\rp)\lp(e^{-\sign s \, i\omega r}r^{-2iM \sign s}\rp)&=\lp(i\omega+\frac{2iM\omega}{r}\rp)\lp(1-\frac{2M}{r}\rp)-i\omega+O(r^{-2})=O(r^{-2})\,.
\end{align*}
In fact, as can be seen above, upon application of $\mc{L}$, the zeroth and first order terms in the series cancel out in these cases, which is compensated by multiplication by $w$. Thus, the leading order coefficient in the resulting series, $b_{(k),0}^{[s],\infty,-\sign s}$, is a nontrivial combination of $\omega$, $a$, $m$ and the $k$ first coefficients of the original asymptotic expansion for solutions of \eqref{eq:radial-ODE-alpha}; as those coefficients can be computed explicitly by the formula \eqref{eq:recursion-irregular-singularity} and the expansions in Lemma~\ref{lemma:asymp-f-g}, one can in principle obtain $b_{(k),0}^{[s],\infty,-\sign s}$ explicitly in terms of $c_{0}^{[s],\infty,\sign s }$. For simplicity, we give the result only for the cases $k=0,1,2$.

On the other hand, if the exponential dependence is $e^{\sign s\, i\omega}r^{2iM\omega \sign s}$, application of $\frac{\mc{L}}{w}$ does not create any cancellation: the series is essentially multiplied by $-2i\sign s \omega r^2$, so the new leading order coefficient, $b_{(k),0}^{[s],\infty,+\sign s}$,  depends only on $\omega$ and on the previous leading order coefficient.

For $|a|=M$, as $r\to M$, we use the identity
$$\mc{L} = \frac{\Delta}{r^2+M^2}\lp(-\sign s\frac{d}{dr}-\frac{\beta}{(r-M)^2}-\frac{i\omega(r+M)}{r-M}\rp)\,,$$
and for $|a|<M$, as $r\to r_+$, we use
$$\mc{L} = \frac{\Delta}{r^2+a^2}\lp(-\sign s\frac{d}{dr}+\frac{\xi}{r-r_+}-\frac{\xi +i\omega(r+r_+)}{r-r_-}\rp)\,.$$
With these, we can note that by application of $\frac{\mc{L}}{w}$, there can either be cancellation between the terms involving $\beta$ or $\xi$, respectively, or the opposite. The leading order coefficient is very simple to obtain in the latter case. In the former, we will need to appeal to \eqref{eq:recursion-irregular-singularity}  or \eqref{eq:recursion-regular-singularity} and the expansions in Lemma~\ref{lemma:asymp-f-g} to compute $b_{(k),0}^{[s],M,\pm}$ or $b_{[s],(k),0}^{r_+,\pm}$, respectively.

To be more concrete, let us consider the case $a<M$. We can explicitly compute, for $s>0$,
\begin{align*}
&\lp(\frac{\uL}{w}\rp)^k \lp(\sum_{j=0}^{|s|}c_{j}^{[s],\infty,+} (r-r_+)^{\xi-s+j}\Delta^{s}(r^2+a^2)^{1/2-s}\rp)\\
%&\quad =\lp(\frac{\uL}{w}\rp)^{k-1}\sum_{j=0}^{|s|} c_{j}^{[s],\infty,+}(r^2+a^2)\lp(-\frac{d}{dr}+\frac{\xi}{r-r_+}-\frac{\xi +i\omega(r+r_+)}{r-r_-}\rp)(r-r_+)^{\xi+j} (r-r_-)^s(r^2+a^2)^{1/2-s}\\
&\quad =\lp(\frac{\uL}{w}\rp)^{k-1}\sum_{j=0}^{|s|} c_{j}^{[s],\infty,+}\lp(\frac{j}{r-r_+}+\frac{(2s-1)r}{r^2+a^2}-\frac{\xi+s +i\omega(r+r_+)}{r-r_-}\rp)(r-r_+)^{\xi+j} (r-r_-)^s(r^2+a^2)^{3/2-s}\\
%&\quad =\lp(\frac{\uL}{w}\rp)^{k-2}\sum_{j=0}^{|s|}\lp\{ c_{j}^{[s],\infty,+}(r-r_+)^{\xi+j} (r-r_-)^s(r^2+a^2)^{3/2-s}\rp.\\
%&\quad\qquad \times\lp[-\frac{d}{dr}\lp(\frac{j}{r-r_+}+\frac{(2s-1)r}{r^2+a^2}-\frac{\xi+s +i\omega(r+r_+)}{r-r_-}\rp)\rp.\\
%&\quad\qquad\quad\lp.\lp.+\lp(\frac{j}{r-r_+}+\frac{(2s-1)r}{r^2+a^2}-\frac{\xi+s +i\omega(r+r_+)}{r-r_-}\rp)\lp(\frac{j}{r-r_+}+\frac{(2s-3)r}{r^2+a^2}-\frac{\xi+s +i\omega(r+r_+)}{r-r_-}\rp)\rp]\rp\}\\
&\quad = \dots
\end{align*}
For instance, if $k=1$, we obtain
\begin{align*}
b_{(1),0}^{[+|s|],r_+,+} &= (2Mr_+)^{3/2-|s|}(r_+-r_-)^{|s|-1}\lp[c_{1}^{[+|s|],r_+,+}(r_+-r_-)\rp.\\
&\qquad\qquad\qquad\qquad\qquad\qquad\qquad\qquad\lp.+\lp(\frac{(2|s|-1)r_+}{2Mr_+} -\xi-|s| -i\omega(r+r_+)\rp) c_{0}^{[+|s|],r_+,+}\rp]\\
&=c_{0}^{[+|s|],r_+,+}(2Mr_+)^{3/2-|s|}(r_+-r_-)^{|s|-1}\frac{\mathfrak{D}_{s,1,+}^\mc{H}}{\lp(-2\xi +|s|-1\rp)}\,,
\end{align*}
where $\mathfrak{D}_{s,1,+}^\mc{H}$ is some complex-valued constant depending on the angular parameters and $s$; by a similar procedure, we can obtain
\begin{align*}
b_{(k),0}^{[+|s|],r_+,+}&=c_{0}^{[+|s|],r_+,+}(2Mr_+)^{1/2+k-|s|}(r_+-r_-)^{|s|-k}\frac{\mathfrak{D}_{s,k,+}^\mc{H}}{\prod_{j=1}^{k}\lp(-2\xi +|s|-j\rp)}\,,\\
b_{(k),0}^{[-|s|],r_+,-}&=c_{0}^{[-|s|],r_+,-}(2Mr_+)^{1/2+k-|s|}(r_+-r_-)^{-k}\frac{\mathfrak{D}_{s,k,-}^\mc{H}}{\prod_{j=1}^{k}\lp(-2\xi -|s|+j\rp)}\,,
\end{align*}
where $\lp|\mathfrak{D}_{s,k,+}^\mc{H}\rp|^2=\lp|\mathfrak{D}_{s,k,-}^\mc{H}\rp|^2$ is identified as $\mathfrak{D}_{s,k}^\mc{H}$.

Our computations hold in the inhomogeneous case as well because,  by the properties of the inhomogeneity
\begin{align*}
\lp(\frac{\mc{L}}{w}\rp)^{k}\lp(\frac{\Delta^{1+|s|}}{(r^2+a^2)^{2|s|+3/2}}\smlambda{F}{s} \rp)&= O\lp(r^{-2(|s|-k)-1}\rp)\text{~as~} r\to \infty\,,\\
\lp(\frac{\mc{L}}{w}\rp)^{k}\lp(\frac{\Delta^{1+|s|}}{(r^2+a^2)^{2|s|+3/2}}\smlambda{F}{s} \rp) &= O\lp(\Delta^{1+|s|-k}\rp) \text{~as~} r\to r_+\,,
\end{align*}
 hence it does not affect the leading asymptotics of the solution.
\end{proof}

\subsubsection{Boundary conditions for the Teukolsky radial ODE}
\label{sec:radial-ODE-bdry-conditions-Teukoslsky}

To discuss boundary conditions for \eqref{eq:radial-ODE-alpha}, it is convenient to label the  functions in Lemma~\ref{lemma:aymptotic-analysis-radialODE-Teukolsky}:
\begin{definition} Fix $M>0$, $|a|\leq M$, $s\in\frac12\mathbb{Z}$ and an admissible frequency triple $(\omega,m,\Lambda)$ with respect to $s$ such that $\omega\in\mathbb{R}\backslash\{0,m\upomega_+\}$.
\begin{enumerate}
\item Define $\upalpha^{[s],\, (a\omega)}_{ml,\,\mc{H}^+}$ and $\upalpha^{[s],\, (a\omega)}_{ml,\,\mc{H}^-}$ to be the unique classical solutions to the {\normalfont homogeneous} radial ODE~(\ref{eq:radial-ODE-alpha}) with boundary conditions
\begin{enumerate}
\item if $|a|<M$,
  \begin{enumerate}
\item $\upalpha^{[s],\, (a\omega)}_{ml,\,\mc{H}^\pm}(r)(r-r_+)^{\mp  \xi + s(1\pm 1)/2}$ are smooth at $r=r_+\,,$
\item $\lp|\lp((r-r_+)^{\mp  \xi + s(1\pm 1)/2}\upalpha^{[s],\, (a\omega)}_{ml,\,\mc{H}^\pm}\rp)\big|_{r=r_+}\rp|^2=1\,;$
  \end{enumerate}
\item if $|a|=M$,
  \begin{enumerate}
\item $(r-M)^{\pm 2iM\omega+s(1\pm 1)}e^{\mp \beta(r-M)^{-1}}\upalpha^{[s],\, (a\omega)}_{ml,\,\mc{H}^\pm }(r)$ are smooth at $r=M\,,$
\item $\lp|\lp((r-M)^{\pm 2iM\omega+s(1\pm 1)}e^{\mp \beta(r-M)^{-1}}\upalpha^{[s],\, (a\omega)}_{ml,\,\mc{H}^\pm }\rp)\big|_{r=M}\rp|^2=1\,.$
  \end{enumerate}
\end{enumerate}
\item Define $\upalpha^{[s],\, (a\omega)}_{ml,\,\mc{I}^+}$ and $\upalpha^{[s],\, (a\omega)}_{ml,\,\mc{I}^-}$ to be the unique classical solution to the {\normalfont homogeneous} radial ODE~(\ref{eq:radial-ODE-alpha}) and  boundary conditions
\begin{enumerate}
\item $\upalpha^{[s],\, (a\omega)}_{ml,\,\mc{I}^\pm }\sim e^{\pm i\omega r}r^{\pm 2Mi\omega-s(1\pm 1)-1}$ asymptotically\footnote{This notation means that there are constants $\{c_k\}_{k=0}^\infty$ such that for every $N\geq 1$, as $r\to\infty$ (see Lemma~\ref{lemma:aymptotic-analysis-radialODE-Teukolsky}), $$\upalpha^{[s],\, (a\omega)}_{ml,\,\mc{I}^+}(r) = e^{i\omega r+ 2iM\omega \log r}\sum_{k=0}^{N}c_kr^{-2s-k-1}+O(r^{-2s-N-2})\,.$$} as $r\to \infty\,.$
\item $\lp|\lp(e^{\mp i\omega r}r^{\mp 2Mi\omega+s(1\pm 1)+1}\upalpha^{[s],\, (a\omega)}_{ml,\,\mc{I}^\pm }\rp)\big|_{r=\infty}\rp|^2=1\,.$
\end{enumerate}
\end{enumerate}
\label{def:uhor-uout}

Finally, we define $u^{[s],\, (a\omega)}_{ml,\,\mc{I}^\pm}$ and $u^{[s],\, (a\omega)}_{ml,\,\mc{H}^\pm}$ by the following rescaling of the preveiously introduced functions:
\begin{align*}
u^{[s],\, (a\omega)}_{ml,\,\mc{I}^\pm}=(r^2+a^2)^{1/2}\upalpha^{[s],\, (a\omega)}_{ml,\,\mc{I}^\pm}\,, \qquad u^{[s],\, (a\omega)}_{ml,\,\mc{H}^\pm}=(r^2+a^2)^{1/2}\upalpha^{[s],\, (a\omega)}_{ml,\,\mc{H}^\pm}\,.
\end{align*}
\end{definition}

By Lemma~\ref{lemma:aymptotic-analysis-radialODE-Teukolsky}, it is clear that any solution to the radial ODE \eqref{eq:radial-ODE-alpha} has a representation  in terms of those defined in Definition~\ref{def:uhor-uout}:

\begin{lemma} \label{lemma:apha-general-asymptotics} Fix $M>0$, $|a|\leq M$, $s\in\frac12\mathbb{Z}$ and an admissible frequency triple $(\omega,m,\Lambda)$ with respect to $a$ and $s$ such that $\omega\in\mathbb{R}\backslash\{0,m\upomega_+\}$. If $\smlambda{F}{s}$ is bounded for $r\in[r_+,\infty)$, there are $\swei{a}{s}_{\mc{H}^\pm},\swei{a}{s}_{\mc{I}^\pm}\in\mathbb{C}$, depending only on $(\omega,m,\Lambda)$ such that a solution $\upalpha_{m\Lambda}^{[s],\,(a\omega)}$ to the radial ODE~\eqref{eq:radial-ODE-alpha} can be written as,
\begin{equation}
\begin{split}
\smlambda{\upalpha}{s}&= \swei{a}{s}_{\mc{H}^+}\cdot\upalpha^{[s],\, (a\omega)}_{ml,\,\mc{H}^+}+\swei{a}{s}_{\mc{H}^-}\cdot\upalpha^{[s],\, (a\omega)}_{ml,\,\mc{H}^-} + O(\Delta^{|s|-k+1})\,,\\ 
\smlambda{\upalpha}{s}&= \swei{a}{s}_{\mc{I}^+}\cdot\upalpha^{[s],\, (a\omega)}_{ml,\,\mc{I}^+}+\swei{a}{s}_{\mc{I}^-}\cdot\upalpha^{[s],\, (a\omega)}_{ml,\,\mc{I}^-} +O(r^{-2(|s|-k)-2})\,,
\end{split}\label{eq:alpha-general-asymptotics}
\end{equation}
respectively as $r\to r_+$ and $r\to \infty$.
\end{lemma}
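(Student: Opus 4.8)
The statement is an immediate corollary of the asymptotic analysis already carried out, so the proof is essentially a bookkeeping argument using the dimension count for a second-order linear ODE. Here is how I would organize it.

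\medskip

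\textbf{Plan.} The key point is that the (homogeneous) radial ODE \eqref{eq:radial-ODE-alpha} is a second-order linear ODE with regular coefficients on $(r_+,\infty)$ and with a (regular or irregular, rank $1$) singular point at $r=r_+$ and an irregular rank $1$ singular point at $r=\infty$. Hence its solution space is two-dimensional. First I would invoke Lemma~\ref{lemma:aymptotic-analysis-radialODE-Teukolsky}: near $r=\infty$ it produces two solutions with the asymptotics $e^{\pm i\omega r}r^{\pm 2iM\omega - s(1\pm 1) - 1}(1+O(r^{-1}))$, and by Definition~\ref{def:uhor-uout} these are (up to the chosen unit normalization) exactly $\upalpha^{[s],\,(a\omega)}_{ml,\,\mc{I}^\pm}$. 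Since their leading exponential behaviors $e^{+i\omega r}$ and $e^{-i\omega r}$ differ (here we use $\omega\in\mathbb{R}\setminus\{0\}$, so $\omega\neq 0$ guarantees $e^{i\omega r}$ and $e^{-i\omega r}$ are genuinely independent branches), the Wronskian of $\upalpha^{[s],\,(a\omega)}_{ml,\,\mc{I}^+}$ and $\upalpha^{[s],\,(a\omega)}_{ml,\,\mc{I}^-}$ is non-vanishing, so they form a basis of the solution space of the homogeneous equation. An identical argument at $r=r_+$, distinguishing the cases $|a|<M$ (regular singular point, indicial exponents $\pm\xi - s(1\pm1)/2$, with $\xi\neq 0$ since $\omega\neq m\upomega_+$) and $|a|=M$ (irregular rank $1$, leading behavior $e^{\pm\beta(r-M)^{-1}}$ with $\beta\neq 0$ again because $\omega\neq m\upomega_+$), shows that $\upalpha^{[s],\,(a\omega)}_{ml,\,\mc{H}^+}$ and $\upalpha^{[s],\,(a\omega)}_{ml,\,\mc{H}^-}$ also form a basis.

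\medskip

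\textbf{The inhomogeneous case.} For the inhomogeneous equation with $\smlambda{F}{s}$ bounded on $[r_+,\infty)$, I would argue that it admits \emph{some} classical solution $\smlambda{\upalpha}{s}$: one constructs a particular solution by variation of parameters against the basis just obtained, using that $\sml{H}{s}$ in \eqref{eq:radial-ODE-u} carries the weight $\Delta^{|s|/2+1}(r^2+a^2)^{-|s|-3/2}$, which decays like $\Delta^{|s|+1}$ as $r\to r_+$ and like $r^{-2|s|-1}$ as $r\to\infty$ — so the relevant integrals converge and the particular solution does not disturb the leading asymptotics (this is the content of the last sentences of the proofs of Lemmas~\ref{lemma:aymptotic-analysis-radialODE-Teukolsky} and \ref{lemma:aymptotic-analysis-radialODE-transformed}). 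Given any fixed solution $\smlambda{\upalpha}{s}$, the difference of it and any other solution solves the homogeneous equation, so it suffices to expand $\smlambda{\upalpha}{s}$ itself. But $\smlambda{\upalpha}{s}$, being a solution of \eqref{eq:radial-ODE-alpha}, minus its variation-of-parameters particular part is homogeneous and hence, by the basis property, a linear combination of $\upalpha^{[s],\,(a\omega)}_{ml,\,\mc{H}^\pm}$ (resp.\ $\upalpha^{[s],\,(a\omega)}_{ml,\,\mc{I}^\pm}$); absorbing the particular part (whose asymptotics are lower order) yields \eqref{eq:alpha-general-asymptotics} with uniquely determined coefficients $\swei{a}{s}_{\mc{H}^\pm}$, $\swei{a}{s}_{\mc{I}^\pm}$ depending only on $(\omega,m,\Lambda)$ (and on the choice of particular solution, which we fix once and for all). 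Uniqueness of the coefficients follows from linear independence of the basis elements.

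\medskip

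\textbf{Main obstacle.} There is no serious obstacle here — the statement is a routine consequence of ODE theory combined with the explicit asymptotic expansions already established. The only point requiring a little care is making precise that the boundedness of $\smlambda{F}{s}$ is enough to (i) guarantee existence of a classical solution on all of $[r_+,\infty)$ and (ii) ensure that the inhomogeneity does not alter the leading-order asymptotics used to identify the basis, i.e.\ that the particular solution is subleading at both ends. Both follow from the decay of the weighted inhomogeneity $\sml{H}{s}$ noted above, exactly as in the proof of Lemma~\ref{lemma:aymptotic-analysis-radialODE-Teukolsky}. I would therefore keep the written proof short: cite Lemma~\ref{lemma:aymptotic-analysis-radialODE-Teukolsky}, note the two-dimensionality, verify linear independence of each pair via the distinct leading exponentials (using $\omega\neq 0$ and, when needed, $\omega\neq m\upomega_+$), and conclude by variation of parameters for the inhomogeneous correction.
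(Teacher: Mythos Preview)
Your proposal is correct and takes essentially the same approach as the paper, which in fact does not give a separate proof at all: the lemma is stated as an immediate consequence of Lemma~\ref{lemma:aymptotic-analysis-radialODE-Teukolsky} (``By Lemma~\ref{lemma:aymptotic-analysis-radialODE-Teukolsky}, it is clear that\ldots''), and your argument is precisely the standard ODE justification one would supply to flesh this out---two-dimensionality of the homogeneous solution space, linear independence of each pair via the distinct leading exponents (enforced by $\omega\neq 0$ at $r=\infty$ and $\omega\neq m\upomega_+$ at $r=r_+$), and the observation that the weighted inhomogeneity decays fast enough not to disturb the leading asymptotics.
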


Often, we want to consider a more restrictive set of solutions to \eqref{eq:radial-ODE-alpha}, which are said to have \textit{outgoing boundary conditions}.

\begin{definition}[Outgoing boundary conditions] \label{def:outgoing-bdry-teukolsky} We say $\smlambda{\upalpha}{s}$ is a solution of the radial ODE \eqref{eq:radial-ODE-alpha} with {\normalfont outgoing boundary conditions} if 
\begin{equation}\label{eq:outgoing-bdry-teukolsky} 
\begin{split}
\smlambda{\upalpha}{s}&= \swei{a}{s}_{\mc{H}^+}\cdot\upalpha^{[s],\, (a\omega)}_{ml,\,\mc{H}^+}+ O(\Delta)\,,\\
\smlambda{\upalpha}{s}&= \swei{a}{s}_{\mc{I}^+}\cdot\upalpha^{[s],\, (a\omega)}_{ml,\,\mc{I}^+} +O(r^{-2})\,,
\end{split}
\end{equation}
respectively as $r\to r_+$ and $r\to \infty$. Note that, unlike in the previous Lemma~\ref{lemma:apha-general-asymptotics}, the last term in the identities needn't be lower order.
%\begin{equation} \label{eq:outgoing-bdry-teukolsky} 
%\begin{split}
%\lp(\Delta^{s}(r^2+a^2)^{1/2}\smlambda{\upalpha}{s}\rp)'-i\omega\lp(\Delta^{s}(r^2+a^2)^{1/2}\smlambda{\upalpha}{s}\rp)&=O(r^{-1}) \text{~as~} r\to \infty\,,\\
%\lp(\Delta^{s}(r^2+a^2)^{1/2}\smlambda{\upalpha}{s}\rp)'+i(\omega-m\upomega_+)\lp(\Delta^{s}(r^2+a^2)^{1/2}\smlambda{\upalpha}{s}\rp)&=O(r-r_+) \text{~as~} r\to r_+\,.
%\end{split}
%\end{equation}
%Then, Lemma~\ref{lemma:apha-general-asymptotics} holds with $\swei{a}{s}_{\mc{I}^-}=\swei{a}{s}_{\mc{H}^-}=0$.
\end{definition}

\begin{remark} Definitions \ref{def:uhor-uout} and \ref{def:outgoing-bdry-teukolsky} as well as Lemma~\ref{lemma:apha-general-asymptotics} can be alternatively written in terms of $\smlambda{u}{s}=(r^2+a^2)^{1/2}\Delta^{s/2}\smlambda{\upalpha}{s}$: we simply let
\begin{align*}
u^{[s],\, (a\omega)}_{m\Lambda,\,\mc{I}^\pm}:=(r^2+a^2)^{1/2}\Delta^{s/2}\upalpha^{[s],\, (a\omega)}_{m\Lambda,\,\mc{I}^\pm}\,, \quad u^{[s],\, (a\omega)}_{ml,\,\mc{H}^\pm}:=(r^2+a^2)^{1/2}\Delta^{s/2}\upalpha^{[s],\, (a\omega)}_{ml,\,\mc{H}^\pm}\,.
\end{align*}
\end{remark}

\subsubsection{Boundary conditions for the transformed radial ODEs}
\label{sec:radial-ODE-bdry-conditions-transformed}

To discuss boundary conditions for \eqref{eq:transformed-k-separated}, it is again convenient to label the  functions in Lemma~\ref{lemma:aymptotic-analysis-radialODE-transformed}:
\begin{definition} Fix $M>0$, $|a|\leq M$, $s\in\frac12\mathbb{Z}$ and an admissible frequency triple $(\omega,m,\Lambda)$ with respect to $s$ such that $\omega\in\mathbb{R}\backslash\{0,m\upomega_+\}$.
\begin{enumerate}
\item Define ${\lp(\uppsi_{(k)}\rp)}^{[s],\, (a\omega)}_{m\Lambda,\,\mc{H}^+}$ and ${\lp(\uppsi_{(k)}\rp)}^{[s],\, (a\omega)}_{m\Lambda,\,\mc{H}^-}$ as the unique solutions to the {\normalfont homogeneous} radial ODE~(\ref{eq:transformed-k-separated}) with boundary conditions
\begin{enumerate}
\item if $|a|<M$,
  \begin{enumerate}
\item ${\lp(\uppsi_{(k)}\rp)}^{[s],\, (a\omega)}_{m\Lambda,\,\mc{H}^\pm }(r)(r-r_+)^{\mp \xi - \frac12(|s|-k)(1\mp\sign s)}$ are smooth at $r=r_+\,,$ and
\item $\lp|\lp(w^{-\frac{|s|-k}{2}}\Delta^{\pm \frac{|s|-k}{2}\sign s}(r-r_+)^{\mp \xi}{\lp(\uppsi_{(k)}\rp)}^{[s],\, (a\omega)}_{ml,\,\mc{H}^\pm}\rp)\Big|_{r=r_+}\rp|^2=1\,;$
  \end{enumerate}
\item if $|a|=M$,
  \begin{enumerate}
\item $(r-M)^{\pm 2iM\omega-(|s|-k)(1\mp \sign s)}{\lp(\uppsi_{(k)}\rp)}^{[s],\, (a\omega)}_{ml,\,\mc{H}^\pm}(r)e^{\mp \beta(r-M)^{-1}}$ are  smooth at $r=M\,,$ and 
\item $\lp|\lp(w^{-\frac{|s|-k}{2}}(r-M)^{\pm 2iM\omega \pm (|s|-k)\sign s}e^{\mp \beta(r-M)^{-1}}{\lp(\uppsi_{(k)}\rp)}^{[s],\, (a\omega)}_{ml,\,\mc{H}^\pm}\rp)\Big|_{r=M}\rp|^2=1\,.$
  \end{enumerate}
\end{enumerate}
\item Define ${\lp(\uppsi_{(k)}\rp)}^{[s],\, (a\omega)}_{ml,\,\mc{I}^+}$ and ${\lp(\uppsi_{(k)}\rp)}^{[s],\, (a\omega)}_{ml,\,\mc{I}^-}$ as the unique classical solutions to the {\normalfont homogeneous} radial ODE~(\ref{eq:transformed-k-separated}) with  boundary conditions
\begin{enumerate}
\item ${\lp(\uppsi_{(k)}\rp)}^{[s],\, (a\omega)}_{ml,\,\mc{I}^\pm}\sim e^{\pm i\omega r}r^{\pm 2Mi\omega-(|s|-k)(1\pm \sign s)}$ asymptotically  as $r\to \infty\,,$ and
\item $\lp|\lp(e^{\mp i\omega r}r^{\mp 2iM\omega}w^{-\frac{|s|-k}{2}}\Delta^{\pm \frac{|s|-k}{2}\sign s}{\lp(\uppsi_{(k)}\rp)}^{[s],\, (a\omega)}_{ml,\,\mc{I}^\pm}\rp)\big|_{r=\infty}\rp|^2=1\,.$ 
\end{enumerate}
\end{enumerate}
\label{def:psihor-psiout}
\end{definition}

By Lemma~\ref{lemma:aymptotic-analysis-radialODE-transformed}, a solution to the radial ODE \eqref{eq:transformed-k-separated}  arising from a solution to the radial ODE \eqref{eq:radial-ODE-alpha} has a representation in terms of the functions defined in Definition~\ref{def:psihor-psiout}:

\begin{lemma} \label{lemma:uppsi-general-asymptotics} Fix $M>0$, $|a|\leq M$, $s\in\frac12\mathbb{Z}$, $k\in\{0,...,|s|\}$ and an admissible frequency triple $(\omega,m,\Lambda)$ with respect to $a$ and $s$ such that $\omega\in\mathbb{R}\backslash\{0,m\upomega_+\}$. If $\smlambda{F}{s}$ is bounded for $r\in[r_+,\infty)$, there are $\swei{A}{s}_{k,\mc{H}^\pm},\swei{A}{s}_{k,\mc{I}^\pm}\in\mathbb{C}$, depending only on $(\omega,m,\Lambda)$ such that a function $\lp(\uppsi_{(k)}\rp)_{m\Lambda}^{[s],\,(a\omega)}$ defined inductively, via \eqref{eq:transformed-transport-separated}, from a solution, $\smlambda{\upalpha}{s}$, of the radial ODE~\eqref{eq:radial-ODE-alpha} can be written as,
\begin{equation}
\begin{split}
\smlambda{\lp(\uppsi_{(k)}\rp)}{s}&= \swei{A}{s}_{k,\mc{H}^+}\lp(\uppsi_{(k)}\rp)^{[s],\, (a\omega)}_{m\Lambda,\,\mc{H}^+}+\swei{A}{s}_{k,\mc{H}^-}\lp(\uppsi_{(k)}\rp)^{[s],\, (a\omega)}_{m\Lambda,\,\mc{H}^-}\,,\qquad \\
\smlambda{\lp(\uppsi_{(k)}\rp)}{s}&= \swei{A}{s}_{k,\mc{I}^+}\lp(\uppsi_{(k)}\rp)^{[s],\, (a\omega)}_{m\Lambda,\,\mc{I}^+}+\swei{A}{s}_{k,\mc{I}^-}\lp(\uppsi_{(k)}\rp)^{[s],\, (a\omega)}_{m\Lambda,\,\mc{I}^-}\,.
\end{split}\label{eq:uppsi-general-asymptotics}
\end{equation}
The coefficients $\swei{A}{s}_{k,\mc{H}^\pm},\swei{A}{s}_{k,\mc{I}^\pm}$ can be related to the coefficients $\swei{a}{s}_{\mc{H}^\pm},\swei{a}{s}_{\mc{I}^\pm}$ from Lemma~\ref{lemma:apha-general-asymptotics}:
\begin{align}\label{eq:a-to-A-infinity}
\begin{split}
\lp|\swei{A}{\pm s}_{k,\,\mc{I}^\pm}\rp|^2&=(2\omega)^{2k}\lp|\swei{a}{\pm s}_{\mc{I}^\pm}\rp|^2\,,\qquad
\lp|\swei{A}{\mp s}_{k,\,\mc{I}^\pm}\rp|^2=\frac{\mathfrak{D}_{s,k}^{\mc{I}}}{(2\omega)^{2k}}\lp|\swei{a}{\mp s}_{\mc{I}^\pm}\rp|^2\,;
\end{split}
\end{align}
if $|a|=M$,
\begin{equation}\label{eq:a-to-A-horizon-extremal}
\begin{split}
\lp|\swei{A}{\mp s}_{k,\,\mc{H}^\pm}\rp|^2&=2M^2[4M^2(\omega-m\upomega_+)]^{2k}\lp|\swei{a}{\mp s}_{\mc{H}^\pm}\rp|^2\,, \\
\lp|\swei{A}{\pm s}_{k,\,\mc{H}^\pm}\rp|^2&=2M^2\frac{\mathfrak{D}_{s,k}^{\mc{H}}(2M^2)^{2(k-|s|)}}{[4M^2(\omega-m\upomega_+)]^{2k}}\lp|\swei{a}{\pm s}_{\mc{H}^\pm}\rp|^2\,,
\end{split}
\end{equation}
and if $|a|<M$,
\begin{align}\label{eq:a-to-A-horizon-sub}
\begin{split}
\lp|\swei{A}{\mp s}_{k,\,\mc{H}^\pm}\rp|^2&=2Mr_+\prod_{j=0}^{k-1}\lp[\lp(\frac{4Mr_+}{r_+-r_-}\rp)^2(\omega-m\upomega_+)^2+(s-j)^2\rp]\lp|\swei{a}{\mp s}_{\mc{H}^\pm}\rp|^2\,, \\
\lp|\swei{A}{\mp s}_{k,\,\mc{H}^\mp}\rp|^2&=2Mr_+\frac{\mathfrak{D}_{s,k}^{\mc{H}}(2Mr_+)^{2(k-|s|)}(r_+-r_-)^{|s|(1\mp 1)}}{\prod_{j=1}^{k}\lp\{\lp[4Mr_+(\omega-m\upomega_+)\rp]^2+(s-j)^2(r_+-r_-)^2\rp\}}\lp|\swei{a}{\pm s}_{\mc{H}^\pm}\rp|^2\,.
\end{split}
\end{align}
Here $\mathfrak{D}_{s,k}^{\mc{I}}$ and $\mathfrak{D}_{s,k}^{\mc{H}}$ coincide with those in Lemma~\ref{lemma:aymptotic-analysis-radialODE-transformed}.

When $k=|s|$, we write $\swei{A}{s}_{\mc{I}^\pm}=\swei{A}{s}_{|s|,\mc{I}^\pm}$ and $\swei{A}{s}_{\mc{I}^\pm}=\swei{A}{s}_{|s|,\mc{I}^\pm}$. 
\end{lemma}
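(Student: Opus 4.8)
The statement of Lemma~\ref{lemma:uppsi-general-asymptotics} is essentially a bookkeeping corollary of the asymptotic analysis already carried out, so the proof will be short once we organize the inputs correctly. The plan is to deduce the representation \eqref{eq:uppsi-general-asymptotics} directly from Lemma~\ref{lemma:aymptotic-analysis-radialODE-transformed} and then to extract the relations between the coefficients $\swei{A}{s}_{k,\mc{H}^\pm},\swei{A}{s}_{k,\mc{I}^\pm}$ and $\swei{a}{s}_{\mc{H}^\pm},\swei{a}{s}_{\mc{I}^\pm}$ by matching leading-order coefficients in the asymptotic expansions.

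First I would argue that \eqref{eq:uppsi-general-asymptotics} holds. Since $\smlambda{\upalpha}{s}$ solves \eqref{eq:radial-ODE-alpha}, the transport relations \eqref{eq:transformed-transport-separated} define $\smlambda{\lp(\uppsi_{(k)}\rp)}{s}$ as a smooth function on $(r_+,\infty)$ which, by Proposition~\ref{prop:transformed-system} (in its separated form, Section~\ref{sec:radial-ODE-Teukolsky-transformed}), solves the inhomogeneous ODE~\eqref{eq:transformed-k-separated}. By Lemma~\ref{lemma:aymptotic-analysis-radialODE-transformed}, the space of solutions of the homogeneous version of this ODE is spanned near $r=r_+$ by ${\lp(\uppsi_{(k)}\rp)}^{[s],\,(a\omega)}_{ml,\,\mc{H}^+}$ and ${\lp(\uppsi_{(k)}\rp)}^{[s],\,(a\omega)}_{ml,\,\mc{H}^-}$ of Definition~\ref{def:psihor-psiout}, since those functions realize the two distinct asymptotic behaviors identified there and the hypotheses $\omega\in\mathbb{R}\setminus\{0,m\upomega_+\}$ guarantee (via the explicit exponents $\sigma_\pm$, which differ by $\mp 2\xi\neq 0$ when $|a|<M$ and by $\mp 2\beta\neq 0$ when $|a|=M$) that these two are linearly independent. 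As the inhomogeneity $\smlambda{\lp(\mathfrak{G}_{(k)}\rp)}{s}$ (together with the coupling terms, which depend only on the already-fixed lower-level variables) contributes only to higher-order terms in the expansion, as recorded at the end of the proof of Lemma~\ref{lemma:aymptotic-analysis-radialODE-transformed}, the leading asymptotics of $\smlambda{\lp(\uppsi_{(k)}\rp)}{s}$ are still a linear combination of those of ${\lp(\uppsi_{(k)}\rp)}^{[s],\,(a\omega)}_{ml,\,\mc{H}^\pm}$; this defines $\swei{A}{s}_{k,\mc{H}^\pm}$, and the $\mc{I}^\pm$ case is identical using the $r\to\infty$ basis. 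Moreover, since $\smlambda{\lp(\uppsi_{(k)}\rp)}{s}$ is built from $\smlambda{\upalpha}{s}$ by the fixed operators $\lp(\tfrac{\mc{L}}{w}\rp)$ applied to $w^{-|s|/2}\sml{u}{s}$, each connection coefficient depends only on $(\omega,m,\Lambda)$, as claimed.

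Next I would establish the coefficient relations \eqref{eq:a-to-A-infinity}--\eqref{eq:a-to-A-horizon-sub}. The key point is that ${\lp(\uppsi_{(k)}\rp)}^{[s],\,(a\omega)}_{ml,\,\mc{H}^\pm}$ and ${\lp(\uppsi_{(k)}\rp)}^{[s],\,(a\omega)}_{ml,\,\mc{I}^\pm}$ are normalized so that a specified rescaling of them has unit leading coefficient at the relevant endpoint, exactly as $\upalpha^{[s],\,(a\omega)}_{ml,\,\mc{H}^\pm}$ and $\upalpha^{[s],\,(a\omega)}_{ml,\,\mc{I}^\pm}$ are in Definition~\ref{def:uhor-uout}. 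Applying $\lp(\tfrac{\mc{L}}{w}\rp)^k$ to $w^{-|s|/2}\Delta^{s/2}(r^2+a^2)^{1/2}\upalpha^{[s],\,(a\omega)}_{ml,\,\bullet}$ and comparing with the normalization of ${\lp(\uppsi_{(k)}\rp)}^{[s],\,(a\omega)}_{ml,\,\bullet}$, the factor relating the two leading coefficients is precisely the quantity $\big|b^{[s],\bullet,\pm}_{(k),0}/c^{[s],\bullet,\pm}_0\big|^2$ already computed in the statement of Lemma~\ref{lemma:aymptotic-analysis-radialODE-transformed}. Thus, writing $\smlambda{\upalpha}{s}$ in the $\upalpha_{\mc{H}^\pm}$ (resp.\ $\upalpha_{\mc{I}^\pm}$) basis via Lemma~\ref{lemma:apha-general-asymptotics}, applying $\lp(\tfrac{\mc{L}}{w}\rp)^k$ (which is linear and preserves the decomposition since it maps each $\upalpha_{\bullet}$ to a multiple of ${\lp(\uppsi_{(k)}\rp)}_{\bullet}$ plus lower-order terms), and reading off leading coefficients, one obtains $|\swei{A}{s}_{k,\bullet}|^2 = \big|b/c\big|^2 |\swei{a}{s}_{\bullet}|^2$ with the appropriate $\mathfrak{D}_{s,k}$ factors and powers of $2\omega$, $4Mr_+(\omega-m\upomega_+)$, etc. Here one must be careful to use the $+s$ versus $-s$ bookkeeping consistently — when $s<0$ one uses $\mc{L}=L$ and the roles of $\pm$ in the exponents are swapped relative to $s>0$ — which is why the statement is phrased with $\pm s$ and $\mp s$ superscripts. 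The main (and only) obstacle is purely organizational: tracking which of the two asymptotic solutions at each endpoint the operator $\lp(\tfrac{\mc{L}}{w}\rp)^k$ produces a cancellation in, so that the surviving leading coefficient picks up the factor $\mathfrak{D}_{s,k}$ rather than the ``trivial'' factor $(2\omega)^{2k}$ or $[4Mr_+(\omega-m\upomega_+)]^{2k}$; but this was already resolved inside the proof of Lemma~\ref{lemma:aymptotic-analysis-radialODE-transformed}, so here one simply invokes it. The final sentence ($k=|s|$ notational convention) requires no argument.
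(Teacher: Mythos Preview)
Your proposal is correct and takes the same approach as the paper, which in fact gives no written proof for this lemma beyond the lead-in sentence ``By Lemma~\ref{lemma:aymptotic-analysis-radialODE-transformed}, a solution to the radial ODE \eqref{eq:transformed-k-separated} arising from a solution to the radial ODE \eqref{eq:radial-ODE-alpha} has a representation in terms of the functions defined in Definition~\ref{def:psihor-psiout}''. Your write-up simply spells out explicitly the bookkeeping implicit in that sentence: invoke the asymptotic expansions and leading-coefficient ratios $\lp|b^{[s],\bullet,\pm}_{(k),0}/c^{[s],\bullet,\pm}_0\rp|^2$ computed in Lemma~\ref{lemma:aymptotic-analysis-radialODE-transformed}, and match them against the normalizations in Definitions~\ref{def:uhor-uout} and~\ref{def:psihor-psiout}.
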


\begin{definition}[Outgoing boundary conditions] \label{def:outgoing-bdry-uppsi} We say $\lp(\uppsi_{(k)}\rp)_{m\Lambda}^{[s],\,(a\omega)}$ is a solution of the radial ODE \eqref{eq:transformed-k-separated} with {\normalfont outgoing boundary conditions} if 
\begin{equation}\label{eq:outgoing-bdry-uppsi} 
\begin{split}
\lp(\uppsi_{(k)}\rp)_{m\Lambda}^{[s],\,(a\omega)}&= \swei{A}{s}_{k,\mc{H}^+}\cdot(\uppsi_{(k)})^{[s],\, (a\omega)}_{ml,\,\mc{H}^+}+ O(\Delta)\,,\\
\lp(\uppsi_{(k)}\rp)_{m\Lambda}^{[s],\,(a\omega)}&= \swei{A}{s}_{k,\mc{I}^+}\cdot(\uppsi_{(k)})^{[s],\, (a\omega)}_{ml,\,\mc{I}^+} +O(r^{-2})\,,
\end{split}
\end{equation}
respectively as $r\to r_+$ and $r\to \infty$. Then, $\swei{A}{s}_{k,\mc{H}^+}$ can be related to $\swei{a}{s}_{\mc{H}^+}$ from Definition~\ref{def:outgoing-bdry-teukolsky} by the formulas of Lemma~\ref{lemma:uppsi-general-asymptotics} if $s<0$; $\swei{A}{s}_{k,\mc{I}^+}$ can be related to $\swei{a}{s}_{\mc{I}^+}$ from Definition~\ref{def:outgoing-bdry-teukolsky} by the formulas of Lemma~\ref{lemma:uppsi-general-asymptotics} if $s>0$.
\end{definition}

\subsubsection{The Teukolsky--Starobinsky identities}
\label{sec:Teukolsky--Starobinsky}

In this section, we introduce the Teukolsky--Starobinsky identities, originally from \cite{Teukolsky1974,Starobinsky1974} for $|s|=1,2$ and later generalized to all spins in \cite{Kalnins1989}. To do so, we begin by defining the differential operators
\begin{align}
\hat{\mc{D}}^{\pm}_n &= \frac{d}{dr}\pm i\lp(\frac{\omega(r^2+a^2)}{\Delta} -\frac{am}{\Delta}\rp)+\frac{2n(r-M)}{\Delta}\,, \qquad
\hat{\mc{L}}^{\pm}_n &= \frac{d}{d\theta}\pm\lp(\frac{m}{\sin\theta}-a\omega\cos\theta\rp)+n\cot\theta\,. \label{eq:def-D-L-pm}
\end{align}

Next, we introduce the Teukolsky--Starobinsky constants:

\begin{proposition}\label{prop:TS-angular-radial-constant} Fix $s\in\{0,\frac12,1,\frac32,2\}$ and $|a|\leq M$.
\begin{enumerate}[label=(\roman*)]
\item  Fix a frequency triple $(\omega,m,l)$ admissible with respect to $s$. Solutions of the angular ODE~\eqref{eq:angular-ode} with spin $\pm s$ are eigenfunctions of the operator 
\begin{align*}
\lp(\prod_{j=0}^{2s-1}\hat{\mc{L}}_{s-j}^\mp\rp)\lp(\prod_{k=0}^{2s-1}\hat{\mc{L}}_{s-k}^\pm\rp) \equiv \lp(\sin\theta\rp)^{2s}\lp(\frac{\hat{\mc{L}}_{s}^\mp}{\sin\theta}\rp)^{2s}\lp(\sin\theta\rp)^{2s}\lp(\frac{\hat{\mc{L}}_{s}^\pm}{\sin\theta}\rp)^{2s}\,, 
\end{align*}
with indices $j,k$ increasing from right to left on the product, and the latter being replaced by the identity if $s=0$, for the same eigenvalue. This eigenvalue, $\mathfrak B_s=\mathfrak B_s(|s|,\omega,m,l)$, called the {\normalfont angular Teukolsky--Starobinsky constant}, satisfies $(-1)^{2s}\mathfrak B_s\geq 0$.

\item Fix a frequency triple $(\omega,m,\Lambda)$ admissible with respect to $s$. Dropping most subscripts, for $\swei{\upalpha}{\pm s}$ solutions of the homogenenous radial ODE~\eqref{eq:radial-ODE-alpha} of spin $\pm s$, set
\begin{align*}
P^{[+s]}:=\Delta^s \swei{\upalpha}{+s}\,,\qquad P^{[-s]}:=\swei{\upalpha}{-s}\,,
\end{align*}
then $P^{[\pm s]}$ is an eigenfunction of the operator
\begin{align*}
\Delta^s\lp(\hat{\mc{D}}^{\mp}_0\rp)^{2s}\lp[\Delta^s\lp(\hat{\mc{D}}^{\pm}_0\rp)^{2s}\rp] \equiv \prod_{j=0}^{2s-1}\lp(\Delta^{1/2}\hat{\mc{D}}^{\mp}_{j/2}\rp)\prod_{k=0}^{2s-1}\lp(\Delta^{1/2}\hat{\mc{D}}^{\pm}_{k/2}\rp)\,, 
\end{align*}
with indices $j,k$ increasing from right to left in the product, and the latter being replaced by the identity if $s=0$, for the same eigenvalue. This eigenvalue, $\mathfrak C_s=\mathfrak C_s(\omega,m,\Lambda)\in\mathbb{R}$ is called the {\normalfont radial Teukolsky--Starobinsky constant}; if $\mathfrak C_s=0$, we say $(\omega,m,\Lambda)$ is an {\normalfont algebraically special frequency triple}.
\end{enumerate}
The Teukolsky--Starobinksy constants $\mathfrak B_s$ and $\mathfrak C_s$ can be computed explicity; for instance
\begin{align}\label{eq:TS-radial-constants}
\begin{split} 
%\mathfrak C_{1/2}&= \Lambda-2am\omega+1/2\,,\\
\mathfrak C_{1}&= (\Lambda-2am\omega+1)^2+4am\omega-4a^2\omega^2\,,\\
%\mathfrak C_{3/2}&=  \lp(\Lambda-2am\omega+\frac32\rp)^3+\lp(\Lambda-2am\omega+\frac32\rp)^2-16\lp(\Lambda-2am\omega+\frac32\rp)(a^2\omega^2-am\omega)+16a^2\omega^2\,,\\
\mathfrak C_2&= \lp[(\Lambda-2am\omega+2)(\Lambda-2am\omega+4)\rp]^2+40a\omega(\Lambda-2am\omega+2)^2(m-a\omega)\\
&\qquad+48a\omega(\Lambda-2am\omega+2)(m+a\omega)+144a^2\omega^2(m-a\omega)^2+144M^2\omega^2\,,
\end{split}\\
 \label{eq:TS-angular-constants}
\begin{split}
\mathfrak B_{1/2}&= -\mathfrak{C}_{1/2}\,, \quad \mathfrak B_{1}=\mathfrak{C}_1\,,\quad \mathfrak B_{3/2}=  -\mathfrak{C}_{3/2}\,,\quad\mathfrak B_2= \mathfrak{C_2}-144M^2\omega^2\,.
\end{split}
\end{align}
\end{proposition}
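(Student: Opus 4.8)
The plan is to establish Proposition~\ref{prop:TS-angular-radial-constant} directly from the explicit form of the Teukolsky--Starobinsky identities, which for $|s|\leq 2$ can be found in the classical references \cite{Teukolsky1974,Starobinsky1974,Kalnins1989}; the work here is to repackage those identities in the algebraically-special-frame normalization we use and to extract the positivity/reality claims. First I would record the basic intertwining relations satisfied by $\hat{\mc{D}}^\pm_n$ and $\hat{\mc{L}}^\pm_n$: namely that $\hat{\mc{D}}^+_n$ (resp.\ $\hat{\mc{L}}^+_n$) maps the spin-$(n)$ radial (resp.\ angular) Teukolsky equation to the spin-$(n+\tfrac12)$ one, and $\hat{\mc{D}}^-_n$, $\hat{\mc{L}}^-_n$ go the other way, together with the formal adjoint relations on the relevant weighted $L^2$ spaces. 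These are straightforward computations from \eqref{eq:radial-ODE-alpha} and \eqref{eq:angular-ode}; once they are in place, composing $2|s|$ raising operators with $2|s|$ lowering operators produces an operator that preserves the spin-$(\pm s)$ equation, and a Wronskian/uniqueness argument (the spin-$(\pm s)$ angular equation has a one-dimensional space of smooth solutions by Proposition~\ref{prop:angular-ode}, and similarly the relevant solution spaces are spanned appropriately) forces this composite to act as a scalar.

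Second, I would identify that scalar. For the angular constant $\mathfrak B_s$, the cleanest route is to use the self-adjointness structure: $\lp(\prod_{j=0}^{2s-1}\hat{\mc{L}}_{s-j}^\mp\rp)$ is, up to the $(\sin\theta)^{2s}$ weights displayed in the statement, the adjoint of $\lp(\prod_{k=0}^{2s-1}\hat{\mc{L}}_{s-k}^\pm\rp)$ on $L^2(\sin\theta\,d\theta)$, so the composite operator is (plus or minus) a nonnegative self-adjoint operator; pairing against the normalized spin-weighted spheroidal harmonic $\swei{\Xi}{s}$ gives $(-1)^{2s}\mathfrak B_s = \| (\prod \hat{\mc{L}}^\pm)\swei{\Xi}{s}\|^2_{L^2}\geq 0$, which is exactly the sign claim. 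The explicit polynomial expressions \eqref{eq:TS-angular-constants} then follow either by direct expansion of the operator composition acting on the series for $\swei{\Xi}{s}$, or — more economically — by quoting the known closed-form Starobinsky--Churilov formulas and matching normalizations; the relation $\mathfrak B_{1/2}=-\mathfrak C_{1/2}$, $\mathfrak B_1 = \mathfrak C_1$, $\mathfrak B_{3/2}=-\mathfrak C_{3/2}$, $\mathfrak B_2 = \mathfrak C_2 - 144M^2\omega^2$ is then read off by comparing with the radial formulas.

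Third, for the radial constant $\mathfrak C_s$, I would run the parallel argument: $\Delta^s(\hat{\mc{D}}^+_0)^{2s}$ maps solutions of the spin-$(+s)$ radial ODE to solutions of the spin-$(-s)$ one (after the rescaling $P^{[+s]}=\Delta^s\swei{\upalpha}{+s}$, $P^{[-s]}=\swei{\upalpha}{-s}$ that absorbs the $\Delta^{\pm s}$ weights appearing in \eqref{eq:radial-ODE-alpha}), and the reverse composite preserves the spin-$(+s)$ equation, hence is a scalar $\mathfrak C_s$. Reality of $\mathfrak C_s$ follows because, for real frequency triples, the radial ODE \eqref{eq:radial-ODE-alpha} for spin $-s$ is the complex conjugate of that for spin $+s$ up to the rescaling, so $\mathfrak C_s$ equals its own conjugate; the explicit quartic-in-$\Lambda$ formulas \eqref{eq:TS-radial-constants} come from evaluating the operator composition on the leading asymptotic coefficient at, say, $r\to\infty$ (using Lemma~\ref{lemma:asymp-f-g} and Lemma~\ref{lemma:aymptotic-analysis-radialODE-Teukolsky}) where $\hat{\mc{D}}^\pm_n$ have simple limiting forms, or again by citing \cite{Teukolsky1974,Kalnins1989} and matching conventions.

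The main obstacle I anticipate is purely bookkeeping rather than conceptual: keeping consistent track of (i) the $\Delta$-weights and the $(r^2+a^2)^{1/2}\Delta^{s/2}$ rescalings relating $\swei{\upalpha}{s}$, $\swei{u}{s}$ and the $P^{[\pm s]}$ used in the identities, and (ii) the precise ordering of the non-commuting operators $\hat{\mc{D}}^\pm_{j/2}$ (the statement fixes ``indices increasing from right to left,'' and getting a sign or an index shift wrong propagates through the whole computation). The positivity statement $(-1)^{2s}\mathfrak B_s\geq 0$ is robust (it is the $L^2$-norm-squared interpretation above), but the sign conventions in the literature for $\mathfrak C_s$ versus $\mathfrak B_s$ differ between sources, so I would pin down normalizations once, early, against the $a=0$, $s=2$ Regge--Wheeler case where $\mathfrak C_2 = \Lambda^2(\Lambda+2)^2 + 144M^2\omega^2$ with $\Lambda = l(l+1)-2$ is standard, and propagate from there. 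No genuinely hard analysis is needed: everything reduces to the intertwining identities plus uniqueness of solutions in the relevant function spaces, both of which are available from the material already developed in Sections 3 and 4.
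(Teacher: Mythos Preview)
Your proposal is essentially correct and aligns with the paper's own (very brief) proof, which simply says: apply the operators in the product one by one, using the angular or radial ODEs to trade second derivatives for first and zeroth order terms (citing Chandrasekhar for $|s|=1,2$), or alternatively use asymptotic expansions as in Lemma~\ref{lemma:aymptotic-analysis-radialODE-Teukolsky}. Your intertwining-plus-uniqueness framing is a slightly more structural repackaging of the same computation, and your self-adjointness argument for $(-1)^{2s}\mathfrak B_s\geq 0$ is a welcome addition that the paper itself only cites (see Remark~\ref{rmk:TS-constant-sign}, which refers to \cite[Section 2.3]{TeixeiradaCosta2019}).

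One small point to tighten: your uniqueness argument works cleanly for the angular case because the smooth spin-weighted eigenspace is one-dimensional, but for the radial ODE the solution space is two-dimensional and ``preserves solutions'' does not by itself force the composite to be a scalar on that space. In practice this is resolved exactly as the paper indicates---the direct reduction via the ODE produces an explicit zeroth-order expression, so no abstract argument is needed---and you do mention this route, so the gap is only in the exposition, not the mathematics.
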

\begin{proof} Existence of the Teukolsky--Starobinsky constants, as well as their formulas in terms of the frequency parameters, follows from applying the operators in the products above to angular or radial functions one by one and using the angular or radial ODEs, respectively, to trade second derivatives of those functions by first and zeroth order terms (see for instance \cite[Sections 70 and 81]{Chandrasekhar} for $s=\pm 1, \pm 2$). Alternatively, they can be obtained by computations involving asymptotic expansions, as in the proof of Lemma~\ref{lemma:aymptotic-analysis-radialODE-transformed}. 
\end{proof}

\begin{remark} \label{rmk:TS-constant-higher-spin} Though we have stated Proposition~\ref{prop:TS-angular-radial-constant} only for $|s|\leq 2$, we have verified that the Teukolsky--Starobinsky constants exist for spins $|s|\leq 9/2$ (see also \cite{Kalnins1989} for computations up to $|s|= 4$). We expect that, indeed, the result can be shown for general $s$ by tapping into to an inductive structure for the computation of $\mathfrak B_s$ and $\mathfrak C_s$ that is still unknown at present.

Of particular importance in this work will be the radial Teukolsky--Starobinsky constant. In  statement 2 of Proposition~\ref{prop:TS-angular-radial-constant}, $\mathfrak{C}_s$ is defined by making no assumption on the class of solutions of the radial ODE~\eqref{eq:radial-ODE-alpha}. However, if one were to assume outgoing boundary conditions, $\mathfrak{C}_s$ could alternatively be defined by the radial Teukolsky--Starobinsky identities in Proposition~\ref{prop:TS-radial}. For our main results, which are restricted to the physically relevant integer spins  $|s|=0,1,2$, however, the distinction between the two ways of defining $\mathfrak C_s$ is not crucial; hence, for simplicity, we always take $\mathfrak C_s$ to be defined by Proposition~\ref{prop:TS-angular-radial-constant}.
\end{remark} 

\begin{remark}\label{rmk:TS-constant-sign}
Let $\Lambda=\smlambda{\bm\Lambda}{s}$. For any $s\in\frac12\mathbb{Z}$ for which an angular Teukolsky--Starobinsky constant can be defined (see Remark~\ref{rmk:TS-constant-higher-spin}), $(-1)^{2s}\mathfrak B_s\geq 0$ (see \cite[Section 2.3]{TeixeiradaCosta2019}). For $|s|\leq 2$, explicit computations of the latter (see \cite{Kalnins1989}, for instance) show
\begin{align} \label{eq:Cs-Bs-relation}
\mathfrak{C}_s= (-1)^{2s}\mathfrak{B}_s\geq 0\,, \quad |s|\leq 2\,,\quad\quad
\mathfrak{C}_2= (-1)^{2s}\mathfrak{B}_s+144M^2\omega^2\geq 0\,,
\end{align}
although $\mathfrak{C}_s-(-1)^{2s}\mathfrak{B}_s\geq 0$ does not necessarily hold for $|s|>2$ and any $|a|\leq M$. For $|s|\leq 2$, the non-negativity in \eqref{eq:Cs-Bs-relation} motivates the admissibility condition 4(e) in Definition~\ref{def:admissible-freqs}. 

If one can upgrade \eqref{eq:Cs-Bs-relation} for a given spin into a strict positivity statement, then there are no real algebraically special frequencies for that spin.
It is well-known that this is the case for $a\omega=0$, as
$$\mathfrak{C}_s\geq (-1)^{2s}\mathfrak{B}_s=\lp[l(l+1)-s^2+|s|\rp]^{2|s|}\geq \lp[2|s|\rp]^{2|s|} \,, \quad 0<|s|\leq 2\,, \quad a\omega=0\,.$$
The same strict inequality is not as clear for $a\omega\neq 0$ (see, for instance, \cite{Wald1973}). However, we note that in the case $s=2$, as $144M^2\omega^2>0$ whenever $\omega\neq 0$, one obtains from \eqref{eq:Cs-Bs-relation} that
$\mathfrak{C}_s> 0$
unconditionally, so we conclude there are no real algebraically special frequencies for this spin. To our knowledge, this interesting feature of gravitational perturbations has not been observed before.
\end{remark}

The radial Teukolsky--Starobinsky identities are differential identities which relate solutions of the homogeneous radial ODE~\eqref{eq:radial-ODE-alpha} with spin $+s$ and spin $-s$. 
\begin{proposition}[Radial Teukolsky--Starobinsky identities] \label{prop:TS-radial} Fix $M>0$, $|a|\leq M$ and $s\in\{0,\frac12,1,\frac32,2\}$. Let $(\omega,m,l)$ be an admissible frequency triple with respect to $s$ and $a$. Dropping most subscripts, let $\swei{\upalpha}{s}$ be solutions to the homogeneous radial ODE~\eqref{eq:radial-ODE-alpha} of spin $\pm s$. If $\omega\in\mathbb{R}\backslash\{0,m\upomega_+\}$, recall from \eqref{eq:alpha-general-asymptotics}
\begin{equation*}
\begin{split}
\swei{\upalpha}{\pm s}&= \swei{a}{\pm s}_{\mc{H}^+}\cdot\swei{\upalpha}{\pm s}_{\mc{H}^+}+\swei{a}{\pm s}_{\mc{H}^-}\cdot\swei{\upalpha}{\pm s}_{\mc{H}^-}= \swei{a}{\pm s}_{\mc{I}^+}\cdot\swei{\upalpha}{\pm s}_{\mc{I}^+}+\swei{a}{\pm s}_{\mc{I}^-}\cdot\swei{\upalpha}{\pm s}_{\mc{I}^-}\,,
\end{split}
\end{equation*}
for some complex $\swei{a}{\pm s}_{\mc{H}^{+}}$, $\swei{a}{\pm s}_{\mc{H}^{-}}$, $\swei{a}{\pm s}_{\mc{I}^{+}}$ and $\swei{a}{\pm s}_{\mc{H}^{-}}$. For such $\swei{\upalpha}{\pm s}$,
\begin{align}
\begin{split}
\Delta^s \lp(\hat{\mc{D}}_0^{+}\rp)^{2s}\lp(\Delta^s \swei{\upalpha}{+ s}\rp)&=\swei{a}{+s}_{\mc{I}^{+}}\mathfrak{C}_s^{(1)}\swei{\upalpha}{- s}_{\mc{I}^+}+\swei{a}{+s}_{\mc{I}^{-}}\mathfrak{C}_s^{(7)}\swei{\upalpha}{- s}_{\mc{I}^-}=\swei{a}{+s}_{\mc{H}^{+}}\mathfrak{C}_s^{(4)}\swei{\upalpha}{- s}_{\mc{H}^+}+\swei{a}{+s}_{\mc{H}^{-}}\mathfrak{C}_s^{(6)}\swei{\upalpha}{- s}_{\mc{H}^-}\,, \\
 \lp(\hat{\mc{D}}_0^{-}\rp)^{2s}\swei{\upalpha}{- s}&=\swei{a}{-s}_{\mc{I}^{+}}\mathfrak{C}_s^{(3)}\swei{\upalpha}{+ s}_{\mc{I}^+}+\swei{a}{+s}_{\mc{I}^{-}}\mathfrak{C}_s^{(5)}\swei{\upalpha}{+ s}_{\mc{I}^-}=\swei{a}{-s}_{\mc{H}^{+}}\mathfrak{C}_s^{(2)}\swei{\upalpha}{+ s}_{\mc{H}^+}+\swei{\upalpha}{+ s}_{\mc{H}^{-}}\mathfrak{C}_s^{(8)}\swei{\upalpha}{+ s}_{\mc{H}^-}\,,
\end{split}\label{eq:TS-radial-general}
\end{align}
where, recalling the radial Teukolsky--Starobinsky constant $\mathfrak{C}_s$ from Proposition~\ref{prop:TS-angular-radial-constant}, $\mathfrak{C}_{s}^{(i)}$ are given by:
\begin{gather*}
\mathfrak{C}_s^{(1)}=(2i\omega)^{2s}\,, \quad
\mathfrak{C}_s^{(2)}=\begin{dcases}
(-2\beta)^{2s} &\text{~if~} |a|=M\\
\prod_{j=0}^{2s-1}\lp(2\xi+s-j\rp)&\text{~if~} |a|<M
\end{dcases}\,, \qquad
\mathfrak{C}_s^{(3)}=\frac{\mathfrak{C}_s}{\mathfrak{C}_s^{(1)}}\,, \quad
\mathfrak{C}_s^{(4)}=\frac{\mathfrak{C}_s}{\mathfrak{C}_s^{(2)}}\,, \\
\mathfrak{C}_s^{(5)}=(-2i\omega)^{2s}\,,\quad
\mathfrak{C}_s^{(6)}=\begin{dcases}
(2\beta)^{2s} &\text{~if~} |a|=M\\
(r_+-r_-)^{2|s|}\prod_{j=0}^{2s-1}\lp(-2\xi+s-j\rp)&\text{~if~} |a|<M
\end{dcases}\,, \quad
\mathfrak{C}_s^{(7)}= \frac{\mathfrak{C}_s}{\mathfrak{C}_s^{(5)}}\,,\quad
 \mathfrak{C}_s^{(8)}=\frac{\mathfrak{C}_s}{\mathfrak{C}_s^{(6)}}\,.
\end{gather*}
\end{proposition}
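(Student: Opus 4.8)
\textbf{Proof proposal for Proposition~\ref{prop:TS-radial} (Radial Teukolsky--Starobinsky identities).}

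The plan is to reduce everything to the already-established operator identity from Proposition~\ref{prop:TS-angular-radial-constant}(ii), namely that $\Delta^s(\hat{\mc{D}}^-_0)^{2s}[\Delta^s(\hat{\mc{D}}^+_0)^{2s}]$ acts on $P^{[+s]}=\Delta^s\swei{\upalpha}{+s}$ as multiplication by $\mathfrak{C}_s$, and symmetrically $(\hat{\mc{D}}^+_0)^{2s}[\Delta^s(\hat{\mc{D}}^-_0)^{2s}]$ acts on $P^{[-s]}=\swei{\upalpha}{-s}$ by $\mathfrak{C}_s$ as well. First I would observe that if $\swei{\upalpha}{+s}$ solves the spin-$(+s)$ homogeneous radial ODE~\eqref{eq:radial-ODE-alpha}, then $\Delta^s(\hat{\mc{D}}^+_0)^{2s}(\Delta^s\swei{\upalpha}{+s})$ solves the spin-$(-s)$ homogeneous radial ODE; this is the standard intertwining property of the $\hat{\mc{D}}^\pm_n$ operators (one checks that conjugating the spin-$(+s)$ operator by $\Delta^s(\hat{\mc{D}}^+_0)^{2s}$ produces the spin-$(-s)$ operator, again by repeatedly trading second derivatives using the ODE — exactly the computation underlying Proposition~\ref{prop:TS-angular-radial-constant}). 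Consequently $\Delta^s(\hat{\mc{D}}^+_0)^{2s}(\Delta^s\swei{\upalpha}{+s})$ is a spin-$(-s)$ solution and so, by Lemma~\ref{lemma:apha-general-asymptotics}, it decomposes into the basis $\{\swei{\upalpha}{-s}_{\mc{H}^\pm}\}$ and $\{\swei{\upalpha}{-s}_{\mc{I}^\pm}\}$; the only task is to identify the coefficients.

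The coefficients are computed by \emph{matching asymptotics at each endpoint}. By linearity it suffices to do this on each of the four basis solutions $\swei{\upalpha}{+s}_{\mc{H}^\pm}$, $\swei{\upalpha}{+s}_{\mc{I}^\pm}$ separately. Take for instance $\swei{\upalpha}{+s}_{\mc{I}^+}$: near $r=\infty$ it has the expansion from Lemma~\ref{lemma:aymptotic-analysis-radialODE-Teukolsky}, with leading behaviour $e^{i\omega r}r^{2iM\omega-2s-1}$, and since $\hat{\mc{D}}^+_0=\frac{d}{dr}+i\omega+O(r^{-1})$ near infinity, applying $(\hat{\mc{D}}^+_0)^{2s}$ to $\Delta^s\swei{\upalpha}{+s}_{\mc{I}^+}\sim e^{i\omega r}r^{2iM\omega-1}$ brings out a factor $(2i\omega)^{2s}$ at leading order (there is no cancellation in the "$+$"/outgoing direction, just as in the proof of Lemma~\ref{lemma:aymptotic-analysis-radialODE-transformed}), yielding precisely $\mathfrak{C}_s^{(1)}\swei{\upalpha}{-s}_{\mc{I}^+}$ up to the normalisation fixed in Definition~\ref{def:uhor-uout}. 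For $\swei{\upalpha}{+s}_{\mc{I}^-}$ one similarly reads off $\mathfrak{C}_s^{(5)}=(-2i\omega)^{2s}$. At the horizon, for $|a|<M$, near $r=r_+$ we use $\hat{\mc{D}}^+_0=\frac{d}{dr}+\frac{\xi}{r-r_+}+O(1)$ (in the notation $\xi=-i\frac{2Mr_+}{r_+-r_-}(\omega-m\upomega_+)$), and applying $(\hat{\mc{D}}^+_0)^{2s}$ to a function behaving like $(r-r_+)^{\xi-s}$ — the regular-at-$\mc{H}^+$ branch after the $\Delta^s$ weight — produces the falling factorial $\prod_{j=0}^{2s-1}(\xi - s + \text{shifts})$; careful bookkeeping of which power of $(r-r_+)$ appears (recall the $\Delta^s=(r-r_+)^s(r-r_-)^s$ weights in the definitions of $P^{[\pm s]}$) gives $\mathfrak{C}_s^{(2)}$ on the $\mc{H}^+$ branch and $\mathfrak{C}_s^{(6)}$ on the $\mc{H}^-$ branch. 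The extremal case $|a|=M$ is identical with $\xi$ replaced by the irregular-singularity exponent, $\beta=2iM^2(\omega-m\upomega_+)$, giving $(\pm 2\beta)^{2s}$. Finally, the "missing" coefficients $\mathfrak{C}_s^{(3)},\mathfrak{C}_s^{(4)},\mathfrak{C}_s^{(7)},\mathfrak{C}_s^{(8)}$ are forced by the \emph{consistency/closure relation}: composing the two identities in \eqref{eq:TS-radial-general} must reproduce multiplication by $\mathfrak{C}_s$ (Proposition~\ref{prop:TS-angular-radial-constant}(ii)), so e.g.\ $\mathfrak{C}_s^{(1)}\mathfrak{C}_s^{(3)}=\mathfrak{C}_s$ forces $\mathfrak{C}_s^{(3)}=\mathfrak{C}_s/\mathfrak{C}_s^{(1)}$, and likewise on the horizon side and for the "$-$"-basis, matching exactly the formulas in the statement.

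The second identity in \eqref{eq:TS-radial-general}, for $(\hat{\mc{D}}^-_0)^{2s}\swei{\upalpha}{-s}$, is proved by the same argument with the roles of $+s$ and $-s$ (and of $\hat{\mc{D}}^+$ and $\hat{\mc{D}}^-$) interchanged; the intertwining property maps spin-$(-s)$ solutions to spin-$(+s)$ solutions, and the endpoint matching uses that in the $\hat{\mc{D}}^-$ direction the cancellation now occurs in the "incoming" branch while the "outgoing" branch picks up the clean factor. I expect the main obstacle to be purely bookkeeping: keeping straight, at each of the (up to) three singular points, the exact fractional powers and exponential prefactors carried by the $\Delta^s$-weights in $P^{[\pm s]}$ versus those built into the normalisations of $\swei{\upalpha}{\pm s}_{\mc{H}^\pm,\mc{I}^\pm}$ in Definition~\ref{def:uhor-uout}, and then reconciling the resulting finite products of linear factors with the closed-form expressions for $\mathfrak{C}_s^{(i)}$; the underlying analysis (ODE intertwining plus asymptotic matching, as already used in the proof of Lemma~\ref{lemma:aymptotic-analysis-radialODE-transformed}) is routine. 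One can cross-check the final formulas against the explicit $\mathfrak{C}_1,\mathfrak{C}_2$ given in \eqref{eq:TS-radial-constants}, or against the classical references \cite{Teukolsky1974,Starobinsky1974,Kalnins1989,Chandrasekhar}.
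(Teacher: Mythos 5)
Your overall strategy is correct and is essentially the paper's own approach (the paper's proof is a pointer to the asymptotic-matching calculation of Lemma~\ref{lemma:aymptotic-analysis-radialODE-transformed} and to an external reference). The key ingredients you identify — the intertwining property of the Teukolsky--Starobinsky maps, the expansion of $\hat{\mc{D}}^\pm_0$ at each singular point, and the normalization conventions of Definition~\ref{def:uhor-uout} — are exactly what the computation requires. Your observation that the four ``hard'' coefficients $\mathfrak{C}_s^{(3)},\mathfrak{C}_s^{(4)},\mathfrak{C}_s^{(7)},\mathfrak{C}_s^{(8)}$ can be read off from the closure relation $\mathfrak{C}_s^{(1)}\mathfrak{C}_s^{(3)}=\mathfrak{C}_s$ etc., rather than by pushing the asymptotic expansion to higher order, is a genuine simplification over a brute-force version of the argument and does not introduce circularity, since Proposition~\ref{prop:TS-angular-radial-constant}(ii) is proved independently (by trading second derivatives via the ODE, not via asymptotics).

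Two bookkeeping points to tighten. First, near $r=r_+$ one has
$\hat{\mc{D}}^+_0 = \frac{d}{dr} + i\frac{\omega(r^2+a^2)-am}{\Delta} = \frac{d}{dr} - \frac{\xi}{r-r_+} + O(1)$,
not $+\xi/(r-r_+)$, since $\xi=-i\frac{2Mr_+}{r_+-r_-}(\omega-m\upomega_+)$; the sign matters for which branch is the ``clean'' (no-cancellation) one. Second, you attribute some of the explicit coefficients to the wrong basis function or the wrong identity: $\mathfrak{C}_s^{(5)}=(-2i\omega)^{2s}$ arises from applying $(\hat{\mc{D}}^-_0)^{2s}$ to $\swei{\upalpha}{-s}_{\mc{I}^-}$ (the second line of \eqref{eq:TS-radial-general}), not from $\swei{\upalpha}{+s}_{\mc{I}^-}$; and $\mathfrak{C}_s^{(2)}$ is the $\mc{H}^+$ coefficient of the $(-s)$ identity while $\mathfrak{C}_s^{(6)}$ is the $\mc{H}^-$ coefficient of the $(+s)$ identity, so the ``easy'' coefficients split across the two lines rather than belonging to a single one. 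Once corrected, the clean factors are exactly $\mathfrak{C}_s^{(1)},\mathfrak{C}_s^{(2)},\mathfrak{C}_s^{(5)},\mathfrak{C}_s^{(6)}$ (one per endpoint per direction of spin flip), and your closure step supplies the remaining four, as claimed.
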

\begin{proof}
The proof follows as that of Lemma~\ref{lemma:aymptotic-analysis-radialODE-transformed}, see \cite[Proposition 2.14]{TeixeiradaCosta2019} for details.
\end{proof}

%-------------------------------------------

\section{The separated current templates}
\label{sec:current-templates}

In this section, we introduce the current templates which will be used, in Section~\ref{sec:freq-estimates-Psi} to obtain estimates on solutions of the radial ODEs in Sections~\ref{sec:radial-ODE-Teukolsky-transformed}. For the latter, we rewrite the ODEs in the general form 
\begin{align}
U''+W U'+(\omega^2-V)U=H\,, \label{eq:schrodinger-type}
\end{align}
where $\omega$ is real and $V,W,H$ are known functions of $r^*$ with $W$ being real-valued. 

\subsection{The virial current templates}
\label{sec:virial-current-templates}

Let $U(r^*)$ be  sufficiently regular. For $h=h(r^*)$ a $C^1$ and piecewise $C^2$, $y=y(r^*)$ a $C^0$ and piecewise $C^1$ and $f=f(r^*)$ a $C^2$ and piecewise $C^3$ functions, we define the frequency-localized virial currents
\begin{equation}
\begin{split} \label{eq:virial-current-definitions}
Q^h[U]&=h\Re(U\overline{U}')-\frac12 h'|U|^2+\frac12hW|U|^2\,, \\
Q^y[U]&=y|U'|^2+y(\omega^2-\Re V)|U|^2\,, \\
Q^f[U]&=f|U'|^2+\lp(f(\omega^2-\Re V) -\frac12 f'' +\frac12 f'W\rp)|U|^2+f'\Re(U\overline{U}')\,,
\end{split}
\end{equation}
which, if $U$ satisfies the model ODE \eqref{eq:schrodinger-type}, have derivatives
\begin{equation}
\begin{split} \label{eq:virial-current-identities}
(Q^h[U])'&=h|U'|^2+h\lp(\Re V+\frac12 W'-\omega^2\rp)|U|^2-\frac12 \lp(h''-W h'\rp)|U|^2+h\Re(H\overline{U})\,, \\
\lp(Q^y\rp)'[U]&=(y'-2yW)|U'|^2+y'\omega^2|U|^2-(y\Re V)'|U|^2-2y(\Im V)\Im(U\overline{U}')+2y\Re(H\overline{U}')\,, \\
(Q^f[U])'&=2(f'+fW)|U'|^2+\lp(-f\Re V'-\frac12 f'''+\frac12 Wf''\rp)|U|^2-2f(\Im V) \Im (U\overline{U}')\\ &\quad+f'\Re(H\overline{U})+2f\Re(H\overline{U}')\,.
\end{split}
\end{equation}
These identities are obtained multiplying \eqref{eq:schrodinger-type} by $h\overline{U}$, $2y\overline{U}'$ and $f'\overline{U}+2f\overline{U}'$, respectively, and taking the real part. The expressions above already hint at the applications of each type of current:
\begin{itemize}[itemsep=0pt]
\item We usually consider $h$ currents when $\lp(\Re V+\frac12 W'-\omega^2\rp)$ is quantitatively positive or, if we have a way of controlling the bulk error due to $h|U'|^2$, when $\lp(\Re V+\frac12 W'-\omega^2\rp)$ is quantitatively negative in a given region of $r^*$; for this reason $h$ is typically taken to be a compactly supported function. Since the term $h''$ usually appears with the wrong sign, one must be take care to introduce $h$ only when that error can be controlled by other multiplier currents. We also note the $h$ current is the unique among the virial currents above that does not involve $\Im V$.
\item We usually consider $f$ currents when we want to exploit information about the critical point structure of $\Re V'$, even if we do not have information on the sign of  $\Re V-\omega^2$. Once again, one has to ensure that errors introduced by the term $f'''$  are controlled.
\item The $y$ current is usually reserved best at the $|r^*|\gg 1$ regions, but, as we do not require it to be more regular than $C^0$, it can be very versatile: sometimes it is used to exploit the positivity of $\omega^2-\Re V$; sometimes to exploit the sign of $\Re V'$. The term involving $\Im V$ can be quite worrisome unless either $\Im V$ is sufficiently small in the support of $y$ or one can exploit a special structure in the equation (for the latter, see Lemma~\ref{lemma:h-y-identity-k} as an example).
\end{itemize}

For our system of transformed equations, the multiplier identities for $h$ and $y$ can be rewritten:

\begin{lemma}  \label{lemma:h-y-identity-k} Fix $s\in\mathbb{Z}$ and $0\leq k\leq |s|$. Define $\uppsi_{(k)}$ by \eqref{eq:transformed-transport-separated} and recall the definitions of virial current templates in \eqref{eq:virial-current-definitions}.  Replacing $U$ by $\uppsi_{(k)}$, we have from \eqref{eq:virial-current-identities}
\begin{align} \label{eq:h-identity-k}
\begin{split}
(Q^h)'[\uppsi_{(k)}]&=h|\uppsi_{(k)}'|^2+h\lp[\Re\mc{V}_{(k)}-\frac12(|s|-k)\lp(\frac{w'}{w}\rp)'-\omega^2\rp]|\uppsi_{(k)}|^2+h\Re\lp(\mathfrak{G}_{(k)}\overline{\uppsi_{(k)}}\rp)\\
&\qquad-\frac12 \lp(h''+(|s|-k)\frac{w'}{w} h'\rp)|\uppsi_{(k)}|^2+ awh\sum_{i=0}^{k-1}h\Re\lp[\uppsi_{(i)}\lp(c_{s,k,i}^{\rm id}+imc_{s,k,i}^{\Phi}\rp)\overline{\uppsi_{(k)}}\rp]\,, 
\end{split}\\ 
\label{eq:y-identity-k}
\begin{split}
\lp(Q^y\rp)'[\uppsi_{(k)}]&=y'|\uppsi_{(k)}'|^2+y'\omega^2|\uppsi_{(k)}|^2-(y\Re \tilde{\mc{V}}_{(k)})'|\uppsi_{(k)}|^2 +2y\Re\lp(\mathfrak{G}_{(k)}\overline{\uppsi_{(k)}}'\rp)\\
&\qquad+2y(|s|-k)w'\lp\{w|\uppsi_{(k+1)}|^2-\lp(\omega-\frac{am}{r^2+a^2}\rp)\Re\lp[\uppsi_{(k+1)}\overline{\uppsi_{(k)}}\rp]\rp\} \\
&\qquad+ 2y(|s|-k)\frac{4amrw}{(r^2+a^2)}\lp\{ w\Im \lp[\uppsi_{(k+1)}\overline{\uppsi}_{(k)}\rp]+ \lp(\omega-\frac{am}{r^2+a^2}\rp)|\uppsi_{(k)}|^2\rp\}\,.
\end{split}
\end{align}
\end{lemma}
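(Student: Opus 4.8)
\textbf{Proof plan for Lemma~\ref{lemma:h-y-identity-k}.}

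The plan is to specialize the general virial identities \eqref{eq:virial-current-identities} to the particular structure of the transformed radial ODE \eqref{eq:transformed-k-separated}, reorganizing terms so that the imaginary part of the potential—which is the genuinely dangerous term in \eqref{eq:virial-current-identities}—is reexpressed using the transport relation \eqref{eq:transformed-transport-separated} in a way that trades it for controllable quantities at the next level $k+1$. Concretely, I would start by reading off from \eqref{eq:transformed-k-separated} the ingredients of the model form \eqref{eq:schrodinger-type} with $U=\uppsi_{(k)}$: namely $W = -(|s|-k)(w'/w)$, $V = \mc{V}_{(k)}$ with $\Re V = \Re\mc{V}_{(k)}$ and $\Im V = \Im\mc{V}_{(k)} = \sign s\,(|s|-k)\big[\tfrac{w'}{w}(\omega-\tfrac{am}{r^2+a^2}) - \tfrac{4amrw}{r^2+a^2}\big]$, and $H = \mathfrak{G}_{(k)} + aw\sum_{i=0}^{k-1}(imc_{s,k,i}^{\Phi}+c_{s,k,i}^{\mr{id}})\uppsi_{(i)}$.

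For the $h$-current identity \eqref{eq:h-identity-k}: since the $h$-current is the one virial template whose derivative in \eqref{eq:virial-current-identities} involves neither $\Im V$ nor $U'$ beyond the $|U'|^2$ term, this is essentially a direct substitution. I would plug $W$ into the first line of \eqref{eq:virial-current-identities}: the term $h(\Re V + \tfrac12 W' - \omega^2)|U|^2$ becomes $h[\Re\mc{V}_{(k)} - \tfrac12(|s|-k)(w'/w)' - \omega^2]|\uppsi_{(k)}|^2$, the term $-\tfrac12(h'' - Wh')|U|^2$ becomes $-\tfrac12(h'' + (|s|-k)\tfrac{w'}{w}h')|\uppsi_{(k)}|^2$, and the source term $h\Re(H\overline U)$ splits into $h\Re(\mathfrak{G}_{(k)}\overline{\uppsi_{(k)}})$ plus the coupling sum $awh\sum_{i=0}^{k-1}\Re[\uppsi_{(i)}(c_{s,k,i}^{\mr{id}}+imc_{s,k,i}^{\Phi})\overline{\uppsi_{(k)}}]$. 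Collecting these gives \eqref{eq:h-identity-k} verbatim; this step is routine.

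The substantive step is \eqref{eq:y-identity-k}. Here I would take the second line of \eqref{eq:virial-current-identities} with $W = -(|s|-k)(w'/w)$, so $y' - 2yW = y' + 2y(|s|-k)(w'/w)$; but the cleaner route—and the one matching the stated identity—is to absorb the $W$-contribution back into the $\Im V$ manipulation. The key algebraic observation is that $-2y(\Im V)\Im(U\overline U')$ together with the leftover first-order pieces can be rewritten via the transport equation $w\,\uppsi_{(k+1)} = \mc{L}\uppsi_{(k)} = -\sign s\,\uppsi_{(k)}' + (-i)(\omega - \tfrac{am}{r^2+a^2})\uppsi_{(k)}$ (reading $\mc{L}$ in the separated picture from Section~\ref{sec:radial-ODE-Teukolsky-transformed}), which expresses $\uppsi_{(k)}'$ in terms of $\uppsi_{(k)}$ and $\uppsi_{(k+1)}$. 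The hard part will be carrying out this substitution carefully: one must expand $\Im(\uppsi_{(k)}\overline{\uppsi_{(k)}}')$ and $|\uppsi_{(k)}'|^2$-type quantities hidden in the $W$-modified terms, insert the transport relation, separate real and imaginary parts, and recognize that the $\tfrac{w'}{w}(\omega - \tfrac{am}{r^2+a^2})$ and $\tfrac{4amrw}{r^2+a^2}$ pieces of $\Im\mc{V}_{(k)}$ recombine—after using the transport equation—into precisely the two braced expressions $\{w|\uppsi_{(k+1)}|^2 - (\omega-\tfrac{am}{r^2+a^2})\Re[\uppsi_{(k+1)}\overline{\uppsi_{(k)}}]\}$ and $\{w\Im[\uppsi_{(k+1)}\overline{\uppsi_{(k)}}] + (\omega-\tfrac{am}{r^2+a^2})|\uppsi_{(k)}|^2\}$, weighted by $2y(|s|-k)w'$ and $2y(|s|-k)\tfrac{4amrw}{r^2+a^2}$ respectively. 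The source term $2y\Re(H\overline U')$ contributes $2y\Re(\mathfrak{G}_{(k)}\overline{\uppsi_{(k)}}')$ plus a coupling piece; in the statement as written only the $\mathfrak{G}_{(k)}$ term is displayed, so I would either note that the coupling source term is tacitly grouped with $\mathfrak{G}_{(k)}$ or retain it as written, matching the paper's conventions. Finally I would double-check that $(y\Re\tilde{\mc V}_{(k)})'$ is the intended notation (presumably $\tilde{\mc V}_{(k)} = \Re\mc{V}_{(k)}$ modulo the $W'$ reabsorption, so that the term $-(y\Re V)'|U|^2$ from \eqref{eq:virial-current-identities} matches). The whole lemma is thus a bookkeeping exercise whose only genuine content is the transport-equation rewriting of the $\Im V$ term, and I expect that recombination to be the one place where care is needed.
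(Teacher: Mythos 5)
Your plan matches the paper's proof: for \eqref{eq:h-identity-k} it substitutes $W=-(|s|-k)\,w'/w$ directly into the $h$-template, and for \eqref{eq:y-identity-k} it uses the transport equation twice to rewrite $|\uppsi_{(k)}'|^2$ and $\Im[\uppsi_{(k)}\overline{\uppsi_{(k)}'}]$ in terms of $\uppsi_{(k+1)}$, recombining these with the $\Im\mc{V}_{(k)}$ term exactly as you describe, so that the $\sign s$ factors cancel and the two braced expressions emerge. Your observations that $\tilde{\mc{V}}_{(k)}$ is presumably $\mc{V}_{(k)}$ and that the coupling contribution to the source term $2y\Re(H\overline{\uppsi_{(k)}'})$ is suppressed in the display of \eqref{eq:y-identity-k} are both sharp and correct; the latter is restored in all downstream applications.
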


\begin{proof}
Note that, for $U=\uppsi_{(k)}$, we have
\begin{align*}
W=-(|s|-k)\frac{w'}{w}\,.
\end{align*} 
For the $h$ current, the result follows directly from \eqref{eq:virial-current-identities}. For the $y$ current, given \eqref{eq:transformed-k-separated}, we can easily derive the identities
\begin{align*}
|\uppsi_{(k)}'|^2 &=  w\Re\lp[\uppsi_{(k+1)}\lp(-\sign s\overline{\uppsi_{(k)}}'\rp)\rp]+\sign s \lp(\omega-\frac{am}{r^2+a^2}\rp)\Im\lp[\uppsi_{(k)}\overline{\uppsi_{(k)}}'\rp]\\
&=w^2|\uppsi_{(k+1)}|^2-w\lp(\omega-\frac{am}{r^2+a^2}\rp)\Im\lp[\uppsi_{(k+1)}\overline{\uppsi_{(k)}}\rp]+\sign s \lp(\omega-\frac{am}{r^2+a^2}\rp)\Im\lp[\uppsi_{(k)}\overline{\uppsi_{(k)}}'\rp]\,, \\
\Im\lp[\uppsi_{(k)}\overline{\uppsi_{(k)}}'\rp]&= \sign s\lp[w\Im \lp[\uppsi_{(k+1)}\overline{\uppsi}_{(k)}\rp]+ \lp(\omega-\frac{am}{r^2+a^2}\rp)|\uppsi_{(k)}|^2\rp]\,.
\end{align*}
Hence, adding $-W|U'|$ and $-\Im V\Im(U\overline{U}')$ and using \eqref{eq:radial-ODE-potential-k-tilde}, we obtain
\begin{align*}
&(|s|-k)\frac{w'}{w}|\uppsi_{(k)}'|^2-\Im \mc{V}_{(k)}\Im\lp[\uppsi_{(k)}\overline{\uppsi_{(k)}}'\rp]\\
&\quad=(|s|-k)w'\lp\{w|\uppsi_{(k+1)}|^2-\lp(\omega-\frac{am}{r^2+a^2}\rp)\Re\lp[\uppsi_{(k+1)}\overline{\uppsi_{(k)}}\rp]\rp\} \\
&\quad\qquad+ (|s|-k)\frac{4amrw}{(r^2+a^2)}\sign s\Im\lp[\uppsi_{(k)}\overline{\uppsi_{(k)}}'\rp]\\
&\quad=(|s|-k)w'\lp\{w|\uppsi_{(k+1)}|^2-\lp(\omega-\frac{am}{r^2+a^2}\rp)\Re\lp[\uppsi_{(k+1)}\overline{\uppsi_{(k)}}\rp]\rp\} \\
&\quad\qquad+ (|s|-k)\frac{4amrw}{(r^2+a^2)}\lp\{ w\Im \lp[\uppsi_{(k+1)}\overline{\uppsi}_{(k)}\rp]+ \lp(\omega-\frac{am}{r^2+a^2}\rp)|\uppsi_{(k)}|^2\rp\}\,,
\end{align*}
which concludes the proof.
\end{proof}

We now describe some of the ubiquitous constructions for model virial currents in this paper.

\paragraph{A model $y$ current at large $r$}

For all frequency ranges, we will have a large $r$ current given, for some $\delta\in(0,1]$ and $R^*\geq \delta^{-1}$, by 
\begin{align}
y_\delta&=\lp(1-\frac{(R^*)^{\delta}}{(r^*)^{\delta}}\rp)\mathbbm{1}_{[R^*,\infty)}\Rightarrow y_\delta'=\delta\frac{(R^*)^\delta}{(r*)^{1+\delta}}\mathbbm{1}_{[R^*,\infty)}\,. \label{eq:standard-current-y-delta}
\end{align}
This current satisfies
\begin{align*}
\lp|\frac{wy_\delta}{y_\delta'}\rp|+\lp|\frac{w}{y_\delta'}\rp|&\leq 2\lp(\frac{r^*}{r}\rp)^{1+\delta}\lp(\frac{R}{R^*}\rp)^\delta\frac{1}{\delta R}\mathbbm{1}_{(R^*,\infty)} 
\leq B(M)\frac{1}{\sqrt{R^*}}\mathbbm{1}_{[R^*,\infty)}\,. \numberthis \label{eq:standard-current-y-delta-properties}
\end{align*}

\paragraph{Exponential-type $y$ currents}

Two $y$ currents we will require often are the $\hat{y}_0$ and $y_0$. For some $\delta\in(0,1]$, $R^*\geq \delta^{-1}$ and $C\geq 1$ to be fixed, these are given by
\begin{align*}
y_0(r^*)&:=\exp\lp(-C\int_{r(r^*)}^\infty \frac{dr}{\Delta} \rp)\mathbbm{1}_{[-R^*,R^*]}+\frac{y_0(-R^*)(R^*)^{1/2}}{(-r^*)^{1/2}}\mathbbm{1}_{(-\infty,-R^*)}+y_0(R^*)\lp(2-\frac{(R^*)^{\delta}}{(r^*)^{\delta}}\rp)\mathbbm{1}_{(R^*,\infty)}\,,\\
y'_0(r^*)&=\frac{y_0}{2(-r^*)}\mathbbm{1}_{(-\infty,-R^*)}+C\frac{y_0}{r^2+a^2}\mathbbm{1}_{[-R^*,R^*)}+\delta\frac{y(R^*)(R^*)^{\delta}}{(r^*)^{1+\delta}}>0\,, \numberthis \label{eq:standard-current-y0}
\end{align*}
and
\begin{align*}
\hat y_0(r^*)&:=-\exp\lp(-C(r-r_+)\rp)\mathbbm{1}_{[-R^*,R^*]}+\hat y(-R^*)\lp(2-\frac{(R^*)^{1/2}}{(-r^*)^{1/2}}\rp)\mathbbm{1}_{(-\infty,-R^*)}+\frac{\hat y_0(R^*)(R^*)^{\delta}}{(r^*)^{\delta}}\mathbbm{1}_{(R^*,\infty)}\,,\\
\hat y'_0(r^*)&=\frac{\hat y_0(-R^*)(R^*)^{1/2}}{2(-r^*)^{3/2}}\mathbbm{1}_{(-\infty,-R^*)}+C\frac{\hat y_0}{r^2+a^2}\mathbbm{1}_{[-R^*,R^*)}+\delta\frac{\hat{y}_0}{r^*}\mathbbm{1}_{(R^*,\infty)}>0\,, \numberthis \label{eq:standard-current-y0-hat}
\end{align*}
so that $y_0(\infty)=2y(R^*)$, $y_0(-\infty)=0$, $\hat y_0(\infty)=0$ and $\hat y_0(-\infty)=2\hat y_0(-R^*)$. We note that, though defined on the entire real line, these currents are clearly stronger at one of the ends ($r^*=\infty$ for $y$ and $r^*=-\infty$ for $\hat{y}$) and very weak at the other. Moreover, these satisfy
\begin{align*}
\lp|\frac{wy_0}{y_0'}\rp|&\leq 2\lp(\frac{r^*}{r}\rp)^{1+\delta}\lp(\frac{R}{R^*}\rp)^\delta\frac{1}{\delta R}\mathbbm{1}_{(R^*,\infty)} +\lp|\frac{wy_0}{y_0'}\rp|\mathbbm{1}_{(-\infty,R^*]}\\
&\leq B(M)\lp(\frac{\mathbbm{1}_{[-R^*,R^*]^c}}{\sqrt{R^*}}+\frac{\mathbbm{1}_{[-R^*,R^*]}}{C}\rp)\leq B(M) \max\{R^{-1/2},C^{-1}\}\,, \numberthis \label{eq:standard-current-y0-properties}\\
\lp|\frac{w}{y_0'}\rp|
&\leq B\lp(w+\frac{1}{R^*|y_0(-R^*)|}\rp)\mathbbm{1}_{[-R^*,R^*]^c}+B \frac{|y_0(-R^*)|^{-1}}{C}\mathbbm{1}_{[-R^*,R^*]}\leq B \frac{|y_0(-R^*)|^{-1}}{C}\leq B \exp(BCR^*)\,,
\end{align*}
and, similarly,
\begin{align*}
\lp|\frac{w\hat y_0}{\hat y_0'}\rp|\leq B(M) \max\{R^{-1/2},C^{-1}\}\,, \qquad \lp|\frac{w}{\hat y_0'}\rp|\leq  B \frac{|\hat y_0(R^*)|^{-1}}{C}\leq B \exp(BCR^*)\,, \numberthis\label{eq:standard-current-y0-hat-properties}
\end{align*}
as long as $R$ is large enough that $\sqrt{R}\geq \delta^{-1}$ and $y_0(R^*),|\hat{y}_0(R^*)|\geq 1/2$.

\paragraph{A model $f$ current near the maximum of the real potential}

Whenever we are looking to counteract the change of sign of a potential $\Re V$ at its maximum point, localized at, say $r=r_{\rm max}$, it will be convenient to consider an $f$-type current with function
\begin{align}
f_0(r)=\arctan(r-r_{\rm max})\,. \label{eq:standard-current-f0}
\end{align}
Differentiating,
\begin{align*}
f_0'=\frac{1}{(r-r_{\rm max})^2+1}\frac{\Delta}{r^2+a^2}\,, \qquad |f_0'''|\leq B\frac{w(r-M)^2}{r^4} \,,
\end{align*}
so that we have $f_0>0$ for $r\in[r_{\rm max},\infty)$, $f_0<0$ for $r\in[r_+,r_{\rm max}]$ and $f_0'> 0$ for $r^*\in(-\infty,\infty)$. Moreover,
\begin{align}
wf_0'\leq B w\,, \quad \frac{w^2f_0^2}{f_0'} \leq B w\,. \label{eq:standard-current-f0-properties}
\end{align}

\subsection{The Killing energy currents}

Let $U(r^*)$ be a sufficiently regular function. We define the frequency-localized energy currents
\begin{align}
Q^T[U]&=-\omega\Im(U'\overline{U})\,, \label{eq:micro-T-current}\\
Q^K[U]&=-(\omega-m\upomega_+)\Im(U'\overline{U})\label{eq:micro-K-current}\,,
\end{align}
which, if $U$ satisfies the ODE \eqref{eq:schrodinger-type}, have derivatives
\begin{align*}
\begin{split}
(Q^T)'[U] &=-\omega \Im(H\overline{U})+\omega \Im V |U|^2-\omega  W\Im(\overline{U}U')\,, \\
(Q^K)'[U] &=-(\omega-m\upomega_+) \Im(H\overline{U})+(\omega-m\upomega_+) \Im V |U|^2-(\omega-m\upomega_+)  W\Im(\overline{U}U')\,.
\end{split} \numberthis\label{eq:Killing-current-identities}
\end{align*}
These identities are obtained multiplying \eqref{eq:schrodinger-type} by $i\omega\overline{U}$ and $i(\omega-m\upomega_+)\overline{U}$, respectively, which are the frequency space analogues of $T\overline{U}$ and $K\overline{U}$ (see definitions in Section \ref{sec:prelims-vector-fields-operators}), and taking the real part.

For our system of transformed equations, the above identities can be rewritten:
\begin{lemma}  \label{lemma:Killing-identity-k} Fix $s\in\mathbb{Z}$ and $0\leq k\leq |s|$. Define $\uppsi_{(k)}$ by \eqref{eq:transformed-transport-separated} and recall the definitions of virial current templates in \eqref{eq:virial-current-definitions}.  Replacing $U$ by $\uppsi_{(k)}$, we have from \eqref{eq:Killing-current-identities}
\begin{align*}
(Q^T)'[\uppsi_{(k)}] &=-\omega \Im[\mathfrak{G}_{(k)}\overline{\uppsi_{(k)}}]-\omega\sum_{j=0}^{k-1}\Im[(ac_{s,k,j}^{\rm id}+imc_{s,k,j}^\Phi)\uppsi_{(j)}\overline{\uppsi_{(k)}}]\\
&\qquad -\sign s (|s|-k)\omega\lp[w'\Im[\uppsi_{(k+1)}\overline{\uppsi_{(k)}}]+\frac{4amrw}{r^2+a^2}|\uppsi_{(k)}|^2\rp]\,, \\
(Q^K)'[\uppsi_{(k)}] &=-(\omega-m\upomega_+) \Im[\mathfrak{G}_{(k)}\overline{\uppsi_{(k)}}]-(\omega-m\upomega_+)\sum_{j=0}^{k-1}\Im[(ac_{s,k,j}^{\rm id}+imc_{s,k,j}^\Phi)\uppsi_{(j)}\overline{\uppsi_{(k)}}]\\
&\qquad-\sign s (|s|-k)(\omega-m\upomega_+) \lp[w'\Im[\uppsi_{(k+1)}\overline{\uppsi_{(k)}}]+\frac{4amrw}{r^2+a^2}|\uppsi_{(k)}|^2\rp]\,. 
\end{align*}
\end{lemma}

\subsection{The Teukolsky--Starobinsky energy current}
\label{sec:wronskian-current}

In this section, we will introduce alternative energy currents which are better suited for the $s\neq 0$ case of the radial ODE~\eqref{eq:radial-ODE-u}, or equivalently for the transformed system \eqref{eq:transformed-k-separated}. For simplicity we drop all sub and superscripts other than the spin in what follows. Without loss of generality, let  $s\geq 0$ and set
\begin{align}
\begin{split}
q^W\lp[\swei{\uppsi}{\pm s}_{(0)}\rp]&:=\lp(w^{-s/2}\swei{\uppsi}{\pm s}_{(0)}\rp)'\cdot\overline{(r^2+a^2)^{s+1/2}(\mc{D}_0^\pm)^{2s}\lp((r^2+a^2)^{s-1/2}\swei{\uppsi}{\pm s}_{(0)}\rp)}\\
&\qquad \lp.-w^{-s/2}\swei{\uppsi}{\pm s}_{(0)}\cdot\lp[w^{s}(r^2+a^2)^{s+1/2}\overline{(\mc{D}_0^\pm)^{2s}\lp((r^2+a^2)^{s-1/2}\swei{\uppsi}{\pm s}_{(0)}\rp)}\rp]'\rp\} \,, 
\end{split}\label{eq:q}
\end{align} 
or, equivalently,
\begin{align*}
q^W\lp[\swei{u}{\pm s}\rp]&:=\lp(\swei{u}{\pm s}\rp)'\cdot\overline{\Delta^{s/2}(r^2+a^2)^{1/2}(\mc{D}_0^\pm)^{2s}\lp(\Delta^{s/2}(r^2+a^2)^{-1/2}\swei{u}{\pm s}\rp)}\\
&\qquad -\swei{u}{\pm s}\cdot\lp[\overline{\Delta^{s/2}(r^2+a^2)^{1/2}(\mc{D}_0^\pm)^{2s}\lp(\Delta^{s/2}(r^2+a^2)^{-1/2}\swei{u}{\pm s}\rp)}\rp]' \,.
\end{align*} 
Then, we have
\begin{align*}
\lp(q^W\lp[\swei{\uppsi}{\pm s}_{(0)}\rp]\rp)'&=\frac{\swei{\mathfrak{G}}{\pm s}_{(0)}}{w}w(r^2+a^2)^{s+1/2}\overline{(\mc{D}_0^\pm)^{2s}\lp[(r^2+a^2)^{s-1/2}\swei{\uppsi}{\pm s}_{(0)}\rp]}\\
&\qquad- \swei{\uppsi}{\pm s}_{(0)}w(r^2+a^2)^{s+1/2}\overline{(\mc{D}_0^\pm)^{2s}\lp[(r^2+a^2)^{s-1/2}\frac{\swei{\mathfrak{G}}{\pm s}_{(0)}}{w}\rp]} \,.
\end{align*}

\begin{remark} These definitions are motivated by the fact that, for any two solutions of the homogeneous radial ODE \eqref{eq:radial-ODE-u} of spin $\pm s$ and $\omega\in\mathbb{R}$,  $\swei{u}{+s}$ and $\swei{u}{-s}$, respectively, the Wronskian 
\begin{equation}\label{eq:wronskian-current}
W\lp(\swei{u}{+s},\overline{\swei{u}{-s}}\rp)=\lp(\swei{u}{+s}\rp)'\cdot\overline{\swei{u}{-s}}-\swei{u}{+s}\cdot\lp(\overline{\swei{u}{-s}}\rp)'
\end{equation}
is conserved in $r^*$, since $\swei{V}{s}=\overline{\swei{V}{+s}}$. As the Teukolsky--Starobinsky identities provide a way of generating solutions of spin $+s$ from solutions of spin $-s$ and vice-versa, to obtain an identity for one of the sign of spin, we fix the second solution of the pair to be obtained via these identities. E.g.\, we may impose
\begin{align*}
\swei{R}{-s}:=\Delta^s\lp(\mc{D}_0^+\rp)^{2s}\lp(\Delta^s\swei{R}{+s}\rp)\,, \qquad \swei{H}{-s}:=\Delta^{-s/2}(r^2+a^2)^{1/2}\lp(\mc{D}_0^+\rp)^{2s}\lp(\Delta^{s/2}(r^2+a^2)^{-1/2} \swei{H}{+s}\rp)\,,
\end{align*}
to  obtain the identity for spin $+ s$, and, to obtain the identity for spin $- s$, we impose
\begin{align*}
\swei{R}{+s}:=\lp(\mc{D}_0^-\rp)^{2s}\swei{R}{-s}\,, \qquad \swei{H}{+s}:=\Delta^{-s/2}(r^2+a^2)^{1/2}\lp(\mc{D}_0^-\rp)^{2s}\lp(\Delta^{s/2}(r^2+a^2)^{-1/2} \swei{H}{-s}\rp)\,.
\end{align*}
The relations $\uppsi_{(0)}=w^{|s|/2}u$ and $\mathfrak{G}_{(0)}=w^{|s|/2}H$  follow from the definitions of $\uppsi_{(0)}$, $u$, $\mathfrak{G}_{(0)}$ and $H$.
\end{remark}

The following lemma is crucial to the application of the Teukolsky--Starobinsky energy current.

\begin{lemma}[Teukolsky--Starobinsky energy currents]\label{lemma:wronskian-energy-currents} Fix $M>0$, $|a|\leq M$ and $s\in\mathbb{Z}_{\geq 0}$. Let $(\omega,m,\Lambda)$ be an admissible frequency triple with respect to $s$ and $a$ with $\omega\in\mathbb{R}\backslash\{0,m\upomega_+\}$ and define $\mathfrak C_s=\mathfrak C_s(\omega,m,\Lambda)$ by Proposition~\ref{prop:TS-angular-radial-constant}. Dropping most sub and superscripts, let $\swei{\upalpha}{\pm s}$ be solutions to \eqref{eq:radial-ODE-alpha} given by \eqref{eq:alpha-general-asymptotics} for some complex $\swei{a}{\pm s}_{\mc{H}^\pm}$, $\swei{a}{\pm s}_{\mc{I}^\pm}$. Define $\swei{\uppsi}{\pm s}_{(0)}$ via \eqref{eq:def-psi0-separated} and $\swei{A}{\pm s}_{\mc{H}^\pm}$, $\swei{A}{\pm s}_{\mc{I}^\pm}$ via \eqref{eq:a-to-A-infinity}, and either \eqref{eq:a-to-A-horizon-extremal}, if $|a|=M$, or \eqref{eq:a-to-A-horizon-sub} otherwise. Define the $T$-type and $K$-type Teukolsky--Starobinsky energy currents, respectively, by 
\begin{align*}
Q^{W,\,T}_\pm\lp[\swei{\uppsi}{\pm s}_{(0)}\rp]&:=\mp \frac12 \omega\Im\lp(i^{2s}q^W_\pm\lp[\swei{\uppsi}{\pm s}_{(0)}\rp]\rp)\,, \quad
Q^{W,\,K}_\pm\lp[\swei{\uppsi}{\pm s}_{(0)}\rp]&:=\mp \frac12(\omega-m\upomega_+)\Im\lp( i^{2s}q^W_\pm\lp[\swei{\uppsi}{\pm s}_{(0)}\rp]\rp)\,. 
\end{align*}

The boundary terms for these currents are as follows.
\begin{itemize}
\item For any $|a|\leq M$, we have
\begin{align}\label{eq:bdry-wronskian-T-K-infinity}
\begin{split}
Q^{W,\,T}_\pm (+\infty)&=\mp \omega^2\lp[(2\omega)^{2s}\lp|\swei{a}{\pm s}_{\mc{I}^\pm }\rp|^2-\frac{\mathfrak{C}_s}{(2\omega)^{2s}}\lp|\swei{a}{\pm s}_{\mc{I}^\mp}\rp|^2\rp]=\mp\omega^2\lp[\lp|\swei{A}{\pm s}_{\mc{I}^\pm }\rp|^2-\frac{\mathfrak{C}_s}{\mathfrak{D}_{s}^{\mc I}}\lp|\swei{A}{\pm s}_{\mc{I}^\mp}\rp|^2\rp]\,,\\
Q^{W,\,K}_\pm (+\infty)&=\mp\omega(\omega-m\upomega_+)\lp[(2\omega)^{2s}\lp|\swei{a}{\pm s}_{\mc{I}^\pm }\rp|^2-\frac{\mathfrak{C}_s}{(2\omega)^{2s}}\lp|\swei{a}{\pm s}_{\mc{I}^\mp}\rp|^2\rp]\\
&=\mp\omega(\omega-m\upomega_+)\lp[\lp|\swei{A}{\pm s}_{\mc{I}^\pm }\rp|^2-\frac{\mathfrak{C}_s}{\mathfrak{D}_{s}^{\mc I}}\lp|\swei{A}{\pm s}_{\mc{I}^\mp}\rp|^2\rp]\,.
\end{split}
\end{align}
\item If $|a|=M$ or if $s=0$, we have
\begin{align}\label{eq:bdry-wronskian-T-K-horizon-extremal-integer}
\begin{split}
-Q^{W,\,T}_\pm(-\infty)&=\mp\omega(\omega-m\upomega_+)2M^2\lp[\frac{\mathfrak C_s}{[4M^2(\omega-m\upomega_+)]^{2s}}\lp|\swei{a}{\pm s}_{\mc{H}^\pm}\rp|^2-[4M^2(\omega-m\upomega_+)]^{2s}\lp|\swei{a}{\pm s}_{\mc{H}^\mp}\rp|^2\rp]\\
&=\mp \omega(\omega-m\upomega_+)\lp[\frac{\mathfrak C_s}{\mathfrak{D}_s^{\mc{H}}}\lp|\swei{A}{\pm s}_{\mc{H}^\pm}\rp|^2-\lp|\swei{A}{\pm s}_{\mc{H}^\mp}\rp|^2\rp]\,,\\
-Q^{W,\,K}_\pm(-\infty)&=\mp(\omega-m\upomega_+)^22M^2\lp[\frac{\mathfrak C_s}{[4M^2(\omega-m\upomega_+)]^{2s}}\lp|\swei{a}{\pm s}_{\mc{H}^\pm}\rp|^2-[4M^2(\omega-m\upomega_+)]^{2s}\lp|\swei{a}{\pm s}_{\mc{H}^\mp}\rp|^2\rp]\\
&=\mp(\omega-m\upomega_+)^2\lp[\frac{\mathfrak C_s}{\mathfrak{D}_s^{\mc{H}}}\lp|\swei{A}{\pm s}_{\mc{H}^\pm}\rp|^2-\lp|\swei{A}{\pm s}_{\mc{H}^\mp}\rp|^2\rp]\,.
\end{split}
\end{align}

\item If $|a|<M$, then we have
\begin{align*}
-Q^{W,\,T}_\pm(-\infty)&=\mp\omega(\omega-m\upomega_+)2Mr_+\lp[\frac{\mathfrak{C}_s}{\mathfrak{C}_s^{(9)}}(r_+-r_-)^{s(1\pm 1)}\lp|\swei{a}{\pm s}_{\mc{H}^\pm}\rp|^2-\mathfrak{C}_s^{(10)}(r_+-r_-)^{-s(1\mp 1)}\lp|\swei{a}{\pm s}_{\mc{H}^\mp}\rp|^2\rp]\\
&=\mp \omega(\omega-m\upomega_+)\lp[\frac{\mathfrak C_s}{\mathfrak{D}_s^{\mc{H}}}\lp|\swei{A}{\pm s}_{\mc{H}^\pm}\rp|^2-\lp|\swei{A}{\pm s}_{\mc{H}^\mp}\rp|^2\rp]\,,\numberthis \label{eq:bdry-wronskian-T-K-horizon-sub-integer}\\
Q^{W,\,K}_\pm(-\infty)&=\mp(\omega-m\upomega_+)^22Mr_+\lp[\frac{\mathfrak{C}_s}{\mathfrak{C}_s^{(9)}}(r_+-r_-)^{s(1\pm 1)}\lp|\swei{a}{\pm s}_{\mc{H}^\pm}\rp|^2-\mathfrak{C}_s^{(10)}(r_+-r_-)^{-s(1\mp 1)}\lp|\swei{a}{\pm s}_{\mc{H}^\mp}\rp|^2\rp] \\
&=-(\omega-m\upomega_+)^2\lp[\frac{\mathfrak C_s}{\mathfrak{D}_s^{\mc{H}}}\lp|\swei{A}{\pm s}_{\mc{H}^\pm}\rp|^2-\lp|\swei{A}{\pm s}_{\mc{H}^\mp}\rp|^2\rp]\,,
\end{align*}
where we have
\begin{align*}
\mathfrak{C}_s^{(9)}&=\prod_{j=1}^{|s|}\lp\{\lp[4Mr_+(\omega-m\upomega_+)\rp]^2+(s-j)^2(r_+-r_-)^2\rp\}\,, \\ \mathfrak{C}_s^{(10)}&=\prod_{j=0}^{|s|-1}\lp\{\lp[4Mr_+(\omega-m\upomega_+)\rp]^2+(s-j)^2(r_+-r_-)^2\rp\}\,.
\end{align*}
\end{itemize}

Moreover, for a smooth $\chi(r^*)$ of compact support in $r^*$ in a region such that $r\geq L\gg 1$ and with size comparable to $L$, one has
\begin{align}
&\int_{-\infty}^\infty \chi' Q^{W,\, \{T,K\}}_{\pm}-\int_{-\infty}^\infty \chi' \{\omega,\omega-m\upomega_+\} \Im\lp[\Psi'\overline{\Psi}\rp] \label{eq:local-errors-Wronskian} \\
&\quad \leq \frac{B\{|\omega|,|\omega-m\upomega_+|\}}{L}\int_{\supp(\chi')} \lp(1+\frac{|am|}{L}\rp)\sum_{k=0}^{|s|}\sum_{j=0}^{|s|-1}L^{(|s|-k)+(|s|-j)}\lp[\lp|\uppsi_{(k)}'\rp|\lp|\uppsi_{(j)}\rp|+L^{-1}\lp|\uppsi_{(k)}\rp|\lp|\uppsi_{(j)}\rp|\rp], \nonumber
\end{align}
where the alternatives in braces correspond to the $T$-type and $K$-type Teukolsky--Starobinsky currents as defined above.
\end{lemma}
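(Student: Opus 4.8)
\textbf{Proof plan for Lemma~\ref{lemma:wronskian-energy-currents}.}

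The plan is to establish the three kinds of statements in order: first the defining conservation-type identity for $q^W_\pm$, then the boundary term evaluations at $r^*=\pm\infty$, and finally the localization error estimate \eqref{eq:local-errors-Wronskian}. For the first part, I would start from the expression for $\lp(q^W\lp[\swei{\uppsi}{\pm s}_{(0)}\rp]\rp)'$ given just before the lemma statement, which follows by a direct differentiation using the radial ODE \eqref{eq:radial-ODE-u} for $\swei{u}{\pm s}$ and the fact that $\swei{V}{s}=\overline{\swei{V}{-s}}$; the key input is that applying $\Delta^{s/2}(r^2+a^2)^{1/2}(\mc D_0^\pm)^{2s}\lp(\Delta^{s/2}(r^2+a^2)^{-1/2}\cdot\rp)$ to a solution of the spin $\pm s$ radial ODE produces (up to the inhomogeneity) a solution of the spin $\mp s$ radial ODE -- this is the content of the Teukolsky--Starobinsky identities, Proposition~\ref{prop:TS-radial}, and in the homogeneous case makes $q^W_\pm$ a genuine Wronskian, hence constant in $r^*$. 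Taking $\mp\frac12\omega\,\Im(i^{2s}\cdot)$ or $\mp\frac12(\omega-m\upomega_+)\,\Im(i^{2s}\cdot)$ then gives $Q^{W,T}_\pm$ and $Q^{W,K}_\pm$ with the stated derivatives (zero in the homogeneous case, and a specific bilinear expression in $\uppsi_{(0)}$, $\mathfrak G_{(0)}$ otherwise).

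For the boundary terms, I would feed into $q^W_\pm$ the asymptotic expansions of $\swei{\upalpha}{\pm s}$ from Lemma~\ref{lemma:aymptotic-analysis-radialODE-Teukolsky}, equivalently of $\swei{u}{\pm s}$, expressed via Definition~\ref{def:uhor-uout} and the decomposition \eqref{eq:alpha-general-asymptotics}. At $r^*=+\infty$, the functions $\swei{\upalpha}{\pm s}_{\mc I^\pm}$ behave like $e^{\pm i\omega r}r^{\pm 2iM\omega - s(1\pm 1)-1}$, and applying $(\mc D_0^\pm)^{2s}$ to the relevant rescalings produces, by Proposition~\ref{prop:TS-radial}, constants $\mathfrak C_s^{(1)}=(2i\omega)^{2s}$, $\mathfrak C_s^{(3)}=\mathfrak C_s/(2i\omega)^{2s}$, etc.; the cross terms between $\mc I^+$ and $\mc I^-$ oscillate and drop out in the limit, leaving precisely the $\lp|\swei{a}{\pm s}_{\mc I^\pm}\rp|^2$ and $\lp|\swei{a}{\pm s}_{\mc I^\mp}\rp|^2$ combinations in \eqref{eq:bdry-wronskian-T-K-infinity}; one then translates to the $\swei A{}{}$ coefficients using \eqref{eq:a-to-A-infinity} and the definition $\mathfrak D_s^{\mc I}=\mathfrak D_{s,|s|}^{\mc I}$. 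The horizon computations at $r^*=-\infty$ are identical in spirit: for $|a|=M$ one uses the rank-$1$ irregular-singularity expansions and $\mathfrak C_s^{(2)}=(-2\beta)^{2s}$, $\mathfrak C_s^{(6)}=(2\beta)^{2s}$; for $|a|<M$ one uses the regular-singularity expansions with $\mathfrak C_s^{(2)}=\prod_{j=0}^{2s-1}(2\xi+s-j)$, $\mathfrak C_s^{(6)}=(r_+-r_-)^{2|s|}\prod_{j=0}^{2s-1}(-2\xi+s-j)$, which after squaring and pairing rearrange into $\mathfrak C_s^{(9)}$ and $\mathfrak C_s^{(10)}$; the translation to $\swei A{}{}$ coefficients uses \eqref{eq:a-to-A-horizon-extremal} or \eqref{eq:a-to-A-horizon-sub}, and one verifies the combinations $\mathfrak C_s/\mathfrak C_s^{(9)}\cdot(r_+-r_-)^{s(1\pm1)}$ etc.\ collapse to $\mathfrak C_s/\mathfrak D_s^{\mc H}$. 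The case $s=0$ is trivial since then $q^W_\pm$ reduces to the ordinary Wronskian and $\mathfrak C_0=\mathfrak D_0^{\mc I}=\mathfrak D_0^{\mc H}=1$.

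For the localization estimate \eqref{eq:local-errors-Wronskian}, the idea is that the integrand $\chi'\lp(Q^{W,\{T,K\}}_\pm - \{\omega,\omega-m\upomega_+\}\Im[\Psi'\overline\Psi]\rp)$ measures the discrepancy, on the support of $\chi'$ (a region with $r\geq L\gg1$ of width $\sim L$), between the Teukolsky--Starobinsky energy flux expressed through $\uppsi_{(0)}$ and the naive $\p_t$ (or $K$) energy flux expressed through $\Psi=\uppsi_{(|s|)}$. One rewrites $q^W_\pm$ entirely in terms of the chain $\uppsi_{(0)},\dots,\uppsi_{(|s|)}$ by repeatedly substituting the transport relations $w\,\uppsi_{(k+1)}=\mc L\uppsi_{(k)}$ from \eqref{eq:transformed-transport-separated}: each application of $\mc D_0^\pm$ or $\mc L$ converts a derivative into a lower-level variable plus $r$-decaying corrections of size $O(1/L)$ in the region $r\geq L$, so that $q^W_\pm - (\text{Wronskian of }\Psi)$ becomes a finite sum of bilinear terms $\uppsi_{(k)}'\,\overline{\uppsi_{(j)}}$ and $\uppsi_{(k)}\,\overline{\uppsi_{(j)}}$ each carrying at least one extra power of $1/L$ (or $|am|/L^2$ from the $am/(r^2+a^2)$ pieces of $\mc L$ and $\mc D_0^\pm$), and with $r$-weights that in the support of $\chi'$ are bounded by $L^{(|s|-k)+(|s|-j)}$ given the asymptotic orders in Lemma~\ref{lemma:aymptotic-analysis-radialODE-transformed}. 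Multiplying by $\chi'=O(1/L)$ and the overall $\{|\omega|,|\omega-m\upomega_+|\}$ from the energy current, then taking absolute values termwise, yields the claimed bound. The main obstacle I anticipate is bookkeeping in this last step: tracking precisely how many powers of $L$ and of $|am|/L$ are gained at each substitution, and checking that no term escapes the stated $r$-weight budget -- this is where the explicit structure of $\mc L$ and $\mc D_0^\pm$ (in particular the cancellation of the $\pm i\omega$ against the phase, exploited in the proof of Lemma~\ref{lemma:aymptotic-analysis-radialODE-transformed}) must be used carefully, though it is entirely mechanical once that cancellation is isolated.
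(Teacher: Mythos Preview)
Your proposal is correct and matches the paper's approach closely. For the boundary terms, the paper does exactly what you describe: it computes the Wronskian $W(u^{[+s]},\overline{u^{[-s]}})$ at $r\to r_+$ and $r\to\infty$ using the asymptotic expansions from Lemma~\ref{lemma:aymptotic-analysis-radialODE-Teukolsky}, then invokes the Teukolsky--Starobinsky identities (Proposition~\ref{prop:TS-radial}) to reduce to a single sign of spin, and finally translates to the $\swei{A}{\pm s}$ coefficients via \eqref{eq:a-to-A-infinity}--\eqref{eq:a-to-A-horizon-sub}.

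For the localization estimate, the paper's mechanism is slightly more specific than your description: rather than directly substituting transport relations into $(\mc D_0^\pm)^{2s}[\uppsi_{(0)}]$ (which would attempt to climb $2|s|$ levels in a hierarchy that only has $|s|+1$), the paper first \emph{integrates by parts} $|s|$ times to redistribute the $2|s|$ operators as $|s|$ on each factor, arriving at the symmetric expression \eqref{eq:wronskian-intermediate}. Only then do the transport relations \eqref{eq:transformed-transport-separated} convert each factor to $\Psi$ plus lower-order corrections, isolating $\Im[\Psi'\overline{\Psi}]$ as the leading term. Your anticipated ``bookkeeping obstacle'' is real, and this integration-by-parts step is precisely what resolves it; once you attempt the direct substitution you would discover the need for it.
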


\begin{proof}
We begin by computing the boundary values of the Teukolsky--Starobinsky currents. Consider the representation for $\swei{\upalpha}{\pm s}$ in \eqref{eq:alpha-general-asymptotics}, and let $\swei{u}{\pm s}=(r^2+a^2)^{1/2}\Delta^{\pm s/2}\swei{\upalpha}{\pm s}$ as usual. Since
\begin{align*}
W\lp(\swei{u}{+s},\overline{\swei{u}{-s}}\rp) &= W\lp((r^2+a^2)^{1/2}\Delta^{s/2}\swei{\upalpha}{+s},(r^2+a^2)^{1/2}\Delta^{-s/2}\swei{\upalpha}{-s}\rp)
\end{align*}
we obtain, as $r\to r_+$,
\begin{align*}
W\lp(\swei{u}{+s},\overline{\swei{u}{-s}}\rp)&= \swei{a}{+s}_{\mc{H}^+}\overline{\swei{a}{-s}_{\mc{H}^+}}\cdot W\lp((r^2+a^2)^{1/2}\Delta^{-s/2}\Delta^s\swei{\upalpha}{+s}_{\mc{H}^+},(r^2+a^2)^{1/2}\Delta^{-s/2}\overline{\swei{\upalpha}{-s}_{\mc{H}^+}}\rp)\\
&\qquad +\swei{a}{+s}_{\mc{H}^-}\overline{\swei{a}{-s}_{\mc{H}^-}}\cdot W\lp((r^2+a^2)^{1/2}\Delta^{s/2}\swei{\upalpha}{+s}_{\mc{H}^-},(r^2+a^2)^{1/2}\Delta^{s/2}\Delta^{-s}\overline{\swei{\upalpha}{-s}_{\mc{H}^-}}\rp)\\
&= \swei{a}{+s}_{\mc{H}^+}\overline{\swei{a}{-s}_{\mc{H}^+}}\lp[\Delta\frac{d}{dr}(\Delta^s\swei{\upalpha}{+s}_{\mc{H}^+})\Delta^{-s}\overline{\swei{\upalpha}{-s}_{\mc{H}^+}}-\Delta\frac{d}{dr}\lp(\overline{\swei{\upalpha}{-s}_{\mc{H}^+}}\rp)\swei{\upalpha}{+s}_{\mc{H}^+}\rp]\\
&\qquad+\swei{a}{+s}_{\mc{H}^-}\overline{\swei{a}{-s}_{\mc{H}^-}}\lp[\Delta\frac{d}{dr}(\swei{\upalpha}{+s}_{\mc{H}^-})\overline{\swei{\upalpha}{-s}_{\mc{H}^-}}-\Delta\frac{d}{dr}\lp(\Delta^{-s}\overline{\swei{\upalpha}{-s}_{\mc{H}^-}}\rp)\Delta^s\swei{\upalpha}{+s}_{\mc{H}^-}\rp]\\
&=\lp\{\begin{array}{lr}
-(-2\xi+s)(r_+-r_-) &\text{~if~} |a|<M\\
-2\beta &\text{~if~} |a|=M
\end{array}\rp\}\lp(\swei{a}{+s}_{\mc{H}^+}\overline{\swei{a}{-s}_{\mc{H}^+}}-\swei{a}{+s}_{\mc{H}^-}\overline{\swei{a}{-s}_{\mc{H}^-}}\rp)+O(r-r_+)\,,
\end{align*}
and, as $r\to \infty$,
\begin{align*}
W\lp(\swei{u}{+s},\overline{\swei{u}{-s}}\rp)&=2i\omega\lp(\swei{a}{+s}_{\mc{I}^+}\overline{\swei{a}{-s}_{\mc{I}^+}}-\swei{a}{+s}_{\mc{I}^-}\overline{\swei{a}{-s}_{\mc{I}^-}}\rp)+O(r^{-1})\,,
\end{align*}
where we have used the asymptotic expansions for $\swei{\upalpha}{+s }_{\mc{H}^\pm}$, $\swei{\upalpha}{+s }_{\mc{I}^\pm}$ and for their spin $-s$ counterparts from Lemma~\ref{lemma:aymptotic-analysis-radialODE-Teukolsky}. 

To transform the expressions above into expression in terms of one sign of spin only, we apply the correspondence, given by  the radial Teukolsky--Starobinsky identities \eqref{eq:TS-radial-general} in Proposition~\ref{prop:TS-radial}, between $\swei{a}{+s}_{\mc{H}^\pm}$ and $\swei{\upalpha}{+s }_{\mc{I}^\pm}$ and their spin $-s$ counterparts.   For instance, if we are looking to obtain the expression of the boundary term at $r=\infty$ in terms of spin $-s$, we assume $\swei{\upalpha}{+s}$ is obtained by \eqref{eq:TS-radial-general} (see also \cite[Proposition 2.21]{TeixeiradaCosta2019}). For integer spin, we use \eqref{eq:a-to-A-infinity}, and either \eqref{eq:a-to-A-horizon-extremal} if $|a|=M$, or \eqref{eq:a-to-A-horizon-sub} otherwise, introduced in Lemma~\ref{lemma:uppsi-general-asymptotics}, to simplify. Thus, as $r\to \infty$,
\begin{align*}
W(-\infty)&= 2i\omega\lp(\swei{a}{+s}_{\mc{I}^+}\overline{\swei{a}{-s}_{\mc{I}^+}}-\swei{a}{+s}_{\mc{I}^-}\overline{\swei{a}{-s}_{\mc{I}^-}}\rp)
= \pm 2i\omega\lp[\frac{\mathfrak{C}_s}{(2i\omega)^{2s}}\lp|\swei{a}{\mp s}_{\mc{I}^\pm }\rp|^2-(-2i\omega)^{2s}\lp|\swei{a}{\mp s}_{\mc{I}^\mp }\rp|^2\rp]\\
&= \pm (2i\omega)^{1-2s}\lp[\mathfrak{C}_s\lp|\swei{a}{\mp s}_{\mc{I}^\pm }\rp|^2-(2\omega)^{4s}\lp|\swei{a}{\mp s}_{\mc{I}^\mp }\rp|^2\rp]= \pm 2i\omega\lp[\frac{\mathfrak{C}_s}{\mathfrak{D}_s^{\mc{I}}}\lp|\swei{A}{\mp s}_{\mc{I}^\pm }\rp|^2-(2\omega)^{4s}\lp|\swei{A}{\mp s}_{\mc{I}^\mp }\rp|^2\rp]\,.
\end{align*}
If $|a|=M$, then, as $r\to r_+$, 
\begin{align*}
-W(-\infty)&= 2\beta\lp(\swei{a}{+s}_{\mc{H}^+}\overline{\swei{a}{-s}_{\mc{H}^+}}-\swei{a}{+s}_{\mc{H}^-}\overline{\swei{a}{-s}_{\mc{H}^-}}\rp)
= \pm 2\beta\lp[\frac{\mathfrak{C}_s}{(2\beta)^{2s}}\lp|\swei{a}{\mp s}_{\mc{H}^\pm }\rp|^2-(-2\beta)^{2s}\lp|\swei{a}{\mp s}_{\mc{H}^\mp }\rp|^2\rp]\\
&= \pm \lp[2i(\omega-m\upomega_+)\rp]^{1-2s}(2M^2)^{1-2s}\lp[\mathfrak{C}_s\lp|\swei{a}{\mp s}_{\mc{H}^\pm }\rp|^2-\lp[4M^2(\omega-m\upomega_+)\rp]^{4s}\lp|\swei{a}{\mp s}_{\mc{H}^\mp }\rp|^2\rp]\\
&= \pm 2i(\omega-m\upomega_+)\lp[\frac{\mathfrak{C}_s}{\mathfrak{D}_s^{\mc{H}}}\lp|\swei{A}{\mp s}_{\mc{H}^\pm }\rp|^2-(2\omega)^{4s}\lp|\swei{A}{\mp s}_{\mc{H}^\mp }\rp|^2\rp]\,.
\end{align*}
Finally, if $|a|<M$, since
\begin{align*}
\mathfrak{C}_s^{(2)}:=\prod_{j=0}^{2s-1}(2\xi+s-j)=\begin{dcases}
-2\xi(2\xi+s)(-1)^{(s-1)}\prod_{j=1}^{s-1}\lp[4|\xi|^2+(s-j)^2\rp] &\text{~if $s$ integer}\\
(2\xi+s)(-1)^{s-1/2}\prod_{j=1}^{s-1/2}\lp[4|\xi|^2+(s-j)^2\rp] &\text{~if $s$ half-integer}
\end{dcases}\,,
\end{align*}
we conclude, for $s$ integer,
\begin{align*}
-W(-\infty)&= \mp 2\xi(r_+-r_-)\frac{-2\xi+s}{-2\xi}\lp[\mathfrak{C}_s^{(2)}(r_+-r_-)^{2s-s(1\pm 1)}\lp|\swei{a}{\mp s}_{\mc{H}^\mp}\rp|^2-\frac{\mathfrak{C}_s(r_+-r_-)^{s(1\mp 1)}}{\overline{\mathfrak{C}_s^{(2)}}(r_+-r_-)^{2s}}\lp|\swei{a}{\mp s}_{\mc{H}^\pm}\rp|^2\rp]\\
&=\pm 2i(\omega-m\upomega_+)i^{-2s}(2Mr_+)\\
&\qquad\times\lp\{(2\xi+s)(-2\xi+s)\prod_{j=1}^{s-1}\lp[|4\xi|^2+(s-j)^2\rp](r_+-r_-)^{2s-s(1\pm 1)}\lp|\swei{a}{\mp s}_{\mc{H}^\mp}\rp|^2\rp.\\
&\qquad\qquad \lp. -\frac{\mathfrak{C}_s(r_+-r_-)^{s(1\mp 1)}}{(2\xi)(-2\xi)\prod_{j=1}^{s-1}\lp[|4\xi|^2+(s-j)^2\rp]}(r_+-r_-)^{2s}\lp|\swei{a}{\mp s}_{\mc{H}^\pm}\rp|^2\rp\}\\
&=\pm 2i(\omega-m\upomega_+)i^{-2s}(2Mr_+)\lp\{\mathfrak{C}_s^{(10)}(r_+-r_-)^{-s(1\pm 1)}\lp|\swei{a}{\mp s}_{\mc{H}^\mp}\rp|^2-\frac{\mathfrak{C}_s(r_+-r_-)^{s(1\mp 1)}}{\mathfrak{C}_s^{(9)}}\lp|\swei{a}{\mp s}_{\mc{H}^\pm}\rp|^2\rp\}\,;
\end{align*}
for $s$ half-integer
\begin{align*}
-W(-\infty)&= \pm i^{1-2s} \lp\{\mathfrak{C}_s^{(10)}(r_+-r_-)^{-s(1\pm 1)-1}\lp|\swei{a}{\mp s}_{\mc{H}^\mp}\rp|^2-\frac{\mathfrak{C}_s(r_+-r_-)^{s(1\mp 1)-1}}{\mathfrak{C}_s^{(9)}}\lp|\swei{a}{\mp s}_{\mc{H}^\pm}\rp|^2\rp\}\,.
\end{align*}
Here, we have $\mathfrak{C}_s^{(9)}=\mathfrak{C}_s^{(10)}=1$ when $s=0$, $\mathfrak{C}_s^{(9)}=1$ and $\mathfrak{C}_s^{(10)}=\lp[4Mr_+(\omega-m\upomega_+)\rp]^2+(r_+-r_-)^2/4$ if $s=\pm 1/2$, and
\begin{align*}
\mathfrak{C}_s^{(9)}&=
\begin{dcases}
\displaystyle\prod_{j=1}^{|s|} \lp\{\lp[4Mr_+(\omega-m\upomega_+)\rp]^2+(s-j)^2(r_+-r_-)^2\rp\} &\text{if~} s\in\mathbb{Z}\\
\displaystyle\prod_{j=1}^{|s|-1/2} \lp\{\lp[4Mr_+(\omega-m\upomega_+)\rp]^2+(s-j)^2(r_+-r_-)^2\rp\} &\text{if~} s\in\lp(\frac12\mathbb{Z}\rp)\backslash\mathbb{Z}
\end{dcases}\,,\\
\mathfrak{C}_s^{(10)}&=
\begin{dcases}
\displaystyle\prod_{j=0}^{|s|-1} \lp\{\lp[4Mr_+(\omega-m\upomega_+)\rp]^2+(s-j)^2(r_+-r_-)^2\rp\} &\text{if~} s\in\mathbb{Z}\\
\displaystyle\prod_{j=0}^{|s|-1/2} \lp\{\lp[4Mr_+(\omega-m\upomega_+)\rp]^2+(s-j)^2(r_+-r_-)^2\rp\} &\text{if~} s\in\lp(\frac12\mathbb{Z}\rp)\backslash\mathbb{Z}
\end{dcases}\,.
\end{align*}

We now turn to the localization errors. The goal is to integrate by parts, allowing the $\hat{\mc{D}}_0^\pm$ operators in $q_W$ to be equally distributed between $\uppsi_{(0)}$ and its image under the Teukolsky--Starobinsky maps, and using the relation \eqref{eq:transformed-k-separated} to make $\uppsi_{(k)}$ appear. Indeed, taking $s>0$ without loss of generality, we have
\begin{align*}
\chi' q_W &=  \lp[2w^{s/2}\lp(w^{-s/2}\swei{\uppsi}{\pm s}_{(0)}\rp)'\chi' +\chi''\swei{\uppsi}{\pm s}_{(0)}\rp](r^2+a^2)^{s+1/2}\overline{\lp((r^2+a^2)^{-1}\frac{{\mc{L}}}{w}\rp)^{2s}\lp[(r^2+a^2)^{s-1/2}\swei{\uppsi}{\pm s}_{(0)}\rp]}\\
&= (-1)^s w(r^2+a^2)\lp[(r^2+a^2)^{-1}\frac{\mc{L}}{w}\rp]^s\lp\{\lp[2w^{s/2}\lp(w^{-s/2}\swei{\uppsi}{\pm s}_{(0)}\rp)'\chi' +\chi''\swei{\uppsi}{\pm s}_{(0)}\rp](r^2+a^2)^{s-1/2}w^{-1}\rp\}\\
&\qquad\times \overline{\lp((r^2+a^2)^{-1}\frac{{\mc{L}}}{w}\rp)^{s}\lp[(r^2+a^2)^{s-1/2}\swei{\uppsi}{\pm s}_{(0)}\rp]}\,, \numberthis \label{eq:wronskian-intermediate}
\end{align*}
where we have ignored full derivatives due tot the compact support of $\chi'$ (upon integration, they generate boundary terms that vanish). By this procedure, one obtains, for $s=1$,
\begin{align*}
&\Im\lp(\chi'q_W\rp) \\
&\quad= -2\chi'\Im\lp[\Psi'\overline{\Psi}\rp] +2r\chi'\Im\lp[\Psi' \overline{\uppsi_{(0)}}\rp]-2r\lp(\frac{\chi''}{rw}+\chi'\rp)\Im\lp[\Psi \overline{\uppsi_{(0)}}'\rp]\\
&\quad\qquad-\Im\lp\{\lp[\frac{\chi'''}{w}-\frac{2\chi'' w'}{w^2}-\lp(\frac{w'}{w^2}\rp)'\chi'-4iam\frac{r}{r^2+a^2}\chi'\rp]\lp[\Psi \overline{\uppsi_{(0)}}\rp]\rp\} -2r^2\lp(\frac{\chi''}{rw}+\chi'\rp)\Im\lp[\uppsi_{(0)}'\overline{\uppsi_{(0)}}\rp]\,,
\end{align*}
again ignoring full derivatives. More generally, if $\chi'$ is supported at large $r$, we have 
\begin{align*}
&-\Im\lp(\chi'q_W\rp) \leq 2\chi'\Im\lp[\overline{\Psi} \Psi'\rp] + \frac{B}{L}\sum_{k=0}^{|s|}\sum_{j=0}^{|s|-1}(1+|am|)L^{(|s|-k)+(|s|-j)}\lp[\lp|\uppsi_{(k)}'\rp|\lp|\uppsi_{(j)}\rp|+L^{-1}\lp|\uppsi_{(k)}\rp|\lp|\uppsi_{(j)}\rp|\rp]\,,
\end{align*}
by tracking the $r$ weights in \eqref{eq:wronskian-intermediate}.
\end{proof}

%-----------------------------

\section{Precise statement and proof of Theorem~\ref{thm:frequency-estimates-big}}
\label{sec:freq-estimates-Psi}

The goal of this section will be to prove the main result of this paper, Theorem~\ref{thm:frequency-estimates-big}. 

\subsection{Precise statements of Theorem~\ref{thm:frequency-estimates-big} and outline of the section}

In this section, we prove Theorem~\ref{thm:frequency-estimates-big}A, which can be stated more precisely as follows.
\begin{theorem}\label{thm:ode-estimates-Psi-A} Fix $s\in\{0,\pm 1,\pm 2\}$, $M>0$ and $a_0\in[0,M)$. There are parameters $R^*>0$, $\beta_1(a_0)>0$, $\beta_2(a_0)>0$, $\beta_4>0$, $\omega_{\rm low}>0$, $\omega_{\rm high}>0$, $\varepsilon_{\rm width}>0$, $E>0$ and $E_W>0$ such that the following results hold. Consider Kerr black hole parameters $(a,M)$ satisfying $|a|\leq M$ and let $(\omega,m,\Lambda)$ be an admissible frequency triple with respect to $s$ and $a$. Suppose $\sml{\upalpha}{s}$ is a smooth solution to the \myuline{inhomogeneous} radial ODE~\eqref{eq:radial-ODE-alpha} and, for $k\in\{0,...,|s\}$, define $\smlambda{\lp(\uppsi_{(k)}\rp)}{s}$  via \eqref{eq:def-psi0-separated} and \eqref{eq:transformed-transport-separated}; $\smlambda{\lp(\uppsi_{(k)}\rp)}{s}$ itself satisfies the radial ODE~\eqref{eq:transformed-k-separated}. Assume $\sml{\upalpha}{s}$, and hence $\smlambda{\lp(\uppsi_{(k)}\rp)}{s}$, have outgoing boundary conditions (see Definitions~\ref{def:outgoing-bdry-teukolsky} and \ref{def:outgoing-bdry-uppsi}).

Then, for any admissible $(\omega,m,\Lambda)$ which, if $a\in(a_0,M]$, is \myuline{not} contained in the set 
%$$\mc{A}:=\lp\{(\omega,m,\Lambda)\text{~admissible~}\colon~ \omega_{\rm low}\leq |\omega|\leq \omega_{\rm high}\,,\,\, \Lambda\leq \varepsilon_{\rm width}^{-1}\omega_{\rm high}^2\,,\,\, m^2\leq 2\varepsilon_{\rm width}^{-1}\omega_{\rm high}^2\rp\}$$
%nor, if $a\in(a_0,M]$, in the set
\begin{align*}
&\lp\{(\omega,m,\Lambda)\text{~admissible~}\colon~ \lp(|\omega|\geq \omega_{\rm high}\,,\,\,\varepsilon_{\rm width}\omega^2\leq \Lambda\leq \varepsilon_{\rm width}^{-1}\omega^2\rp) \text{~or~} \lp(\Lambda\geq \varepsilon_{\rm width}^{-1}\max\{\omega^2,\omega_{\rm high}^2\}\rp)\,,\,\,\rp.\\
&\hphantom{\lp\{(\omega,m,\Lambda)\text{~admissible~}\colon~\rp\}~~}\lp.  m^2\upomega_+-\beta_2\Lambda\leq m\omega \leq m^2\upomega_++\beta_1\Lambda\,, \,\, m\omega>0\rp\}\,,
\end{align*}
for all $\delta\in(0,1]$, and for any $\smlambda{\Psi}{s}:=\smlambda{\lp(\uppsi_{(|s|)}\rp)}{s}$ as described, there is a parameter $r_{\rm trap}$ and a choice of functions $y_{(k)}$, $\hat{y}_{(k)}$, $\tilde{y}$, $f$,  $h_{(k)}$, $\chi_1$ and $\chi_2$, for each $0\leq k\leq |s|$ (we drop the subscripts when $k=|s|$), depending on the frequency triple but satisfying the uniform bounds
\begin{gather*}
|y_{(k)}|+|\hat{y}_{(k)}|+|\tilde{y}|+|f|+|f'|+|h_{(k)}|+|\chi_1|+|\chi_2|+|\chi_3|+|\chi_4|+|r_{\rm trap}|+|r_{\rm trap}-r_+|^{-1}\leq B\,,\\
 (|y'_{(k)}|+|\hat{y}_{(k)}'|+|\tilde y'|+|f'|+|h_{(k)}|)(r^{1+\delta}+|r^*|^{3/2})\leq B\,,
\end{gather*}
such that, using the short-hand notation
\begin{align*}
&\sum_{k=0}^{|s|}\mathfrak{G}_k\cdot\lp(f,y,h,\chi\rp)\cdot\lp(\uppsi_{(k)},\uppsi_{(k)}'\rp) \numberthis \label{eq:ODE-estimates-inhomogeneity-terms}\\
&\quad = 2(y+\hat{y}+\tilde{y}+f)\Re\lp[\mathfrak{G}\overline{\Psi}'\rp]+(h+f')\Re\lp[\mathfrak{G}\overline{\Psi}\rp]-E\lp(\chi_2\omega+\chi_1(\omega-m\upomega_+)\rp)\Im\lp[\mathfrak{G}\overline{\Psi}\rp]+\\
&\quad\qquad+E_W\lp[\chi_3(Q^{W,K})'+\chi_4(Q^{W,T})'\rp]+|s|\mathbbm{1}_{\mc F_{\rm unbdd}}\sum_{k=0}^{|s|-1}\Re\lp[\mathfrak{G}_{(k)}\overline{\uppsi_{(k)}}\rp]\\
&\quad\qquad+|s|\mathbbm{1}_{\mc F_{\rm bdd}}\sum_{k=0}^{|s|-1}\lp\{ (y_{(k)}+\hat{y}_{(k)})\Re\lp[\mathfrak{G}_{(k)}\overline{\uppsi_{(k)}}'\rp]+h_{(k)}\Re\lp[\mathfrak{G}_{(k)}\overline{\uppsi_{(k)}}\rp]\rp\}\\
&\quad\qquad +|s|\mathbbm{1}_{\mc F_{\rm bdd}}\sum_{k=0}^{|s|-1}E\lp(\chi_2\omega+\chi_1(\omega-m\upomega_+)\rp)\Im\lp[\mathfrak{G}_{(k)}\overline{\uppsi}_{(k)}\rp]\,,
\end{align*}
we have that $\smlambda{\Psi}{s}$ satisfies, dropping sub and superscripts, 
\begin{gather} \label{eq:ode-estimates-outgoing}
\begin{gathered}
b(a_0)\lp[\lp(|\Psi'|^2+\omega^2\lp|\Psi\rp|^2\rp)_{r=\infty}+\lp(|\Psi'|^2+(\omega-m\upomega_+)^2\lp|\Psi\rp|^2\rp)_{r=r_+}\rp] \\
+ b(a_0,\delta)\int_{-\infty}^\infty \frac{\Delta}{r^2+a^2}\lp\{\frac{1}{r^{1+\delta}}|\Psi'|^2+\lp[(1-r_{\rm trap}r^{-1})\lp(\frac{\omega^2}{r^{1+\delta}}+\frac{\Lambda}{r^{3}}\rp)+\frac{1}{r^{3+\delta}}\rp]|\Psi|^2\rp\}dr^*\\
\leq \int_{-\infty}^\infty \sum_{k=0}^{|s|}\mathfrak{G}_k\cdot\lp(f,y,h,\chi\rp)\cdot\lp(\uppsi_{(k)},\uppsi_{(k)}'\rp)dr^* + \mathbbm{1}_{\mc F_{\rm int,1}}\sum_{k=0}^{|s|}\int_{-R^*}^{R^*}\lp(|\uppsi_{(k)}'|^2+|\uppsi_{(k)}|^2\rp)dr^*\,,
\end{gathered}
\end{gather}
if all the precise boundary term relations of Lemma~\ref{lemma:uppsi-general-asymptotics} hold; or if not then we take $E_W\equiv 0$ and add, to the right hand side of \eqref{eq:ode-estimates-outgoing}
\begin{align*}
\mathbbm{1}_{\mc F_{\rm bdd}}\lp(|\uppsi_{(0)}|^2_{r=r_+} + \sum_{k=0}^{|s|-1}\lp|\frac{\mathfrak G_{(k)}}{w}\rp|^2_{r=r_+}\rp)\,.\numberthis\label{eq:ode-estimates-outgoing-addition}
\end{align*}
\end{theorem}

\begin{remark} In proving Theorem~\ref{thm:ode-estimates-Psi-A} we will, of course, not only control bulk terms of $\swei{\Psi}{s}$ but, by the very nature of the proof, we will in fact control bulk terms of $\uppsi_{(k)}$ for every $k=0,\dots,|s|$. For $k<|s|$, these do not experience any generation at trapping, and we can furthermore control bulk terms in $\uppsi_{(k)}'$ at the cost of replacing $\mathfrak{G}_{(k)}$ by $\mathfrak{G}_{(k)}'$ in the above. For brevity, we have opted not to include this more general statement here, but it will be useful to keep it in mind in our \cite{SRTdC2022}.
\end{remark}

Analogously to the comparable result for $s=0$ in \cite{Dafermos2016b}, we may also consider a version of Theorem~\ref{thm:ode-estimates-Psi-A} which holds for fixed $m$: 
\begin{theorem}  \label{thm:ode-estimates-Psi-fixed-m}
Consider the setup of Theorem~\ref{thm:ode-estimates-Psi-A} once again. For $a\in[0,a_0]$ and for any admissible $m$ with respect to $s$, there is a choice of functions $y_{(k)}$, $\hat{y}_{(k)}$, $\tilde{y}$, $f$,  $h_{(k)}$, $\chi_1$ and $\chi_2$ for each $0\leq k\leq |s|$ and a choice of $\omega_{\rm high}=\omega_{\rm high}(m)$ such that \eqref{eq:ode-estimates-outgoing} holds with constant $b$ depending on $m$ and with either 
\begin{align*}
r_{\rm trap}=0 \text{~~or~~} (1+\sqrt{2})M<r_{\rm trap}\leq B(M,|s|)\,.
\end{align*} 
\end{theorem}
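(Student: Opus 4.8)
The final statement, Theorem~\ref{thm:ode-estimates-Psi-fixed-m}, is a ``fixed-$m$'' refinement of Theorem~\ref{thm:ode-estimates-Psi-A}, and the plan is to extract it essentially as a corollary of the proof of Theorem~\ref{thm:ode-estimates-Psi-A} rather than re-running the entire frequency-space machinery. First I would observe the key structural simplification: once $m$ is fixed and $a \in [0,a_0]$ is restricted to a range bounded \emph{strictly} away from extremality, the problematic frequency regimes excluded in Theorem~\ref{thm:ode-estimates-Psi-A} — namely the superradiant, near-trapped band $m^2\upomega_+ - \beta_2\Lambda \leq m\omega \leq m^2\upomega_+ + \beta_1\Lambda$ with $m\omega>0$ and $|\omega|,\Lambda$ large — can be \emph{avoided entirely} by choosing $\omega_{\rm high}$ large depending on $m$. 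Indeed, for fixed $m$, superradiance $\omega(\omega - m\upomega_+)<0$ forces $|\omega| \leq |m\upomega_+| \leq B(M)|m|$, so any frequency with $|\omega| \geq \omega_{\rm high}(m) > |m\upomega_+|$ is automatically non-superradiant; combined with the quantitative disjointness of trapping and superradiance on subextremal Kerr (Section~\ref{sec:intro-trapping}, equation~\eqref{eq:superrad-trapping-qtitative-intro}), and with $m,\Lambda$ bounded by $B(M)$ whenever the trapping condition $\Lambda \sim \omega^2$ could hold at bounded $\omega$, one sees that for $|\omega| \geq \omega_{\rm high}(m)$ all frequencies are quantitatively non-superradiant and the set $\mc{A}$ plus the excised band together cover exactly the complement of the regime we still need to treat.

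With that reduction in hand, the proof proceeds in three steps. \textbf{Step 1: bounded frequencies.} For $(\omega,m,\Lambda)$ with $|\omega| \leq \omega_{\rm high}(m)$, admissibility plus the bound $\Lambda \leq B(|s|, m, \omega)$ (from Lemma~\ref{lemma:angular-eigenvalues-properties}, since $\Lambda \lesssim \max\{|m|,|s|\}^2 + a^2\omega^2$) means the entire frequency triple is bounded by a constant depending on $m$. One then either directly falls into the small-$\omega$ analysis of Section~\ref{sec:bounded-smallness} if $|\omega| \leq \omega_{\rm low}$, or, for the genuinely bounded-and-bounded-away-from-zero piece $\omega_{\rm low} \leq |\omega| \leq \omega_{\rm high}(m)$, one invokes quantitative mode stability (Theorem~\ref{thm:mode-stability-intro-AB}): since the frequency set is compact with $C_{\mc{A}}$ depending only on $a_0, M, |s|, m$, standard ODE theory converts the non-vanishing of the Wronskian into the desired estimate. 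Here $r_{\rm trap}=0$, since there is no high-frequency trapping loss to absorb. \textbf{Step 2: unbounded frequencies.} For $|\omega| \geq \omega_{\rm high}(m)$, one applies verbatim the unbounded-regime constructions of Section~\ref{sec:unbounded} used for Theorem~\ref{thm:ode-estimates-Psi-A}; the only change is that the superradiant-non-trapped sub-case and the whole excised band never arise, so one is always in the non-superradiant setting, where — as noted in Section~\ref{sec:outline-proof-estimate-unbdd} — the virial plus Teukolsky--Starobinsky (or Killing) energy estimates close without needing the excision. The trapping parameter $r_{\rm trap}$ then satisfies $(1+\sqrt2)M < r_{\rm trap} \leq B(M,|s|)$ exactly as in Theorem~\ref{thm:ode-estimates-Psi-A}, since $r_{\rm trap}$ there is controlled by the critical point of $\Re\mc{V}$, which lies in this range whenever trapping occurs (cf.\ Section~\ref{sec:intro-trapping}). \textbf{Step 3: combining.} One glues the bounded and unbounded estimates using the cutoffs $\chi_i$ exactly as in Section~\ref{sec:combining-ode-estimates}, obtaining \eqref{eq:ode-estimates-outgoing} with $b$ now depending on $m$ through Step~1.

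The main obstacle — really the only nontrivial point — is verifying that the choice $\omega_{\rm high} = \omega_{\rm high}(m)$ can be made so that the excised band of Theorem~\ref{thm:ode-estimates-Psi-A} genuinely becomes empty once $|\omega| \geq \omega_{\rm high}(m)$, uniformly over $a \in [0, a_0]$. This requires checking that the quantitative gap in \eqref{eq:superrad-trapping-qtitative-intro}, i.e.\ the constant $\epsilon$ measuring how far superradiant frequencies are from trapping, is bounded below uniformly for $|a| \leq a_0 < M$ — which it is, precisely because $a_0$ is bounded away from $M$ (the gap only degenerates in the limit $|a| \to M$, $\omega \to m\upomega_+$, per the footnote to Table~\ref{tab:current-wave-RW-operator}) — together with the elementary fact that for fixed $m$, $|\omega - m\upomega_+| \geq |\omega| - |m\upomega_+| \geq \tfrac12|\omega|$ once $|\omega| \geq 2|m\upomega_+|$, so the horizon frequency weight $(\omega - m\upomega_+)^2$ appearing in the boundary terms of \eqref{eq:ode-estimates-outgoing} is comparable to $\omega^2$ and no degeneration of the energy currents occurs. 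Everything else is a direct citation of the constructions already carried out in Sections~\ref{sec:bounded-smallness}, \ref{sec:unbounded} and \ref{sec:combining-ode-estimates}, together with Theorem~\ref{thm:mode-stability-intro-AB}, so no new currents or estimates need to be built.
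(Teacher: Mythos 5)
Your proposal misses the one genuinely new piece of content in Theorem~\ref{thm:ode-estimates-Psi-fixed-m} compared to Theorem~\ref{thm:ode-estimates-Psi-A}: the lower bound $r_{\rm trap}>(1+\sqrt{2})M$ placing the trapping parameter outside the ergoregion. In Step~2 you assert that this bound holds ``exactly as in Theorem~\ref{thm:ode-estimates-Psi-A}, since $r_{\rm trap}$ there is controlled by the critical point of $\Re\mc{V}$, which lies in this range whenever trapping occurs.'' That is false: Theorem~\ref{thm:ode-estimates-Psi-A} only yields the weaker uniform bounds $|r_{\rm trap}|+|r_{\rm trap}-r_+|^{-1}\leq B$, and Section~\ref{sec:intro-trapping} says the opposite of what you cite it for---explicitly that, unless $|a|\ll M$, some trapping radii lie inside the ergoregion. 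The missing ingredient is Lemma~\ref{lemma:critical-points-V0}(iii): when $a\in[0,a_0]$ and the admissible triple satisfies $\Lambda\geq c^{-1}$, $\Lambda\geq c|am\omega|$ \emph{and} $m^2\leq c\Lambda$, the maximum of $\mc{V}_0$ satisfies $r^0_{\rm max}-(1+\sqrt{2})M>b(c)>0$. For \emph{fixed} $m$, the crucial extra smallness $m^2\leq c\Lambda$ becomes automatic in the trapped regime $\mc{F}_{\rm comp,1}$ once $\Lambda\gtrsim\varepsilon_{\rm width}\omega_{\rm high}^2(m)$ is large---this is exactly why $\omega_{\rm high}$ must be taken depending on $m$. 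The paper's proof consists of precisely this: raise $\omega_{\rm high}$ until the hypotheses of Lemma~\ref{lemma:critical-points-V0}(iii) hold, and the improved critical-point location then propagates through Lemma~\ref{lemma:V-trapping} to $r_{\rm trap}=r_{\rm max}>(1+\sqrt{2})M$.

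Your observations about superradiance are correct but tangential. First, the excised band in Theorem~\ref{thm:ode-estimates-Psi-A} is only present when $a\in(a_0,M]$, so for $a\in[0,a_0]$ it is vacuous irrespective of $m$ and never needed avoiding. Second, for fixed $m$ and $|\omega|\geq\omega_{\rm high}(m)>B(M)|m|$ all frequencies are indeed non-superradiant, and $(\omega-m\upomega_+)^2\sim\omega^2$, but non-superradiance alone does not locate $r_{\rm trap}$ relative to the ergosphere; it governs which energy currents close, not where the maximum of the potential sits. Without invoking Lemma~\ref{lemma:critical-points-V0}(iii), your argument delivers only the $r_{\rm trap}$ bounds already in Theorem~\ref{thm:ode-estimates-Psi-A}, not the strengthened conclusion of Theorem~\ref{thm:ode-estimates-Psi-fixed-m}.
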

\begin{remark} Theorem~\ref{thm:ode-estimates-Psi-fixed-m} reflects the fact that, on any \textbf{subextremal} Kerr black hole, trapping occurs outside the ergoregion for fixed $m$. Thus, for solutions of the radial ODE with fixed azimuthal number, we can obtain an estimate whose bulk is nondegenerate in the ergorregion.
\end{remark}

In this section, we also prove Theorem~\ref{thm:frequency-estimates-big}B, which can be stated more precisely as follows.
\begin{theorem}\label{thm:ode-estimates-Psi-B} Fix $s\in\{0,\pm 1,\pm 2\}$, $M>0$ and $a_0\in[0,M)$. Consider Kerr black hole parameters $(a,M)$ satisfying $|a|\leq M$ and let $(\omega,m,\Lambda)$ be an admissible frequency triple with respect to $s$ and $a$. Suppose $\sml{\upalpha}{s}$ is a smooth solution to the \myuline{homogeneous} radial ODE~\eqref{eq:radial-ODE-alpha} and, for $k\in\{0,...,|s\}$, define $\smlambda{\lp(\uppsi_{(k)}\rp)}{s}$ via \eqref{eq:def-psi0-separated} and \eqref{eq:transformed-transport-separated}; $\smlambda{\lp(\uppsi_{(k)}\rp)}{s}$ itself satisfies the radial ODE~\eqref{eq:transformed-k-separated}. Assume $\sml{\upalpha}{s}$, and hence $\smlambda{\lp(\uppsi_{(k)}\rp)}{s}$, have the boundary conditions in \eqref{eq:alpha-general-asymptotics} and \eqref{eq:uppsi-general-asymptotics}.

For any admissible $(\omega, m,\Lambda)$ and for any $\smlambda{\Psi}{s}:=\smlambda{\lp(\uppsi_{(|s|)}\rp)}{s}$ as described, we have the uniform bounds, dropping sub and superscripts,
\begin{equation}\label{eq:scattering-boundary-terms} 
\begin{split}
&\omega^2
\lp\{\begin{array}{lr} \displaystyle
\frac{\mathfrak{C}_s}{\mathfrak{D}_s^{\mc I}}\,, &s\leq 0\\
1\,, &s>0
\end{array}\rp\}
\lp|\swei{A}{s}_{\mc{I}^+}\rp|^2+(\omega-m\upomega_+)^2
\lp\{\begin{array}{lr}
1\,, &s\leq 0\\
\displaystyle\frac{\mathfrak{C}_s}{\mathfrak{D}_s^{\mc H}}\,, &s>0
\end{array}\rp\}
\lp|\swei{A}{s}_{\mc{I}^+}\rp|^2
\\
&\quad\leq B(a_0,M,s)\lp[\omega^2
\lp\{\begin{array}{lr}
1\,, &s\leq 0\\
\displaystyle\frac{\mathfrak{C}_s}{\mathfrak{D}_s^{\mc I}}\,, &s>0
\end{array}\rp\}
\lp|\swei{A}{s}_{\mc{I}^-}\rp|^2+(\omega-m\upomega_+)^2
\lp\{\begin{array}{lr}\displaystyle
\frac{\mathfrak{C}_s}{\mathfrak{D}_s^{\mc H}}\,, &s\leq 0\\
1\,, &s>0
\end{array}\rp\}
\lp|\swei{A}{s}_{\mc{I}^-}\rp|^2\rp]\,.
\end{split}
\end{equation}

If $a=0$, or $s=0$, or $(\omega,m,\Lambda)$ is \myuline{not} contained in the set 
\begin{align*}
\mc{B}:=\lp\{(\omega,m,\Lambda)\text{~admissible~}\colon~ \rp. &\lp(\omega_{\rm low}\leq |\omega|\leq \omega_{\rm high}\,,\,\, \Lambda\leq \varepsilon_{\rm width}^{-1}\omega_{\rm high}^2\,,\,\, m^2\leq 2\varepsilon_{\rm width}^{-1}\omega_{\rm high}^2\rp) \text{~or~} \\
& \lp(|\omega|\geq \omega_{\rm high}\,, \,\, \omega^2\geq \varepsilon_{\rm width}^{-1}\Lambda\rp) \text{~or~} \\
&\lp.\lp(|\omega|\geq \omega_{\rm high}\,,\,\, \varepsilon_{\rm width}\omega^2\leq \Lambda\leq \varepsilon_{\rm width}^{-1}\omega\,,\,\,\Lambda-2am\omega<\beta_4\Lambda\rp)\rp\}\,,
\end{align*}
for some parameters $\omega_{\rm low}>0$, $\omega_{\rm high}>0$, $\varepsilon_{\rm width}>0$, and $\beta_4>0$, we moreover have the uniform bounds
\begin{equation} \label{eq:main-thm-scattering-bdd-coefficients}
\lp|\frac{\mathfrak{D}_s^\mc{H}}{\mathfrak{D}_s^\mc{I}}\rp|+\lp|\frac{\mathfrak{D}_s^\mc{H}}{\mathfrak{D}_s^\mc{I}}\rp|^{-1}+\lp|\frac{\mathfrak{C}_s}{\mathfrak{D}_s^\mc{I}}\rp|+\lp|\frac{\mathfrak{C}_s}{\mathfrak{D}_s^\mc{I}}\rp|^{-1}+\lp|\frac{\mathfrak{C}_s}{\mathfrak{D}_s^\mc{H}}\rp|+\lp|\frac{\mathfrak{C}_s}{\mathfrak{D}_s^\mc{H}}\rp|^{-1}\leq B(M,s)\,.
\end{equation}  
\end{theorem}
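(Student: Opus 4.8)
\textbf{Proof strategy for Theorem~\ref{thm:ode-estimates-Psi-B}.} The plan is to split the argument into two essentially separate pieces corresponding to the two displayed conclusions \eqref{eq:scattering-boundary-terms} and \eqref{eq:main-thm-scattering-bdd-coefficients}, and to organize both by the frequency partition introduced in Section~\ref{sec:frequency-partitions}. For \eqref{eq:scattering-boundary-terms}, the key observation is that this is exactly the statement one obtains by combining the Teukolsky--Starobinsky energy identity of Lemma~\ref{lemma:wronskian-energy-currents} with the requisite virial/energy estimates to absorb the localization errors that appear \emph{only} when the frequency triple is superradiant. I would first handle the non-superradiant frequencies: there, the $T$-type and $K$-type Teukolsky--Starobinsky conservation laws \eqref{eq:bdry-wronskian-T-K-infinity} and \eqref{eq:bdry-wronskian-T-K-horizon-sub-integer} (or \eqref{eq:bdry-wronskian-T-K-horizon-extremal-integer} for $|a|=M$) have \emph{all} boundary terms of good sign (because $\omega(\omega-m\upomega_+)\ge 0$ and $\mathfrak{C}_s\ge 0$ by admissibility condition 4(e)), so \eqref{eq:scattering-boundary-terms} is immediate from $\mathfrak{G}=0$ and $\swei{A}{s}_{\mc{H}^-}=0$-type considerations — no virial current is needed. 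For superradiant frequencies, which by Section~\ref{sec:intro-trapping} are quantitatively non-trapped, I would invoke a localized virial current of the type built in Section~\ref{sec:virial-current-templates} (the $y_\delta$, $y_0$, $\hat{y}_0$ templates together with the $f_0$ and $h$ currents, glued as summarized in Table~\ref{tab:current-wave-RW-operator}) to absorb the bulk localization errors generated when the $K$-energy boundary term at the horizon has the ``wrong'' sign; the coupling errors $\swei{\mathfrak{J}}{s}$ produced along the way are controlled exactly as in Section~\ref{sec:unbounded}, i.e.\ via the rudimentary transport estimate \eqref{eq:basic-estimate-1-intro}, the improved estimate \eqref{eq:basic-estimate-2-intro} where available, and the identity \eqref{eq:basic-identity-3-intro}. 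The bounded-frequency set $\mc{A}$ is handled, as always, by appealing to quantitative mode stability (Theorem~\ref{thm:mode-stability-intro-AB}), which directly bounds the relevant Wronskian from below.

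For the bound \eqref{eq:main-thm-scattering-bdd-coefficients} on the ratios of Teukolsky--Starobinsky and normalization constants, the approach is purely computational: one uses the explicit formulas \eqref{eq:TS-radial-constants} for $\mathfrak{C}_s$ and \eqref{eq:Ds-infinity}, \eqref{eq:Ds-horizon} for $\mathfrak{D}_s^{\mc I}$, $\mathfrak{D}_s^{\mc H}$, all of which are polynomials in $(\omega, m, \Lambda)$ with coefficients depending only on $(a,M,s)$. For $s=0$ all three constants are identically $1$, and for $a=0$ they are bounded below by $(2|s|)^{2|s|}$ and above by a polynomial in $\Lambda$ alone, so the comparability is clear. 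The content is thus entirely in the $a\ne 0$, $s\ne 0$ case on the complement of $\mc{B}$. Here I would argue regime by regime: when $|\omega|\ge\omega_{\rm high}$ and $\omega^2\ge\varepsilon_{\rm width}^{-1}\Lambda$ (so $\Lambda, m^2 \ll \omega^2$), the dominant term in each of $\mathfrak{C}_s$, $\mathfrak{D}_s^{\mc I}$, $\mathfrak{D}_s^{\mc H}$ is the top-order one in $\Lambda$ (which is comparable across all three, up to the $M^2\omega^2$ and $m^2(r_+-M)$-type corrections that are then themselves lower order), giving the ratios $\sim 1$; in the regime $\varepsilon_{\rm width}\omega^2\le\Lambda\le\varepsilon_{\rm width}^{-1}\omega^2$ with $\Lambda - 2am\omega \ge \beta_4\Lambda$, the quantity $\Lambda - 2am\omega$ — which is the natural argument appearing in all three constant formulas — is comparable to $\Lambda$ itself, and one checks that the remaining terms ($M^2\omega^2$, $am\omega$, $m^2$-weighted pieces) are all $O(\Lambda)$, again forcing comparability. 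The point is that the ``bad'' set $\mc{B}$ is precisely carved out to exclude the frequencies where $\Lambda - 2am\omega$ can degenerate relative to $\Lambda$ (which is where a cancellation in, say, $\mathfrak{C}_1 = (\Lambda - 2am\omega + 1)^2 + 4am\omega - 4a^2\omega^2$ could drive $\mathfrak{C}_s$ small while $\mathfrak{D}_s^{\mc I}$ stays large, or vice versa).

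I expect the main obstacle to be twofold. First, on the analytic side, the superradiant case of \eqref{eq:scattering-boundary-terms} requires that the virial current one glues in is chosen so as \emph{not} to contribute with a bad sign to whichever boundary term the Teukolsky--Starobinsky identity is being used to ``recover'' — this is the subtle compatibility condition flagged in the last bullet of Section~\ref{sec:outline-proof-estimate-unbdd}, and getting the signs to line up across all four of $\swei{A}{s}_{\mc{H}^\pm}$, $\swei{A}{s}_{\mc{I}^\pm}$ simultaneously, uniformly in $a\le a_0$ and in the extremal limit, is delicate. Second, on the algebraic side, verifying \eqref{eq:main-thm-scattering-bdd-coefficients} rigorously in the borderline regime $\Lambda\sim\omega^2$ requires carefully bookkeeping which monomials in the (rather long) $\mathfrak{D}_{s,2}^{\mc H}$ formula \eqref{eq:Ds-horizon} can conspire to cancel; one must use the quantitative lower bounds on $\Lambda$ from Lemma~\ref{lemma:angular-eigenvalues-properties} and the admissibility conditions 4(b)--(d) of Definition~\ref{def:admissible-freqs} to rule out such cancellations on the complement of $\mc{B}$. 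Neither obstacle is conceptually new — both are the ``for general $|a|<M$'' refinements of steps that are trivial when $a=0$ or $|a|\ll M$ — but both demand care with constants that are uniform in the full subextremal range.
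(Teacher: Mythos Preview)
Your overall architecture is right and matches the paper: split off $\mc{A}$ and appeal to quantitative mode stability there, and away from $\mc{A}$ lean on the Teukolsky--Starobinsky energy identity supplemented by virial currents where needed. The paper packages these two halves as Proposition~\ref{prop:ode-estimates-Psi-B-high-low} and Proposition~\ref{prop:ode-estimates-Psi-B-mode-stability}, and the proof of Theorem~\ref{thm:ode-estimates-Psi-B} is literally the observation that the two propositions together cover all admissible frequencies.

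However, there is a genuine gap in your treatment of the non-superradiant frequencies. You claim that because $\omega(\omega-m\upomega_+)\geq 0$ makes all the Teukolsky--Starobinsky boundary terms non-negative, \eqref{eq:scattering-boundary-terms} follows from the conservation law alone with no virial current. This is not true: the $T$-type identity carries the weight $\omega(\omega-m\upomega_+)$ on horizon terms while \eqref{eq:scattering-boundary-terms} demands $(\omega-m\upomega_+)^2$, and converting one to the other requires the ratio $\omega/(\omega-m\upomega_+)$ to be bounded above and below. This fails in regimes outside $\mc{A}$: take $(\omega,m,\Lambda)\in\mc{F}_{\measuredangle}$ with $m\omega\leq 0$, $|\omega|$ small and $|m|$ large, or $(\omega,m,\Lambda)\in\mc{F}_{\rm low,2}$ with $\omega<0$. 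There $|\omega-m\upomega_+|\sim |m|\upomega_+$ can be arbitrarily large compared to $|\omega|$, so the Teukolsky--Starobinsky identity bounds $(\omega-m\upomega_+)^2|A_{\mc H^+}|^2$ only by $\omega(\omega-m\upomega_+)|A_{\mc I^-}|^2 + (\omega-m\upomega_+)^2\tfrac{\mathfrak C_s}{\mathfrak D_s^{\mc H}}|A_{\mc H^-}|^2$, and the first term is \emph{not} controlled by $\omega^2|A_{\mc I^-}|^2$ uniformly. The paper instead runs the full virial-plus-energy machinery of Sections~\ref{sec:bounded-smallness} and~\ref{sec:unbounded} in every regime outside $\mc{A}$---superradiant or not---and obtains the scattering estimate \eqref{eq:scattering-boundary-terms} as part of the general-boundary-conditions output of Propositions~\ref{prop:low1a}--\ref{prop:low2} and \ref{prop:angular-dominated}--\ref{prop:comp1}; only in $\mc{F}_{\text{\ClockLogo}}$ and $\mc{F}_{\rm comp,2}$ (where $\omega/(\omega-m\upomega_+)$ \emph{is} bounded, see Lemmas~\ref{lemma:time-dominated-frequency-properties}\ref{it:time-dominated-frequency-properties-no-superrad} and~\ref{lemma:comparable-frequency-properties}\ref{it:comparable-frequency-properties-no-superrad}) does the direct Teukolsky--Starobinsky argument suffice.

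There is also a bookkeeping error in your discussion of \eqref{eq:main-thm-scattering-bdd-coefficients}: the regime $|\omega|\geq\omega_{\rm high}$, $\omega^2\geq\varepsilon_{\rm width}^{-1}\Lambda$ that you analyze first is precisely the second line of $\mc{B}$, so it lies \emph{inside} $\mc{B}$, not on its complement---and indeed the ratios can degenerate there (for $|s|=1$, $m=0$, large $|\omega|$, admissibility forces $\mathfrak{C}_1=(\Lambda+1)^2-4a^2\omega^2$ to be non-negative but not bounded away from zero relative to $\mathfrak{D}_1^{\mc I}=(\Lambda+1)^2$). Your list also omits the angular-dominated regime $\mc{F}_{\measuredangle}$ and the low-$\omega$ regimes $\mc{F}_{\rm low}$; the paper establishes the bounds there by direct calculation in Lemmas~\ref{lemma:angular-dominated-frequency-properties}\ref{it:angular-dominated-frequency-properties-Cs-Ds} and~\ref{lemma:properties-frequencies-low-omega}\ref{it:properties-frequencies-low-omega-TS-constants}.
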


%To establish Theorem~\ref{thm:ode-estimates-Psi-B}, we will need to combine a real mode stability statement of the type already in the introduction as Theorem~\ref{thm:mode-stability-intro-AB} with the following proposition, that is obtain easily as a by-product of our methods to establish Theorem~\ref{thm:ode-estimates-Psi-A}:
%\begin{proposition}\label{prop:ode-estimates-Psi-B-high-low} Theorem~\ref{thm:ode-estimates-Psi-B} holds for  any $(\omega,m,\Lambda)$ admissible frequency triple with respect to $s$ and $a$ which is \myuline{not} contained in the set 
%$$\mc{A}:=\lp\{(\omega,m,\Lambda)\text{~admissible~}\colon~ \omega_{\rm low}\leq |\omega|\leq \omega_{\rm high}\,,\,\, \Lambda\leq \varepsilon_{\rm width}^{-1}\omega_{\rm high}^2\,,\,\, m^2\leq 2\varepsilon_{\rm width}^{-1}\omega_{\rm high}^2\rp\}\,.$$
%\end{proposition}

\begin{remark}[The case of integer spin parameter $|s|\geq 2$]\label{rmk:spin-restriction-ode-estimates} 
We will prove Theorems~\ref{thm:ode-estimates-Psi-A}, \ref{thm:ode-estimates-Psi-fixed-m} and \ref{thm:ode-estimates-Psi-B} by applying a combination of the virial and energy currents introduced in Section~\ref{sec:current-templates} to the radial \eqref{eq:transformed-k-separated}. These equations are defined for any $s\in\mathbb{Z}$ and, indeed, the properties of the radial ODEs \eqref{eq:transformed-k-separated} that we appeal to in the construction of the virial currents and classical energy currents are valid for any integer $s$. However, in controlling the boundary terms in estimates \eqref{eq:ode-estimates-outgoing} and \eqref{eq:scattering-boundary-terms}, we often rely on specific properties of the constants, $\mathfrak{D}_{s,k}^{\mc{I}}$ and $\mathfrak{D}_{s,k}^{\mc{H}}$ (Lemma~\ref{lemma:uppsi-general-asymptotics}), relating boundary terms of each $\uppsi_{(k)}$ and, when we add Wronskian-type currents, of the Teukolsky--Starobinsky constants, $\mathfrak{C}_s$ (Proposition~\ref{prop:TS-angular-radial-constant}). An example of such a property is the non-negativity of the Teukolsky--Starobinsky constant (otherwise one of the boundary terms for the current would not have a good sign). We only verify that the properties being used hold for $|s|\leq 2$ (see Remark~\ref{rmk:TS-constant-sign}).

Were the conditions we appeal to regarding $\mathfrak{C}_s$, $\mathfrak{D}_{s,k}^\mc{I}$ and $\mathfrak{D}_{s,k}^\mc{H}$ to hold for integer $|s|> 2$, our Theorems~\ref{thm:ode-estimates-Psi-A}, \ref{thm:ode-estimates-Psi-fixed-m} and \ref{thm:ode-estimates-Psi-B} would naturally extend to such spins. The case of half-integer spin is not within the scope of our approach to the Teukolsky equation via the transformed system introduced in Section~\ref{sec:transformed-system}.
\end{remark}

The section is organized as follows. We begin by introducing the partition of admissible frequencies that we consider in Section~\ref{sec:frequency-partitions}. The coupling of the equation for $\Psi$ to $\uppsi_{(k)}$ requires us to derive mechanisms for estimating the latter by the former; some general methods are already introduced in Section~\ref{sec:basic-estimates-transformed}. In the next two sections, Section \ref{sec:bounded-smallness} and Section~\ref{sec:intermediate}, we prove versions of Theorems~\ref{thm:ode-estimates-Psi-A} and \ref{thm:ode-estimates-Psi-B} for bounded frequencies; in the following Section~\ref{sec:unbounded}, we prove the results for unbounded frequencies. All these estimates are put together in Section~\ref{sec:combining-ode-estimates} to yield the full Theorems~\ref{thm:ode-estimates-Psi-A} and \ref{thm:ode-estimates-Psi-fixed-m} as well as Theorem~\ref{thm:ode-estimates-Psi-B}.

\subsection{Partitioning the admissible frequencies}
\label{sec:frequency-partitions}

For the remainder of the present section, for simplicity, we will consider only Kerr black holes with positive specific angular momentum, $a\geq 0$. This restriction, which enables us to state and prove each of the steps leading up to Theorems~\ref{thm:ode-estimates-Psi-A}, \ref{thm:ode-estimates-Psi-fixed-m} and \ref{thm:ode-estimates-Psi-B} more easily, is without loss of generality:
\begin{lemma}[Reduction on sign of $a$] It is enough to prove Theorems~\ref{thm:ode-estimates-Psi-A}, \ref{thm:ode-estimates-Psi-fixed-m} and \ref{thm:ode-estimates-Psi-B} under the assumption $a\geq 0$. \label{lemma:reduction-a-positive}
\end{lemma}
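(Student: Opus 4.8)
\textbf{Proof proposal for Lemma~\ref{lemma:reduction-a-positive} (reduction on the sign of $a$).}

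The plan is to exhibit an explicit symmetry of the entire setup under $a \mapsto -a$, so that any estimate proven for $a \geq 0$ transfers verbatim to $a < 0$. First I would recall that the Kerr metric \eqref{eq:kerr-metric} is invariant under the simultaneous change $(a,\phi) \mapsto (-a,-\phi)$, since $a$ only ever appears in $g_{a,M}$ through $a^2$, $a\sin^2\theta\, dt\,d\phi$ (odd in both $a$ and $d\phi$), and $a^2\sin^2\theta$ terms. Correspondingly, at the level of the separated picture, the frequency triple transforms as $(\omega, m, \Lambda) \mapsto (\omega, -m, \Lambda)$: indeed $\upomega_+ = a/(2Mr_+)$ flips sign, so $\omega - m\upomega_+$ is preserved; the angular eigenvalues satisfy $\bm\uplambda^{[s],(\nu)}_{ml} = \bm\uplambda^{[-s],(-\nu)}_{-m,l}$ by \eqref{eq:lambda-symmetries}, which with $\nu = a\omega$ gives precisely the invariance of $\Lambda = \smlambda{\bm\Lambda}{s}$ under $(a,m) \mapsto (-a,-m)$ by statement~1 of Lemma~\ref{lemma:angular-eigenvalues-properties}. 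One then checks that the admissibility conditions in Definition~\ref{def:admissible-freqs} are symmetric under this transformation (each bound involves $am$, $a\omega$, $m\omega$ or $\Lambda$ in a way that is manifestly invariant, e.g.\ $q = a\omega/m$ is unchanged, and $\mathfrak{C}_s$ in \eqref{eq:TS-radial-constants} depends only on $am\omega$, $a^2\omega^2$, $M^2\omega^2$ and $\Lambda$).

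Next I would verify that the radial ODEs are invariant. In the Teukolsky radial potential \eqref{eq:radial-ODE-u-potentials}, every occurrence of $a$ is through $am\omega$, $a^2m^2$, $a^2\Delta$, $a^2$, or $am(r-M)$ — all invariant under $(a,m)\mapsto(-a,-m)$ — and likewise $\Delta$, $r_\pm$ and $\xi = -i\frac{2Mr_+}{r_+-r_-}(\omega - m\upomega_+)$, $\beta = 2iM^2(\omega-m\upomega_+)$ are unchanged. Hence $\smlambda{u}{s}$ and $\smlambda{\upalpha}{s}$ for parameters $(a,\omega,m,\Lambda)$ solve the same ODE as for $(-a,\omega,-m,\Lambda)$. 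For the transformed system, note that in \eqref{eq:radial-ODE-potential-k-tilde} the real potential $\Re\mc{V}_{(k)}$ is invariant by the same reasoning, while $\Im\mc{V}_{(k)}$ picks up a sign from $\frac{w'}{w}\cdot am$ and $\frac{amrw}{r^2+a^2}$; however, the operator $\mc{L} = \frac{d}{dr^*} - i\omega + \frac{iam}{r^2+a^2}$ (in its separated form) also has the $am$ term flip sign, so the definitions \eqref{eq:transformed-transport-separated} and the coupling coefficients $c^\Phi_{s,k,i}$, $c^{\mathrm{id}}_{s,k,i}$ (which from the explicit formulas in the Remark after Proposition~\ref{prop:transformed-system} are odd and even in $a$ respectively, contracted against $im$) conspire so that $\smlambda{\lp(\uppsi_{(k)}\rp)}{s}$ for $(a,m)$ and its counterpart for $(-a,-m)$ satisfy the same equation \eqref{eq:transformed-k-separated}. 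Concretely, one sets $\widetilde{\uppsi}_{(k)}(r) := \smlambda[(\omega,-m,\Lambda)]{\lp(\uppsi_{(k)}\rp)}{s}$ with black hole parameter $-a$ and checks term-by-term that it coincides with the $\uppsi_{(k)}$ built from $\smlambda{\upalpha}{s}$ at parameter $a$.

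The final step is bookkeeping: the frequency sets $\mc{A}$, $\mc{B}$ and the two exceptional sets in Theorems~\ref{thm:ode-estimates-Psi-A} and~\ref{thm:ode-estimates-Psi-B} are defined via $|\omega|$, $\Lambda$, $m^2$, $m\omega$, $m^2\upomega_+ \pm \beta_i\Lambda$, and $am\omega$ — all invariant under $(a,m)\mapsto(-a,-m)$ (using $m^2\upomega_+ \to m^2(-\upomega_+)$ and $m\omega \to -m\omega$, so the inequality $m^2\upomega_+ - \beta_2\Lambda \leq m\omega \leq m^2\upomega_+ + \beta_1\Lambda$ with $m\omega > 0$ maps to the same inequality). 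The output norms in \eqref{eq:ode-estimates-outgoing}, \eqref{eq:scattering-boundary-terms}, \eqref{eq:main-thm-scattering-bdd-coefficients} involve $\omega^2$, $(\omega - m\upomega_+)^2$, $\mathfrak{C}_s$, $\mathfrak{D}_s^{\mc I}$, $\mathfrak{D}_s^{\mc H}$ and the boundary quantities $|\Psi|^2$, $|\Psi'|^2$, $|A_{\mc H^\pm}|^2$, $|A_{\mc I^\pm}|^2$, all of which are preserved (the $\mathfrak{D}$'s in \eqref{eq:Ds-infinity}–\eqref{eq:Ds-horizon} depend on $a$ only through $am\omega$, $a^2m^2$, $M^2\omega^2$ and $\Lambda$). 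Therefore an estimate valid for all admissible triples at parameter $a \geq 0$ immediately yields the corresponding estimate at parameter $-a$ by relabeling $m \mapsto -m$, which is what is claimed. I do not anticipate a genuine obstacle here — the only mild subtlety is tracking the sign of $a$ through the coupling coefficients $c^\Phi, c^{\mathrm{id}}$ and through $\Im\mc{V}_{(k)}$ and confirming the cancellation with the sign flip in $\mc{L}$, but this is a direct computation from the explicit formulas already recorded.
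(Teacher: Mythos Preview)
Your proposal is correct and follows the same approach as the paper's very brief proof: both establish the invariance of the admissibility conditions, the radial ODEs \eqref{eq:transformed-k-separated}, and all quantities appearing in the theorem statements under the simultaneous reflection $(a,m)\mapsto(-a,-m)$, with your version supplying the term-by-term verification that the paper leaves implicit. One small slip to correct: from the explicit formulas in the Remark after Proposition~\ref{prop:transformed-system}, the coefficients $c^\Phi_{s,k,i}$ are \emph{even} in $a$ and $c^{\rm id}_{s,k,i}$ are \emph{odd} (the reverse of what you write), but since the coupling term $aw(imc^\Phi + c^{\rm id})$ is nonetheless invariant under $(a,m)\mapsto(-a,-m)$, the conclusion is unaffected.
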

\begin{proof}
Recall that $(\omega,m,\Lambda)$ is an admissible frequency triple with respect to $s$ and $a$ if it satisfies the requirements in Definition~\ref{def:admissible-freqs}. In particular, the constraints on $\Lambda$ do not depend on the sign of either $a$ or $m$, but only possibly on the sign of $am$. Thus, for $k=0,...,|s|$, all terms in the radial ODEs~\ref{eq:transformed-k-separated} are invariant under the $a\mapsto -a$, $m\mapsto -m$. We conclude that, it is enough to establish Theorems~\ref{thm:ode-estimates-Psi-A}, \ref{thm:ode-estimates-Psi-fixed-m} and \ref{thm:ode-estimates-Psi-B} for $a>0$ and any $m\in\mathbb{R}$ admissible.
\end{proof}

Fix $s\in\mathbb{Z}$ and $M>0$. For some parameters $a_0,\tilde{a}_0,\tilde{a}_1\in[0,M)$,  $\varepsilon_{\rm width},\omega_{\rm high}>0$, $r_0'\in[r_+,\infty)$ and $\beta_1,\beta_2,\beta_3,\beta_4>0$. Consider the following frequency ranges (see Figure \ref{fig:freqs}):

\begin{itemize}
\leftmarginii=.33\leftmargini\relax
\leftmarginiii=.75\leftmarginii\relax

\item Bounded frequencies 
$$\mc{F}_{\rm bdd}:=\lp\{a\in[0,M]\,, (\omega,m,\Lambda) \text{~admissible~}\colon \Lambda\leq \varepsilon_{\rm width}^{-1}\omega_{\rm high}^2\,,\,\, |\omega|< \omega_{\rm high}\rp\}$$
\begin{itemize}
\item Bounded frequencies with smallness: $\mc{F}_{\rm low}:=\mc{F}_{\rm low,1}\cup \mc{F}_{\rm low,2}\cup\mc{F}_{\rm low,3}$, where we have
\begin{itemize}
\item Low frequencies and axisymmetry or slowly rotating background $\mc{F}_{\rm low,1}=\mc{F}_{\rm low,1a}\cup\mc{F}_{\rm low,1b}\cup\mc{F}_{\rm low,1c}$, with
\begin{align*}
\mc{F}_{\rm low,1a}&:=\lp\{a\in[0,\tilde{a}_0]\,,\,\, (\omega,m,\Lambda) \text{~admissible~}\colon \Lambda\leq \varepsilon_{\rm width}^{-1}\omega_{\rm high}^2\,,\,\, |\omega|\leq \omega_{\rm low}\rp\},\\
\mc{F}_{\rm low,1b}&:=\lp\{a\in[0,M-\tilde{a}_1)\,,\,\, (\omega,m,\Lambda) \text{~admissible~}\colon \Lambda\leq \varepsilon_{\rm width}^{-1}\omega_{\rm high}^2\,,\,\, |\omega|\leq \omega_{\rm low}\,,\,\, m=0\rp\},\\
\mc{F}_{\rm low,1c}&:=\lp\{a\in[M-\tilde{a}_1,M]\,,\,\, (\omega,m,\Lambda) \text{~admissible~}\colon \Lambda\leq \varepsilon_{\rm width}^{-1}\omega_{\rm high}^2\,,\,\, |\omega|\leq \omega_{\rm low}\,,\,\, m=0\rp\},
\end{align*}
\item Low frequencies outside axisymmetry for fast rotating background
$$\mc{F}_{\rm low,2}:=\lp\{a\in[0,M]\,,\,\,(\omega,m,\Lambda) \text{~admissible~}\colon \Lambda\leq \varepsilon_{\rm width}^{-1}\omega_{\rm high}^2\,,\,\, |\omega|\leq \omega_{\rm low}\,,\,\, am>\tilde{a}_0\rp\},$$
\item Frequencies near the superradiant threshold for nearly extremal background
\begin{align*}
\mc{F}_{\rm low,3}&:=\lp\{a\in[M-\tilde{a}_1,M]\,,\,\, (\omega,m,\Lambda) \text{~admissible}\colon \Lambda\leq \varepsilon_{\rm width}^{-1}\omega_{\rm high}^2\,,\,\, |\omega-m\upomega_+|\leq \omega_{\rm low}\,,\,\,  m\neq0\rp\},
\end{align*}
\end{itemize}

\item Intermediate bounded frequencies, where we have
\begin{itemize}
\item Bounded frequencies with nonzero time frequency
$$\mc{F}_{\rm int,1}:=\lp\{a\in[0,a_0)\,, (\omega,m,\Lambda) \text{~admissible~}\colon \Lambda\leq \varepsilon_{\rm width}^{-1}\omega_{\rm high}^2\,,\,\, \omega_{\rm low}<|\omega|< \omega_{\rm high}\rp\}$$
\item Bounded frequencies without smallness
$\mc{F}_{\rm int,2}:=\mc{F}_{\rm bdd}\backslash\lp(\mc{F}_{\rm low,1}\cup\mc{F}_{\rm low,2}\cup\mc{F}_{\rm low,3}\rp);$
\end{itemize}
\end{itemize}

\item Unbounded frequencies $\mc{F}_{\rm unbdd}:=\lp\{a\in[0,M]\,,\,\, (\omega,m,\Lambda) \text{~admissible}\rp\}\backslash\mc{F}_{\rm bdd}$
\begin{itemize}[noitemsep]

\item Time dominated regime:
$$\mc{F}_{\text{\ClockLogo}}:=\lp\{a\in[0,M]\,,\,\,(\omega,m,\Lambda) \text{~admissible}\colon \Lambda< \varepsilon_{\rm width}\omega^2\,,\,\, |\omega|\geq \omega_{\rm high}\rp\};$$

\item Angular dominated regime: $\mc{F}_{\measuredangle}=\mc{F}_{\measuredangle,1}\cup \mc{F}_{\measuredangle,2}\cup \mc{F}_{\measuredangle,3}$, with
\begin{align*}
\mc{F}_{\measuredangle,1}&:=\lp\{a\in[0,a_0)\,,(\omega,m,\Lambda) \text{~admissible~}\colon \Lambda\geq \varepsilon_{\rm width}^{-1}\omega^2\,,\,\, \Lambda\geq \varepsilon_{\rm width}^{-1}\omega_{\rm high}^2\,,\rp.\\
&\qquad\qquad\qquad\qquad\qquad\qquad\qquad\qquad\qquad \lp. m\omega\leq m^2\upomega_++\beta_1\Lambda\rp\},\\
\mc{F}_{\measuredangle,2}&:=\lp\{a\in(a_0,M]\,,(\omega,m,\Lambda) \text{~admissible~}\colon \Lambda\geq \varepsilon_{\rm width}^{-1}\omega^2\,,\,\, \Lambda\geq \varepsilon_{\rm width}^{-1}\omega_{\rm high}^2\,,\rp.\\
&\qquad\qquad\qquad\qquad\qquad\qquad\qquad\qquad\qquad \lp. m\omega\leq m^2\upomega_+-\beta_2\Lambda\rp\},\\
\mc{F}_{\measuredangle,3}&:=\lp\{a\in(a_0,M]\,,(\omega,m,\Lambda) \text{~admissible~}\colon \Lambda\geq \varepsilon_{\rm width}^{-1}\omega^2\,,\,\, \Lambda\geq \varepsilon_{\rm width}^{-1}\omega_{\rm high}^2\,,\rp.\\
&\qquad\qquad\qquad\qquad\qquad\qquad\qquad\qquad\qquad \lp. m^2\upomega_+-\beta_2\Lambda < m\omega< m^2\upomega_++\beta_1\Lambda\rp\},
\end{align*}

\item Large, comparable regime: 
$\mc{F}_{\rm comp}:=\lp\{(\omega,m,\Lambda) \text{~admissible}\colon \varepsilon_{\rm width}^{-1}\omega^2\leq\Lambda\leq \varepsilon_{\rm width}^{-1}\omega^2\,,\,\, |\omega|\geq \omega_{\rm high}\rp\},$
comprising
\begin{itemize}
\item Large, comparable, non-trapped and non-superradiant
\begin{align*}
\mc{F}_{\rm comp,2}&:=\{a\in[0,M]\,,\,\, (\omega,m,\Lambda) \text{~admissible}\colon \varepsilon_{\rm width}\omega^2\leq \Lambda\leq \varepsilon_{\rm width}^{-1}\omega^2\,,\,\, |\omega|\geq \omega_{\rm high}\,,\,\, r_0'=\infty\,,\\
&\qquad\qquad m\omega\notin(0, m^2\upomega_++\beta_3\Lambda\},
\end{align*}
with $\mc{F}_{\rm comp,2a}=\mc{F}_{\rm comp,2}\cap \lp\{\Lambda-2am\omega\geq \beta_4\Lambda\rp\}$ and with $\mc{F}_{\rm comp,2b}=\mc{F}_{\rm comp,2}\backslash \mc{F}_{\rm comp,2a}$, 
\item Large, comparable, non-trapped superradiant regime, $\mc{F}_{\sun}=\mc{F}_{\sun,1}\cup\mc{F}_{\sun,2}$,
where we have
\begin{itemize}
\item Large, comparable, non-trapped superradiant regime 1
\begin{align*}
\mc{F}_{\sun,1}&:=\lp\{a\in[0,a_0]\,,(\omega,m,\Lambda) \text{~admissible~}\colon \varepsilon_{\rm width}\omega^2\leq \Lambda\leq \varepsilon_{\rm width}^{-1}\omega^2\,,\,\, |\omega|\geq \omega_{\rm high} \,,\,\,\rp.\\
&\qquad\qquad \lp. m\omega\in (0,m^2\upomega_++\beta_1 \Lambda)\rp\},
\end{align*}
\item Large, comparable, non-trapped superradiant regime 2
\begin{align*}
\mc{F}_{\sun,2}&:=\lp\{a\in(a_0,M]\,,(\omega,m,\Lambda) \text{~admissible~}\colon \varepsilon_{\rm width}\omega^2\leq \Lambda\leq \varepsilon_{\rm width}^{-1}\omega^2\,,\,\, |\omega|\geq \omega_{\rm high} \,,\,\,\rp. \\
&\qquad\qquad\lp. m\omega\in (0,m^2\upomega_+-\beta_2 \Lambda)\rp\};
\end{align*}
\end{itemize}
\item Large, comparable, trapped, non-superradiant regime
\begin{align*}
\mc{F}_{\rm comp,1}&:=\{a\in[0,M]\,,\,\,(\omega,m,\Lambda) \text{~admissible}\colon \varepsilon_{\rm width}\omega^2\leq \Lambda\leq \varepsilon_{\rm width}^{-1}\omega^2\,,\,\, |\omega|\geq \omega_{\rm high}\,,\,\, r_0'<\infty\,,\\
&\qquad\qquad m\omega\notin(0, m^2\upomega_++\beta_3\Lambda)\},
\end{align*}
\item Large, comparable, trapped and marginally superradiant regime
\begin{align*}
\mc{F}_{\rm comp,3}&:=\{a\in(a_0,M]\,,\,\,(\omega,m,\Lambda) \text{~admissible}\colon \varepsilon_{\rm width}\omega^2\leq \Lambda\leq \varepsilon_{\rm width}^{-1}\omega^2\,,\,\, |\omega|\geq \omega_{\rm high}\,,\\
&\qquad\qquad m^2\upomega_+-\beta_2\Lambda< m\omega<m^2\upomega_++\beta_3\Lambda \}.
\end{align*}

\end{itemize}

\end{itemize}
\end{itemize}

\begin{figure}[htbp]
\centering
\includegraphics[scale=1]{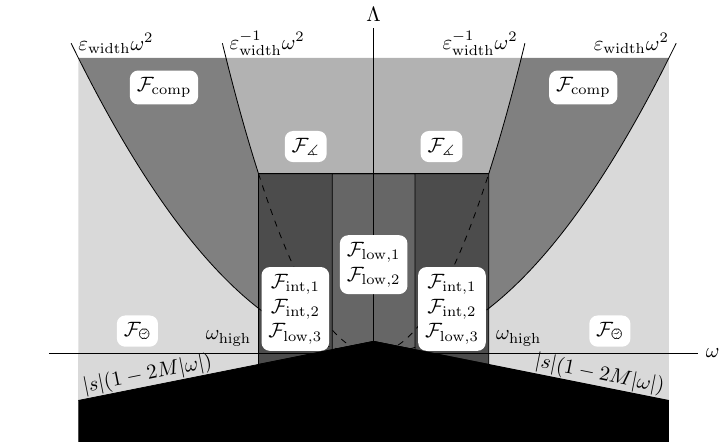}
\caption{Frequency ranges in the proof of Theorems~\ref{thm:ode-estimates-Psi-A}, \ref{thm:ode-estimates-Psi-fixed-m} and \ref{thm:ode-estimates-Psi-B}.}
\label{fig:freqs}
\end{figure}

Using the previous partition we have, for the entire Kerr black hole range $|a|\leq M$,
\begin{lemma}[Kerr black hole frequency regimes] \label{lemma:black-hole-partition}  Fix $s\in\mathbb{Z}$ and $M>0$. For some parameters $a_0,\tilde{a}_0,\tilde{a}_1\in[0,M)$,  $r_0'\in(r_+,\infty]$, $\varepsilon_{\rm width},\omega_{\rm high},\omega_{\rm low}>0$, and $\beta_1,\beta_2,\beta_3>0$ such that 
\begin{itemize}[noitemsep]
\item $\beta_3\leq \beta_1$,
\item $\omega_{\rm high}$ is sufficiently large,
\item $\varepsilon_{\rm width}^{-1}$ is sufficiently small depending on $\beta_1$,
\end{itemize}
the frequency ranges defined in the present section, i.e.\
\begin{gather*}\mc{F}_{\rm low, 1a}\,, \,\,\mc{F}_{\rm low, 1b}\,, \,\,\mc{F}_{\rm low, 1c}\,, \,\,\mc{F}_{\rm low, 2}\,, \,\,\mc{F}_{\rm low, 3}\,, \,\,\mc{F}_{\rm int,2}\,; \\
\mc{F}_{\text{\ClockLogo}}\,,\,\, \mc{F}_{\rm comp,2}\,;\,\,\mc{F}_{\measuredangle,1}\,,\,\, \mc{F}_{\sun,1}\,,\,\, \mc{F}_{\sun,2}\,;\,\,\mc{F}_{\rm comp,1}\,;\\
\mc{F}_{\measuredangle,2}\,,\,\,\mc{F}_{\measuredangle,3}\,,\,\, \mc{F}_{\rm comp,3}\,,
\end{gather*}
contains the set of all admissible frequencies with respect to $s$, i.e.\ $\cup_{0\leq a\leq M}\lp(\mc{F}_{\mathrm{ admiss}(a,M,s)}\rp)$.
\end{lemma}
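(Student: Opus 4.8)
\textbf{Proof strategy for Lemma~\ref{lemma:black-hole-partition}.} The statement is purely a bookkeeping claim: the explicitly listed frequency ranges cover $\cup_{0\le a\le M}\mc{F}_{\mathrm{ admiss}(a,M,s)}$ once the threshold parameters are chosen in the stated hierarchy. The plan is to fix an arbitrary admissible triple $(\omega,m,\Lambda)$ for a fixed $a\in[0,M]$ (by Lemma~\ref{lemma:reduction-a-positive} we may take $a\ge 0$, $m\in\mathbb{R}$ arbitrary) and walk through a decision tree that terminates in one of the listed sets. The first branch point is whether $|\omega|<\omega_{\rm high}$ (bounded) or $|\omega|\ge\omega_{\rm high}$ (unbounded). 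In the unbounded case, since $\Lambda$ is controlled from below by admissibility condition~\ref{it:admissible-freqs-triple-Lambda-lower-bound} but otherwise free, we then compare $\Lambda$ with $\varepsilon_{\rm width}\omega^2$ and $\varepsilon_{\rm width}^{-1}\omega^2$: this splits $\mc{F}_{\rm unbdd}$ into $\mc{F}_{\text{\ClockLogo}}$ (time-dominated, $\Lambda<\varepsilon_{\rm width}\omega^2$), $\mc{F}_{\measuredangle}$ (angular-dominated, $\Lambda\ge\varepsilon_{\rm width}^{-1}\omega^2$ and $\Lambda\ge\varepsilon_{\rm width}^{-1}\omega_{\rm high}^2$), and $\mc{F}_{\rm comp}$ (comparable, $\varepsilon_{\rm width}\omega^2\le\Lambda\le\varepsilon_{\rm width}^{-1}\omega^2$). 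Within each of the latter two, a further split according to the sign and size of $m\omega$ relative to $m^2\upomega_+\pm\beta_i\Lambda$ — i.e.\ whether the triple is superradiant, quantitatively non-superradiant, or marginal — and, in the comparable regime, according to whether the potential maximum $r_0'$ is finite or infinite (trapped vs.\ non-trapped) produces exactly the listed subcases $\mc{F}_{\measuredangle,1},\mc{F}_{\measuredangle,2},\mc{F}_{\measuredangle,3}$ and $\mc{F}_{\rm comp,1},\mc{F}_{\rm comp,2},\mc{F}_{\rm comp,3},\mc{F}_{\sun,1},\mc{F}_{\sun,2}$. One must check the edge cases ($a\le a_0$ versus $a>a_0$) are handled by the way $\mc{F}_{\measuredangle,1}$/$\mc{F}_{\sun,1}$ are defined on $[0,a_0]$ with the single threshold $\beta_1$, while $\mc{F}_{\measuredangle,2}$/$\mc{F}_{\measuredangle,3}$/$\mc{F}_{\sun,2}$/$\mc{F}_{\rm comp,3}$ on $(a_0,M]$ use both $\beta_1,\beta_2$.

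For the bounded part, the decision tree is: if $|\omega|\le\omega_{\rm low}$ and either $am\le\tilde a_0$ or $m=0$, one lands in $\mc{F}_{\rm low,1}$ (with the three sub-branches $\mc{F}_{\rm low,1a}$ for $a\le\tilde a_0$, and $\mc{F}_{\rm low,1b},\mc{F}_{\rm low,1c}$ for $m=0$ split at $a=M-\tilde a_1$); if $|\omega|\le\omega_{\rm low}$ and $am>\tilde a_0$ one lands in $\mc{F}_{\rm low,2}$; near-extremal triples with $|\omega-m\upomega_+|\le\omega_{\rm low}$, $m\ne 0$, $a\ge M-\tilde a_1$ go to $\mc{F}_{\rm low,3}$; everything else bounded goes to $\mc{F}_{\rm int,1}$ (or $\mc{F}_{\rm int,2}$), which is \emph{defined} as the complement of the low sets inside $\mc{F}_{\rm bdd}$ so that coverage there is automatic. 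The only nontrivial verifications are: (i) that the constraint $\Lambda\le\varepsilon_{\rm width}^{-1}\omega_{\rm high}^2$ appearing in $\mc{F}_{\rm bdd}$ is no real restriction — this follows from admissibility~\ref{it:admissible-freqs-triple-Lambda-upper-bound} together with $|\omega|<\omega_{\rm high}$ \emph{only if} we also know $m^2\lesssim\omega_{\rm high}^2$, so in fact one must check that in the bounded regime $\Lambda$ cannot be arbitrarily large; this is where one uses Lemma~\ref{lemma:angular-eigenvalues-properties}(4): if $|m|$ is large and $|\omega|$ bounded then $q=a\omega/m$ is small, forcing $\Lambda\gtrsim m^2$, which would push the triple out of $\mc{F}_{\rm bdd}$ into $\mc{F}_{\measuredangle}$ — hence the only bounded triples with large $m$ are those already swept into the angular-dominated unbounded family, and within $\mc{F}_{\rm bdd}$ both $m^2$ and $\Lambda$ are $O(\omega_{\rm high}^2)$; (ii) that the overlapping definitions (e.g.\ $\mc{F}_{\rm comp,2}$ contains $\Lambda-2am\omega\ge\beta_4\Lambda$ as $\mc{F}_{\rm comp,2a}$ and the rest as $\mc{F}_{\rm comp,2b}$) genuinely exhaust their parent set, which is immediate since they are a set and its complement.

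The role of the parameter hierarchy ($\beta_3\le\beta_1$; $\omega_{\rm high}$ large; $\varepsilon_{\rm width}^{-1}$ small depending on $\beta_1$) is to guarantee the boundaries between adjacent regimes are consistent — e.g.\ the superradiant threshold band $\{0<m\omega<m^2\upomega_++\beta_3\Lambda\}$ used in $\mc{F}_{\rm comp,1}$ must sit inside the band $\{m\omega<m^2\upomega_++\beta_1\Lambda\}$ used in $\mc{F}_{\measuredangle,1}$, so $\beta_3\le\beta_1$ is needed so no triple falls through the crack between the comparable and angular regimes; and the smallness of $\varepsilon_{\rm width}^{-1}$ relative to $\beta_1$ ensures that in the large comparable regime the superradiance condition and the trapping condition are compatible in the way the geometry (Section~\ref{sec:intro-trapping}) dictates, namely that superradiant comparable frequencies are automatically non-trapped. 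I would record these compatibilities as a handful of elementary inequalities among the thresholds and check each one.

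\textbf{Main obstacle.} The proof itself is elementary — it is case enumeration — so the real work is organizational: ensuring the (deliberately overlapping) definitions leave no gap, and that every admissibility constraint from Definition~\ref{def:admissible-freqs}, especially the large-$|m|$ bound~\ref{it:admissible-freqs-triple-Lambda-q-bound} via Lemma~\ref{lemma:angular-eigenvalues-properties}(4), is actually invoked where needed so that, for instance, a bounded-$\omega$ but enormous-$m$ triple is correctly placed in $\mc{F}_{\measuredangle}$ rather than being orphaned. Getting the interplay between the $a\le a_0$ and $a>a_0$ splittings right across the angular, superradiant, and comparable-trapped families — so that the union over $a\in[0,M]$ really is covered with a single consistent choice of $\beta_1,\beta_2,\beta_3,\beta_4$ — is the part most prone to error and the part I would write out most carefully.
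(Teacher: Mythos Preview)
Your decision-tree approach is sound in spirit, but you have misidentified where the genuine gap lies, and consequently your explanation of the role of the smallness of $\varepsilon_{\rm width}$ is incorrect.

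Most of the case splits you enumerate are automatic from the definitions. In particular, your concern (i) is a non-issue: the constraint $\Lambda\le\varepsilon_{\rm width}^{-1}\omega_{\rm high}^2$ is \emph{part of the definition} of $\mc{F}_{\rm bdd}$, and $\mc{F}_{\rm unbdd}$ is defined as its complement. A triple with $|\omega|<\omega_{\rm high}$ but $\Lambda>\varepsilon_{\rm width}^{-1}\omega_{\rm high}^2$ lands in $\mc{F}_{\rm unbdd}$, and since then automatically $\Lambda>\varepsilon_{\rm width}^{-1}\omega^2$, it falls into $\mc{F}_{\measuredangle}$. No appeal to the large-$|m|$ eigenvalue asymptotics is needed here. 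Similarly, within $\mc{F}_{\rm comp}$ the sub-families are easily checked to cover everything (with overlap) once $\beta_3\le\beta_1$, for both $a\le a_0$ and $a>a_0$.

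The one case your decision tree glosses over is the split of $\mc{F}_{\measuredangle}$ into $\mc{F}_{\measuredangle,1}$, $\mc{F}_{\measuredangle,2}$, $\mc{F}_{\measuredangle,3}$. Each of these carries the constraint $m\omega\le m^2\upomega_++\beta_1\Lambda$ (strict or not), so a priori an angular-dominated triple with $m\omega\ge m^2\upomega_++\beta_1\Lambda$ would be orphaned. This is the \emph{only} nontrivial verification, and it is the entire content of the paper's proof: assuming $m\omega\ge m^2\upomega_++\beta_1\Lambda$ together with $\Lambda\ge\varepsilon_{\rm width}^{-1}\omega^2$ and $\Lambda\ge\varepsilon_{\rm width}^{-1}\omega_{\rm high}^2$, one first uses the latter two to get $\Lambda\ge m^2$ (this is Lemma~\ref{lemma:angular-dominated-frequency-properties}\ref{it:angular-dominated-frequency-properties-Lambda-m2}), then chains the inequalities
\[
m\omega\ge \beta_1\Lambda\ge \beta_1\varepsilon_{\rm width}^{-1}\omega^2 \;\Rightarrow\; |m|\ge\beta_1\varepsilon_{\rm width}^{-1}|\omega|,\qquad
|\omega|\ge \beta_1\Lambda/|m|\ge\beta_1|m|,
\]
giving $|m|\ge\beta_1^2\varepsilon_{\rm width}^{-1}|m|$, a contradiction once $\varepsilon_{\rm width}<\beta_1^2$. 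This is the actual purpose of the smallness of $\varepsilon_{\rm width}$ relative to $\beta_1$, not the superradiance--trapping compatibility you suggest (which is a geometric input used elsewhere, in the analysis of $\mc{F}_{\rm comp}$, not in this covering lemma).
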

\begin{proof}
We need only show that the conditions
\begin{align*}
m\omega\geq m^2\upomega_++\beta_1\Lambda\,, \qquad \Lambda\geq \varepsilon_{\rm width}^{-1}\omega^2\,, \qquad \Lambda\geq \varepsilon_{\rm width}^{-1}\omega_{\rm high}^2
\end{align*}
cannot hold simultaneously. Note that, by the two final conditions, one can show that $\Lambda\geq m^2$ (see Lemma~\ref{lemma:angular-dominated-frequency-properties}\ref{it:angular-dominated-frequency-properties-Lambda-m2}) for sufficiently large $\omega_{\rm high}$ and $\varepsilon_{\rm width}^{-1}$. We deduce
\begin{align*}
m\omega \geq m^2\upomega_+ +\beta_1\Lambda \geq \beta_1 \epsilon^{-1}_{\rm width} \omega^2 &\Rightarrow |m|\geq \beta_1 \epsilon^{-1}_{\rm width} |\omega| \text{~ and ~} |\omega|\geq \beta_1 \Lambda|m|^{-1} \geq \beta_1  |m|\\
&\Rightarrow |m|\geq \beta_1^2\epsilon_{\rm width}^{-1}|m|\,,
\end{align*}
which cannot hold if $\epsilon_{\rm width}$ is taken small enough that $\epsilon_{\rm width}< \beta_1^2$. 
\end{proof}

\begin{lemma}[Subextremal frequency ranges] Fix $s\in\mathbb{Z}$, $M>0$ and $a_0<M$. Under the same assumptions on the parameters $\tilde{a}_0,\tilde{a}_1,r_0',\varepsilon_{\rm width},\omega_{\rm high},\omega_{\rm low}$ as in Lemma~\ref{lemma:black-hole-partition}, the space of admissible frequency parameters with respect to spin $s$ and Kerr parameter $a\in[0, a_0]$, are contained in $\cup_{0\leq a\leq a_0}\lp(\mc{F}_{\mathrm{ admiss}(a,M,s)}\rp)$,  is contained in the union of all frequency ranges defined in the present section with the exception of the following four sets:
$\mc{F}_{\measuredangle,2}$, 
$\mc{F}_{\measuredangle,3}$, 
$\mc{F}_{\sun,2}$, 
$\mc{F}_{\rm comp,3}$, $\mc{F}_{\rm low,3}$, and  $\mc{F}_{\rm int,2}$.
\end{lemma}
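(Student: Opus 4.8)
The statement to be proved is the "Subextremal frequency ranges" lemma: that for $a\in[0,a_0]$ with $a_0<M$ fixed, the admissible frequencies $\cup_{0\le a\le a_0}\mc{F}_{\mathrm{admiss}(a,M,s)}$ are covered by the union of \emph{all} the frequency ranges listed in the section \emph{except} the four ranges $\mc{F}_{\measuredangle,2}$, $\mc{F}_{\measuredangle,3}$, $\mc{F}_{\sun,2}$ and $\mc{F}_{\rm comp,3}$, which carry the parameter $a_0$ built into their definition (they are only nonempty for $a\in(a_0,M]$). The strategy is to revisit the argument of Lemma~\ref{lemma:black-hole-partition} and observe where the excluded ranges actually get used, then show that for $a\le a_0$ one can avoid them by slightly adjusting the partition parameters.

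\textbf{Key steps.} First I would recall that Lemma~\ref{lemma:black-hole-partition} already establishes that the \emph{full} list of ranges (including the four exceptional ones) covers $\cup_{0\le a\le M}\mc{F}_{\mathrm{admiss}(a,M,s)}$; so it suffices to check that, when $a\in[0,a_0]$, every admissible triple that would land in one of $\mc{F}_{\measuredangle,2}$, $\mc{F}_{\measuredangle,3}$, $\mc{F}_{\sun,2}$, $\mc{F}_{\rm comp,3}$ in fact already lies in one of the \emph{non-exceptional} ranges $\mc{F}_{\measuredangle,1}$, $\mc{F}_{\sun,1}$, $\mc{F}_{\rm comp,1}$, or $\mc{F}_{\rm comp,2}$. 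The four exceptional ranges all share the feature that they are distinguished from their non-exceptional siblings only by replacing a condition of the form $m\omega \le m^2\upomega_+ + \beta_1\Lambda$ (resp.\ $m\omega < m^2\upomega_+ + \beta_3\Lambda$) with $m\omega \le m^2\upomega_+ - \beta_2\Lambda$ (resp.\ $m^2\upomega_+ - \beta_2\Lambda < m\omega < \cdots$). The key point is that for $a\le a_0<M$ the quantity $\upomega_+ = a/(2Mr_+)$ is bounded \emph{away} from its extremal value, so the window $[m^2\upomega_+-\beta_2\Lambda,\, m^2\upomega_++\beta_1\Lambda]$ where one would be forced into an exceptional range can be absorbed: one uses the angular eigenvalue bounds of Lemma~\ref{lemma:angular-eigenvalues-properties} (specifically statement 4, with $\nu=a\omega$) to show that $q:=a\omega/m$ and hence $m\omega$ are controlled, so the sign of $\Lambda - 2am\omega$ and the position of $m\omega$ relative to $m^2\upomega_+$ can be pinned down. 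Concretely: in the angular-dominated regime $\Lambda \ge \varepsilon_{\rm width}^{-1}\max\{\omega^2,\omega_{\rm high}^2\}$, Lemma~\ref{lemma:angular-eigenvalues-properties} forces $\Lambda \gtrsim m^2$; combined with $a\le a_0$ one gets $|2am\omega| \le 2a_0 |m||\omega| \le (\text{small}) \Lambda$ for $\varepsilon_{\rm width}^{-1}$ large, so $\Lambda - 2am\omega \ge \beta_4\Lambda$ automatically for a suitable small $\beta_4$, and moreover $m\omega - m^2\upomega_+ \ge -\beta_2\Lambda$ is never the binding constraint — so such a triple lands in $\mc{F}_{\measuredangle,1}$ (or $\mc{F}_{\rm comp,2a}$) rather than in $\mc{F}_{\measuredangle,2}$, $\mc{F}_{\measuredangle,3}$, etc. The analogous computation handles $\mc{F}_{\sun,2}$ vs $\mc{F}_{\sun,1}$ and $\mc{F}_{\rm comp,3}$ vs $\mc{F}_{\rm comp,1}/\mc{F}_{\rm comp,2}$ in the comparable regime $\varepsilon_{\rm width}\omega^2 \le \Lambda \le \varepsilon_{\rm width}^{-1}\omega^2$.

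\textbf{Main obstacle.} The delicate point is the comparable regime $\Lambda \sim \omega^2$ near the superradiant threshold $m\omega \approx m^2\upomega_+$: there one does \emph{not} have $\Lambda$ arbitrarily large relative to $\omega^2$, so the crude absorption $|2am\omega|\le(\text{small})\Lambda$ is not automatic and one must instead exploit that $\upomega_+ \le a_0/(2Mr_+(a_0)) < \upomega_+^{\rm extremal} = 1/(2M)$ strictly, giving a quantitative gap. The point is that the threshold frequency $\omega = m\upomega_+$ sits strictly below $\omega = m/(2M)$, and one checks (using $|a\omega/m|<1$ near threshold and the bound $\Lambda - 2m\nu \gtrsim m^2$ or $\gtrsim |q|m^2$ from Lemma~\ref{lemma:angular-eigenvalues-properties}(4)) that $\Lambda - 2am\omega$ remains bounded below by $\beta_4\Lambda$, and that one may choose $\beta_1,\beta_2,\beta_3$ (depending on $a_0$, hence the $a_0$-dependence of the parameters $\beta_1(a_0),\beta_2(a_0)$ announced in Theorem~\ref{thm:ode-estimates-Psi-A}) so that no admissible triple with $a\le a_0$ falls into the width-$\beta_2\Lambda$ "marginally superradiant / anti-superradiant" sliver that the exceptional ranges are designed to capture. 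Once those inequalities among the parameters are verified — essentially a bookkeeping exercise paralleling the final display in the proof of Lemma~\ref{lemma:black-hole-partition} — the covering claim follows, and the proof is complete.
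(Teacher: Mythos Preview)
The paper states this lemma without proof, because it is immediate from the definitions of the frequency ranges. Each of the four excluded ranges $\mc{F}_{\measuredangle,2}$, $\mc{F}_{\measuredangle,3}$, $\mc{F}_{\sun,2}$, $\mc{F}_{\rm comp,3}$ carries the explicit constraint $a\in(a_0,M]$ as part of its definition (look back at the list of ranges in Section~\ref{sec:frequency-partitions}). Hence for $a\in[0,a_0]$ these four sets are simply empty, and the covering by the remaining ranges follows directly from Lemma~\ref{lemma:black-hole-partition}.

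You actually record this observation yourself in your opening paragraph (``they are only nonempty for $a\in(a_0,M]$''), which is the entire content of the proof. But you then embark on a much longer argument involving the angular eigenvalue bounds of Lemma~\ref{lemma:angular-eigenvalues-properties}, control of $\Lambda-2am\omega$, and careful tuning of $\beta_1,\beta_2,\beta_3,\beta_4$ to avoid ``slivers'' near the superradiant threshold. None of this is needed here. You also mischaracterize what distinguishes the excluded ranges from their siblings: you say they differ ``only by replacing a condition of the form $m\omega\le m^2\upomega_++\beta_1\Lambda$ with $m\omega\le m^2\upomega_+-\beta_2\Lambda$'', but in fact the primary and sufficient distinction is the $a$-interval itself. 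The $m\omega$ conditions you focus on are what separate, e.g., $\mc{F}_{\measuredangle,2}$ from $\mc{F}_{\measuredangle,3}$ \emph{within} the $a>a_0$ regime, not what separates either of them from $\mc{F}_{\measuredangle,1}$. Your ``main obstacle'' paragraph is addressing a genuinely harder problem (essentially reproving that for subextremal $a$ the potentially trapped superradiant frequencies are quantitatively non-trapped) that is not required for this lemma.
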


In what follows, our strategy will be to treat the frequency parameter ranges contained in $\mc{F}_{\rm sub}$ uniformly in the full range of the specific angular momentum allowed, including the extremal case, when pertinent.

\subsection{Basic estimates from the transport equations}
\label{sec:basic-estimates-transformed}

In this section, we introduce some basic estimates we will use throughout the following sections to control the error terms arising from the application of multiplier currents due to the coupling between the radial ODE~\eqref{eq:transformed-k-separated} for $k=|s|$ and $\uppsi_{(k)}$ for $k<|s|$. These are based on the transport equations~\ref{eq:transformed-transport-separated}.

\begin{lemma} \label{lemma:basic-estimate-1}
Let $c(r^*):\mathbb{R}\to\mathbb{R}$ be a bounded, continuous and piecewise continuously differentiable function. For some  $R_-^*,R_+^*\in\mathbb{R}\cup\{\pm \infty\}$, we have
\begin{align}
\int_{R_-^*}^{R_+^*} c'\lp|\uppsi_{(k)}\rp|^2dr^* = c(R_+^*)\lp|\uppsi_{(k)}\rp|^2(R_+^*)-c(R_-^*)\lp|\uppsi_{(k)}\rp|^2(R_-^*)+\int_{R_-^*}^{R_+^*} 2\sign s\,wc\Re\lp[\overline{\uppsi_{(k)}}\uppsi_{(k+1)}\rp]dr^* \label{eq:basic-identity-1}\,,
\end{align}
and 
\begin{equation}\label{eq:basic-identity-1'}
\begin{split}
\int_{R_-^*}^{R_+^*} c'\lp|\uppsi_{(k)}'\rp|^2dr^* &= c(R_+^*)\lp|\uppsi_{(k)}'\rp|^2(R_+^*)-c(R_-^*)\lp|\uppsi_{(k)}'\rp|^2(R_-^*)+\int_{R_-^*}^{R_+^*}  2\sign s \,c\Re\lp[\overline{\uppsi_{(k)}}'\lp(w\uppsi_{(k+1)}\rp)'\rp]\\
&+\int_{R_-^*}^{R_+^*}  4am\sign s\, \frac{rw c}{r^2+a^2}\Im\lp[\overline{\uppsi_{(k)}}'\uppsi_{(k)}\rp]\,.
\end{split}
\end{equation}

Moreover, 
\begin{align}
\begin{split}
\int_{R_-^*}^{R_+^*} c'\mathbbm{1}_{\{c'>0\}}\lp|\uppsi_{(k)}\rp|^2dr^*  &\leq \int_{R_-^*}^{R_+^*} \lp(\frac{8w^2c^2}{|c'|}\mathbbm{1}_{\{c'\neq 0\}}+2|c|w\mathbbm{1}_{\{c'=0\}}\rp)\lp|\uppsi_{(k+1)}\rp|^2dr^*\\
&\qquad+\int_{R_-^*}^{R_+^*} \lp(3|c'|\mathbbm{1}_{\{c'<0\}}+2|c|w\mathbbm{1}_{\{c'=0\}}\rp)\lp|\uppsi_{(k)}\rp|^2dr^*\\
&\qquad +2c(R_+^*)\lp|\uppsi_{(k)}\rp|^2(R_+^*)-2c(R_-^*)\lp|\uppsi_{(k)}\rp|^2(R_-^*)\,. 
\end{split}\label{eq:basic-estimate-1}
\end{align}
and
\begin{align*}
\int_{R_-^*}^{R_+^*} c'\mathbbm{1}_{\{c'>0\}}\lp|\uppsi_{(k)}'\rp|^2dr^*  &\leq 12\int_{R_-^*}^{R_+^*} \frac{w^2c^2}{|c'|}\mathbbm{1}_{\{c'\neq 0\}}\lp|\uppsi_{(k+1)}'\rp|^2dr^*+\int_{R_-^*}^{R_+^*} |c'|\mathbbm{1}_{\{c'<0\}}\lp|\uppsi_{(k)}'\rp|^2dr^* \numberthis \label{eq:basic-estimate-1'}\\
&\qquad + 12\int_{R_-^*}^{R_+^*} \mathbbm{1}_{\{c'\neq 0\}} \lp\{\frac{(w')^2c^2}{|c'|}\lp|\uppsi_{(k+1)}\rp|^2+\frac{16w^2c^2r^2}{|c'|(r^2+a^2)^2}a^2m^2\lp|\uppsi_{(k)}\rp|^2\rp\}dr^*\\
&\qquad + 4\sign s\int_{R_-^*}^{R_+^*} c\mathbbm{1}_{c'=0}\lp\{\Re\lp[\overline{\uppsi_{(k)}}'\lp(w\uppsi_{(k+1)}\rp)'\rp]+  \frac{2amrw}{r^2+a^2}\Im\lp[\overline{\uppsi_{(k)}}'\uppsi_{(k)}\rp] \rp\}dr^*\\
&\qquad +2c(R_+^*)\lp|\uppsi_{(k)}\rp|^2(R_+^*)-2c(R_-^*)\lp|\uppsi_{(k)}\rp|^2(R_-^*)\,. 
\end{align*}
\end{lemma}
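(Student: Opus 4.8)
\textbf{Proof strategy for Lemma~\ref{lemma:basic-estimate-1}.}

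The plan is to derive all four statements from the transport relation \eqref{eq:transformed-transport-separated}, which in the separated picture reads $\tfrac{\Delta}{(r^2+a^2)^2}\uppsi_{(k+1)} = w\,\uppsi_{(k+1)} = \mc{L}\uppsi_{(k)}$, i.e.\ (recalling the separated forms of $L,\uL$ from Section~\ref{sec:radial-ODE-Teukolsky-transformed})
\begin{align}
\uppsi_{(k)}' = -\sign s\lp[ w\,\uppsi_{(k+1)} + i\lp(\omega - \tfrac{am}{r^2+a^2}\rp)\uppsi_{(k)}\rp]\,. \label{eq:transport-plan}
\end{align}
First I would establish the two identities. For \eqref{eq:basic-identity-1}, differentiate $c|\uppsi_{(k)}|^2$, integrate over $[R_-^*,R_+^*]$, and use $\tfrac{d}{dr^*}|\uppsi_{(k)}|^2 = 2\Re[\overline{\uppsi_{(k)}}\uppsi_{(k)}']$ together with \eqref{eq:transport-plan}; the $i(\omega-\tfrac{am}{r^2+a^2})|\uppsi_{(k)}|^2$ contribution is purely imaginary so its real part drops out, leaving exactly the $2\sign s\, wc\,\Re[\overline{\uppsi_{(k)}}\uppsi_{(k+1)}]$ term. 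For \eqref{eq:basic-identity-1'}, the same manipulation applied to $c|\uppsi_{(k)}'|^2$ gives $\tfrac{d}{dr^*}|\uppsi_{(k)}'|^2 = 2\Re[\overline{\uppsi_{(k)}}'\uppsi_{(k)}'']$; here one uses \eqref{eq:transport-plan} differentiated once, $\uppsi_{(k)}'' = -\sign s[(w\uppsi_{(k+1)})' + i(\omega-\tfrac{am}{r^2+a^2})\uppsi_{(k)}' + i(-\tfrac{am}{r^2+a^2})'\uppsi_{(k)}]$, noting $(-\tfrac{am}{r^2+a^2})' = \tfrac{2amr}{(r^2+a^2)^2}\cdot\tfrac{\Delta}{r^2+a^2}$ — wait, more precisely $\tfrac{d}{dr^*}\tfrac{am}{r^2+a^2} = -\tfrac{2amr}{(r^2+a^2)^2}\cdot w(r^2+a^2) $; carefully this is $-\tfrac{2amrw}{r^2+a^2}$. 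Taking real parts, the term with $i(\omega-\cdots)\uppsi_{(k)}'$ gives $\Re[i(\cdots)\overline{\uppsi_{(k)}'}\uppsi_{(k)}'] = 0$, while the $\tfrac{am}{r^2+a^2}$-derivative term yields $4am\sign s\,\tfrac{rwc}{r^2+a^2}\Im[\overline{\uppsi_{(k)}}'\uppsi_{(k)}]$ after accounting for the two factors of $\sign s$ and a sign bookkeeping, and the $(w\uppsi_{(k+1)})'$ term gives $2\sign s\, c\,\Re[\overline{\uppsi_{(k)}}'(w\uppsi_{(k+1)})']$. This reproduces \eqref{eq:basic-identity-1'}.

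Next I would deduce the estimates \eqref{eq:basic-estimate-1} and \eqref{eq:basic-estimate-1'}. Starting from \eqref{eq:basic-identity-1}, split the left side as $\int c'\mathbbm{1}_{\{c'>0\}}|\uppsi_{(k)}|^2 + \int c'\mathbbm{1}_{\{c'<0\}}|\uppsi_{(k)}|^2 + \int c'\mathbbm{1}_{\{c'=0\}}|\uppsi_{(k)}|^2$; the last is zero, but to get a clean bound I keep the structure. Move the negative part to the right (it becomes $+\int |c'|\mathbbm{1}_{\{c'<0\}}|\uppsi_{(k)}|^2$, contributing to the $3|c'|\mathbbm{1}_{\{c'<0\}}$ term), and on the cross term $\int 2\sign s\, wc\,\Re[\overline{\uppsi_{(k)}}\uppsi_{(k+1)}]$ apply Cauchy--Schwarz with weight split $|c'|^{1/2}$ vs.\ $|c'|^{-1/2}$ on the set $\{c'\neq 0\}$: $2|wc|\,|\uppsi_{(k)}||\uppsi_{(k+1)}| \le \tfrac14 |c'||\uppsi_{(k)}|^2 + \tfrac{4w^2c^2}{|c'|}|\uppsi_{(k+1)}|^2$ (adjusting constants to absorb into the $\tfrac18\cdot$ scheme and leave room — the factor $8$ and $3$ in the statement indicate the precise accounting, where a $\tfrac14$ of the left side gets absorbed and a factor $2$ appears on the boundary terms); on the set $\{c'=0\}$ estimate the cross term crudely by $2|c|w(|\uppsi_{(k)}|^2 + |\uppsi_{(k+1)}|^2)$, producing the $2|c|w\mathbbm{1}_{\{c'=0\}}$ terms. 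The boundary terms acquire a factor $2$ from having absorbed part of the left-hand integral. Estimate \eqref{eq:basic-estimate-1'} is obtained identically from \eqref{eq:basic-identity-1'}, now also using $|(w\uppsi_{(k+1)})'|^2 \le 2w^2|\uppsi_{(k+1)}'|^2 + 2(w')^2|\uppsi_{(k+1)}|^2$ to split the derivative-of-product term (this is the source of the $\tfrac{(w')^2c^2}{|c'|}|\uppsi_{(k+1)}|^2$ term), and Cauchy--Schwarz on $\tfrac{2amrwc}{r^2+a^2}\Im[\overline{\uppsi_{(k)}}'\uppsi_{(k)}]$ with the split $|c'|^{1/2}|\uppsi_{(k)}'|$ vs.\ $\tfrac{|c|}{|c'|^{1/2}}\tfrac{amrw}{r^2+a^2}|\uppsi_{(k)}|$ (giving the $\tfrac{16w^2c^2r^2}{|c'|(r^2+a^2)^2}a^2m^2|\uppsi_{(k)}|^2$ term); on $\{c'=0\}$ one simply retains these two contributions as the explicit $4\sign s\int c\,\mathbbm{1}_{c'=0}\{\cdots\}$ term rather than estimating them.

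The main obstacle is not conceptual but bookkeeping: tracking the precise numerical constants ($8$, $3$, $12$, $16$, the factor $2$ on boundary terms) through the Cauchy--Schwarz splittings so that a definite fraction of the left-hand side is absorbed and the stated inequality holds verbatim, while simultaneously handling the three cases $\{c'>0\}$, $\{c'<0\}$, $\{c'=0\}$ for a merely piecewise-$C^1$ weight $c$ — in particular making sure the integration by parts producing \eqref{eq:basic-identity-1}--\eqref{eq:basic-identity-1'} is justified across the finitely many breakpoints of $c$ (the jumps in $c'$ integrate to zero since $c$ itself is continuous, and the boundary terms only appear at $R_\pm^*$). A secondary technical point is the correct computation of $\tfrac{d}{dr^*}\tfrac{am}{r^2+a^2}$ and the resulting sign of the $\Im[\overline{\uppsi_{(k)}}'\uppsi_{(k)}]$ term, which must match \eqref{eq:transformed-k-separated}'s $\Im\mc{V}_{(k)}$; I would cross-check against the identity for $\Im[\uppsi_{(k)}\overline{\uppsi_{(k)}}']$ already derived in the proof of Lemma~\ref{lemma:h-y-identity-k}.
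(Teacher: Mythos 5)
Your proposal is correct and follows essentially the same route as the paper: the paper multiplies the rearranged transport relation $w\uppsi_{(k+1)} = (-\sign s\, \tfrac{d}{dr^*} - i\omega + i\tfrac{am}{r^2+a^2})\uppsi_{(k)}$ by $c\overline{\uppsi_{(k)}}$, takes the real part, and integrates, which is just the integration-by-parts form of your ``differentiate $c|\uppsi_{(k)}|^2$'' step; and for \eqref{eq:basic-identity-1'} the paper commutes with $\tfrac{d}{dr^*}$ exactly as you do, then repeats. The subsequent Cauchy--Schwarz splitting across $\{c'>0\}$, $\{c'<0\}$, $\{c'=0\}$ is also what the paper does, and the specific constants $8,3,12,16$ and the factor $2$ on the boundary terms are indeed a matter of the precise Peter--Paul parameters chosen (and are stated with some slack); your cross-check of the $\Im[\overline{\uppsi_{(k)}}'\uppsi_{(k)}]$ sign against Lemma~\ref{lemma:h-y-identity-k} is a sensible thing to do, since the sign in \eqref{eq:basic-identity-1'} is immaterial for the estimate \eqref{eq:basic-estimate-1'} where it only ever enters via Cauchy--Schwarz.
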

\begin{proof} Recall the transport equations~\eqref{eq:transformed-transport-separated}, which can be written as
\begin{align*} 
 w\uppsi_{(k+1)}=\lp(-\mr{sgn}(s)\frac{d}{dr^*}-i\omega+i\frac{am}{r^2+a^2}\rp)\uppsi_{(k)}\,.
\end{align*}
To obtain the first statement, multiply the previous identity by $c(r^*)\overline{\uppsi_{(k)}}$, take the real part and integrate in $r^*$. By an application of Cauchy--Schwarz, if $c'>0$, we have 
\begin{align*}
\int_{R_-^*}^{R^*_+} c'\lp|\uppsi_{(k)}\rp|^2dr^* \leq  2c(R_+^*)|\uppsi_{(k)}|^2(R_+^*)-2c(R_-^*)|\uppsi_{(k)}|^2(R_-^*)+\int_{R_-^*}^{R^*_+} \frac{4w^2c^2}{(c')^2}c'|\uppsi_{(k+1)}|^2\,.
\end{align*}
Hence, more generally,
\begin{align*}
&\int_{R_-^*}^{R^*_+} \lp(c'-c'\mathbbm{1}_{\{c'<0\}}\rp)\lp|\uppsi_{(k)}\rp|^2dr^*\\
&\quad=\int_{R_-^*}^{R^*_+} \lp[2(c\mathbbm{1}_{\{c'>0\}}+c\mathbbm{1}_{\{c'=0\}}+c\mathbbm{1}_{\{c'<0\}})w\Re\lp[\overline{\uppsi_{(k)}}\uppsi_{(k+1)}\rp]+ |c'|\mathbbm{1}_{\{c'<0\}}|\uppsi_{(k)}|^2\rp]dr^*\\
&\quad\qquad +\lp[c(r^*)\lp|\uppsi_{(k)}\rp|^2\rp](R_+^*)-\lp[c(r^*)\lp|\uppsi_{(k)}\rp|^2\rp](R_-^*)\\
&\quad\leq \int_{R_-^*}^{R^*_+} \lp(\frac{8w^2c^2}{|c'|}\mathbbm{1}_{\{c'\neq 0\}}+2|c|w\mathbbm{1}_{\{c'=0\}}\rp)\lp|\uppsi_{(k+1)}\rp|^2dr^*+\int_{R_-^*}^{R^*_+}\lp(3|c'|\mathbbm{1}_{\{c'<0\}}+2|c|w\mathbbm{1}_{\{c'=0\}}\rp)\lp|\uppsi_{(k)}\rp|^2dr^*\\
&\quad\qquad +\lp[c(r^*)\lp|\uppsi_{(k)}\rp|^2\rp](R_+^*)-\lp[c(r^*)\lp|\uppsi_{(k)}\rp|^2\rp](R_-^*)\,.
\end{align*}

On the other hand, we can commute the transport equations~\eqref{eq:transformed-transport-separated} with $\frac{d}{dr^*}$ to obtain
\begin{align*}
\lp(\mr sign s \frac{d}{dr^*} -i\omega+\frac{iam}{r^2+a^2}\rp)\uppsi_{(k)}'=-(w\uppsi_{(k+1)})'-\frac{2iamr}{r^2+a^2}w\uppsi_{(k)}\,.
\end{align*}
The same procedure as before yields \eqref{eq:basic-estimate-1'}.
\end{proof}

As we have discussed, the classical energy currents are not conserved for the equation for $\Psi$, due to the coupling to $\uppsi_{(k)}$ with $k<|s|$. The most troublesome errors generated, at least from the point of view of the more basic Lemma~\ref{lemma:basic-estimate-1}, are those which are quadratic in the frequency parameters, i.e.\ terms of the form
$$\omega\Im\lp[\overline{\Psi}im\uppsi_{(k)}\rp]\,, \qquad m\upomega_+\Im\lp[\overline{\Psi}im\uppsi_{(k)}\rp]\,,$$
with appropriate $r$-weights. Lemma~\ref{lemma:basic-estimate-3} below is a generalization of the trick introduced in \cite[Proposition 5.3.1]{Dafermos2017} to address the former type of error. A similar estimate can be applied to the errors quadratic in $m^2$ but will not be necessary in our analysis.

\begin{lemma} \label{lemma:basic-estimate-3} Let $c(r^*)$ be continuously differentiable function such that $c$ and $c'$ are bounded in $r^*\in\mathbb{R}$; let $0\leq \chi\leq 1$ be a bump function compactly supported in $(r_+,\infty)$. For any $\varepsilon>0$, we have
\begin{align}
\begin{split}\label{eq:basic-estimate-3T-trapping-lower}
&\int_{-\infty}^\infty wc(r)\Im\lp[i\omega \overline{\Psi}m\uppsi_{(k)}\rp]dr^* \\
&\quad\leq B\int_{-\infty}^{\infty}\lp\{w\omega^2(1-\chi)|\Psi|^2+\chi\omega^2 \lp|\uppsi_{(|s|-1)}\rp|^2+ \chi m^2\lp|\uppsi_{(k+1)}\rp|^2+(\chi+w)m^2\lp|\uppsi_{(k)}\rp|^2\rp\}dr^*\,,
\end{split}
\end{align}
if $k=0,\dots |s|-2$, and if $k=|s|-1$, 
\begin{align}
\begin{split}\label{eq:basic-estimate-3T-trapping-top}
&\int_{-\infty}^\infty wc(r)\Im\lp[i\omega \overline{\Psi}m\uppsi_{(k)}\rp]dr^* \\
&\quad\leq B\int_{-\infty}^{\infty}\lp\{\varepsilon w |\Psi'|^2+w\lp[1+\varepsilon(1-\chi)m^2\rp]|\Psi|^2+w(1+\varepsilon^{-1})m^2\lp|\uppsi_{(k)}\rp|^2\rp\}dr^*\,.
\end{split}
\end{align}
\end{lemma}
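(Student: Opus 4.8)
\textbf{Proof strategy for Lemma~\ref{lemma:basic-estimate-3}.}

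The plan is to mimic the trick of \cite[Proposition 5.3.1]{Dafermos2017}: the quantity $wc\,\Im[i\omega\overline{\Psi}m\uppsi_{(k)}]$ is \emph{not} itself a total derivative, but it becomes one modulo controllable error terms once we integrate by parts using the transport relation \eqref{eq:transformed-transport-separated}. Concretely, I would start from the identity \eqref{eq:basic-identity-3-intro} in its full form, derived in Section~\ref{sec:energy-identity-transformed}, which for $k=|s|-1$ reads (after multiplying the transport equation $w\uppsi_{(|s|)}=(-\sign s\,\tfrac{d}{dr^*}-i\omega+\tfrac{iam}{r^2+a^2})\uppsi_{(|s|-1)}$ by an appropriate factor of $\overline{\Psi}$ and taking imaginary parts)
\begin{align*}
-\,c(r)\,\Im\lp[i m\omega\,\overline{\Psi}\,\uppsi_{(k)}\rp]\sign s
&= -\lp[c\,m\lp(\Im\lp[\overline{\Psi}\uppsi_{(k)}\rp]+\tfrac{am^2(r^2+a^2)}{2\Delta}\lp|\uppsi_{(k)}\rp|^2\rp)\rp]'\\
&\qquad + m c\,\Im\lp[\overline{\Psi}'\uppsi_{(k)}\rp]+\tfrac12\lp(\tfrac{(r^2+a^2)c}{\Delta}\rp)'am^2\lp|\uppsi_{(k)}\rp|^2\,,
\end{align*}
and an analogous identity for $k<|s|-1$ in which $\overline{\Psi}'$ is replaced, via \eqref{eq:transformed-k-separated} applied at level $k$, by a combination of $w\uppsi_{(k+1)}$, $(\omega-\tfrac{am}{r^2+a^2})\uppsi_{(k)}$ and $(\mc{L}/w)^{|s|-1-k}$ acting on $\Psi$-related quantities. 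Since $c$ and $c'$ are bounded and $(r^2+a^2)/\Delta\le B w^{-1}\cdot w^{-1}\cdot\ldots$ is under control away from $r_+$ while near $r_+$ the weight $w^{-1}(r^2+a^2)/\Delta$ is $O(w^{-2})$—this is exactly where the boundedness of $c$ (rather than merely $c'$) and the localization matter—integrating the total-derivative term produces no boundary contribution because $\uppsi_{(k)}$ and $\Psi$ decay at both ends (outgoing/regular boundary conditions) and the singular factor $(r^2+a^2)/\Delta$ at $r=r_+$ is killed by the vanishing of $|\uppsi_{(k)}|^2$ there at the appropriate rate. Thus, after integration, the left side equals $m\int c\,\Im[\overline{\Psi}'\uppsi_{(k)}] + \tfrac12\int\bigl(\tfrac{(r^2+a^2)c}{\Delta}\bigr)' a m^2|\uppsi_{(k)}|^2$ (plus the lower-level analogues), and all that remains is to estimate these two integrals.

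For the second integral one checks directly from $w=\Delta/(r^2+a^2)^2$ that $\bigl(\tfrac{(r^2+a^2)c}{\Delta}\bigr)' = \bigl(\tfrac{c}{(r^2+a^2)w}\bigr)'$ is a bounded multiple of $w$ (the $r^*$-derivative of a bounded function times the Jacobian $\tfrac{dr}{dr^*}=\tfrac{\Delta}{r^2+a^2}$), so this term is absorbed into $B\int w\,m^2|\uppsi_{(k)}|^2$, which appears on the right of both \eqref{eq:basic-estimate-3T-trapping-lower} and \eqref{eq:basic-estimate-3T-trapping-top}. For the first integral, $m\int c\,\Im[\overline{\Psi}'\uppsi_{(k)}]$, I split the range of $r^*$ using the bump function $\chi$. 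On $\{\chi<1\}$ (i.e.\ away from the photon-sphere region) I integrate by parts once more to move the derivative off $\Psi$—picking up $\int c'\,\Im[\overline{\Psi}\uppsi_{(k)}]$ and $\int c\,\Im[\overline{\Psi}\,\uppsi_{(k)}']$, the latter rewritten via \eqref{eq:transformed-transport-separated} in terms of $w\uppsi_{(k+1)}$, $\omega\uppsi_{(k)}$ and $\uppsi_{(k)}$—and then apply Cauchy--Schwarz with the weight $w$ so that the $\Psi$-contributions land as $w(1-\chi)\omega^2|\Psi|^2$ (for $k\le|s|-2$) or as in the $k=|s|-1$ case, where no such extra integration by parts is available, directly apply Cauchy--Schwarz with a parameter $\varepsilon$ to $w c\,\Im[\overline{\Psi}'\uppsi_{(k)}]$, producing $\varepsilon w|\Psi'|^2 + \varepsilon^{-1}w\,m^2|\uppsi_{(k)}|^2$, modulo the extra $w c\,\Im[\overline{\Psi}\,\uppsi_{(k)}]$ term which is $\le w|\Psi|^2 + w\,m^2|\uppsi_{(k)}|^2$. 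On $\{\chi>0\}$, which is a region bounded away from both $r=r_+$ and $r=\infty$ so that all $r$-weights are comparable to constants, I use the transport relation to trade $\Psi=\uppsi_{(|s|)}$ for $\uppsi_{(|s|-1)}$ and its derivative; iterating down to level $k+1$ produces the terms $\chi\omega^2|\uppsi_{(|s|-1)}|^2$, $\chi m^2|\uppsi_{(k+1)}|^2$ and $\chi m^2|\uppsi_{(k)}|^2$ on the right of \eqref{eq:basic-estimate-3T-trapping-lower}. Carefully bookkeeping the frequency weights ($m\cdot\Psi$-type terms become $\omega^2+m^2$ after Cauchy--Schwarz, split so the $\omega^2$ part always falls on a lower-level variable or on $(1-\chi)|\Psi|^2$) yields precisely the two claimed inequalities.

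The main obstacle is the bookkeeping at the two boundaries and at the photon sphere simultaneously. Near $r=r_+$ the factor $(r^2+a^2)/\Delta$ in the total-derivative term blows up like $(r-r_+)^{-1}$, so one must verify that $|\uppsi_{(k)}|^2$ vanishes fast enough there — this follows from Lemma~\ref{lemma:aymptotic-analysis-radialODE-transformed}, since $\uppsi_{(k)}$ with $k<|s|$ has an extra power of $(r-r_+)$ (coming from the factor $w^{-\frac{|s|-k}{2}}$ combined with the regularity condition) relative to $\Psi$ — and similarly at $r=\infty$ the boundary term is harmless because of the $r$-decay of $\uppsi_{(k)}$. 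The other delicate point is ensuring that after every application of Cauchy--Schwarz the \emph{quadratic-in-$\omega$} error always lands either on a strictly-lower-level variable $\uppsi_{(j)}$, $j<|s|$ (where it will later be absorbed by the transport-equation estimates of Lemma~\ref{lemma:basic-estimate-1}), or on the $(1-\chi)$-localized part of $|\Psi|^2$ away from trapping — never as an uncontrollable $\chi\,\omega^2|\Psi|^2$ near the maximum of the potential; the role of the cutoff $\chi$ and the asymmetry between the $k=|s|-1$ case (where the $\varepsilon w|\Psi'|^2$ term is unavoidable because the chain of transport equations has length one) and the $k\le|s|-2$ case is exactly to manage this. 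Everything else is routine integration by parts and weight-counting using $w=\Delta/(r^2+a^2)^2$ and the boundedness assumptions on $c$, $c'$.
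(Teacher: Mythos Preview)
Your overall strategy—using the transport equation to rewrite $i\omega\uppsi_{(k)}$ (or $w\overline{\Psi}$) and then Cauchy--Schwarz—is exactly the paper's. But there is a concrete gap in your treatment of the bulk term $\tfrac12\bigl(\tfrac{(r^2+a^2)c}{\Delta}\bigr)'am^2|\uppsi_{(k)}|^2$ for $k=|s|-1$. To match the lemma's left side you must substitute $c\to wc$ in the intro identity \eqref{eq:basic-identity-3-intro}, after which that coefficient becomes $\bigl(\tfrac{c}{r^2+a^2}\bigr)'=\tfrac{c'}{r^2+a^2}-\tfrac{2rc}{(r^2+a^2)^2}\cdot\tfrac{\Delta}{r^2+a^2}$. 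Since the hypothesis is that $c'=\tfrac{dc}{dr^*}$ (not $\tfrac{dc}{dr}$) is bounded, the first summand is only $O(1)$ near $r=r_+$, not $O(w)$. Your justification ``the $r^*$-derivative of a bounded function times the Jacobian'' conflates boundedness of a function with boundedness of its $r$-derivative. Relatedly, the companion boundary term $\tfrac{c}{r^2+a^2}|\uppsi_{(k)}|^2$ at $r=r_+$ does \emph{not} vanish for $s>0$: with outgoing conditions at $\mc{H}^+$ one has $|\uppsi_{(k)}|\sim 1$ there when $\sign s=+1$ (see Definition~\ref{def:psihor-psiout}), so the ``extra power of $(r-r_+)$'' you invoke is unavailable.

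The paper's remedy is to place the cutoff $\chi$ on the term $\tfrac{cwam^2}{r^2+a^2}\Re[\overline{\Psi}\uppsi_{(k)}]$—equivalently, to apply the identity $w\Re[\overline{\Psi}\uppsi_{(k)}]=-\tfrac12\sign s(|\uppsi_{(k)}|^2)'$ only on the $\chi$-localized piece—rather than on the $\Im[\overline{\Psi}'\uppsi_{(k)}]$ term where you put it. On $\supp\chi$, which is bounded away from $r_+$, the weight $w$ is bounded below and the integration by parts produces no boundary contribution; on the complementary $(1-\chi)$ piece one Cauchy--Schwarzes $\tfrac{cw(1-\chi)am^2}{r^2+a^2}\Re[\overline{\Psi}\uppsi_{(k)}]$ directly, which lands an $m^2$ weight on $|\Psi|^2$ with the harmless $w(1-\chi)$ prefactor (this is the $\varepsilon w(1-\chi)m^2|\Psi|^2$ in \eqref{eq:basic-estimate-3T-trapping-top}). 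Your $\varepsilon$-weighted Cauchy--Schwarz on $wcm\Im[\overline{\Psi}'\uppsi_{(k)}]$ is correct and is what the paper does—but that term never needed $\chi$. For $k\le|s|-2$ the paper also proceeds differently from you: it applies the transport relation directly to $w\overline{\Psi}$ (lowering $\Psi$ to $\uppsi_{(|s|-1)}$ in one step, identity \eqref{eq:basic-estimate-3-intermediate-2}), again localized by $\chi$, rather than first passing through $\Im[\overline{\Psi}'\uppsi_{(k)}]$.
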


\begin{proof} 
Using the identity \eqref{eq:transformed-transport-separated} for the term $i\omega \uppsi_{(k)}$, we obtain
\begin{align*}
wc\Im\lp[i\omega \overline{\Psi}m\uppsi_{(k)}\rp] &= wcm\Im\lp[\overline{\Psi}\lp(w\uppsi_{(k+1)}+\sign s \uppsi_{(k)}'-i\frac{am}{r^2+a^2}\uppsi_{(k)}\rp)\rp]\\
&= cw^2 m \Im\lp[\overline{\Psi}\uppsi_{(k+1)}\rp]-\Im\lp[\lp(cw\frac{iam^2}{r^2+a^2}+(wc)'m\sign s\rp)\overline{\Psi}\uppsi_{(k)}\rp] \\
&\qquad-cwm \sign s \Im\lp[\overline{\Psi}'\uppsi_{(k)}\rp]+\lp(cwm\sign s \Im\lp[\overline{\Psi}\uppsi_{(k)}\rp]\rp)'\,. \numberthis\label{eq:basic-estimate-3-intermediate}
\end{align*}
After integration in $r^*$, the boundary terms  vanish due to the asymptotics of $\uppsi_{(k)}$ (see Lemma~\ref{eq:uppsi-general-asymptotics} and Definition~\ref{def:uhor-uout}). Now suppose $k=|s|-1$; we have
\begin{align*}
\begin{split} 
\Im\lp[\overline{\Psi}\uppsi_{(k+1)}\rp]&= 0\,,\\
wc\Im\lp[i\overline{\Psi}\uppsi_{(k)}\rp]&= -\frac12c\sign s  \lp(\lp|\uppsi_{(k)}\rp|^2\rp)'=-\lp(\frac12c\sign s  \lp|\uppsi_{(k)}\rp|^2\rp)'+ \frac12c'\sign s  \lp|\uppsi_{(k)}\rp|^2\,.
\end{split}
\end{align*}
For the boundary term produced here to vanish, $c$ must itself vanish as $r^*\to\pm\infty$ at a suitable rate; we use a cutoff function $\chi$ so that this is the case: since $acw(r^2+a^2)^{-1}=acw(r^2+a^2)^{-1}\lp[\chi+(1-\chi)\rp]$, we have
\begin{align*}
\begin{split}
&-\sign s\int_{-\infty}^\infty wc(r)\Im\lp[i\omega \overline{\Psi}m\uppsi_{(k)}\rp]dr^* \\
&\quad= \int_{-\infty}^\infty \lp\{wcm\Im\lp[\overline{\Psi}'\uppsi_{(k)}\rp]+(wc)'m\Im\lp[\overline{\Psi}\uppsi_{(k)}\rp]\rp\}dr^*\\
&\quad\qquad+\int_{-\infty}^\infty \lp\{\sign s \frac{a}{r^2+a^2}cw(1-\chi)m^2\Re\lp[\overline{\Psi}\uppsi_{(k)}\rp]+\frac12 \lp(c\frac{a}{r^2+a^2}\chi\rp)'m^2\lp|\uppsi_{(k)}\rp|^2\rp\}dr^*\,,
\end{split}
\end{align*}
 Then, estimate \eqref{eq:basic-estimate-3T-trapping-top} follows by Cauchy--Schwarz.

Alternatively, the identity \eqref{eq:transformed-transport-separated} can be applied to the $\Psi$ term to lower it: letting $K=|s|-1$,
\begin{align*}
wc\Im\lp[i\omega \overline{\Psi}m\uppsi_{(k)}\rp] &= c\omega\Im\lp[\lp(-\sign s \overline{\uppsi_{(K}}'+i\omega\overline{\uppsi_{(K)}} -\frac{iam}{r^2+a^2}\overline{\uppsi_{(K)}}\rp)im\uppsi_{(k)}\rp] \\
&=\lp(-\sign s c\omega \Im\lp[\overline{\uppsi_{(K)}}im\uppsi_{(k)}\rp]\rp)'+\sign s c'\omega\Im\lp[\overline{\uppsi_{(K)}}im\uppsi_{(k)}\rp]\\
&\qquad -c\omega\Im\lp[\overline{\uppsi_{(K)}}\lp(-\sign s \uppsi_{(k)}'-i\omega\uppsi_{(k)}+\frac{iam}{r^2+a^2}\uppsi_{(k)}\rp)\rp]\\
&=\lp(-\sign s c\omega \Im\lp[\overline{\uppsi_{(K)}}im\uppsi_{(k)}\rp]\rp)'+\sign s c'\omega\Im\lp[\overline{\uppsi_{(K)}}im\uppsi_{(k)}\rp]\\
&\qquad-cw\omega\Im\lp[\overline{\uppsi_{(K)}}im\uppsi_{(k+1)}\rp]\,. \numberthis\label{eq:basic-estimate-3-intermediate-2}
\end{align*}
In this identity, upon integration, the boundary term only vanishes if $c$ itself vanishes; we use a bump function $\chi$ with compact support in $(r_+,\infty)$ so that this is the case. Thus, when $k<|s|$, we consider
\begin{align*}
wc\Im\lp[i\omega \overline{\Psi}m\uppsi_{(k)}\rp] &= wc(1-\chi)\Im\lp[i\omega \overline{\Psi}m\uppsi_{(k)}\rp] + (c\chi)'\omega\Im\lp[\overline{\uppsi_{(|s|-1)}}im\uppsi_{(k)}\rp]\\
&\qquad+c\chi w\omega\Im\lp[\overline{\uppsi_{(|s|-1)}}im\uppsi_{(k+1)}\rp]+\lp(-\sign s c\chi\omega \Im\lp[\overline{\uppsi_{(K)}}im\uppsi_{(k)}\rp]\rp)'\,,
\end{align*}
to which an application of Cauchy--Schwarz yields \eqref{eq:basic-estimate-3T-trapping-lower}.
\end{proof}

%-------------------------------------

\subsection{The bounded frequencies with small time frequency}
\label{sec:bounded-smallness}

In this section, we will establish Theorems~\ref{thm:ode-estimates-Psi-A} and~\ref{thm:ode-estimates-Psi-B} when the frequencies $(\omega,m,\Lambda)$ are bounded and $|\omega|$ is small. We begin, in Section~\ref{sec:bounded-smallness-properties} by stating some important properties of the frequency triple and of the transformed system potentials in these regimes. We provide a brief description of the strategy of our proof in \ref{sec:bounded-smallness-overview}, which we further expand upon at the beginning of Sections~\ref{sec:F-low-1} and \ref{sec:F-low-2}, where the proofs are carried out respectively for $\mc{F}_{\rm low, 1}$ and $\mc{F}_{\rm low,2}$. These proofs are built upon the hierarchy of estimates on the transformed system which is laid out in Sections~\ref{sec:low-hierarchy-h} and \ref{sec:low-hierarchy-y}, as well as the model virial currents described there.

\subsubsection{Properties of the frequency parameters and potential}
\label{sec:bounded-smallness-properties}

We begin by introducing the key properties of the potentials $\mc{V}_{(k)}$, $k=0,...,|s|$, in \eqref{eq:radial-ODE-potential-k-tilde} which are  essential to the construction of suitable currents in the low frequency regime $\mc{F}_{\rm low}(\omega_{\rm high},\epsilon_{\rm width})$:

\begin{lemma}[Properties of the frequency parameters in $\mc{F}_{\rm low,1}\cup \mc{F}_{\rm low,2}$]
\label{lemma:properties-frequencies-low-omega} Let $(\omega,m,\Lambda)\in\mc{F}_{\rm low,1}\cup\mc{F}_{\rm low,2}(\varepsilon_{\rm width},\omega_{\rm high},\omega_{\rm low})$. Then, for sufficiently small $\omega_{\rm low}$ depending on $M$, $\varepsilon_{\rm width}$ and $\omega_{\rm high}$, we have
\begin{enumerate}[label=(\roman*)]
\item $\Lambda+s^2\geq m^2$; \label{it:properties-frequencies-low-omega-Lambda-m2}
\item $\Lambda- 2am\omega\geq \frac{3|s|}{4}\geq 0$; \label{it:properties-frequencies-low-omega-Lambda-nondegenerate}
\item in $\mc{F}_{\rm low,1}$, $(\omega-m\upomega_+)^2\leq 4\omega_{\rm low}^2$; in $\mc{F}_{\rm low,2}$, $(\omega-m\upomega_+)^2\geq b(\tilde{a}_0)$; \label{it:properties-frequencies-low-omega-K-frequency}
\item for $|s|=0,1,2$, we have $\mathfrak{C}_s/\mathfrak{D}_s^{\mc{I}},\mathfrak{C}_s/\mathfrak{D}_s^{\mc{H}}, \mathfrak{D}_s^{\mc I}/\mathfrak{D}_s^{\mc{H}} \in\lp[\frac13,\frac53\rp]$; \label{it:properties-frequencies-low-omega-TS-constants}
\item for $s=0,1,2$ and $k<s$, we have 
$$\frac{\lp| \swei{A}{\mp s}_{k,\mc{H}^\pm}\rp|^2}{\lp| \swei{A}{\mp s}_{\mc{H}^\pm}\rp|^2}\leq B(\omega-m\upomega_+)^2\lp[(\omega-m\upomega_+)^{2(|s|-k-1)}+\frac{(r_+-M)^2}{M^2}\rp]\,,\qquad \frac{\lp| \swei{A}{\pm s}_{k,\mc{I}^\pm}\rp|^2}{\lp| \swei{A}{\pm s}_{\mc{I}^\pm}\rp|^2}\leq B\omega^{2(|s|-k)}\,.$$ \label{it:properties-frequencies-low-omega-bdry-terms}
\end{enumerate}
\end{lemma}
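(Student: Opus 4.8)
\textbf{Proof proposal for Lemma~\ref{lemma:properties-frequencies-low-omega}.}

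The plan is to treat the six items in order, since several of the later ones use the earlier ones. For item (i), I would start from the angular bound in Lemma~\ref{lemma:angular-eigenvalues-properties}(3), namely $\Lambda \geq l(l+1) - s^2 - 2|s||a\omega|$, together with $l \geq \max\{|m|,|s|\}$, so $l(l+1) \geq m^2 + |m| \geq m^2$. Thus $\Lambda + s^2 \geq m^2 - 2|s||a\omega| + |m|$. In the regime $\mc{F}_{\rm low,1}\cup\mc{F}_{\rm low,2}$ we have $|\omega| \leq \omega_{\rm low}$, so $2|s||a\omega| \leq 2|s|M\omega_{\rm low}$; choosing $\omega_{\rm low}$ small enough (depending on $M$ and $s$, hence on $M$ and $\varepsilon_{\rm width},\omega_{\rm high}$ since those are fixed first) that $2|s|M\omega_{\rm low} \leq 1 \leq |m|$ whenever $m \neq 0$, and noting the $m = 0$ case is trivial, gives (i). For item (ii), use Lemma~\ref{lemma:angular-eigenvalues-properties}(3) again in the form $\Lambda - 2am\omega \geq l(l+1) - s^2 - 2|s||a\omega| - 2am\omega \geq |s|(|s|+1) - s^2 - 4|s|M\omega_{\rm low} = |s| - 4|s|M\omega_{\rm low}$, which is $\geq \tfrac{3|s|}{4}$ once $\omega_{\rm low} \leq (16M)^{-1}$; this also covers $s=0$ trivially.

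Item (iii) is a direct computation with the definition $\upomega_+ = a/(2Mr_+)$. In $\mc{F}_{\rm low,1}$ we have either $|\omega| \leq \omega_{\rm low}$ and $m = 0$ (in which case $\omega - m\upomega_+ = \omega$), or $|\omega| \leq \omega_{\rm low}$ with $a \leq \tilde{a}_0$; in the latter situation I would bound $|m\upomega_+|$ using $|m| \leq B(\Lambda+s^2)^{1/2} \leq B(\varepsilon_{\rm width}^{-1}\omega_{\rm high}^2 + s^2)^{1/2}$ from (i), but this is not small in $\omega$ — so instead I should note that $\mc{F}_{\rm low,1a}$ requires $am \leq \tilde{a}_0$ as well, wait, that is $\mc{F}_{\rm low,2}$. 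Re-examining: $\mc{F}_{\rm low,1a}$ has $a \leq \tilde{a}_0$ and only $|\omega|\leq\omega_{\rm low}$, with no constraint forcing $m$ small, so $|\omega - m\upomega_+|$ need not be small there — hence the claim $(\omega-m\upomega_+)^2 \leq 4\omega_{\rm low}^2$ must be intended only for those subregions where it holds, namely $\mc{F}_{\rm low,1b}, \mc{F}_{\rm low,1c}$ ($m=0$, so $\omega - m\upomega_+ = \omega$, trivially $\leq \omega_{\rm low}^2$) and $\mc{F}_{\rm low,3}$ ($|\omega - m\upomega_+| \leq \omega_{\rm low}$ by definition). For $\mc{F}_{\rm low,1a}$ and $\mc{F}_{\rm low,2}$ one uses instead the lower bound: in $\mc{F}_{\rm low,2}$, $am > \tilde{a}_0$ and $|\omega| \leq \omega_{\rm low}$ give $|\omega - m\upomega_+| = |m\upomega_+ - \omega| \geq |m\upomega_+| - \omega_{\rm low} = \frac{am}{2Mr_+} - \omega_{\rm low} \geq \frac{\tilde{a}_0}{2M\cdot 2M} - \omega_{\rm low} \geq b(\tilde{a}_0) > 0$ after shrinking $\omega_{\rm low}$, using $r_+ \leq 2M$. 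I would state item (iii) carefully with this case split.

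For item (iv), I would use the explicit formulas for $\mathfrak{C}_s$ (equation~\eqref{eq:TS-radial-constants}), $\mathfrak{D}_{s}^{\mc I}$ (equation~\eqref{eq:Ds-infinity}) and $\mathfrak{D}_{s}^{\mc H}$ (equation~\eqref{eq:Ds-horizon}) for $|s| = 0,1,2$, together with items (ii) and (iii) and the bound $|m| \leq B$ on this compact-in-$\Lambda$ set. The key observation is that each of these constants equals $(\Lambda - 2am\omega + |s|)^{2|s|}$ plus correction terms that are $O(\omega_{\rm low}) + O(r_+ - M)$ relative to the leading term; more precisely, every correction carries at least one factor of $\omega$, of $am\omega$, or of $(r_+ - M)/M$ — but the latter is not small for subextremal $a$ bounded away from $M$, so the argument must instead observe that the $(r_+-M)/M$ terms in $\mathfrak{D}_s^{\mc H}$ combine with the leading $(\Lambda - 2am\omega)$ powers in a way that keeps the ratio bounded above and below by absolute constants once $|\omega| \leq \omega_{\rm low}$ is small; since all three constants reduce to the \emph{same} polynomial in $\Lambda$ as $\omega \to 0$ up to $(r_+-M)$-dependent but $\omega$-independent reshuffling, and since by (ii) $\Lambda - 2am\omega \geq \tfrac{3|s|}{4} > 0$ bounds them away from zero, a continuity/uniform-estimate argument in $\omega$ over the compact set gives the interval $[\tfrac13,\tfrac53]$ after choosing $\omega_{\rm low}$ small enough. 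This is the step I expect to be the main obstacle: it requires a careful, somewhat tedious case-by-case inspection of the explicit quartic/octic expressions in \eqref{eq:TS-radial-constants}--\eqref{eq:Ds-horizon}, tracking exactly which terms are $\omega$-small and verifying that the $r_+$-dependent leading parts of $\mathfrak{D}_s^{\mc H}$ and the $\omega$-dependent leading parts of $\mathfrak{D}_s^{\mc I}$ still agree with $\mathfrak{C}_s$ up to factors in $[\tfrac13,\tfrac53]$ uniformly in $a \in [0,M]$. Finally item (v) follows from Lemma~\ref{lemma:uppsi-general-asymptotics}: the relations \eqref{eq:a-to-A-horizon-sub} and \eqref{eq:a-to-A-horizon-extremal} express $|\swei{A}{\mp s}_{k,\mc H^\pm}|^2 / |\swei{A}{\mp s}_{\mc H^\pm}|^2$ as a ratio of products of the form $\prod_{j=0}^{k-1}[(\tfrac{4Mr_+}{r_+-r_-})^2(\omega-m\upomega_+)^2 + (s-j)^2]$ over $\mathfrak{D}_{s,|s|}^{\mc H}$ times $\mathfrak{D}_{s,k}^{\mc H}$-type factors, and \eqref{eq:a-to-A-infinity} gives $|\swei{A}{\pm s}_{k,\mc I^\pm}|^2/|\swei{A}{\pm s}_{\mc I^\pm}|^2 = (2\omega)^{2k}/(2\omega)^{2|s|} = (2\omega)^{-2(|s|-k)}$ — here I must double-check the direction of the inequality, since the claim has $\omega^{2(|s|-k)}$ with a \emph{positive} power; this means the normalization in Definition~\ref{def:psihor-psiout} must be such that the relevant ratio is $(2\omega)^{2(|s|-k)}$ after accounting for which of $\swei{A}{\pm s}_{k,\mc I^\pm}$ carries the $(2\omega)^{2k}$ versus the $\mathfrak{D}_{s,k}^{\mc I}/(2\omega)^{2k}$ factor. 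Reading off the correct pairing from \eqref{eq:a-to-A-infinity} and using item (iv) to bound $\mathfrak{D}_{s,k}^{\mc I} \sim \mathfrak{D}_{s,|s|}^{\mc I}$ and the analogous horizon statement, together with $(\omega-m\upomega_+)^2 \leq 4\omega_{\rm low}^2$ from (iii) on the relevant subregions and $(r_+ - r_-)/M \sim (r_+-M)/M$, yields the two stated bounds.
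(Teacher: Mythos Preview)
Your approach is the paper's: (i)--(ii) from the admissibility bound $\Lambda \geq \max\{|m|,|s|\}(\max\{|m|,|s|\}+1) - s^2 - 2|s||a\omega|$, (iii) by direct computation, and (iv)--(v) by explicit inspection of the formulas \eqref{eq:TS-radial-constants}, \eqref{eq:Ds-infinity}, \eqref{eq:Ds-horizon}. However, your treatment of (iii) in $\mc{F}_{\rm low,1a}$ has a genuine gap: you conclude the bound $(\omega-m\upomega_+)^2 \leq 4\omega_{\rm low}^2$ ``must be intended only for subregions'' because $m$ is unconstrained there. This misses that $\tilde{a}_0$ is itself a free parameter. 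Since $a \leq \tilde{a}_0$ in $\mc{F}_{\rm low,1a}$ and $|m| \leq B\varepsilon_{\rm width}^{-1/2}\omega_{\rm high}$ from (i), one has $|m\upomega_+| \leq |m|\tilde{a}_0/(2M^2) \leq \omega_{\rm low}$ once $\tilde{a}_0$ is chosen small enough depending on $\omega_{\rm low}, \varepsilon_{\rm width}, \omega_{\rm high}$; then $|\omega - m\upomega_+| \leq 2\omega_{\rm low}$ holds on all of $\mc{F}_{\rm low,1}$, as the paper shows in one line.

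Two smaller remarks. In (ii) your intermediate inequality $-2|s||a\omega| - 2am\omega \geq -4|s|M\omega_{\rm low}$ tacitly assumes $|m| \leq |s|$; when $|m| > |s|$ you must instead use $l(l+1) \geq m^2 + |m|$ so that the quadratic in $|m|$ absorbs the linear error $2M|m|\omega_{\rm low}$, and the paper splits into these two cases. Your suspicion about (v) is in fact correct: with the pairing $\swei{A}{\pm s}_{k,\mc{I}^\pm}$ as literally written, \eqref{eq:a-to-A-infinity} gives ratio $(2\omega)^{-2(|s|-k)}$; the pairing that is actually used downstream (see the proofs of Propositions~\ref{prop:low1c} and \ref{prop:low2}) is $\swei{A}{\mp s}_{k,\mc{I}^\pm}$, for which the ratio equals $(2\omega)^{2(|s|-k)}\mathfrak{D}_{s,k}^{\mc I}/\mathfrak{D}_s^{\mc I}$, and the paper's proof of (v) accordingly reduces to bounding $\mathfrak{D}_{s,k}^{\mc I}/\mathfrak{D}_s^{\mc I}$ and $\mathfrak{D}_{s,k}^{\mc H}/\mathfrak{D}_s^{\mc H}$ by explicit computation.
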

\begin{proof}
 From property \ref{it:admissible-freqs-triple-Lambda-lower-bound} of the admissibility conditions in Definition~\ref{def:admissible-freqs}, we have
\begin{align*}
&\Lambda+s^2-m^2\geq \max\{|m|,s^2+|s|\}-2|s||a\omega|\geq \max\{|m|,|s|\}(1-2M\omega_{\rm low})\geq 0\,,\\
&\Lambda-2am\omega\geq \max\{m^2-s^2+|m|,|s|\}-2(|s|+|m|)|a\omega| \geq \frac{3|s|}{4}+\frac14\max\{|m|,|s|\}(1-8M\omega_{\rm low})\geq \frac{3|s|}{4}\,,
\end{align*}
which proves \ref{it:properties-frequencies-low-omega-Lambda-m2} and \ref{it:properties-frequencies-low-omega-Lambda-nondegenerate}.

Property \ref{it:properties-frequencies-low-omega-K-frequency} follows from
\begin{align*}
(\omega-m\upomega_+)^2\leq (|\omega|+|m|\upomega_+)^2 \leq \lp(\omega_{\rm low}+\frac{\tilde{a}_0}{4M^2}\varepsilon_{\rm width}^{-1/2}\omega_{\rm high}\rp)^2\leq 4\omega_{\rm low}^2\,,
\end{align*}
as long as $\tilde{a}_0\leq 4M^2\omega_{\rm low}\varepsilon_{\rm width}^{1/2}\omega_{\rm high}^{-1}$.

We now turn to the statements regarding the Teukolsky constants and the boundary terms for the transformed system. By inspection of the explicit forms of the constants, in Proposition~\ref{prop:TS-angular-radial-constant} and Lemma~\ref{lemma:uppsi-general-asymptotics}, both \ref{it:properties-frequencies-low-omega-TS-constants} and \ref{it:properties-frequencies-low-omega-bdry-terms} can be derived as follows. 

 Recall \ref{it:properties-frequencies-low-omega-Lambda-m2} and \ref{it:properties-frequencies-low-omega-Lambda-nondegenerate} and the fact that, for $\omega_{\rm low}$ sufficiently small,
\begin{align*}
\Lambda-2am\omega+|s|\geq 
\begin{dcases}
m^2-|s|(|s|-1)+|m|(1-4|a\omega|)\geq m^2-|s|(|s|-1)+\frac12|m| &\text{if~} |m|> |s|\\
2|s|-2am\omega-2|s||a\omega|\geq 2|s|(1-2|a\omega|)\geq \frac{3}{2}|s|&\text{if~} |m|\leq |s|
\end{dcases}
\end{align*}
From the previous lines, it is clear that $\mc{D}_{s,1}^\mc{H}$ and $\mc{D}_{s,1}^\mc{I}$ are bounded away from 0. Moreover,
\begin{align*}
\mathfrak{D}_2^{\mc{I}} &= (\Lambda-2am\omega+2)^2(\Lambda-2am\omega+4)^2+144M^2\omega^2\\
&\qquad+12a\omega m \lp[12am\omega+2(\Lambda-2am\omega+2)(\Lambda-2am\omega+4)\rp] \\
&= (\Lambda-2am\omega+2)^2(\Lambda-2am\omega+4)^2+O(|\omega|)\geq 200\,, \numberthis \label{eq:properties-potential-low-omega-intermediate-0}\\
\mathfrak{D}_{2}^\mc{H}&=(\Lambda-2am\omega+2)^2(\Lambda-2am\omega+4)^2+9m^4\lp(\frac{a^2}{M^2}+\frac{2r_-(r_+-M)}{M^2}\rp)^2\\
&\qquad +9a^2m^2(\Lambda-2am\omega+2)^2\frac{(r_+-M)^2}{(Mr_+)^2}\lp(1+\frac{3(r_+-M)}{M}\rp)^2\\
&\qquad+2(\Lambda-2am\omega+2)^2m^2\lp[\frac{5a^2}{M^2}-\frac{6r_-(r_+-M)}{M^2}\rp]\\
&\qquad+12(\Lambda-2am\omega+2)m^2\lp[2\frac{a^2(r_+-M)}{M^2r_+}\lp(1+\frac{3(r_+-M)}{M}\rp)-\lp(\frac{a^2}{M^2}+\frac{2r_-(r_+-M)}{M^2}\rp)\rp]\\
&\qquad + 24am\omega\frac{r_+-M}{M}\lp[(\Lambda-2am\omega+2)(\Lambda-2am\omega+4)+6am\omega\frac{r_+-M}{M}\rp.\\
&\qquad\qquad\qquad\qquad\qquad\qquad\lp.-3m^2\frac{r_+-M}{M}\lp(\frac{a^2}{M^2}+\frac{2r_-(r_+-M)}{M^2}\rp)\rp]\\
&\geq (\Lambda-2am\omega+2)^2(\Lambda-2am\omega+4)^2+18(\Lambda-2am\omega+2)\frac{a^2m^2}{M^2}\lp(1+4\frac{M^2-a^2}{Mr_+}\rp)+O(|\omega|)\\
&\geq (\Lambda-2am\omega+2)^2(\Lambda-2am\omega+4)^2+O(|\omega|)\geq \frac{19}{20}(\Lambda-2am\omega+2)^2(\Lambda-2am\omega+4)^2 \geq 200\,,\numberthis \label{eq:properties-low-frequencies-intermediate-1}
\end{align*}
where, to obtain the first inequality for $\mathfrak{D}_{2}^\mc{H}$, we simply used $\Lambda-2am\omega+2\geq 3$ to combine the terms in the third and fourth lines.

If $|s|=1$,  then \ref{it:properties-frequencies-low-omega-bdry-terms} is clear from the previous considerations. Moreover, for \ref{it:properties-frequencies-low-omega-TS-constants}, we have
\begin{align*}
\lp|\frac{\mathfrak{C}_1}{\mathfrak{D}_1^{\mc{I}}}-1\rp|&= \lp|\frac{4a\omega(m-a\omega)}{(\Lambda-2am\omega+1)^2}\rp|= O(|\omega|)\,,\\
\lp|\frac{\mathfrak{C}_1}{\mathfrak{D}_1^{\mc{H}}}-1\rp|&\leq \lp|\frac{4a\omega(m-a\omega)}{(\Lambda-2am\omega+1)^2+a^2/M^2m^2}\rp|+\frac{a^2m^2/M^2}{(\Lambda-2am\omega+1)^2}\leq \frac{a^2/M^2 m^2}{(m^2+1/2|m|)^2+a^2/M^2m^2}+O(|\omega|) \\
&\leq \frac{4}{13}+O(|\omega|)\,,\\
\lp|\frac{\mathfrak{D}_1^{\mc{H}}}{\mathfrak{D}_1^{\mc{I}}}-1\rp|&\leq\frac{a^2m^2/M^2}{(\Lambda-2am\omega+1)^2}\leq \frac49\leq 2/3\,. 
\end{align*}

Now let $|s|=2$. Using the results in \eqref{eq:properties-potential-low-omega-intermediate-0}, 
\begin{align*}
\frac{\mathfrak{D}_{2,1}^{\mc{I}}}{\mathfrak{D}_2^{\mc{I}}}&\leq \frac{(\Lambda-2am\omega+2)^2}{(\Lambda-2am\omega+2)^2(\Lambda-2am\omega+4)^2+ O(|\omega|)}\leq \frac{1}{5^2+O(|\omega|)}\leq \frac{1}{20}\,,\\
\lp|\frac{\mathfrak{C}_2}{\mathfrak{D}_2^{\mc{I}}}-1\rp|&= |4a\omega|\lp|\frac{(4m-10a\omega)(\Lambda-2am\omega+2)^2+12a\omega(\Lambda-2am\omega+2)+144a^2\omega^2(a\omega-2m)}{\mathfrak{D}_2^{\mc{I}}}\rp|\\
&= O(|\omega|)\,.
\end{align*}
Turning to the boundary terms at the event horizons, we have from \eqref{eq:properties-low-frequencies-intermediate-1}
\begin{align*}
\frac{\mathfrak{D}_{2,1}^{\mc{H}}}{\mathfrak{D}_2^{\mc{H}}}&\leq\frac{\lp(\Lambda-2am\omega+2+3\frac{r_+-M}{M}\rp)^2+9a^2m^2/M^2}{\lp(\Lambda-2am\omega+2\rp)^2\lp(\Lambda-2am\omega+4\rp)^2+O(|\omega|)} \leq \frac{8}{25}\,;
\end{align*}
this concludes the proof of \ref{it:properties-frequencies-low-omega-bdry-terms} in the case $|s|=2$. For \ref{it:properties-frequencies-low-omega-TS-constants}, note $\mathfrak{C}_2=(\Lambda-2am\omega+2)^2(\Lambda-2am\omega+4)^2+O(|\omega|)$; hence we also have 
\begin{align*}
\frac{\mathfrak{C}_2}{\mathfrak{D}_2^\mc{H}}=\frac{1+O(|\omega|)}{1+\frac{\mathfrak{D}_2^\mc{H}-(\Lambda-2am\omega+2)^2(\Lambda-2am\omega+4)^2}{(\Lambda-2am\omega+2)^2(\Lambda-2am\omega+4)^2}}\,,
\end{align*}
which can be bounded once we understand the quotient in the denominator. The constant $\mathfrak{D}_{2}^\mc{H}$ satisfies \eqref{eq:properties-low-frequencies-intermediate-1}
which clearly implies
\begin{align*}
\frac{\mathfrak{C}_2}{\mathfrak{D}_2^\mc{H}}=\frac{1+O(|\omega|)}{1+\frac{\mathfrak{D}_2^\mc{H}-(\Lambda-2am\omega+2)^2(\Lambda-2am\omega+4)^2}{(\Lambda-2am\omega+2)^2(\Lambda-2am\omega+4)^2}}\leq 1+O(|\omega|)\leq \frac{5}{3}\,.
\end{align*}
On the other hand, since
\begin{align*}
&\frac{9m^4\lp(\frac{a^2}{M^2}+\frac{2r_-(r_+-M)}{M^2}\rp)^2}{(\Lambda-2am\omega+2)^2(\Lambda-2am\omega+4)^2} +\frac{9a^2m^2\frac{(r_+-M)^2}{(Mr_+)^2}\lp(1+\frac{3(r_+-M)}{M}\rp)^2+2m^2\lp(\frac{5a^2}{M^2}-\frac{6r_-(r_+-M)}{M^2}\rp)}{(\Lambda-2am\omega+4)^2}\\
&\quad\leq \frac{16}{225}\lp(\frac{a^2}{M^2}+\frac{2r_-(r_+-M)}{M^2}\rp)^2+\frac{4}{25}\lp[9a^2\frac{(r_+-M)^2}{(M^2r_+)^2}\lp(1+\frac{3(r_+-M)}{M}\rp)^2+2\lp(\frac{5a^2}{M^2}-\frac{6r_-(r_+-M)}{M^2}\rp)\rp]\\
&\quad\leq 2+\frac{2}{5}\,,
\end{align*}
where the first inequality comes from using $\Lambda-2am\omega+2\geq 3$ in the denominator and setting $|m|=2$ and the second follows from evaluating the expression at its maximum i.e.\ when $|a|=M$, and
\begin{align*}
&\frac{12m^2\lp[2\frac{a^2(r_+-M)}{M^2r_+}\lp(1+\frac{3(r_+-M)}{M}\rp)-\lp(\frac{a^2}{M^2}+\frac{2r_-(r_+-M)}{M^2}\rp)\rp]}{(\Lambda-2am\omega+2)(\Lambda-2am\omega+4)^2}\\
&\quad \leq \frac{12m^2\lp[2\frac{a^2(r_+-M)}{M^2r_+}\lp(1+\frac{3(r_+-M)}{M}\rp)-\lp(\frac{a^2}{M^2}+\frac{2r_-(r_+-M)}{M^2}\rp)\rp]}{(3)(5)^2}\mathbbm{1}_{\lp\{|a|\leq \frac15\sqrt{2(9-\sqrt{6})}M ,|m|\leq 2\rp\}}\\
&\quad\leq \frac{4}{75}\lp[2\frac{a^2(r_+-M)}{M^2r_+}\lp(1+\frac{3(r_+-M)}{M}\rp)-\lp(\frac{a^2}{M^2}+\frac{2r_-(r_+-M)}{M^2}\rp)\rp]\mathbbm{1}_{\lp\{|a|/M\leq \frac15\sqrt{2(9-\sqrt{6}})\rp\}}\leq \frac{16}{729}\leq \frac{2}{5}
\end{align*}
we have 
\begin{align*}
\frac{\mathfrak{C}_2}{\mathfrak{D}_2^\mc{H}}=\frac{1+O(|\omega|)}{1+\frac{\mathfrak{D}_2^\mc{H}-(\Lambda-2am\omega+2)^2(\Lambda-2am\omega+4)^2}{(\Lambda-2am\omega+2)^2(\Lambda-2am\omega+4)^2}}\geq \frac{1+O(|\omega|)}{1+2.8+O(|\omega|)}\geq \frac{1}{3}+O(|\omega|)\geq \frac{1}{3}\,.
\end{align*}
Similarly, as we have $\mathfrak{D}^{\mc{I}}_2=(\Lambda-2am\omega+2)^2(\Lambda-2am\omega+2)^2$ as for $\mathfrak{C}_2$, the same conclusions hold for $\mathfrak{D}^{\mc{I}}_2/\mathfrak{D}^{\mc{H}}_2$. This concludes the proof of \ref{it:properties-frequencies-low-omega-TS-constants} for $|s|=2$.
\end{proof}

\begin{lemma}[Properties of the potential for frequencies in $\mc{F}_{\rm low,1}\cup\mc{F}_{\rm low,2}$] \label{lemma:properties-potential-low-omega} Let $(\omega,m,\Lambda)\in\mc{F}_{\rm low,1}\cup\mc{F}_{\rm low,2}$ be an admissible frequency triple. For each $k=0,...,|s|$, the potentials  $\mc{V}_{(k)}$ introduced in \eqref{eq:radial-ODE-potential-k-tilde}, respectively, has the following properties. 
\begin{enumerate}[label=(\roman*)]
\item If $s\neq 0$ or if $s=0$ but $m\neq 0$ or $\Lambda\neq 0$, in a compact range of $r^*$ which can be taken to be arbitrarily large,  \label{it:properties-potential-low-omega-compact-region}
$$\frac18 w\leq \Re\mc{V}_{(k)}-\sigma(|s|-k)\lp(\frac{w'}{w}\rp)'-\omega^2\leq (\Lambda+2+3s^2)w $$
for $\sigma\in[0,4/5]$, as long as,
\begin{enumerate}[label=\alph*.]
\item if $|\omega|\leq \omega_{\rm low}$ and $a\leq \tilde{a}_0$ are small enough depending on the size of the $r^*$-region;
\item  if $m=0$ and $|\omega|\leq\omega_{\rm low}$ is small enough depending on the size of the $r^*$-region;
\item  if $m\neq 0$, $a>\tilde a_0$ , the compact range of $r^*$ is sufficiently far from $r^*=-\infty$ and $|\omega|\leq \omega_{\rm low}$ is small enough depending on the upper bound of $r^*$.
\end{enumerate}
If $s=m=\Lambda=0$, then 
$$M\Delta r^{-5} \leq \mc{V}-\omega^2 \leq 2M\Delta r^{-5} \,.$$

\item $\Re \mc{V}_{(k)}'<0$ for sufficiently large positive $r^*$. \label{it:properties-potential-low-omega-derivative-plus-infinity} Concretely, if $s\neq 0$ or if $s=0$ but $m,\Lambda\neq 0$,
\begin{align*}
\Re\mc{V}_{(k)}'=-br^{-3} +O(r^{-4})\text{~~as~}r\to\infty\,,
\end{align*}
with $7|s|/2 \leq b \leq 2\Lambda+s^2$; if $s=m=\Lambda=0$, then 
\begin{align*}
\Re\mc{V}_{(k)}'=-6 Mr^{-4} +O(r^{-5})\text{~~as~}r\to\infty\,.
\end{align*}

\item If $(\omega,m,\Lambda)\in\mc{F}_{\rm low,1}$ and $\tilde{a}_0$ is sufficiently small depending on $\varepsilon_{\rm width}$ and $\omega_{\rm high}$, we have $\Re\mc{V}_{(k)}'>0$ for sufficiently negative $r^*$.  \label{it:properties-potential-low-omega-derivative-minus-infinity} Concretely, if $m=0$,
\begin{align*}
\Re\mc{V}_{(k)}' = \frac{b}{2M^2r_+^2}\Delta(r-M)+ O(\Delta^2)\,,
\end{align*}
for $3|s|/4\leq b-1\leq \Lambda+2|s|(|s|-1)$; if $|a|\leq \tilde{a}_0$ sufficiently small, for  $3|s|/4\leq b\leq \Lambda+2|s|(|s|-1)+2$,
\begin{align*}
\Re\mc{V}_{(k)}' = \frac{b}{2M^2r_+^2}\Delta+ O(\Delta^2)\,.
\end{align*}

\item If $(\omega,m,\Lambda)\in\mc{F}_{\rm low,2}$ then $\omega^2-\mc{V}_{(k)}(r_+)=(\omega-m\upomega_+)^2\geq b(\tilde{a}_0)>0$. \label{it:properties-potential-low-omega-value-minus-infinity}
\end{enumerate}
\end{lemma}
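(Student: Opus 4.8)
\textbf{Proof plan for Lemma~\ref{lemma:properties-potential-low-omega}, statement (iv).}

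The claim to establish is that for admissible $(\omega,m,\Lambda)\in\mc{F}_{\rm low,2}$ and each $k=0,\dots,|s|$, one has $\omega^2-\mc{V}_{(k)}(r_+)=(\omega-m\upomega_+)^2\geq b(\tilde a_0)>0$. The plan is to proceed in three short steps: first compute $\mc{V}_{(k)}(r_+)$ exactly, then identify it with $2m\omega\upomega_+-m^2\upomega_+^2$, and finally combine with the quantitative lower bound on $(\omega-m\upomega_+)^2$ in $\mc{F}_{\rm low,2}$ that is already recorded in Lemma~\ref{lemma:properties-frequencies-low-omega}\ref{it:properties-frequencies-low-omega-K-frequency}.

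First I would evaluate the potential at the horizon. Recall from \eqref{eq:radial-ODE-potential-k-tilde} that $\mc{V}_{(k)}=\Re\mc{V}_{(k)}+i\,\Im\mc{V}_{(k)}$ with
\[
\Re\mc{V}_{(k)}=\frac{\Delta\Lambda+4Mram\omega-a^2m^2}{(r^2+a^2)^2}+w\,[\dots]+\frac{a^2\Delta w}{(r^2+a^2)^2}[\dots]+\frac{2Mr(r^2-a^2)w}{(r^2+a^2)^2}[\dots].
\]
Since $\Delta(r_+)=0$ and $w=\Delta/(r^2+a^2)^2$ also vanishes at $r=r_+$, every term except the first collapses, and in the first term only the $4Mram\omega-a^2m^2$ part survives; using $r_+^2+a^2=2Mr_+$ this gives
\[
\mc{V}_{(k)}(r_+)=\frac{4Mr_+am\omega-a^2m^2}{(2Mr_+)^2}=\frac{am\omega}{Mr_+}-\frac{a^2m^2}{4M^2r_+^2}=2m\omega\upomega_+-m^2\upomega_+^2,
\]
recalling $\upomega_+=a/(2Mr_+)$. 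The imaginary part $\Im\mc{V}_{(k)}=\sign s\,(|s|-k)[\tfrac{w'}{w}(\omega-\tfrac{am}{r^2+a^2})-\tfrac{4amrw}{r^2+a^2}]$ at $r=r_+$: the $w$-weighted piece vanishes, and $w'/w=\Delta'/\Delta-\dots$ times $(\omega-\tfrac{am}{2Mr_+})=(\omega-m\upomega_+)$; since in $\mc{F}_{\rm low,2}$ we restrict to $a<M$ so $r_+>r_-$ and the $\Delta'/\Delta$ factor is finite after the cancellation built into the transport structure — more simply, the relevant regular boundary data is real by the choice of normalization in Definition~\ref{def:psihor-psiout}, so this term does not contribute to the real quantity $\omega^2-\mc{V}_{(k)}(r_+)$. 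Hence
\[
\omega^2-\mc{V}_{(k)}(r_+)=\omega^2-2m\omega\upomega_++m^2\upomega_+^2=(\omega-m\upomega_+)^2,
\]
which is the stated identity and is visibly independent of $k$.

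Finally I would invoke Lemma~\ref{lemma:properties-frequencies-low-omega}\ref{it:properties-frequencies-low-omega-K-frequency}, whose second alternative states precisely that in $\mc{F}_{\rm low,2}$ one has $(\omega-m\upomega_+)^2\geq b(\tilde a_0)>0$ (this in turn follows there from $am>\tilde a_0$ forcing $m\upomega_+$ away from $\omega$, since $|\omega|\leq\omega_{\rm low}$ is taken small relative to $\tilde a_0$). Combining with the identity above yields $\omega^2-\mc{V}_{(k)}(r_+)\geq b(\tilde a_0)>0$ for all $k$, completing the proof. The only mildly delicate point — and the one I would write out most carefully — is the vanishing of $\Im\mc{V}_{(k)}(r_+)$ and the justification that the horizon limit of the potential is genuinely finite and real; but this is routine given $w(r_+)=0$ and the admissibility restriction $a<M$ in $\mc{F}_{\rm low,2}$, and in fact for the purpose of this statement one only needs $\Re\mc{V}_{(k)}(r_+)=2m\omega\upomega_+-m^2\upomega_+^2$, which is immediate from the computation above.
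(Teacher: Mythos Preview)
Your argument is correct and matches the paper's approach, which simply says the lemma ``follows by direct inspection'' of the potential formula. The computation $\Re\mc{V}_{(k)}(r_+)=2m\omega\upomega_+-m^2\upomega_+^2$ and the appeal to Lemma~\ref{lemma:properties-frequencies-low-omega}\ref{it:properties-frequencies-low-omega-K-frequency} are exactly what is needed. Your digression on $\Im\mc{V}_{(k)}(r_+)$ is unnecessary and slightly muddled (in fact $\Im\mc{V}_{(k)}(r_+)$ need not vanish for $k<|s|$, since $w'/w\to (r_+-r_-)/(2Mr_+)$ is finite and nonzero in the subextremal case); but as you note at the end, the statement is really about the real part, consistent with items (i)--(iii) of the same lemma, so this point is irrelevant.
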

\begin{proof}
The lemma follows by direct inspection of  \eqref{eq:radial-ODE-potential-k-tilde}. We remark that one can only have $\Lambda=0$ if $m=0$ and $s=0$ (by properties \ref{it:angular-dominated-frequency-properties-Lambda-m2} and \ref{it:angular-dominated-frequency-properties-Lambda-nondegenerate} in Lemma~\ref{lemma:properties-frequencies-low-omega}). 

For statement \ref{it:properties-potential-low-omega-compact-region}, let us first consider the case $s=0$. We have
\begin{align*}
\Re\mc{V}_{(k)} &\geq w\lp\{\Lambda+a^2w+\frac{2Mr(r^2-a^2)}{(r^2+a^2)^2}\rp\} + \frac{4Mram\omega-a^2m^2}{(r^2+a^2)^2}\\
&\geq w\Lambda +\frac{2Mr(r^2-a^2)w+4Mram\omega-a^2m^2}{(r^2+a^2)^2}\,.
\end{align*}
If $s\neq 0$, the proof follows by analyzing the frequency independent part of $\Re\mc{V}_{(k)}-\sigma (|s|-k)(w'/w)'$ for $\sigma\in[0,5/4]$.  By Lemma~\ref{lemma:properties-frequencies-low-omega}\ref{it:properties-frequencies-low-omega-Lambda-nondegenerate},
\begin{align*}
U_{(k)}&:=\frac{1}{w}\lp(\mc{V}_{(k)}-\frac{4Mram\omega-a^2m^2}{(r^2+a^2)^2}-\sigma(s-k)\lp(\frac{w'}{w}\rp)' \rp)\\
&\geq 3\frac{|s|}{4}+|s| + k (2 |s|-k-1)-2\sigma(|s|-k)\\
&\qquad+a^2w[1-2|s|-k (2 |s|-k-1)+4\sigma(|s|-k)]\\
&\qquad+\frac{2Mr(r^2-a^2)}{(r^2+a^2)^2}[1-3|s| + 2 s^2 - 3 k (2 |s| - k - 1) + 6\sigma (|s| - k)]\,.
\end{align*}
Now, we note the estimates
\begin{align*}
\frac{2Mr(r^2-a^2)}{(r^2+a^2)^2}\leq \begin{dcases}
\sqrt{1-\frac{a^2}{M^2}} &\text{if~} \lp|\frac{a}{M}\rp|\in\lp[0,\frac{1}{\sqrt{2}}\rp]\\
\frac{1}{\sqrt{2}} &\text{if~} \lp|\frac{a}{M}\rp|\in\lp[\frac{1}{\sqrt{2}},1\rp]
\end{dcases}\,, \quad
a^2 w\leq \frac{3-2\sqrt{2}}{4} \frac{a^2}{M^2}\begin{dcases}
\frac{a^2/M^2}{20} &\text{if~} \lp|\frac{a}{M}\rp|\in\lp[0,\frac{1}{\sqrt{2}}\rp]\\
\frac{1}{20} &\text{if~} \lp|\frac{a}{M}\rp|\in\lp[\frac{1}{\sqrt{2}},1\rp]
\end{dcases}\,.
\end{align*}
To obtain the desired bounds, we consider the following cases:
\begin{itemize}
\item $k\geq |s|-\frac12 +\sigma\lp(1-\sqrt{1-\frac{1}{\sigma}+\frac{s^2}{3\sigma^2}+\frac{1}{12\sigma^2}}\rp)$. In this case, both coefficients on $a^2w$ and $2Mr(r^2-a^2)(r^2+a^2)^{-2}$ are non-positive. First, suppose $|a/M|\leq 1/\sqrt{2}$; then
\begin{align*}
U_{(k)} &\geq 3\frac{|s|}{4}+|s| + 2 k (2 |s|-k-1)-2\sigma(|s|-k) +\frac{a^2/M^2}{20}[1-2|s|-2k (2 |s|-k-1)+4\sigma(|s|-k)]\\
&\qquad+\sqrt{1-a^2/M^2}[1-3|s| + 2 s^2 - 3 k (2 |s| - k - 1) + 6\sigma (|s| - k)]\\
&\geq s^2 +\frac{a^2/M^2}{20}(1-2s^2)+\sqrt{1-a^2/M^2}(1-s^2)\geq \frac{39}{40}\,,
\end{align*}
where the right hand side value is attained when $k=|s|=1$ and $|a/M|=1/\sqrt{2}$.
Now assume $|a/M|\geq 1/\sqrt{2}$; we have
\begin{align*}
U_{(k)} &\geq 3\frac{|s|}{4}+|s| + k (2 |s|-k-1)-2\sigma(|s|-k) +\frac{1}{20}[1-2|s|-2k (2 |s|-k-1)+4\sigma(|s|-k)]\\
&\qquad+\frac{1}{\sqrt{2}}[1-3|s| + 2 s^2 - 3 k (2 |s| - k - 1) + 6\sigma (|s| - k)]\\
&\geq s^2 +\frac{1}{20}(1-2s^2)+\frac{1}{\sqrt{2}}(1-s^2)\geq \frac{19}{20}\,,
\end{align*}
where the right hand side value is once again attained when $k=|s|=1$ and $|a/M|=1/\sqrt{2}$. Moreover,
\begin{align*}
U_{(k)} &\leq \Lambda+|s| + k (2 |s|-k-1)-2\sigma(|s|-k) \leq \Lambda+|s|\,.
\end{align*}

\item $k<|s|-\frac12 +\sigma \lp(1-\sqrt{1-\frac{1}{\sigma}+\frac{4s^2-1}{4\sigma^2}}\rp)$. In this case, both coefficients on $a^2w$ and $2Mr(r^2-a^2)(r^2+a^2)^{-2}$ are positive, so 
\begin{align*}
U_{(k)} &\geq \frac{3|s|}{4}+|s| + k (2 |s|-k-1)-2\sigma(|s|-k)\geq \frac{3|s|}{4}+|s|(1-2\sigma)\geq \frac{|s|}{4}\geq \frac{1}{8}+\frac{1}{40}\,;\\
U_{(k)} &\leq \Lambda+|s| + k (2 |s|-k-1)-2\sigma(|s|-k) +\frac{1}{20}[1-2|s|-2k (2 |s|-k-1)+4\sigma(|s|-k)]\\
&\qquad+[1-3|s| + 2 s^2 - 3 k (2 |s| - k - 1) + 6\sigma (|s| - k)]\\
&\leq \Lambda+2s^2+\frac{63}{50}|s|+\frac{21}{20}\,.
\end{align*}

\item  $|s|-\frac12 +\sigma \lp(1-\sqrt{1-\frac{1}{\sigma}+\frac{4s^2-1}{4\sigma^2}}\rp)\leq k < |s|-\frac12 +\sigma\lp(1-\sqrt{1-\frac{1}{\sigma}+\frac{s^2}{3\sigma^2}+\frac{1}{12\sigma^2}}\rp)$. In this case, the coefficient on $a^2w$ is negative but the one on  $2Mr(r^2-a^2)(r^2+a^2)^{-2}$ is positive, so
\begin{align*}
U_{(k)} \geq \frac{3|s|}{4}+|s| + 2k (2 |s|-k-1)-2\sigma(|s|-k)+\frac{1}{20}[1-2|s|-k (2 |s|-k-1)+4\sigma(|s|-k)]\geq \frac12\,,
\end{align*}
which is attained at the lower end of the interval. Moreover,
\begin{align*}
U_{(k)} &\leq \Lambda+|s| + k (2 |s|-k-1)-2\sigma(|s|-k) +[1-3|s| + 2 s^2 - 3 k (2 |s| - k - 1) + 6\sigma (|s| - k)]\\
&\leq \Lambda+2s^2\,.
\end{align*}
\end{itemize}

For statements \ref{it:properties-potential-low-omega-derivative-plus-infinity} and \ref{it:properties-potential-low-omega-derivative-minus-infinity}, it is enough to consider the expansions
\begin{align*}
\Re\mc{V}_{(k)}'&=-\frac{2[\Lambda+|s|-k+k(2|s|-k)]}{r^3}\\
&\qquad-\frac{6M\lp[1-2s^2-4k(2|s|-k)-4(|s|-k)-(\Lambda-2am\omega)\rp]}{r^4}+O\lp(r^{-5}\rp)\,,\quad\text{~as~}r\to\infty\,,\\
\frac{d}{dr}\Re\mc{V}_{(k)}&=\frac{2m\upomega_+(m\upomega_+-\omega)}{M}+\frac{2m(r_+-M)(m-a\omega)}{2M^2r_+^2}+(r-M)\frac{\Lambda+|s|+k(2|s|-k-1)-2m^2}{2M^2r_+^2}\\
&\qquad+(r-M)(r_+-M)\frac{\lp(1-3 |s| + 2 s^2- 3 k (2 |s|- k- 1)\rp)}{2M^3r_+^2}
+O(\Delta)\,,\quad\text{~as~}r\to r_+\,,
\end{align*}
where we note that, if $|a|<M$,
\begin{align*}
&\frac{2m(r_+-M)(m-a\omega)}{2M^2r_+^2}+(r-M)\frac{\Lambda+|s|+k(2|s|-k-1)-2m^2}{2M^2r_+^2} \\
&\quad= (r_+-M)\frac{\Lambda+|s|+k(2|s|-k-1)-2am\omega}{2M^2r_+^2} +O(r-r_+)\,,\quad\text{~as~}r\to r_+\,.
\end{align*}
Coupled with Lemma~\ref{lemma:properties-frequencies-low-omega}\ref{it:properties-frequencies-low-omega-Lambda-nondegenerate}, these identities give the results.
\end{proof}

\subsubsection{Overview of the section}
\label{sec:bounded-smallness-overview}

Recall Lemma~\ref{lemma:properties-potential-low-omega}, and let us first focus on the case $k=|s|=0$. As in \cite{Dafermos2016b}, one can exploit property 1 by constructing a suitable $h$ current of compact $r^*$ support where $\mc{V}-\omega^2$  has a good sign. Moreover, one can exploit two of the properties 3, 4 and 5 using a $y$ current which is good near $r^*=-\infty$ and one which is good near $r^*=\infty$; these should be constructed so that they absorb the errors introduced by the compact support assumption on $h$. Boundary terms generated by the $y$ currents are controlled by applications of energy currents whose possible localization errors (these appear whenever the possibility of superradiance forces us to employ localized energy currents) should be absorbed by the $h$ current in the region where it is strongest region.

Turning now to $s\neq 0$, we find that, while the strategy laid out above seems to hold from the point of view of the left hand side of the radial ODE~\eqref{eq:radial-ODE-Psi}, the coupling terms that emerge are quickly seen to be an important issue. From the point of view of the basic estimate~\eqref{eq:basic-estimate-1} from Lemma~\ref{eq:basic-estimate-1}, it seems like there can be a gain in $r$-weights as one climbs the hierarchy,  in the sense that a bulk term in $\uppsi_{(k)}$ can be controlled by boundary terms and a bulk term in $\uppsi_{(k+1)}$ with weights with stronger decay, at least at one end. However, the global, rather than pointwise, nature of such a gain means it can be insufficient for applications. More importantly, in a bounded $r$ region, unless $|a|$ is small, there is no reason for the coupling terms to be seen as ``lower order'' corrections to the left hand side of the radial ODE~\eqref{eq:radial-ODE-Psi} when $\omega$ is small. 

The upshot is that, when $s\neq 0$, we must take into account the entire transformed system and, thus, all of the radial ODEs~\ref{eq:transformed-k-separated} for $k=0,\dots,|s|$. We saw in Lemma~\ref{lemma:properties-potential-low-omega} that in fact the properties of the real part of the potential are quite similar across all the values of $k$. Our strategy, therefore, will be to apply the same type of $h$ and $y$ currents along each step of the hierarchy. For each $k$, in considering the virial currents to use, we divide the $r^*$ real line into three regions:
\begin{itemize}
\item A region where $\Re\mc{V}_{(k)}$ has good positivity properties, where we apply an $h$ current that produces little or no errors, i.e.\ where 
\begin{align*}
h''+(|s|-k)\frac{w'}{w}h'\,, \text{~~or~~}h''+(|s|-k)\frac{w'}{w}h'-\frac{(|s|-k)}{2}\lp(\frac{w'}{w}\rp)'h\,,
\end{align*}
vanish or are negligible compared to the good part of the $h$ bulk term. Whenever superradiance can occur, it is somewhere inside this region region that we will concentrate the localization errors for the energy currents, which are typically of Teukolsky--Starobinsky-type for the reasons previously discussed. Thus, it is important that $h$ is taken as large as possible in a subset of the present region.

\item A region to the right of the first one, where we take our $h$ current down to zero as a $y$ current becomes strong. The idea is to have $y$, if possible, absorb errors introduced by $h$. Concretely, when $|\omega|$ is small and $\Re\mc{V}_{(k)}'<0$ at large $r^*$ at the expected rate from Lemma~\ref{lemma:properties-potential-low-omega}, we start $y\geq 0$ when $h$ is still generating a strong, positive bulk term in a way which ensures $y'$ is a small multiple of $h$, so
\begin{align*}
h(\Re\mc{V}_{(k)}-\omega^2)-y'(\Re\mc{V}_{(k)}-\omega^2)-y\Re\mc{V}_{(k)}'\leq \frac{15}{16} h(\Re\mc{V}_{(k)}-\omega^2)-y\Re\mc{V}_{(k)}'\geq 0\,.
\end{align*}
Then, if $h$ can be taken down to zero in a controlled, integrable manner, we let $y$ grow so that absorbs the errors due to $h$ in that region, i.e.\ so that
$$-(y\Re\mc{V}_{(k)})'-\frac12\lp(h''+(|s|-k)\frac{w'}{w}h'\rp)\geq 0\,.$$ 
Finally, we make $y$ grow slowly so that $y'\to 0$ as $r^*\to \infty$. 

\item A region to the left of the first one, where we take our $h$ current down to zero as another $y$ current becomes strong so that $h$ and $y$ absorb each other's errors once again. Assume $|\omega|$ is small; we have different strategies depending on the size of $|\omega-m\upomega_+|$:
\begin{itemize}
\item if $|\omega-m\upomega_+|$ is small and $\Re\mc{V}_{(k)}'>0$ at large $-r^*$ at the expected rate from Lemma~\ref{lemma:properties-potential-low-omega} (see \ref{sec:F-low-1}), we follow the same strategy as for the $r^*\to\infty$ end when $|\omega|$ is small and $\Re\mc{V}_{(k)}'<0$ at large $r^*$, which was already described;
\item if $|\omega-m\upomega_+|$ is bounded away from zero (see Section~\ref{sec:F-low-2}), we employ a global, exponential-type $y$ current (degenerating as $r^*\to \infty$) which generates a good bulk term for all $r^*$ proportional to 
$$y'(\omega-m\upomega_+)^2\,;$$
then, in the region we are discussing, we take $h$ down in such a way that $h'<0$ and $h''$ is a small multiple of $y'(\omega-m\upomega_+)^2$, so that
$$\frac12y'(\omega-m\upomega_+)^2-\frac12\lp(h''+(|s|-k)\frac{w'}{w}h'\rp)\geq \frac14 y'(\omega-m\upomega_+)^2\,.$$
\end{itemize}
\end{itemize}

Our construction is based on that in the analogous regimes in \cite{Dafermos2016b}. The goal now is to tweak the currents $h$ and $y$ applied at each level $k$ to produce smallness in either the coupling to $k+1$  (we appeal to Lemma~\ref{lemma:h-y-identity-k}) or the coupling to $j<k$. Then, though we cannot close a good estimate for any $\uppsi_{(k)}$, $k<|s|$, by itself, we we can iterate along the hierarchy and use the absence of an imaginary potential at the top level $k=|s|$ to close a good bulk estimate for $\Psi$.

Finally, when applying energy currents, we note that the Killing currents also produce coupling errors for $s\neq 0$. As the natural weights appearing are not sufficiently strong, these errors cannot be absorbed unless either $a$ can be made small or only a small multiple of the Killing current is added. For general $a$, these currents are therefore insufficient to control the boundary terms generated by our virial estimates; we must use Teukolsky--Starobinsky-type energy currents instead, which do not produce coupling errors, to complete our proofs.

Sections  \ref{sec:low-hierarchy-h} and \ref{sec:low-hierarchy-y} contain the barebones structure of the iteration procedure for applying the $h$ and $y$ currents, respectively, at each level $k$. It also introduces model $h$ and $y$ currents that will be the starting point for our constructions of the necessary virial currents for $\mc{F}_{\rm low,1}$, in Section~\ref{sec:F-low-1}, and $\mc{F}_{\rm low,2}$, in Section~\ref{sec:F-low-2}. We direct the reader to these sections for further details on implementation.

\subsubsection{Current estimates for entire transformed system without the imaginary potential component}
\label{sec:low-hierarchy-h}

As $\Im\mc{V}_{(k)}$ does not have good decay properties, it natural to first consider currents which are blind to the imaginary part of the potential for a Schr\"{o}dinger-type ODE; as can be seen in section~\ref{sec:virial-current-templates}, the $h$ current is the sole one, among the currents we introduced, that fulfills this requirement. In a region where $h$ is ``good'', we  can control $\uppsi_{(k)}$ by the inhomogeneity $\mathfrak{G}_{(k)}$, as long as we tolerate  errors due to derivatives of $h$. 

\begin{lemma} \label{lemma:h-estimate-low}
Let $s\in\mathbb{Z}\backslash\{0\}$. For $k=0,...,|s|$, let $\uppsi_{(k)}$ be a solution of \eqref{eq:transformed-k-separated} defined by \eqref{eq:transformed-transport-separated}.  Then, for any function $h(r^*):\mathbb{R}\to\mathbb{R}_+$ such that 
\begin{itemize}[noitemsep]
\item $h\in C^1$ and is piecewise $C^2$, 
\item  $h$ compactly supported in a region of $r^*$ where, for $\sigma\in[0,5/4]$, $\mc{V}_{(k)}-\sigma(|s|-k)\lp(\frac{w'}{w}\rp)' -\omega^2\geq \frac18 w\,,$ for all $k=0,...,|s|$,
\end{itemize}
we have the estimates
\begin{align*}
&\int_{-\infty}^\infty \lp\{h\lp|\uppsi_{(k)}'\rp|^2+\frac34  h\lp(\Re\mc{V}_{(k)}-\sigma(|s|-k)\lp(\frac{w'}{w}\rp)'-\omega^2\rp)\lp|\uppsi_{(k)}\rp|^2\rp\}dr^* \numberthis \label{eq:h-estimate-low} \\
&\qquad \leq \int_{-\infty}^\infty h \Re\lp[\mathfrak{G}_{(k)}\overline{\uppsi_{(k)}}\rp]dr^*+\frac12 \int_{-\infty}^\infty \lp[h''+(|s|-k)\frac{w'}{w}h'-(s-k)(2\sigma-1)\lp(\frac{w'}{w}\rp)'h\rp]\lp|\uppsi_{(k)}\rp|^2dr^* \\
&\quad\qquad+ \lp\{\begin{array}{lr} 
1 &\text{if~} |s|\neq 1=k\\
\frac{a^2}{M^2}&\text{otherwhise} 
\end{array}\rp\}
k\sum_{j=0}^{k-1}\int_{-\infty}^\infty   B(1+m^2) h\lp(\Re\mc{V}_{(j)}-\sigma(|s|-j)\lp(\frac{w'}{w}\rp)'-\omega^2\rp) \lp|\uppsi_{(j)}\rp|^2 dr^*\,.
\end{align*}
\end{lemma}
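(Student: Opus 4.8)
\textbf{Proof proposal for Lemma~\ref{lemma:h-estimate-low}.}

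The plan is to apply the $h$-current identity from Lemma~\ref{lemma:h-y-identity-k}, equation \eqref{eq:h-identity-k}, to $\uppsi_{(k)}$, integrate over $r^*\in(-\infty,\infty)$, and then carefully deal with the coupling terms $aw\sum_{i=0}^{k-1}h\Re[\uppsi_{(i)}(c_{s,k,i}^{\rm id}+imc_{s,k,i}^{\Phi})\overline{\uppsi_{(k)}}]$ via Cauchy--Schwarz. First I would note that the boundary terms $[Q^h[\uppsi_{(k)}]]_{-\infty}^{\infty}$ vanish: since $h$ has compact support in $r^*$, $Q^h[\uppsi_{(k)}]$ vanishes at both ends. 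Integrating \eqref{eq:h-identity-k} therefore gives
\begin{align*}
\int_{-\infty}^\infty \Big\{ h|\uppsi_{(k)}'|^2 + h\Big(\Re\mc{V}_{(k)}-\tfrac12(|s|-k)\big(\tfrac{w'}{w}\big)'-\omega^2\Big)|\uppsi_{(k)}|^2\Big\}dr^* &= \int_{-\infty}^\infty h\Re\big[\mathfrak{G}_{(k)}\overline{\uppsi_{(k)}}\big]dr^* \\
&\quad + \tfrac12\int_{-\infty}^\infty \Big(h''+(|s|-k)\tfrac{w'}{w}h'\Big)|\uppsi_{(k)}|^2 dr^* \\
&\quad - aw\sum_{i=0}^{k-1}\int_{-\infty}^\infty h\Re\big[\uppsi_{(i)}(c_{s,k,i}^{\rm id}+imc_{s,k,i}^{\Phi})\overline{\uppsi_{(k)}}\big]dr^*\,.
\end{align*}
To match the $\sigma$-weighted bulk term in the statement, I would rewrite the left-hand side zeroth-order coefficient by adding and subtracting $\sigma(|s|-k)(w'/w)'h$, which generates the $-(|s|-k)(2\sigma-1)(w'/w)'h$ term inside the second integral on the right, exactly as in the claimed estimate; under the support hypothesis on $h$ the bulk coefficient $\Re\mc{V}_{(k)}-\sigma(|s|-k)(w'/w)'-\omega^2$ is $\geq \frac18 w > 0$.

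Next I would absorb a factor into the good bulk term. Since the good bulk integrand $h|\uppsi_{(k)}'|^2 + h(\Re\mc{V}_{(k)}-\sigma(|s|-k)(w'/w)'-\omega^2)|\uppsi_{(k)}|^2$ is manifestly nonnegative, we may subtract $\frac14$ of it from both sides, leaving $\frac34$ of it on the left (which produces the $\frac34$ coefficient in \eqref{eq:h-estimate-low}) and a spare $\frac14 h(\Re\mc{V}_{(k)}-\sigma(|s|-k)(w'/w)'-\omega^2)|\uppsi_{(k)}|^2$ that can be used to absorb part of the coupling error. For the coupling term, apply Cauchy--Schwarz: for each $i<k$,
\begin{align*}
\Big| aw\, h\,\Re\big[\uppsi_{(i)}(c_{s,k,i}^{\rm id}+imc_{s,k,i}^{\Phi})\overline{\uppsi_{(k)}}\big]\Big| &\leq \tfrac14 h\Big(\Re\mc{V}_{(k)}-\sigma(|s|-k)\big(\tfrac{w'}{w}\big)'-\omega^2\Big)|\uppsi_{(k)}|^2 \\
&\quad + B\,a^2 w^2 h\,(1+m^2)\,\Big(\Re\mc{V}_{(k)}-\sigma(|s|-k)\big(\tfrac{w'}{w}\big)'-\omega^2\Big)^{-1}|\uppsi_{(i)}|^2\,,
\end{align*}
using that $c_{s,k,i}^{\rm id},c_{s,k,i}^\Phi = O(1)$ (from Proposition~\ref{prop:transformed-system}). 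Summing over the $k$ values of $i$ absorbs the spare $\frac14$-term once $k$ copies of $\frac14/k$ are used. It then remains to control the weight $a^2 w^2 (\Re\mc{V}_{(k)}-\sigma(|s|-k)(w'/w)'-\omega^2)^{-1}$: on $\supp h$ we have $\Re\mc{V}_{(k)}-\sigma(|s|-k)(w'/w)'-\omega^2 \geq \frac18 w$, hence this weight is $\leq 8 a^2 w \leq B a^2 w \leq B$, and moreover, on $\supp h$, Lemma~\ref{lemma:properties-potential-low-omega}\ref{it:properties-potential-low-omega-compact-region} (with the same $\sigma$) gives $\Re\mc{V}_{(j)}-\sigma(|s|-j)(w'/w)'-\omega^2 \geq \frac18 w$ for the lower levels as well, so that $a^2 w^2 (\Re\mc{V}_{(k)}-\cdots)^{-1}|\uppsi_{(j)}|^2 \leq B\, a^2\, h\,(\Re\mc{V}_{(j)}-\sigma(|s|-j)(w'/w)'-\omega^2)|\uppsi_{(j)}|^2$ after inserting $w \leq B(\Re\mc{V}_{(j)}-\cdots)\cdot 8$. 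This produces the last line of \eqref{eq:h-estimate-low}.

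The only remaining subtlety — and the main place care is needed — is the special case $(s,k)$ with $|s|\neq 1$, $k=1$, $i=0$, where Proposition~\ref{prop:transformed-system} states that the coefficient is $c_{s,1,0}^{\rm id}$ rather than $a c_{s,1,0}^{\rm id}$; i.e.\ the $\mathfrak{G}_{(k)}$-free identity-type coupling at this level is \emph{not} accompanied by an extra power of $a$. In that case the Cauchy--Schwarz bound above yields a weight $w^2(\Re\mc{V}_{(1)}-\cdots)^{-1}$ without the $a^2$, which is why the prefactor in \eqref{eq:h-estimate-low} is $1$ rather than $a^2/M^2$ precisely when $|s|\neq 1=k$; otherwise (all levels with $i<k$ carrying the genuine factor $a$, which includes every coupling when $|s|=1$ and all $i\geq 1$ couplings in general) one gains the $a^2/M^2$. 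I would simply split the sum over $i$ into the exceptional $(k,i)=(1,0)$ term (when $|s|\neq 1$) and the rest, bound each group as above, and collect the prefactors into the stated piecewise constant. The estimate \eqref{eq:h-estimate-low} follows. The main obstacle is thus purely bookkeeping: tracking which coupling coefficients carry the extra power of $a$ so as to land exactly the piecewise prefactor claimed, rather than any genuine analytic difficulty — all the positivity needed is furnished by the support hypothesis on $h$ together with Lemma~\ref{lemma:properties-potential-low-omega}\ref{it:properties-potential-low-omega-compact-region}.
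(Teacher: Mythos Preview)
Your proposal is correct and follows essentially the same route as the paper: integrate the $h$-current identity \eqref{eq:h-identity-k}, use compact support to kill the boundary terms, shift the $\sigma$-weight into the bulk, and absorb the coupling sum via Cauchy--Schwarz together with the hypothesis $\Re\mc{V}_{(j)}-\sigma(|s|-j)(w'/w)'-\omega^2\ge \tfrac18 w$ on $\supp h$ (this lower bound is a \emph{hypothesis} of the lemma, so your appeal to Lemma~\ref{lemma:properties-potential-low-omega} is unnecessary). One small slip: you subtract $\tfrac14$ of the \emph{full} bulk, which leaves only $\tfrac34 h|\uppsi_{(k)}'|^2$ on the left, whereas the stated estimate keeps the full $h|\uppsi_{(k)}'|^2$; the fix is simply to absorb the Cauchy--Schwarz error into the zeroth-order piece alone (as the paper does, using $\tfrac18$ rather than $\tfrac14$), since the coupling does not involve $\uppsi_{(k)}'$.
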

\begin{proof}
At the level $k$, application of an $h$ current gives (see Lemma~\ref{lemma:h-y-identity-k}), for instance, for some $\sigma\in(0,1)$,
\begin{align*}
&\int_{-\infty}^\infty \lp\{h\lp|\uppsi_{(k)}'\rp|^2+h\lp[\Re\mc{V}_{(k)}-\sigma(|s|-k)\lp(\frac{w'}{w}\rp)'-\omega^2\rp]\lp|\uppsi_{(k)}\rp|^2\rp\}dr^* \\
&\quad =\int_{-\infty}^\infty \frac12\lp[h''+ (|s|-k)h'\frac{w'}{w}-(|s|-k)(2\sigma-1)h\lp(\frac{w'}{w}\rp)'\rp]\lp|\uppsi_{(k)}\rp|^2dr^*+\int_{-\infty}^\infty h \Re\lp[\mathfrak{G}_{(k)}\overline{\uppsi_{(k)}}\rp]dr^*\\
&\quad\qquad+\int_{-\infty}^\infty \sum_{j=0}^{k-1} ahw\Re\lp[\lp(imc_{s,k,j}^m+c_{s,k,j}^\mr{id}\rp)\uppsi_{(j)}\overline{\uppsi_{(k)}}\rp]dr^*\,,
\end{align*}
where the last term is absent for $k=0$. For $k=1,...,|s|$, we apply Cauchy--Schwarz
\begin{align*}
&\int_{-\infty}^\infty \sum_{j=0}^{k-1}ahw\Re\lp[\lp(imc_{s,k,j}^m+c_{s,k,j}^\mr{id}\rp)\uppsi_{(j)}\overline{\uppsi_{(k)}}\rp]dr^*\\
&\quad \leq \int_{-\infty}^\infty \frac{a^2}{M^2} k B(M,k)(1+m^2) \sum_{j=0}^{k-1} h\lp(\Re\mc{V}_{(j)}-\sigma(|s|-j)\lp(\frac{w'}{w}\rp)'-\omega^2\rp) \lp|\uppsi_{(j)}\rp|^2 dr^* \\   
&\quad\qquad+\int_{-\infty}^\infty \frac18 h \lp(\Re\mc{V}_{(k)}-\sigma(|s|-k)\lp(\frac{w'}{w}\rp)'-\omega^2\rp) \lp|\uppsi_{(k)}\rp|^2dr^*\,,
\end{align*}
where, noting that $ac_{s,1,0}^{\rm id}$ should be replaced by $c_{s,1,0}^{\rm id}$ for $|s|\neq 1$ (Proposition~\ref{prop:transformed-system}, we replace $a^2/M^2$ by 1 in the second line when $|s|\neq 1$ and $k=1$. Consequently, for each $k=0,...,|s|$, one obtains \eqref{eq:h-estimate-low}.
\end{proof}

Our goal is to construct $h$ such that the error introduced by its derivatives, i.e.\ the term proportional to $|\uppsi_{(k)}|^2$ on the right hand side of \eqref{eq:h-estimate-low}, is as small as possible. With this in mind, we introduce here several building blocks which will be relevant for the remainder of this section.

\paragraph{Model $h$ currents in a compact region of $|r^*|\gtrsim 1$}

We begin with a building block with good properties at large $r^*$, which we denote $h_+$. Let $R_3>0$ be very large and $p\in(0,1)$ be very small; fix $N\in\mathbb{N}_0$. Assume $h_+(R_3^*)>0$ and $h'_+(R_3^*)\geq 0$ are given. After $R_3^*$, we want $h_+\to 0$ such that $h_+''$ is very small, hence we let $h$ decrease for a large region, of size $(2e^{p^{-1}}-1)R_3^*$, before it reaches 0: in the range $[R_3^*,e^{p^{-1}}R_3^*]$, we take $h_+$ to be a linear combination of the constant function, $(r^*)^{-1}$ and $\log(r^*)$ such that $h_+$ and its derivatives are of size $p$ at the end of the interval and $h_+>0$; in $[e^{p^{-1}}R_3^*,2e^{p^{-1}}R_3^*]$, we let $h_+$ be a linear combination of $r^*$, the constant function, $(r^*)^{-1}$ and $\log(r^*)$ such that $h_+$ remains $C^1$ and $h_+,h_+'$ vanish at the end of the interval. Concretely,
\begin{align}
\begin{split}
h_+&=\lp[C_{N+5}^++C_{N+6}^+\frac{R_3^*}{r^*}+C_{N+7}^+p\log\lp(\frac{R_3^*}{r^*}\rp)\rp]\mathbbm{1}_{[R_3^*,e^{p^{-1}}R_3^*]}\\
&\qquad + \lp[C_{N+1}^++C_{N+2}^+\frac{r^*}{e^{p^{-1}}R_3^*}+C_{N+3}^+\log\lp(\frac{r^*}{e^{p^{-1}}R_3^*}+C_{N+4}^+\frac{e^{p^{-1}}R_3^*}{r^*}\rp)\rp]\mathbbm{1}_{[e^{p^{-1}}R_3^*,2e^{p^{-1}}R_3^*]}\,,
\end{split} \label{eq:standard-current-h+}
\end{align}
where $C_{N+1}^+$ up to $C_{N+4}^+$  are fixed in terms of the remaining constants by the constraints that $h_+$ is $C^1$ at $r=e^{p^{-1}}R_3^*$ and $h_+(2e^{p^{-1}}R_3^*)=h_+'(e^{p^{-1}}R_3^*)=0$, $C_{N+5}^+$ and $C_{N+6}^+$ are fixed in terms of the given $h_+(R_3^*),h'_+(R_3^*)$ by enforcing that $h_+$ is $C^1$ at $r^*=R_2^*$ and $C_{N+7}^+$ is chosen so that $h_+(e^{p^{-1}}R_3^*),h_+'(e^{p^{-1}}R_3^*)=O(p)$:
\begin{gather*}
C_{N+1}^+=\frac{2(1-\log 2)}{\log 8 -2}\lp[p(1-e^{-1/p})\lp(h_+(r_3^*)+R_3h_+'(R_3^*)\rp)-e^{-1/p}R_3^*h'_+(R_3^*)\rp]\,, \\
 C_{N+2}^+=\frac{\log 2}{2(1-\log 2)}C_{N+1}^+\,, \quad C_{N+3}^+=\frac{C_{N+1}^+}{1-\log 2}\,,\quad C_{N+4}=-2C_{N+1}^+\,, \\
C_{N+5}^+=(1+p)C_0^+\,,\quad C_{N+6}=ph_+(R_3^*)-(1+p)C_0^+ \,,\quad C_{N+7}^+=pC_0^+\,, \quad C_0^+=h_+(R_3^*)+R_3h_+'(R_3^*)\,.
\end{gather*}
The derivatives of $h_+$ satisfy
\begin{align*}
-h_+'&\leq \frac{pC_0^+}{r^*}\mathbbm{1}_{[R_3^*,e^{p^{-1}}R_3^*]}\,;\quad h_+''\mathbbm{1}_{[R_3^*,\infty)}\leq \frac{26pC_0^+}{(r^*)^2}\mathbbm{1}_{[R_3^*,2e^{p^{-1}}R_3^*]}\,,
\end{align*}
thus, as $w'/w=-2r^{-1}+O(r^{-2})$ as $r\to \infty$, we have
\begin{align*}
& h''_+\,,\,\,h''_++\lp(|s|-k\rp)\frac{w'}{w}h_+'\leq \frac{30p\lp(1+2|s|\rp)C_0^+}{(r^*)^2}\mathbbm{1}_{\supp (h_+')}\numberthis \label{eq:h+-estimate-error}\,,
\end{align*}
for sufficiently large $R_3^*$.

We now focus on a building block, $h_-$, which has good properties as $r\to r_+$. Note that, for $a\in[0,M)$, if we were to set $h_-(r^*)=h_+(-r^*)$, we could not show \eqref{eq:h+-estimate-error}, as $w'/w$ does not decay as $r\to r_+$; we could only prove the weaker bound
\begin{align*}
\lp(h_-''-\lp(|s|-k\rp)\frac{w'}{w}h_-'\rp) \leq \frac{Bp}{|r^*|}\mathbbm{1}_{\supp (h_-')}\,,\numberthis \label{eq:h+-estimate-error-weak}
\end{align*}
which is not integrable. Instead, we will take $h_-$ as follows. Let $R_1$ be sufficiently close to $r_+$ and $p\in(0,1)$ sufficiently small, define $R_1=r_++(R_2-r_+)e^{-1/p}$ and $R_{0}=r_++(R_1-r_+)/2$. Given $h_-(R_2)>0$ and $h_-'(R_2)\leq 0$, the building block $h_-(r^*)$ is the function
\begin{align*}
h_-&= \lp[C_{N+5}^-\frac{r-r_+}{R_2-r_+}\frac{R_2-M}{r-M}+C_{N+6}^-\frac{r-M}{R_2-M}+C_{N+7}^- \log\lp(\frac{r-M}{R_2-M}\rp)\rp]\mathbbm{1}_{[R_1,R_2]}\numberthis \label{eq:standard-current-h-}\\
&\qquad + \lp[C_{N+1}^-\frac{r-r_+}{R_1-r_+}\frac{R_1-M}{r-M}+C_{N+2}^-\frac{R_1-M}{r-M}+C_{N+3}^-\log\lp(\frac{R_1-M}{r-M}\rp)+C_{N+4}^-\frac{r-M}{R_1-M}\rp]\mathbbm{1}_{[R_{0},R_1]}\,,
\end{align*}
where $C_{N+1}^-$ up to $C_{N+4}^-$  are fixed in terms of the remaining constants by the constraints that $h_-$ is $C^1$ at $r=R_1$ and $h_-(R_{0})=h_-'(R_{0})=0$, $C_{N+5}^-$ and $C_{N+6}^-$ are fixed in terms of the given $h(R_2), h'(R_2)$ by enforcing that $h_-$ is $C^1$ at $r=R_2$, $C_{N+7}$ is chosen so that $h_-(R_1),h_-'(R_1)=O(p)$ when $a=M$. Indeed, we take
\begin{gather*}
C^-_{N+5}=(1+p)C^-_0\,,\quad C_{N+6}^-=ph_-(R_2)-(1+p)C_0^- \,,\quad C_{N+7}^-=\lp(p-\frac{(1+p)\sqrt{M^2-a^2}}{R_1-r_+}\rp)C_0^-\,,\\
C_0^-= h_-(R_2)-h_-'(R_2)\frac{(R_2-M)(R_2^2+a^2)}{\Delta(R_2)}\,.
\end{gather*}

The building block $h_-$ will be useful precisely in the limit $a\to M$. If $a$ is sufficiently close to $M$ that $\sqrt{M^2-a^2}\leq M e^{-\hat{c}/p}$ for $\hat{c}>1$, then in fact $C_{N+1}$ up to $C_{N+7}$ are the same as their $|a|=M$ counterparts up to $o(p)$ corrections: as
\begin{align*}
e^{1/p}\frac{R_1-M}{R_2-M}&=\frac{(R_2-r_+)+e^{1/p}\sqrt{M^2-a^2}}{R_2-M}
=\frac{R_2-M+(e^{1/p}-1)\sqrt{M^2-a^2}}{R_2-M}\\
&= \frac{R_2-M+O\lp(Me^{-(\hat{c}-1)/p}-e^{-\hat{c}/p}\rp)}{R_2-M} =1 +o(p)\,,\\
\frac{\sqrt{M^2-a^2}}{r-M}\mathbbm{1}_{[R_{0},R_2]}&\leq \frac{\sqrt{M^2-a^2}}{R_{0}-M}\leq \frac{\sqrt{M^2-a^2}}{R_{0}-r_+}\leq \frac{2e^{1/p}\sqrt{M^2-a^2}}{R_{2}-r_+} =O\lp(\frac{2e^{-(\hat{c}-1)/p}}{R_{2}-r_+}\rp)=o(p)\,, \numberthis\label{eq:h--intermediate}
\end{align*}
the same mechanism as for $h_+$ will enable us to conclude $h_-'(R_1),h_-(R_1)=O(p)$ with same leading order coefficient as for $|a|=M$ and, moreover,
\begin{gather} \label{eq:coefficients-h-}
\begin{gathered}
C^-_{N+1}=(1-\log 2)C_{N+3}^-+o(p)\,,\quad C_{N+2}^-=-2(1-\log 2)C_{N+3}^-+o(p) \,,\\
 C_{N+3}^-=\frac{2 p}{\log 8-2}\lp[h(R_1)\frac{e^{-1/p}}{p}+C_0^+\lp(1+e^{-1/p}-\frac{e^{-1/p}}{p}\rp)\rp]+o(p)\,, \quad C_{N+4}^-=\frac{\log 2}{2}C_{N+3}+o(p)\,.
 \end{gathered}
\end{gather}
As derivatives of $h_-$ satisfy
\begin{align*}
h_-''&\leq 6pwC_0^- \lp[1+\lp(1+\frac{1}{p}\rp)\frac{\sqrt{M^2-a^2}}{r-M}\rp]\mathbbm{1}_{[R_1,R_2]}\\
&\qquad+w\lp[\lp(3+\frac{\sqrt{M^2-a^2}}{r-M}\rp)\lp(C_{N+1}^-\frac{\sqrt{M^2-a^2}}{r-M}+C_{N+3}^-\rp)+C_{N+2}^-\frac{R_1-r_++\sqrt{M^2-a^2}}{M}\rp]\mathbbm{1}_{[R_0,R_1]}\\
h_-'&\leq \lp\{2pC_0^-\lp[1+\lp(1+\frac{1}{p}\rp)\frac{\sqrt{M^2-a^2}}{r-M}\rp]\mathbbm{1}_{[R_1,R_2]}+\lp(C_{N+1}^-\frac{\sqrt{M^2-a^2}}{r-M}+C_{N+3}^-\rp)\mathbbm{1}_{[R_0,R_1]}\rp\}\frac{\Delta}{r^2+a^2}\,,
\end{align*}
and we have
\begin{align*}
(r^2+a^2)^2\frac{w'}{w}=4Mr_+(r_+-M)+4a^2(r-r_+)+O(\Delta)\,, 
\end{align*}
as $r\to r_+$, for $R_2$ sufficiently close to $r_+$ and for  $\sqrt{M^2-a^2}\leq M e^{-\hat{c}/p}$ with $\hat{c}>1$, we have (c.f.\ \eqref{eq:h+-estimate-error})
\begin{align}
h_-''\,,\,\,h_-''+(|s|-k)\frac{w'}{w}h_-'\leq 30(3+2|s|)pw C_0^- \mathbbm{1}_{\supp(h_-')}\leq \frac{30(3+2|s|)Bp C_0^-}{(r^*)^2} \mathbbm{1}_{\supp(h_-')}\,.\label{eq:h--estimate-error} 
\end{align}

\paragraph{Model $h$ currents in a region of $|r^*|\lesssim 1$}

Finally, we discuss building blocks for $h$ which are suitable in a large region of $r$. In an application of the $h$ current to the transformed variable $\uppsi_{(k)}$, $k=0,\dots, |s|$, we note that, for $\sigma\in[0,4/5]$ (see Lemma~\ref{lemma:h-estimate-low})
\begin{align*}
h''+(|s|-k)\frac{w'}{w}h'+(1-2\sigma)(|s|-k)\lp(\frac{w'}{w}\rp)'h
\end{align*}
 represent error terms if their sum is positive for some $r\in[r_+,\infty)$. Letting $h$ be a multiple of $w^{-q}$ for some $q\in\mathbb{R}$ leads to
\begin{align*}
&h''+(|s|-k)\frac{w'}{w}h'+(1-2\sigma)(|s|-k)\lp(\frac{w'}{w}\rp)'h\\
&\quad=\lp\{\lp[-q+(1-2\sigma)(|s|-k)\rp]\lp[\lp(\frac{w'}{w}\rp)'-q\lp(\frac{w'}{w}\rp)^2\rp]-2\sigma q \lp(\frac{w'}{w}\rp)^2\rp\}h\,, \numberthis \label{eq:h-polynomial-new}
\end{align*}
which is certainly non-positive for all $r$ if $q=(1-2\sigma)(|s|-k)$; hence, there are no errors arising from such a choice of current $h$.

For $\sigma=1/2$, it will also be convenient to define $h$ to be some multiple of $r^N$, for $N\in\mathbb{N}_0$, when $r$ is very large but bounded. We have
\begin{align}
&h''+(|s|-k)\frac{w'}{w}h'=hr^{-2}\lp[N(N-2(|s|-k)-1)+\lp(\frac{w'}{w}+\frac{2}{r}\rp)(|s|-k)\rp]\leq 0\,, \label{eq:error-h-polynomial}
\end{align}
as $r\to\infty$ if, when $k\neq |s|$, $N<1+2(|s|-k)$ and, when $k=|s|$, if $N\leq 1$.

\subsubsection{Current estimates for entire transformed system with the imaginary potential component}
\label{sec:low-hierarchy-y}

This section focuses on separated current templates which are affected by $\Im\mc V_{(k)}$. The most glaring example are the Killing energy currents; motivated by identities of Lemma~\ref{lemma:Killing-identity-k}, we prove
\begin{lemma}\label{lemma:Killing-current-errors-bdd} Let $s\in\mathbb{Z}\backslash\{0\}$. For $k=0,...,|s|$, let $\uppsi_{(k)}$ be a solution of \eqref{eq:transformed-k-separated} defined by \eqref{eq:transformed-transport-separated}. For any $R_b\in(r_+,\infty)$ and $\varepsilon\in(0,1)$, we have  
\begin{align*}
&\int_{R_b^*}^\infty\omega\sum_{j=0}^{k-1}\Im[(ac_{s,k,j}^{\rm id}+imc_{s,k,j}^\Phi)\uppsi_{(j)}\overline{\uppsi_{(k)}}]dr^*\\
&\qquad +\int_{R_b^*}^\infty \sign s (|s|-k)\omega\lp[w'\Im[\uppsi_{(k+1)}\overline{\uppsi_{(k)}}]+\frac{4amrw}{r^2+a^2}|\uppsi_{(k)}|^2\rp]dr^*\\
&\quad \leq \int_{R_b^*/2}^\infty \lp[\lp(\varepsilon w\omega^2+\lp(\varepsilon+B(\varepsilon_{\rm width},\omega_{\rm high})|\omega|\varepsilon^{-1}\rp) \frac{w}{r}\rp)|\uppsi_{(k)}|^2+\omega^2\varepsilon^{-1}B(\varepsilon_{\rm width},\omega_{\rm high})\frac{w}{r}|\uppsi_{(k+1)}|^2\rp]dr^*\\
&\quad\qquad +\int_{R_b^*/2}^\infty\sum_{j=0}^{k-1}B(\varepsilon_{\rm width},\omega_{\rm high})|\omega|\varepsilon^{-1} wR_b^{1/2}\mathbbm{1}_{[R_b^*/2,R_b^*]}|\uppsi_{(j)}|^2dr^*\numberthis\label{eq:QT-current-errors-bdd}
\end{align*}
and, writing $\omega_0:=\omega-m\upomega_+$, 
\begin{align*}
&\int_{-\infty}^{-R_b^*}\omega_0\sum_{j=0}^{k-1}\Im[(ac_{s,k,j}^{\rm id}+imc_{s,k,j}^\Phi)\uppsi_{(j)}\overline{\uppsi_{(k)}}]dr^*\\
&\qquad +\int_{-\infty}^{R_b^*} \sign s (|s|-k)\omega_0\lp[w'\Im[\uppsi_{(k+1)}\overline{\uppsi_{(k)}}]+\frac{4amrw}{r^2+a^2}|\uppsi_{(k)}|^2\rp]dr^*\\
&\quad \leq \int_{R_b^*/2}^\infty \lp[\lp(\varepsilon w\omega_0^2+\lp(\varepsilon+B(\varepsilon_{\rm width},\omega_{\rm high})|\omega_0|\varepsilon^{-1}\rp) \frac{w}{r}\rp)|\uppsi_{(k)}|^2+\omega_0^2\varepsilon^{-1}B(\varepsilon_{\rm width},\omega_{\rm high})\frac{w}{r}|\uppsi_{(k+1)}|^2\rp]dr^*\\
&\quad\qquad +\int_{R_b^*/2}^\infty\sum_{j=0}^{k-1}B(\varepsilon_{\rm width},\omega_{\rm high})|\omega|wR_b^{1/2}|\omega_0|\mathbbm{1}_{[R_b^*/2,R_b^*]}|\uppsi_{(j)}|^2dr^*\,.\numberthis\label{eq:QK-current-errors-bdd}
\end{align*}
\end{lemma}
\begin{proof}
Let us note that first that a simple application of Cauchy--Schwarz yields
\begin{align*}
-\sign s (|s|-k)\omega \lp[w'\Im[\uppsi_{(k+1)}\overline{\uppsi_{(k)}}]+\frac{4amrw}{r^2+a^2}|\uppsi_{(k)}|^2\rp] \leq \frac{w}{r}\lp(\varepsilon|\uppsi_{(k)}|^2+\varepsilon^{-1}\omega^2 B(\varepsilon_{\rm width},\omega_{\rm high})|\uppsi_{(k+1)}|^2\rp)\,,
\end{align*}
where the right hand side already has the desired behavior as $r^*\to\pm \infty$.

In the coupling terms, the $r$-weights are weaker. For instance, an application of Cauchy--Schwarz yields
\begin{align*}
&\omega w \mathbbm{1}_{[R_b,\infty)} \Im[(ac_{s,k,j}^{\rm id}+imc_{s,k,j}^\Phi)\uppsi_{(j)}\overline{\uppsi_{(k)}}]\\
&\quad\leq \lp(\varepsilon|\omega| \frac{w}{r^{1/2}}|\uppsi_{(k)}|^2+ B(\varepsilon_{\rm width},\omega_{\rm high})\varepsilon^{-1}|\omega| wr^{1/2}|\uppsi_{(j)}|^2\rp)\mathbbm{1}_{[R_b,\infty)}\\
&\quad\leq\lp(\varepsilon \lp(w\omega^2+\frac{w}{r}\rp)|\uppsi_{(k)}|^2+ B(\varepsilon_{\rm width},\omega_{\rm high})\varepsilon^{-1}|\omega| \frac{\Delta}{r^{2+3/2}}|\uppsi_{(j)}|^2\rp)\mathbbm{1}_{[R_b,\infty)}\,,
\end{align*}
where the $r$-weights on the latter are not strong enough to conclude. To remedy this, we may apply the transport estimates of Lemma~\ref{lemma:basic-estimate-1}. Choosing
\begin{align*}
c=-\frac{1}{r^{1/2}}\mathbbm{1}_{[R_b,\infty)}-\frac{2r-R_b}{R_b^{3/2}}\mathbbm{1}_{[R_b/2,R_b]}\,,
\end{align*}
we find that
\begin{align*}
&\int \frac{\Delta}{r^{2+3/2}}|\uppsi_{(j)}|^2 \mathbbm{1}_{[R_b,\infty)} dr^*\leq B\int \lp(c'|\uppsi_{(j)}|^2 +\frac{1}{R_b^{3/2}}\mathbbm{1}_{[R_b/2,R_b]}\rp) dr^*\\
&\quad\leq B\int \lp[\frac{w}{r^{3/2}}|\uppsi_{(j+1)}|^2 \mathbbm{1}_{[R_b,\infty)} + w\lp(\frac{1}{R_b^{3/2}}|\uppsi_{(j+1)}|^2+\lp(1+\frac{r^2}{R_b^{3/2}}\rp)|\uppsi_{(j)}|^2\rp)\mathbbm{1}_{[R_b/2,R_b]}\rp] dr^*\\
&\quad \leq B\int \lp(\frac{w}{r}|\uppsi_{(j+1)}|^2 \mathbbm{1}_{[R_b/2,\infty)} +w R_b^{1/2}|\uppsi_{(j)}|^2\mathbbm{1}_{[R_b/2,R_b]}\rp) dr^*\,,
\end{align*}
where the last line follows from the fact that $r^2\sim R_b^2$ in $[R_b/2,R_b]$. The upshot here is that we have pushed the errors to a compact $r$ range; moreover, in the $\mc F_{\rm low}$ setting these come with a small parameter. Note that an analogous argument (in fact, an easier argument) works for the coupling errors arising from the $K$ Killing energy current.
\end{proof}

Next, we look at the virial currents. The $y$ current, since it is not blind to  $\Im\mc{V}_{(k)}$, cannot produce an estimate such as that in Lemma \ref{lemma:h-estimate-low}. By making use of the formulation of the $y$ current in Lemma \ref{lemma:h-y-identity-k}, we show that it can be used to obtain an estimate where, in a region where $y$ is ``good'', a bulk term in $\uppsi_{(k)}$ is controlled by a boundary term, by the inhomogeneity $\mathfrak{G}_{(k)}$ and, if $k<|s|$, by some multiple of the bulk term in $\uppsi_{(k+1)}$.

\begin{lemma} \label{lemma:y-estimate-low}
Let $s\in\mathbb{Z}\backslash\{0\}$. Suppose for some  $\varepsilon_{\rm width},\omega_{\rm high}>0$, we have $m^2, \Lambda\leq \varepsilon_{\rm width}^{-1}\omega_{\rm high}^2$, $|\omega|\leq \omega_{\rm high}$ and $|\Re\mc{V}_{(k)}|,|\Re\mc{V}_{(k)}'|\leq B(\varepsilon_{\rm width},\omega_{\rm high})$. For $k=0,...,|s|$, let $\uppsi_{(k)}$ be a solution of \eqref{eq:transformed-k-separated} defined by \eqref{eq:transformed-transport-separated}.  Let $\omega_0:=\omega-m\upomega_+$ and $\hat{\mc{V}}_{(k)}=\mc{V}_{(k)}-\mc{V}_{(k)}(r_+)$.

\begin{itemize}

\item For any function $y(r^*):\mathbb{R}\to\mathbb{R}$ such that 
\begin{itemize}
\item $y\in C^0$ is piecewise $C^1$ and $y,y'\geq 0$,
\item there is some $R_l^*\gg M\max\{|m|,1\}$ such that $y$ is supported in the range $r^*\in[R_l^*,\infty)$ and there $-\Re\mc{V}_{(k)}'\geq 2|s|r^{-3}$ for all $k=0,...,|s|$,
\item $r^*y'\leq (1-b_0)y$ for $r^*\in[R_r^*,\infty)$, for some $R_r^*\geq R_l^*\gg 1$ and $b_0\in(0,1)$,
\end{itemize}
we have the estimate
\begin{align*}
&\int_{-\infty}^\infty \lp\{y'|\uppsi_{(k)}'|^2+\lp[\frac34 y'\omega^2-y'\Re\mc{V}_{(k)}-\frac34 y\Re\mc{V}_{(k)}'\lp(\frac12 +B|am\omega|\rp) \rp]|\uppsi_{(k)}|^2\rp\}dr^*\\
&\qquad -\lp\{\begin{array}{lr}
1\,, &|s|\neq 1=k\\
a^2\,, &\text{otherwise}
\end{array}\rp\}
\int_{-\infty}^\infty\frac{B(b_0)(1+m^2)}{R_r^*}\lp\lVert \frac{wy}{y'}\mathbbm{1}_{[R_r^*,\infty)}\rp\rVert_\infty\frac{w^2 y^2}{ y'}\mathbbm{1}_{[R_l^*,R_r^*]}\lp|\uppsi_{(k)}\rp|^2dr^*\\
&\quad\leq  2y(\infty)\omega^2\lp|\swei{A}{s}_{k,\mc{I}^{-\sign s}}\rp|^2+\int_{-\infty}^\infty 2y\Re\lp[\mathfrak{G}_{(k)}\overline{\uppsi_{(k)}}'\rp]dr^* \numberthis \label{eq:y-estimate-low-k}\\
&\quad\qquad + \lp\{\begin{array}{lr}
1\,, &|s|\neq 1=k\\
a^2\,, &\text{otherwise}
\end{array}\rp\}
kB(1+m^2)\sum_{j=0}^{k-1} \int_{-\infty}^\infty \lp\{ - \frac{1}{r^2}y\Re\mc{V}'_{(j+1)}|\uppsi_{(j+1)}|^2\rp. \\
&\quad\qquad\qquad \lp.+\lp[-y\Re\mc{V}'_{(j)}\lp(1+\omega^2 R_r^2\rp)+ B(b_0) R_r^* y'w \mathbbm{1}_{[R_l^*,R_r^*]}\rp]|\uppsi_{(j)}|^2\rp\}dr^*\\
&\quad\qquad - (|s|-k)B\int_{-\infty}^\infty \lp(\frac{1}{r^2}+\omega^2\rp)y\Re\mc{V}'_{(k+1)}|\uppsi_{(k+1)}|^2dr^*\,.
\end{align*}

\item For any function $\hat y(r^*):\mathbb{R}\to\mathbb{R}$ such that $\hat y\in C^0$ is piecewise $C^1$, $-\hat y,\hat y'\geq 0$  and $\hat{y}(\infty)=0$, if $\omega_0\neq 0$ we have
\begin{align*}
&\int_{-\infty}^\infty \lp\{\hat y'|\uppsi_{(k)}'|^2+\lp[\frac{3}{4}\hat y'\omega_0^2\lp[\frac12 -B\frac{-\hat{y}\Delta r^{-3}}{\hat{y}'}\lp(\frac{1}{r^2}+\frac{\omega^2}{\omega_0^2}+\frac{|\omega|}{|\omega_0|}\rp)\rp]-(\hat y\Re\hat{\mc{V}}_{(k)})'\rp]|\uppsi_{(k)}|^2\rp\}dr^*\\
&\quad\leq -2\hat y(-\infty)\omega^2\lp|\swei{A}{s}_{k,\mc{H}^{+\sign s}}\rp|^2+\int_{-\infty}^\infty 2\hat y\Re\lp[\mathfrak{G}_{(k)}\overline{\uppsi_{(k)}}'\rp]dr^*\\
&\quad\qquad +k\sum_{j=0}^{k-1} \frac{B}{r^2}\hat{y}'|\uppsi_{(j)}|^2\lp(\frac{\hat y \Delta r^{-3}}{\hat y'}\rp)^2\lp(\frac{1+\omega^2}{\omega_0}+\lp|\frac{\omega}{\omega_0}\rp|+1\rp)dr^* \numberthis\label{eq:hat-y-estimate-low-k-global}\\
&\quad\qquad +(|s|-k)\int_{-\infty}^\infty \frac{B}{r^4}\hat{y}'|\uppsi_{(k+1)}|^2\lp[\lp(\frac{\hat y \Delta r^{-3}}{\hat y'}\rp)^2\lp(\frac{1}{r^4}+\lp(\frac{\omega}{\omega_0}\rp)^2\rp)+\lp|\frac{\hat y \Delta r^{-3}}{\hat y'}\rp|\rp]dr^*\,.
\end{align*}
\item Assume $m=0$. For any function $\hat y(r^*):\mathbb{R}\to\mathbb{R}$ such that 
\begin{itemize}
\item $\hat y\in C^0$ is piecewise $C^1$ and $-\hat y,\hat y'\geq 0$,
\item there is some $R_l^*\gg 1$ such that $\hat y$ is supported  in the range $r^*\in(-\infty, -R_l^*]$, where $-\Re\hat{\mc{V}}_{(k)}'\geq  \frac{3|s|}{4}(r-M)\Delta $ for all $k=0,...,|s|$,
\end{itemize}
we have the estimate
\begin{align*}
&\int_{-\infty}^\infty \lp\{\hat y'|\uppsi_{(k)}'|^2+\lp[\frac34 \hat y'\omega_0^2-\hat y'\Re\hat{\mc{V}}_{(k)}- \frac{15}{16}\hat y\Re\hat{\mc{V}}_{(k)}' \rp]|\uppsi_{(k)}|^2\rp\}dr^*\\
&\quad\leq  2\hat y(-\infty)\omega^2\lp|\swei{A}{s}_{k,\mc{H}^{+\sign s}}\rp|^2+\int_{-\infty}^\infty 2\hat y\Re\lp[\mathfrak{G}_{(k)}\overline{\uppsi_{(k)}}'\rp]dr^* \numberthis \label{eq:hat-y-estimate-low-k}\\
&\quad\qquad - \lp\{\begin{array}{lr}
1\,, &|s|\neq 1=k\\
a^2\,, &\text{otherwise}
\end{array}\rp\}
kB\sum_{j=0}^{k-1} \int_{-\infty}^\infty \lp\{ \hat y\Re\hat{\mc{V}}'_{(j)}\lp(1+\omega^2(R_r^*)^2 \rp)|\uppsi_{(j)}|^2  \rp.\\
&\quad\qquad\qquad\lp. +\hat y\Re\hat{\mc{V}}'_{(j+1)}(r-r_+)^2|\uppsi_{(j+1)}|^2 - \hat{y}\Delta|\omega|\mathbbm{1}_{(-\infty, -R_r^*]}|\uppsi_{(k)}||\uppsi_{(j)}|\rp\}dr^* \\
&\quad\qquad - (|s|-k)B\int_{-\infty}^\infty \hat{y}\omega^2\Re\hat{\mc{V}}'_{(k+1)}|\uppsi_{(k+1)}|^2dr^*\,,
\end{align*}
where, if $\hat y$ additionally satisfies $r^*\hat y'/|\hat y|\leq 1-b_0$ for $r^*\in(-\infty,-R_r^*]$, for some $R_r^*\geq R_l^*\gg 1$ and $b_0\in(0,1)$, the term involving $|\uppsi_{(k)}||\uppsi_{(j)}|$ can be controlled by
\begin{align*}
&\int_{-\infty}^\infty \hat{y}w\omega|\uppsi_{(k)}||\uppsi_{(j)}|\mathbbm{1}_{(-\infty, -R_r^*]}dr^* \\
&\quad\leq \frac18 \int_{-\infty}^\infty \hat y'\omega^2|\uppsi_{(k)}|^2dr^*\\
&\qquad\quad+\frac{B(b_0)}{R_r^*}\lp\lVert\frac{w\hat{y}}{\hat{y}'}\mathbbm{1}_{(-\infty,-R_r^*]}\rp\rVert\int_{-\infty}^\infty \lp\{\frac{w^2\hat{y}^2}{\hat y'}|\uppsi_{(k)}|^2+\frac{\hat y'w}{w(-R_r^*)}  |\uppsi_{(j)}|^2\rp\}\mathbbm{1}_{[-R_r^*,-R_l^*]}dr^*\,.
\end{align*}
\end{itemize}
\end{lemma}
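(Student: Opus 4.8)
\textbf{Proof proposal for Lemma~\ref{lemma:y-estimate-low}.}

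The plan is to treat all three bullets uniformly: each one is an application of the $y$-current identity \eqref{eq:y-identity-k} from Lemma~\ref{lemma:h-y-identity-k}, integrated over $r^*\in(-\infty,\infty)$, with the boundary terms and the coupling/imaginary-potential errors handled separately. So first I would fix $k$, take the identity \eqref{eq:y-identity-k} with the prescribed $y$ (or $\hat y$), and integrate. The full-derivative term $\int (Q^y)'[\uppsi_{(k)}]\,dr^*$ collapses to the boundary values $Q^y[\uppsi_{(k)}](\infty)-Q^y[\uppsi_{(k)}](-\infty)$; using the definition of $Q^y$ in \eqref{eq:virial-current-definitions}, the asymptotics of $\uppsi_{(k)}$ from Lemma~\ref{lemma:aymptotic-analysis-radialODE-transformed} together with Definition~\ref{def:psihor-psiout} and Lemma~\ref{lemma:uppsi-general-asymptotics}, and the facts that $y$ is supported in $[R_l^*,\infty)$ with $y(-\infty)=0$ (resp.\ $\hat y$ supported in $(-\infty,-R_l^*]$ with $\hat y(\infty)=0$, or $\hat y(\infty)=0$ in the global case), the only surviving boundary term is $y(\infty)(\omega^2+|\uppsi_{(k)}'|^2/|\uppsi_{(k)}|^2)|\uppsi_{(k)}|^2$-type, which at $r^*=\infty$ reduces to $2y(\infty)\omega^2|A^{[s]}_{k,\mc I^{+\sign s}}|^2$ (resp.\ $-2\hat y(-\infty)\omega^2|A^{[s]}_{k,\mc H^{-\sign s}}|^2$), since $\uppsi_{(k)}'\to\pm i\omega\uppsi_{(k)}$ at the relevant end and the outgoing/ingoing normalized solution dominates.

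Next I would deal with the \emph{imaginary potential and derivative-of-$\uppsi_{(k)}$} terms, which is where the special structure of the transformed system is essential. The point is that in \eqref{eq:y-identity-k} the only appearances of $\Im\mc V_{(k)}$ and of $\uppsi_{(k+1)}$ are collected into the last two lines, which are \emph{exact}: they equal $2y(|s|-k)w'\{w|\uppsi_{(k+1)}|^2-(\omega-\tfrac{am}{r^2+a^2})\Re[\uppsi_{(k+1)}\overline{\uppsi_{(k)}}]\}$ plus the $amrw$ term. For the first bullet, one uses $w'>0$ and $w'=O(r^{-3})$ near infinity, Cauchy--Schwarz on the cross term $\Re[\uppsi_{(k+1)}\overline{\uppsi_{(k)}}]$ against a small multiple of $y'\omega^2|\uppsi_{(k)}|^2$ (absorbing the rest into $y|\uppsi_{(k+1)}|^2$ weighted by $-\Re\mc V_{(k+1)}'$, which is $\gtrsim r^{-3}$ in the support), and the $amrw$ term is handled by the hypothesis $R_l^*\gg M\max\{|m|,1\}$ which makes $|am r w/(r^2+a^2)| \lesssim |\omega_0| r^{-?}$ small relative to the good bulk. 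The coupling term $aw\sum_{j<k}(imc^\Phi+c^{\rm id})\uppsi_{(j)}\overline{\uppsi_{(k)}}'$ enters through $2y\Re[\mathfrak G_{(k)}^{\rm coupling}\overline{\uppsi_{(k)}}']$ after moving it to the right-hand side of \eqref{eq:transformed-k-separated}; one integrates by parts to trade $\overline{\uppsi_{(k)}}'$ for a boundary term plus $\overline{\uppsi_{(k)}}$, then applies Cauchy--Schwarz, producing the $|\uppsi_{(j)}|^2$ and $|\uppsi_{(j+1)}|^2$ bulk terms with the stated weights (the $a^2/M^2$ vs.\ $1$ dichotomy for $|s|=1=k$ comes directly from the replacement rule in Proposition~\ref{prop:transformed-system}). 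The term on the left-hand side of \eqref{eq:y-estimate-low-k} involving $\|\tfrac{wy}{y'}\|_\infty \tfrac{w^2y^2}{y'}$ on $[R_l^*,R_r^*]$ arises precisely from controlling these cross terms on the region where $y'$ is comparatively small, using the hypothesis $r^*y'\le(1-b_0)y$ on $[R_r^*,\infty)$ to bound $\|wy/y'\|_\infty$ there.

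For the second bullet (the global $\hat y$), the same steps apply but now $\hat y\le 0$, $\hat y'\ge 0$, so I would instead test against a small multiple of $\hat y'\omega_0^2|\uppsi_{(k)}|^2$; the coefficient $\tfrac12 - B\tfrac{-\hat y\Delta r^{-3}}{\hat y'}(\cdots)$ records exactly the loss incurred in Cauchy--Schwarz on the cross terms, with $-\hat y\Delta r^{-3}/\hat y'$ the natural ratio (since $\hat{\mc V}_{(k)}'=O(\Delta r^{-3})$ up to frequency weights and the $\uppsi_{(k+1)}$ weight is $w'\sim\Delta r^{-?}$). For the third bullet ($m=0$, $\hat y$ near $r_+$), superradiance is vacuous since $\omega_0=\omega$, the cross terms with $\uppsi_{(j)}$ are real (no $im$), and one uses $-\Re\hat{\mc V}_{(k)}'\ge \tfrac{3|s|}{4}(r-M)\Delta$ from Lemma~\ref{lemma:properties-potential-low-omega}\ref{it:properties-potential-low-omega-derivative-minus-infinity} plus the factor-$\tfrac{15}{16}$ room to absorb; the final sub-estimate for $\int\hat y w\omega|\uppsi_{(k)}||\uppsi_{(j)}|$ is again Cauchy--Schwarz against $\tfrac18\hat y'\omega^2|\uppsi_{(k)}|^2$ with the remainder controlled using $r^*\hat y'/|\hat y|\le 1-b_0$ exactly as in the first bullet. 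I expect the main obstacle to be bookkeeping the precise constants and weights so that all cross-term errors land with the claimed $r$-weights and frequency powers—in particular ensuring the $amrw$ term in \eqref{eq:y-identity-k} and the first-derivative coupling $\overline{\uppsi_{(k)}}'$ never produce an uncontrolled $|\uppsi_{(k)}'|^2$ contribution on the left, which forces the integration-by-parts step and the careful choice of which good bulk term absorbs each piece.
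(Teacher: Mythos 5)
Your overall strategy matches the paper's: start from the $y$-current identity \eqref{eq:y-identity-k} of Lemma~\ref{lemma:h-y-identity-k}, integrate, read off the surviving boundary term from the asymptotics of $\uppsi_{(k)}$ (the $\mc{I}^{-\sign s}$/$\mc{H}^{+\sign s}$ components decay faster, so only the stated $A^{[s]}_{k,\mc{I}^{+\sign s}}$ or $A^{[s]}_{k,\mc{H}^{-\sign s}}$ term survives), exploit that the last two lines of \eqref{eq:y-identity-k} repackage the bad $\Im\mc V_{(k)}$ and $\uppsi_{(k)}'$ dependence as cross terms with $\uppsi_{(k+1)}$, and handle the genuine coupling term $2ayw\sum_{j<k}(\cdots)\uppsi_{(j)}\overline{\uppsi_{(k)}}'$ by integration by parts followed by the transport substitution $\uppsi_{(j)}'\mapsto -\sign s\,[w\uppsi_{(j+1)}+i(\omega-am/(r^2+a^2))\uppsi_{(j)}]$ and Cauchy--Schwarz. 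All of that is correct and is exactly what the paper does.

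The gap is in the claim that Cauchy--Schwarz then produces the $|\uppsi_{(j)}|^2$ errors ``with the stated weights.'' It does not: Cauchy--Schwarz on the coupling and cross terms supported on $[R_l^*,\infty)$ gives errors of the schematic form $wy(1+m^2)|\uppsi_{(j)}|^2$, which behaves like $y\,r^{-2}$ as $r\to\infty$, whereas the only good bulk term available for $\uppsi_{(j)}$ in \eqref{eq:y-estimate-low-k} is $-y\Re\mc V'_{(j)}|\uppsi_{(j)}|^2\sim y\,r^{-3}$; so the error is strictly worse at large $r$ and cannot be absorbed directly. The paper bridges this by an additional, essential application of the transport estimate \eqref{eq:basic-estimate-1} from Lemma~\ref{lemma:basic-estimate-1} with the carefully chosen weight $c=-\tfrac{y}{r^*}\mathbbm{1}_{[R_r^*,\infty)}-\tfrac{y}{R_r^*}\mathbbm{1}_{[R_l^*,R_r^*]}$, whose derivative is bounded below by $b_0\,y/(r^*)^2$ on $[R_r^*,\infty)$ precisely because of the hypothesis $r^*y'\le(1-b_0)y$; this converts $wy|\uppsi_{(j)}|^2\mathbbm{1}_{[R_r^*,\infty)}$ into $\uppsi_{(j+1)}$ terms plus contributions supported in $[R_l^*,R_r^*]$, and one then iterates until $j+1=k$, which is exactly where the left-hand side term $\tfrac{B(b_0)(1+m^2)}{R_r^*}\bigl\lVert\tfrac{wy}{y'}\mathbbm{1}_{[R_r^*,\infty)}\bigr\rVert_\infty\tfrac{w^2y^2}{y'}\mathbbm{1}_{[R_l^*,R_r^*]}|\uppsi_{(k)}|^2$ and the right-hand side term $B(b_0)R_r^*y'w\mathbbm{1}_{[R_l^*,R_r^*]}|\uppsi_{(j)}|^2$ come from. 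Your final sentence gestures in this direction, but reads as if the $r^*y'\le(1-b_0)y$ condition is merely used to bound $\lVert wy/y'\rVert_\infty$, rather than to drive this further transport-and-iterate step, which is what actually closes the argument.
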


\begin{proof}
Our starting point is the application of a $y$ current to \eqref{eq:transformed-k-separated} for $0\leq k \leq |s|$, see Lemma~\ref{lemma:h-y-identity-k}. If $k\neq 0$, this procedure generates coupling terms; using the boundary conditions for $\uppsi_{(k)}$, we can integrate by parts 
\begin{align*}
&\int_{-\infty}^\infty 2ayw\Re\lp[\overline{\uppsi_{(k)}}'(c_{s,k,j}^{\rm id}+imc_{s,k,i}^\Phi)\uppsi_{(j)}\rp]dr^* \\
&\quad = -\int_{-\infty}^\infty \Re\lp[\overline{\uppsi_{(k)}}\lp(2ayw(c_{s,k,j}^{\rm id}+imc_{s,k,j}^\Phi)\rp)'\uppsi_{(j)}\rp]dr^*\\
&\quad\qquad-\sign s\int_{-\infty}^\infty 2ayw\Re\lp[\overline{\uppsi_{(k)}}(c_{s,k,j}^{\rm id}+imc_{s,k,j}^\Phi)\lp(w\uppsi_{(j+1)}+i\lp(\omega-\frac{am}{r^2+a^2}\rp)\uppsi_{(j)}\rp)\rp]dr^*,.
\end{align*}
From Lemma~\ref{lemma:h-y-identity-k}, we now find that the $y$ current gives the identity
\begin{align*}
&\int_{-\infty}^\infty \lp\{y'|\uppsi_{(k)}'|^2+\lp[y'\omega^2-(y\Re\mc{V}_{(k)})'\rp]|\uppsi_{(k)}|^2\rp\}dr^*\\
&\quad= 2y(\infty)\omega^2\lp|\swei{A}{s}_{k,\mc{I}^{-\sign s}}\rp|^2-2y(-\infty)\omega_0^2\lp|\swei{A}{s}_{k,\mc{H}^{+\sign s}}\rp|^2+\int_{-\infty}^\infty 2y\Re\lp[\mathfrak{G}_{(k)}\overline{\uppsi_{(k)}}'\rp]dr^*\\
&\quad\qquad -\sum_{j=0}^{k-1} \int_{-\infty}^\infty \Re\lp[\overline{\uppsi_{(k)}}\lp(2ayw(c_{s,k,j}^{\rm id}+imc_{s,k,j}^\Phi)\rp)'\uppsi_{(j)}\rp]dr^* \numberthis\label{eq:y-low-intermediate}\\
&\quad\qquad -\sign s\sum_{j=0}^{k-1}\int_{-\infty}^\infty 2ayw\Re\lp[\overline{\uppsi_{(k)}}(c_{s,k,j}^{\rm id}+imc_{s,k,j}^\Phi)\lp(w\uppsi_{(j+1)}+i\lp(\omega-\frac{am}{r^2+a^2}\rp)\uppsi_{(j)}\rp)\rp]dr^*\\
&\quad\qquad + \int_{-\infty}^\infty 2y(|s|-k)w'\lp\{w|\uppsi_{(k+1)}|^2-\lp(\omega-\frac{am}{r^2+a^2}\rp)\Re\lp[\uppsi_{(k+1)}\overline{\uppsi_{(k)}}\rp]\rp\}dr^* \\
&\quad\qquad+ \int_{-\infty}^\infty2y(|s|-k)\frac{4amrw}{(r^2+a^2)}\lp\{ w\Im \lp[\uppsi_{(k+1)}\overline{\uppsi}_{(k)}\rp]+ \lp(\omega-\frac{am}{r^2+a^2}\rp)|\uppsi_{(k)}|^2\rp\}dr^*\,,
\end{align*}
or, alternatively, writing $\Re\hat{\mc{V}}_{(k)}=\Re\mc{V}_{(k)}-\Re\mc{V}_{(k)}(r_+)$ and $\omega_0:=(\omega-m\upomega_+)$ 
\begin{align*}
&\int_{-\infty}^\infty \lp\{ y'|\uppsi_{(k)}'|^2+\lp[ y'\omega_0^2-(y\Re\hat{\mc{V}}_{(k)})'\rp]|\uppsi_{(k)}|^2\rp\}dr^*\\
&\quad= 2y(\infty)\omega^2\lp|\swei{A}{s}_{k,\mc{I}^{-\sign s}}\rp|^2-2y(-\infty)\omega_0^2\lp|\swei{A}{s}_{k,\mc{H}^{+\sign s}}\rp|^2+\int_{-\infty}^\infty 2y\Re\lp[\mathfrak{G}_{(k)}\overline{\uppsi_{(k)}}'\rp]dr^*\\
&\quad\qquad -\sum_{j=0}^{k-1} \int_{-\infty}^\infty 2a\lp\{\Re\lp[\overline{\uppsi_{(k)}}\lp(yw(c_{s,k,j}^{\rm id}+imc_{s,k,j}^\Phi)\rp)'\uppsi_{(j)}\rp]+yw^2\Re\lp[\overline{\uppsi_{(k)}}(c_{s,k,j}^{\rm id}+imc_{s,k,j}^\Phi)\uppsi_{(j+1)}\rp]\rp\}dr^* \\
&\quad\qquad +\sign s\sum_{j=0}^{k-1}\int_{-\infty}^\infty 2ayw\Im\lp[\overline{\uppsi_{(k)}}(c_{s,k,j}^{\rm id}+imc_{s,k,j}^\Phi)\lp(\omega_0+\frac{m\upomega_+(r-r_+)(r+r_+)}{r^2+a^2}\rp)\uppsi_{(j)}\rp]dr^*\numberthis\label{eq:y-low-intermediate-2}\\
&\quad\qquad + \int_{-\infty}^\infty 2y(|s|-k)w'\lp\{w|\uppsi_{(k+1)}|^2-\lp(\omega_0+\frac{m\upomega_+(r-r_+)(r+r_+)}{r^2+a^2}\rp)\Re\lp[\uppsi_{(k+1)}\overline{\uppsi_{(k)}}\rp]\rp\}dr^* \\
&\quad\qquad- \int_{-\infty}^\infty 8y(|s|-k)\lp(\omega_0+\frac{m\upomega_+(r-r_+)(r+r_+)}{r^2+a^2}-\omega\rp)rw^2\Im \lp[\uppsi_{(k+1)}\overline{\uppsi}_{(k)}\rp]dr^*\\
&\quad\qquad+ \int_{-\infty}^\infty 8y(|s|-k)\lp[-\lp(\omega-\frac{am}{r^2+a^2}\rp)^2+\omega\lp(\omega_0+\frac{m\upomega_+(r-r_+)(r+r_+)}{r^2+a^2}\rp)\rp]rw|\uppsi_{(k)}|^2dr^*\,,
\end{align*}
where $ac_{s,k,j}^{\mr{id}}$ must be replaced by $c_{s,k,j}^{\mr{id}}$ if $|s|\neq 1=k$ and $i=0$ and where the second and third lines on the right hand side of both identities is absent for any $s$ as long as $k=0$ and the following lines are absent if $k=|s|$. In particular, all but the first line on the right hand side is absent if $s=0$, hence the estimates are trivial in that case. From now on, assume $s\neq 0$.

\medskip
\noindent \textit{Current at large $r$.}
If $k<|s|\neq 0$, terms involving $\uppsi_{(k+1)}$ appear in the last two lines of \eqref{eq:y-low-intermediate}. As long as $y$ is supported at sufficiently large $r$ that $am\ll r$ in $y$'s support, they can be controlled by
\begin{align*}
&-\frac{1}{16}\int_{-\infty}^\infty y\Re\mc{V}'_{(k)}|\uppsi_{(k)}|^2\lp[\frac12+48\frac{7|s|/2wr^{-1}}{\Re\mc{V}'_{(k)}}\lp(|am\omega|+\frac{a^2m^2}{r^2}\rp)\rp]dr^*\\
&\qquad-B(M,|s|)\int_{-\infty}^\infty \lp(\omega^2+\frac{1}{r^2}\rp)y\Re\mc{V}'_{(k+1)}|\uppsi_{(k+1)}|^2\lp[\frac{wr^{-1}}{\Re\mc{V}_{(k+1)}'}+\frac{(wr^{-1})^2}{\Re\mc{V}_{(k+1)}'\Re\mc{V}_{(k)}'}\lp(1+\frac{a^2m^2}{r^2}\rp)\rp]dr^*\\
&\quad\leq \frac{1}{16}\int_{-\infty}^\infty \lp\{-(1+B|am\omega|)y\Re\mc{V}'_{(k)}|\uppsi_{(k)}|^2-B\lp(\omega^2+\frac{1}{r^2}\rp)y\Re\mc{V}'_{(k+1)}|\uppsi_{(k+1)}|^2\rp\}\,.
\end{align*}

For the coupling errors we again use the assumption on the support of $y$ to show that the second and third lines of \eqref{eq:y-low-intermediate} are controlled by
\begin{align*}
& \{1,|a|\} B(M,|s|)(1+|m|)\int_{-\infty}^\infty \lp\{\frac{wy}{r}\lp(|\uppsi_{(k)}||\uppsi_{(j)}|+\frac{1}{r}|\uppsi_{(k)}||\uppsi_{(j+1)}|\rp)+ wy\omega|\uppsi_{(k)}||\uppsi_{(j)}|\rp\}dr^*\\
&\quad \leq \frac18 \int_{-\infty}^\infty\lp[y'\omega^2- y\Re\mc{V}'_{(k)}\rp]|\uppsi_{(k)}|^2dr^* + \{1,a^2\}B(M,|s|)(1+m^2)\int_{-\infty}^\infty \frac{1}{r^2}(-y)\Re\mc{V}'_{(j+1)}|\uppsi_{(j+1)}|^2dr^*\\
&\quad \qquad + \{1,a^2\}B(M,|s|)(1+m^2)\int_{-\infty}^\infty \lp\{ -y\Re\mc{V}'_{(j)}\lp(1+\omega^2 r^2 \mathbbm{1}_{[R^*_l,R^*_r]}\rp)+\frac{wy}{y'}wy\mathbbm{1}_{[R_r^*,\infty)}\rp\}|\uppsi_{(j)}|^2dr^*\,,
\end{align*}
where we note that the second term on the right hand side is absent if $j+1=k$: assuming $y$ is supported at sufficiently large $r$, the term can be absorbed into the first if $j+1=k$.  The alternatives in $\{1,a^2\}$ and $\{1,|a|\}$ given are for $k=1\neq |s|$ and any other pair $(s,k)$, respectively.

Let us focus on the error terms with integrals of $wy|\uppsi_{(j)}|^2$ in $[R_r^*,\infty)$. Choosing
\begin{align*}
c&=-\frac{y}{r^*}\mathbbm{1}_{[R_r^*,\infty)} -\frac{y}{R_r^*}\mathbbm{1}_{[R_l^*,R_r^*]}\,,\\
c'\mathbbm{1}_{[R_r^*,\infty)} &=\frac{y}{(r^*)^2}\lp(1-\frac{r^*y'}{y}\rp)\mathbbm{1}_{[R_r^*,\infty)} \geq b_0\frac{y}{(r^*)^2}\mathbbm{1}_{[R_r^*,\infty)}\,,
\end{align*}
we can apply \eqref{eq:basic-estimate-1} from Lemma~\ref{lemma:basic-estimate-1} to obtain
\begin{align*}
\int_{-\infty}^\infty wy\mathbbm{1}_{[R_r^*,\infty)}|\uppsi_{(j)}|^2dr^* &\leq B(b_0) \int_{-\infty}^\infty \lp[c'+\frac{y'}{R_r^*}\mathbbm{1}_{[R_l^*,R_r^*]}\rp]|\uppsi_{(j)}|^2dr^* \\
&\leq B(b_0) \int_{-\infty}^\infty \lp[-\frac{1}{r}y\Re\mc{V}_{(j+1)}'\mathbbm{1}_{[R_r^*,\infty)}\rp]|\uppsi_{(j+1)}|^2dr^*\\
&\qquad + B(b_0)\int_{-\infty}^\infty  \lp[\frac{w^2y^2}{y'R_r^*}|\uppsi_{(j+1)}|^2+R_r^* y'w|\uppsi_{(j)}|^2\rp] \mathbbm{1}_{[R_l^*,R_r^*]}dr^*\,,
\end{align*}
and repeat until $j+1=k$. In that estimate, if the compact support of $y$ is at sufficiently large $r$, then the first term can be absorbed into the left hand side of \eqref{eq:y-low-intermediate}. 

Putting all of the previous together, we obtain \eqref{eq:y-estimate-low-k}.

\medskip
\noindent \textit{Current near $r=r_+$.} If $k<|s|$, terms involving $\uppsi_{(k+1)}$ appear in the last two lines of \eqref{eq:y-low-intermediate}. Proceeding as in the previous step, we see that, as long as $R^*_l$ is sufficiently large depending on $\varepsilon_{\rm width}$ and $\omega_{\rm high}$, those two lines can be controlled by 
\begin{align*}
&-\frac{1}{16} \int_{-\infty}^\infty \hat y\Re\hat{\mc{V}}'_{(k)}|\uppsi_{(k)}|^2dr^* - (|s|-k)B(\varepsilon_{\rm width},\omega_{\rm high})\int_{-\infty}^\infty \omega^2\hat y\Re\hat{\mc{V}}'_{(k+1)}|\uppsi_{(k+1)}|^2dr^*\,.
\end{align*}

For the coupling terms, we find that each element in the sums in the second and third lines of \eqref{eq:y-low-intermediate} can be controlled by 
\begin{align*}
&\{1,|a|\} B\int_{-\infty}^\infty (-\hat{y})\lp\{w(r-M) \lp|\uppsi_{(k)}\rp|\lp(\lp|\uppsi_{(j)}\rp|+(r-r_+)\lp|\uppsi_{(j+1)}\rp|\rp)\rp\}dr^*\\
&\qquad + \{1,|a|\} B(\varepsilon_{\rm width},\omega_{\rm high})\int_{-\infty}^\infty \lp\{(-\hat{y})w\omega\lp|\uppsi_{(k)}\rp|\lp|\uppsi_{(j)}\rp|\lp(\mathbbm{1}_{[-R_r^*,-R_l^*]} +\mathbbm{1}_{(-\infty,-R_r^*]}\rp)\rp\}dr^*\\
&\leq \frac18 \int_{-\infty}^\infty\lp[\hat y'\omega^2- \hat y\Re\hat{\mc{V}}'_{(k)}\rp]|\uppsi_{(k)}|^2dr^* + \{1,a^2\}B\int_{-\infty}^\infty (r-r_+)^2(-\hat y)\Re\hat{\mc{V}}'_{(j+1)}|\uppsi_{(j+1)}|^2dr^*\\
&\quad \qquad + \{1,a^2\}B\int_{-\infty}^\infty \lp\{ -\hat y\Re\hat{\mc{V}}'_{(j)}\lp(1+\frac{\omega^2}{(r-M)^2}  \mathbbm{1}_{[-R^*_r,R^*_l]}\rp)+\frac{w(-\hat{y})}{\hat y'}w(-\hat{y})\mathbbm{1}_{(-\infty,-R_r^*]}\rp\}|\uppsi_{(j)}|^2dr^*\,,
\end{align*}
where the alternatives in $\{1,a^2\}$ and $\{1,|a|\}$ given are for $k=1\neq |s|$ and any other pair $(s,k)$, respectively.
We can now deal with the term involving $|\uppsi_{(j)}|^2$ in an analogous manner as for the the previous section of the proof, we obtaining finally \eqref{eq:hat-y-estimate-low-k}.

\medskip
\noindent \textit{Global current for $\omega_0\neq 0$.} We note the equalities
\begin{align*}
\frac{am}{r^2+a^2}=-(\omega-\omega_0)\frac{2Mr_+}{r^2+a^2}\,, \quad \omega-\frac{am}{r^2+a^2}= \omega_0\frac{2Mr_+}{r^2+a^2}+\frac{(r-r_+)(r+r_+)}{r^2+a^2}\omega\,.
\end{align*}

Suppose $\omega_0\neq 0$. We can control the terms in the last three lines of \eqref{eq:y-low-intermediate-2}  by
\begin{align*}
&\int_{-\infty}^\infty \frac{1}{16}\hat{y}'\omega_0^2|\uppsi_{(k)}|^2\lp[1+B\frac{-\hat{y}\Delta r^{-3}}{\hat{y}'}\lp(\frac{1}{r^2}+\frac{\omega^2}{\omega_0^2}+\frac{|\omega|}{|\omega_0|}\rp)\rp]dr^*\\
&\qquad +(|s|-k)\int_{-\infty}^\infty \frac{B}{r^4}\hat{y}'|\uppsi_{(k+1)}|^2\lp[\lp(\frac{\hat y \Delta r^{-3}}{\hat y'}\rp)^2\lp(\frac{1}{r^4}+\lp(\frac{\omega}{\omega_0}\rp)^2\rp)+\lp|\frac{\hat y \Delta r^{-3}}{\hat y'}\rp|\rp]dr^*\,,
\end{align*}
and the coupling terms in the second and third lines of \eqref{eq:y-low-intermediate-2} by
\begin{align*}
\int_{-\infty}^\infty \frac{1}{8}\hat{y}'\omega_0^2|\uppsi_{(k)}|^2dr^*+\int_{-\infty}^\infty \frac{B}{r^2}\hat{y}'|\uppsi_{(j)}|^2\lp(\frac{\hat y \Delta r^{-3}}{\hat y'}\rp)^2\lp(\frac{1+\omega^2}{\omega_0}+\lp|\frac{\omega}{\omega_0}\rp|+1\rp)dr^*\,.
\end{align*}

It is now easy to obtain \eqref{eq:hat-y-estimate-low-k-global}.
\end{proof}

\paragraph{Two model $y$ currents}

We begin with a $y$ current supported at large $r^*$. Let $R_3^*\gg 1$, $R_2^*\in[R_3^*/2,R_3^*-1]$, $R_4^*> R_3^*$, $p\in(0,1)$ be sufficiently small, and $h$ be a $C^1$ function such that $h(R_3^*)>0$ and $h'(R_3^*)\geq 0$. For some $\tilde{p}\in(0,1/8)$, some $\delta\in(0,1]$ and $\epsilon\in(0,1)$, define 
\begin{align*}
y(r^*)&= \tilde{p}\lp(\int_{R_2^*}^{r^*}h(x^*)dx^*\rp)\mathbbm{1}_{[R_2^*,R_3^*]}+y(R_3^*)\lp[1+\frac{64\epsilon}{R_3^*}\lp(\frac{1}{r^*\Re\mc{V}_{(k)}}-\frac{1}{R_3^*\Re\mc{V}_{(k)}|_{r^*=R_3^*}}\rp)\rp]\mathbbm{1}_{[R_3^*, R_4^*]} \\
&\qquad+y(R_4^*)\lp[1+\frac{1}{(R_4^*)^{\delta}}-\frac{1}{(r^*)^{\delta}}\rp]\mathbbm{1}_{[R_4^*, \infty)}\,, \numberthis \label{eq:low-frequencies-y}
\end{align*}
which, as long as $\lp(-r^*\Re\mc{V}_{(k)}\rp)'\geq 0$ when $r^*\geq R_3^*$ for sufficiently large $R_3^*$, satisfies
\begin{align*}
y'(r^*)&=\tilde{p}h\mathbbm{1}_{[R_2^*,R_3^*]}+\frac{64\epsilon y(R_3^*)}{R_3^*}\frac{1}{(r^*\Re\mc{V}_{(k)})^2}\lp(-r^*\Re\mc{V}_{(k)}\rp)'\mathbbm{1}_{[R_3^*, R_4^*]} +\frac{\delta y(R_4^*)}{(r^*)^{1+\delta}}\mathbbm{1}_{[R_4^*, \infty)}\geq 0\,.\numberthis \label{eq:low-frequencies-y'}
\end{align*}

Assume $\mc{V}_{k}\leq b r^{-2}$ as $r\to \infty$; then for $R_3^*$ sufficiently large we can choose $\epsilon$ sufficiently small independently of $R_3^*$ so that $64\epsilon\leq (R_3^*)^2\mc{V}_{(k)}(R_3^*)/4$. Assuming $\Re\mc{V}_{(k)}'<0$ in the support of $y$, we obtain 
\begin{align*}
-(y\Re\mc{V}_{(k)})'\mathbbm{1}_{[R_3^*,\infty)}&\geq\lp[\frac{64 \epsilon y(R_3^*)}{R_3^*(r^*)^2}-y(R_3^*)\Re\mc{V}_{(k)}'\lp(1-\frac{64\epsilon}{R_3^2\mc{V}_{(k)}(R_3^*)|}\rp)\rp]\mathbbm{1}_{[R_3^*,R_4^*]}\\
&\qquad-y\Re\mc{V}_{(k)}'\mathbbm{1}_{[R_4^*,\infty)}\lp(1-\frac{B\delta}{|R^*_4|^\delta}\rp)\\
&\geq \frac{64\epsilon y(R_3^*)/R_3^*}{(r^*)^2}\mathbbm{1}_{[R_3^*, R_4^*]}-\frac34y(R_3^*)\Re\mc{V}_{(k)}'\mathbbm{1}_{[R_3^*, R_4^*]}-\frac34y\Re\mc{V}_{(k)}'\mathbbm{1}_{[R_4^*,\infty)}\\
&\geq b\lp[R_3^*h'(R_3^*)+h(R_3^*)\rp]\frac{\tilde{p}\epsilon}{(r^*)^2}\mathbbm{1}_{[R_3^*, R_4^*]}-\frac34y\Re\mc{V}_{(k)}'\mathbbm{1}_{[R_3^*, \infty)}\,. \numberthis \label{eq:low-frequencies-yV}
\end{align*}
Moreover, if $0<\lp(-r^*\Re\mc{V}_{(k)}\rp)'\leq  2\Re\mc{V}_{(k)}$ for $r^*\in [R_3^*,R_4^*]$ and $\epsilon$ is sufficiently small, we have
\begin{align*}
\lp(1-\frac{r^*y'}{y}\rp)\mathbbm{1}_{[R_3^*,\infty)} &= 1-\frac{64\epsilon/R_3^* }{\lp(1-\frac{64\epsilon}{(R_3^*)^2\Re\mc{V}_{(k)}(R_3^*)}\rp)r^*\Re\mc{V}_{(k)}+64\epsilon/R_3^*} \frac{\lp(-r^*\Re\mc{V}_{(k)}\rp)'}{\Re\mc{V}_{(k)}}\mathbbm{1}_{[R_3^*,R_4^*]}\\
&\qquad- \frac{\delta}{(R_4^*)^\delta+1}\mathbbm{1}_{[R_4^*,\infty)}\\
&\geq 1-\frac{128\epsilon }{3/4+64\epsilon} \mathbbm{1}_{[R_3^*,R_4^*]}- \frac{\delta}{(R_4^*)^\delta+1}\mathbbm{1}_{[R_4^*,\infty)}\geq b\lp(\epsilon,\delta\rp)\,.
\end{align*}

If $h$ satisfies the estimate 
\begin{align*}
h'' +(|s|-k)\frac{w'}{w}h'\leq \frac{Bp \lp[R_3^*h'(R_3^*)+h(R_3^*)\rp]}{(r^*)^2}\mathbbm{1}_{[R_3^*,R_4^*]}\,,
\end{align*}
then as long as $p\ll \tilde{p}$,
\begin{align*}
&\frac34 h(\mc{V}_{(k)} -\omega^2) +\frac34 y'\omega^2-\frac34(y\Re\mc{V}_{(k)})'-\frac14y'\Re\mc{V}_{(k)} -\lp(h_+'' +(|s|-k)\frac{w'}{w}h_+'\rp)\mathbbm{1}_{[R_3^*,2e^{p^{-1}}R_3^*]}\\
%&\quad \geq \frac{1}{2}h_+\lp(\mc{V}_{(k)} -\omega^2\rp)+\lp[R_3^*h'(R_3^*)+h(R_3^*)\rp]\frac{b\tilde{p}\epsilon-Bp}{(r^*)^2}\mathbbm{1}_{[R_3^*,R_4^*]}-\frac{1}{2}y\Re\mc{V}_{(k)}'\mathbbm{1}_{[R_3^*,\infty)}\\
&\quad \geq \frac{1}{2}h\lp(\mc{V}_{(k)} -\omega^2\rp)+\frac34 y'\omega^2+b\lp[R_3^*h'(R_3^*)+h(R_3^*)\rp]\frac{\tilde{p}\epsilon}{(r^*)^2}\mathbbm{1}_{[R_3^*,R_4^*]}-\frac12 y\Re\mc{V}_{(k)}'\mathbbm{1}_{[R_3^*,\infty)}\,. \numberthis \label{eq:low-frequencies-h+y}
\end{align*}

Similarly, as $r^*\to -\infty$, we construct a current $\hat{y}$ as follows. Let $r_+<R_0<R_2<R_5<\infty$ with $R_2$ very close to $r_+$ and $-R_5^*\in[-R_2^*/2, -R_3^*+1]$. Let $p\in(0,1)$ be sufficiently small, and $h$ be a $C^1$ function such that $h(R_2^*)>0$ and $h'(R_2^*)\leq 0$. For some $\tilde{p}\in(0,1/8)$, some $\delta\in(0,1]$ and $\epsilon\in(0,1)$, define 
\begin{align*}
\hat y(r^*)&= -\tilde{p}\lp(\int_{R_5^*}^{r^*}h(x^*)dx^*\rp)\mathbbm{1}_{[R_2^*,R_5^*]}+\hat{y}(R_2^*)\lp[1-\frac{64\epsilon}{(-R_2^*)}\lp(\frac{1}{r^*\Re\hat{\mc{V}}_{(k)}}-\frac{1}{R_2^*\Re\hat{\mc{V}}_{(k)}|_{r^*=R_2^*}}\rp)\rp]\mathbbm{1}_{[R_0^*, R_2^*]} \\
&\qquad+\hat{y}(R_0^*)\lp[1+\frac{1}{(-R_0^*)^{\delta}}-\frac{1}{(-r^*)^{1/2}}\rp]\mathbbm{1}_{(-\infty,R_0^*]}\,, \numberthis \label{eq:low-frequencies-hat-y}
\end{align*}
which, as long as $\lp(-r^*\Re\hat{\mc{V}}_{(k)}\rp)'>0$ when $r^*\leq R_2^*$ for sufficiently negative $R_2^*$, satisfies
\begin{align*}
\hat y'(r^*)&=\tilde{p}h\mathbbm{1}_{[R_2^*,R_5^*]}+\frac{64\epsilon \lp(-\hat{y}(R_2^*)\rp)}{(-R_2^*)}\frac{1}{(r^*\Re\hat{\mc{V}}_{(k)})^2}\lp(-r^*\Re\hat{\mc{V}}_{(k)}\rp)'\mathbbm{1}_{[R_0^*, R_2^*]} -\frac{ \hat{y}(R_4^*)}{(-r^*)^{3/2}}\mathbbm{1}_{(-\infty, R_0^*]}\geq 0\,,
\end{align*}
as long as $\lp(-r^*\Re\hat{\mc{V}}_{(k)}\rp)'>0$ for $r^*\in[R_0^*,R_2^*]$ for sufficiently negative $R_2^*$.

Suppose that, for $R_2^*$ sufficiently negative, one can choose $\epsilon$ sufficiently small but independent of $R_2^*$ such that $64\epsilon \leq (R_2^*)^2\mc{V}_{(k)}(R_2^*)/4$. Then, if $\Re\hat{\mc{V}}_{(k)}'>0$ in the support of $\hat{y}$
\begin{align*}
-(\hat y\Re\hat{\mc{V}}_{(k)})'\mathbbm{1}_{[R_3^*,\infty)}&\geq -\hat y(R_2^*)\lp[\frac{64\epsilon/(-R_2^*)}{(r^*)^2}+\Re\hat{\mc{V}}_{(k)}'\lp(1-\frac{64\epsilon}{(R_2^*)^2\hat{\mc{V}}_{(k)}(R_2^*)}\rp)\rp]\mathbbm{1}_{[R_0^*,R_2^*]}\\
&\qquad-y\Re\mc{V}_{(k)}'\mathbbm{1}_{(-\infty,R_0^*]}\lp(1-\frac{B\delta}{|R^*_0|^\delta}\rp)\\
&\geq \frac{64\epsilon |\hat{y}(R_2^*)|/R_2^*}{(r^*)^2}\mathbbm{1}_{[R_0^*, R_2^*]}-\frac34\hat y(R_3^*)\Re\hat{\mc{V}}_{(k)}'\mathbbm{1}_{[R_0^*, R_2^*]}-\frac34\hat y\Re\hat{\mc{V}}_{(k)}'\mathbbm{1}_{(-\infty,R_2^*]}\\
&\geq b\lp[R_2^*h'(R_2^*)+h(R_2^*)\rp]\frac{\tilde{p}\epsilon}{(r^*)^2}\mathbbm{1}_{[R_0^*, R_2^*]}-\frac34\hat y\Re\hat{\mc{V}}_{(k)}'\mathbbm{1}_{(-\infty,R_2^*]}\,. \numberthis \label{eq:low-frequencies-yV-hat}
\end{align*}

If $h$ satisfies the estimate 
\begin{align*}
h'' +(|s|-k)\frac{w'}{w}h'\leq \frac{Bp \lp[R_2^*h'(R_2^*)+h(R_2^*)\rp]}{(r^*)^2}\mathbbm{1}_{[R_0^*,R_2^*]}\,,
\end{align*}
then as long as $p\ll \tilde{p}$,
\begin{align*}
&\frac34 h(\mc{V}_{(k)} -\omega^2) +\frac34 \hat y'\omega^2-\frac34(\hat y\Re\hat{\mc{V}}_{(k)})'-\frac14\hat y'\Re\hat{\mc{V}}_{(k)} -\lp(h'' +(|s|-k)\frac{w'}{w}h'\rp)\mathbbm{1}_{[R_0^*,R_2^*]}\\
&\quad \geq \frac{1}{2}h\lp(\mc{V}_{(k)} -\omega^2\rp)+\frac34 y'\omega^2+b\lp[R_2^*h'(R_2^*)+h(R_2^*)\rp]\frac{\tilde{p}\epsilon}{(r^*)^2}\mathbbm{1}_{[R_0^*,R_2^*]}-\frac12 \hat y\Re\mc{V}_{(k)}'\mathbbm{1}_{(-\infty,R_2^*]}\,. \numberthis \label{eq:low-frequencies-h+y-hat}
\end{align*}

Note that the currents $y$ and $\hat{y}$ are designed to take advantage of the sign of $\Re\mc{V}_{(k)}'$ while simultaneously absorbing the error terms generated by derivatives of an $h$ current of the type $h_+$ \eqref{eq:h+-estimate-error} or, when $a$ is sufficiently close to $M$, $h_-$ \eqref{eq:h--estimate-error}.

\subsubsection{Boundary term estimates for entire transformed system}

In this section, we obtain some boundary term estimates for the lower level transformed variables. 

\begin{lemma} \label{lemma:bdry-term-via-constraint-bdd} Let $(\omega,m,\Lambda)\in\mc F_{\rm bdd}$ be an admissible frequency triple. For $k=0,\dots,|s|$, assume $\uppsi_{k}$ are solutions to the radial ODEs~\eqref{eq:transformed-k-separated} with outgoing boundary conditions. Then, we have the following boundary term estimates for $k<|s|$. For $s>0$, writing $\kappa=\frac{r_+-M}{2Mr_+}$, 
\begin{align*}
&\lp[(|s|-k-1)^2\kappa^2 +(\omega-m\upomega_+)^2\rp]|\swei{A}{s}_{k+1,\mc H^+}|^2\\
&\quad\leq  B\lp|\frac{\mathfrak{G}_{(k)}}{w}\rp|^2(-\infty)+B(\omega_{\rm width},\varepsilon_{\rm width})|\swei{A}{s}_{k,\mc H^+}|^2+B(\omega_{\rm width},\varepsilon_{\rm width})\sum_{i=0}^{k-1}|\swei{A}{s}_{i,\mc H^+}|^2\,.
\end{align*}
For $s<0$, we have instead 
\begin{align*}
&\omega^2|\swei{A}{s}_{k+1,\mc H^+}|^2\leq  B\lp|\frac{\mathfrak{G}_{(k)}}{w}\rp|^2(-\infty)+B(\omega_{\rm width},\varepsilon_{\rm width})|\swei{A}{s}_{k,\mc H^+}|^2+B(\omega_{\rm high},\varepsilon_{\rm width})\sum_{i=0}^{k-1}|\swei{A}{s}_{i,\mc H^+}|^2\,.
\end{align*}
\end{lemma}

\begin{proof}
Recall the radial ODE \eqref{eq:transformed-k-separated} for $s>0$. Dropping subscripts, using the definition of $\uppsi_{(k+1)}$ \eqref{eq:transformed-transport-separated}, and dividing through by $w$, we find that \eqref{eq:transformed-k-separated} can be recast as
\begin{align*}
&\lp(\frac{d}{dr^*}-i(\omega-m\upomega_+)\rp)\uppsi_{(k+1)}\Big|_{r=r_+}- 2(|s|-k-1)\kappa \uppsi_{(k+1)}(-\infty)\\
&\quad\qquad+\lp(|s|+k(2|s|-k-1)+\frac{r_+-M}{M}(1-3|s|+2s^2-3k(2|s|-k-1))\rp)\uppsi_{(k)}(-\infty)\\
&\quad\qquad +\lp(\Lambda-2am\omega-\frac{a}{M}\sign s(2|s|-2k-1) im\rp)\uppsi_{(k)}(-\infty)\\
&\quad= \frac{\mathfrak{G}_{(k)}}{w}\Big|_{r=r_+}+\sum_{i=0}^{k-1}\lp(a c_{s,\,k,\,i}^\Phi im+ac^{\rm id}_{s,\,k,\,i}\rp)\Big|_{r=r_+}\uppsi_{(i)}(-\infty)\,, 
\end{align*}
at $r^*=-\infty$, where $\kappa=\frac{r_+-M}{2Mr_+}$. Assuming outgoing boundary conditions, we have 
\begin{align*}
&2\lp[ (|s|-k-1)\kappa -i(\omega-m\upomega_+)\rp]\swei{A}{s}_{k+1,\mc H^+}= -\frac{\mathfrak{G}_{(k)}}{w}\Big|_{r=r_+}-\sum_{i=0}^{k-1}\lp(a c_{s,\,k,\,i}^\Phi im+ac^{\rm id}_{s,\,k,\,i}\rp)\Big|_{r=r_+}\swei{A}{s}_{i,\mc H^+}\\
&\qquad+\lp(|s|+k(2|s|-k-1)+\frac{r_+-M}{M}(1-3|s|+2s^2-3k(2|s|-k-1))\rp)\swei{A}{s}_{k,\mc H^+}\\
&\qquad +\lp(\Lambda-2am\omega-\frac{a}{M}\sign s(2|s|-2k-1) im\rp)\swei{A}{s}_{k,\mc H^+}\,.  \numberthis\label{eq:ode-bdry-constraint}
\end{align*}
Multiplying \eqref{eq:ode-bdry-constraint} by $\lp[ (|s|-k-1)\kappa +i(\omega-m\upomega_+) \rp]\overline{\swei{A}{s}_{k+1,\mc H^+}}$, we obtain
\begin{align*}
&\lp[(|s|-k-1)^2\kappa^2 +(\omega-m\upomega_+)^2\rp]|\swei{A}{s}_{k+1,\mc H^+}|^2\\
&\leq  B\lp|\frac{\mathfrak{G}_{(k)}}{w}\rp|^2(-\infty)+B(1+\Lambda+|am\omega|)|\swei{A}{s}_{k,\mc H^+}|^2+B(1+m^2)\sum_{i=0}^{k-1}|\swei{A}{s}_{i,\mc H^+}|^2\,.  
\end{align*}

Similarly, if $s<0$, we can rewrite \eqref{eq:transformed-k-separated} as 
\begin{align*}
&\lp(-\frac{d}{dr^*}-i\omega\rp)\uppsi_{(k+1)}\Big|_{r=\infty}+\lp(\Lambda+|s|+k(2|s|-k-1)\rp)\uppsi_{(k)}\Big|_{r=\infty}\\
&\quad= \frac{\mathfrak{G}_{(k)}}{w}\Big|_{r=\infty}+\sum_{i=0}^{k-1}\lp(a c_{s,\,k,\,i}^\Phi im+ac^{\rm id}_{s,\,k,\,i}\rp)\Big|_{r=\infty}\uppsi_{(i)}\Big|_{r=\infty}\,, 
\end{align*}
thus arguing as before concludes the proof.
\end{proof}

\subsubsection{Low frequencies and axisymmetry or small black hole angular momentum}
\label{sec:F-low-1}

In this section, we will focus on frequency triples in $\mc{F}_{\rm low,1}$. Concretely, we will show
\begin{proposition}[Estimates in $\mc{F}_{\rm low,1a}$] \label{prop:low1a} Fix $s\in\{0,\pm 1,\pm 2\}$ and $M>0$. Then, for all $\delta\in(0,1]$, $\omega_{\rm high}>0$, $\varepsilon^{-1}_{\rm width}>0$, for all $E,E_W>0$ such that one is sufficiently large, for all $\tilde{a}_0$ and $\omega_{\rm low}$ sufficiently small depending on the latter, for all $(a,\omega,m,\Lambda) \in \mc{F}_{\rm low,1a}(\omega_{\rm high},\varepsilon_{\rm width},\omega_{\rm low},\tilde{a}_0)$, there exist functions $y$, $\hat{y}$, $h$, $\chi_1$ and $\chi_2$ satisfying the uniform bounds
$|y| +|\hat{y}|+|h|+|\chi_1|+|\chi_2|+|\chi_3|+|\chi_4|\leq B \,,$
such that, for all smooth $\Psi$ arising from a smooth solution to the radial ODE~\eqref{eq:radial-ODE-alpha} via \eqref{eq:def-psi0-separated} and \eqref{eq:transformed-transport-separated} and itself satisfying the radial ODE~\eqref{eq:radial-ODE-Psi},  if $\Psi$ has the general boundary conditions of Lemma~\ref{lemma:uppsi-general-asymptotics}, we have the estimate
\begin{align*}
&b\lp[\omega^2\lp|\swei{A}{s}_{\mc{I}^+}\rp|^2+(\omega-m\upomega_+)^2\lp|\swei{A}{s}_{\mc{H}^+}\rp|^2\rp] + b(\delta)\int_{-\infty}^\infty \frac{\Delta}{r^2}\Big[\frac{1}{r^{1+\delta}}|\Psi'|^2+\frac{1}{r^3}(r^{-\delta}+\Lambda)|\Psi|^2\Big]dr^*\\
&\quad\leq B(E,E_W)\lp[\omega^2\lp|\swei{A}{s}_{\mc{I}^-}\rp|^2+(\omega-m\upomega_+)^2\lp|\swei{A}{s}_{\mc{H}^-}\rp|^2\rp]\\
&\quad\qquad+\int_{-\infty}^\infty \lp\{2(y+\hat{y})\Re\lp[\mathfrak{G}\overline{\Psi}'\rp]+h\Re\lp[\mathfrak{G}\overline{\Psi}\rp]-E\lp(\chi_2\omega+\chi_1(\omega-m\upomega_+)\rp)\Im\lp[\mathfrak{G}\overline{\Psi}\rp]\rp\}dr^*\\
&\quad\qquad+\int_{-\infty}^\infty E_W\lp[\chi_3\lp(Q^{W,K}\rp)'+\chi_4\lp(Q^{W,T}\rp)'\rp]\,.
\end{align*}
If, for $k=0,\dots|s|$, $\uppsi_{(k)}$ have outgoing boundary conditions as in Definition \ref{def:outgoing-bdry-uppsi}, then we must take $E_W\equiv 0$ and either add \eqref{eq:ode-estimates-outgoing-addition} to the above right hand side.
\end{proposition}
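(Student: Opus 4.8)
\textbf{Proof strategy for Proposition~\ref{prop:low1a}.} The plan is to treat $\mc{F}_{\rm low,1a}$, where both $|\omega|$ and the black hole angular momentum $a$ are small, by running the hierarchy of virial estimates built in Sections~\ref{sec:low-hierarchy-h} and~\ref{sec:low-hierarchy-y} level by level, from $k=0$ up to $k=|s|$, and then closing with energy currents to absorb the boundary terms. First I would fix the geometric data: invoke Lemma~\ref{lemma:properties-frequencies-low-omega} and Lemma~\ref{lemma:properties-potential-low-omega} to record that in this regime $\Lambda+s^2\geq m^2$, $\Lambda-2am\omega\geq \tfrac{3|s|}{4}$, that $\Re\mc{V}_{(k)}-\sigma(|s|-k)(w'/w)'-\omega^2\geq \tfrac18 w$ in an arbitrarily large compact $r^*$-region (for suitably small $\omega_{\rm low},\tilde a_0$), that $\Re\mc{V}_{(k)}'=-br^{-3}+O(r^{-4})<0$ as $r\to\infty$ with $\tfrac{7|s|}{2}\leq b\leq 2\Lambda+s^2$, and — using that $a\leq\tilde a_0$ is small — that $\Re\mc{V}_{(k)}'>0$ as $r\to-\infty$ with $\tfrac{3|s|}{4}\leq b\leq\Lambda+2|s|(|s|-1)+2$. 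These sign properties of $\Re\mc{V}_{(k)}'$ at both ends, together with positivity of $\mc{V}_{(k)}-\omega^2$ in the middle, are exactly what the $h_+$, $h_-$, $y$ and $\hat y$ model currents of Section~\ref{sec:low-hierarchy-y} are designed to exploit.

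Next I would assemble, for each $k=0,\dots,|s|$, a global virial current $\mathfrak{f}_{(k)}=h_{(k)}+y_{(k)}+\hat y_{(k)}$: take $h_{(k)}$ to be a multiple of $w^{-(1-2\sigma)(|s|-k)}$ (so that \eqref{eq:h-polynomial-new} kills its error) glued, near the ends of its support, to the decaying building blocks $h_+$ in \eqref{eq:standard-current-h+} and $h_-$ in \eqref{eq:standard-current-h-}; then attach the $y$ current \eqref{eq:low-frequencies-y} on the right and the $\hat y$ current \eqref{eq:low-frequencies-hat-y} on the left, chosen so that $y',\hat y'$ dominate the errors $h_{(k)}''+(|s|-k)(w'/w)h_{(k)}'$ coming from $h_\pm$ via \eqref{eq:h+-estimate-error}, \eqref{eq:h--estimate-error} and the combination inequalities \eqref{eq:low-frequencies-h+y}, \eqref{eq:low-frequencies-h+y-hat}. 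Applying Lemma~\ref{lemma:h-estimate-low} and Lemma~\ref{lemma:y-estimate-low} at level $k$ then yields a bulk estimate controlling $\int \mathfrak{f}_{(k)}'|\uppsi_{(k)}'|^2 + (\dots)|\uppsi_{(k)}|^2$ in terms of (i) a future boundary term $2y_{(k)}(\infty)\omega^2|A^{[s]}_{k,\mc I^+}|^2$ plus $-2\hat y_{(k)}(-\infty)\omega_0^2|A^{[s]}_{k,\mc H^-}|^2$; (ii) the inhomogeneity $\mathfrak{G}_{(k)}$; (iii) a factor $\{1,a^2/M^2\}\cdot k\,(1+m^2)$ times the corresponding level-$j$ bulk quantities for $j<k$; and (iv) a $(|s|-k)(1+\omega^2)$-weighted multiple of the level-$(k{+}1)$ bulk quantity. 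The key point, inherited from the structure of the lemmas, is that one of the two coupling contributions (iii) or (iv) comes with a genuinely small prefactor: in $\mc{F}_{\rm low,1a}$ it is the factor $a^2/M^2\lesssim\tilde a_0^2$ that provides the smallness for (iii), while (iv) is absorbed because it is weighted by $\Re\mc{V}'_{(k+1)}$ which can be made small relative to the leading bulk term in $\uppsi_{(k+1)}$ by enlarging the compact region. Iterating from $k=0$ upward — at each stage the level-$k$ estimate is closed using the already-established level-$j$, $j<k$, estimates plus a small fraction of the level-$(k{+}1)$ estimate — and then summing, the derivative-free imaginary potential at the top level $k=|s|$ lets me close a clean bulk estimate for $\Psi$: a multiple of $\int \tfrac{\Delta}{r^2}[r^{-1-\delta}|\Psi'|^2 + r^{-3}(r^{-\delta}+\Lambda)|\Psi|^2]\,dr^*$, modulo boundary terms at $\mc H^+,\mc I^+$ (with good sign) and at $\mc H^-,\mc I^-$ (to be moved to the right-hand side) and the $\mathfrak{G}$-terms.

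Finally I would control the boundary terms. The bulk estimate produces, at $r=r_+$ and $r=\infty$, terms of the form $-(\omega-m\upomega_+)^2|A^{[s]}_{\mc H^-}|^2$, $-\omega^2|A^{[s]}_{\mc I^-}|^2$ (negative, hence harmless after transfer to the RHS) and we still need lower bounds $b\,\omega^2|A^{[s]}_{\mc I^+}|^2 + b\,(\omega-m\upomega_+)^2|A^{[s]}_{\mc H^+}|^2$ on the left. Here I add $E_W$ times a Teukolsky--Starobinsky energy current (Lemma~\ref{lemma:wronskian-energy-currents}), localized by cutoffs $\chi_3,\chi_4$ supported at large $r$, whose boundary values \eqref{eq:bdry-wronskian-T-K-infinity}, \eqref{eq:bdry-wronskian-T-K-horizon-sub-integer} involve the ratios $\mathfrak{C}_s/\mathfrak{D}_s^{\mc I}$, $\mathfrak{C}_s/\mathfrak{D}_s^{\mc H}$ which, by Lemma~\ref{lemma:properties-frequencies-low-omega}\ref{it:properties-frequencies-low-omega-TS-constants}, all lie in $[\tfrac13,\tfrac53]$ in this regime — so these currents deliver \emph{effective} control over the $\omega^2$- and $\omega_0^2$-weighted boundary terms of $\Psi$ without any degenerating constants, and crucially the TS current is immune to the coupling $\mathfrak{J}^{[s]}$. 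Since in $\mc{F}_{\rm low,1a}$ the frequencies may be superradiant ($\omega(\omega-m\upomega_+)<0$ is possible), one of the two boundary terms in \eqref{eq:Wronskian-conservation-Psi-intro-2} has the wrong sign; I therefore also add a \emph{small} multiple $E$ of the localized Killing energy currents $Q^T,Q^K$ (with cutoffs $\chi_1,\chi_2$) whose localization errors are positive multiples of $w|\Psi|^2$ concentrated where the $h$ current is strongest — these get absorbed by the good bulk term — and whose coupling errors, handled by Lemma~\ref{lemma:basic-estimate-1} and Lemma~\ref{lemma:basic-estimate-3}, are controlled by the already-established $\uppsi_{(k)}$, $k<|s|$, bulk estimates times the small factor $a^2\lesssim \tilde a_0^2$. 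Choosing $E$ or $E_W$ large (one of them, as in the statement) and then $\tilde a_0,\omega_{\rm low}$ small, everything closes. The main obstacle I expect is the bookkeeping in the iteration: making precise, uniformly in $(a,\omega,m,\Lambda)$ over the whole range, that at each level $k$ the coupling error (iii)–(iv) that is \emph{not} multiplied by $a^2$ can be absorbed either into the level-$(k{+}1)$ estimate (before it is fixed) or into the good level-$k$ bulk term — i.e.\ verifying that the constants produced by the repeated Cauchy--Schwarz applications in Lemmas~\ref{lemma:h-estimate-low} and~\ref{lemma:y-estimate-low} can be arranged in a genuinely triangular, self-consistent system — and simultaneously keeping the localization/coupling errors from the energy currents within what the $h$-current good bulk and the lower-level bulk estimates can absorb.
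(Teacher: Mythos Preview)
Your overall strategy (build $h_{(k)},y_{(k)},\hat y_{(k)}$ for \emph{every} $k=0,\dots,|s|$ and iterate along the hierarchy) is essentially what the paper does in the neighboring ranges $\mc{F}_{\rm low,1b}$ and $\mc{F}_{\rm low,1c}$, but \emph{not} what it does for $\mc{F}_{\rm low,1a}$. In $\mc{F}_{\rm low,1a}$ the paper exploits the smallness of $a$ much more directly: it applies the $h,y,\hat y$ currents \emph{only} at the top level $k=|s|$, and then deals with the coupling errors $a^2(1+m^2)\int w|\uppsi_{(k)}|^2$ by the crude transport estimate of Lemma~\ref{lemma:basic-estimate-1} with $c=-1/r$, which converts each $\int w|\uppsi_{(k)}|^2$ into a boundary term at $\mc H^{-\sign s}$ plus $\int w(r^2+a^2)^{-1}|\Psi|^2$. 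Since all the coupling terms in the $k=|s|$ equation carry a genuine factor of $a$, the factor $a^2\leq\tilde a_0^2$ in front of the resulting $\Psi$-bulk is small enough to absorb, and the boundary terms of $\uppsi_{(k)}$ are controlled via Lemma~\ref{lemma:properties-frequencies-low-omega}\ref{it:properties-frequencies-low-omega-bdry-terms}. This is simpler than your full-hierarchy route and avoids the intermediate-level bookkeeping entirely.

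Your proposal also has two concrete gaps. First, you build $h_{(k)}$ using the model block $h_-$ of \eqref{eq:standard-current-h-} on the left. But the error bound \eqref{eq:h--estimate-error} for $h_-$ is derived under the near-extremal hypothesis $\sqrt{M^2-a^2}\leq Me^{-\hat c/p}$; in $\mc{F}_{\rm low,1a}$ one has instead $a\leq\tilde a_0\ll M$, so $h_-$ does not give integrable errors and your gluing estimate \eqref{eq:low-frequencies-h+y-hat} fails. (The paper, when it does run the hierarchy in $\mc{F}_{\rm low,1b}$, uses a \emph{reflected} $h_+$ on the left precisely for this reason, accepting the resulting non-integrable $O(|r^*|^{-1})$ error and removing it by a separate application of Lemma~\ref{lemma:basic-estimate-1}.) Second, your claim that the downward-coupling term (iii) always carries the smallness $a^2/M^2$ is false: by Proposition~\ref{prop:transformed-system} the coefficient $c_{s,1,0}^{\rm id}$ for $|s|=2$ appears \emph{without} the factor $a$, so at level $k=1$ the factor in Lemma~\ref{lemma:h-estimate-low} is $1$, not $a^2/M^2$. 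In the paper's $\mc{F}_{\rm low,1b}$ argument the smallness at that step comes instead from the different $r^*$-scaling of $h_{k,\rm int}$ versus $h_{j,\rm int}$ (a gain of $(R_3^*)^{k-j}$ at one end), which you do not invoke. Either switch to the paper's simpler top-level-only argument, or repair these two points along the lines of the $\mc{F}_{\rm low,1b}$ proof.
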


\begin{proposition}[Estimates in $\mc{F}_{\rm low,1b}$] \label{prop:low1b} Fix $s\in\{0,\pm 1,\pm 2\}$ and $M>0$. Then, for all $\delta\in(0,1]$, $\omega_{\rm high}>0$, $\varepsilon^{-1}_{\rm width}>0$ and $\tilde{a}_1>0$, for all $E,E_W>0$ such that one is sufficiently large,  for all $\omega_{\rm low}$ sufficiently small depending on the latter quantities, and for all $(a,\omega,m,\Lambda) \in \mc{F}_{\rm low,1b}(\varepsilon_{\rm width},\omega_{\rm high},\tilde{a}_1)$ with $m=0$,  there exist functions $y_{(k)}$, $\hat{y}_{(k)}$ and $h_{(k)}$,  for $0\leq k\leq |s|$ (we drop the subscript for $k=|s|$), satisfying the uniform bounds
$|y_{(k)}| +|\hat{y}_{(k)}|+|h_{(k)}|\leq B(\tilde{a}_1) \,,$
such that, for all smooth $\Psi$ arising from a smooth solution to the radial ODE~\eqref{eq:radial-ODE-alpha} via \eqref{eq:def-psi0-separated} and \eqref{eq:transformed-transport-separated} and itself satisfying the radial ODE~\eqref{eq:radial-ODE-Psi},  if $\Psi$ has the general boundary conditions of Lemma~\ref{lemma:uppsi-general-asymptotics}, we have the estimate
\begin{align*}
&\omega^2\lp|\swei{A}{s}_{\mc{I}^+}\rp|^2+(\omega-m\upomega_+)^2\lp|\swei{A}{s}_{\mc{H}^+}\rp|^2 + \int_{-\infty}^\infty \frac{\Delta}{r^2}\Big[r^{-1-\delta}|\Psi'|^2+r^{-3}(\Lambda+r^{-\delta})|\Psi|^2\Big]dr^*\\
&\quad\leq B(E_W)\lp[\omega^2\lp|\swei{A}{s}_{\mc{I}^-}\rp|^2+(\omega-m\upomega_+)^2\lp|\swei{A}{s}_{\mc{H}^-}\rp|^2\rp]\\
&\quad\qquad+\int_{-\infty}^\infty \lp\{ 2(y+\hat{y})\Re\lp[\mathfrak{G}\overline{\Psi}'\rp]+h\Re\lp[\mathfrak{G}\overline{\Psi}\rp]+E_W\lp(Q^{W,\,T}\rp)'-E\omega\Im\lp[\mathfrak{G}\overline{\Psi}\rp]\rp\}dr^*\\
&\quad\qquad+B\sum_{k=0}^{|s|-1}\int_{-\infty}^\infty \lp\{ 2(y_{(k)}+\hat{y}_{(k)})\Re\lp[\mathfrak{G}_{(k)}\overline{\uppsi_{(k)}}'\rp]+h_{(k)}\Re\lp[\mathfrak{G}_{(k)}\overline{\uppsi_{(k)}}\rp]-E\omega\Im\lp[\mathfrak{G}_{(k)}\overline{\uppsi_{(k)}}\rp]\rp\}dr^*\,.
\end{align*}
If, for $k=0,\dots|s|$, $\uppsi_{(k)}$ have outgoing boundary conditions as in Definition \ref{def:outgoing-bdry-uppsi}, then we must take $E_W\equiv 0$ and add \eqref{eq:ode-estimates-outgoing-addition} to the above right hand side.
\end{proposition}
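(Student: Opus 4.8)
The plan is to prove Proposition~\ref{prop:low1b} by applying, to each equation \eqref{eq:transformed-k-separated} of the transformed system for $k=0,\dots,|s|$, a carefully tuned combination of the virial currents $h$, $y$, $\hat y$ constructed in Sections~\ref{sec:low-hierarchy-h}--\ref{sec:low-hierarchy-y}, with the currents at level $k$ chosen so that the coupling errors to the neighbouring levels each carry a small parameter; one then climbs the hierarchy, exploiting that the imaginary part of the potential vanishes at the top level $k=|s|$, and finally reconciles the boundary terms so produced via the Teukolsky--Starobinsky energy identity \eqref{eq:Wronskian-conservation-Psi-intro-2} (Lemma~\ref{lemma:wronskian-energy-currents}) supplemented by a small multiple of the classical $T$-energy current \eqref{eq:classical-energy-Psi-intro}. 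This is the axisymmetric, ``bounded-away-from-extremal'' analogue of Proposition~\ref{prop:low1a}, and differs from it in that here $a$ is \emph{not} small, so the coupling cannot be absorbed directly through the rudimentary transport estimate \eqref{eq:basic-estimate-1}. As inputs from Section~\ref{sec:bounded-smallness-properties}: since $m=0$ and $0\le a<M-\tilde a_1$, Lemma~\ref{lemma:properties-frequencies-low-omega} gives $\Lambda-2am\omega=\Lambda\ge\tfrac34|s|$, $(\omega-m\upomega_+)^2=\omega^2\le\omega_{\rm low}^2$, the comparability $\mathfrak{C}_s\sim\mathfrak{D}_s^{\mc I}\sim\mathfrak{D}_s^{\mc H}\sim1$, and, for $k<|s|$, $\lp|\swei{A}{s}_{k,\mc{I}^+}\rp|^2\lesssim\omega^{2(|s|-k)}\lp|\swei{A}{s}_{\mc{I}^+}\rp|^2$ and $\lp|\swei{A}{s}_{k,\mc{H}^-}\rp|^2\lesssim\omega^2\lp|\swei{A}{s}_{\mc{H}^-}\rp|^2$; crucially $\omega(\omega-m\upomega_+)=\omega^2\ge0$, so there is no superradiance and no localization of energy currents is needed. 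Lemma~\ref{lemma:properties-potential-low-omega} (with $m=0$) gives, for each $k$, a large compact $r^*$-region on which $\Re\mc{V}_{(k)}-\sigma(|s|-k)(w'/w)'-\omega^2\in[\tfrac18w,(\Lambda+2+3s^2)w]$, the decay $\Re\mc{V}_{(k)}'=-br^{-3}+O(r^{-4})$ at large $r$ with $\tfrac72|s|\le b\le2\Lambda+s^2$, and $\Re\mc{V}_{(k)}'=\tfrac{b}{2M^2r_+^2}\Delta(r-M)+O(\Delta^2)>0$ near $r_+$ with $\tfrac34|s|\le b-1\le\Lambda+2|s|(|s|-1)$ (the degenerate case $s=m=\Lambda=0$ being covered by the separate $M\Delta r^{-5}$ bound); the imaginary part $\Im\mc{V}_{(k)}=\sign s\,(|s|-k)(w'/w)\omega$ is $O(\omega_{\rm low})$-small but fails to decay as $r^*\to-\infty$, which forces the use of $h$-type currents and of the reformulated $y$-type identity of Lemma~\ref{lemma:h-y-identity-k}.

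For each fixed $k$ I would then assemble the per-level estimate, following the $s=0$ construction of \cite{Dafermos2016b} as adapted in Sections~\ref{sec:low-hierarchy-h}--\ref{sec:low-hierarchy-y}: take $h_{(k)}$ to be, on a compact set, a multiple of $w^{-(1-2\sigma)(|s|-k)}$ (so its derivative error vanishes by \eqref{eq:h-polynomial-new}), glued at large $r^*$ to the block $h_+$ of \eqref{eq:standard-current-h+} and, near the \emph{non-degenerate} horizon $r=r_+$, to a similar polynomial block; take $y_{(k)}$ of the form \eqref{eq:low-frequencies-y}, starting inside the support of $h_{(k)}$, exploiting $\Re\mc{V}_{(k)}'<0$ and absorbing the $h_+$-errors via \eqref{eq:low-frequencies-h+y}; and $\hat y_{(k)}$ of the form \eqref{eq:low-frequencies-hat-y}, exploiting $\Re\mc{V}_{(k)}'>0$ near $r_+$ through the $m=0$ branch \eqref{eq:hat-y-estimate-low-k} of Lemma~\ref{lemma:y-estimate-low} and absorbing the near-horizon $h$-errors via \eqref{eq:low-frequencies-h+y-hat}. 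Summing \eqref{eq:h-estimate-low}, \eqref{eq:y-estimate-low-k} and \eqref{eq:hat-y-estimate-low-k} with suitable weights yields, schematically,
\begin{align*}
\mathscr{I}_k[\uppsi_{(k)}]+\lp(\text{future boundary terms of }\uppsi_{(k)}\rp)&\lesssim \lp(\text{inhomogeneity in }\uppsi_{(k)}\rp)+\varepsilon_k\,\mathscr{I}_{k+1}[\uppsi_{(k+1)}]\\
&\quad+\delta_k\sum_{j<k}\mathscr{I}_j[\uppsi_{(j)}]+\omega^2\lp|\swei{A}{s}_{k,\mc{I}^+}\rp|^2+\omega^2\lp|\swei{A}{s}_{k,\mc{H}^-}\rp|^2,
\end{align*}
where $\mathscr{I}_k[\cdot]$ denotes the $r$-weighted bulk $\int\tfrac{\Delta}{r^2}\lp\{r^{-1-\delta}|(\cdot)'|^2+r^{-3}(\Lambda+r^{-\delta})|\cdot|^2\rp\}dr^*$ and $\varepsilon_k,\delta_k$ can be made arbitrarily small through the tuning of the currents at level $k$. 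At the top level $\Im\mc{V}_{(|s|)}=0$, so \eqref{eq:radial-ODE-Psi} is wave-like up to the coupling $aw\sum_{k<|s|}c^{\mathrm{id}}_{s,|s|,k}\uppsi_{(k)}$ (the $\Phi$-terms drop out since $m=0$), whose Cauchy--Schwarz error against the virial currents for $\Psi$ is bounded by $\lesssim\sum_{k<|s|}\mathscr{I}_k[\uppsi_{(k)}]$ after using $a^2w\cdot\lp(h\,w|\uppsi_{(k)}|^2\rp)\le Bw\cdot(\text{good bulk at level }k)$.

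To close, sum the per-level estimates with weights $\Lambda_k$ increasing in $k$ with sufficiently large gaps, absorbing the $\delta_k$ downward- and $\varepsilon_k$ upward-coupling bulk terms against the good bulks (possible precisely because $\varepsilon_k,\delta_k$ are at our disposal); this controls $\mathscr{I}_{|s|}[\Psi]$, and all lower $\mathscr{I}_k$, by the inhomogeneities and by $\omega^2|\swei{A}{s}_{k,\mc{I}^+}|^2$, $\omega^2|\swei{A}{s}_{k,\mc{H}^-}|^2$. By the item (v) bounds of Lemma~\ref{lemma:properties-frequencies-low-omega} the lower-level boundary terms reduce to their $k=|s|$ counterparts, so it suffices to control $\omega^2|\swei{A}{s}_{\mc{I}^+}|^2$, $\omega^2|\swei{A}{s}_{\mc{H}^-}|^2$ and to produce $(\omega-m\upomega_+)^2|\swei{A}{s}_{\mc{H}^+}|^2$ on the left. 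Since $m=0$, identity \eqref{eq:Wronskian-conservation-Psi-intro-2} (realised via $E_W(Q^{W,T})'$, Lemma~\ref{lemma:wronskian-energy-currents}) reads, schematically, $\omega^2|\swei{A}{s}_{\mc{I}^+}|^2+\omega^2|\swei{A}{s}_{\mc{H}^+}|^2\sim\omega^2|\swei{A}{s}_{\mc{I}^-}|^2+\omega^2|\swei{A}{s}_{\mc{H}^-}|^2$ with all coefficients comparable to $1$ and with no coupling error; a small multiple $E$ of \eqref{eq:classical-energy-Psi-intro} mops up any residual boundary contribution left by the virial currents, its coupling error $aE\omega\Im\lp[\overline{\Psi}\,\swei{\mathfrak{J}}{s}\rp]$ being negligible since $E\ll1$. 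Collecting these gives the claimed estimate.

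The main obstacle I expect is exactly this simultaneous closing of the hierarchy: because $a$ is not small in $\mc{F}_{\rm low,1b}$, one must genuinely exploit the freedom in choosing the $3(|s|+1)$ virial currents $h_{(k)},y_{(k)},\hat y_{(k)}$ to render $\varepsilon_k,\delta_k$ small, and then exhibit a single admissible choice of the gluing radii $R_0,\dots,R_5$, of the small parameters $p,\tilde p,\epsilon,\delta$, and of the weights $\Lambda_k$, for which every coupling bulk term is absorbable while the boundary-term bookkeeping through $\mathfrak{D}_{s,k}^{\mc I}$, $\mathfrak{D}_{s,k}^{\mc H}$, $\mathfrak{C}_s$ stays consistent and uniform over $a\in[0,M-\tilde a_1]$ and over the admissible triples in the regime.
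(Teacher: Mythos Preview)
Your broad strategy is correct---apply a hierarchy of $h$, $y$, $\hat y$ currents at each level $k$, iterate, and close boundary terms via the global Teukolsky--Starobinsky $T$-current---but there is a concrete gap at the near-horizon step which is precisely what distinguishes $\mc{F}_{\rm low,1b}$ from $\mc{F}_{\rm low,1c}$. You claim the near-horizon $h_{(k)}$-errors are absorbed by $\hat y_{(k)}$ through \eqref{eq:low-frequencies-h+y-hat}. That absorption requires
\[
h_{(k)}''+(|s|-k)\frac{w'}{w}h_{(k)}'\le \frac{Bp}{(r^*)^2}\,\mathbbm{1}_{[R_0^*,R_2^*]}\,,
\]
but in $\mc{F}_{\rm low,1b}$ the parameter $a$ is bounded away from $M$, so $w'/w\to (r_+-r_-)/(2Mr_+)\neq 0$ as $r^*\to-\infty$: the term $(|s|-k)(w'/w)h_{(k)}'$ contributes an error of order $p/|r^*|$, which is \emph{not integrable} when $k<|s|$. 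No $\hat y$-current supported near $r_+$ can absorb this, and the estimate \eqref{eq:low-frequencies-h+y-hat} simply does not apply for those $k$. (You correctly flagged that $\Im\mc{V}_{(k)}$ fails to decay at the horizon for the same reason, but did not trace this through to the $h$-error.)

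The paper's resolution is not a different choice of $h_{(k)}$ or $\hat y_{(k)}$, but an additional ingredient: it accepts the non-integrable error $\int Bp|r^*|^{-1}|\uppsi_{(k)}|^2\,dr^*$ on the support of $h_{\rm left}'$ and controls it by the transport estimate \eqref{eq:basic-estimate-1} with the weight $c=-\log|r^*|$ (extended by $|r^*|^{-1}$ further out), so that $c'\sim|r^*|^{-1}$ matches the error. The key computation is that, because $a\le M-\tilde a_1$ forces $w\lesssim\exp(-b(\tilde a_1)|r^*|)$ near $r_+$, the quantity $wc^2/c'\lesssim w(r^*)^2(\log|r^*|)^2$ is uniformly bounded on the support (after choosing $R_3^*$ large depending on $p$ and $\tilde a_1$). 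This converts the $h_{(k)}$-error into a bounded boundary term $B|\swei{A}{s}_{k,\mc{H}^{-\sign s}}|^2$ and a bulk term $\int pw|\uppsi_{(k+1)}|^2\,dr^*$ at the \emph{next} level; the latter is then absorbed by $-\hat y_{(k+1)}\Re\hat{\mc V}_{(k+1)}'$ there, exploiting \eqref{eq:proof-low-1b-intermediate-2}. Thus the $h$-error at level $k$ is absorbed by the $\hat y$-current at level $k+1$, not at level $k$ as you propose. The extra horizon boundary terms so generated carry no $\omega^2$ weight and force the $\tilde a_1$-dependence in the constants; this is why $E_W$ must be taken large depending on $\tilde a_1$ and why only a \emph{small} multiple $E$ of the Killing $T$-current can be tolerated (its coupling error scales like $E^2$ and is handled by the same transport trick \eqref{eq:proof-low-1ab-basic-estimate-1/r}, again using subextremality).
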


\begin{proposition}[Estimates in $\mc{F}_{\rm low,1c}$] \label{prop:low1c} Fix $s\in\{0,\pm 1,\pm 2\}$ and $M>0$. Then, for all $\delta\in(0,1]$, $\omega_{\rm high}>0$ and $\varepsilon^{-1}_{\rm width}>0$, for all $E,E_W>0$ such that one is sufficiently large, for all $\tilde{a}_1$ sufficiently small, for all  $\omega_{\rm low}$  depending on $\omega_{\rm high}$, $\varepsilon_{\rm width}$ and $\tilde{a}_1$, and for all $(a,\omega,m,\Lambda) \in \mc{F}_{\rm low,1c}(\omega_{\rm high},\varepsilon_{\rm width},\omega_{\rm low},\tilde{a}_1)$, there exist functions $y_{(k)}$, $\hat{y}_{(k)}$ and $h_{(k)}$, for $k=0,\cdots, |s|$ (we drop the subscript for $k=|s|$), satisfying the uniform bounds
$|y_{(k)}| +|\hat{y}_{(k)}|+|h_{(k)}|\leq B \,,$
such that, for all smooth $\Psi$ arising from a smooth solution to the radial ODE~\eqref{eq:radial-ODE-alpha} via \eqref{eq:def-psi0-separated} and \eqref{eq:transformed-transport-separated} and itself satisfying the radial ODE~\eqref{eq:radial-ODE-Psi},  if $\Psi$ has the general boundary conditions of Lemma~\ref{lemma:uppsi-general-asymptotics} or the outgoing boundary conditions of Definition \ref{def:outgoing-bdry-uppsi}, we have the estimate
\begin{align*}
&b\lp[\omega^2\lp|\swei{A}{s}_{\mc{I}^+}\rp|^2+(\omega-m\upomega_+)^2\lp|\swei{A}{s}_{\mc{H}^+}\rp|^2\rp] + b(\delta)\int_{-\infty}^\infty \frac{\Delta}{r^2}\Big[r^{-1-\delta}|\Psi'|^2+\frac{(r-M)}{r^{4}}(\Lambda+r^{-\delta})|\Psi|^2\Big]dr^*\\
&\quad\leq B(E_W)\lp[\omega^2\lp|\swei{A}{s}_{\mc{I}^-}\rp|^2+(\omega-m\upomega_+)^2\lp|\swei{A}{s}_{\mc{H}^-}\rp|^2\rp]\\
&\quad\qquad+\int_{-\infty}^\infty \lp\{ 2(y+\hat{y})\Re\lp[\mathfrak{G}\overline{\Psi}'\rp]+h\Re\lp[\mathfrak{G}\overline{\Psi}\rp]+E_W\lp(Q^{W,\,T}\rp)'-E\omega\Im\lp[\mathfrak{G}\overline{\Psi}\rp]\rp\}dr^*\\
&\quad\qquad+B\sum_{k=0}^{|s|-1}\int_{-\infty}^\infty \lp\{ 2(y_{(k)}+\hat{y}_{(k)})\Re\lp[\mathfrak{G}_{(k)}\overline{\uppsi_{(k)}}'\rp]+h_{(k)}\Re\lp[\mathfrak{G}_{(k)}\overline{\uppsi_{(k)}}\rp]-E\omega\Im\lp[\mathfrak{G}_{(k)}\overline{\uppsi_{(k)}}\rp]\rp\}dr^*\,.
\end{align*}
\end{proposition}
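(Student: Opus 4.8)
\textbf{Proof strategy for Proposition~\ref{prop:low1c} (estimates in $\mc{F}_{\rm low,1c}$).}

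The plan is to follow the three-region scheme described in Section~\ref{sec:bounded-smallness-overview}, applied simultaneously to the entire transformed system $\uppsi_{(k)}$, $k=0,\dots,|s|$, using the hierarchy of $h$-current estimates of Lemma~\ref{lemma:h-estimate-low} and the $y$-/$\hat{y}$-current estimates of Lemma~\ref{lemma:y-estimate-low}. Since $\mc{F}_{\rm low,1c}$ consists of axisymmetric ($m=0$), near-extremal frequencies with $|\omega|\le\omega_{\rm low}$, the relevant potential properties are Lemma~\ref{lemma:properties-potential-low-omega}: property~\ref{it:properties-potential-low-omega-compact-region}(b) gives a large compact region where $\Re\mc{V}_{(k)}-\sigma(|s|-k)(w'/w)'-\omega^2\geq \tfrac18 w$ (valid since $|\omega|$ small and $m=0$); property~\ref{it:properties-potential-low-omega-derivative-plus-infinity} gives $\Re\mc{V}_{(k)}'<0$ at large $r^*$; and property~\ref{it:properties-potential-low-omega-derivative-minus-infinity} in the $m=0$ case gives $\Re\mc{V}_{(k)}'=\tfrac{b}{2M^2r_+^2}\Delta(r-M)+O(\Delta^2)$ as $r^*\to-\infty$ with $3|s|/4\le b-1$. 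The crucial point, distinguishing $\mc{F}_{\rm low,1c}$ from $\mc{F}_{\rm low,1a}$, is that near extremality $w'/w$ does \emph{not} decay at $r=r_+$, so the naive building block $h_-(r^*)=h_+(-r^*)$ only satisfies the non-integrable error bound \eqref{eq:h+-estimate-error-weak}; instead one must use the specially-constructed $h_-$ of \eqref{eq:standard-current-h-}, which, once $\sqrt{M^2-a^2}\le Me^{-\hat c/p}$ with $\hat c>1$, satisfies the integrable estimate \eqref{eq:h--estimate-error} with coefficients \eqref{eq:coefficients-h-} matching those of the $|a|=M$ case up to $o(p)$. This is why $\tilde{a}_1$ must be taken sufficiently small (and $\omega_{\rm low}$ depending on it).

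First I would fix, at each level $k$, the current $h_{(k)}=h_-+ (\text{bulk block})+h_+$: near $r=r_+$ use $h_-$ from \eqref{eq:standard-current-h-}; in the central compact region use a multiple of $w^{-q}$ with $q=(1-2\sigma)(|s|-k)$ (or of $r^N$ with $N\le 1$ when $\sigma=1/2$), so that by \eqref{eq:h-polynomial-new} (resp.\ \eqref{eq:error-h-polynomial}) \emph{no} error term $h''+(|s|-k)(w'/w)h'+(1-2\sigma)(|s|-k)(w'/w)'h$ is produced there; near $r=\infty$ use $h_+$ from \eqref{eq:standard-current-h+}. Then I would glue in $\hat{y}_{(k)}$ of the form \eqref{eq:low-frequencies-hat-y} near $r=r_+$ and $y_{(k)}$ of the form \eqref{eq:low-frequencies-y} near $r=\infty$; estimates \eqref{eq:low-frequencies-h+y-hat} and \eqref{eq:low-frequencies-h+y} show that the $h$-errors \eqref{eq:h--estimate-error}, \eqref{eq:h+-estimate-error} are exactly absorbed by $\hat y\Re\hat{\mc V}_{(k)}'$ and $-(y\Re\mc V_{(k)})'$ once $p\ll\tilde p$. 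The resulting single-level estimate controls, modulo boundary terms and $\mathfrak G_{(k)}$ terms, a bulk $\int \Delta r^{-2}[r^{-1-\delta}|\uppsi_{(k)}'|^2 + r^{-3}(r-M)(\Lambda+r^{-\delta})|\uppsi_{(k)}|^2]$ in terms of the same quantity at level $k+1$ (via the $y$-current coupling terms $-y\Re\mc V_{(k+1)}'|\uppsi_{(k+1)}|^2$ etc.) and level $i<k$ (via the $h$-current coupling, which carries the small factor $a^2/M^2$ when $|s|\neq1$ or when $k\neq1$, and otherwise no $a$ but then the coupling is to lower levels only). Iterating downward from $k=|s|$: the top-level estimate \eqref{eq:h-estimate-low}, \eqref{eq:y-estimate-low-k}, \eqref{eq:hat-y-estimate-low-k} for $\Psi$ has \emph{no} imaginary potential and its coupling is to $\uppsi_{(k)}$, $k<|s|$; substituting the lower-level estimates (which themselves reach down only to strictly lower levels and to the next level up) closes the system, giving a good bulk estimate for $\Psi$ with the stated weight $\Delta r^{-2}[r^{-1-\delta}|\Psi'|^2+r^{-3}(r-M)(\Lambda+r^{-\delta})|\Psi|^2]$.

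It remains to control the boundary terms. The $y$ and $\hat y$ currents produce $2y(\infty)\omega^2|A^{[s]}_{k,\mc I^{+\sign s}}|^2$ and $2\hat y(-\infty)\omega_0^2|A^{[s]}_{k,\mc H^{-\sign s}}|^2$ at each level. Using Lemma~\ref{lemma:properties-frequencies-low-omega}\ref{it:properties-frequencies-low-omega-bdry-terms}, the lower-level ($k<|s|$) boundary contributions are bounded by $\omega^{2(|s|-k)}$ (resp.\ $(\omega-m\upomega_+)^{2(|s|-k)}$, noting $r_+-M$ is small here) times the top-level ones, hence absorbable since $|\omega|\le\omega_{\rm low}$ is small. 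For the top-level boundary terms $\omega^2|A^{[s]}_{\mc I^\pm}|^2$ and $\omega_0^2|A^{[s]}_{\mc H^\pm}|^2$, I would add the Teukolsky--Starobinsky energy currents $Q^{W,T}_\pm$, $Q^{W,K}_\pm$ of Lemma~\ref{lemma:wronskian-energy-currents} with a large multiplier $E_W$: by \eqref{eq:bdry-wronskian-T-K-infinity}, \eqref{eq:bdry-wronskian-T-K-horizon-extremal-integer}, \eqref{eq:bdry-wronskian-T-K-horizon-sub-integer} these yield $\pm\omega^2[|A^{[s]}_{\mc I^+}|^2-(\mathfrak C_s/\mathfrak D_s^{\mc I})|A^{[s]}_{\mc I^-}|^2]$ and analogous horizon terms, and by Lemma~\ref{lemma:properties-frequencies-low-omega}\ref{it:properties-frequencies-low-omega-TS-constants} the ratios $\mathfrak C_s/\mathfrak D_s^{\mc I}$, $\mathfrak C_s/\mathfrak D_s^{\mc H}$, $\mathfrak D_s^{\mc I}/\mathfrak D_s^{\mc H}\in[\tfrac13,\tfrac53]$, so the Teukolsky--Starobinsky currents are genuinely coercive in the future boundary quantities with the natural $\omega^2$, $(\omega-m\upomega_+)^2$ weights, modulo $E_W$ times the past quantities on the right. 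The localization errors \eqref{eq:local-errors-Wronskian} of $Q^{W,\{T,K\}}$ are supported at large $r$ (where $am/L\ll1$ since $m=0$ anyway) and carry the small factor $|\omega|/L\le\omega_{\rm low}/L$, so they are absorbed into the bulk. A small multiple of the Killing current $Q^T$ (multiplier $E$ small) can additionally be used if needed, but its coupling errors (no smallness other than $E$ itself, since $r$-weights are too weak) force $E$ to be small, consistent with the hypothesis $E$ sufficiently small; since $m=0$ here, $Q^K$ and $Q^T$ coincide, simplifying matters.

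The main obstacle, and where the argument of $\mc{F}_{\rm low,1c}$ genuinely differs from the slowly-rotating case, is the interplay near $r=r_+$ between the non-decay of $w'/w$ at the degenerate horizon and the coupling terms: one must verify that the $h_-$ construction, which only achieves the integrable error bound \eqref{eq:h--estimate-error} in the regime $\sqrt{M^2-a^2}\le Me^{-\hat c/p}$, is simultaneously compatible with (i) the positivity window of all the potentials $\Re\mc{V}_{(k)}$, $k=0,\dots,|s|$, at that level of $r$, (ii) the absorption estimate \eqref{eq:low-frequencies-h+y-hat} for the glued $\hat y_{(k)}$, and (iii) the boundary-term bookkeeping, all uniformly as $|a|\to M$. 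Because the admissibility condition \eqref{it:properties-frequencies-low-omega-Lambda-nondegenerate}, $\Lambda-2am\omega\ge\tfrac34|s|$, holds with $m=0$ trivially, and because the Teukolsky--Starobinsky ratio bounds of Lemma~\ref{lemma:properties-frequencies-low-omega}\ref{it:properties-frequencies-low-omega-TS-constants} are uniform down to extremality, the remaining work is a careful but routine chaining of the per-level estimates; the delicate choice of the parameters $p,\tilde p,\epsilon,\delta$ (in that order of smallness, with $\omega_{\rm low}$ and $\tilde a_1$ chosen last) and of $E$ small, $E_W$ large, is dictated exactly as in the model estimates above.
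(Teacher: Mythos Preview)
Your overall architecture matches the paper's: apply $h_{(k)},y_{(k)},\hat y_{(k)}$ at every level $k=0,\dots,|s|$, use near-extremality to make the $h_-$ error integrable via \eqref{eq:h--estimate-error}, glue to the model $y$/$\hat y$ via \eqref{eq:low-frequencies-h+y} and \eqref{eq:low-frequencies-h+y-hat}, and then control boundary terms with a \emph{global} $E_W Q^{W,T}$ (no localization, since $m=0$) together with a small $E Q^T$.

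There is, however, a genuine gap in how you close the hierarchy. You claim the $h$-current coupling to levels $j<k$ ``carries the small factor $a^2/M^2$''. In $\mc{F}_{\rm low,1c}$ one has $a\in[M-\tilde a_1,M]$, so $a^2/M^2$ is essentially $1$ and gives no smallness; the coefficient in Lemma~\ref{lemma:h-estimate-low} is then just a large constant $B$. With only $B$ in the $j<k$ direction and (as you presented it) no smallness identified in the $(k+1)$ direction, the substitution you describe does not close: one gets $\text{LHS}_k\le B\sum_{j<k}\text{LHS}_j+B\,\text{LHS}_{k+1}+\cdots$, and iterating in either direction produces a factor $1-B^2$ with $B>1$.

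The paper's mechanism is the reverse of what you wrote: the smallness lives in the coupling to $\uppsi_{(k+1)}$, not to $\uppsi_{(j)}$ for $j<k$. In Lemma~\ref{lemma:y-estimate-low} the $(k+1)$ error in \eqref{eq:y-estimate-low-k} and \eqref{eq:hat-y-estimate-low-k} carries an explicit $\omega^2$ weight, so after choosing $|\omega|\le\omega_{\rm low}$ small it is $O(\omega_{\rm low})$ times the $(k+1)$ bulk. The $j<k$ error has no small prefactor, but the paper's specific choice $h_{k,\rm int}=(w/\max\{w(R_3^*),w(R_2^*)\})^{-(|s|-k)/2}$ yields the comparison $h_{j,\rm int}\mathbbm{1}_{[R_3^*/2,R_3^*]}=(R_3^*)^{k-j}h_{k,\rm int}\mathbbm{1}_{[R_3^*/2,R_3^*]}$; crucially, the near-extremality relation \eqref{eq:proof-low-1c-critical} ($\Delta(r-M)^{-2}=1+O(p)$ on the support) gives the \emph{same} gain at the $r_+$ end. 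Hence the $j<k$ error at level $k$ is bounded by the already-established $j$-level left-hand side, and one iterates \emph{upward} from $k=0$: at each step the $j<k$ terms are absorbed and only an $O(\omega_{\rm low})$ multiple of the $(k+1)$ bulk survives, which closes at $k=|s|$.
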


We briefly review the strategy. Recall that in $\mc{F}_{\rm low,1}$, both $|\omega|$ and $|\omega-m\upomega_+|$ are small. By Lemma~\ref{lemma:properties-potential-low-omega}, we have $\mc{V}-\omega^2>0$ in a large compact range of $r^*$ values, which we can exploit using a compactly supported $h$ current. Moreover, $\mc{V}'$ has a definite sign as $r^*\to\pm \infty$, so we connect the $h$ current with $y$ currents supported near those ends. The $y$ currents are built so that they start off within the region where $h$ gives a good estimate, but provide strong estimates, and furthermore could absorb the $h$ errors if they are integrable, when we start decreasing $h$ to zero. Finally, we add energy currents which, if localized (when superradiance can occur), produce localization errors in a region where $h$ is strongest.

To be more concrete, let us distinguish between the three subranges of $\mc{F}_{\rm low,1}$. In $\mc{F}_{\rm low,1a}$, since $a$ is small, the coupling terms in the radial ODE for $\Psi$ \eqref{eq:radial-ODE-Psi} can actually be seen as lower order. Thus, our combination of $h$ and $y$ currents need only be applied at the level of \eqref{eq:radial-ODE-Psi} and localized  Killing or Teukolsky--Starobinsky energy currents can be used to control the boundary terms in this possibly superradiant regime (see Figure~\ref{fig:low1a}).

\begin{figure}[htbp]
\centering
\includegraphics[scale=1]{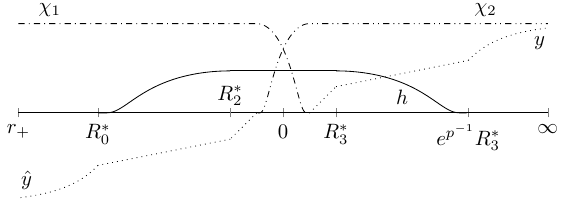}
\caption{Currents in the proof of Theorems~\ref{thm:ode-estimates-Psi-A} and \ref{thm:ode-estimates-Psi-B} for the frequency ranges $\mc{F}_{\rm low,1a}$ (relative size and scale not accurate).}
\label{fig:low1a}
\end{figure}

Turning to $\mc{F}_{\rm low,1b}$ and $\mc{F}_{\rm low,1c}$, we find that the coupling terms are not necessarily lower order. One consequence of this fact is that we find it necessary to apply our virial and Killing energy current techniques to the entire system of transformed equations together in order to close an estimate at the level of $\Psi$. The $h$ and $y$ currents as previously described are applied to each $\uppsi_{(k)}$, $k=0,\dots,|s|$, but a difficulty occurs for $k<|s|$: the subextremal ends $r^*\to \infty$ and $r^*\to \pm \infty$ are quite different and thus, while in the $h$ error term formula
$$h''+(|s|-k)\frac{w'}{w}h'$$
the $w'/w$ term contributes with decay as $r^*\to \infty$, it does not as $r^*\to -\infty$. The upshot is that, for $k<|s|$, the $h$ error terms may come with non-integrable weights as $r^*\to \infty$ and thus we may not be able to absorb them by application of a $y$ current at the level of the $k$th equation. We distinguish between $\mc{F}_{\rm low,1b}$ and $\mc{F}_{\rm low,1c}$ because:
\begin{itemize}
\item $\mc{F}_{\rm low,1c}$ is a range where $a$ is so close to extremality that we can set up an $h$ current where $h''+(|s|-k)\frac{w'}{w}h'$ is bounded by an integrable function, regardless of the fact that the background may be subextremal. Then, from the point of view of our estimates, the $r^*\to \pm \infty$  ends are symmetric and we can construct $y$ currents at each of these ends whose errors are absorbed by the $h$ current and that in turn absorb the $h$ current errors. We note that our construction of $h$ is made possible not only by the fact that $M-a$ is small but also crucially by the compact support assumption on $h$ and the fine tuning of that support in terms of the difference $M-a$. 
\item $\mc{F}_{\rm low,1b}$ is a range where $a$ is genuinely subextremal. Again a $y$ current supported near $r=r_+$ can be made to grow from zero so that its errors are absorved by the $h$ current. But now, if $k<|s|$, we can bound $h''+(|s|-k)\frac{w'}{w}h'$ uniformly in terms of the size of the support of $h$ by a function which is $O(|r^*|^{-1})$ only as $r^*\to -\infty$, hence the $y$ current cannot absorb the $h$ current error. Instead, for each $k$ we use a version of the basic estimate~\ref{eq:basic-estimate-1} from Lemma~\ref{lemma:basic-estimate-1}, together with the subextremality assumption, to convert the error into a boundary term of $\uppsi_{(k)}$ and a bulk term in $\uppsi_{(k+1)}$ which we control by our application of the $y$ current at that level. Therefore, we can still absorb the $h$ errors by application of a $y$ current, but the latter is at a higher $k$ level than the former.
\end{itemize}
For $k<|s|$, after converting the dependence of $y$ estimates on $\Im\mc{V}_{(k)}$ and $\uppsi_{(k)}'$ into a coupling to the $(k+1)$th equation in the system by Lemma~\ref{lemma:h-y-identity-k}, we select $h$ (hence both $y$ currents, which are started off by $h$; see Figure~\ref{fig:low1k}) so as to make that coupling appear with a small constant. Then, setting the inhomogeneity aside, a good bulk term in $k$ can be controlled by, if $k\neq 0$, the bulk terms in $j<k$ and, if $k<|s|$, by a small multiple of a bulk term in $k+1$. By iterating along the system, we can close an estimate for $\Psi$, i.e.\ at the level $k=|s|$.

\begin{figure}[htbp]
\centering
\includegraphics[scale=1]{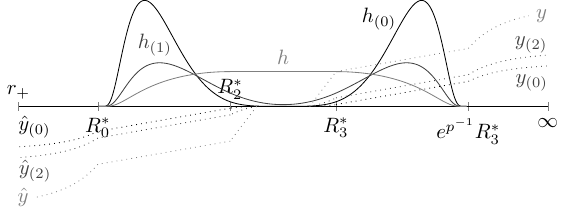}
\caption{Virial currents in the proof of Theorems~\ref{thm:ode-estimates-Psi-A} and \ref{thm:ode-estimates-Psi-B} for the frequency ranges $\mc{F}_{\rm low,1b}$ and $\mc{F}_{\rm low,1c}$ (relative size and scale not accurate).}
\label{fig:low1k}
\end{figure}

Concretely:

\begin{proof}[Proof of Proposition~\ref{prop:low1c}] Let $(\omega,m,\Lambda)\in \mc{F}_{\rm low,1c}$ be an admissible frequency triple, i.e.\ assume all frequency parameters are bounded in terms of $\varepsilon_{\rm width}$ and $\omega_{\rm high}$, $m=0$, $|\omega|\leq \omega_{\rm low}$  is sufficiently small and, for some sufficiently small $\tilde{a}_0$ we have $0\leq M-a<\tilde{a}_1$ .

Let $R_3^*$ be a sufficiently large and positive value, $R_2^*$ be sufficiently negative. Let $p\in(0,1)$ be defined in terms of $M$ and $\tilde{a}_1$ by $\sqrt{M^2-(M-\tilde{a}_1)^2}= e^{-2/p}$ and  define $R_1=r_++(R_2-r_+)e^{-1/p}$ and $R_{0}=r_++(R_1-r_+)/2$. The scaling of the $r$-values $R_1$ and $R_0$ with $p$ is critical to our proof: it ensures that, in practice, the present regime can be treated as if the Kerr black hole were exactly extremal, since
\begin{align}
\Delta(r-M)^{-2} = 1+\lp(\frac{\sqrt{M^2-a^2}}{r-M}\rp)^2 =1+O(p)\,, \label{eq:proof-low-1c-critical}
\end{align}
for $r\geq R_0$.

\medskip
\noindent\textit{The $h$ current.} In $\mc{F}_{\rm low,1c}$, we apply an $h$ current to each equation in the transformed system. Thus, for each $k=0,\dots, |s|$, we consider the function $h_{(k)}$ composed by
\begin{align*}
h_{(k)}=h_{\rm left}+h_{k,\mr{int}}+h_{\rm right}\,.
\end{align*}
Here, $h_{k,\mr{int}}$ is supported in the large region $r^*\in[R_2^*,R_3^*]$ and equal to 
\begin{align*}
h_{k,\mr{int}}= \lp(\frac{w}{\max\{w(R_3^*),w(R_2^*)\}}\rp)^{-(s-k)/2}\mathbbm{1}_{[R_2^*,R_3^*]}\,,
\end{align*}
which produces no errors in terms of $|\uppsi_{(k)}|^2$ by \eqref{eq:h-estimate-low} when $\sigma=1/4$ (see Lemma~\ref{lemma:h-estimate-low}).

At the endpoints of the interval $[R_2^*,R_3^*]$, we glue in the functions $h_{\rm left}$ and $h_{\rm right}$ in a $C^1$ fashion; these are modeled after the currents introduced in \eqref{eq:standard-current-h+} and \eqref{eq:standard-current-h-} respectively.  Indeed:
\begin{align*}
h_{\rm left}&=  \lp[C_{k,0}^-\lp(p-\frac{(1+p)\sqrt{M^2-a^2}}{R_1-r_+}\rp) \log\lp(\frac{r-M}{R_2-M}\rp)+ph_{(k)}(R_2)\frac{r-M}{R_2-M}\rp]\mathbbm{1}_{[R_1,R_2]} \\
&\qquad +C_{k,0}^-\lp[(1+p)\frac{r-r_+}{R_2-r_+}\frac{R_2-M}{r-M}-(1+p)\frac{r-M}{R_2-M}\rp]\mathbbm{1}_{[R_1,R_2]}\\
&\qquad + \lp[C_{N+1}^-\frac{r-r_+}{R_1-r_+}\frac{R_1-M}{r-M}+C_{N+2}^-\frac{R_1-M}{r-M}+C_{N+3}^-\log\lp(\frac{R_1-M}{r-M}\rp)+C_{N+4}^-\frac{r-M}{R_1-M}\rp]\mathbbm{1}_{[R_{0},R_1]}\,,\\
C_{k,0}^-&:=h_{(k)}(R_2)-(R_2-M)\frac{dh_{(k)}}{dr}(R_2)\,, 
\end{align*}
where $C_{N+1}^-$ up to $C_{N+4}^-$  are fixed in terms of the remaining constants by the constraints that $h_{\rm left}$ is $C^1$ at $r=R_1$ and $h_{\rm left}(R_{0})=h_{\rm left}'(R_{0})=0$ (see also \eqref{eq:coefficients-h-}); moreover,
\begin{align*}
h_{\rm right}&=C_{k,0}^+\lp[(1+p)-(1+p)\frac{R_3^*}{r^*}+p\log\lp(\frac{R_3^*}{r^*}\rp)\rp]\mathbbm{1}_{[R_3^*,e^{p^{-1}}R_3^*]}+ph_{(k)}(R_3^*)\frac{R_3^*}{r^*}\mathbbm{1}_{[R_3^*,e^{p^{-1}}R_3^*]}\\
&\qquad + \frac{2}{\log 8 -2}\lp[p(1-e^{-1/p})C_{k,0}^+-e^{-1/p}R_3^*h_{(k)}'(R_3^*)\rp]\\
&\qquad\qquad\times \lp[1-\frac{2e^{p^{-1}}R_3^*}{r^*}+\lp(\frac{r^*}{2e^{p^{-1}}R_3^*}+\frac{2e^{p^{-1}}R_3^*}{r^*}\rp)\log 2-\log\lp(\frac{2r^*}{e^{p^{-1}}R_3^*}\rp)\rp]\mathbbm{1}_{[e^{p^{-1}}R_3^*,2e^{p^{-1}}R_3^*]},,\\
C_{k,0}^+&=h_{(k)}(R_3^*)+R_3^*h_{(k)}'(R_3^*)\,.
\end{align*}
Note that, by \eqref{eq:h+-estimate-error} and \eqref{eq:h--estimate-error}, the errors generated by such currents,
\begin{gather*}
h_{(k)}''\mathbbm{1}_{[R_2^*,R_3^*]^c}\,,\,\,\lp(h_{(k)}''+(|s|-k)\frac{w'}{w}h_{(k)}'\rp)\mathbbm{1}_{[R_2^*,R_3^*]^c} \leq \frac{Bp }{(r^*)^2}\lp[C_{k,0}^-\mathbbm{1}_{[R_0,R_2]}+C_{k,0}^+\mathbbm{1}_{[R_3^*,2e^{p^{-1}}R_3^*]}\rp]\,,
\end{gather*}
are integrable in $r^*$ and come with a small parameter $p$. 

In conclusion, application of the $h_{(k)}$ current to the radial ODE \eqref{eq:transformed-k-separated} gives the estimate (See Lemma~\ref{lemma:h-estimate-low})
\begin{align*}
&\frac34\int_{-\infty}^\infty \lp[h_{(k)}\lp(\Re\mc{V}_{(k)}-\frac{|s|-k}{2}\lp(\frac{w'}{w}\rp)'-\omega^2\rp)- h_{k,\mr{int}}\frac{|s|-k}{4}\lp(\frac{w'}{w}\rp)'\rp]\lp|\uppsi_{(k)}\rp|^2dr^*\\
&\qquad+\int_{-\infty}^\infty \lp\{h\lp|\uppsi_{(k)}'\rp|^2 - \frac{Bp }{(r^*)^2}\lp[C_{k,0}^-\mathbbm{1}_{[R_0,R_2]}+C_{k,0}^+\mathbbm{1}_{[R_3^*,2e^{p^{-1}}R_3^*]}\rp]\lp|\uppsi_{(k)}\rp|^2-h_{(k)}\Re\lp[\mathfrak{G}_{(k)}\overline{\uppsi}_{(k)}\rp]\rp\}dr^*\\
&\quad\leq kB\sum_{j=0}^{k-1}\int_{-\infty}^\infty (h_{\rm left}+h_{\rm right})\lp(\Re\mc{V}_{(j)}-\frac{|s|-j}{2}\lp(\frac{w'}{w}\rp)'-\omega^2\rp)\lp|\uppsi_{(j)}\rp|^2dr^* \numberthis \label{eq:proof-low-1c-intermediate-1}\\
&\quad\qquad +kB\sum_{j=0}^{k-1}\int_{-\infty}^\infty h_{k,\mr{int}}\lp(\Re\mc{V}_{(j)}-\frac{|s|-j}{3}\lp(\frac{w'}{w}\rp)'-\omega^2\rp)\lp|\uppsi_{(j)}\rp|^2dr^*\,.
\end{align*}

\medskip
\noindent \textit{The $y$ and $\hat{y}$ currents.} The next step is to introduce currents that can absorb the error term introduced by $h_{\rm left}$ and $h_{\rm right}$ while also ensure adequate control over $\uppsi_{(k)}$ as $r^*\to \pm \infty$. This is done by using currents based on  the model currents \eqref{eq:low-frequencies-y} and \eqref{eq:low-frequencies-hat-y},  introduced in Section~\ref{sec:low-hierarchy-y}; in the present regime, we define and apply them for each $k=0,\dots, |s|$. Indeed, for some $\tilde{p}\in(0,3/32)$, we set
\begin{align*}
y_{(k)}&=\tilde{p}\mathbbm{1}_{[R_3^*/2,R_3^*]}\int_{\frac{R_3^*}{2}}^{r^*} h_{k,\mr{int}}(x^*)dx^* +y(R_3^*)\lp[1+\frac{64\epsilon}{R_3^*}\lp(\frac{1}{r^*\Re\mc{V}_{(k)}}-\frac{1}{R_3^*\Re\mc{V}_{(k)}|_{r^*=R_3^*}}\rp)\rp]\mathbbm{1}_{[R_3^*, 2e^{p^{-1}}R_3^*]} \\
&\qquad+y(2e^{p^{-1}}R_3^*)\lp[1+\frac{1}{(2e^{p^{-1}}R_3^*)^{\delta}}-\frac{1}{(r^*)^{\delta}}\rp]\mathbbm{1}_{[2e^{p^{-1}}R_3^*, \infty)}\geq 0\,,\\
\hat y_{(k)}&= -\tilde{p}\mathbbm{1}_{[R_2^*,R_2^*/2]}\int_{R_2^*/2}^{r^*} h_{k,\mr{int}}(x^*)dx^*+\hat y_{(k)}(R_2^*)\lp[1-\frac{64\epsilon}{(-R_2^*)}\lp(\frac{1}{r^*\hat{\mc{V}}_{(k)}}-\frac{1}{R_2^*\hat{\mc{V}}_{(k)}|_{r^*=R_2^*}}\rp)\rp]\mathbbm{1}_{[R_0^*,R_2^*]} \\
&\qquad+\hat{y}_{(k)}(R_0^*)\lp[1-\frac{1}{(-R_0^*)^{1/2}}+\frac{1}{(-r^*)^{1/2}}\rp]\mathbbm{1}_{(-\infty, R_0^*]}\leq 0\,.
\end{align*}
Noting the properties of the potential derived in Lemma~\ref{lemma:properties-potential-low-omega}, the critical observation~\eqref{eq:proof-low-1c-critical} and the considerations regarding the model currents \eqref{eq:low-frequencies-y} and \eqref{eq:low-frequencies-hat-y}, it is clear that, as long as $\tilde{a}_1$ or $p$ are sufficiently small, one can choose $\epsilon$ sufficiently small, $R_3^*$ sufficiently large and $R_2^*$ sufficiently negative so that $y',\hat{y}'>0$, and so that
\begin{align*}
\lp(1-\frac{r^*\hat{y}_{(k)}}{\hat{y}_{(k)}'}\rp)\mathbbm{1}_{(-\infty,R_2^*]}\geq b\mathbbm{1}_{(-\infty,R_2^*]}\,,\qquad\lp(1-\frac{r^*{y}_{(k)}}{{y}_{(k)}'}\rp)\mathbbm{1}_{[R_3^*,\infty)}\geq b\mathbbm{1}_{[R_3^*,\infty)}\,,
\end{align*}
for some $b>0$ independent of $R_3^*$. Moreover, if $\epsilon$  and $p$ are sufficiently small,
\begin{align*}
\lp(1-\frac{r^*\hat{y}_{(k)}}{\hat{y}_{(k)}'}\rp)\mathbbm{1}_{[R_0^*,R_2^*]} &\geq 1-\frac{64\epsilon(-r^*)(r-M)}{64\epsilon +\lp(1-\frac{64\epsilon}{(R_2^*)^2\hat{\mc{V}}(R(_2^*)}\rp)(-r^*)(-R_2^*)\hat{\mc{V}}_{(k)}}\frac{\lp(-r^*\hat{\mc{V}}\rp)'}{\hat{\mc{V}}(-r^*)(r-M)}\mathbbm{1}_{[R_0^*,R_2^*]}\\
&\geq 1-\frac{64\epsilon}{64\epsilon +\frac34(-r^*)(-R_2^*)\hat{\mc{V}}_{(k)}}\frac{4(-r^*)(r-r_+)+4(-r^*)e^{p^{-2}}}{2M^2}\mathbbm{1}_{[R_0^*,R_2^*]}\\
&\geq 1-\frac{6\times 64\epsilon}{64\epsilon +\frac34(R_2^*)^2\hat{\mc{V}}_{(k)}(R_2^*)}\mathbbm{1}_{[R_0^*,R_2^*]}\geq b\\
\lp(1-\frac{r^*\hat{y}_{(k)}}{\hat{y}_{(k)}'}\rp)\mathbbm{1}_{(-\infty,R_0^*]}&\geq 1-\frac{(-r^*)^{-1/2}}{2(1-(-R_0^*)^{-1/2}+(-r^*)^{-1/2}}\geq 1-\frac{1}{2R_0^*}\geq b\,. 
\end{align*}

Choosing $\tilde{p}\ll 1$, as long as $\tilde{a}_1$ is sufficiently small that $p\ll \tilde{p}$,  by adapting \eqref{eq:low-frequencies-h+y} and \eqref{eq:low-frequencies-h+y-hat}, we find
\begin{align*}
&\frac34 h_{k\,\mr{int}}\lp(\Re\mc{V}_{(k)}-\frac{|s|-k}{4}\lp(\frac{w'}{w}\rp)' -\omega^2\rp)-\frac{Bp}{(r^*)^2}\lp[C_0^-\mathbbm{1}_{[R_0,R_2]}+C_0^+\mathbbm{1}_{[R_3^*,2e^{p^{-1}}R_3^*]}\rp]\\
&\qquad+\lp[\frac34 y_{(k)}'\omega^2-\frac34 \lp(y_{(k)}\Re\mc{V}_{(k)}\rp)' -\frac14y_{(k)}\Re\mc{V}_{(k)}'\rp]\\
&\quad\qquad +\lp[ \frac34 \hat y'\omega^2-\frac34 \lp(\hat{y}_{(k)}\Re\hat{\mc{V}}_{(k)}\rp)' -\frac14 \hat{y}_{(k)}\Re\hat{\mc{V}}_{(k)}'\rp]\numberthis \\
&\quad\geq \frac12 h_{k\,\mr{int}}\lp(\Re\mc{V}_{(k)}-\frac{(|s|-k)}{3}\lp(\frac{w'}{w}\rp)' -\omega^2\rp) \\
&\quad\qquad+\frac34 \hat{y}_{(k)}'(\omega-m\upomega_+)^2 -\frac12 \hat{y}_{(k)}\Re\hat{\mc{V}}_{(k)}'\mathbbm{1}_{(-\infty,R_2^*]} +\frac34 y_{(k)}'\omega^2 -\frac12 y_{(k)}\Re\mc{V}_{(k)}'\mathbbm{1}_{[R_3^*,\infty)}\,.
\end{align*}

We want to add \eqref{eq:proof-low-1c-intermediate-1} and the estimates \eqref{eq:y-estimate-low-k} and \eqref{eq:hat-y-estimate-low-k} from Lemma~\ref{lemma:y-estimate-low} (the latter with the further simplification laid out below its statement). Note that, if $|\omega|\leq \omega_{\rm low}$ is sufficiently small, $\omega|R_3^*|,\omega|R_2^*|\ll 1$; on the other hand, if $R_3^*$ and $|R_2^*|$ are sufficiently large, the errors
\begin{align*}
&\int_{-\infty}^\infty \lp[y_{(k)}'w\lp(\frac{wy_{(k)}^2}{(y_{(k)}')^2}\rp)\mathbbm{1}_{[R_3^*/2,R_3^*]}+y'w\lp(\frac{w\hat{y}_{(k)}^2}{(\hat{y}_{(k)}')^2}\rp)\mathbbm{1}_{[R_2^*,R_2^*/2]}\rp]|\uppsi_{(k)}|^2\\
&\quad\ll \int_{-\infty}^\infty h_{k\,\mr{int}}\lp(\Re\mc{V}_{(k)}-\frac{|s|-k}{3}\lp(\frac{w'}{w}\rp)' -\omega^2\rp)|\uppsi_{(k)}|^2\,,
\end{align*}
are easily absorbed into the left hand side of \eqref{eq:proof-low-1c-intermediate-1}. We obtain
\begin{align*}
&\int_{-\infty}^\infty \lp[\frac34 \lp(\hat{y}_{(k)}'+y_{(k)}'\rp)\omega^2 -\frac12 \hat{y}_{(k)}\Re\hat{\mc{V}}_{(k)}'\mathbbm{1}_{(-\infty,R_2^*]}  -\frac12 y_{(k)}\Re\mc{V}_{(k)}'\mathbbm{1}_{[R_3^*,\infty)} \rp]|\uppsi_{(k)}|^2dr^*\\
&\qquad +\int_{-\infty}^\infty \lp\{\lp(y_{(k)}'+\hat y_{(k)}'+h\rp)|\uppsi_{(k)}'|^2+\frac14 h_{k,\mr{int}}\lp(\Re\mc{V}_{(k)}-\frac{|s|-k}{3}\lp(\frac{w'}{w}\rp)' -\omega^2\rp)\lp|\uppsi_{(k)}\rp|^2\rp\}dr^*\\
&\qquad +\int_{-\infty}^\infty \frac34 (h_{k,\mr{left}}+h_{k,\mr{right}})\lp(\Re\mc{V}_{(k)}-\frac{(|s|-k)}{2}\lp(\frac{w'}{w}\rp)' -\omega^2\rp)\lp|\uppsi_{(k)}\rp|^2dr^*\\
&\quad\leq  2y_{(k)}(\infty)\omega^2\lp|\swei{A}{s}_{k,\mc{I}^{+\sign s}}\rp|^2+2|\hat y_{(k)}(-\infty)|\omega^2\lp|\swei{A}{s}_{k,\mc{H}^{-\sign s}}\rp|^2\\
&\quad\qquad+\int_{-\infty}^\infty \lp\{h_{(k)}\Re\lp[\mathfrak{G}_{(k)}\overline{\uppsi_{(k)}}\rp] +2(y_{(k)}+\hat{y}_{(k)})\Re\lp[\mathfrak{G}_{(k)}\overline{\uppsi_{(k)}}'\rp]\rp\}dr^* \numberthis \label{eq:proof-low-1c-intermediate-2}\\
&\quad\qquad - kB\sum_{j=0}^{k-1} \int_{-\infty}^\infty \lp[y_{(k)}\Re\mc{V}'_{(j)}+\hat{y}_{(k)}\Re\hat{\mc{V}}'_{(j)}-\lp(R_3^*\mathbbm{1}_{\lp[\tfrac{R_3^*}{2},R_3^*\rp]} +|R_2^*|\mathbbm{1}_{\lp[R_2^*,\tfrac{R_2^*}{2}\rp]}\rp)h_{k,\mr{int}} w\rp]|\uppsi_{(j)}|^2dr^* \\
&\quad\qquad +kB\sum_{j=0}^{k-1} \int_{-\infty}^\infty \lp[h_{k}\lp(\Re\mc{V}_{(j)}-\frac{|s|-j}{2}\lp(\frac{w'}{w}\rp)' -\omega^2\rp)+h_{k,\mr{int}}\frac{|s|-j}{6}\lp(\frac{w'}{w}\rp)' \rp]|\uppsi_{(j)}|^2dr^*\\
&\quad\qquad - (k-1)B\sum_{j=0}^{k-2} \int_{-\infty}^\infty \lp[ \frac{1}{r^2}y_{(k)}\Re\mc{V}'_{(j+1)}+(r-r_+)^2\hat{y}_{(k)}\Re\hat{\mc{V}}_{(j+1)}'\rp]|\uppsi_{(j+1)}|^2dr^*\\
&\quad\qquad - (|s|-k)B\int_{-\infty}^\infty \omega^2 \lp[y_{(k)}\Re\mc{V}'_{(k+1)}+\hat{y}_{(k)}\Re\hat{\mc{V}}'_{(k+1)}\rp]|\uppsi_{(k+1)}|^2dr^*\,.
\end{align*}

By our definition of $h_{k,\mr{int}}$, $h_{j,\mr{int}}\mathbbm{1}_{[R_3^*/2,R_3^*]}=(R_3^*)^{k-j}h_{k,\mr{int}}\mathbbm{1}_{[R_3^*/2,R_3^*]}$; likewise, by our crucial assumption \eqref{eq:proof-low-1c-critical}, the same occurs at the other end, i.e.\  $h_{j,\mr{int}}\mathbbm{1}_{[R_2^*,R_2^*/2]}=(R_2^*)^{k-j}h_{k,\mr{int}}\mathbbm{1}_{[R_2^*,R_2^*/2]}$. The same gain is inherited by $y$ and $\hat{y}$. Consequently, lines 2 through 4 of \eqref{eq:proof-low-1c-intermediate-2} are controlled by the expression on the left hand side of \eqref{eq:proof-low-1c-intermediate-2} when $k$ is replaced by $j$. 
Thus, we have:
\begin{align*}
&\int_{-\infty}^\infty \lp[\frac34 \lp(\hat{y}_{(k)}'+y_{(k)}'\rp)\omega^2 -\frac12 \hat{y}_{(k)}\Re\hat{\mc{V}}_{(k)}'\mathbbm{1}_{(-\infty,R_2^*]}  -\frac12 y_{(k)}\Re\mc{V}_{(k)}'\mathbbm{1}_{[R_3^*,\infty)} \rp]|\uppsi_{(k)}|^2dr^*\\
&\qquad +\int_{-\infty}^\infty \lp\{\lp(y_{(k)}'+\hat y_{(k)}'+h\rp)|\uppsi_{(k)}'|^2+\frac14 h_{k,\mr{int}}\lp(\Re\mc{V}_{(k)}-\frac{|s|-k}{3}\lp(\frac{w'}{w}\rp)' -\omega^2\rp)\lp|\uppsi_{(k)}\rp|^2\rp\}dr^*\\
&\qquad +\int_{-\infty}^\infty \frac34 (h_{k,\mr{left}}+h_{k,\mr{right}})\lp(\Re\mc{V}_{(k)}-\frac{(|s|-k)}{2}\lp(\frac{w'}{w}\rp)' -\omega^2\rp)\lp|\uppsi_{(k)}\rp|^2dr^*\\
&\quad\leq  2y_{(k)}(\infty)\omega^2\lp|\swei{A}{s}_{k,\mc{I}^{+\sign s}}\rp|^2+2|\hat y_{(k)}(-\infty)|\omega^2\lp|\swei{A}{s}_{k,\mc{H}^{-\sign s}}\rp|^2\\
&\quad\qquad + kB\sum_{j=0}^{k-1} \int_{-\infty}^\infty \lp[h_{j,\mr{int}}\lp(\Re\mc{V}_{(j)}-\frac{(|s|-j)}{3}\lp(\frac{w'}{w}\rp)' -\omega^2\rp) - y_{(j)}\Re\mc{V}'_{(j)}\mathbbm{1}_{[R_3^*,\infty)}-\hat{y}_{(j)}\Re\hat{\mc{V}}'_{(j)}\mathbbm{1}_{(-\infty,R_2^*]}  \rp. \\
&\qquad\qquad\qquad\qquad\qquad\qquad\qquad\lp. +(h_{j}-h_{j,\mr{int}})\lp(\Re\mc{V}_{(j)}-\frac{(|s|-j)}{2}\lp(\frac{w'}{w}\rp)' -\omega^2\rp)\rp]|\uppsi_{(j)}|^2dr^* \\
&\qquad  -(|s|-k)\omega_{\rm low}\int_{-\infty}^\infty  \lp[y_{(k+1)}\Re\mc{V}'_{(k+1)}+\hat{y}_{(k+1)}\Re\hat{\mc{V}}'_{(k+1)}\rp]|\uppsi_{(k+1)}|^2dr^*\,. \numberthis \label{eq:proof-low-1c-intermediate-3}
\end{align*}

\medskip
\noindent \textit{The Killing energy currents.} We now need to control the boundary terms appearing on the right hand side of \eqref{eq:proof-low-1c-intermediate-3}. Since $m=0$ in $\mc{F}_{\rm low, 1c}$, there is no superradiance so we can apply global energy currents. From Lemmas~\ref{lemma:Killing-identity-k} and \ref{lemma:Killing-current-errors-bdd}, a multiple $E\propto \max_k\lp\{y_k(\infty),\hat y_k(-\infty)\rp\}$ of the Killing $T$-energy current generates errors which, choosing $\omega_{\rm low}$ sufficiently small depending on $R_3^*$, $p$, $\omega_{\rm high}$ and $\varepsilon_{\rm width}^{-1}$, can be controlled by the left hand side and last line of \eqref{eq:proof-low-1c-intermediate-3}. Hence, we have
\begin{align*}
&\int_{-\infty}^\infty \frac12\lp[\frac34 \lp(\hat{y}_{(k)}'+y_{(k)}'\rp)\omega^2 -\frac12 \hat{y}_{(k)}\Re\hat{\mc{V}}_{(k)}'\mathbbm{1}_{(-\infty,R_2^*]}  -\frac12 y_{(k)}\Re\mc{V}_{(k)}'\mathbbm{1}_{[R_3^*,\infty)} \rp]|\uppsi_{(k)}|^2dr^*\\
&\qquad +\int_{-\infty}^\infty \frac12\lp\{\lp(y_{(k)}'+\hat y_{(k)}'+h\rp)|\uppsi_{(k)}'|^2+\frac14 h_{k,\mr{int}}\lp(\Re\mc{V}_{(k)}-\frac{|s|-k}{3}\lp(\frac{w'}{w}\rp)' -\omega^2\rp)\lp|\uppsi_{(k)}\rp|^2\rp\}dr^*\\
&\qquad +\int_{-\infty}^\infty \frac38 (h_{k,\mr{left}}+h_{k,\mr{right}})\lp(\Re\mc{V}_{(k)}-\frac{(|s|-k)}{2}\lp(\frac{w'}{w}\rp)' -\omega^2\rp)\lp|\uppsi_{(k)}\rp|^2dr^*\\
&\qquad +y_{(k)}(\infty)\omega^2\lp|\swei{A}{s}_{k,\mc{I}^{+}}\rp|^2+|\hat y_{(k)}(-\infty)|\omega^2\lp|\swei{A}{s}_{k,\mc{H}^{+}}\rp|^2\\
&\quad\leq 5E\omega^2\lp|\swei{A}{s}_{k,\mc{I}^{+}}\rp|^2+5E\omega^2\lp|\swei{A}{s}_{k,\mc{H}^{-}}\rp|^2\\
&\quad\qquad + kB\sum_{j=0}^{k-1} \int_{-\infty}^\infty \lp[h_{j,\mr{int}}\lp(\Re\mc{V}_{(j)}-\frac{(|s|-j)}{3}\lp(\frac{w'}{w}\rp)' -\omega^2\rp) - y_{(j)}\Re\mc{V}'_{(j)}\mathbbm{1}_{[R_3^*,\infty)}-\hat{y}_{(j)}\Re\hat{\mc{V}}'_{(j)}\mathbbm{1}_{(-\infty,R_2^*]}  \rp. \\
&\qquad\qquad\qquad\qquad\qquad\qquad\qquad\lp. +(h_{j}-h_{j,\mr{int}})\lp(\Re\mc{V}_{(j)}-\frac{(|s|-j)}{2}\lp(\frac{w'}{w}\rp)' -\omega^2\rp)\rp]|\uppsi_{(j)}|^2dr^* \\
&\qquad  -(|s|-k)\omega_{\rm low}\int_{-\infty}^\infty  \lp[y_{(k+1)}\Re\mc{V}'_{(k+1)}+\hat{y}_{(k+1)}\Re\hat{\mc{V}}'_{(k+1)}\rp]|\uppsi_{(k+1)}|^2dr^*\,. \numberthis \label{eq:proof-low-1c-intermediate-3b}
\end{align*}
Iterating for $k=0,\dots,|s|$, we obtain
\begin{align*}
&\frac{1}{16}\int_{-\infty}^\infty \lp[\hat{y}_{(k)}'(\omega-m\upomega_+)^2 - \hat{y}_{(k)}\Re\hat{\mc{V}}_{(k)}'\mathbbm{1}_{(-\infty,R_2^*]} + y_{(k)}'\omega^2 - y_{(k)}\Re\mc{V}_{(k)}'\mathbbm{1}_{[R_3^*,\infty)} \rp]|\uppsi_{(k)}|^2dr^*\\
&\qquad +\frac{1}{16}\int_{-\infty}^\infty \lp\{\lp(y_{(k)}'+\hat y_{(k)}'+h\rp)|\uppsi_{(k)}'|^2+ h_{k,\mr{int}}\lp(\Re\mc{V}_{(k)}-\frac{3(|s|-k)}{4}\lp(\frac{w'}{w}\rp)' -\omega^2\rp)\lp|\uppsi_{(k)}\rp|^2\rp\}dr^*\\
&\qquad+\frac{1}{16}\int_{-\infty}^\infty  (h_{k,\mr{left}}+h_{k,\mr{right}})\lp(\Re\mc{V}_{(k)}-\frac{(|s|-k)}{2}\lp(\frac{w'}{w}\rp)' -\omega^2\rp)\lp|\uppsi_{(k)}\rp|^2dr^*\\
&\qquad +E\omega^2\sum_{j=0}^k\lp\{\lp|\swei{A}{s}_{j,\mc{I}^{+}}\rp|^2+\lp|\swei{A}{s}_{j,\mc{H}^{+}}\rp|^2\rp\}\\
&\quad\leq  BE\omega^2\sum_{j=0}^{k} \lp\{\lp|\swei{A}{s}_{j,\mc{I}^{-}}\rp|^2+\lp|\swei{A}{s}_{j,\mc{H}^{-}}\rp|^2\rp\}-B(|s|-k)\int_{-\infty}^\infty  \lp[y_{(k+1)}\Re\mc{V}'_{(k+1)}+\hat{y}_{(k+1)}\Re\hat{\mc{V}}'_{(k+1)}\rp]|\uppsi_{(k+1)}|^2dr^*
\\&\quad\qquad +B\sum_{j=0}^{k}\int_{-\infty}^\infty \lp\{h_{(j)}\Re\lp[\mathfrak{G}_{(j)}\overline{\uppsi_{(j)}}\rp] +2(y_{(j)}+\hat{y}_{(j)})\Re\lp[\mathfrak{G}_{(j)}\overline{\uppsi_{(j)}}'\rp]-E\omega\Im\lp[\mathfrak{G}_{(j)}\overline{\uppsi_{(j)}}\rp]\rp\}dr^* 
\numberthis \label{eq:proof-low-1c-intermediate-4}\,.
\end{align*}
We can now rescale the currents so that $\max_k\{y_{(k)}(\infty),|\hat{y}_{(k)}(-\infty)|\}=1$, and so we may drop $E$ from above.

\medskip
\noindent \textit{The Teukolsky--Starobinsky energy currents.} Alternatively to the last step, we can iterate \eqref{eq:proof-low-1c-intermediate-3b} in $k=0,\dots |s|$ and then introduce a multiple, $E_W$, of a $T$-type Teukolsky--Starobinsky energy current, which does not interact with the coupling, at the level $k=|s|$. Then, the boundary terms are
\begin{align*}
&\lp[2y(\infty)\lp\{1,1+B\sum_{j=0}^{|s|-1} \frac{y_{(j)}(\infty)}{y(\infty)} \frac{\lp|\swei{A}{s}_{j,\mc{I}^+}\rp|^2}{\lp|\swei{A}{s}_{\mc{I}^+}\rp|^2}\rp\}-E_W\lp\{1,\frac{\mathfrak{C}_s}{\mathfrak{D}_s^{\mc{I}}}-E\rp\}\rp]\omega^2\lp|\swei{A}{s}_{\mc{I}^+}\rp|^2\\
&\qquad+\lp[2y(\infty)\lp\{1+B\sum_{j=0}^{|s|-1} \frac{y_{(j)}(\infty)}{y(\infty)}\frac{\lp|\swei{A}{s}_{j,\mc{I}^-}\rp|^2}{\lp|\swei{A}{s}_{\mc{I}^-}\rp|^2},1\rp\}+E_W\lp\{\frac{\mathfrak{C}_s}{\mathfrak{D}_s^{\mc{I}}},1\rp\}+E\rp]\omega^2\lp|\swei{A}{s}_{\mc{I}^-}\rp|^2\\
&\qquad +\lp[2|\hat y(-\infty)|\lp\{1+B\sum_{j=0}^{|s|-1} \frac{\hat{y}_{(j)}(-\infty)}{\hat y(-\infty)} \frac{\lp|\swei{A}{s}_{j,\mc{H}^+}\rp|^2}{\lp|\swei{A}{s}_{\mc{H}^+}\rp|^2},1\rp\}-E_W\lp\{\frac{\mathfrak{C}_s}{\mathfrak{D}_s^{\mc{H}}},1\rp\}-E\rp]\omega^2\lp|\swei{A}{s}_{\mc{H}^+}\rp|^2\\
&\qquad+\lp[2|\hat{y}(-\infty)|\lp\{1,1+B\sum_{j=0}^{|s|-1} \frac{\hat{y}_{(j)}(-\infty)}{\hat y(-\infty)}\frac{\lp|\swei{A}{s}_{j,\mc{H}^-}\rp|^2}{\lp|\swei{A}{s}_{\mc{H}^-}\rp|^2}\rp\}+E_W\lp\{1,\frac{\mathfrak{C}_s}{\mathfrak{D}_s^{\mc{H}}}\rp\}+E\rp]\omega^2\lp|\swei{A}{s}_{\mc{H}^-}\rp|^2\\
&\quad \leq \omega^2\lp\{\lp[3y(\infty)-\frac13 E_W-E\rp]\lp|\swei{A}{s}_{\mc{I}^+}\rp|^2+\lp[3y(\infty)+\frac53 E_W+E\rp]\lp|\swei{A}{s}_{\mc{I}^-}\rp|^2\rp\}\\
&\qquad\quad+\omega^2\lp\{\lp[3|\hat y(-\infty)|-\frac13 E_W-E\rp]\lp|\swei{A}{s}_{\mc{H}^+}\rp|^2+\lp[3|\hat y(-\infty)|+\frac53 E_W+E\rp]\lp|\swei{A}{s}_{\mc{H}^-}\rp|^2\rp\}\\
&\leq \omega^2\lp\{-\lp|\swei{A}{s}_{\mc{I}^+}\rp|^2-\lp|\swei{A}{s}_{\mc{H}^+}\rp|^2+(5E_W+3E)\lp|\swei{A}{s}_{\mc{I}^-}\rp|^2+(5E_W+3E)\lp|\swei{A}{s}_{\mc{H}^-}\rp|^2\rp\}\max\{y(\infty),|\hat{y}(-\infty)|\}\,.
\end{align*} 
Here, the alternatives given are for $\sign s =\pm 1$, respectively. The conclusion is obtained appealing to Lemma~\ref{lemma:properties-frequencies-low-omega}\ref{it:properties-frequencies-low-omega-TS-constants}--\ref{it:properties-frequencies-low-omega-bdry-terms}, using the smallness of $\tilde{a}_1$, making  $|\omega|\leq \omega_{\rm low}$ sufficiently small depending on $R_2^*,R_3^*$ (i.e.\ on $\tilde{a}_1$) and choosing some $E_W\geq 12$.
\end{proof}

\begin{proof}[Proof of Propositions~\ref{prop:low1a} and \ref{prop:low1b}]
Let $(a,\omega,m,\Lambda)\in \mc{F}_{\rm low,1a}\cup \mc{F}_{\rm low,1b}$, i.e.\ assume all frequency parameters are bounded in terms of $\varepsilon_{\rm width}$ and $\omega_{\rm high}$, $|\omega|\leq \omega_{\rm low}$  is sufficiently small, for some small $\tilde{a}_1$ fixed by the previous proof we have $0\leq a\leq M-\tilde{a}_1$, and either $|a|\leq \tilde{a}_0$ or $m=0$.

Let $R_3^*$ be a sufficiently large and positive value and let $p\in(0,1)$ be a small value to be fixed.

\medskip
\noindent\textit{The $h$ current.} In $\mc{F}_{\rm low,1a}$ and $\mc{F}_{\rm low,1b}$, as $a$ is subextremal, we cannot expect the model current \eqref{eq:standard-current-h-} to generate small, integrable error terms when $k\leq s\neq 0$. At it turns out, in $\mc{F}_{\rm low,1a}$, it is enough to apply currents only at the level $k=|s|$, using the smallness of $a$ to obtain easy control of any coupling errors. In $\mc{F}_{\rm low,1b}$, however, we must apply multiplier currents to all $0\leq k\leq |s|$; there, we settle for smallness of the $h$ current errors, dropping the integrability requirement: for the $r^*<0$ end, we apply the model current \eqref{eq:standard-current-h+} with symmetric argument. 

As the currents we consider are the same in both cases, though those for $k<|s|$ are not necessary in $\mc{F}_{\rm low,1a}$, we construct them together. For each $k=0,\dots, |s|$, we consider the function $h_{(k)}$ composed by
\begin{align*}
h_{(k)}=h_{\rm left}+h_{k,\mr{int}}+h_{\rm right}\,,
\end{align*}
writing $h$ for $h_{(|s|)}$. Here, $h_{k,\mr{int}}$ is  given by (c.f. the case $\mc{F}_{\rm low, 1c}$)
\begin{align*}
h_{k,\mr{int}}= \lp(\frac{w}{\max\{w(R_3^*),w(-R_3^*)\}}\rp)^{-(s-k)/2}\mathbbm{1}_{[-R_3^*,R_3^*]}\,,
\end{align*}
which produces no errors in terms of $|\uppsi_{(k)}|^2$ by \eqref{eq:h-estimate-low} when $\sigma=1/4$ (see Lemma~\ref{lemma:h-estimate-low} and the remarks at the end of Section~\ref{sec:low-hierarchy-h}); $h_{\rm right}$ is defined in the same way as in the case $\mc{F}_{\rm low, 1c}$ and 
\begin{align*}
h_{\rm left}&=C_{k,0}^-\lp[(1+p)-(1+p)\frac{R_3^*}{|r^*|}+p\log\lp(\frac{R_3^*}{|r^*|}\rp)\rp]\mathbbm{1}_{[-e^{p^{-1}}R_3^*,-R_3]}+ph_{(k)}(R_3^*)\frac{R_3^*}{|r^*|}\mathbbm{1}_{[-e^{p^{-1}}R_3^*,-R_3^*]}\\
&\qquad + \frac{2}{\log 8 -2}\lp[p(1-e^{-1/p})C_{k,0}^--e^{-1/p}R_3^*h_{(k)}'(-R_3^*)\rp]\\
&\qquad\qquad\times \lp[1-\frac{2e^{p^{-1}}R_3^*}{|r^*|}+\lp(\frac{|r^*|}{2e^{p^{-1}}R_3^*}+\frac{2e^{p^{-1}}R_3^*}{|r^*|}\rp)\log 2-\log\lp(\frac{2|r^*|}{e^{p^{-1}}R_3^*}\rp)\rp]\mathbbm{1}_{[-2e^{p^{-1}}R_3^*,e^{p^{-1}}R_3^*]},,\\
C_{k,0}^-&=h_{(k)}(-R_3^*)+R_3^*h_{(k)}'(-R_3^*)\,.
\end{align*}

As we have seen for $\mc{F}_{\rm low, 1c}$ the errors generated by $h_{\rm right}$ for each $k\leq |s|$ are integrable in $r^*$ and come with a small parameter $p$; thus they can will be absorbed by a suitable choice of $y$ current. On the other hand, by \eqref{eq:h+-estimate-error-weak}, errors due to $h_{\rm left}$ come with a smallness parameter but are not integrable for $0\leq k<|s|$: 
\begin{gather*}
\lp(h_{\rm left}''+(|s|-k)\frac{w'}{w}h_{\rm left}'\rp) \leq Bp C_{k,0}^-\lp(\frac{1}{(r^*)^2}+\frac{|s|-k}{|r^*|}\rp)\mathbbm{1}_{[-2e^{p^{-1}}R_3^*,-R_3^*]}\,.
\end{gather*}
For $0\leq k<|s|$, these errors cannot be absorbed by any $y$ current. In $\mc{F}_{\rm low, 1b}$, where this is pertinent, we use the basic estimate \eqref{eq:basic-estimate-1} from Lemma~\ref{lemma:basic-estimate-1} instead, with 
\begin{align*}
c&=-\log |r^*|\mathbbm{1}_{[-2e^{p^{-1}}R_3^*,R_3^*]}+\lp(\frac{1}{|r^*|}-\frac{1}{2e^{p^{-1}}R_3^*}-\log(2e^{p^{-1}}R_3^*)\rp)\mathbbm{1}_{(-\infty,-2e^{p^{-1}}R_3^*]}\,,\\
c'&=\frac{1}{|r^*|}\mathbbm{1}_{[-2e^{p^{-1}}R_3^*,R_3^*}+\frac{1}{(r^*)^2}\mathbbm{1}_{(-\infty,-2e^{p^{-1}}R_3^*]}\,.
\end{align*} 

\begin{figure}[b]
\centering
\includegraphics[scale=1]{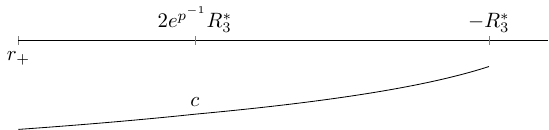}
\caption{A weight $c(r)$ for application of basic estimate \eqref{eq:basic-estimate-1} from Lemma~\ref{lemma:basic-estimate-1} in the analysis of $\mc{F}_{\rm low,1b}$.}
\label{fig:low1b_c}
\end{figure}

As long as $R_3^*$ is sufficiently large depending on $\tilde{a}_0$, $wc^2/c'$ is a decreasing function. If moreover $R_3^*$ is sufficiently large depending on $p$, $R_3\geq 10(M\tilde{a}_1p)^{-1}$, the latter is bounded above independently of the parameters:
\begin{align*}
\frac{wc^2}{c'}&\leq B w(r^*)^2\lp(\log(2e^{p^{-1}}R_3^*)\rp)^2 \leq B\exp\lp(-\frac{\sqrt{M^2-a^2}}{Mr_+}|r^*|\rp)|r^*|^2\lp(\log(2e^{p^{-1}}R_3^*)\rp)^2\mathbbm{1}_{(-\infty, -R_3^*]}\\
&\leq B\exp\lp(-\frac{\sqrt{M^2-a^2}}{Mr_+}R_3^*\rp)(R_3^*)^2\lp(\log(2e^{p^{-1}}R_3^*)\rp)^2\mathbbm{1}_{(-\infty, -R_3^*]} \leq B\,.
\end{align*}
On the other hand, as long as  $R_3\leq e^{B/p}$ for some finite $B$, we also have 
\begin{align*}
p\log(2e^{p^{-1}}R_3^*) =1+p\log(2R_3^*) \leq B\,.
\end{align*}
Thus, from Lemma \eqref{lemma:basic-estimate-1}, for $0\leq j<|s|$, we obtain
\begin{align*}
& \int_{-\infty}^{-R_3^*} p c'\lp|\uppsi_{(k+1)}\rp|^2dr^*\\
&\quad\leq Bp\log(2e^{p^{-1}}R_3^*)\lp|\swei{A}{s}_{k,\mc{H}^{-\sign s}}\rp|^2+ Bp\int_{-\infty}^{-R_3^*} w \frac{wc^2}{c'}\lp|\uppsi_{(k+1)}\rp|^2dr^*\\
&\quad\leq B \lp|\swei{A}{s}_{k,\mc{H}^{-\sign s}}\rp|^2 + \int_{-\infty}^\infty pw\mathbbm{1}_{(-\infty,-R_3^*]} \lp|\uppsi_{(k+1)}\rp|^2dr^*\,.
\end{align*}

By Lemma~\ref{lemma:h-estimate-low}, application of the $h_{(k)}$ current to the $k$th radial ODE \eqref{eq:transformed-k-separated} gives the estimate 
\begin{align*}
&\frac34\int_{-\infty}^\infty \lp[h_{(k)}\lp(\Re\mc{V}_{(k)}-\frac{(|s|-k)}{2}\lp(\frac{w'}{w}\rp)'-\omega^2\rp)- h_{k,\mr{int}}\frac{(|s|-k)}{4}\lp(\frac{w'}{w}\rp)'\rp]\lp|\uppsi_{(k)}\rp|^2dr^*\\
&\qquad+\int_{-\infty}^\infty \lp\{h_{(k)}\lp|\uppsi_{(k)}'\rp|^2 - \frac{Bp}{(r^*)^2} \lp[C_{k,0}^-\mathbbm{1}_{[-2e^{p^{-1}}R_3^*,-R_3^*]}+C_{k,0}^+\mathbbm{1}_{[R_3^*,2e^{p^{-1}}R_3^*]}\rp]\lp|\uppsi_{(k)}\rp|^2\rp\}dr^*\\
&\quad\leq B(|s|-k)C_{k,0}^- \lp|\swei{A}{s}_{k,\mc{H}^{-\sign s}}\rp|^2+ \int_{-\infty}^\infty h_{(k)}\Re\lp[\mathfrak{G}_{(k)}\overline{\uppsi}_{(k)}\rp]dr^* \\
&\quad\qquad+ (|s|-k)B C_{k,0}^-\int_{-\infty}^{-R_3^*}  pw \lp|\uppsi_{(k+1)}\rp|^2dr^*\\
&\quad\qquad +k(|s|-k+a^2)B\sum_{j=0}^{k-1}\int_{-\infty}^\infty (h_{\rm left}+h_{\rm right})\lp(\Re\mc{V}_{(j)}-\frac{(|s|-j)}{2}\lp(\frac{w'}{w}\rp)'-\omega^2\rp)\lp|\uppsi_{(j)}\rp|^2dr^* \numberthis \label{eq:proof-low-1b-intermediate-1}\\
&\quad\qquad +k(|s|-k+a^2)B\sum_{j=0}^{k-1}\int_{-\infty}^\infty h_{k,\mr{int}}\lp(\Re\mc{V}_{(j)}-\frac{(|s|-j)}{4}\lp(\frac{w'}{w}\rp)'-\omega^2\rp)\lp|\uppsi_{(j)}\rp|^2dr^*\,.
\end{align*}
We recall the reader that only the case $k=|s|$ of \eqref{eq:proof-low-1b-intermediate-1} need be taken into account to deal with the regime $\mc{F}_{\rm low, 1a}$, where $a$ is small.

\medskip
\noindent\textit{The $y$ currents.}  We further consider currents $y_{(k)}$ and $\hat{y}_{(k)}$ similarly constructed to the case of $\mc{F}_{\rm low, 1c}$, i.e.\ based on the model currents \eqref{eq:low-frequencies-y} and \eqref{eq:low-frequencies-hat-y}: for some $\epsilon>0$ and $\tilde{p}>0$,
\begin{align*}
y_{(k)}&=\tilde{p}\mathbbm{1}_{[R_3^*/2,R_3^*]}\int_{\frac{R_3^*}{2}}^{r^*} h_{k,\mr{int}}(x^*)dx^* +y(R_3^*)\lp[1+\frac{64\epsilon}{R_3^*}\lp(\frac{1}{r^*\Re\mc{V}_{(k)}}-\frac{1}{R_3^*\Re\mc{V}_{(k)}|_{r^*=R_3^*}}\rp)\rp]\mathbbm{1}_{[R_3^*, 2e^{p^{-1}}R_3^*]} \\
&\qquad+y(2e^{p^{-1}}R_3^*)\lp[1+\frac{1}{(2e^{p^{-1}}R_3^*)^{\delta}}-\frac{1}{(r^*)^{\delta}}\rp]\mathbbm{1}_{[2e^{p^{-1}}R_3^*, \infty)}\geq 0\,,\\
\hat y_{(k)}&= -\tilde{p}\mathbbm{1}_{[-R_3^*,-R_3^*/2]}\int_{-R_3^*/2}^{r^*} h_{k,\mr{int}}(x^*)dx^*\\
&\qquad+\hat y_{(k)}(-R_3^*)\lp[1-\frac{64\epsilon}{R_3^*}\lp(\frac{1}{r^*\hat{\mc{V}}_{(k)}}-\frac{1}{(-R_3^*)\hat{\mc{V}}_{(k)}|_{r^*=-R_3^*}}\rp)\rp]\mathbbm{1}_{[-2e^{p^{-1}}R_3^*,-R_3^*]} \\
&\qquad+\hat{y}_{(k)}(-2e^{p^{-1}}R_3^*)\lp[1-\frac{1}{(2e^{p^{-1}}R_3^*)^{1/2}}+\frac{1}{(-r^*)^{1/2}}\rp]\mathbbm{1}_{(-\infty, -2e^{p^{-1}}R_3^*]}\leq 0\,.
\end{align*}

As in the case of $\mc{F}_{\rm low, 1c}$, we can fix $\epsilon$ and $\tilde{p}$ such that, as long as $R_3^*$ is sufficiently large, $\hat{y}',y'>0$ and
\begin{align*}
&\frac34 h_{k,\mr{int}}\lp(\Re\mc{V}_{(k)}-\frac{(|s|-k)}{4}\lp(\frac{w'}{w}\rp)' -\omega^2\rp)-\frac{Bp}{(r^*)^2}\lp[C_{k,0}^-\mathbbm{1}_{[-2e^{p^{-1}}R_3^*,-R_3^*]}+C_{k,0}^+\mathbbm{1}_{[R_3^*,2e^{p^{-1}}R_3^*]}\rp]\\
&\qquad+\lp[\frac34 y_{(k)}'\omega^2-\frac34 \lp(y_{(k)}\Re\mc{V}_{(k)}\rp)' -\frac14y_{(k)}\Re\mc{V}_{(k)}'\rp] +\lp[ \frac34 \hat y'\omega^2-\frac34 \lp(\hat{y}_{(k)}\Re\hat{\mc{V}}_{(k)}\rp)' -\frac14 \hat{y}_{(k)}\Re\hat{\mc{V}}_{(k)}'\rp]\numberthis \\
&\quad\geq \frac12 h_{k,\mr{int}}\lp(\Re\mc{V}_{(k)}-\frac{(|s|-k)}{3}\lp(\frac{w'}{w}\rp)' -\omega^2\rp) \\
&\quad\qquad+\frac34 \hat{y}_{(k)}'(\omega-m\upomega_+)^2 -\frac12 \hat{y}_{(k)}\Re\hat{\mc{V}}_{(k)}'\mathbbm{1}_{(-\infty,R_2^*]} +\frac34 y_{(k)}'\omega^2 -\frac12 y_{(k)}\Re\mc{V}_{(k)}'\mathbbm{1}_{[R_3^*,\infty)}\,.
\end{align*}
Hence, we can invoke estimates~\eqref{eq:y-estimate-low-k} and \eqref{eq:hat-y-estimate-low-k} in the form given in Lemma~\ref{lemma:y-estimate-low}. 

Before doing so, note that, for $k<|s|$, the anti-derivative of $w^{(|s|-k)/2}$ in $r^*$ is comparable to $w^{(|s|-k)/2}/\sqrt{M^2-a^2}$, since
\begin{align}
 \exp\lp(-\frac{(|s|-k)}{2}\frac{\sqrt{M^2-a^2}}{Mr_+}|r^*|\rp)= \frac{2}{(|s|-k)}
\frac{Mr_+}{\sqrt{M^2-a^2}}\frac{d}{dr^*}\exp\lp(-\frac{(|s|-k)}{2}\frac{\sqrt{M^2-a^2}}{Mr_+}|r^*|\rp)\,. \label{eq:proof-low-1b-intermediate-2}
\end{align}
Moreover, as $\Re\mc{V}_{(k)}'\geq b w\sqrt{M^2-a^2}$ for some $b>0$, we conclude that $-\hat{y}\Re\mc{V}_{(k)}\geq b h_{k,\mr{int}} w$. Using the information already derived for $y$ in the case $\mc{F}_{\rm low, 1c}$,  we obtain
\begin{align*}
&\int_{-\infty}^\infty \lp[\frac34 \hat{y}_{(k)}'(\omega-m\upomega_+)^2 -\frac12 \hat{y}_{(k)}\Re\hat{\mc{V}}_{(k)}'\mathbbm{1}_{(-\infty,-R_3^*]} +\frac34 y_{(k)}'\omega^2 -\frac12 y_{(k)}\Re\mc{V}_{(k)}'\mathbbm{1}_{[R_3^*,\infty)} \rp]|\uppsi_{(k)}|^2dr^*\\
&\qquad +\int_{-\infty}^\infty \lp\{\lp(y_{(k)}'+\hat y_{(k)}'+h\rp)|\uppsi_{(k)}'|^2+\frac14 h_{k,\mr{int}}\lp(\Re\mc{V}_{(k)}-\frac{(|s|-k)}{3}\lp(\frac{w'}{w}\rp)' -\omega^2\rp)\lp|\uppsi_{(k)}\rp|^2\rp\}dr^*\\
&\qquad +\int_{-\infty}^\infty \frac34 \lp(h_{k,\mr{left}}+h_{k,\mr{right}}\rp)\lp(\Re\mc{V}_{(k)}-\frac{(|s|-k)}{2}\lp(\frac{w'}{w}\rp)' -\omega^2\rp)\lp|\uppsi_{(k)}\rp|^2dr^*\\
&\quad\leq  2y_{(k)}(\infty)\omega^2\lp|\swei{A}{s}_{k,\mc{I}^{-\sign s}}\rp|^2+2\lp[|\hat y_{(k)}(-\infty)|\omega^2+B(|s|-k)C_{k,0}^-\rp]\lp|\swei{A}{s}_{k,\mc{H}^{+\sign s}}\rp|^2\\
&\quad\qquad+\int_{-\infty}^\infty \lp\{h_{(k)}\Re\lp[\mathfrak{G}_{(k)}\overline{\uppsi_{(k)}}\rp] +2(y_{(k)}+\hat{y}_{(k)})\Re\lp[\mathfrak{G}_{(k)}\overline{\uppsi_{(k)}}'\rp]\rp\}dr^* \numberthis \label{eq:proof-low-1b-intermediate-3}\\
&\quad\qquad - k(|s|-k+a^2)B\sum_{j=0}^{k-1} \int_{-\infty}^\infty \lp[y_{(k)}\Re\mc{V}'_{(j)}+\hat{y}_{(k)}\Re\hat{\mc{V}}'_{(j)}-R_3^*\mathbbm{1}_{[R_3^*/2,R_3^*]}h_{k,\mr{int}} w\rp]|\uppsi_{(j)}|^2dr^* \\
&\quad\qquad +k(|s|-k+a^2)B\sum_{j=0}^{k-1}\int_{-\infty}^\infty (h_{\rm left}+h_{\rm right})\lp(\Re\mc{V}_{(j)}-\frac{(|s|-j)}{2}\lp(\frac{w'}{w}\rp)'-\omega^2\rp)\lp|\uppsi_{(j)}\rp|^2dr^* \\
&\quad\qquad +k(|s|-k+a^2)B\sum_{j=0}^{k-1}\int_{-\infty}^\infty h_{k,\mr{int}}\lp(\Re\mc{V}_{(j)}-\frac{(|s|-j)}{3}\lp(\frac{w'}{w}\rp)'-\omega^2\rp)\lp|\uppsi_{(j)}\rp|^2dr^*\\
&\quad\qquad - (k-1)(|s|-k+a^2)B\sum_{j=0}^{k-2} \int_{-\infty}^\infty \lp[ \frac{1}{r^2}y_{(k)}\Re\mc{V}'_{(j+1)}+(r-r_+)^2\hat{y}_{(k)}\Re\hat{\mc{V}}_{(j+1)}'\rp]|\uppsi_{(j+1)}|^2dr^*\\
&\quad\qquad - B(|s|-k)\int_{-\infty}^\infty \lp\{\omega^2 \lp[y_{(k)}\Re\mc{V}'_{(k+1)}+\hat{y}_{(k)}\Re\hat{\mc{V}}'_{(k+1)}\rp]-pw(|s|-k)C_{k,0}^-\mathbbm{1}_{(-\infty,-R_3^*]}\rp\}|\uppsi_{(k+1)}|^2dr^*\,.
\end{align*}
by adding as well  \eqref{eq:proof-low-1b-intermediate-1}, \eqref{eq:y-estimate-low-k} and \eqref{eq:hat-y-estimate-low-k}. 

\medskip
\noindent\textit{The case of $\mc{F}_{\rm low, 1a}$: virial currents' coupling errors and energy currents.} As mentioned, in $\mc{F}_{\rm low,1a}$, we only consider the case $k=|s|$ of \eqref{eq:proof-low-1b-intermediate-3}. We deal with the coupling errors by applications of the basic estimates \eqref{eq:basic-estimate-1} and \eqref{eq:basic-estimate-1'} from Lemma~\ref{lemma:basic-estimate-1}. First, with $c=-1/r$ if $s<0$ and $c=\frac{r-r_+}{r}$ if $s>0$, \eqref{eq:basic-estimate-1} and \eqref{eq:basic-estimate-1'} read, respectively,
\begin{align*}
\int_{-\infty}^\infty w|\uppsi_{(j)}|^2dr^*&\leq \int_{-\infty}^\infty c'|\uppsi_{(j)}|^2dr^* \leq   B\int_{-\infty}^\infty w|\uppsi_{(j+1)}|^2dr^*\leq B\int_{-\infty}^\infty w|\uppsi_{(|s|-1)}|^2dr^*\,,\\
\int_{-\infty}^\infty w|\uppsi_{(j)}'|^2dr^*
&\leq B\int_{-\infty}^\infty w\lp[|\uppsi_{(j+1)}'|^2+\frac{16a^2m^2}{r^2+a^2}|\uppsi_{(j)}|^2\rp]dr^*\\
&\leq B\int_{-\infty}^\infty w\lp[|\uppsi_{(|s|-1)}'|^2+\frac{a^2m^2}{r^2+a^2}|\uppsi_{(|s|-1)}|^2\rp]dr^*\,;
\end{align*}
then, taking $c=-1/r$ to relate $\uppsi_{(|s|-1)}$ to $\Psi$ yields
\begin{align*}
\int_{-\infty}^\infty w|\uppsi_{j}|^2dr^* &\leq   B|\swei{A}{s}_{|s|-1,\mc H^{+\sign s}}|^2+ B\int_{-\infty}^\infty \frac{w}{r^2+a^2}|\Psi|^2 dr^*\,,\numberthis\label{eq:proof-low-1ab-basic-estimate-1/r}\\
\int_{-\infty}^\infty w|\uppsi_{(j)}'|^2dr^*& \leq   B|\swei{A}{s}_{|s|-1,\mc H^{+\sign s}}|^2+ B\int_{-\infty}^\infty \frac{w}{r^2+a^2}\lp[|\Psi'|^2+\frac{a^2m^2}{r^2+a^2}|\Psi|^2\rp]dr^*\,.\numberthis\label{eq:proof-low-1ab-basic-estimate-derivative-1/r}
\end{align*}
Thus, lines 6 to 9 of \eqref{eq:proof-low-1b-intermediate-3} for $k=|s|$ can be controlled, in $\mc{F}_{\rm low,1a}$, by
\begin{align*}
& a^2|s|B(\varepsilon_{\rm width},\omega_{\rm high})\sum_{j=0}^{|s|-1} \int_{-\infty}^\infty \lp[y(\infty)-\hat{y}(-\infty)+R_3^*\rp]w|\uppsi_{(j)}|^2dr^*\\
&\quad\leq a^2|s|B(\varepsilon_{\rm width},\omega_{\rm high})\lp[y(\infty)-\hat{y}(-\infty)+R_3^*\rp]\lp\{\lp|\swei{A}{s}_{|s|-1, \mc{H}^{+\sign s}}\rp|^2 +\int_{-\infty}^\infty \frac{w}{r^2+a^2}|\Psi|^2dr^*\rp\}\\
&\quad \ll \tilde{a}_0\lp|\swei{A}{s}_{|s|-1, \mc{H}^{+\sign s}}\rp|^2 + \tilde{a}_0\int_{-\infty}^\infty \lp(hw-y\Re\mc{V}'-\hat{y}\Re\hat{\mc V}'\rp)|\Psi|^2dr^*\,,
\end{align*}
if $a\leq \tilde{a}_0$ is sufficiently small. Note the second term can already be absorbed into the left hand side of  \eqref{eq:proof-low-1b-intermediate-3} when $k=|s|$. We obtain:
\begin{align*}
&\frac12\int_{-\infty}^\infty \lp[\frac34 \hat{y}'(\omega-m\upomega_+)^2 -\frac12 \hat{y}\hat{\mc{V}}'\mathbbm{1}_{(-\infty,-R_3^*]} +\frac34 y_{(k)}'\omega^2 -\frac12 y\mc{V}'\mathbbm{1}_{[R_3^*,\infty)} + \rp]|\Psi|^2dr^*\\
&\qquad +\int_{-\infty}^\infty \lp\{\lp(y'+\hat y'+h\rp)|\Psi'|^2+ \frac18 h\lp(\mc{V}-\omega^2\rp)\lp|\Psi\rp|^2\rp\}dr^*\\
&\quad\leq  2y(\infty)\omega^2\lp(\lp|\swei{A}{s}_{\mc{I}^{+}}\rp|^2+\lp|\swei{A}{s}_{\mc{I}^{-}}\rp|^2\rp)+2|\hat y(-\infty)|(\omega-m\upomega_+)^2\lp(\lp|\swei{A}{s}_{\mc{H}^{+}}\rp|^2+\lp|\swei{A}{s}_{\mc{H}^{-}}\rp|^2\rp)\\
&\quad\qquad+ \tilde{a}_0\lp|\swei{A}{s}_{|s|-1, \mc{H}^{+\sign s}}\rp|^2+\int_{-\infty}^\infty \lp\{h\Re\lp[\mathfrak{G}\overline{\Psi}\rp] +2(y+\hat{y})\Re\lp[\mathfrak{G}\overline{\Psi}'\rp]\rp\}dr^*\,. \numberthis \label{eq:proof-low-1b-intermediate-4a}
\end{align*}

We now turn to the boundary terms, which we seek to control. In the case $m=0$, there is no superradiance, so a global application of a $T$-type energy current is sufficient; this is not so for $m\neq 0$ but $|a|\ll 1$, where we must apply a $K$-type current near $r=r_+$ and $T$ current near $r=\infty$. Thus, to keep our treatment uniform, we apply localized currents in both cases.

Take $E>0$ to be determined; let $\chi_1\leq 1$ be a smooth function which is 1 for $r^*\in(-\infty,-R_3^*+1]$ and 0 for $r^*\in[R_3^*-1,\infty)$ and $\chi_2=1-\chi_1$. After adding currents $E\chi_1Q^K$ and $E\chi_2 Q^T$, the localization errors produced are
\begin{align*}
\int_{-\infty}^\infty E[\chi_1'(\omega-m\upomega_+)+\chi_2'\omega]\Im\lp[\Psi'\overline{\Psi}\rp]dr^* &\leq \frac{BE|am|}{R_3^*}\int_{-\infty}^\infty \lp[|\Psi'|^2+|\Psi|^2\rp]\mathbbm{1}_{[-R_3^*,R_3^*]}dr^*\\
&\leq \frac{1}{32}\int_{-\infty}^\infty \lp[h|\Psi'|^2+h(\mc{V}-\omega^2)|\Psi|^2\rp]dr^*\,,
\end{align*} 
for $|am|$ sufficiently small depending on $E$. The energy currents also generate errors due to coupling to $\uppsi_{(j)}$ with $j<|s|$; if $|a|$ is sufficiently small, each of these satisfies
\begin{align*}
&\int_{-\infty}^\infty aE (\omega\chi_2+(\omega-m\upomega_+)\chi_1) w\Im\lp[\overline{\Psi}\lp(c_{s,|s|,j}^{\rm id}+imc_{s,|s|,j}^{m}\rp)\uppsi_{(j)}\rp]dr^* \\
&\quad\leq  \int_{-\infty}^\infty \frac{1}{64|s|}\lp[h\lp(\mc{V}-\omega^2\rp)+y'\omega^2+\hat{y}'(\omega-m\upomega_+)^2\rp]|\Psi|^2 \\
&\quad\qquad + a^2|s|E^2B(\varepsilon_{\rm width},\omega_{\rm high})\int_{-\infty}^\infty\lp[\lp(\omega^2+(\omega-m\upomega_+)^2\rp)\mathbbm{1}_{[-R_3^*,R_3^*]}+\lp(\frac{w}{y'}+\frac{w}{\hat y '}\rp)\mathbbm{1}_{[-R_3^*,R_3^*]^c}\rp]w|\uppsi_{(j)}|^2\\
&\quad\leq  \int_{-\infty}^\infty \frac{1}{64|s|}\lp[h\lp(\mc{V}-\omega^2\rp)+y'\omega^2+\hat{y}'\omega_0^2\rp]|\Psi|^2dr^* \\
&\quad\qquad+ a^2|s|E^2(R_3^*)B(\varepsilon_{\rm width},\omega_{\rm high})\lp\{\lp|\swei{A}{s}_{|s|-1, \mc{H}^{+\sign s}}\rp|^2 + \int_{-\infty}^\infty \frac{w}{r^2+a^2}|\Psi|^2dr^*\rp\} \,,
\end{align*}
using \eqref{eq:proof-low-1ab-basic-estimate-1/r} in the last inequality. If $|a|\leq\tilde{a}_0$ can be made sufficiently small, the above can be controlled by 
\begin{align*}
\int_{-\infty}^\infty \frac{1}{32|s|}\lp[h\lp(\mc{V}-\omega^2\rp)+y'\omega^2+\hat{y}'\omega_0^2\rp]|\Psi|^2dr^* + \tilde{a}_0E^2\lp|\swei{A}{s}_{|s|-1, \mc{H}^{+\sign s}}\rp|^2\,, \numberthis\label{eq:proof-low-1b-intermediate-5a}
\end{align*}
and so we deduce
\begin{align*}
&\omega^2\lp|\swei{A}{s}_{\mc{I}^+}\rp|^2 +(\omega-m\upomega_+)^2\lp|\swei{A}{s}_{\mc{H}^+}\rp|^2+\frac18\int_{-\infty}^\infty (y'+\hat y'+h)|\Psi'|^2dr^*\\
&\qquad +\int_{-\infty}^\infty \lp[ h(\mc{V} -\omega^2) + \hat y'\omega_0^2 - \hat y\Re\hat{\mc{V}}'\mathbbm{1}_{(-\infty,R_3^*]} + y'\omega^2 - y\Re\mc{V}'\mathbbm{1}_{[R_3^*,\infty)}\rp]|\Psi|^2dr^*\\
&\quad\leq  B(E,E_W)\omega^2\lp|\swei{A}{s}_{\mc{I}^-}\rp|^2+B(E,E_W)(\omega-m\upomega_+)^2\lp|\swei{A}{s}_{\mc{H}^-}\rp|^2 +\tilde{a}_0B(E)\lp|\swei{A}{s}_{|s|-1, \mc{H}^{+\sign s}}\rp|^2\\
&\qquad+\int_{-\infty}^\infty \lp\{h\Re\lp[\mathfrak{G}\overline{\Psi}\rp]+2(y+\hat{y})\Re\lp[\mathfrak{G}\overline{\Psi}'\rp]-E\lp[\chi_2\omega+\chi_1(\omega-m\upomega_+)\rp]\Im\lp[\mathfrak{G}\overline{\Psi}\rp]\rp\}dr^*\,. \numberthis\label{eq:proof-low-1a-end-number1}
\end{align*}

The last term in \eqref{eq:proof-low-1a-end-number1} can be treated using Lemma~\ref{lemma:bdry-term-via-constraint-bdd} or, if we make use of the precise boundary relations of Lemma~\ref{lemma:uppsi-general-asymptotics}, it can be absorbed into the previous term. If the boundary relations of Lemma~\ref{lemma:uppsi-general-asymptotics} are available then we can take $E_W>0$, and consider applications of Teukolsky--Starobinsky-type localized energy currents as follows. Let $\chi_3\leq 1$ be a smooth function such that $\chi_3=1$ for $r^*\leq R_3^*/3$ and $\chi_3=0$ for $r^*\geq 2R_3^*/3$; set $\chi_4=1-\chi_3$. After adding currents $E_W\chi_3Q^{W,K}$ and $E_W\chi_4 Q^{W,T}$, the localization errors produced are bounded by (see Lemma~\ref{lemma:wronskian-energy-currents})
\begin{align*}
&\frac{BE_W}{R_3^*}|am|\int_{\supp(\chi_3')}\sum_{k=0}^{|s|}(R_3^*)^{2(|s|-k)+1}\lp[|\uppsi_{(k)}'|^2+R_3^{-2}|\uppsi_{(k)}|^2\rp]dr^*\\
&\quad\leq \tilde a_0B(\omega_{\rm high},\varepsilon_{\rm width})E_W\int_{-\infty}^\infty \lp(\lp[h|\Psi|^2+h(\mc{V}-\omega^2)|\Psi|^2\rp]dr^*+|s|\sum_{k=0}^{|s|-1}\int w\lp[|\uppsi_{(k)}'|^2+|\uppsi_{(k)}|^2\rp]dr^*\rp)\\
&\quad\leq \tilde{a_0}^{1/2} \int_{-\infty}^\infty \lp[\lp(h+\frac{w}{r^2+a^2}\rp)|\Psi|^2+\lp(h(\mc{V}-\omega^2)+\frac{w}{r^2+a^2}\rp)|\Psi|^2\rp]dr^*  + \tilde{a_0}^{1/2}|\swei{A}{s}_{|s|-1,\mc H^{+\sign s}}|^2\,,
\end{align*}
where we have taken $\tilde{a}_0$ sufficiently small depending on $E_W$, $\varepsilon_{\rm width}$, $\omega_{\rm high}$ and $R_3^*$ and we have used \eqref{eq:proof-low-1ab-basic-estimate-1/r} and \eqref{eq:proof-low-1ab-basic-estimate-derivative-1/r}. If $\tilde{a}_0$ is  sufficiently small, the first term can be absorbed by the left hand side of \eqref{eq:proof-low-1b-intermediate-4a}.

Note that, by application of the Killing energy currents to control boundary terms by \eqref{eq:proof-low-1b-intermediate-5a}, we have actually produced more boundary terms with the wrong sign; however, they come with a small parameter $\tilde{a}_0$.  Thus, putting all boundary terms produced by the virial and energy currents, we have 
\begin{align*}
&\lp[2y(\infty)-E-E_W\lp\{1,\frac{\mathfrak{C}_s}{\mathfrak{D}_s}\rp\}\rp]\omega^2\lp|\swei{A}{s}_{\mc{I}^+}\rp|^2+\lp[2y(\infty)+E+E_W\lp\{\frac{\mathfrak{C}_s}{\mathfrak{D}_s},1\rp\}\rp]\omega^2\lp|\swei{A}{s}_{\mc{I}^-}\rp|^2\\
&\qquad+\lp[2|\hat y(\infty)|+\tilde{a}_0^{1/2}(E^2+1)\frac{\lp|\swei{A}{s}_{|s|-1,\mc{H}^+}\rp|^2}{\omega_0^2\lp|\swei{A}{s}_{\mc{H}^+}\rp|^2}\mathbbm{1}_{\{s>0\}}-E-E_W\lp\{\frac{\mathfrak{C}_s}{\mathfrak{D}_s},1\rp\}\rp]\omega_0^2\lp|\swei{A}{s}_{\mc{H}^+}\rp|^2\\
&\qquad+\lp[2|\hat y(\infty)|+\tilde{a}_0^{1/2}(E^2+1)\frac{\lp|\swei{A}{s}_{|s|-1,\mc{H}^-}\rp|^2}{\omega_0^2\lp|\swei{A}{s}_{\mc{H}^-}\rp|^2}\mathbbm{1}_{\{s<0\}}+E+E_W\lp\{1,\frac{\mathfrak{C}_s}{\mathfrak{D}_s}\rp\}\rp]\omega_0^2\lp|\swei{A}{s}_{\mc{H}^-}\rp|^2\\
&\leq \max\{y(\infty),|\hat{y}(-\infty)|\}\lp[-\omega^2\lp|\swei{A}{s}_{\mc{I}^+}\rp|^2-(\omega-m\upomega_+)^2\lp|\swei{A}{s}_{\mc{H}^+}\rp|^2\rp]\\
&\qquad +\lp[4\max\{y(\infty),|\hat{y}(-\infty)|\}+E+\frac53E_W\rp]\lp[\omega^2\lp|\swei{A}{s}_{\mc{I}^-}\rp|^2+(\omega-m\upomega_+)^2\lp|\swei{A}{s}_{\mc{H}^-}\rp|^2\rp]\,,
\end{align*}
where we write $\omega_0:=\omega-m\upomega_+$ and where the alternatives given are for $s>0$ and $s\leq 0$, respectively. To obtain the conclusion, we have set either $E\geq 4\max\{y(\infty),|\hat{y}(-\infty)|\}$ or $E_W\geq 12\max\{y(\infty),|\hat{y}(-\infty)|\}$, assumed $\tilde{a}_0$ is sufficiently small depending on $E$ and $R_3^*$ so that $2\tilde{a}_0\leq \min\{1,E^{-2}\}$ and used Lemma~\ref{lemma:properties-frequencies-low-omega}\ref{it:properties-frequencies-low-omega-TS-constants}--\ref{it:properties-frequencies-low-omega-bdry-terms}. To conclude, we fix the remaining free parameters so that $\max\{y(\infty),|\hat{y}(-\infty)|\}=1$. Then, we obtain from \eqref{eq:proof-low-1b-intermediate-4a} with $k=|s|$,
\begin{align*}
&\omega^2\lp|\swei{A}{s}_{\mc{I}^+}\rp|^2 +(\omega-m\upomega_+)^2\lp|\swei{A}{s}_{\mc{H}^+}\rp|^2+\frac18\int_{-\infty}^\infty (y'+\hat y'+h)|\Psi'|^2dr^*\\
&\qquad +\frac{1}{16}\int_{-\infty}^\infty \lp[ h(\mc{V} -\omega^2) + \hat y'\omega_0^2 - \hat y\Re\hat{\mc{V}}'\mathbbm{1}_{(-\infty,R_3^*]} + y'\omega^2 - y\Re\mc{V}'\mathbbm{1}_{[R_3^*,\infty)}\rp]|\Psi|^2dr^*\\
&\quad\leq  B(E,E_W)\omega^2\lp|\swei{A}{s}_{\mc{I}^-}\rp|^2+B(E,E_W)(\omega-m\upomega_+)^2\lp|\swei{A}{s}_{\mc{H}^-}\rp|^2 + \int_{-\infty}^\infty E_W\lp[\chi_3\lp(Q^{W,K}\rp)'+\chi_4\lp(Q^{W,T}\rp)'\rp]dr^*\\
&\qquad+\int_{-\infty}^\infty \lp\{h\Re\lp[\mathfrak{G}\overline{\Psi}\rp]+2(y+\hat{y})\Re\lp[\mathfrak{G}\overline{\Psi}'\rp]-E\lp[\chi_2\omega+\chi_1(\omega-m\upomega_+)\rp]\Im\lp[\mathfrak{G}\overline{\Psi}\rp]\rp\}dr^*\,. \numberthis\label{eq:proof-low-1a-end}
\end{align*}

This concludes the proof for $\mc{F}_{\rm low, 1a}$.

\medskip
\noindent\textit{The case of $\mc{F}_{\rm low, 1b}$: virial currents' coupling errors and energy currents.} In $\mc{F}_{\rm low,1b}$, to deal with \eqref{eq:proof-low-1b-intermediate-3}, we need the more delicate argument which we have already appealed to in $\mc{F}_{\rm low, 1c}$: we consider the inequality \eqref{eq:proof-low-1b-intermediate-3} for all $k=0,\dots |s|$. By construction, when comparing $h_{k,\mr{int}}$ to $h_{j,\mr{int}}$, there is a gain of $(R_3^*)^{k-j}$ for $r^*\geq R_3^*/2$, whereas there is no gain for $r^*\leq -R_3^*/2$ due to \eqref{eq:proof-low-1b-intermediate-2} (c.f.\ $\mc{F}_{\rm low, 1c}$, in which the gain is observed at both ends). Thus, the last six lines of \eqref{eq:proof-low-1b-intermediate-3} can be controlled by
\begin{align*}
& kB\sum_{j=0}^{k-1} \int_{-\infty}^\infty \lp[h_{j,\mr{int}}\lp(\Re\mc{V}_{(j)}-\frac{(|s|-j)}{3}\lp(\frac{w'}{w}\rp)' -\omega^2\rp)\rp.\\
& \qquad\qquad\qquad\qquad\qquad\qquad\qquad +\lp(h_{i,\mr{left}}+h_{i,\mr{right}}\rp)\lp(\Re\mc{V}_{(j)}-\frac{(|s|-j)}{2}\lp(\frac{w'}{w}\rp)' -\omega^2\rp)\\
&\qquad\qquad\qquad\qquad\qquad\qquad\qquad\lp. - y_{(j)}\Re\mc{V}'_{(j)}\mathbbm{1}_{[R_3^*,\infty)}-\hat{y}_{(j)}\Re\hat{\mc{V}}'_{(j)}\mathbbm{1}_{(-\infty,-R_3^*]}\rp]|\uppsi_{(j)}|^2dr^* \\
&\qquad  -(|s|-k)(p+\omega_{\rm low})\int_{-\infty}^\infty  \lp[y_{(k+1)}\Re\mc{V}'_{(k+1)}+\hat{y}_{(k+1)}\Re\hat{\mc{V}}'_{(k+1)}\rp]|\uppsi_{(k+1)}|^2dr^*\,.
\end{align*}
Thus, similarly to $\mc F_{\rm low,1c}$, if we choose $\omega_{\rm low}$ sufficiently small compared to the other constants involved in the problem, and iterate in $k=0,\dots,|s|$, we obtain an analogue of \eqref{eq:proof-low-1c-intermediate-4}:
\begin{align*}
&\frac{1}{16}\int_{-\infty}^\infty \lp[\hat{y}_{(k)}'(\omega-m\upomega_+)^2 - \hat{y}_{(k)}\Re\hat{\mc{V}}_{(k)}'\mathbbm{1}_{(-\infty,-R_3^*]} + y_{(k)}'\omega^2 - y_{(k)}\Re\mc{V}_{(k)}'\mathbbm{1}_{[R_3^*,\infty)} \rp]|\uppsi_{(k)}|^2dr^*\\
&\qquad +\frac{1}{16}\int_{-\infty}^\infty \lp\{\lp(y_{(k)}'+\hat y_{(k)}'+h\rp)|\uppsi_{(k)}'|^2+ h_{k,\mr{int}}\lp(\Re\mc{V}_{(k)}-\frac{3(|s|-k)}{4}\lp(\frac{w'}{w}\rp)' -\omega^2\rp)\lp|\uppsi_{(k)}\rp|^2\rp\}dr^*\\
&\qquad+\frac{1}{16}\int_{-\infty}^\infty  (h_{k,\mr{left}}+h_{k,\mr{right}})\lp(\Re\mc{V}_{(k)}-\frac{(|s|-k)}{2}\lp(\frac{w'}{w}\rp)' -\omega^2\rp)\lp|\uppsi_{(k)}\rp|^2dr^*\\
&\qquad +E\omega^2\sum_{j=0}^k\lp\{\lp|\swei{A}{s}_{j,\mc{I}^{+}}\rp|^2+\lp|\swei{A}{s}_{j,\mc{H}^{+}}\rp|^2\rp\}\\
&\quad\leq  B(|s|-k)\sum_{j=0}^k\lp|\swei{A}{s}_{j,\mc{H}^{+}}\rp|^2 + B(E)\omega^2\sum_{j=0}^{k} \lp\{\lp|\swei{A}{s}_{j,\mc{I}^{-}}\rp|^2+\lp|\swei{A}{s}_{j,\mc{H}^{-}}\rp|^2\rp\} \\
&\quad\qquad-B(|s|-k)\int_{-\infty}^\infty  \lp[y_{(k+1)}\Re\mc{V}'_{(k+1)}+\hat{y}_{(k+1)}\Re\hat{\mc{V}}'_{(k+1)}\rp]|\uppsi_{(k+1)}|^2dr^* \\
&\quad\qquad+B\sum_{j=0}^{k}\int_{-\infty}^\infty \lp\{h_{(j)}\Re\lp[\mathfrak{G}_{(j)}\overline{\uppsi_{(j)}}\rp] +2(y_{(j)}+\hat{y}_{(j)})\Re\lp[\mathfrak{G}_{(j)}\overline{\uppsi_{(j)}}'\rp]-E\omega\Im\lp[\mathfrak{G}_{(j)}\overline{\uppsi_{(j)}}\rp]\rp\}dr^*
\numberthis \label{eq:proof-low-intermeditate-4}\,,
\end{align*}
where, we have applied Killing $T$ type currents to each of the transformed radial ODEs with $j=0,\dots,k$. We can appeal to Lemma~\ref{lemma:bdry-term-via-constraint-bdd} to handle the firts term on the right hand side of \eqref{eq:proof-low-intermeditate-4}. However,  if the boundary term relations of Lemma~\ref{lemma:uppsi-general-asymptotics} hold, then we can convert it into a term at the top level:
\begin{align*}
\sum_{j=0}^k\lp|\swei{A}{s}_{j,\mc{H}^{+}}\rp|^2\leq B\omega^2\lp|\swei{A}{s}_{\mc{H}^{+}}\rp|^2\,,
\end{align*}
and so, by making $E$ slightly larger, we end up with 
\begin{align*}
&\frac{1}{16}\sum_{k=0}^{|s|}\int_{-\infty}^\infty \lp[\hat{y}_{(k)}'(\omega-m\upomega_+)^2 - \hat{y}_{(k)}\Re\hat{\mc{V}}_{(k)}'\mathbbm{1}_{(-\infty,-R_3^*]} + y_{(k)}'\omega^2 - y_{(k)}\Re\mc{V}_{(k)}'\mathbbm{1}_{[R_3^*,\infty)} \rp]|\uppsi_{(k)}|^2dr^*\\
&\qquad +\frac{1}{16}\sum_{k=0}^{|s|}\int_{-\infty}^\infty \lp\{\lp(y_{(k)}'+\hat y_{(k)}'+h\rp)|\uppsi_{(k)}'|^2+ h_{k,\mr{int}}\lp(\Re\mc{V}_{(k)}-\frac{3(|s|-k)}{4}\lp(\frac{w'}{w}\rp)' -\omega^2\rp)\lp|\uppsi_{(k)}\rp|^2\rp\}dr^*\\
&\qquad+\frac{1}{16}\sum_{k=0}^{|s|}\int_{-\infty}^\infty  (h_{k,\mr{left}}+h_{k,\mr{right}})\lp(\Re\mc{V}_{(k)}-\frac{(|s|-k)}{2}\lp(\frac{w'}{w}\rp)' -\omega^2\rp)\lp|\uppsi_{(k)}\rp|^2dr^*\\
&\qquad +E\omega^2\sum_{j=0}^k\lp\{\lp|\swei{A}{s}_{j,\mc{I}^{+}}\rp|^2+\lp|\swei{A}{s}_{j,\mc{H}^{+}}\rp|^2\rp\}\\
&\quad\leq  B(E)\omega^2\sum_{j=0}^{|s|} \lp\{\lp|\swei{A}{s}_{j,\mc{I}^{-}}\rp|^2+\lp|\swei{A}{s}_{j,\mc{H}^{-}}\rp|^2\rp\} \\
&\quad\qquad+B\sum_{j=0}^{|s|}\int_{-\infty}^\infty \lp\{h_{(j)}\Re\lp[\mathfrak{G}_{(j)}\overline{\uppsi_{(j)}}\rp] +2(y_{(j)}+\hat{y}_{(j)})\Re\lp[\mathfrak{G}_{(j)}\overline{\uppsi_{(j)}}'\rp]-E\omega\Im\lp[\mathfrak{G}_{(j)}\overline{\uppsi_{(j)}}\rp]\rp\}dr^*
\numberthis \label{eq:proof-low-intermeditate-4b}\,.
\end{align*}

Alternatively, if we iterate in $k=0,\dots,|s|$ without applying the Killing energy currents, we obtain
\begin{align*}
&\frac{1}{16}\int_{-\infty}^\infty \lp[\hat{y}_{(k)}'(\omega-m\upomega_+)^2 - \hat{y}_{(k)}\Re\hat{\mc{V}}_{(k)}'\mathbbm{1}_{(-\infty,-R_3^*]} + y_{(k)}'\omega^2 - y_{(k)}\Re\mc{V}_{(k)}'\mathbbm{1}_{[R_3^*,\infty)} \rp]|\uppsi_{(k)}|^2dr^*\\
&\qquad +\frac{1}{16}\int_{-\infty}^\infty \lp\{\lp(y_{(k)}'+\hat y_{(k)}'+h\rp)|\uppsi_{(k)}'|^2+ h_{k,\mr{int}}\lp(\Re\mc{V}_{(k)}-\frac{3(|s|-k)}{4}\lp(\frac{w'}{w}\rp)' -\omega^2\rp)\lp|\uppsi_{(k)}\rp|^2\rp\}dr^*\\
&\qquad+\frac{1}{16}\int_{-\infty}^\infty  (h_{k,\mr{left}}+h_{k,\mr{right}})\lp(\Re\mc{V}_{(k)}-\frac{(|s|-k)}{2}\lp(\frac{w'}{w}\rp)' -\omega^2\rp)\lp|\uppsi_{(k)}\rp|^2dr^*\\
&\quad\leq  B\omega^2\sum_{j=0}^{k} \lp\{y_{(j)}(\infty)\lp|\swei{A}{s}_{j,\mc{I}^{-\sign s}}\rp|^2+|\hat y_{(j)}(-\infty)|\lp|\swei{A}{s}_{j,\mc{H}^{+\sign s}}\rp|^2\rp\}  \\
&\quad\qquad+B\sum_{j=0}^{k}\int_{-\infty}^\infty \lp\{h_{(j)}\Re\lp[\mathfrak{G}_{(j)}\overline{\uppsi_{(j)}}\rp] +2(y_{(j)}+\hat{y}_{(j)})\Re\lp[\mathfrak{G}_{(j)}\overline{\uppsi_{(j)}}'\rp]-E\omega\Im\lp[\mathfrak{G}_{(j)}\overline{\uppsi_{(j)}}\rp]\rp\}dr^*\\
&\quad\qquad-B(|s|-k)\int_{-\infty}^\infty  \lp[y_{(k+1)}\Re\mc{V}'_{(k+1)}+\hat{y}_{(k+1)}\Re\hat{\mc{V}}'_{(k+1)}\rp]|\uppsi_{(k+1)}|^2dr^*\,, \numberthis\label{eq:proof-low-1b-intermediate-4b}
\end{align*}
and, using the boundary term relations of Lemma~\ref{lemma:apha-general-asymptotics}, we can control the boundary terms by a Teukolsky--Starobinky energy current. For $k=|s|$, the boundary terms at $r^*=\infty$ become
\begin{align*}
&\lp[2y(\infty)\lp\{1,1+B\sum_{j=0}^{|s|-1} \frac{y_{(j)}(\infty)}{y(\infty)} \frac{\lp|\swei{A}{s}_{j,\mc{I}^+}\rp|^2}{\lp|\swei{A}{s}_{\mc{I}^+}\rp|^2}\rp\}-E_W\lp\{1,\frac{\mathfrak{C}_s}{\mathfrak{D}_s^{\mc{I}}}\rp\}\rp]\omega^2\lp|\swei{A}{s}_{\mc{I}^+}\rp|^2\\
&\qquad+\lp[2y(\infty)\lp\{1+B\sum_{j=0}^{|s|-1} \frac{y_{(j)}(\infty)}{y(\infty)}\frac{\lp|\swei{A}{s}_{j,\mc{I}^-}\rp|^2}{\lp|\swei{A}{s}_{\mc{I}^-}\rp|^2},1\rp\}+E_W\lp\{\frac{\mathfrak{C}_s}{\mathfrak{D}_s^{\mc{I}}},1\rp\}\rp]\omega^2\lp|\swei{A}{s}_{\mc{I}^-}\rp|^2\\
&\quad \leq \omega^2\lp\{\lp[3y(\infty)-\frac13 E_W\rp]\lp|\swei{A}{s}_{\mc{I}^+}\rp|^2+\lp[3y(\infty)+\frac53 E_W\rp]\lp|\swei{A}{s}_{\mc{I}^-}\rp|^2\rp\}\,,
\end{align*} 
where the alternatives given are for $\sign s =\pm 1$, respectively. To obtain the final bound,  we have used Lemma~\ref{lemma:properties-frequencies-low-omega}\ref{it:properties-frequencies-low-omega-TS-constants}--\ref{it:properties-frequencies-low-omega-bdry-terms}. As $r\to r_+$, the boundary terms are
\begin{align*}
&\lp[2|\hat y(-\infty)|\lp(1+B\sum_{j=0}^{|s|} \lp\{\frac{\hat y_{(j)}(-\infty)}{\hat y(-\infty)}\omega^2 +1+E\rp)\frac{\lp|\swei{A}{s}_{j,\mc{H}^+}\rp|^2}{\omega^2\lp|\swei{A}{s}_{\mc{H}^+}\rp|^2},1\rp\}-E_W\lp\{\frac{\mathfrak{C}_s}{\mathfrak{D}_s^{\mc{I}}},1\rp\}\rp]\omega^2\lp|\swei{A}{s}_{\mc{H}^+}\rp|^2\\
&\qquad+\lp[2|\hat y(-\infty)|\lp\{1+B\sum_{j=0}^{|s|} \lp(\frac{\hat y_{(j)}(-\infty)}{\hat y(-\infty)}\omega^2+2\rp)\frac{\lp|\swei{A}{s}_{j,\mc{H}^-}\rp|^2}{\omega^2\lp|\swei{A}{s}_{\mc{H}^-}\rp|^2},1 \rp\}+E_W\lp\{1,\frac{\mathfrak{C}_s}{\mathfrak{D}_s^{\mc{I}}}\rp\}\rp]\omega^2\lp|\swei{A}{s}_{\mc{H}^-}\rp|^2\\
&\quad \leq \omega^2\lp\{\lp[3B(\tilde{a}_1)|\hat{y}(-\infty)|-\frac13 E_W\rp]\lp|\swei{A}{s}_{\mc{H}^+}\rp|^2+\lp[3B(\tilde{a}_1)|\hat{y}(-\infty)|+\frac53 E_W\rp]\lp|\swei{A}{s}_{\mc{H}^-}\rp|^2\rp\}\,,
\end{align*} 
since $|\hat{y}_{(j)}(-\infty)|$ are comparable for all $j=0,...,|s|$ and using Lemma~\ref{lemma:properties-frequencies-low-omega}\ref{it:properties-frequencies-low-omega-TS-constants}--\ref{it:properties-frequencies-low-omega-bdry-terms}. Finally, setting $E_W\geq  12B(\tilde{a}_1)\max\{y(\infty),|\hat{y}(-\infty)|\}$ for some sufficiently large $B$ and normalizing the currents such that $B(\tilde{a}_1)\max\{y(\infty),|\hat{y}(-\infty)|\}=1$ for the same $B$, we obtain from \eqref{eq:proof-low-1b-intermediate-4b},
\begin{align*}
&\omega^2\lp|\swei{A}{s}_{\mc{I}^+}\rp|^2 +\omega^2\lp|\swei{A}{s}_{\mc{H}^+}\rp|^2+\frac18\int_{-\infty}^\infty (y'+\hat y'+h)|\Psi'|^2dr^*\\
&\qquad +\frac{1}{16}\int_{-\infty}^\infty \lp[ h(\mc{V} -\omega^2) + \hat y'\omega^2 - \hat y\Re\hat{\mc{V}}'\mathbbm{1}_{(-\infty,R_3^*]} + y'\omega^2 - y\Re\mc{V}'\mathbbm{1}_{[R_3^*,\infty)}\rp]|\Psi|^2dr^*\\
&\quad\leq  5E_W\omega^2\lp[\lp|\swei{A}{s}_{\mc{I}^-}\rp|^2+\lp|\swei{A}{s}_{\mc{H}^-}\rp|^2\rp] \\
&\quad\qquad+\int_{-\infty}^\infty \lp\{h\Re\lp[\mathfrak{G}\overline{\Psi}\rp]+2(y+\hat{y})\Re\lp[\mathfrak{G}\overline{\Psi}'\rp]-Eb(\tilde{a}_1)\omega\Im\lp[\mathfrak{G}\overline{\Psi}\rp]+E_W(Q^W)'\rp\}dr^*\\
&\quad\qquad+B\sum_{j=0}^{|s|-1}\int_{-\infty}^\infty \lp\{h_{(j)}\Re\lp[\mathfrak{G}_{(j)}\overline{\uppsi_{(j)}}\rp] +2(y_{(j)}+\hat{y}_{(j)})\Re\lp[\mathfrak{G}_{(j)}\overline{\uppsi_{(j)}}'\rp]\rp\}dr^*\,,
\end{align*}
which concludes the proof for $\mc{F}_{\rm low, 1b}$.
\end{proof}

\subsubsection{Low frequencies outside of axisymmetry for nonzero black hole angular momentum}
\label{sec:F-low-2}

In this section, we prove Theorems~\ref{thm:ode-estimates-Psi-A} and \ref{thm:ode-estimates-Psi-B} in the frequency range $\mc{F}_{\rm low,2}$.  Concretely, we will show 
\begin{proposition}[Estimates in $\mc{F}_{\rm low,2}$] \label{prop:low2} Fix $s\in\{0,\pm 1,\pm 2\}$ and $M>0$. Then, for all $\delta\in(0,1]$, $\omega_{\rm high}>0$, $\varepsilon^{-1}_{\rm width}>0$, $\tilde{a}_0>0$, for all $E,E_W>0$ such that one is sufficiently large, for all $\omega_{\rm low}$ sufficiently small depending on the latter, and for all $(a,\omega,m,\Lambda) \in \mc{F}_{\rm low,2}(\omega_{\rm high},\varepsilon_{\rm width},\omega_{\rm low},\tilde{a}_0)$, there exist functions $y$, $\hat{y}$, $h$, $\chi_1$ and $\chi_2$ satisfying the uniform bounds
$|y| +|\hat{y}|+|h|+|\chi_1|+|\chi_2|+|\chi_4|\leq B(\tilde{a}_0) \,,$
such that, for all smooth $\Psi$ arising from a smooth solution to the radial ODE~\eqref{eq:radial-ODE-alpha} via \eqref{eq:def-psi0-separated} and \eqref{eq:transformed-transport-separated} and itself satisfying the radial ODE~\eqref{eq:radial-ODE-Psi},  if $\Psi$ has the general boundary conditions as in Lemma~\ref{lemma:uppsi-general-asymptotics}, we have the estimate
\begin{align*}
&b(\tilde{a}_0)\lp[\omega^2\lp|\swei{A}{s}_{\mc{I}^+}\rp|^2+(\omega-m\upomega_+)^2\lp|\swei{A}{s}_{\mc{H}^+}\rp|^2\rp] + b(\delta,\tilde{a}_0)\int_{-\infty}^\infty \frac{\Delta}{r^2}\Big[r^{-1-\delta}|\Psi'|^2+r^{-3}(r^{-\delta}+\Lambda)|\Psi|^2\Big]dr^*\\
&\quad\leq B(\tilde{a}_0,E,E_W)\lp[\omega^2\lp|\swei{A}{s}_{\mc{I}^-}\rp|^2+(\omega-m\upomega_+)^2\lp|\swei{A}{s}_{\mc{H}^-}\rp|^2\rp]\\
&\quad\qquad+\int_{-\infty}^\infty \lp\{-E\lp[\chi_1(\omega-m\upomega_+)+\chi_2\omega\rp]\Im\lp[\mathfrak{G}\overline{\Psi}\rp]+E_W\lp[\chi_1\lp(Q^{W,\,K}\rp)'+\chi_4\lp(Q^{W,\,T}\rp)'\rp]\rp\}dr^*\\
&\quad\qquad +B(\tilde{a}_0)\sum_{k=0}^{|s|}\int_{-\infty}^\infty \lp\{2(y+\hat{y})\Re\lp[\mathfrak{G}_{(k)}\overline{\uppsi_{(k)}}'\rp]+h\Re\lp[\mathfrak{G}_{(k)}\overline{\uppsi_{(k)}}\rp]-E\chi_2\omega\Im\lp[\mathfrak{G}_{(k)}\overline{\uppsi}_{(k)}\rp]\rp\}dr^*\,.
\end{align*}
If, for $k=0,\dots|s|$, $\uppsi_{(k)}$ have outgoing boundary conditions as in Definition \ref{def:outgoing-bdry-uppsi}, then we must take $E_W=0$ and either add \eqref{eq:ode-estimates-outgoing-addition}.
\end{proposition}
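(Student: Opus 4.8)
The plan is to follow the template laid out in the overview of Section~\ref{sec:bounded-smallness}, specializing to the feature of $\mc{F}_{\rm low,2}$ identified in Lemma~\ref{lemma:properties-frequencies-low-omega}\ref{it:properties-frequencies-low-omega-K-frequency} and Lemma~\ref{lemma:properties-potential-low-omega}\ref{it:properties-potential-low-omega-value-minus-infinity}: here $|\omega|\le\omega_{\rm low}$ is small but $(\omega-m\upomega_+)^2\ge b(\tilde a_0)>0$ is \emph{bounded away from zero}. This is the key structural difference from $\mc{F}_{\rm low,1}$, and it means that near $r=r_+$ we cannot run the $\hat y$-current of the type in \eqref{eq:low-frequencies-hat-y} which exploits $\Re\hat{\mc V}_{(k)}'>0$; instead we use a global, exponential-type $\hat y$ current (degenerating as $r^*\to+\infty$), modeled on $\hat y_0$ from \eqref{eq:standard-current-y0-hat}, which produces a good bulk term $\tfrac34\hat y'\omega_0^2|\uppsi_{(k)}|^2$ everywhere, invoking the third bullet of Lemma~\ref{lemma:y-estimate-low}, i.e.\ estimate \eqref{eq:hat-y-estimate-low-k-global}. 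First I would fix the parameters $R_3^*$, $p$, $\epsilon$, $\tilde p$ and set up, for each $0\le k\le|s|$, the compactly supported $h_{(k)}$ current (with the $w^{-(|s|-k)/2}$-type interior piece producing no $|\uppsi_{(k)}|^2$ error when $\sigma=1/4$, glued to $h_+$-type pieces \eqref{eq:standard-current-h+} at both ends whose errors are $O(p)$ and integrable), exactly as in the proofs of Propositions~\ref{prop:low1a}--\ref{prop:low1c}; this exploits Lemma~\ref{lemma:properties-potential-low-omega}\ref{it:properties-potential-low-omega-compact-region}, which holds in $\mc{F}_{\rm low,2}$ in the form (c) since $am>\tilde a_0$.

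Next I would glue in: at the $r^*\to+\infty$ end, the $y$ current \eqref{eq:low-frequencies-y} exploiting $\Re\mc{V}_{(k)}'<0$ (Lemma~\ref{lemma:properties-potential-low-omega}\ref{it:properties-potential-low-omega-derivative-plus-infinity}), started off by $h$ so that \eqref{eq:low-frequencies-h+y} absorbs the $h_+$-errors on the right; at the $r^*\to-\infty$ end, the global $\hat y$ current. Using Lemma~\ref{lemma:h-y-identity-k} to trade the $\Im\mc{V}_{(k)}$ and $\uppsi_{(k)}'$ dependence of the $y$-identities for a coupling to level $k+1$, I apply estimates \eqref{eq:y-estimate-low-k} and \eqref{eq:hat-y-estimate-low-k-global} and add \eqref{eq:h-estimate-low}, obtaining at each level $k$ a bulk estimate controlling $\tfrac34\hat y'\omega_0^2|\uppsi_{(k)}|^2 + (\text{good }h\text{-bulk})|\uppsi_{(k)}|^2 + (\ldots)|\uppsi_{(k)}'|^2$ in terms of future boundary terms $\omega^2|A^{[s]}_{k,\mc I^{+\sign s}}|^2$, $\omega_0^2|A^{[s]}_{k,\mc H^{-\sign s}}|^2$, the inhomogeneity, bulk terms of $\uppsi_{(j)}$ for $j<k$, and a bulk term of $\uppsi_{(k+1)}$ carrying a small parameter. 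The coupling-error management is as in $\mc{F}_{\rm low,1b}$: whenever $h_{(k),\rm left}$-type errors with weight $w$ fail to be integrable as $r^*\to-\infty$ (only an issue for $k<|s|$), I convert them into an $\uppsi_{(k)}$ boundary term plus an $\uppsi_{(k+1)}$ bulk term via the basic estimate \eqref{eq:basic-estimate-1}, using that $a$ is genuinely subextremal here. Iterating along $k=0,\dots,|s|$ then closes a good bulk estimate at level $k=|s|$ for $\Psi$, with $r$-weights matching those in the statement (note $\hat y'\omega_0^2\gtrsim b(\tilde a_0)\hat y' w$ because $\omega_0^2$ is bounded below).

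It remains to control the future boundary terms. Since $m\ne0$ and $|\omega-m\upomega_+|$ is bounded below while $|\omega|$ is small, superradiance \emph{can} occur here, so energy currents must be localized: I take $E\chi_1 Q^K + E\chi_2 Q^T$ with $\chi_1$ supported near $r=r_+$, $\chi_2$ near $r=\infty$, as in the proof of Proposition~\ref{prop:low1a}, so that the localization errors fall in the region where $h$ is strong and can be absorbed; the Killing-current coupling errors, however, cannot be fully absorbed (their $r$-weights are too weak), so $E$ stays small and one supplements with a multiple $E_W$ of localized Teukolsky--Starobinsky currents $E_W[\chi_1(Q^{W,K})'+\chi_4(Q^{W,T})']$ (note the pairing with $K$-type at the horizon because $\omega_0^2$ is the natural weight there when $|\omega|$ is small), which are blind to coupling. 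Using Lemma~\ref{lemma:properties-frequencies-low-omega}\ref{it:properties-frequencies-low-omega-TS-constants}--\ref{it:properties-frequencies-low-omega-bdry-terms} — in particular $\mathfrak C_s/\mathfrak D_s^{\mc I},\mathfrak C_s/\mathfrak D_s^{\mc H}\in[1/3,5/3]$ and the bounds on $|A^{[s]}_{k,\bullet}|^2/|A^{[s]}_{\bullet}|^2$ — and Lemma~\ref{lemma:wronskian-energy-currents} for the localization errors of the Wronskian currents, choosing $E_W$ large enough relative to $y(\infty),|\hat y(-\infty)|$ and then normalizing, yields the claimed control of $\omega^2|A^{[s]}_{\mc I^+}|^2 + \omega_0^2|A^{[s]}_{\mc H^+}|^2$ by the past boundary terms plus inhomogeneity. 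The main obstacle I anticipate is not any single step but the bookkeeping of the global $\hat y$ current's interaction with the hierarchy: unlike in $\mc{F}_{\rm low,1}$, $\hat y'$ does not degenerate near $r=r_+$, so one must check carefully in \eqref{eq:hat-y-estimate-low-k-global} that the factors $-\hat y\Delta r^{-3}/\hat y'$ multiplying $\omega/\omega_0$ and $\omega^2/\omega_0^2$ terms are small (which holds because for the exponential current $\hat y_0$ one has $|w\hat y_0/\hat y_0'|\le B\max\{R^{-1/2},C^{-1}\}$ by \eqref{eq:standard-current-y0-hat-properties}, and $|\omega/\omega_0|\le\omega_{\rm low}/b(\tilde a_0)^{1/2}$ is small), and simultaneously that the coupling errors pushed down to $\uppsi_{(k+1)}$ carry enough smallness ($p$, $\omega_{\rm low}$, or $a^2$ in the relevant $(s,k)$ cases) to be absorbed after iteration.
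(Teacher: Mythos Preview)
Your broad architecture is right—global exponential $\hat y$ current exploiting $\omega_0^2\geq b(\tilde a_0)$, a $y$ current at large $r^*$, and localized Teukolsky--Starobinsky energy currents—but you miss the structural role of the $h$ current here, and this creates a genuine gap. The Teukolsky--Starobinsky $K$-type current's localization errors are not of the same nature as the Killing-current errors you are used to from $\mc{F}_{\rm low,1}$: by \eqref{eq:local-errors-Wronskian}, they involve \emph{all} levels $\uppsi_{(k)}$, $0\le k\le|s|$, with strong polynomial weights $R_3^{2(|s|-k)}$ (see \eqref{eq:low2-wronskian-errors} in the paper). To absorb them, $h_{(k)}$ must itself grow like $r^{N_k}$ with $N_k\approx 2(|s|-k)+1$; the paper takes $h_{k,{\rm left}}\propto r^{N}$ with $N=2[|s|-k-1/(4|s|+4)]+1$, the maximal exponent for which \eqref{eq:error-h-polynomial} still gives $h''+(|s|-k)\tfrac{w'}{w}h'\le0$. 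Your proposed $w^{-(|s|-k)/2}$-type interior piece grows only like $r^{|s|-k}$, short by a factor of roughly $r^{|s|-k}$, and cannot eat the Wronskian errors at levels $k<|s|$.

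Two further points of confusion stem from this. First, in the paper the $h_{(k)}$ is supported \emph{entirely at large positive} $r^*$ (on $[R_1^*,2e^{p^{-1}}R_3^*]$ with $R_1^*=R^*\gg1$); its left end sits inside the region where $\hat y'\omega_0^2$ is already strong and is absorbed there—there is no ``$h_{\rm left}$ toward $-\infty$''. Your plan to treat non-integrable $h_{\rm left}$ errors ``as $r^*\to-\infty$'' via \eqref{eq:basic-estimate-1} ``using that $a$ is genuinely subextremal here'' is based on a false premise: $\mc{F}_{\rm low,2}$ allows $a\in[0,M]$, including extremal. Second, the hierarchy runs in the \emph{opposite} direction from $\mc{F}_{\rm low,1b}$: because $h_{(k)}$ must be taken so large (polynomially in $r$), the coupling of the $k$th estimate to level $k+1$ carries \emph{no} small parameter; it is the coupling to $j<k$ that is small (a gain of $R_3^{-2(k-j)}$ from comparing $h_{(k)}$ to $h_{(j)}$, plus $C^{-1}$ from $\hat y$). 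One therefore closes by adding a large multiple of the level-$(k+1)$ estimate and a small multiple of the level-$j$ estimates, not the other way around.
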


We start by briefly recalling the strategy, which is best illustrated by Figure~\ref{fig:low2}.

\begin{figure}[htbp]
\centering
\begin{subfigure}{\textwidth}
\centering
\includegraphics[scale=1]{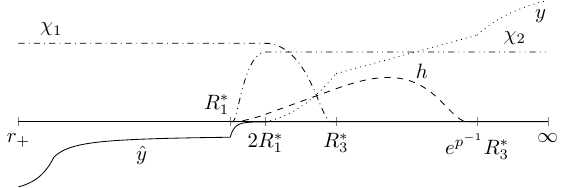}
\caption{}
\end{subfigure}
\begin{subfigure}{\textwidth}
\centering
\includegraphics[scale=1]{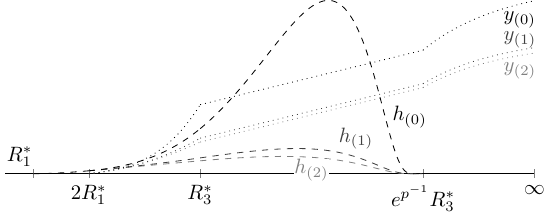}
\caption{}
\end{subfigure}
\caption{Currents in the proof of Theorems~\ref{thm:ode-estimates-Psi-A} and \ref{thm:ode-estimates-Psi-B} for the frequency ranges $\mc{F}_{\rm low,2}$ (relative size not accurate): (a) and (b) depicts currents applied to the transformed equation~\eqref{eq:transformed-k-separated} with $k=|s|$ and $k<|s|$, respectively.}
\label{fig:low2}
\end{figure}

In this case, $|\omega-m\upomega_+|>0$ and this bound can be used to construct a $y$ current yielding a good bulk globally but which degenerates at $r^*=+\infty$. By Lemma~\ref{lemma:properties-potential-low-omega}, for small enough $\omega_{\rm low}$, the potentials $\Re\mc{V}_{(k)}$ have good positivity properties in a compact range of $r^*$ located near $r^*=\infty$, which we can exploit using a compactly supported $h$ current that takes off in a region where the errors can be absorbed by the global $y$ current. For each $k$, the $h$ current goes down in the same controlled fashion already considered in the previous section; we connect it to a similarly constructed $y$ current localized near $r^*=+\infty$ whose goal is to exploit the sign of $\Re\mc{V}_{(k)}'$. As before, we convert the dependence of $y$ estimates on $\Im\mc{V}_{(k)}$ and $\uppsi_{(k)}'$ into a coupling to the $(k+1)$th equation in the transformed system (see Lemma~\ref{lemma:h-y-identity-k}).

As superradiance can occur, to control the boundary terms introduced by the $y$ currents, we must employ localized energy currents which, given the strength of the coupling error terms arising from Killing energy currents, must be of Teukolsky--Starobinsky type at least at $r=r_+$. Unfortunately, though the Teukolsky--Starobinsky-type energy currents do not generate coupling errors, their localization errors are somewhat more severe, as they involve strongly $r$-weighted terms involving the $k<|s|$ transformed variables $\uppsi_{(k)}$. To absorb the localization errors, we must therefore, for each $k$, take $h$ very large in a region where the $k$th potential enjoy good positivity properties. Thus, whereas in $\mc{F}_{\rm low,1b}$ and $\mc{F}_{\rm low,1c}$, we could choose $h$ so that the coupling of the $k$th equation to the $(k+1)$th equation was small, here that smallness will be absent; instead, it will be the coupling to $j<k$ which is small: aside from the inhomogeneity, a good bulk term in $k$ can be controlled by, if $k<|s|$, the bulk terms in $k+1$ and, if $k\neq 0$, by small multiples of the bulk term in $j<k$. By iterating along the system, we can nevertheless close an estimate for $\Psi$, i.e.\ at the level $k=|s|$.

Concretely:

\begin{proof}[Proof of Proposition~\ref{prop:low2} for $\mc{F}_{\rm low,2}$] Let $(\omega,m,\Lambda)$  be an admissible frequency triple such that $(\omega,m,\Lambda)\in\mc{F}_{\rm low,2}$ and $a>\tilde{a}_0$. Let $\omega_0:=|\omega-m\upomega_+|>0$. In our construction of virial currents, we will use large positive constants $C$ and $R_1^*,R_2^*,R_3^*$ as well as small positive parameters $p$, $\hat{p}$, $\tilde{p}$ and $\epsilon$.

\medskip
\noindent\textit{The $\hat{y}$ current.} To take advantage of the fact that $\omega_0$ is bounded away from zero, we define a current, $\hat{y}$, such that the term $\hat{y}'\omega_0^2|\Psi|^2$ dominates. One could consider the exponential current in \cite[Proposition 8.7.3]{Dafermos2016b}. Indeed, we will choose a current which is exponentially decaying as $r^*\to \infty$ in a large compact $r^*$ region, but in contrast with the $s=0$ case, we make $\hat{y}'$ have milder decay as $r^*\to\pm \infty$.  Indeed, for some $R^*\geq C^2>1$ to be fixed, we set $\hat y=\hat{y}_0$, where $\hat{y}_0$ is given in \eqref{eq:standard-current-y0-hat} setting $\delta=1/2$, i.e.
\begin{align*}
\hat y(r^*)&:=-\exp\lp[-C(r-r_+)\rp]\mathbbm{1}_{[-R^*,R^*]}+\hat y(-R^*)\lp(2-\frac{(R^*)^{1/2}}{(-r^*)^{1/2}}\rp)\mathbbm{1}_{(-\infty,-R^*)}+\frac{\hat y(R^*)(R^*)^{1/2}}{(r^*)^{1/2}}\mathbbm{1}_{(R^*,\infty)}\,,
\end{align*}
with the properties outlined in \eqref{eq:standard-current-y0-hat-properties}.

With our choice of $\hat{y}$, we indeed have $\hat{y}'\omega_0^2|\Psi|^2$ dominating the bulk term on the left hand side of a $\hat{y}$ estimate. To see this, let $\chi$ be a smooth cutoff function which is $0$ on $(-\infty,-R^*]$ and 1 on $[-R^*+1,\infty)$. Let $\hat{\mc{V}}_{(k)}=\Re\mc{V}_{(k)}-\Re\mc{V}_{(k)}(r_+)$. As $\hat{\mc{V}}_{(k)}=O(r-r_+)$, $\hat{\mc{V}}_{(k)}'=O(\Delta)$ as $r\to r_+$ and $\hat{\mc{V}}_{(k)}=O(r^{-2})$ as $r\to \infty$, we have
\begin{align*}
&\lp|\int_{-\infty}^\infty (\hat y\hat{\mc{V}}_{(k)})'|\uppsi_{(k)}|^2dr^*\rp| \\
&\quad\leq \int_{-\infty}^\infty \hat{y}'\lp(\lp|\hat{\mc{V}}_{(k)}(1-\chi)\rp|+\lp|\hat{y}\hat{[\mc{V}}_{(k)}(1-\chi)]'/\hat{y}'\rp|\rp)|\uppsi_{(k)}|^2dr^*+\int_{-\infty}^\infty 2|\hat{y}\chi\hat{\mc{V}}_{(k)}||\uppsi_{(k)}'||\uppsi_{(k)}|dr^* \\
&\quad\leq  \int_{-\infty}^\infty \lp\{\frac18\hat{y}'|\uppsi_{(k)}'|+\frac18\hat{y}'\omega_0^2|\uppsi_{(k)}|^2B\lp[\lp(\frac{\hat y \chi\hat{\mc{V}}_{(k)}}{\hat y'\omega_0}\rp)^2+\lp|\frac{\hat{y}\hat{\mc{V}}_{(k)}(1-\chi)}{\hat{y}'\omega_0^2}\rp|+\lp|\frac{\hat{y}\hat{\mc{V}}_{(k)}'\chi}{\hat{y}'\omega_0^2}\rp|+\lp|\frac{\hat{y}\hat{\mc{V}}_{(k)}\chi'}{\hat{y}'\omega_0^2}\rp|\rp]\rp\}dr^*\\
&\quad\leq \frac18 \int_{-\infty}^\infty \lp\{\hat{y}'|\uppsi_{(k)}'|+\hat{y}'\omega_0^2|\uppsi_{(k)}|^2\frac{B(M,\omega_0,|s|)}{C}\rp\}dr^*\leq \frac18 \int_{-\infty}^\infty \lp\{\hat{y}'|\uppsi_{(k)}'|+\hat{y}'\omega_0^2|\uppsi_{(k)}|^2\rp\}dr^*\,,
\end{align*}
for large enough $C$ depending on $\tilde{a}_0$.  Moreover, note
\begin{align*}
\frac{1}{r^{1/2}}\frac{|\hat{y}|\Delta r^{-3}}{\hat y'}\leq \frac{B}{C}\,,
\end{align*}
so estimate \eqref{eq:hat-y-estimate-low-k-global} from Lemma~\ref{lemma:y-estimate-low} becomes
\begin{align*}
&\int_{-\infty}^\infty \lp\{\frac12 \hat y'|\uppsi_{(k)}'|^2+\frac{5}{8}\hat y'\omega_0^2|\uppsi_{(k)}|^2\rp\}dr^*\\
&\quad \leq  2|\hat y(-\infty)|\omega_0^2\lp(\lp|\swei{A}{s}_{k,\mc{H}^{+}}\rp|^2+\lp|\swei{A}{s}_{k,\mc{H}^{-}}\rp|^2\rp)+\int_{-\infty}^\infty 2\hat y\Re\lp[\mathfrak{G}_{(k)}\overline{\uppsi_{(k)}}'\rp]dr^*\\
&\quad \qquad +k\sum_{j=0}^{k-1} \frac{B}{C}\hat{y}'\omega_0^2|\uppsi_{(j)}|^2dr^* +(|s|-k)\int_{-\infty}^\infty \frac{B}{C r^2}\hat{y}'|\uppsi_{(k+1)}|^2dr^*\,,
\end{align*}
for $\omega_{\rm low}$ sufficiently small and for large enough $C$ depending on $\tilde{a}_0$. As the estimate holds for all $k\in\{0,...,|s|\}$, we can apply it to each $j$ in the sum on the right hand side. For instance, by applying it to $j=k-1$, we obtain
\begin{align*}
&\int_{-\infty}^\infty \lp\{\frac12 \hat y'|\uppsi_{(k)}'|^2+\frac{5}{8}\hat y'\omega_0^2|\uppsi_{(k)}|^2\rp\}dr^*\\
&\quad \leq  2|\hat y(-\infty)|\omega_0^2\lp(\lp|\swei{A}{s}_{k,\mc{H}^{+}}\rp|^2+\lp|\swei{A}{s}_{k,\mc{H}^{-}}\rp|^2\rp)+\int_{-\infty}^\infty 2\hat y\Re\lp[\mathfrak{G}_{(k)}\overline{\uppsi_{(k)}}'\rp]dr^*\\
&\quad\qquad+ \frac{B}{C}\lp\{|\hat y(-\infty)|\omega_0^2\lp|\swei{A}{s}_{k-1,\mc{H}^{+\sign s}}\rp|^2+\int_{-\infty}^\infty 2\hat y\Re\lp[\mathfrak{G}_{(k-1)}\overline{\uppsi_{(k-1)}}'\rp]dr^*\rp\}\\
&\quad \qquad +  k\sum_{j=0}^{k-2}\int_{-\infty}^\infty \frac{B}{C}\hat{y}'\omega_0^2|\uppsi_{(j)}|^2dr^* +(|s|-k)\int_{-\infty}^\infty \frac{B\hat{y}'}{C r^2}|\uppsi_{(k+1)}|^2dr^* +(|s|-k+1)\int_{-\infty}^\infty \frac{B\hat{y}'}{C^2}|\uppsi_{(k)}|^2dr^*\,,
\end{align*}
where the last term on the right hand side can clearly be absorbed into the left hand side as long as $C=C(\tilde{a}_0)$ is sufficiently large. By iterating, we arrive at
\begin{align*}
&\int_{-\infty}^\infty \lp\{\frac12 \hat y'|\uppsi_{(k)}'|^2+\frac{1}{2}\hat y'\omega_0^2|\uppsi_{(k)}|^2\rp\}dr^*\numberthis \label{eq:low2-cumulative-1}\\
&\quad \leq  (|s|-k)\int_{-\infty}^\infty \frac{B\hat{y}'}{C r^2}|\uppsi_{(k+1)}|^2dr^*+2|\hat y(-\infty)|\omega_0^2\sum_{j=0}^k\lp(\lp|\swei{A}{s}_{j,\mc{H}^{+}}\rp|^2+\lp|\swei{A}{s}_{j,\mc{H}^{-}}\rp|^2\rp)\\
&\quad\qquad \sum_{j=0}^{k-1}\int_{-\infty}^\infty 2\hat y\Re\lp[\mathfrak{G}_{(j)}\overline{\uppsi_{(j)}}'\rp]dr^*\,. 
\end{align*}

\medskip
\noindent\textit{The energy currents near $r=r_+$.} Our $\hat{y}$ current produces boundary terms at $r=r_+$ which have the wrong sign. To correct them, we introduce a $K$-type energy current at the level of $\Psi$ and, given the fact that the frequency regime $\mc{F}_{\rm low, 2}$ contains superradiant frequencies, it must be localized away from $r=\infty$.

As in all other frequency regimes, we add Killing energy currents; however, as $\omega_0$ is not small, we will require the $Q^K$ current to be multiplied by a small parameter, hence it will not be strong enough to control the boundary terms generated by $\hat y$. To control the boundary terms, we can appeal to Lemma~\ref{lemma:bdry-term-via-constraint-bdd} or, if the precise boundary term relations of Lemma~\ref{lemma:uppsi-general-asymptotics} hold, we can introduce a Teukolsky--Starobinsky $K$-type energy current as well, which does not produce coupling error terms; for this reason, we now restrict to $|s|=0,1,2$.

Let $\chi_1\leq 1$ be a smooth function with $\chi_1=1$ in $(-\infty,R_2^*]$ and 0 in $[R_3^*,\infty)$ for some $R_3=2R_2^*\gg 1$. Adding $E_W\chi_1Q^{W,\,K}$, we generate bulk error terms given by Lemma~\ref{lemma:wronskian-energy-currents}; moreover  $E_1\chi_1Q^{K}$ will also generate localization errors. Assuming $E_1\leq B E_W$ and $R_3$ sufficiently large that $|am|/R_3^*\ll 1$, such errors are controlled by
\begin{align}
\frac{BE_W\omega_0}{R_3^*} \int_{R_2^*}^{R_3^*} \sum_{k=0}^{|s|}\lp(R_3^*|\uppsi_{(k)}'|^2+\frac{1}{R_3^*}|\uppsi_{(k)}|^2\rp)R_3^{2(|s|-k)}dr^*\,. \label{eq:low2-wronskian-errors}
\end{align}
As the Killing current $Q^K$ is not conserved for $s\neq 0$ and $a\neq 0$, it generates coupling error terms:
\begin{align*}
&E_1 \sum_{k=0}^{|s|-1}\int_{-\infty}^\infty aw\chi_1(\omega-m\upomega_+)\Im\lp[\overline{\Psi}(c_{s,s,k}^{\mr{id}}+imc_{s,s,k}^{\Phi})\uppsi_{(k)}\rp]dr^* \\
&\quad\leq \int_{-\infty}^\infty \lp\{\frac{1}{4}\hat{y}'\omega_0^2|\Psi|^2+|s|\sum_{k=0}^{|s|-1}\frac{1}{8|s|}\hat{y}'\omega_0^2|\uppsi_{(k)}|^2\frac{B E_1^2}{C|\hat{y}(R^*)|^2}|\uppsi_{(k)}|^2\rp\}dr^* \,,
\end{align*}
where we have used the properties of $\hat{y}$ in \eqref{eq:standard-current-y0-hat-properties} and the assumption that $C(\tilde{a}_0)$ is sufficiently large that $(C\omega_0^2)^{-1}\leq 1$. We take $E_1$ to sufficiently small depending on $R^*$ and $C$ so that it can be controlled by the left hand side of \eqref{eq:low2-cumulative-1}.

Finally, let us discuss the boundary terms: from Lemma~\ref{lemma:properties-frequencies-low-omega}(iv), we have
\begin{align*}
|\hat y(-\infty)|\omega_0^2\lp(\lp|\swei{A}{s}_{k,\mc{H}^{+}}\rp|^2+\lp|\swei{A}{s}_{k,\mc{H}^{-}}\rp|^2\rp) &\leq B|\hat y(-\infty)|\omega_0^2\frac{\mathfrak{D}_{s,k}^{\mc H}}{\mathfrak{D}_{s}^{\mc H}}\lp(\lp|\swei{A}{s}_{\mc{H}^{+}}\rp|^2+\lp|\swei{A}{s}_{\mc{H}^{-}}\rp|^2\rp)\\
&\leq B|\hat y(-\infty)|\omega_0^2\lp(\lp|\swei{A}{s}_{\mc{H}^{+}}\rp|^2+\lp|\swei{A}{s}_{\mc{H}^{-}}\rp|^2\rp)\,.
\end{align*}
Hence,  from \eqref{eq:low2-cumulative-1}, we arrive at
\begin{align*}
&\sum_{k=0}^{|s|}\int_{-\infty}^\infty \lp\{\frac12 \hat y'|\uppsi_{(k)}'|^2+\frac{1}{2}\hat y'\omega_0^2|\uppsi_{(k)}|^2\rp\}dr^*+ \sum_{k=0}^{|s|}|\hat y(-\infty)|\omega_0^2\lp|\swei{A}{s}_{k,\mc{H}^{+}}\rp|^2\\
&\quad \leq  \frac{BE_W\omega_0}{R_3^*} \int_{R_2^*}^{R_3^*} \sum_{k=0}^{|s|}\lp(R_3^*|\uppsi_{(k)}'|^2+\frac{1}{R_3^*}|\uppsi_{(k)}|^2\rp)R_3^{2(|s|-k)}dr^*\\
&\quad\qquad +B\sum_{k=0}^{|s|}|\hat y(-\infty)|\omega_0^2\lp|\swei{A}{s}_{k,\mc{H}^{-}}\rp|^2+B\int_{-\infty}^\infty \lp\{\sum_{k=0}^{|s|}\hat y\Re\lp[\mathfrak{G}_{(k)}\overline{\uppsi_{(k)}}'\rp]+E_W(Q^{W,K})'\rp\}dr^* \,,\numberthis \label{eq:low2-cumulative-2a}
\end{align*}
where we have chosen $E_W$ to be at least proportional to  $|\hat y(-\infty)|$ and used Lemma~\ref{lemma:properties-frequencies-low-omega}(iv) again.

\medskip
\noindent \textit{The $h$ current and control of $K$-type energy localization errors.} It is immediate from \eqref{eq:low2-cumulative-2a} that we require an additional current or currents to absorb the first terms on the right hand side, which are localization errors of the $K$-type energy current at levels $k=0,...,|s|$. To construct such currents, we exploit the positiveness of $\Re\mc{V}_{(k)}$ in a bounded region of large $r$: for each $k\in\{0,...,|s|\}$, we consider an $h_{(k)}$ supported on $[R_1^*,2e^{p^{-1}}R_3^*]$, for some $R_3^*>R_1^*=R^*$ sufficiently large and $p\in(0,1)$ small; in fact, we will have $h_{(k)}=h_{k,\rm{left}}+h_{k,\rm{right}}$.

Given the Teukolsky--Starobinsky energy localization errors, \eqref{eq:low2-wronskian-errors}, to absorb, we would like $h_{k}$ to be a polynomial in $r$ in the region where such errors occur. We are, however, constrained by the error which is further introduced by derivatives of $h_{(k)}$. From the calculation \eqref{eq:error-h-polynomial} in Section~\ref{sec:bounded-smallness}, we find that $\tilde{h}_k=r^N$ for  $N=2[|s|-k-1/(4|s|+4)]+1$, generates no errors. Thus, let $\chi$ be some cutoff function taking values in $[0,1]$ and such that $\chi=0$ for $r^*\leq R_1^*$ and $\chi=1$ for $r^*\geq 2R_1^*$, we set
\begin{align*}
h_{k,\rm{left}}= \hat{p}\frac{|\hat{y}(R^*)|}{(2R_1)^{N-1}}\omega_0^2\chi \tilde{h}_{k}\mathbbm{1}_{[R_1^*,R_3^*]}\,,
\end{align*}
for some small $\hat{p}$ to be determined. Using \eqref{eq:error-h-polynomial}, we obtain
\begin{align*}
&h_{k,\rm{left}}''+(|s|-k)\frac{w'}{w}h_{k,\rm{left}}' \\
&\quad= \hat{p}\frac{|\hat{y}(R^*)|}{(2R_1)^{N-1}}\omega_0^2\lp[\lp(\tilde{h}_{k}''+(|s|-k)\frac{w'}{w}\tilde{h}_{k}'\rp)\chi+\lp(\chi''+(|s|-k)\frac{w'}{w}\chi'\rp)\tilde{h}_{k}+2 \chi'\tilde{h}_{k}'\rp]\mathbbm{1}_{[R_1^*,R_3^*]}\\
&\quad\leq  \hat{p}\hat{y}'\omega_0^2\lp(\frac{2(r^*)^{3/2}}{(R_1^*)^{1/2}r}\rp)\lp(\frac{r}{2R_1}\rp)^{N-1}\lp[\lp(\chi''+(|s|-k)\frac{w'}{w}\chi'\rp)r^2+2 r\chi'N\rp]\mathbbm{1}_{[R_1^*,2R_1^*]}\\
&\quad\leq B\hat{p}\hat{y}'\omega_0^2\mathbbm{1}_{[R_1^*,2R_1^*]}\,,
\end{align*}
which can be absorbed by the $\hat{y}$ current if $\hat{p}$ is sufficiently small.
Finally, $h_{k, \rm{right}}$ is modeled after the standard current $h_+$ introduced in Section~\ref{sec:low-hierarchy-h}: $h_{k, \rm{right}}=h_+\mathbbm{1}_{[R_3^*,\infty)}$, hence
\begin{align*}
h_{k, \rm{right}}''+(|s|-k)\frac{w'}{w}h_{k, \rm{right}}'=\frac{Bp}{(r^*)^2}\hat{p}|\hat{y}(R^*)|R_1\lp(\frac{R_3}{R_1}\rp)^N\omega_0^2\mathbbm{1}_{[R_3^*,2e^{p^{-1}}R_3^*]}\,,
\end{align*} 
which is integrable in $r^*$.

Using the new currents, we now find that \eqref{eq:low2-wronskian-errors} is controlled by
\begin{align*}
&\frac{B E_W \omega_0}{R_3^{1-\frac{(|s|-k)}{2(|s|+1)}}}\int_{-\infty}^{\infty}\lp[|\uppsi_{(k)}'|^2+\frac{1}{R_3^2}|\uppsi_{(k)}|^2\rp]R_3^{N}\mathbbm{1}_{[R_2^*,R_3^*]}dr^* \numberthis \label{eq:low2-wronskian-errors-control}\\
&\quad \leq \frac{BE_W(R_1)^{2|s|}}{R_3^{1/2}\hat{p}\omega_0\lp|\hat{y}(R^*)\rp|}\int_{-\infty}^{\infty}\lp\{h_{(k)}|\uppsi_{(k)}'|^2+h_{(k)}\lp[\Re\mc{V}_{(k)}-\frac{|s|-k}{2}\lp(\frac{w'}{w}\rp)'-\omega^2\rp]|\uppsi_{(k)}|^2\rp\}\mathbbm{1}_{[R_2^*,R_3^*]}dr^*\,,
\end{align*} 
where the prefactor is small as long $R_3$ is sufficiently large compared to $R_1$. We stress that even in the $s=0$ case, a current such as $h:=h_{(|s|)}$ is still necessary: the standard energy current introduces errors exactly of the form above (when $k=|s|$). However, in that case, as our $h$ is simply linear in the region $[R_2^*,R_3^*]$. In the case where $E_W\equiv 0$, nevertheless  the largeness of $R_3$ allows us to obtain an estimate
\begin{align*}
&\sum_{k=0}^{|s|}(R_3^*)^{1/2}\int_{R_2^*}^{R_3^*} w(|\uppsi_{(k)}'|^2+|\uppsi_{(k)}|^2)dr^* \leq B\sum_{k=0}^{|s|} \int_{-\infty}^{\infty} \lp((h_{(k)}+\hat y)|\uppsi_{(k)}'|^2+(\hat y'\omega_0^2+h_{(k)}(\mc V-\omega^2))|\uppsi_{(k)}|^2\rp)dr^*\\
&\quad \leq \sum_{k=0}^{|s|} |\hat{y}(-\infty)|^2\omega_0^2|\swei{A}{s}_{k,\mc H^+}|^2
+B\sum_{k=0}^{|s|} \int_{R_3^*}^\infty \lp(h_{(k)}''+(|s|-k)\frac{w'}{w}h_{(k)}'\rp)|\uppsi_{(k)}|^2dr^*\,,
\end{align*}
where there is a gain in the bulk term on the left hand side when compared to the boundary term on the right hand side.

\medskip
\noindent \textit{The $y$ current.} After the application of the $h_{(k)}$ current, we still need to worry about the error terms introduced by derivatives of $h_{(k)}$ in $[R_1^*,2R_1^*]$ and $[R_3^*,2e^{p^{-1}}R_3^*]$: as previously computed,
\begin{align*}
&\frac12\lp(h''_{(k)}+\lp(|s|-k\rp)\frac{w'}{w}h_{(k)}'\rp) \leq B\hat{p}\hat{y}'\omega_0^2\mathbbm{1}_{[R_1^*,2R_1^*]}+\frac{Bp}{(r^*)^2}\hat{p}|\hat{y}(R^*)|R_1\lp(\frac{R_3}{R_1}\rp)^N\omega_0^2\mathbbm{1}_{[R_3^*,2e^{p^{-1}}R_3^*]}\,.
\end{align*}
While the first term, supported in $[R_1^*,2R_1^*]$,  has been specifically chosen so that it can be absorbed by the $\hat{y}$ current, the latter cannot be absorbed by either $h_{(k)}$ or $\hat y$. 

As we have done in $\mc{F}_{\rm low, 1}$, we control the errors by gluing in a $y_{(k)}$ current  near $r^*=\infty$ that takes advantage of the fact that $\Re\mc{V}'<0$. To this end, for some $\delta\in(0,1]$ and $\tilde{p}\in(0,1)$, we let $y_{(k)}$ be modeled on  \eqref{eq:low-frequencies-y} introduced earlier, i.e.
\begin{align*}
y_{(k)}(r^*)&= \tilde{p}\lp(\int_{R_2^*}^{r^*}h_{(k)}(x^*)dx^*\rp)\mathbbm{1}_{[R_2^*,R_3^*]}\\
&\qquad +y_{(k)}(R_3^*)\lp[1+\frac{64\epsilon}{R_3^*}\lp(\frac{1}{r^*\Re\mc{V}_{(k)}}-\frac{1}{R_3^*\Re\mc{V}_{(k)}|_{r^*=R_3^*}}\rp)\rp]\mathbbm{1}_{[R_3^*, 2e^{p^{-1}}R_3^*]} \\
&\qquad+y_{(k)}(2e^{p^{-1}}R_3^*)\lp[1+\frac{1}{(2e^{p^{-1}}R_3^*)^{\delta}}-\frac{1}{(r^*)^{\delta}}\rp]\mathbbm{1}_{[2e^{p^{-1}}R_3^*, \infty)}\,.
\end{align*}
Then, for $\epsilon$ sufficiently small and for $p$ sufficiently small depending on $\tilde{p}\epsilon$, $y_{(k)}'>0$, 
\begin{align*}
1-\frac{r^* y_{(k)}'}{y_{(k)}}\geq b(\tilde{p},p,\epsilon) \frac{y}{(r^*)^2}\,,
\end{align*}
and \eqref{eq:low-frequencies-h+y} holds, i.e.
\begin{align*}
&\frac34 h_{(k)}\lp(\mc{V}_{(k)} +\frac{|s|-k}{2}\lp(\frac{w'}{w}\rp)'-\omega^2\rp) +\frac34 y'\omega^2-\frac34(y\Re\mc{V}_{(k)})' -\frac14y'\Re\mc{V}_{(k)}-\frac12\lp(h_{k}'' +(|s|-k)\frac{w'}{w}h_{k}'\rp)\\
&\quad \geq  \frac{1}{2}h_{(k)}\lp(\mc{V}_{(k)}+\frac{3(|s|-k)}{4}\lp(\frac{w'}{w}\rp)' -\omega^2\rp)+\frac34 y_{(k)}'\omega^2-\frac12 y_{(k)}\Re\mc{V}_{(k)}'\mathbbm{1}_{[R_3^*,\infty)}  -B\hat{p}\hat{y}'\omega_0^2\mathbbm{1}_{[R_1^*,2R_1^*]}\,,
\end{align*}
and hence
\begin{align*}
&\int_{-\infty}^\infty \lp\{(y_{(k)}'+h_{(k)})|\uppsi_{(k)}'|^2+\frac{1}{2}h_{(k)}\lp(\mc{V}_{(k)}-\frac{3(|s|-k)}{4}\lp(\frac{w'}{w}\rp)' -\omega^2\rp)|\uppsi_{(k)}|^2\rp\}dr^*\\
&\qquad +\frac{1}{2}\int_{-\infty}^\infty\lp[y_{(k)}'\omega^2-y_{(k)}\Re\mc{V}_{(k)}'\mathbbm{1}_{[R_3^*,\infty)}\rp]|\uppsi_{(k)}|^2dr^*\\
&\quad \leq  2y_{(k)}(\infty)\omega^2\lp(\lp|\swei{A}{s}_{k,\mc{I}^{+}}\rp|^2+\lp|\swei{A}{s}_{k,\mc{I}^{-}}\rp|^2\rp)+ \int_{-\infty}^\infty B\hat{p}\hat{y}'\omega_0^2|\uppsi_{(k)}|^2\mathbbm{1}_{[R_1^*,2R_1^*]}dr^*\\
&\quad\qquad -B(\varepsilon_{\rm width},\omega_{\rm high})(|s|-k)\int_{-\infty}^\infty R_3^2\lp(\omega^2+\frac{1}{r^2}\rp)y_{(k+1)}\Re\mc{V}_{(k+1)}'|\uppsi_{(k+1)}|^2dr^*\\
&\quad\qquad+\int_{-\infty}^\infty \{2y_{(k)}\Re\lp[\mathfrak{G}_{(k)}\overline{\uppsi_{(k)}}'\rp]+h_{(k)}\Re\lp[\mathfrak{G}_{(k)}\overline{\uppsi_{(k)}}\rp]\}dr^*\\ 
&\quad\qquad +k\sum_{j=0}^{k-1} B(\varepsilon_{\rm width},\omega_{\rm high}) \lp[R_3^* h_{(k)}\lp(\mc{V}_{(j)}-\frac{3(|s|-k)}{4}\lp(\frac{w'}{w}\rp)'-\omega^2\rp)-y_{(k)}\Re\mc{V}_{(j)}'\rp]|\uppsi_{(j)}|^2dr^*\,, \numberthis\label{eq:low2-cumulative-3a}
\end{align*}
where we have used estimate~\eqref{eq:h-estimate-low} from Lemma~\ref{lemma:h-estimate-low} and estimate \eqref{eq:y-estimate-low-k} from Lemma~\ref{lemma:y-estimate-low}. 

\medskip 
\noindent \textit{The energy current near $r=\infty$.} Though it corrects the $h''$ error, adding a $y$ current produces boundary terms at $r=\infty$ which manifestly have the wrong sign. We add a localized $Q^T$ current, i.e.\ $E\chi_2Q^T$, where $\chi_2\leq 1$ is a smooth function that is 1 in $[R_2^*,\infty)$ and 0 in $(-\infty,R_1^*]$ and where $E\geq 3\max\{|\hat y(-\infty)|,\max _k y_{(k)}(\infty)\}$. For any $s$, we obtain a localization error
\begin{align*}
\int_{-\infty}^\infty E\chi_2'\omega|\uppsi_{(k)}'||\uppsi_{(k)}|dr^*&\leq \frac{BE\omega_{\rm low}}{R_2^*-R_1^*}\int_{-\infty}^\infty|\Psi'||\Psi|\mathbbm{1}_{[R_1^*,R_2^*]}dr^*\\
&\leq \int_{-\infty}^\infty \lp(\lp(\frac{BE}{(R_2^*-R_1^*)R^*|\hat{y}(R^*)|}\frac{\omega_{\rm low}}{\omega_0\hat p}\rp)^2\hat y'|\uppsi_{(k)}'|+B\hat p\hat{y}'\omega_0^2|\uppsi_{(k)}|^2\rp)dr^*\\
&\leq \int_{-\infty}^\infty \lp(\frac{1}{128}\hat y'|\uppsi_{(k)}'|+B\hat p\hat{y}'\omega_0^2|\uppsi_{(k)}|^2\rp)dr^*
\end{align*}
by making $\omega_{\rm low}$ sufficiently small depending on $E$, $R^*$, $R_2^*$ and $\hat{p}$. For $s\neq 0$, we moreover have coupling errors
\begin{align*}
&\int_{-\infty}^\infty \sum_{j=0}^{k-1} E\chi_2aw\omega\Im\lp[\overline{\uppsi_{(k)}}\lp(c_{s,k,j}^{\rm id}+imc_{s,k,j}^\Phi\rp)\uppsi_{(j)}\rp]dr^* \\
&\quad\leq \int_{-\infty}^\infty \lp\{B\hat p \hat y'\omega_0^2|\uppsi_{(k)}|+ |s|B(\varepsilon_{\rm width},\omega_{\rm high})\lp(\frac{E}{R^*|\hat{y}(R^*)|}\frac{\omega_{\rm low}}{\omega_0^2\hat p}\rp)^2\sum_{j=0}^{k-1}\hat{y}'\omega_0^2\mathbbm{1}_{[R_2^*,\infty)}|\uppsi_{(j)}|^2\rp\}dr^*\\
&\quad\leq \sum_{j=0}^k\int_{-\infty}^\infty B\hat p \hat y'\omega_0^2|\uppsi_{(j)}|dr^*\,,
\end{align*}
where we have appealed to the smallness of $\omega_{\rm low}$ compared to $E$, $R^*$ and $p$. Thus, from \eqref{eq:low2-cumulative-3a} we arrive at 
\begin{align*}
&\int_{-\infty}^\infty \lp\{(y_{(k)}'+h_{(k)})|\uppsi_{(k)}'|^2+\frac{1}{2}h_{(k)}\lp(\mc{V}_{(k)}-\frac{3(|s|-k)}{4}\lp(\frac{w'}{w}\rp)' -\omega^2\rp)|\uppsi_{(k)}|^2\rp\}dr^*\\
&\qquad +\frac{1}{2}\int_{-\infty}^\infty\lp[y_{(k)}'\omega^2-y_{(k)}\Re\mc{V}_{(k)}'\mathbbm{1}_{[R_3^*,\infty)}\rp]|\uppsi_{(k)}|^2dr^* + y_{(k)}(\infty)\omega^2\lp|\swei{A}{s}_{k,\mc{I}^{+}}\rp|^2\\
&\quad \leq  B E\omega^2\lp|\swei{A}{s}_{k,\mc{I}^{-}}\rp|^2+ \int_{-\infty}^\infty \lp[\frac{1}{128}\hat y'|\uppsi_{(k)}|+B\sum_{j=0}^k\hat{p}\hat{y}'\omega_0^2|\uppsi_{(j)}|^2\rp]\mathbbm{1}_{[R_1^*,2R_1^*]}dr^*\\
&\quad\qquad -B(\varepsilon_{\rm width},\omega_{\rm high})(|s|-k)\int_{-\infty}^\infty R_3^2\lp(\omega^2+\frac{1}{r^2}\rp)y_{(k+1)}\Re\mc{V}_{(k+1)}'|\uppsi_{(k+1)}|^2dr^*\\
&\quad\qquad+\int_{-\infty}^\infty \{2y_{(k)}\Re\lp[\mathfrak{G}_{(k)}\overline{\uppsi_{(k)}}'\rp]+h_{(k)}\Re\lp[\mathfrak{G}_{(k)}\overline{\uppsi_{(k)}}\rp]-E\omega\Im\lp[\mathfrak{G}_{(k)}\overline{\uppsi_{(k)}}\rp]\}dr^*\\ 
&\quad\qquad +k\sum_{j=0}^{k-1} B(\varepsilon_{\rm width},\omega_{\rm high}) \lp[R_3^* h_{(k)}\lp(\mc{V}_{(j)}-\frac{3(|s|-k)}{4}\lp(\frac{w'}{w}\rp)'-\omega^2\rp)-y_{(k)}\Re\mc{V}_{(j)}'\rp]|\uppsi_{(j)}|^2dr^*\,.
\end{align*}

Now, notice that $\omega_{\rm low}$ is small enough that $R_3\omega\ll 1$, rending the term in $\uppsi_{(k+1)}$ involving $\omega^2R_3^2$ neutral in powers of $R_3$.  Since  $h_{(k)},y_{(k)}$ differ from $h_{(j)},y_{(j)}$ by a factor of $R_3^{-2(k-j)}$, each of the terms involving $\uppsi_{(j)}$ in the last line is easily controlled by the left hand side of the same estimate with $k=j$. Thus, iterating the estimate, we obtain
\begin{align*}
&\int_{-\infty}^\infty \lp\{(y_{(k)}'+h_{(k)})|\uppsi_{(k)}'|^2+\frac{1}{2}h_{(k)}\lp(\mc{V}_{(k)}-\frac{3(|s|-k)}{4}\lp(\frac{w'}{w}\rp)' -\omega^2\rp)|\uppsi_{(k)}|^2\rp\}dr^*\\
&\qquad +\frac{1}{2}\int_{-\infty}^\infty\lp[y_{(k)}'\omega^2-y_{(k)}\Re\mc{V}_{(k)}'\mathbbm{1}_{[R_3^*,\infty)}\rp]|\uppsi_{(k)}|^2dr^*+E\omega^2\lp|\swei{A}{s}_{k,\mc{I}^{+}}\rp|^2\\
&\quad \leq  BE\omega^2\sum_{j=0}^k\lp|\swei{A}{s}_{j,\mc{I}^{-}}\rp|^2+ \sum_{j=0}^k\int_{-\infty}^\infty \lp[\frac{1}{128}\hat y' |\uppsi_{(j)}'|^2+B\hat{p}\hat{y}'\omega_0^2|\uppsi_{(j)}|^2\rp]\mathbbm{1}_{[R_1^*,2R_1^*]}dr^*\\
&\quad\qquad -B(\varepsilon_{\rm width},\omega_{\rm high})(|s|-k)\int_{-\infty}^\infty R_3^2\lp(\omega^2+\frac{1}{r^2}\rp)y_{(k+1)}\Re\mc{V}_{(k+1)}'|\uppsi_{(k+1)}|^2dr^*\\
&\quad\qquad+B\sum_{j=0}^k\int_{-\infty}^\infty \{2y_{(j)}\Re\lp[\mathfrak{G}_{(j)}\overline{\uppsi_{(k)}}'\rp]+h_{(j)}\Re\lp[\mathfrak{G}_{(j)}\overline{\uppsi_{(j)}}\rp]-E\omega\Im\lp[\mathfrak{G}_{(j)}\overline{\uppsi_{(j)}}\rp]\}dr^*\,. \numberthis\label{eq:low2-cumulative-3}
\end{align*}
Combining with \eqref{eq:low2-cumulative-2a} and \eqref{eq:low2-wronskian-errors-control}, if $\hat p$ is sufficiently small, we obtain
\begin{align*}
&\sum_{k=0}^{|s|}\int_{-\infty}^\infty \lp\{(\hat y +y_{(k)}'+h_{(k)})|\uppsi_{(k)}'|^2+h_{(k)}\lp(\mc{V}_{(k)}-\frac{3(|s|-k)}{4}\lp(\frac{w'}{w}\rp)' -\omega^2\rp)|\uppsi_{(k)}|^2\rp\}dr^*\\
&\qquad +\sum_{k=0}^{|s|}\int_{-\infty}^\infty\lp[\hat y'\omega_0^2 +y_{(k)}'\omega^2-y_{(k)}\Re\mc{V}_{(k)}'\mathbbm{1}_{[R_3^*,\infty)}\rp]|\uppsi_{(k)}|^2dr^*+\sum_{k=0}^{|s|}\omega^2\lp|\swei{A}{s}_{k,\mc{I}^{+}}\rp|^2\\
&\quad \leq  B  E\omega^2\sum_{j=0}^{|s|}\lp(\lp|\swei{A}{s}_{j,\mc{I}^{-}}\rp|^2+\lp|\swei{A}{s}_{j,\mc{I}^{-}}\rp|^2\rp)\\
&\quad\qquad+B\sum_{j=0}^{|s|}\int_{-\infty}^\infty \{2\hat y\Re\lp[\mathfrak{G}_{(j)}\overline{\uppsi_{(k)}}'\rp]+2y_{(j)}\Re\lp[\mathfrak{G}_{(j)}\overline{\uppsi_{(k)}}'\rp]+h_{(j)}\Re\lp[\mathfrak{G}_{(j)}\overline{\uppsi_{(j)}}\rp]-E\omega\Im\lp[\mathfrak{G}_{(j)}\overline{\uppsi_{(j)}}\rp]\}dr^*\,.
\end{align*}

Alternatively, if the precise boundary term relations of Lemma~\ref{lemma:uppsi-general-asymptotics} hold, we can control the boundary term at infinity by adding a localized Teukolsky--Starobinsky type $T$ energy current, $E_{W,4}\chi_4Q^{W,T}$, where $\chi_4=1$ for $r^*\geq R_2^*$ and $\chi_4=0$ for $r^*\leq R_2^*/2$, and where $E_{W,4}$ is sufficiently large. We address the localization error terms using the same strategy as in $\mc{F}_{\rm low,1a}$: with $R_2^*=2R_3^*$  sufficiently large depending on $\varepsilon_{\rm width}$ and $\omega_{\rm high}$ (as previously required) and $\omega_{\rm low}$ sufficiently small depending on $R_3^*$, then
\begin{align*}
&B\frac{E_W\omega_{\rm low}}{R_3^*}\int_{\supp(\chi_4')}\sum_{k=0}^{|s|}(R_3^*)^{2(|s|-k)+1}\lp(1+\frac{|am|}{R_3^*}\rp)\lp[|\uppsi_{(k)}|^2+(R_3^*)^{-2}|\uppsi_{(k)}|^2\rp]dr^*\\
&\quad\leq B\omega_{\rm low}^{1/2}\int_{-\infty}^\infty \lp[h|\Psi'|^2+h(\mc{V}-\omega^2)|\Psi|^2\rp]dr^*+B|s|\omega_{\rm low}^{1/2}|s|\sum_{k=0}^{|s|-1}\int_{-\infty}^\infty w\lp[|\uppsi_{(k)}'|^2+|\uppsi_{(k)}|^2\rp]dr^*
\end{align*}
can be absorbed into the left hand side of \eqref{eq:low2-cumulative-3}. Combining with \eqref{eq:low2-cumulative-2a} and \eqref{eq:low2-wronskian-errors-control} as before, we have the estimate
\begin{align*}
&\omega^2\lp|\swei{A}{s}_{\mc{I}^+}\rp|^2+\omega_0^2\lp|\swei{A}{s}_{\mc{H}^+}\rp|^2+\frac 12 \int_{-\infty}^\infty \lp[y'+h+\hat y'\rp]|\Psi'|^2dr^* \\
&\quad\qquad +\int_{-\infty}^\infty \frac{1}{8}\lp[\frac18\hat y'\omega_0^2+y'\omega^2+h\lp(\mc{V} -\omega^2\rp)-y\mc{V}'\mathbbm{1}_{[R_3^*,\infty)}\rp]|\Psi|^2dr^* \numberthis \label{eq:low2-cumulative-4}\\
&\quad \leq  B(E, E_W,E_{W,4})\omega^2\lp|\swei{A}{s}_{\mc{I}^-}\rp|^2+ B(E,E_W)\omega_0^2\lp|\swei{A}{s}_{\mc{H}^-}\rp|^2+\int_{-\infty}^\infty \lp[E_W\chi_1(Q^{W,K})'+E_{W,4}\chi_4(Q^{W,T})'\rp]dr^*\\
&\quad\qquad +\int_{-\infty}^\infty \lp\{2(\hat y+y)\Re\lp[\mathfrak{G}\overline{\Psi}'\rp]+h\Re\lp[\mathfrak{G}\overline{\Psi}\rp]-\lp[E_1\chi_1(\omega-m\upomega_+)+E\chi_2\omega\rp]\Im\lp[\mathfrak{G}\overline{\Psi}\rp]\rp\}dr^*\\
&\qquad\quad + B|s|\sum_{j=0}^{|s|-1}\int_{-\infty}^\infty \lp((\hat{y}_{(j)}+y_{(j)})\Re\lp[\mathfrak{G}_{(j)}\overline{\uppsi_{(j)}}'\rp]+h_{(j)}\Re\lp[\mathfrak{G}_{(j)}\overline{\uppsi_{(j)}}'\rp]\rp)dr^*\,.
\end{align*}
This concludes the proof.
\end{proof}

%---------------------------------------------------

\subsection{The bounded frequencies with nonzero time frequency}
\label{sec:intermediate}

In this section, we will establish Theorems~\ref{thm:ode-estimates-Psi-A} and~\ref{thm:ode-estimates-Psi-B} in the cases where the frequencies $(\omega,m,\Lambda)$ are assumed to be bounded and the time frequency, $\omega$, is bounded away from zero. Concretely, we will show:

\begin{proposition}\label{prop:int1}
Fix $s\in\{0,\pm 1,\pm 2\}$ and $M>0$. Then, for all $\delta\in(0,1]$, $\omega_{\rm high}>0$, $\varepsilon^{-1}_{\rm width}>0$, $\omega_{\rm low}>0$ and $a_0\in[0,M)$, for all $E,E_W>0$ such that one is sufficiently large depending on the latter quantities, for all $R^*$ sufficiently large depending on the previous, and for all $(a,\omega,m,\Lambda) \in \mc{F}_{\rm int,1}(\varepsilon_{\rm width},\omega_{\rm high},a_0)$,  there exist functions $y$, $\hat{y}$, $\chi_1$ and $\chi_2$,  for $0\leq k\leq |s|$ (we drop the subscript for $k=|s|$), satisfying the uniform bounds
$|y| +|\hat{y}|+|\chi_1|+|\chi_2|\leq 1 \,,$
such that, for all smooth $\Psi$ arising from a smooth solution to the radial ODE~\eqref{eq:radial-ODE-alpha} via \eqref{eq:def-psi0-separated} and \eqref{eq:transformed-transport-separated} and itself satisfying the radial ODE~\eqref{eq:radial-ODE-Psi}, we have the following estimates:
\begin{enumerate}[font=\normalfont\bfseries, label=\Alph*.]
\item[A.]  if $\Psi$ has the general boundary conditions of Lemma~\ref{lemma:uppsi-general-asymptotics}, we have the estimate
\begin{align*}
&b(a_0)\omega^2\lp|\swei{A}{s}_{\mc{I}^+}\rp|^2+b(a_0)(\omega-m\upomega_+)^2\lp|\swei{A}{s}_{\mc{H}^+}\rp|^2 + b(\delta,a_0)\int_{-\infty}^\infty \frac{\Delta}{r^2}\Big[r^{-1-\delta}|\Psi'|^2+r^{-3}(\Lambda+r^{-\delta})|\Psi|^2\Big]dr^*\\
&\quad\leq B(E,E_W)\lp[\omega^2\lp|\swei{A}{s}_{\mc{I}^-}\rp|^2+(\omega-m\upomega_+)^2\lp|\swei{A}{s}_{\mc{H}^-}\rp|^2\rp]\\
&\quad\qquad+B(E,E_W,R^*,\omega_{\rm high}, \varepsilon_{\rm width})\int_{-R^*}^{R^*}\sum_{k=0}^{|s|}\lp(|\uppsi_{(k)}'|^2+|\uppsi_{(k)}|^2\rp) dr^*\\
&\quad\qquad+\int_{-\infty}^\infty \lp\{ 2(y+\hat{y})\Re\lp[\mathfrak{G}\overline{\Psi}'\rp]+E_W\lp(Q^{W,\,T}\rp)'-E\lp(\chi_1(\omega-m\upomega_+)+\chi_2\omega\rp)\Im\lp[\mathfrak{G}\overline{\Psi}\rp]\rp\}dr^*\\
&\quad\qquad+B\sum_{k=0}^{|s|-1}\int_{-\infty}^\infty \lp\{2(y+\hat{y})\Re\lp[\mathfrak{G}_{(k)}\overline{\uppsi_{(k)}}'\rp]-E\lp(\chi_1(\omega-m\upomega_+)+\chi_2\omega\rp)\Im\lp[\mathfrak{G}_{(k)}\overline{\uppsi_{(k)}}\rp]\rp\}dr^*\,,
\end{align*}
where, if $a_0$ is sufficiently small depending on $\delta$, $\omega_{\rm high}$, $\varepsilon^{-1}_{\rm width}$, $\omega_{\rm low}$, $E$ and $E_W$, we can drop the second and fourth lines in the estimate;
\item[B.] if $\Psi$ arises from a solution of the homogeneous Teukolsky radial ODE~\eqref{eq:radial-ODE-alpha} and has the general boundary conditions in Lemma~\ref{lemma:uppsi-general-asymptotics}, we have the estimates:
\begin{equation*} 
\begin{split}
& b(a_0,\omega_{\rm high}, \varepsilon_{\rm width},\omega_{\rm low})\lp[\omega^2
\lp\{\begin{array}{lr}
\frac{\mathfrak{C}_s}{\mathfrak{D}_s^{\mc I}}\,, &s\leq 0\\
1\,, &s>0
\end{array}\rp\}
\lp|\swei{A}{s}_{\mc{I}^+}\rp|^2+(\omega-m\upomega_+)^2
\lp\{\begin{array}{lr}
1\,, &s\leq 0\\
\frac{\mathfrak{C}_s}{\mathfrak{D}_s^{\mc H}}\,, &s>0
\end{array}\rp\}
\lp|\swei{A}{s}_{\mc{I}^+}\rp|^2\rp]
\\
&\quad\leq\omega^2
\lp\{\begin{array}{lr}
1\,, &s\leq 0\\
\frac{\mathfrak{C}_s}{\mathfrak{D}_s^{\mc I}}\,, &s>0
\end{array}\rp\}
\lp|\swei{A}{s}_{\mc{I}^-}\rp|^2+(\omega-m\upomega_+)^2
\lp\{\begin{array}{lr}
\frac{\mathfrak{C}_s}{\mathfrak{D}_s^{\mc H}}\,, &s\leq 0\\
1\,, &s>0
\end{array}\rp\}
\lp|\swei{A}{s}_{\mc{I}^-}\rp|^2.
\end{split}
\end{equation*}
\end{enumerate}
\end{proposition}

\subsubsection{The case of inhomogeneous radial ODEs}

As usual, we begin with some lemmas concerning the frequency parameters and the potentials in \eqref{eq:radial-ODE-potential-k-tilde} in the relevant frequency range.

\begin{lemma}[Properties of the frequency parameters in $\mc F_{\rm int,1}$]  \label{lemma:properties-frequencies-int} Let $(\omega,m,\Lambda)\in\mc F_{\rm int,1}$ be an admissible frequency triple. Then, we have
\begin{align*}
m^2, |\Lambda| \geq B(\omega_{\rm width},\omega_{\rm high})\,.
\end{align*}
However, if $a_0$ is assumed to be sufficiently small, then for $a\in[0,a_0]$ we have 
\begin{align*}
\Lambda\geq \min\{m^2,\frac12|s|\}
\end{align*}
is non-negative.
\end{lemma}
\begin{proof}Follows from the admissibility conditions in Definition~\ref{def:admissible-freqs}.
\end{proof}

\begin{lemma}[Properties of the potential for frequencies in $\mc F_{\rm int,1}$]  \label{lemma:properties-potential-int-omega} Let $(\omega,m\Lambda)\in\mc F_{\rm int}$ be an admissible frequency triple. For each $k=0,\dots,|s|$, the potential $\mc V_{(k)}$ introduced in \eqref{eq:radial-ODE-potential-k-tilde} has the following properties:
\begin{align*}
|\Re\mc V_{(k)}|\leq B(\omega_{\rm high},\varepsilon_{\rm width})\frac{w}{r}\,,
\end{align*}
for any $r\in(r_+,\infty)$ and, for $r$ sufficiently close to $r=r_+$, there exists a constant $c=c(\omega_{\rm high},\varepsilon_{\rm width})$ such that
\begin{align*}
&\Re\hat{\mc V}_{(k)}' -c|\omega-m\upomega_+|(r-r_+)\\
&\quad\geq -B(\omega_{\rm high},\varepsilon_{\rm width})(r-r_+)\,. \numberthis\label{eq:intermediate-derivative-hatV}
\end{align*}
Furthermore, if $a\leq a_0$ with $a_0$ sufficiently small, then for sufficiently large $r$, we have
\begin{align*}
\mc V\geq \frac{\Lambda+s^2}{r^2}-\frac{a_0 B(\omega_{\rm high},\varepsilon_{\rm width})}{r^3}\,,\quad \mc V'\geq -\frac{2(\Lambda+s^2)}{r^3}-\frac{a_0 B(\omega_{\rm high},\varepsilon_{\rm width})}{r^3}\,, \numberthis\label{eq:intermediate-V-smalla}
\end{align*}
and, for $r$ sufficiently close to $r=r_+$, if $s\leq 2$ and $a_0$ is sufficiently small,
\begin{align*}
\frac{d^2}{dr^2}\hat{\mc V} geq  \frac{3\Lambda+4-s^2}{16M^4} -B(\omega_{\rm high},\varepsilon_{\rm high})(r-r_+))\,, 
\end{align*}
\end{lemma}
\begin{proof}The behavior at $r\to\infty$ follows easily by inspection of \eqref{eq:radial-ODE-potential-k-tilde}. For the $r\to r_+$ end, we can compute 
\begin{align*}
(r_+^2+a^2)^3\frac{d}{dr}\Re\hat{\mc V}_{(k)}(r_+)&= 2(r_+-r_-)\lp(Mr_+\lp(\Lambda+|s|+k(2|s|-k-1)\rp)-a^2m^2\rp)\\
&\qquad-4M am (\omega-m\upomega_+)(3r_+^2-a^2)\,.
\end{align*}
Using the bound on $\Lambda$ from Lemma~\ref{lemma:properties-frequencies-int}, we estimate the first term  by $B(\omega_{\rm high},\varepsilon_{\rm width})(r_+-r_-)$; using the bound on $|m|$, we estimate the second term similarly.   A similar procedure works for the second derivative.
\end{proof}

In proving Proposition~\ref{prop:int1}A, our strategy is to apply currents at large $|r^*|$ which generate a good bulk there at the cost of potentially introducing errors in a bounded $|r^*|$ region. In light of the previous Section~\ref{sec:bounded-smallness}, it is not hard to see that $y$ virial currents, together with energy currents to control the boundary terms those introduce, are well-suited for the task. The bounded $|r^*|$ errors will be controlled in our upcoming \cite{SRTdC2022} by making use of Theorem~\ref{thm:mode-stability-intro-AB}.

\begin{proof}[Proof of Proposition~\ref{prop:int1}A] Let $(\omega,m,\Lambda)\in\mc F_{\rm int,1}$ be an admissible frequency triple with respect to $s\in\mathbb{Z}$ and $a\in[0,a_0)$. We will introduce $y$ and $\hat{y}$ currents similar to \eqref{eq:standard-current-y0} and \eqref{eq:standard-current-y0-hat}: for some $R^*>0$ and $\delta
\in(0,1]$, we let
\begin{align*}
y:=1-\frac{(R^*)^{\delta}}{(r^*)^\delta}\mathbbm{1}_{[R^*,\infty)}\,,\qquad \hat y := -\lp(1-\frac{(-R^*)^{1/2}}{(-r^*)^{1/2}}\rp)\mathbbm{1}_{(-\infty, -R^*]}\,.
\end{align*}
In what follows, we will explain why these currents generate a good bulk in the regions $|r^*|\to \infty$. To control the boundary terms we will add localized versions of the Killing  $T$ and $K$ energy currents; we could additionally add localized Teukolsky--Starobinsky energy currents without extra effort. For brevity, we will use the notation $\omega_0:=|\omega-m\upomega_+|$.

\medskip
\noindent \textit{The large $r^*$ currents.} Let us first consider the $y$ current; it will be useful to recall Lemma~\ref{lemma:h-y-identity-k}. Choosing $R^*$ sufficiently large depending on $\omega_{\rm high}$, $\varepsilon_{\rm width}$ and $\omega_{\rm low}$, we have 
\begin{align*}
\int_{-\infty}^\infty\lp\{y'|\uppsi_{(k)}'|^2+\lp[y'\omega^2-(y\Re\mc V_{(k)})'\rp]|\uppsi_{(k)}|^2\rp\}dr^* \geq \int_{-\infty}^\infty\lp\{y'|\uppsi_{(k)}'|^2+\frac12y'\omega^2|\uppsi_{(k)}|^2\rp\}dr^*
\end{align*}
since $\omega^2\geq \omega_{\rm low}^2>0$ and $(y\Re \mc V_{(k)})'$ is lower order as $r\to \infty$.  Making $R^*$ even larger depending also on $\delta$, the error terms 
\begin{align*}
&\int_{-\infty}^\infty 2y(|s|-k)w'\lp\{w|\uppsi_{(k+1)}|^2-\lp(\omega-\frac{am}{r^2+a^2}\rp)\Re\lp[\uppsi_{(k+1)}\overline{\uppsi_{(k)}}\rp]\rp\}dr^* \\
&\qquad+ \int_{-\infty}^\infty2y(|s|-k)\frac{4amrw}{(r^2+a^2)}\lp\{ w\Im \lp[\uppsi_{(k+1)}\overline{\uppsi}_{(k)}\rp]+ \lp(\omega-\frac{am}{r^2+a^2}\rp)|\uppsi_{(k)}|^2\rp\}dr^*\\
&\quad \leq \int_{-\infty}^\infty \frac{B(\omega_{\rm high},\varepsilon_{\rm width},\omega_{\rm low},\delta)}{R^*}y'\omega^2\lp(|\uppsi_{(k)}|^2+(|s|-k)|\uppsi_{(k+1)}|^2\rp)dr^*\,,
\end{align*}
are sufficiently small; so for the coupling terms we simply apply Cauchy--Schwarz:
\begin{align*}
\sum_{j=0}^{k-1}2yw\Re\lp[\lp(c_{s,k,j}^{\rm id}+imc_{s,k,j}^{\rm \Phi}\rp)\uppsi_{(k)}'\uppsi_{(j)}\rp] \leq \frac12 y|\uppsi_{(k)}'|^2+ B(\omega_{\rm high},\varepsilon_{\rm width},\omega_{\rm low},\delta)\sum_{j=0}^{k-1} y'\omega^2|\uppsi_{j}|^2\,.
\end{align*}
Combining the previous, if $R^*$ is sufficiently large we have 
\begin{align*}
&\frac12\int_{-\infty}^\infty\lp\{y'|\uppsi_{(k)}'|^2+y'\omega^2|\uppsi_{(k)}|^2\rp\}dr^*\\
&\quad\leq  2\omega^2\lp(|A_{k,\mc I^+}|^2+|A_{k,\mc I^-}|^2\rp) +\int_{-\infty}^\infty 2y\Re[\mathfrak{G}_{(k)}\overline{\uppsi_{(k)}}']\\
&\quad\qquad+B(\omega_{\rm high},\varepsilon_{\rm width},\omega_{\rm low})\int_{-\infty}^\infty y'\omega^2\lp[\frac{(|s|-k)}{R^*}|\uppsi_{(k+1)}|^2+\sum_{j=0}^k |\uppsi_{(j)}|^2\rp]dr^* \numberthis\label{eq:int-intermediate-1}
 \,.
\end{align*}

To control the boundary terms in the estimate above, we introduce a localized Killing $T$ energy current: for $\chi_2$ a smooth cutoff which is equal to $1$ for $r^*\in[R^*,\infty)$ and vanishes for $r^*\leq R^*-1$, we add $E\chi_2Q^T$ with $E\geq 3$. For the errors involving $\uppsi_{(k+1)}$, we again obtain
\begin{align*}
&(|s|-k)\omega \lp[w'\Im[\uppsi_{(k+1)}\overline{\uppsi_{(k)}}]+\frac{4amrw}{r^2+a^2}|\uppsi_{(k)}|^2\rp] \\
&\quad\leq \frac{B(\omega_{\rm high},\varepsilon_{\rm width},\omega_{\rm low},\delta)}{R^*}y'\omega^2\lp(|\uppsi_{(k)}|^2  + (|s|-k)|\uppsi_{(k+1)}|^2\rp),
\end{align*}
where, choosing $R^*$ appropriately, these error terms come with a smallness parameter; and for the coupling errors
\begin{align*}
\sum_{j=0}^{k-1}w\omega\Re\lp[\lp(c_{s,k,j}^{\rm id}+imc_{s,k,j}^{\rm \Phi}\rp)\uppsi_{(k)}\uppsi_{(j)}\rp] \leq \frac14 y'\omega^2|\uppsi_{(k)}'|^2+ B(\omega_{\rm high},\varepsilon_{\rm width},\omega_{\rm low},\delta)\sum_{j=0}^{k-1} y'\omega^2|\uppsi_{j}|^2\,.
\end{align*}
Combining the previous with \eqref{eq:int-intermediate-1}, if  $R^*$ is sufficiently large we have 
\begin{align*}
&\frac14\int_{-\infty}^\infty\lp\{y'|\uppsi_{(k)}'|^2+y'\omega^2|\uppsi_{(k)}|^2\rp\}dr^*+\omega^2|A_{k,\mc I^+}|^2\\
&\quad\leq  (2+E)\omega^2|A_{k,\mc I^-}|^2+B(\omega_{\rm high},\varepsilon_{\rm width},R^*)\int_{-R^*}^{R^*}\lp(|\uppsi_{(k)}'|^2+|\uppsi_{(k)}'|^2\rp)dr^* +\int_{-\infty}^\infty 2y\Re[\mathfrak{G}_{(k)}\overline{\uppsi_{(k)}}']\\
&\quad\qquad+B(\omega_{\rm high},\varepsilon_{\rm width},\omega_{\rm low},\delta)\int_{-\infty}^\infty y'\omega^2\lp[\frac{(|s|-k)}{R^*}|\uppsi_{(k+1)}|^2+\sum_{j=0}^k |\uppsi_{(j)}|^2\rp]dr^*\\
&\quad\leq  B\omega^2|A_{k,\mc I^-}|^2 +B\sum_{j=0}^k\int_{-\infty}^\infty y\Re[\mathfrak{G}_{(j)}\overline{\uppsi_{(j)}}']+B(|s|-k)\int_{-\infty}^\infty y'\omega^2|\uppsi_{(k+1)}|^2dr^*\\
&\quad\qquad +B(\omega_{\rm high},\varepsilon_{\rm width},R^*)\sum_{j=0}^k\int_{-R^*}^{R^*}\lp(|\uppsi_{(j)}'|^2+|\uppsi_{(j)}'|^2\rp)dr^*\,,
\end{align*}
after iterating in $k=0,\dots, |s|$.

\medskip
\noindent \textit{The very negative $r^*$ currents.} We first show that
\begin{align*}
\int_{-\infty}^\infty \lp[\hat y'|\uppsi_{(k)}'|^2+(\hat y'\omega_0^2 -\hat y|r^*|^{-4})|\uppsi_{(k)}|^2\rp]dr^*&\leq 2 \int_{-\infty}^\infty\lp[\hat y'|\uppsi_{(k)}'|+\lp(\hat y'\omega_0^2-\hat y\hat{\mc V}_{(k)}' -\hat y'\hat{\mc V}_{(k)} \rp)|\uppsi_{(k)}|^2\rp]dr^* \\
&\qquad +B(\omega_{\rm high},\varepsilon_{\rm width},R^*)\int_{-R^*}^{R^*}|\uppsi_{(k)}|^2dr^* \numberthis\label{eq:int-intermediate-2}
\end{align*}
for sufficiently large $R^*$ depending on $\omega_{\rm high}$, $\varepsilon_{\rm width}$, $\omega_{\rm low}$ and $a_0$. Since the last term on the left hand side is lower order as $r^*\to -\infty$, the only term which might have the wrong sign is $(-\hat y)\hat{\mc V}_{(k)}'$. However, recall \eqref{eq:intermediate-derivative-hatV} from Lemma~\ref{lemma:properties-potential-int-omega}: assuming $a\leq a_0$ and that $R^*$ is sufficiently large depending on $\omega_{\rm high}$, $\varepsilon_{\rm width}$ and $a_0$, we have
\begin{align*}
(-\hat y)c(\omega_{\rm high},\varepsilon_{\rm width})\omega_0(r-r_+)\leq \frac12 \hat y' \omega_0^2+B(\omega_{\rm high},\varepsilon_{\rm width})\varepsilon^{-1}\hat y^2|r^*|^{3/2}(r-r_+)^2 \leq \frac12 \hat y' \omega_0^2+(-\hat y)|r^*|^{-4}
\end{align*}
 and, by a Hardy inequality,
\begin{align*}
&\int_{-\infty}^\infty B(\omega_{\rm high},\varepsilon_{\rm width}) (-\hat y)(r-r_+)|\uppsi_{(k)}|^2dr^* \\
&\quad\leq\int_{-\infty}^\infty B(\omega_{\rm high},\varepsilon_{\rm width}) (-\hat y)|r^*|^{-4}|\uppsi_{(k)}|^2dr^* \\
&\quad \leq \int_{-\infty}^{-R^*}\frac{B(a_0,\omega_{\rm high},\varepsilon_{\rm width})}{\sqrt{R^*}}\hat y '|\uppsi_{(k)}'|^2dr^* +B(\omega_{\rm high},\varepsilon_{\rm width},R^*)\int_{-R^*}^{R^*}|\uppsi_{(k)}|^2dr^*\\
&\quad \leq \int_{-\infty}^{-R^*}\frac12\hat y '|\uppsi_{(k)}'|^2dr^* +B(\omega_{\rm high},\varepsilon_{\rm width},R^*)\int_{-R^*}^{R^*}|\uppsi_{(k)}|^2dr^*\,.
\end{align*} Thus, we conclude
\begin{align*}
\int_{-\infty}^\infty\hat y\hat{\mc V}_{(k)}'|\uppsi_{(k)}|^2dr^*\leq \int_{-\infty}^{-R^*}\frac{1}{2}\hat y'\lp(|\uppsi_{(k)}'|^2+\omega_0^2|\uppsi_{(k)}|^2\rp)dr^* +B(\omega_{\rm high},\varepsilon_{\rm width},R^*)\int_{-R^*}^{R^*}|\uppsi_{(k)}|^2dr^*\,,
\end{align*}
concluding the proof of \eqref{eq:int-intermediate-2}. For the terms in \eqref{eq:y-identity-k} involving $\uppsi_{(k+1)}$, we have
\begin{align*}
&\int_{-\infty}^\infty 2\hat y(|s|-k)w'\lp\{w|\uppsi_{(k+1)}|^2-\lp(\omega-\frac{am}{r^2+a^2}\rp)\Re\lp[\uppsi_{(k+1)}\overline{\uppsi_{(k)}}\rp]\rp\}dr^* \\
&\qquad+ \int_{-\infty}^\infty2\hat y(|s|-k)\frac{4amrw}{(r^2+a^2)}\lp\{ w\Im \lp[\uppsi_{(k+1)}\overline{\uppsi}_{(k)}\rp]+ \lp(\omega-\frac{am}{r^2+a^2}\rp)|\uppsi_{(k)}|^2\rp\}dr^*\\
%&\quad \leq \int_{-\infty}^\infty B(\omega_{\rm high},\varepsilon_{\rm width})\lp(\frac{(|s|-k)(\hat y)^2(w')^2}{w}|\uppsi_{(k+1)}|^2+w(\omega_0^2+(r-r_+))|\uppsi_{(k)}|^2\rp)dr^*\\
&\quad \leq \int_{-\infty}^\infty \lp(\frac{B(\omega_{\rm high},\varepsilon_{\rm width},a_0)}{R^*}(-\hat y)|r^*|^{-4}|\uppsi_{(k+1)}|^2+\frac12 (\hat y'\omega_0^2+(-\hat y)|r^*|^{-4})|\uppsi_{(k)}|^2\rp)dr^*\,,
\end{align*}
and for the coupling terms 
\begin{align*}
\sum_{j=0}^{k-1}2a \hat yw\Re\lp[\lp(c_{s,k,j}^{\rm id}+imc_{s,k,j}^{\rm \Phi}\rp)\uppsi_{(k)}'\uppsi_{(j)}\rp] \leq \frac12 \hat y'|\uppsi_{(k)}'|^2+a^2\sum_{j=0}^{k-1}\hat y |r^*|^{-4} |\uppsi_{j}|^2\,,
\end{align*}
using the largeness of $R^*$. We proceed similarly with the errors due to localized, by a cutoff $\chi_2(r^*)=\chi(-r^*)$, Killing $K$ energy current, which we add a multiple $E\geq 3$ of to control the  boundary terms introduced by the $\hat y$ current. Putting all this together, we obtain
\begin{align*}
&\frac14 \int_{-\infty}^\infty \lp[\hat y'|\uppsi_{(k)}'|^2+(\hat y'\omega_0^2 -\hat y|r^*|^{-4})|\uppsi_{(k)}|^2\rp]dr^* + \omega_0^2|A_{k,\mc H^+}|^2\\
&\quad \leq 5\omega_0^2|A_{k,\mc H^-}|^2  +B(\omega_{\rm high},\varepsilon_{\rm width},R^*)\int_{-R^*}^{R^*}\lp(|\uppsi_{(k)}'|^2+|\uppsi_{(k)}'|^2\rp)dr^*  +\int_{-\infty}^\infty 2y\Re[\mathfrak{G}_{(k)}\overline{\uppsi_{(k)}}']\\
&\quad\qquad+\int_{-\infty}^\infty (-\hat y)|r^*|^{-4}|\lp[\frac{B(\omega_{\rm high},\varepsilon_{\rm width},a_0)(|s|-k)}{R^*}|\uppsi_{(k+1)}|^2+\sum_{j=0}^k |\uppsi_{(j)}|^2\rp]dr^*\\
&\quad \leq B(a_0)E\sum_{j=0}^k\omega_0^2|A_{j,\mc H^-}|^2  +B(a_0,\omega_{\rm high},\varepsilon_{\rm width},R^*)\sum_{j=0}^k\int_{-R^*}^{R^*}\lp(|\uppsi_{(j)}'|^2+|\uppsi_{(j)}'|^2\rp)dr^*  \\
&\quad\qquad +B(a_0)\sum_{j=0}^k\int_{-\infty}^\infty 2\hat y\Re[\mathfrak{G}_{(k)}\overline{\uppsi_{(k)}}']\,.
\end{align*}
In the last line we have iterated in $k=0,\dots, |s|$. 

\medskip
\noindent \textit{The case of very small $a_0$.} We define a $y$ current similarly to the previous:
\begin{align*}
y(r^*)&:=\exp\lp(-C\int_{r(r^*)}^\infty \frac{dr}{\Delta} \rp)\mathbbm{1}_{(-\infty,R^*]}+y(R^*)\lp(2-\frac{(R^*)^{\delta}}{(r^*)^{\delta}}\rp)\mathbbm{1}_{(R^*,\infty)}\,,\\
y'(r^*)&:=C\frac{y}{r^2+a^2}\mathbbm{1}_{(-\infty,R^*)}+\delta\frac{y(R^*)(R^*)^{\delta}}{(r^*)^{1+\delta}}>0\,, 
\end{align*}
Choosing $R^*$ sufficiently large depending on $\omega_{\rm high}$, $\omega_{\rm low}$, $\varepsilon_{\rm width}$, we have 
\begin{align*}
\lp[y'\omega^2-(y\mc V)'\rp]\mathbbm{1}_{[R^*+1,\infty)} \geq \lp[\frac12y'\omega^2-y\frac{\Lambda}{r^3}\rp]\mathbbm{1}_{[R^*+1,\infty)}.
\end{align*}
Moreover, for $\chi$ a cutoff which is identically 1 for $r^*\leq R^*-1$ and equal to zero for $r^*\geq R^*$, we have 
\begin{align*}
\int_{-\infty}^{\infty}(y\mc V)'|\Psi|\mathbbm{1}_{(-\infty,R^*-1]}dr^* &\leq  \int_{-\infty}^{\infty}(y\chi\mc V)'|\Psi|^2\leq \lp|\int_{-\infty}^{\infty} 2y\mc V\Re[\Psi'\overline{\Psi}]\mathbbm{1}_{(-\infty,R^*]}dr^*\rp|\\
&\leq \int_{R^*}^{\infty} \lp[\frac12 y'|\Psi'|^2+ y'\omega^2\lp(\frac{V(r^2+a^2)}{C\omega}\rp)^2|\Psi|^2\rp]dr^*\\
&\leq \int_{R^*}^{\infty} \lp[\frac12 y'|\Psi'|^2+\frac12 y'\omega^2|\Psi|^2\rp]dr^*
\end{align*}
choosing $C$ to be sufficiently large depending on $\omega_{\rm high}$, $\omega_{\rm low}$, $\varepsilon_{\rm width}$, and then fixing it.

To strengthen the weights near $r=r_+$, we add a $\hat y$ current similar to the previous case:
\begin{align*}
\hat y := -\lp(1-\frac{(-R^*)^{1/2}}{(-r^*)^{1/2}}\rp)\mathbbm{1}_{(-\infty, -R^*]}\,,
\end{align*}
noting that \eqref{eq:int-intermediate-2} still applies. For the coupling terms, we have
\begin{align*}
2a(y+\hat y)w\Re\lp[(c_{s,k}^{\rm id}+imc_{s,k}^{\Phi})\uppsi_{(k)}\overline{\Psi}'\rp] \leq \frac14(\hat y'+y')|\Psi'|^2+a^2B(\omega_{\rm high},\varepsilon_{\rm width})w|\uppsi_{(k)}|^2\,,
\end{align*}
if $R^*$ is sufficiently large depending on $\delta$ that $\delta R^*\geq b$. We then apply \eqref{eq:basic-estimate-1} with $c=(r-r_+)/r$ if $s>0$ and $c=-1/r$ if $s<0$ to obtain
\begin{align*}
\int_{-\infty}^\infty w|\uppsi_{(k)}|^2dr^* \leq \int_{-\infty}^\infty w|\Psi|^2dr^* \,.
\end{align*}
To summarize, we see that if $a_0$ is sufficiently small and $R^*$ is sufficiently large, both depending on  $\omega_{\rm high}$, $\omega_{\rm low}$ and $\varepsilon_{\rm width}$, we have 
\begin{align*}
&\int_{-\infty}^\infty \frac14\lp[y'|\Psi'|^2+(y'\omega^2+\hat y'\omega_0^2-\hat{y}|r^*|^{-4})|\Psi|^2\rp]dr^* \\
&\quad\leq 2y(\infty)\omega^2\lp(|A_{\mc I^+}|^2+|A_{\mc I^-}|^2\rp)+2|\hat y(-\infty)|\omega^2_0\lp(|A_{\mc H^+}|^2+|A_{\mc H^-}|^2\rp)+\int_{-\infty}^\infty 2(y+\hat y) \Re[\mathfrak G \overline{\Psi'}]dr^*\\
&\quad\qquad +a^2B(\omega_{\rm high},\varepsilon_{\rm width})\int_{-\infty}^{\infty}w|\Psi|^2dr^*\,.
\end{align*}
By adding a Killing $K$ and $T$ current localized, respectively, by cutoffs $\chi_1$ and $\chi_2=1-\chi_1$ and multiplied by $E\geq 3$, we have
\begin{align*}
&\int_{-\infty}^\infty \frac14\lp[y'|\Psi'|^2+(y'\omega^2+\hat y'\omega_0^2-\hat{y}|r^*|^{-4})|\Psi|^2\rp]dr^*  +\omega^2|A_{\mc I^+}|^2+\omega^2_0|A_{\mc H^+}|^2\\
&\quad\leq (E+2)\omega^2|A_{\mc I^-}|^2+(E+2)\omega_0^2|A_{\mc H^-}|^2+\int_{-\infty}^\infty \lp(2y \Re[\mathfrak G \overline{\Psi'}]-E(\chi_1\omega_0^2+\chi_2\omega) \Im[\mathfrak G\overline{\Psi}]\rp)dr^*\\
&\quad\qquad +a^2B(\omega_{\rm high},\varepsilon_{\rm width})(E+1)\int_{-\infty}^{\infty}w |\Psi|^2dr^* +\frac{am}{2Mr_+}\int_{-\infty}^\infty E\chi_1'\Im[\Psi' \overline{\Psi}]dr^*\,, \numberthis\label{eq:int-smalla-intermediate-1}
\end{align*}
where we have addressed the coupling terms of the Killing current in a similar fashion to those of the $y$ current. Using the smallness of $a_0$, we can absorb the last line in \eqref{eq:int-smalla-intermediate-1} into the left hand side, thus concluding the proof.
\end{proof}

\subsubsection{The case of homogeneous radial ODEs}

We begin by introducing some notation. Recall the functions $u^{[s],\, (a\omega)}_{ml,\,\mc{I}^\pm}$ and $u^{[s],\, (a\omega)}_{ml,\,\mc{H}^\pm}$ defined in Definition~\ref{def:uhor-uout}. For simplicity, we will now drop the sub and superscripts. 

\begin{lemma} Fix $M>0$ and $s\in\frac12\mathbb{Z}$. For any $|a|\leq M$ and any $(\omega,m,\Lambda)$ admissible with respect to $a$ and $s$ satisfying $\omega\neq 0,m\upomega_+$, the quantities
\begin{equation} \label{eq:five-Wronskians}
\begin{gathered}
\swei{\mathfrak{W}}{s}:= W\lp[\swei{u}{s}_{\mc{I}^+},\swei{u}{s}_{\mc{H}^+}\rp]\,,\\
\swei{\mathfrak{W}}{s}_{\mc H}:= W\lp[\swei{u}{s}_{\mc{H}^+},\swei{u}{s}_{\mc{H}^-}\rp]\,,\qquad
\swei{\mathfrak{W}}{s}_{\mc I}:= W\lp[\swei{u}{s}_{\mc{I}^+},\swei{u}{s}_{\mc{I}^-}\rp]\,,\\
\swei{\mathfrak{W}}{s}_{1}:= W\lp[\swei{u}{s}_{\mc{H}^+},\swei{u}{s}_{\mc{I}^-}\rp]\,,\qquad
\swei{\mathfrak{W}}{s}_{2}:= W\lp[\swei{u}{s}_{\mc{I}^+},\swei{u}{s}_{\mc{H}^-}\rp]\,,
\end{gathered}
\end{equation} 
 where $W[y_1,y_2]=y_1'y_2-y_1y_2'$ represents the Wronskian of functions $y_1(r^*)$ and $y_2(r^*)$ and where we have suppressed some of the sub and superscripts in the functions $u^{[s],\, (a\omega)}_{ml,\,\mc{I}^\pm}$ and $u^{[s],\, (a\omega)}_{ml,\,\mc{H}^\pm}$ defined in Definition~\ref{def:uhor-uout}, depend only on the spin $s$, the Kerr parameters $(a,M)$ and the frequency triple $(\omega,m,\Lambda)$.
\end{lemma}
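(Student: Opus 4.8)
The statement to prove is that the five Wronskians in \eqref{eq:five-Wronskians}, built from the normalized solutions $u^{[s],\,(a\omega)}_{ml,\,\mc{I}^\pm}$ and $u^{[s],\,(a\omega)}_{ml,\,\mc{H}^\pm}$ of the homogeneous radial ODE~\eqref{eq:radial-ODE-u}, depend only on $(a,M,s)$ and the frequency triple $(\omega,m,\Lambda)$. The plan rests on the standard fact that the Wronskian of two solutions of a second order ODE in canonical form $u''+(\omega^2-V)u=0$ is constant in $r^*$. Indeed, differentiating $W[y_1,y_2]=y_1'y_2-y_1y_2'$ gives $W'=y_1''y_2-y_1y_2''=-(\omega^2-V)y_1y_2+(\omega^2-V)y_1y_2=0$, using that both $y_1,y_2$ solve \eqref{eq:radial-ODE-u} with $H\equiv 0$. (For the case $|a|=M$, the same identity holds since the radial ODE is still in the canonical Schr\"odinger form \eqref{eq:radial-ODE-u}; no modification is needed.) Hence each of $\swei{\mathfrak{W}}{s}$, $\swei{\mathfrak{W}}{s}_{\mc H}$, $\swei{\mathfrak{W}}{s}_{\mc I}$, $\swei{\mathfrak{W}}{s}_{1}$, $\swei{\mathfrak{W}}{s}_{2}$ is a well-defined complex number, independent of the point $r^*$ at which it is evaluated.

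First I would record that, by Lemma~\ref{lemma:aymptotic-analysis-radialODE-Teukolsky} together with Definition~\ref{def:uhor-uout}, each of the four functions $u_{\mc{I}^\pm}$, $u_{\mc{H}^\pm}$ is a uniquely determined solution of the homogeneous ODE~\eqref{eq:radial-ODE-u}: the asymptotic expansions pin down the solution up to a multiplicative constant, and the normalization conditions in Definition~\ref{def:uhor-uout} fix that constant (up to a phase, which is immaterial here since, as I explain below, the quantities of interest are phase-invariant or can be taken with a fixed phase convention). Since the coefficients of \eqref{eq:radial-ODE-u} — namely $V^{[s]}=V_0^{[s]}+V_1+iV_2^{[s]}$ in \eqref{eq:radial-ODE-u-potentials} — depend only on $s$, $(a,M)$ and $(\omega,m,\Lambda)$, the solutions themselves, and hence the Wronskians formed from them, are functions of precisely these data. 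This already yields the claim at the level of "depends only on"; the content of the lemma is really the well-definedness (constancy in $r^*$) plus this bookkeeping observation.

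The only genuinely delicate point is the normalization phase: the conditions in Definition~\ref{def:uhor-uout} only fix $|(\cdots u_{\bullet})|_{r=\bullet}|^2=1$, so each $u_\bullet$ is determined up to a unimodular factor $e^{i\theta_\bullet}$. A Wronskian $W[u_{\bullet_1},u_{\bullet_2}]$ therefore carries a factor $e^{i(\theta_{\bullet_1}+\theta_{\bullet_2})}$ and is not literally phase-independent. I would handle this by adopting (or recalling, if it is implicit in the surrounding text) a fixed phase convention for the asymptotic coefficients $c^{[s],\infty,\pm}_0$, $c^{[s],r_+,\pm}_0$, $c^{[s],M,\pm}_0$ in Lemma~\ref{lemma:aymptotic-analysis-radialODE-Teukolsky} — e.g. requiring the leading coefficient to be a positive real — which then makes each $u_\bullet$ canonically defined and the five Wronskians honest functions of $(s,a,M,\omega,m,\Lambda)$. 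This is the step I expect to require the most care, though it is routine once stated. The remainder of the proof is then just the two-line constancy computation above, plus the observation that $\Delta$, $r^*$, and the potential $V^{[s]}$ in \eqref{eq:radial-ODE-u-potentials} have the asserted dependence, so no other data can enter.

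I would also note, as a sanity check worth including, that the five Wronskians are not all independent: since the space of solutions is two-dimensional, the change-of-basis relations \eqref{eq:alpha-general-asymptotics} between $\{u_{\mc{I}^\pm}\}$ and $\{u_{\mc{H}^\pm}\}$ force algebraic identities among $\swei{\mathfrak{W}}{s}$, $\swei{\mathfrak{W}}{s}_{\mc H}$, $\swei{\mathfrak{W}}{s}_{\mc I}$, $\swei{\mathfrak{W}}{s}_{1}$, $\swei{\mathfrak{W}}{s}_{2}$ (e.g. $\swei{\mathfrak{W}}{s}_{\mc I}\,\swei{\mathfrak{W}}{s}_{\mc H}$ is a quadratic combination of the mixed Wronskians), but since the lemma only asserts the dependence, I would leave any such identities to subsequent use and not belabor them here.
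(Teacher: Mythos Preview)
Your argument is correct and matches the paper's approach: constancy of the Wronskian follows from the absence of a first-order term in \eqref{eq:radial-ODE-u}, and the dependence on only $(s,a,M,\omega,m,\Lambda)$ is inherited from the potential and the normalized solutions of Definition~\ref{def:uhor-uout}. The paper's proof additionally evaluates $\swei{\mathfrak{W}}{s}_{\mc I}$ and $\swei{\mathfrak{W}}{s}_{\mc H}$ explicitly by taking $r\to\infty$ and $r\to r_+$ respectively (these values are needed downstream), which you do not do but which is not required for the lemma as stated; conversely, your careful treatment of the phase ambiguity in the horizon normalizations is a point the paper leaves implicit.
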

\begin{proof} The Wronskian of any combination of two of the functions $\swei{u}{s}_{\mc{I}^\pm}$ and $\swei{u}{s}_{\mc{H}^\pm}$ does not depend on $r^*$ by the fact that these functions solve a radial ODE, \eqref{eq:radial-ODE-u}, which involves no first order derivatives. 

By computations as $r\to \infty$ or $r\to r_+$, depending on what is simpler, we can even explicitly obtain
\begin{equation}\label{eq:Wronskian-H-I}
\swei{\mathfrak{W}}{s}_{\mc I}= 2i\omega\,, \qquad 
 \swei{\mathfrak{W}}{s}_{\mc H}= -\lp\{\begin{array}{lr}
 \lp[4Mr_+i(\omega-m\upomega_+)+s(r_+-r_-)\rp](r_+-r_-)^{s}\,, &|a|<M\\
 4M^2i(\omega-m\upomega_+)\,, &|a|=M
 \end{array}\rp\}\,,
\end{equation}
which are manifestly independent of $r^*$.
\end{proof}

We now state a slight refinement of the real mode stability statement for subextremal $|a|<M$ Kerr black holes in Theorem~\ref{thm:mode-stability-intro-AB}, which will appear in \cite{TdC-thesis}: 

\begin{proposition}[Subextremal real mode stability, refined version] \label{prop:subextremal-mode-stability-refined} Fix $M>0$, $s\in\mathbb{Z}_{\geq 0}$ and and $a_0\in[0,M)$. For any $|a|\leq a_0$ and any $(\omega,m,\Lambda)$ admissible with respect to $a$ and $s$ such that $(\omega,m,\Lambda)\in\mc{A}$ satisfying
\begin{align*}
C_{\mc{A}}:=\sup_{\mc{A}}\lp(|\omega|+|\omega|^{-1}+|m|+|\Lambda|\rp)<\infty\,,
\end{align*}
we have the bounds
\begin{align*}
\lp|\swei{\mathfrak{W}}{\pm s}\rp|^{-2}&\leq B(a,M,|s|,C_{\mc{A}})\,;\qquad
(\omega-m\upomega_+)^{-2}\lp|\swei{\mathfrak{W}}{+s}\rp|^{-2}&\leq B(a,M,|s|,C_{\mc{A}})\text{~~if~} s\geq 1\,.
\end{align*}
\end{proposition}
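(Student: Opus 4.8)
\textbf{Proof proposal for Proposition~\ref{prop:subextremal-mode-stability-refined}.} The plan is to deduce the refined bounds from the qualitative and quantitative mode stability already available (Theorem~\ref{thm:mode-stability-intro-AB}, i.e.\ $|\swei{\mathfrak{W}}{s}|^{-1}\leq B(a,M,|s|,C_{\mc{A}})$ on the set $\mc{A}$) together with the Teukolsky--Starobinsky identities of Proposition~\ref{prop:TS-radial} and the structural constraints on the admissible frequencies in Definition~\ref{def:admissible-freqs}, in particular the positivity $\mathfrak{C}_s\geq 0$ of item~4(e) and the strict positivity observed in Remark~\ref{rmk:TS-constant-sign}. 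The bound $|\swei{\mathfrak{W}}{\pm s}|^{-2}\leq B$ on $\mc{A}$ is immediate from squaring the estimate in Theorem~\ref{thm:mode-stability-intro-AB}, noting that $\swei{\mathfrak{W}}{\pm s}$ agrees, up to sign, with $\swei{\mathfrak{W}}{s}$ defined via $\swei{u}{\pm s}$ by the symmetry $\swei{V}{s}=\overline{\swei{V}{-s}}$ of the radial potential \eqref{eq:radial-ODE-u-potentials}; so the real content is the second inequality, a gain of a factor $(\omega-m\upomega_+)^{-2}$ for $s\geq 1$.

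First I would express $\swei{\mathfrak{W}}{+s}$ in terms of the connection coefficients $\swei{a}{s}_{\mc{H}^\pm},\swei{a}{s}_{\mc{I}^\pm}$: writing $\swei{u}{+s}_{\mc{H}^+}=\swei{a}{+s}_{\mc{I}^+}\swei{u}{+s}_{\mc{I}^+}+\swei{a}{+s}_{\mc{I}^-}\swei{u}{+s}_{\mc{I}^-}$ (normalizing so that the horizon mode is the reference solution), one has $\swei{\mathfrak{W}}{+s}=W[\swei{u}{+s}_{\mc{I}^+},\swei{u}{+s}_{\mc{H}^+}]=\swei{a}{+s}_{\mc{I}^-}\,\swei{\mathfrak{W}}{+s}_{\mc I}=2i\omega\,\swei{a}{+s}_{\mc{I}^-}$ by \eqref{eq:Wronskian-H-I}. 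The factor I want to extract comes from the radial Teukolsky--Starobinsky identities: applying $\lp(\hat{\mc{D}}_0^-\rp)^{2s}$ to $\swei{u}{-s}_{\mc{H}^+}$ (rescaled to $\swei{\upalpha}{-s}_{\mc{H}^+}$) reproduces $\swei{\upalpha}{+s}_{\mc{H}^+}$ with a multiplicative constant $\mathfrak{C}_s^{(2)}$ which, by Proposition~\ref{prop:TS-radial}, equals $\prod_{j=0}^{2s-1}(2\xi+s-j)$ for $|a|<M$ — and $\xi=-i\frac{2Mr_+}{r_+-r_-}(\omega-m\upomega_+)$ carries exactly the factor $(\omega-m\upomega_+)$. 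Matching the two representations of the horizon mode under the Teukolsky--Starobinsky correspondence (exactly as in the boundary-term computation in the proof of Lemma~\ref{lemma:wronskian-energy-currents}) converts the identity $\mathfrak{C}_s=\mathfrak{C}_s^{(2)}\mathfrak{C}_s^{(4)}=\mathfrak{C}_s^{(4)}\prod_{j=0}^{2s-1}(2\xi+s-j)$ into a relation of the form
\begin{align*}
\swei{a}{+s}_{\mc{I}^-}\cdot\lp(\text{const.}\rp)\cdot\prod_{j=1}^{2s-1}\lp(2\xi+s-j\rp) = \mathfrak{C}_s\cdot\lp(\text{ratio of spin }-s\text{ connection coefficients}\rp)\,,
\end{align*}
so that the factor $(2\xi+s)$, i.e.\ the one that vanishes as $\omega\to m\upomega_+$, is peeled off from $\swei{\mathfrak{W}}{+s}$ and absorbed into $\mathfrak{C}_s$. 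Concretely, I expect an identity of the schematic shape $\swei{\mathfrak{W}}{+s} = c(\omega-m\upomega_+)\cdot\widetilde{\mathfrak{W}}$, where $\widetilde{\mathfrak{W}}$ is a Wronskian-type quantity whose inverse is controlled by $|\swei{\mathfrak{W}}{-s}|^{-1}$ divided by $\mathfrak{C}_s$ (up to the nonvanishing products $\prod_{j=1}^{2s-1}(\cdots)$, which are bounded above and below on $\mc{A}$ since $|\xi|$ is bounded there).

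The remaining step is then to bound $\mathfrak{C}_s^{-1}$ on $\mc{A}$. Here I would use that $\mathfrak{C}_s$ is a continuous (indeed polynomial, by \eqref{eq:TS-radial-constants}) function of $(a,\omega,m,\Lambda)$ on the compact set $\{|a|\leq a_0\}\times\mc{A}$, together with the fact that $\mathfrak{C}_s>0$ strictly: for $s=2$ this is Remark~\ref{rmk:TS-constant-sign} ($\mathfrak{C}_2\geq 144M^2\omega^2>0$ since $|\omega|^{-1}\leq C_{\mc{A}}$); for $s=1$ one argues that $\mathfrak{C}_1=0$ would be an algebraically special frequency, but algebraically special real frequencies with $\omega\neq 0,m\upomega_+$ are excluded by mode stability (the Wronskian $\swei{\mathfrak{W}}{s}$ would vanish, contradicting Theorem~\ref{thm:mode-stability-intro-AB}), so $\mathfrak{C}_1$ is bounded away from zero on the compact set. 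Compactness then gives $\mathfrak{C}_s^{-1}\leq B(a_0,M,|s|,C_{\mc{A}})$, and combining with the peeled Wronskian identity and the already-established $|\swei{\mathfrak{W}}{-s}|^{-2}\leq B$ yields $(\omega-m\upomega_+)^{-2}|\swei{\mathfrak{W}}{+s}|^{-2}\leq B$. The main obstacle I anticipate is purely bookkeeping: correctly tracking the normalization constants and the products $\prod_{j}(2\xi+s-j)$ through the Teukolsky--Starobinsky matching so that exactly one factor of $(\omega-m\upomega_+)$ (and no more, and no spurious poles) is extracted — this is the same delicate computation carried out in the proof of Lemma~\ref{lemma:wronskian-energy-currents}, so I would follow that template closely, and one should also double-check the borderline half-integer versus integer spin factor structure even though only $s\in\{1,2\}$ is ultimately needed.
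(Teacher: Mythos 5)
First, a structural remark: the paper explicitly attributes this proposition to the forthcoming thesis (see the sentence introducing it), so there is no in-paper argument to compare your proposal against; it must stand alone.

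Your overall strategy---relate $\swei{\mathfrak{W}}{+s}$ to $\swei{\mathfrak{W}}{-s}$ via the Teukolsky--Starobinsky correspondence and isolate a power of $(\omega-m\upomega_+)$ from the $\xi$-dependent products in the $\mathfrak{C}_s^{(i)}$---is the right one, but you have the degeneracy running in the wrong direction, and as written your chain proves the \emph{negation} of the claim. Applying the first TS identity of Proposition~\ref{prop:TS-radial} to $\swei{\upalpha}{+s}_{\mc{I}^+}$ (this only requires dividing by $\mathfrak{C}_s^{(1)}=(2i\omega)^{2s}$, never by $\mathfrak{C}_s$, which also makes your step bounding $\mathfrak{C}_s^{-1}$ unnecessary) and comparing the two horizon decompositions gives $\tilde{a}^{[-s]}_{\mc{H}^-}=\tilde{a}^{[+s]}_{\mc{H}^-}\,\mathfrak{C}_s^{(6)}/\mathfrak{C}_s^{(1)}$; combined with $\swei{\mathfrak{W}}{\pm s}=-\tilde{a}^{[\pm s]}_{\mc{H}^-}\swei{\mathfrak{W}}{\pm s}_{\mc{H}}$, one finds
\begin{align*}
\swei{\mathfrak{W}}{+s}\;=\;\frac{\mathfrak{C}_s^{(1)}}{\mathfrak{C}_s^{(6)}}\cdot\frac{\swei{\mathfrak{W}}{+s}_{\mc{H}}}{\swei{\mathfrak{W}}{-s}_{\mc{H}}}\cdot\swei{\mathfrak{W}}{-s}\,.
\end{align*}
Here $\mathfrak{C}_s^{(6)}=(r_+-r_-)^{2s}\prod_{j=0}^{2s-1}(-2\xi+s-j)$ vanishes linearly through its $j=s$ factor $-2\xi\propto(\omega-m\upomega_+)$; it is \emph{not} the $j=0$ factor $2\xi+s$ that vanishes, as you state (that one tends to $s\neq 0$), and the product $\prod_{j=1}^{2s-1}(2\xi+s-j)$ you declare bounded below in fact still contains the $j=s$ zero. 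Since $\mathfrak{C}_s^{(1)}$, $\swei{\mathfrak{W}}{\pm s}_{\mc{H}}$ (from \eqref{eq:Wronskian-H-I}, for $s\geq 1$) and $\swei{\mathfrak{W}}{-s}$ (from Theorem~\ref{thm:mode-stability-intro-AB}) are all bounded in modulus above and below on $\mc{A}$, the display shows $\swei{\mathfrak{W}}{+s}$ \emph{blows up} like $(\omega-m\upomega_+)^{-1}$ at the superradiant threshold, which is precisely what makes $(\omega-m\upomega_+)^{-2}|\swei{\mathfrak{W}}{+s}|^{-2}$ bounded. Your schematic $\swei{\mathfrak{W}}{+s}=c(\omega-m\upomega_+)\widetilde{\mathfrak{W}}$ with $\widetilde{\mathfrak{W}}$ bounded below instead asserts that $\swei{\mathfrak{W}}{+s}$ \emph{vanishes} there and yields $(\omega-m\upomega_+)^{-2}|\swei{\mathfrak{W}}{+s}|^{-2}\gtrsim(\omega-m\upomega_+)^{-4}$, the opposite of boundedness. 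Finally, the symmetry $\swei{V}{s}=\overline{\swei{V}{-s}}$ does not give $|\swei{\mathfrak{W}}{+s}|=|\swei{\mathfrak{W}}{-s}|$ since conjugation swaps $\mc{H}^+\leftrightarrow\mc{H}^-$ and $\mc{I}^+\leftrightarrow\mc{I}^-$ (indeed the display above shows the two Wronskians differ by an unbounded factor); the first estimate of the proposition follows simply from applying Theorem~\ref{thm:mode-stability-intro-AB} to each spin sign separately.
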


Proposition~\ref{prop:int1}B follows immediately:

\begin{proof}[Proof of Proposition~\ref{prop:int1}B]
Let $s\in\mathbb{Z}$ be any integer. Recall the identity
\begin{equation*}
\begin{split}
\swei{u}{s}&= \swei{\tilde a}{s}_{\mc{H}^+}\cdot \swei{u}{s}_{\mc{H}^+}+\swei{\tilde{a}}{s}_{\mc{H}^-}\cdot \swei{u}{s}_{\mc{H}^-}= \swei{\tilde{a}}{s}_{\mc{I}^+}\cdot \swei{u}{s}_{\mc{I}^+}+\swei{\tilde{a}}{s}_{\mc{I}^-}\cdot \swei{u}{s}_{\mc{I}^-}
\end{split}\,,
\end{equation*}
for some coefficients $\swei{\tilde{a}}{s}_{\mc{I}^\pm}$ and $\swei{\tilde{a}}{s}_{\mc{H}^\pm}$ from Lemma~\ref{lemma:apha-general-asymptotics}; by taking its Wronskian with all four of $\swei{u}{s}_{\mc{H}^\pm}$ and $\swei{u}{s}_{\mc{I}^\pm}$, if $\mathfrak{W}\neq 0$ we obtain 
\begin{align*}
\lp(\begin{array}{l}
\swei{\tilde a}{s}_{\mc{I}^+}\\
\swei{\tilde a}{s}_{\mc{H}^+}
\end{array}\rp)
= \frac{1}{\mathfrak{W}} \bm{Q}
\lp(\begin{array}{l}
\swei{\tilde a}{s}_{\mc{I}^-}\\
\swei{\tilde a}{s}_{\mc{H}^-}
\end{array}\rp)\,, \qquad \bm{Q}:=\lp(\begin{array}{ll}
\swei{\mathfrak{W}}{s}_{1} & -\swei{\mathfrak{W}}{s}_{\mc H}\\
\swei{\mathfrak{W}}{s}_{\mc I} & -\swei{\mathfrak{W}}{s}_{2}
\end{array}\rp) \,.
\end{align*}
By further imposing a conservation law taking the form
\begin{align}
\omega^2\mathfrak{C}_s^{(11)}\lp|\swei{\tilde a}{s}_{\mc{I}^+}\rp|^2+\omega(\omega-m\upomega_+)\mathfrak{C}_s^{(12)}\lp|\swei{\tilde a}{s}_{\mc{H}^+}\rp|^2= \omega^2\mathfrak{C}_s^{(13)}\lp|\swei{\tilde a}{s}_{\mc{I}^-}\rp|^2 +\omega(\omega-m\upomega_+)\mathfrak{C}_s^{(14)}\lp|\swei{\tilde a}{s}_{\mc{H}^-}\rp|^2\,, \label{eq:general-conservation}
\end{align}
we deduce
\begin{gather*}
\begin{dcases}
0=\omega^2\mathfrak{C}_s^{(11)}\bm{Q}_{11}\overline{\bm{Q}}_{12}+\omega(\omega-m\upomega_+)\mathfrak{C}_s^{(12)}\bm{Q}_{21}\overline{\bm{Q}}_{22}\\
\omega^2\mathfrak{C}_s^{(11)}\lp|\bm{Q}_{11}\rp|^2+\omega(\omega-m\upomega_+)\mathfrak{C}_s^{(12)}\lp|\bm{Q}_{21}\rp|^2= \omega^2\mathfrak{C}_s^{(13)}\lp|\swei{\mathfrak{W}}{s}\rp|^2\\
\omega^2\mathfrak{C}_s^{(11)}\lp|\bm{Q}_{12}\rp|^2+\omega(\omega-m\upomega_+)\mathfrak{C}_s^{(12)}\lp|\bm{Q}_{22}\rp|^2= \omega(\omega-m\upomega_+)\mathfrak{C}_s^{(14)}\lp|\swei{\mathfrak{W}}{s}\rp|^2\\
\end{dcases}\\
\Leftrightarrow
\begin{dcases}
\swei{\mathfrak{W}}{s}_1= -\frac{\omega-m\upomega_+}{\omega}\frac{\mathfrak{C}_s^{(12)}}{\mathfrak{C}_s^{(11)}}\lp(\swei{\mathfrak{W}}{s}_{\mc I}\rp)\lp(\overline{\swei{\mathfrak{W}}{s}_{\mc H}}\rp)^{-1} \overline{\swei{\mathfrak{W}}{s}_2}\\
\lp|\swei{\mathfrak{W}}{s}_1\rp|^2 =\frac{\mathfrak{C}_s^{(13)}}{\mathfrak{C}_s^{(11)}}\lp|\swei{\mathfrak{W}}{s}\rp|^2 -\frac{\omega-m\upomega_+}{\omega}\frac{\mathfrak{C}_s^{(12)}}{\mathfrak{C}_s^{(11)}}\lp|\swei{\mathfrak{W}}{s}_{\mc I}\rp|^2\\
\lp|\swei{\mathfrak{W}}{s}_2\rp|^2 =\frac{\mathfrak{C}_s^{(14)}}{\mathfrak{C}_s^{(12)}}\lp|\swei{\mathfrak{W}}{s}\rp|^2 -\frac{\omega}{\omega-m\upomega_+}\frac{\mathfrak{C}_s^{(11)}}{\mathfrak{C}_s^{(12)}}\lp|\swei{\mathfrak{W}}{s}_{\mc H}\rp|^2
\end{dcases}\,,
\end{gather*}
and, finally,
\begin{align*}
&\omega^2\mathfrak{C}_{s}^{(11)}\lp|\swei{\tilde a}{s}_{\mc{I}^+}\rp|^2 +(\omega-m\upomega_+)^2\mathfrak{C}_{s}^{(12)}\lp|\swei{\tilde a}{s}_{\mc{H}^+}\rp|^2\\
&\quad \leq \lp[1+\frac{m\upomega_+(\omega-m\upomega_+)}{\omega^2}\frac{\lp|\swei{\mathfrak{W}}{s}_{\mc I}\rp|^2}{\lp|\swei{\mathfrak{W}}{s}\rp|^2}\frac{\mathfrak{C}_{s}^{(12)}}{\mathfrak{C}_{s}^{(13)}}\rp]\omega^2\mathfrak{C}_{s}^{(13)}\lp|\swei{\tilde a}{s}_{\mc{I}^-}\rp|^2\\
&\quad\qquad +\lp[1-\frac{m\upomega_+\omega}{(\omega-m\upomega_+)^2}\frac{\lp|\swei{\mathfrak{W}}{s}_{\mc H}\rp|^2}{\lp|\swei{\mathfrak{W}}{s}\rp|^2}\frac{\mathfrak{C}_{s}^{(11)}}{\mathfrak{C}_{s}^{(14)}}\rp](\omega-m\upomega_+)^2\mathfrak{C}_{s}^{(14)}\lp|\swei{\tilde a}{s}_{\mc{H}^-}\rp|^2\,.\numberthis\label{eq:scattering-int-intermediate-1}
\end{align*}

The coefficients $\swei{\tilde{a}}{s}_{\mc{I}^\pm}$ and $\swei{\tilde{a}}{s}_{\mc{H}^\pm}$ for the homogeneous Teukolsky radial ODE~\eqref{eq:radial-ODE-u} do indeed satisfy a conservation law of the type \eqref{eq:general-conservation}, given in Lemma~\ref{lemma:wronskian-energy-currents}. The factors $\mathfrak{C}_s^{(11)}$ through $\mathfrak{C}_s^{(14)}$ can be read off directly from the formulas given there;  the relation between coefficients $\swei{\tilde{a}}{s}_{\mc{I}^\pm}$, $\swei{\tilde{a}}{s}_{\mc{H}^\pm}$ and coefficients $\swei{A}{s}_{\mc{I}^\pm}$ and $\swei{A}{s}_{\mc{H}^\pm}$, which follows from an asymptotic analysis, is given in Lemma~\ref{lemma:uppsi-general-asymptotics}. We also note the explicit computations \eqref{eq:Wronskian-H-I}. After using all the simplifications we have described, one easily bounds \eqref{eq:scattering-int-intermediate-1} using Proposition~\ref{prop:subextremal-mode-stability-refined} and the definition of $\mc{A}$.

To exemplify, let us consider the case of strictly positive spin $s$, which is arguably the most complicated situation. We have, for $|a|<M$,
\begin{align*}
&\omega^2\lp|\swei{A}{s}_{\mc{I}^+}\rp|^2 +(\omega-m\upomega_+)^2\frac{\mathfrak{C}_s}{\mathfrak{D}_s}\lp|\swei{A}{s}_{\mc{H}^+}\rp|^2\\ 
&\quad\leq \lp[1+\frac{(Mr_+)^{-1}m\upomega_+(\omega-m\upomega_+)(2\omega)^{2s}\cdot (\omega-m\upomega_+)^{-2}\lp|\swei{\mathfrak{W}}{s}\rp|^{-2}}{\displaystyle\prod_{j=1}^{|s|-1}\lp\{[4Mr_+(\omega-m\upomega_+)]^2+(s-j)^2(r_+-r_-)^2\rp\}}\rp]\omega^2\frac{\mathfrak{C}_s}{\mathfrak{D}_s^{\mc I}}\lp|\swei{A}{s}_{\mc{I}^-}\rp|^2\\
&\quad\qquad +\lp[1-\frac{m\upomega_+\omega(2\omega)^{2s}\cdot (\omega-m\upomega_+)^{-2}\lp|\swei{\mathfrak{W}}{s}\rp|^{-2}}{\displaystyle\prod_{j=1}^{|s|-1}\lp\{[4Mr_+(\omega-m\upomega_+)]^2+(s-j)^2(r_+-r_-)^2\rp\} }\rp](\omega-m\upomega_+)^2\lp|\swei{A}{s}_{\mc{H}^-}\rp|^2\,,
\end{align*}
which can be bounded from above using the refined bound for $|s|\geq 1$ in Proposition~\ref{prop:subextremal-mode-stability-refined}.  
\end{proof}

%---------------------------------------------------

\subsection{The unbounded frequencies}
\label{sec:unbounded}

In this section, we will establish Theorems~\ref{thm:ode-estimates-Psi-A} and~\ref{thm:ode-estimates-Psi-B}  for frequency triples $(\omega,m,\Lambda)$ where at least one entry is unbounded. We begin, in Section~\ref{sec:unbounded-properties-potential} by stating some important properties of the potential for the transformed radial ODE~\eqref{eq:radial-ODE-Psi} in these regimes. We provide a brief description of the strategy of our proof in \ref{sec:unbounded-overview}, which we further expand upon at the beginning of the remaining sections.

\subsubsection{Properties of the potential}
\label{sec:unbounded-properties-potential}

We begin by discussing the critical point structure of the potential for the radial ODE for $\Psi$. For large frequency parameters, the behavior of the potential will be driven by $\mc{V}_0$.

\begin{lemma}[Critical points of $\mc{V}_0$]\label{lemma:critical-points-V0}
Fix $s\in\mathbb{Z}$, $M>0$ and $a_0\in[0, M)$. Assume $0\leq a\leq M$. Then, for all admissible frequency triples $(\omega,m,\Lambda)$ such that $\Lambda\geq \frac23 m^2$, we have:
\begin{enumerate}[label=(\roman*)]
\item As a function of $r\in(r_+,\infty)$, the potential $\mc{V}_0$, is either (a) strictly decreasing, (b) has a unique critical value $r^0_{\rm max}$ which is a global maximum or (c) has exactly two critical values $r^0_{\rm min}<r^0_{\rm max}$ which are a local minimum and maximum respectively. 
\item If $r_{\rm max}^0$ exists and, for some small $c>0$, $\Lambda\geq c^{-1}$ and $\Lambda\geq c|am\omega|$, then $r^0_{\rm max}\leq B(c)$ if $m\omega<0$ and $r^0_{\rm max}\leq B$ independently of $c$ if $m\omega>0$. 
\item When  $a\in[0,a_0]$, there is a sufficiently small $c=c(a_0,M)$ such that, if $\Lambda\geq c^{-1}$, $\Lambda\geq c|am\omega|$ and $m^2\leq c\Lambda$, we have $r_{\rm max}^0-(1+\sqrt{2})M>b(c)$. 
\item If it exists, $r^0_{\rm min}$ satisfies $\mc{V}_0(r_{\rm min}^0) >\omega^2$.
\end{enumerate}
\end{lemma}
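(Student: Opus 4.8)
The plan is to analyze the rational function $\mc{V}_0$ directly from its explicit form in \eqref{eq:radial-ODE-Psi-potentials}, i.e.
\[
\mc{V}_0=\frac{\Delta\Lambda+4Mram\omega-a^2m^2}{(r^2+a^2)^2}\,.
\]
First I would clear denominators: the numerator of $\mc{V}_0'$ is, up to a positive factor of $(r^2+a^2)^{-3}$, a polynomial $P(r)$ of degree at most $3$ in $r$ (the $\Delta\Lambda$ term contributes degree $3$ after differentiation against the $(r^2+a^2)^{-2}$ weight, the other terms lower degree). I would compute $P$ explicitly and observe that its leading coefficient is $-2\Lambda<0$ (using $\Lambda\geq \tfrac23 m^2\geq 0$ and the fact that $\Lambda>0$ unless $s=m=0$, which is excluded here since unbounded frequencies are in play; in any case $\Lambda\geq c^{-1}$ in the relevant cases). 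Thus $P$ has either one or three real roots in $(r_+,\infty)$, counted with multiplicity. A sign analysis at the endpoints — $\mc{V}_0'(r)\to 0^-$ as $r\to\infty$ (leading behavior $-2\Lambda r^{-3}$) and evaluation of $P(r_+)$ using $\Delta(r_+)=0$, which gives $P(r_+)$ proportional to $2Mr_+ am\omega - $ (lower order), hence of indefinite sign — establishes that the number of critical points in $(r_+,\infty)$ is $0$, $1$, or $2$, and that when there are two they are ordered as a local min followed by a local max (since $\mc{V}_0'<0$ for large $r$). This proves (i).

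For (ii), I would use that at a critical point $r^0_{\rm max}$ one has $P(r^0_{\rm max})=0$, which after rearrangement expresses $r^0_{\rm max}$ (or rather a polynomial relation determining it) in terms of $\Lambda$, $am\omega$, $a$, $M$. Dividing through by $\Lambda$ and using $\Lambda\geq c^{-1}$ and $|am\omega|\leq c^{-1}\Lambda$, the relation becomes a perturbation, with coefficients bounded by $B(c)$, of the Schwarzschild-type equation $\tfrac{d}{dr}(\Delta/r^4)=0$ whose root is $r=3M$; a continuity/implicit-function argument then bounds $r^0_{\rm max}\leq B(c)$. When $m\omega>0$ the coupling term $4Mram\omega$ in the numerator has a fixed favorable sign, which I would exploit to get the bound independent of $c$: the potential $\mc{V}_0$ is then bounded below by a multiple of $\Delta\Lambda/(r^2+a^2)^2$ plus a nonnegative term, forcing the maximum to occur where the $\Delta\Lambda$ piece peaks, i.e.\ at $r=O(M)$. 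For (iii), restricting further to $a\in[0,a_0]$ with $a_0<M$, so that $r_+$ is bounded away from $M$, and using the extra smallness $m^2\leq c\Lambda$ so that the $-a^2m^2$ and part of the $am\omega$ contributions are negligible relative to $\Delta\Lambda$, the critical point equation converges as $c\to 0$ to the Schwarzschild equation whose root $r=3M$ satisfies $3M>(1+\sqrt2)M$; a quantitative version of this convergence (with constants depending on $a_0,M$) yields $r^0_{\rm max}-(1+\sqrt2)M>b(c)$.

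For (iv), at the local minimum $r^0_{\rm min}$ I want $\mc{V}_0(r^0_{\rm min})>\omega^2$. The key observation is that $r^0_{\rm min}$ can only exist when the coupling term forces non-monotonicity near $r_+$, which (as the endpoint sign analysis in (i) shows) requires $m\omega>0$ and moreover $r^0_{\rm min}$ lies close to $r_+$. Near $r_+$, $\Delta$ is small, so $\mc{V}_0(r^0_{\rm min})\approx (4Mr_+ am\omega - a^2m^2)/(r_+^2+a^2)^2$; using $2Mr_+ = r_+^2+a^2$ this is $\approx am\omega/(Mr_+) - a^2m^2/(r_+^2+a^2)^2$. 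I would compare this with $\omega^2$: since at a \emph{minimum} of $\mc{V}_0$ following a decrease from $r=r_+$ one has $\mc{V}_0(r_+)\geq \mc{V}_0(r^0_{\rm min})$, and since superradiance/trapping considerations (the admissibility bound $\Lambda\geq \tfrac23 m^2$ together with $m\omega>0$) constrain the ratio, the inequality $\mc{V}_0(r^0_{\rm min})>\omega^2$ should follow from a direct estimate. \textbf{I expect step (iv) to be the main obstacle}: it is the only part that is not a routine critical-point count, because it requires pinning down the location of $r^0_{\rm min}$ precisely enough and then a sign-definite comparison of two expressions that are both built from $\omega$, $m$, $\Lambda$, $a$, $M$ with no obvious slack — most likely one must combine the defining equation $P(r^0_{\rm min})=0$ with the admissibility constraints to eliminate $\Lambda$ and reduce to a one-variable inequality in $q=a\omega/m$, which I would then check is satisfied on the relevant range (using property \ref{it:admissible-freqs-triple-Lambda-q-bound} of Definition~\ref{def:admissible-freqs}).
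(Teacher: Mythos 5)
Your plan for part (i) has a genuine gap. You correctly reduce to studying the cubic $P(r):=(r^2+a^2)^3\mc{V}_0'(r)$ with negative leading coefficient, but the assertion that ``sign analysis at the endpoints establishes that the number of critical points in $(r_+,\infty)$ is $0$, $1$, or $2$'' does not follow. When $P(r_+)>0$ (which you acknowledge can happen, since the sign of $P(r_+)$ is indefinite) and $P\to-\infty$ at infinity, the parity argument only tells you the number of roots in $(r_+,\infty)$ is odd, i.e.\ $1$ or $3$; nothing you write rules out the $3$-root case, which would be a max–min–max configuration contradicting the claimed trichotomy. The paper's proof closes exactly this gap: it computes the quadratic $\frac{d}{dr}\bigl[(r^2+a^2)^3\mc{V}_0'\bigr]$, observes that one of its two roots $r_2$ always satisfies $\Re r_2<M\leq r_+$ (using both $|a|\leq M$ and the hypothesis $\Lambda\geq\tfrac23 m^2$ — a hypothesis your plan never invokes, which is a red flag), so the quadratic has at most one root in $(r_+,\infty)$; Rolle's theorem on $[r_+,\infty)$ then caps the zeros of $P$ at two. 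Without some argument of this kind you cannot exclude the third critical point, and the hypothesis $\Lambda\geq\tfrac23 m^2$ is precisely what makes it work.

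A few secondary points. For (iii), the limiting critical-point equation as $c\to0$ is not the Schwarzschild one $r=3M$ but $r^3-3Mr^2+a^2(r+M)=0$, whose root decreases from $3M$ (at $a=0$) to $(1+\sqrt2)M$ (at $a=M$); the restriction $a\leq a_0<M$ is what keeps the root strictly above $(1+\sqrt2)M$. The paper sidesteps the implicit-function argument entirely by evaluating $P$ at $r=(1+\sqrt2)M$ and showing it is positive, which is more direct than your convergence argument, though yours could be made to work. Finally, your assessment that (iv) is ``the main obstacle'' is inverted: (iv) is a one-line consequence of the exact identity $\omega^2-\mc{V}_0(r_+)=(\omega-m\upomega_+)^2$ together with the monotonicity of $\mc{V}_0$ on $(r_+,r^0_{\rm min})$ already established in (i), and needs none of the machinery you anticipate; your own observation that $\mc{V}_0(r_+)\geq\mc{V}_0(r^0_{\rm min})$ is the entire content. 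The genuinely nontrivial step of the lemma is the count in (i).
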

\begin{proof} This is a simple adaptation of \cite{Dafermos2016b}. We begin by identifying the critical point structure of $\mc{V}_0$. Define
$$\sigma=\frac{am\omega}{\Lambda+s^2}\,.$$
It is easy to check that
\begin{align}
(r^2+a^2)^3\frac{d\mc{V}_0}{dr} &=-2(\Lambda+s^2)\lp[
r^3 -3Mr^2(1-2\sigma)+a^2r\lp(1-\frac{2m^2}{\Lambda+s^2}\rp)+Ma^2(1-2\sigma)\rp]\,, \label{eq:dV0/dr} \\
\frac{d}{dr}\lp[(r^2+a^2)^3\frac{d\mc{V}_0}{dr}\rp]&=-12(\Lambda+s^2) \lp[\frac{1}{2}r^2-M(1-2\sigma) r +\frac{1}{6}a^2\lp(1-\frac{2m^2}{\Lambda+s^2}\rp)\rp]\,, \label{eq:ddV0/ddr}
\end{align}
and the latter has roots
\begin{align*}
r_{1,\,2}=M(1-2\sigma)\pm \sqrt{M^2(1-2\sigma)-\frac{a^2}{3}\lp(1-\frac{2m^2}{\Lambda+s^2}\rp)}\,.
\end{align*}

If $m\omega> 0 \Rightarrow \sigma > 0$, we have $\Re r_2<M\leq r_+$. Now suppose that $m\omega \leq 0$ and rewrite $r_2$ as
\begin{align*}
r_{2}=M(1+2|\sigma|)\lp(1- \sqrt{1-x}\rp)\,,\quad x=\frac{a^2}{3M^2(1+2|\sigma|)^2}\lp(1-\frac{2m^2}{\Lambda+s^2}\rp)\,.
\end{align*}
Assuming $\Lambda\geq \frac23m^2$,
\begin{align*}
0\leq |x| =\frac{a^2}{3M^2(1+2|\sigma|)^2}\lp|1-\frac{2m^2}{\Lambda+s^2}\rp|\leq \frac{a^2}{3M^2}\leq \frac{1}{3}\lp|1-\frac{2m^2}{\Lambda+s^2}\rp|\leq \frac23\,,
\end{align*}
and, since $0\leq |x| \leq  2/3$, we have the inequality $\sqrt{1-x}\geq 1-2|x|/3 \Rightarrow 1-\sqrt{1-x}\leq 2|x|/3$, we obtain 
\begin{align*}
\Re r_2 < \frac{2M(1+2|\sigma|)a^2}{9M^2(1+2|\sigma|)^2}\lp|1-\frac{2m^2}{\Lambda+s^2}\rp| \leq \frac{2M}{9}\lp|1-\frac{2m^2}{\Lambda+s^2}\rp|\leq \frac{4}{9}M <M\leq r_+\,.
\end{align*}
We can now conclude that there is only one root for $\frac{d}{dr}\lp[(r^2+a^2)^3\frac{d}{dr}\mc{V}_0\rp]$, at  $r_1\in[r_+,\infty)$. 

By Rolle's theorem, there are at most two zeros of $\frac{d}{dr}\mc{V}_0$. As the leading order term of (\ref{eq:dV0/dr}) is negative, since $\Lambda+s^2\geq 0$, we conclude that $\frac{d}{dr}\mc{V}_0$ either vanishes nowhere, vanishes at a unique point $r^0_{\rm max}$, or vanishes at points $r^0_{\min}<r^0_{\rm max}$. Hence, in $[r_+,\infty)$, $\mc{V}_0$ is either (a) strictly decreasing, (b) has a unique critical value $r^0_{\rm max}$ which is a global maximum or (c) has exactly two critical values $r^0_{\rm min}<r^0_{\rm max}$ which are a local minimum and maximum respectively. This concludes the proof of statement 1.

We now move on to proving some general points about the minimum and maximum of $\mc{V}_0$, if they exist. For statement 3, note that
\begin{align}
\omega^2-\mc{V}_0(r_+)=(\omega-m\upomega_+)^2\geq 0\,, \label{eq:V0-r+}
\end{align}
so $\mc{V}_0(r_{\rm min}^0)> \omega^2$ if $r_{\rm min}^0\in(r_+,\infty)$ exists. For statement 2, we observe that, as $r\to\infty$
\begin{align*}
(r^2+a^2)^3\frac{d\mc{V}_0}{dr}(r)=0 \Leftrightarrow r = 3M\lp(1-\frac{2am\omega}{\Lambda+s^2}\rp) + O\lp(\frac{1}{\Lambda r^2}\rp)\,,
\end{align*}
so the right hand side, and, consequently, when $\Lambda\geq c^{-1}\gg 1$, $r^0_{\max}$ is bounded from above if $m\omega\geq 0$ (in this case, independently of the frequency parameters) and if $\Lambda+s^2\geq c |am\omega|$ (in this case, depending on $c$). If additionally $m^2 \leq c\Lambda$ and $a<M$,
\begin{align*}
(r^2+a^2)^3\frac{d\mc{V}_0}{dr}\Big|_{r=(1+\sqrt{2})M} &= O(c) -\lp[r^3-3Mr^2+a^2(r+M)\rp]\big|_{r=(1+\sqrt{2})M}\\
&= O(c)+2(2+\sqrt{2})M(M^2-a^2)>(2+\sqrt{2})M(M^2-a_0^2)\,,
\end{align*}
for $c$ sufficiently small depending on $a_0$ and $M$, hence $r_{\rm max}^0$ must lie above $(1+\sqrt{2})M$.
\end{proof}

\begin{remark} In part (c) of Lemma~\ref{lemma:critical-points-V0}, $(1+\sqrt{2})M$ is the lowest $r$-value at which the maximum of the potential can be attained if $m=0$ and $\Lambda+s^2>0$. Indeed, when $m=0$, \eqref{eq:dV0/dr} and \eqref{eq:ddV0/ddr} reduce to
\begin{align*}
(r^2+a^2)^3\frac{d\mc{V}_0}{dr} &=-2(\Lambda+s^2)\lp[r^3-3Mr^2+a^2(r+M)\rp]\,,\\
\frac{d}{dr}\lp((r^2+a^2)^3\frac{d\mc{V}_0}{dr}\rp)&=-2(\Lambda+s^2)\lp[3r(r-M) +a^2\rp]\,,
\end{align*}
thus the potential has a maximum located at $r_{\rm max}^0\in[(1+\sqrt{2})M,3M]$, the lower bound being attained when $a=M$ and the upper bound for $a=0$. As $1+\sqrt{2}>2$, the maximum is always attained outside the ergorregion.
\end{remark}

We now turn to a more concrete characterization of the trapped frequencies. We begin with two lemmas that show there are no such frequencies when $\Lambda\gg \omega^2$ and for most (near extremal Kerr) or all (for subextremal Kerr) superradiant frequencies.

\begin{lemma} \label{lemma:derivative-V0-r+}
Fix $s\in\mathbb{Z}$, $M>0$ and $a_0\in [0,M)$.  Let $(\omega,m,\Lambda)$ be an admissible frequency triple satisfying one of the following three requirements
\begin{align*}
(i)&~ 0<m\omega\leq m\upomega_+ +\beta_1\Lambda\,,\,\, 0\leq a\leq a_0\,,\,\,\Lambda\geq \max\lp\{\frac{a_0^2}{M^2},\frac23\rp\}m^2\,,\\
(ii)&~0<m\omega\leq m\upomega_+ -\beta_2\Lambda\,,\,\, a_0< a\leq M\,, \,\,\Lambda\geq m^2\,,\\ 
(iii)&~m\omega\leq 0\,, \,\,\Lambda\geq 0\,,
\end{align*}
for some $\beta_1=\beta_1(a_0)>0$ such that $a_0\beta_1<(M^2-a_0^2)/(6M^2)$ and for any $\beta_2>0$. Then, in (i--ii) we have
\begin{align*}
\frac{d}{dr} \mc{V}_0(r_+)\geq b\Lambda \geq 0\,,
\end{align*}
where $b$ is a possibly small constant which, in addition to $M$ and $s$, depends on $a_0$ and $\beta_1$ in case (i) and on $a_0$ and $\beta_3$ in case (ii). In case (iii), we have 
\begin{align*}
\frac{d}{dr} \mc{V}_0(r)\geq b(r-M)\Lambda \geq 0\,,
\end{align*}
for $r$ sufficiently close to $r_+$, independently of the frequency parameters.
\end{lemma}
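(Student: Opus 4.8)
The plan is to start from the explicit formula \eqref{eq:dV0/dr} for $(r^2+a^2)^3\,d\mc{V}_0/dr$ and evaluate it, or a close relative, near $r=r_+$. Writing $P(r):=r^3-3Mr^2(1-2\sigma)+a^2r(1-2m^2/(\Lambda+s^2))+Ma^2(1-2\sigma)$ with $\sigma=am\omega/(\Lambda+s^2)$ as in the proof of Lemma~\ref{lemma:critical-points-V0}, we have $(r^2+a^2)^3\,d\mc{V}_0/dr=-2(\Lambda+s^2)P(r)$, so everything reduces to showing $-P(r_+)\geq b\,\Lambda/(\Lambda+s^2)$ for an appropriate $b>0$, or, in case (iii), $-P(r)\geq b(r-M)\Lambda/(\Lambda+s^2)$ for $r$ close to $r_+$. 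I would first record the identity $r_+^2+a^2=2Mr_+$ and $r_+^2-2Mr_++a^2=0$, which let one simplify $P(r_+)$: indeed $r_+^3-3Mr_+^2+a^2r_++Ma^2 = r_+(r_+^2-2Mr_++a^2)-M r_+^2-2Mr_+^2+2Mr_+\cdot a^2/(2M)\cdots$; more cleanly, $r_+^3-3Mr_+^2+a^2(r_++M)= -2Mr_+^2+2Mr_++\ldots$ — in any case $P(r_+)$ collapses to a manifestly signed quantity plus the $\sigma$- and $m^2/(\Lambda+s^2)$-dependent corrections. The cleanest route is to expand $P(r_+)$ as $\bigl(\text{the }m=\sigma=0\text{ value}\bigr) + 6M\sigma(r_+^2 - Ma^2/(3M)\cdot\ldots) - 2m^2a^2r_+/(\Lambda+s^2)$, keeping track that the $m=\sigma=0$ value equals $r_+^3-3Mr_+^2+a^2(r_++M)=-2(r_+-M)(2Mr_+ - \ldots)$; the key sub-fact is that this vanishes precisely at extremality and is strictly negative (so $-P$ is positive) for $a<M$, with size comparable to $M^2-a^2$.

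The three cases then differ only in how one controls the two correction terms. In case (iii), $m\omega\le 0$ forces $\sigma\le 0$, so the $6M\sigma$-correction to $-P$ has a good sign near $r_+$; the remaining correction $-2m^2a^2 r/(\Lambda+s^2)$ is $O(\Lambda^{-1}\cdot\text{bounded})$ relative to $\Lambda+s^2$ but crucially, evaluated at general $r$ near $r_+$, $-P(r)$ picks up a factor vanishing linearly in $r-M$ (not $r-r_+$, to handle $|a|=M$ uniformly), exactly matching the claimed bound $b(r-M)\Lambda$. This is the easy case and needs no smallness of $\beta$'s. In case (i), $a\le a_0<M$ gives a quantitative gap $M^2-a_0^2$ from the leading term, and $\Lambda\ge \max\{a_0^2/M^2,2/3\}m^2$ controls both the $m^2/(\Lambda+s^2)$ term and, via $0<m\omega\le m\upomega_++\beta_1\Lambda$, the size of $\sigma$: one checks $|\sigma|\le (am\upomega_+ + a\beta_1\Lambda)/(\Lambda+s^2)\lesssim \beta_1 + (\text{bounded})/\Lambda$, so for $\beta_1$ small enough that $a_0\beta_1<(M^2-a_0^2)/(6M^2)$ the leading negativity of $-P$ is not destroyed — this is precisely where the hypothesis on $\beta_1$ enters. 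Case (ii) is structurally the same but now $a$ may approach $M$, so the leading term $\propto(M^2-a^2)$ degenerates; one recovers positivity instead from the hypothesis $m\omega\le m\upomega_+-\beta_2\Lambda$, i.e. $\sigma\le (am\upomega_+ - a\beta_2\Lambda)/(\Lambda+s^2)$, which makes $\sigma$ negative by a definite amount $\sim\beta_2$ and hence the $6M\sigma$-correction strictly improves $-P$, compensating the vanishing of the leading term; here $a>a_0$ is used only to keep $a$ away from $0$ so that $\sigma$'s negativity is nontrivial.

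The main obstacle I anticipate is purely bookkeeping: organizing the expansion of $P(r_+)$ (and of $P(r)$ for $r$ near $r_+$ in case (iii)) into ``leading term of definite sign + corrections'' in a way that makes the interplay of the three smallness/largeness hypotheses transparent, and doing so uniformly down to $|a|=M$ where several quantities ($r_+-M$, the leading term, $1-2\sigma$) degenerate simultaneously. One has to be careful to use $r-M$ rather than $r-r_+$ as the small parameter in case (iii) and to track that $\Lambda/(\Lambda+s^2)\ge b$ is automatic from $\Lambda\gtrsim m^2$ together with admissibility (Lemma~\ref{lemma:angular-eigenvalues-properties}), so that ``$\geq b\Lambda/(\Lambda+s^2)$'' upgrades to ``$\geq b\Lambda$''. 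No genuinely hard analysis is involved — it is Lemma~\ref{lemma:critical-points-V0}-style elementary calculus with careful constant-chasing — but getting the dependence of $b$ on $a_0,\beta_1,\beta_2$ exactly right, and verifying the stated constraint $a_0\beta_1<(M^2-a_0^2)/(6M^2)$ is the precise threshold that makes case (i) work, will require the most attention.
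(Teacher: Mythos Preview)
Your approach is correct and essentially the same as the paper's: both compute $(r^2+a^2)^3\,d\mc{V}_0/dr$ at (or near) $r_+$ and split it into sign-definite pieces controlled by the respective hypotheses, the only difference being that the paper organises the computation directly in terms of $m(\omega-m\upomega_+)$ and $(\Lambda+s^2)Mr_+-a^2m^2$ rather than through your intermediate parameter $\sigma$. One caution in case~(ii): your claim that $\sigma$ itself becomes ``negative by a definite amount $\sim\beta_2$'' is not correct in general, since the $am^2\upomega_+/(\Lambda+s^2)$ contribution to $\sigma$ can keep it positive; what actually carries a definite sign is $-am(\omega-m\upomega_+)\geq a_0\beta_2\Lambda$, which only emerges after you \emph{combine} your $\sigma$-correction and $m^2$-correction --- the paper makes precisely this regrouping explicit, writing the derivative at $r_+$ as $-4Mam(\omega-m\upomega_+)(3r_+^2-a^2)+4(r_+-M)\bigl[(\Lambda+s^2)Mr_+-a^2m^2\bigr]$ and noting the second bracket is $\geq 0$ via $\Lambda\geq m^2$.
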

\begin{proof}  At $r=r_+$, the derivatives of $\mc{V}_0$ at $r_+$ satisfy
\begin{align*}
(r_+^2+a^2)^3\frac{d\mc{V}_0}{dr}&=4r_+ m\upomega_+(m\upomega_+-\omega)+4Mr_+(r-M)\lp(\Lambda-2am\omega\rp)\\
&=4r_+m\upomega_+(m\upomega_+-\omega)+8Mr_+m(r_+-M)(m-a\omega)+4Mr_+(r-M)\lp(\Lambda-2m^2\rp)\\
&= -4Mam(\omega-m\upomega_+)(3r_+^2-a^2)+4(r_+-M)[(\Lambda+s^2)Mr_+-a^2m^2]\,.
\end{align*}
Clearly, if $m\omega\leq 0$ and $a\in[0,a_0]$, then
\begin{align*}
(r_+^2+a^2)^3\frac{d\mc{V}_0}{dr}(r)&\geq \frac{(r-M)}{2M^2r_+^2}\Lambda\,.
\end{align*}
If $m\omega\leq 0$ and $a\in(a_0,M]$, then either $m^2\leq \frac14 \Lambda$, in which case  $\Lambda-2m^2\geq \frac12\Lambda$, or $m^2\geq \frac14\Lambda$, in which case $64M^2m^2\upomega_+^2\leq \Lambda$; in both instances, $\frac{d\mc{V}_0}{dr}$ is positive and comparable to $\Lambda(r-M)$ in the limit $r\to r_+$.

so the conclusion holds in case (iii).

Let us first take $a\in[0,a_0]$. Using the bound $\Lambda\geq \max\lp\{\frac23,\frac{a_0^2}{M^2}\rp\} m^2$, we find
\begin{align*}
(r_+^2+a^2)^3\frac{d}{dr}\mc{V}_0(r_+)&\geq -4Mam(\omega-m\upomega_+)(3r_+^2-a^2)+4M(r_+-M)\sqrt{M^2-a^2_0}\Lambda\\
&\geq -4Mam(\omega-m\upomega_+)(3r_+^2-a^2)+4(r_+-M)\sqrt{M^2-a^2_0}\Lambda \\
&\geq 4M\Lambda\lp[-2ar_+\beta_1(M+2\sqrt{M^2-a^2})+\sqrt{M^2-a^2}\sqrt{M^2-a^2_0}\rp]\\
&\geq 4\Lambda\lp[-6M^2a_0\beta_1+(M^2-a_0^2)\rp]>0\,,
\end{align*}
in case (i), by the assumption on $\beta_1$. 

Now suppose $a\in[a_0,M]$. Using the bound $\Lambda\geq m^2$, we have
\begin{align*}
(r_+^2+a^2)^3\frac{d}{dr}\mc{V}_0(r_+)&\geq -4Mam(\omega-m\upomega_+)(3r_+^2-a^2)+4M(r_+-M)^2\Lambda\\
&\geq 8M^3a_0\beta_2\Lambda\geq b(a_0,\beta_2)\Lambda\,.
\end{align*}
in case (ii) This concludes the proof.
\end{proof}

By Lemmas~\ref{lemma:critical-points-V0} and \ref{lemma:derivative-V0-r+},  we conclude that $\mc{V}_0$ has a unique critical point, located at $r^0_{\rm max}$, which is a maximum. The following lemma gives some quantitative information about the value of the potential at this maximum, and generalizes \cite[Lemma 6.4.2]{Dafermos2016b}.

\begin{lemma} \label{lemma:V0-r0max}
Fix $M>0$, $a_0\in[0,M)$ and $\beta_1,\beta_2>0$ such that Lemma~\ref{lemma:derivative-V0-r+} holds. For all sufficiently small $\varepsilon_{\rm width}$, all $a\in[0,M]$ and all admissible frequency parameters $(\omega,m,\Lambda)$ satisfying one of the following
\begin{align*}
(i)&~ 0<m\omega\leq m\upomega_+ +\beta_1\Lambda\,,\,\, 0\leq a\leq a_0\,,\,\,\Lambda\geq \max\lp\{\frac{a_0}{M},\frac23\rp\}m^2\,,\\
(ii)&~0<m\omega\leq m\upomega_+ -\beta_2\Lambda\,,\,\, a_0< a\leq M\,, \,\,\Lambda\geq m^2\,,\\
(iii)&~ \Lambda\geq \varepsilon_{\rm with}^{-1}\omega^2 \,,\,\,\Lambda\geq \max\lp\{\frac{a_0}{M},\frac23\rp\}m^2\,,
\end{align*}
we have the estimate
$$ \mc{V}_0(r^0_{\rm max})-\omega^2\geq b \Lambda\,,$$
where $b=b(a_0,\beta_1)$ in case (i), $b=b(a_0,\beta_2)$ in case (ii) and $b$ depends only on $M$ and $s$ in case (iii).
\end{lemma}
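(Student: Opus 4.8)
\textbf{Proof proposal for Lemma~\ref{lemma:V0-r0max}.} The plan is to reduce everything to an evaluation of $\mc{V}_0$ at a cleverly chosen $r$-value strictly inside $(r_+,\infty)$ and then invoke the fact, already established in Lemmas~\ref{lemma:critical-points-V0} and \ref{lemma:derivative-V0-r+}, that $\mc{V}_0$ has a unique critical point $r^0_{\rm max}$ which is a global maximum. Thus for \emph{any} admissible test point $\bar r$ with $r_+<\bar r$ one has $\mc{V}_0(r^0_{\rm max})\geq \mc{V}_0(\bar r)$, so it suffices to find $\bar r$ (depending only on $M$, $s$, and the relevant parameters $a_0$, $\beta_1$, $\beta_2$, $\varepsilon_{\rm width}$, but \emph{not} on the frequency triple) such that $\mc{V}_0(\bar r)-\omega^2\geq b\Lambda$. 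Recalling from \eqref{eq:radial-ODE-Psi-potentials} that
\begin{align*}
\mc{V}_0(r)=\frac{\Delta\Lambda+4Mram\omega-a^2m^2}{(r^2+a^2)^2}\,,
\end{align*}
one computes $\mc{V}_0(\bar r)-\omega^2$ directly; the point is to pick $\bar r$ where the leading $\Lambda\Delta(\bar r)/(\bar r^2+a^2)^2$ term is quantitatively positive and dominates the $am\omega$ cross term and $\omega^2$.

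Concretely, I would split into the three cases exactly as in the statement. In case (iii), where $\Lambda\geq \varepsilon_{\rm width}^{-1}\omega^2$ and $\Lambda\gtrsim m^2$, I would choose $\bar r$ to be a fixed multiple of $M$ bounded away from $r_+$ uniformly in $a\in[0,M]$ --- e.g.\ $\bar r = 4M$, so that $\Delta(\bar r)/(\bar r^2+a^2)^2\geq b(M)>0$ for all $a\in[0,M]$ --- and then bound $4M\bar r |am\omega|/(\bar r^2+a^2)^2\leq B(M)|m||\omega|\leq B(M)\Lambda^{1/2}\cdot(\varepsilon_{\rm width}\Lambda)^{1/2}= B(M)\varepsilon_{\rm width}^{1/2}\Lambda$ using $|m|\leq b^{-1}\Lambda^{1/2}$ and $|\omega|\leq \varepsilon_{\rm width}^{1/2}\Lambda^{1/2}$. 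Since also $a^2m^2/(\bar r^2+a^2)^2\leq B(M)\Lambda$ and $\omega^2\leq\varepsilon_{\rm width}\Lambda$, choosing $\varepsilon_{\rm width}$ small gives $\mc{V}_0(\bar r)-\omega^2\geq b(M)\Lambda - B(M)\varepsilon_{\rm width}^{1/2}\Lambda\geq b(M)\Lambda/2$. In the superradiant cases (i) and (ii), the sign condition $0<m\omega$ combined with $m\omega\leq m\upomega_++\beta_1\Lambda$ (resp.\ $\leq m\upomega_+-\beta_2\Lambda$) must be used: since $m\upomega_+\leq B(M)|m|\leq B(M)\Lambda^{1/2}$ and $\beta_i\Lambda$ is the other term, one has $|am\omega|\leq B(M,\beta_i)(\Lambda+\Lambda^{1/2}|\omega|)$, and crucially in these superradiant regimes one can further bound $m\omega$ in terms of $\Lambda$ so that the cross term is controlled; moreover $\omega^2$ itself is controlled because $m\omega>0$ and $|m\omega|\lesssim \Lambda$ forces, after using $\Lambda\gtrsim m^2$, that $|\omega|\lesssim \Lambda^{1/2}\cdot(\Lambda/m^2)^{1/2}$ --- but this can be large, so in cases (i)--(ii) one should instead exploit Lemma~\ref{lemma:derivative-V0-r+}, which gives $\mc{V}_0'(r_+)\geq b\Lambda$, together with the convexity/critical-point structure to integrate outward. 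Specifically, I would argue: since $\mc{V}_0$ increases from $r_+$ with derivative $\geq b\Lambda$ there and stays increasing up to $r^0_{\rm max}$, and since $r^0_{\rm max}$ is bounded above (Lemma~\ref{lemma:critical-points-V0}(ii), using $\Lambda\geq c|am\omega|$ which follows from the hypotheses), one gets $\mc{V}_0(r^0_{\rm max})\geq \mc{V}_0(\tilde r)$ for a fixed $\tilde r\in(r_+, r^0_{\rm max})$; combining with $\mc{V}_0(r_+)=\omega^2-(\omega-m\upomega_+)^2\leq\omega^2$ and a Taylor estimate $\mc{V}_0(\tilde r)-\mc{V}_0(r_+)\geq b\Lambda(\tilde r - r_+)$ (valid because $\mc{V}_0'\geq b\Lambda$ on $[r_+,\tilde r]$, which in turn needs $\tilde r$ chosen close enough to $r_+$ depending only on fixed parameters, via continuity of the derivative formula) yields $\mc{V}_0(r^0_{\rm max})-\omega^2\geq \mc{V}_0(\tilde r)-\mc{V}_0(r_+)-(\omega-m\upomega_+)^2\cdot 0\geq b\Lambda$.

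The main obstacle I anticipate is the bookkeeping in cases (i) and (ii): one must verify that the lower bound $\mc{V}_0'\geq b\Lambda$ from Lemma~\ref{lemma:derivative-V0-r+} persists not just at $r_+$ but on a \emph{fixed-length} interval $[r_+,r_++\eta]$ with $\eta$ depending only on $M,s,a_0,\beta_i$ and not on the frequencies, and simultaneously that $r^0_{\rm max}\geq r_++\eta$ so that the interval lies in the monotone-increasing region. Establishing the uniform persistence of $\mc{V}_0'\geq b\Lambda$ requires examining $\frac{d}{dr}\mc{V}_0(r)$ --- whose numerator is the cubic in \eqref{eq:dV0/dr} times $-2(\Lambda+s^2)$ --- and showing its value is $\geq b\Lambda(r^2+a^2)^{-3}$ throughout $[r_+,r_++\eta]$; this follows from the $r=r_+$ computation in Lemma~\ref{lemma:derivative-V0-r+} together with a derivative bound $|\mc{V}_0''|\leq B\Lambda$ on that interval, but one must track the $am\omega$ dependence carefully and use the superradiance hypotheses to absorb it. Once this is in place, the global-maximum property does all the remaining work, and the constants $b$ come out depending on $(a_0,\beta_1)$ in case (i), $(a_0,\beta_2)$ in case (ii), and only on $M,s$ in case (iii), as claimed. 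A cleaner alternative for cases (i)--(ii), worth trying first, is simply to pick a single explicit $\bar r$ (such as the $r_2$ appearing in the proof of Lemma~\ref{lemma:critical-points-V0}, or a fixed point like $(1+\sqrt2)M$ adjusted for the ergoregion) and estimate $\mc{V}_0(\bar r)-\omega^2$ head-on, using $0<m\omega$ to kill the sign-ambiguity in the cross term; if the direct estimate closes, the integration-from-$r_+$ argument can be avoided entirely.
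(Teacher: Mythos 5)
Your argument handles case (iii) and the part of cases (i)--(ii) where $(\omega-m\upomega_+)^2\lesssim\Lambda$ essentially as the paper does (evaluate $\mc{V}_0$ at a large fixed $r$, resp.\ Taylor-expand from $r_+$ using Lemma~\ref{lemma:derivative-V0-r+}). But there is a genuine gap in the remaining part of (i)--(ii). In the superradiant ranges nothing prevents $(\omega-m\upomega_+)^2\gg\Lambda$ (equivalently $\omega^2\gg\Lambda$, which occurs e.g.\ when $|m|=1$ and $|\omega|\sim\beta_1\Lambda$). There the Taylor expansion from $r_+$ gives only
\begin{equation*}
\mc{V}_0(r^0_{\rm max})-\omega^2\geq \mc{V}_0(\tilde r)-\mc{V}_0(r_+)-(\omega-m\upomega_+)^2\geq b\Lambda(\tilde r-r_+)-(\omega-m\upomega_+)^2\,,
\end{equation*}
and since $\tilde r-r_+\leq B$ is a \emph{fixed} distance (you are integrating $\mc{V}_0'\lesssim\Lambda$), the right side is negative whenever $(\omega-m\upomega_+)^2\gg\Lambda$. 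The ``$(\omega-m\upomega_+)^2\cdot 0$'' in your display is exactly where the argument breaks; that factor does not vanish. Your fallback of evaluating at a fixed $\bar r$ (such as $4M$ or $(1+\sqrt2)M$) fails for the same reason: then $\mc{V}_0(\bar r)\leq B\Lambda$, while $\omega^2$ can dominate $\Lambda$ by an unbounded amount.

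What the paper does differently is introduce a third sub-regime inside (i)--(ii) and a \emph{frequency-dependent} evaluation point for it. When the frequency is bounded away from the superradiant threshold, i.e.\ $0<m\omega<m^2\upomega_+-\epsilon|m|\sqrt\Lambda$ with $\omega^2>\epsilon\Lambda$, the paper sets $r_0:=\frac{m\upomega_+}{\omega}r_+$, which lies strictly above $r_+$ by superradiance, and at this point a direct algebraic computation reduces
\begin{equation*}
\mc{V}_0(r_0)-\omega^2=\frac{\Delta_{r_0}}{(r_0^2+a^2)^2}\lp[\Lambda+s^2-\frac{a^2}{4M^2}\lp(1+\frac{2M}{r_0}+\frac{a^2}{r_0^2}\rp)m^2\rp]\,,
\end{equation*}
so that the $\omega^2$ dependence cancels exactly and the remaining bracket is bounded below by a multiple of $\Lambda$ using $\Lambda\geq (a_0^2/M^2)m^2$ (case (i)) or $\Lambda\geq m^2$ (case (ii)). Morally, $r_0$ is chosen so that the first-order superradiant deficit $(\omega-am/(r^2+a^2))^2$ vanishes, leaving only the good term $\Delta(\Lambda-2am\omega)/(r^2+a^2)^2$; this cancellation is what your proposal is missing, and without something like it the estimate cannot close on this sub-regime.
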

\begin{proof}  Let $\epsilon>0$ be a fixed sufficiently small constant. As long as $\varepsilon_{\rm width}$ is sufficiently small that  $\varepsilon_{\rm width}<\varepsilon$, the frequency ranges (i), (ii) and (iii) are are contained in one of the five ranges
\begin{align*}
(a)&~\omega^2\leq \epsilon \Lambda \,,\,\, \Lambda\geq \max\lp\{\frac23,\frac{a_0^2}{M^2}\rp\}m^2\,,\\
(b)&~m^2\upomega_+-\epsilon|m|\sqrt{\Lambda}\leq m\omega \leq m^2\upomega_++\beta_1\Lambda\,,\,\, \omega^2>\epsilon\Lambda\,,\,\, 0\leq a \leq a_0\,, \\
(c)&~m^2\upomega_+-\epsilon|m|\sqrt{\Lambda}\leq m\omega \leq m^2\upomega_+-\beta_2\Lambda\,,\,\, \omega^2>\epsilon\Lambda\,,\,\, a>a_0\,, \\
(d)&~0<m\omega<m^2\upomega_+-\epsilon|m|\sqrt{\Lambda}\,,\,\, \omega^2>\epsilon\Lambda\,,\,\,0\leq a\leq a_0\,,\,\, \Lambda\geq \max\lp\{\frac23,\frac{a_0^2}{M^2}\rp\}m^2\,, \\
(e)&~0<m\omega<m^2\upomega_+-\epsilon|m|\sqrt{\Lambda}\,,\,\, \omega^2>\epsilon\Lambda\,,\,\,a>a_0\,,\,\, \Lambda\geq m^2\,. 
\end{align*}
In each of these five ranges, we will find some $r$-value for which the potential $\mc{V}_0$ is quantitatively above $\omega^2$, implying that so is the maximum. 

In range (a),  if we have the bound $\Lambda\geq \frac23 m^2$, then $2am\omega \geq -M\sqrt{6\epsilon}\Lambda$, $-a^2m^2 \geq -2M^2 \Lambda$, hence
\begin{align*}
\mc{V}_0(r)-\omega^2\geq - \epsilon \Lambda+ \frac{\Lambda+s^2}{r^2}+O\lp(\frac{\tilde\Lambda}{r^3}\rp) 
\end{align*}
as $r\to \infty$. Thus, for $\tilde{r}$ sufficiently large and $\epsilon$ sufficiently small depending on $\tilde{r}$, we have 
\begin{align*}
\mc{V}_0(\tilde{r})-\omega^2 \geq b(\epsilon)\Lambda\,.
\end{align*}

In ranges (b) and (c), $|m\upomega_+-\omega|\leq \max\{\varepsilon\sqrt{\Lambda},|\omega|\}$, hence
\begin{align*}
\omega^2-\mc{V}_0(r_+)=(\omega-m\upomega_+)^2\leq \epsilon^2\Lambda\,.
\end{align*}
Combining with \cref{lemma:derivative-V0-r+}, we have, for sufficiently small $\delta>0$ and smaller $\epsilon$,
\begin{align*}
\mc{V}_0(r_++\delta)-\omega^2 =-\epsilon^2\Lambda+\frac{d\mc{V}_0}{dr}(r_+)\delta+O(\delta^2)\geq b\Lambda \,,
\end{align*}
for $b=b(a_0,\beta_1)$ in case (i) and $b=b(a_0,\beta_2)$ in case (ii).

Finally, in the last two frequency ranges, (d) and (e), define
\begin{align*}
r_0:=\frac{m \upomega_+}{\omega}r_+ \in \lp(\frac{\upomega_+}{\upomega_+-\epsilon}r_+,\frac{\upomega_{+}}{\sqrt{\epsilon}}r_+\rp)\,,
\end{align*}
where the bounds are computed using the definition of the frequency range. Letting $\Delta_{r_0}=r_0^2-2Mr_0+a^2$, we compute
\begin{align*}
\mc{V}_0(r_0)-\omega^2 &= \frac{-4M^2r_0^2\omega^2+4Mr_0am -a^2m^2-\omega^2((r_0^2+a^2)^2-4Mr_0^2)+\Delta_{r_0}\Lambda}{(r_0^2+a^2)^2} \\
&=\frac{\Delta_{r_0}}{(r_0^2+a^2)^2}\lp[\Lambda+s^2-\frac{a^2}{4M^2}\lp(1+\frac{2M}{r_0}+\frac{a^2}{r_0^2}\rp)m^2\rp]\\
&\geq \frac{\Delta_{r_0}}{(r_0^2+a^2)^2}\lp[\Lambda-\frac{a^2}{M^2}(1+M^2\epsilon^2-M\epsilon)m^2\rp]\,.
\end{align*}
To conclude, we use $\Lambda\geq a_0^2m^2/M^2$ if $a\in[0,a_0]$ and $\Lambda\geq m^2$ if $a\in(a_0,M]$:
$$\mc{V}_0(r_0)-\omega^2 \geq \frac{\Delta_{r_0}}{(r_0^2+a^2)^2} M\epsilon(1-M\epsilon)\Lambda\geq b(\epsilon)\Lambda\,;$$
this concludes the proof.
\end{proof}

\begin{remark}
The previous lemma shows that, in particular, in the superradiant frequency regime, the maximum of $\mc{V}_0$ is always quantitatively above the energy level of trapped geodesics in the subextremal case $0\leq a \leq a_0$. This property degenerates as $a\to M$: for $a_0<a\leq M$, it only holds if the frequencies are quantitatively away from the superradiant threshold $\omega=m\upomega_+$. 
\end{remark}

We can further restrict the set of nonsuperradiant trapped frequencies:

\begin{lemma}\label{lemma:V0-trapping}
Fix $M>0$ and suppose $a\in[0,M]$. For any $\beta_3>0$, all sufficiently small $\epsilon_{\rm width}$ depending on $\beta_1$, all sufficiently large $\omega_{\rm high}$ depending on $\epsilon_{\rm width}$, and all $(\omega,m,\Lambda)$ admissible satisfying $\Lambda\geq\frac23 m^2$ and 
\begin{align*}
\varepsilon_{\rm width}\omega^2\leq \Lambda\leq \varepsilon_{\rm width}^{-1}\omega^2\,, \quad |\omega|\geq \omega_{\rm high}\,, \quad m\omega\notin (0, m^2\upomega_++\beta_3\Lambda]\,,
\end{align*}
there is a $c=c(\varepsilon_{\rm width})$ such that $\omega^2-\mc V_0(r_+)\geq c\Lambda$. Thus, for some $r'$ depending on the frequency triple but satisfying $|r'-r_+|\geq b(\varepsilon_{\rm width})$, one has
\begin{align}
\mc{V}_0-\omega^2\leq -\frac14 c\Lambda \,, \quad\forall\,r\in[r_+,r_0']\,. \label{eq:trapping-V0-bound-r'}
\end{align}
Let $r_0'\in(r_+,\infty]$ be the supremum over such $r'$. If $r_0'<\infty$, it satisfies $r_0'\leq B(\varepsilon_{\rm width})$.
\end{lemma}

\begin{proof}
As $\sigma:=2am\omega/\Lambda\leq 3M\epsilon_{\rm width}^{1/2}\leq 1$ for sufficiently small $\varepsilon_{\rm width}$, we can obtain the uniform bounds
\begin{align} 
\lp|\frac{d\mc{V}_0}{dr}\rp|+r\lp|\frac{d^2\mc{V}_0}{dr^2}\rp| \leq B(\epsilon_{\rm width})\frac{\Lambda}{r^3}\,, \qquad \lp|\frac{d}{dr}\lp((r^2+a^2)^3\frac{d\mc{V}_0}{dr}\rp)\rp|\leq B(\epsilon_{\rm width})\Lambda r^2\,. \label{eq:trapping-V0-bounds}
\end{align}
Moreover, we have
\begin{align*}
\omega^2-\mc{V}(r_+)=(\omega-m\upomega_+)^2\geq \min\lp\{\lp(\sqrt{\epsilon_{\rm width}}+\frac{1}{2M}\sqrt{\frac32}\rp)^2, \frac23 \beta_3^2\rp\}\Lambda \geq \varepsilon_{\rm width}\Lambda\,,
\end{align*}
hence we can fix $c(\varepsilon_{\rm width})$ to be simply  $\varepsilon_{\rm width}$.

Let $r_0\in(r_+,\infty]$ be the largest value such that
$$\mc{V}_0(r)\leq \mc{V}_0(r_+)+\frac{c}{2}\Lambda\,, \qquad \forall r\in[r_+,r_0]\,.$$
Clearly $r_0-r_+\geq b(\varepsilon_{\rm width})$. If $r_0=\infty$, then the potential is always below $\mc{V}_0(r_+)+\frac{c}{2}\Lambda$, hence any $r_0'>r_+$ satisfies \eqref{eq:trapping-V0-bound-r'} for $b\leq c/2$; thus we can take $r_0'=\infty$. 

We now turn to the case $r_0<\infty$. As $\frac{d\mc{V}_0}{dr}(r_0)\geq 0$, by Lemma~\ref{lemma:critical-points-V0}, $\mc{V}_0$ has a maximum at $r_{\rm max}^0\geq r_0$ and can also have a minimum at $r_{\rm min}^0$ satisfying $r_{\rm min}^0<r_0\leq r_{\rm max}^0<B(\varepsilon_{\rm width})$. We can consider two cases
\begin{itemize}
\item If $\mc{V}_0(r_{\rm max}^0)\leq \mc{V}_0(r_+)+\frac{3c}{4}\Lambda$, the inequality holds in $[r_+,\infty)$, so the bound \eqref{eq:trapping-V0-bound-r'} holds for any $r'\geq r_0$. Thus, $r_0'=\infty$.
\item If $\mc{V}_0(r_{\rm max}^0)>\mc{V}_0(r_+)+\frac{3c}{4}\Lambda$, from the bound for the first derivative in \eqref{eq:trapping-V0-bounds}, one can show that $r_{\rm max}^0-r_0\geq b(\epsilon_{\rm width})$. Since $\frac{d\mc{V}_0}{dr}(r_0)\geq 0$, by continuity, there is a $r'\in[r_0,r_{\rm max}^0]$ such that
\begin{align*}
\mc{V}_0(r) \leq \mc{V}_0(r_+)+\frac{3\Lambda}{4}\leq \omega^2 -\frac{1}{4}c\Lambda\,, \quad \forall r\in[r_+,r']\,.
\end{align*}
Thus, as any such $r'$  lies below $r_{\rm max}^0\leq B(\varepsilon_{\rm width})$, so does $r_0'$.
\end{itemize}
This concludes the proof.
\end{proof}

\subsubsection{Overview of the section}
\label{sec:unbounded-overview}

Recall that transformed radial ODE for $\Psi$ \eqref{eq:radial-ODE-Psi}, which we can write as 
\begin{align}
\Psi''+\lp(\omega^2-\frac{\Delta\Lambda+4Mram\omega-a^2m^2}{(r^2+a^2)^2}-\mc{V}_1\rp)\Psi=\mathfrak{G}+aw\sum_{k=0}^{|s|-1}\lp(im c_{s,\,|s|,\,k}^{\Phi}+ c_{s,\,|s|,\,k}^{\mr{id}}\rp)\uppsi_{(k)}\,,\label{eq:radial-ODE-Psi-again}
\end{align}
where $\mc{V}_1(r)$ is the frequency independent part of the potential \eqref{eq:radial-ODE-Psi-potentials}, which can be neglected if $\omega$ or $\Lambda$ are very large. Naively, as the left hand side of this equation depends on the frequency parameters quadratically whereas the coupling errors on the right hand side are, at most, linear in the frequency, we expect that the latter do not play a role in the dynamics. Thus, one can hope to establish estimates for $\Psi$ using the methods of \cite{Dafermos2016b} and, in contrast with the case in  Section~\ref{sec:bounded-smallness}, without needing a very detailed understanding of the behavior of $\uppsi_{(k)}$, for $k<|s|$.

Our naive argument, when put in place, immediately encounters two important obstacles. First and foremost, our reasoning completely fails if there is an $r$ value, or range of $r$ values, for which there is some cancellation between $\omega^2$ and the frequency dependent part of the potential causing the left hand side of \eqref{eq:radial-ODE-Psi-again} to be less than quadratic in the frequency parameters.   A second issue, of a more technical nature, is that in employing virial or Killing energy currents to analyze \eqref{eq:radial-ODE-Psi-again}, applications of Cauchy--Schwarz on the coupling terms may leave quadratic or higher frequency factors on the lower level $\uppsi_{(k)}$, $k<|s|$, which \textit{a priori} would seem to compete with or surpass those on $\Psi$.

Let us first explain how one can reduce the frequency weights on $\uppsi_{(k)}$ to be at most at the level of those on $\Psi$. While we would like to treat \eqref{eq:radial-ODE-Psi-again} as a wave equation for spin-weighted functions, it is not one, hence it does not share the wave equation's conserved quantities. For instance, application of a global Killing $T$ energy produces errors due to coupling which look like
\begin{align*}
-aw\sum_{k=0}^{|s|-1}\omega\Im\lp[\lp(im c_{s,\,|s|,\,k}^{\Phi}+ c_{s,\,|s|,\,k}^{\mr{id}}\rp)\uppsi_{(k)}\overline{\Psi}\rp]\,.
\end{align*}
When trapping occurs, as we do not control $|\Psi|^2$ with quadratic frequency weights, the weights falling on $\uppsi_{(k)}$ by Cauchy--Schwarz greatly surpass those on $\Psi$. We reduce our frequency weights by borrowing an important observation appearing already in \cite{Dafermos2017}: the transport equations relating the transformed variables can be used to eliminate the dependence of the coupling errors on $\omega$ (see Lemma~\ref{lemma:basic-estimate-3}). With this reduction, it is easy to see that the coupling errors to control arising from either the Killing energy or virial current can be no worse than $a^2(m^2+1)|\uppsi_{(k)}|^2$, with suitable $r$-weights, for $k<|s|$.

\begin{itemize}[noitemsep]
\item If $\omega^2-\mc{V}_0\gtrsim \omega^2$ globally, then we can exploit that with a global $y$-type virial current to obtain a good bulk estimate. The virial current generates coupling errors of strength $a^2(m^2+1)|\uppsi_{(k)}|^2$ in the frequency parameters and a boundary term with the wrong sign. 
\begin{itemize}[noitemsep]
\item \textit{Virial current coupling errors.} By optimizing the ratio $y/y'$, we may gain a smallness parameter on the coupling errors due to the $y$ current, hence they are controlled by our previous bulk estimate with Lemma~\ref{lemma:basic-estimate-1}.
\item \textit{Boundary terms.} Using Lemma~\ref{lemma:basic-estimate-1} may further contribute to the boundary terms which we need to control; we must therefore be careful with their contributions being bounded, i.e.\ with having sufficient control over $\mathfrak{D}_{s,k}^{\mc{I}}$ or $\mathfrak{D}_{s,k}^{\mc{H}}$ (depending on where the boundary terms lie). The global sign in $\omega^2-\mc{V}_0\gtrsim \omega^2$ means, in particular, that superradiance cannot take place, hence a global energy current can be applied to control the boundary term. If it is the Teukolsky--Starobinsky energy current, and we have enough control over $\mathfrak{C}_s$ and $\mathfrak{D}_{s,k}^{\mc{I}}$ or $\mathfrak{D}_{s,k}^{\mc{H}}$ (depending on where the boundary terms lie), we are done. If the energy current of our choice is Killing, it also generates coupling errors of strength $a^2(m^2+1)|\uppsi_{(k)}|^2$ in the frequency parameters, but no smallness can be associated to them if this energy current is to absorb the boundary terms produced by $y$ and Lemma~\ref{lemma:basic-estimate-1}.
\end{itemize}
This is the case in $\mc{F}_{\text{\ClockLogo}}$ and $\mc{F}_{\rm comp,2}$, analyzed in Section~\ref{sec:F-time}.

\item If $\omega^2-\mc{V}_0$ has no global sign, we instead seek to combine local information on this sign with global or almost global information on the critical point structure of $\mc{V}_0$. In this case, $\mc{V}_0$ and thus the full $\mc{V}$ have a maximum and, in some circumstances, a minimum located at a lower $r$ value, but no other critical points (see Lemma~\ref{lemma:critical-points-V0}). An $f$ current which is positive above the maximum and negative below it can be used to take advantage of the sign of $\mc{V}'$ (almost) globally.
\begin{itemize}[noitemsep]
\item \textit{No trapping.} Outside the trapping range, there is no minimum, hence the $f$ current creates a generally good bulk term. Yet there are two catches in using the $f$ current: first, near the maximum of the potential, the bulk term is degenerate; second, a boundary term with the wrong sign is created. To deal with the second, we must add energy currents and, as superradiance may occur, these may have to introduce localization errors. To deal with the first, note that near the maximum of the potential  $\omega^2-\mc{V}$ has a sign (this is why we say there is no trapping) that can be exploited with a compactly supported and very large $h$ current; it is in the region where $h$ is strong that we bury the localization errors arising from applications of energy currents. 
\item \textit{Trapping, no superradiance.} In the trapping regime, the same downsides to the $f$ current are present. Moreover, we will see that the potential may have a minimum, in which case the bulk term due to $f$ has a bad sign between $r=r_+$ and the minimum. By definition of trapping, there is no clear sign to $\omega^2-\mc{V}$ at the maximum of the potential, so we can only expect to control $|\Psi|^2$  with linear frequency weights by borrowing from the $|\Psi'|^2$ bulk that does not degenerate at the maximum of the potential. At the minimum, however, $\omega^2-\mc{V}$ has a sign  that can be exploited using a $y$ current localized between $r=r_+$ and the minimum. The only drawback left to address are the boundary terms to which $y$ adds: for those, we consider a global energy current, as superradiance does not occur, avoiding localization errors.
\end{itemize}
There is also the option of adding $y$ currents near $r=\infty$ or $r=r_+$ in both cases in order to improve the weights in the bulk terms; this adds to the boundary terms  which we will have to control. 

The procedure we outlined applies to the $\mc{F}_{\measuredangle,1-2}$ and $\mc{F}_{\sun,1-2}$ and to the $\mc{F}_{\rm comp,1}$ frequency regimes, respectively. However, the the presence of the coupling terms means there are additional difficulties when $s\neq 0$: using our reduction, we see that, if we are to apply the Killing energy currents to control the boundary terms produced, then all coupling errors are controlled by large multiples of $a^2(m^2+1)|\uppsi_{(k)}|^2$ bulk terms with suitable $r$-weights, for $k<|s|$.
\end{itemize}

The only question that remains is how to deal with the $a^2(m^2+1)|\uppsi_{(k)}|^2$ bulk terms that the methods described above produce. If $|a|$ were small, these could easily be treated with Lemma~\ref{lemma:basic-estimate-1} (c.f.\ \cite{Dafermos2017}). For more general $a$ the answer comes from the remaining transformed system, i.e.\ the ODEs other than \eqref{eq:radial-ODE-Psi-again}, for $k<|s|$. By a very simple estimate on these ODEs, we are able to show that $(m^2+1)|\uppsi_{(k)}|^2$ bulk terms can be controlled by $|\Psi|^2$ bulk terms for frequency triples with certain properties (see already Lemma~\ref{lemma:basic-estimate-3}). This novel estimate is the key new ingredient in the analysis of the unbounded frequency regimes for $s\neq 0$. We use it whenever available, most notably for in the trapped frequency range.

\subsubsection{An estimate on the transformed system}

Note that, while Lemma~\ref{lemma:basic-estimate-1} provides an unconditional way of climbing the hierarchy of the transformed system, i.e.\ converting estimates in $\uppsi_{(k)}$ into  estimates on $\uppsi_{(k+1)}$, the procedure loses derivatives. The following lemma provides some examples of more refined estimates which however  have a more limited range of applicability:

\begin{lemma} \label{lemma:basic-estimate-2} Let $(\omega,m,\Lambda)$ be an admissible frequency triple such that $\Lambda\geq \Lambda_{\rm high}>0$. 

First, assume that, for some $\varepsilon_1\in(0,1)$ sufficiently small, there is a small constant $b$ such that 
\begin{align}
\Lambda-2am\omega-\varepsilon_1\lp(\omega^2+\frac{a^2}{Mr_+}m^2\rp)-1\geq b(\varepsilon_1)\Lambda \text{~~and~~} \tilde{\varepsilon}_1:=\frac{1}{b(\varepsilon_1,\varepsilon_2)\varepsilon_1 \Lambda_{\rm high}}\ll 1\,; \label{eq:basic-estimate-2-condition-1}
\end{align}
then, we have
\begin{align}
\begin{split}
\int_{-\infty}^\infty b(\varepsilon_1)w\Lambda\lp|\uppsi_{(k)}\rp|^2dr^*
&\leq  \int_{-\infty}^\infty Bw\varepsilon_1^{-1} \lp|\uppsi_{(k+1)}\rp|^2dr^* +B\int_{-\infty}^\infty k\sum_{i=0}^{k}\lp|\Re\lp[\mathfrak{G}_i\overline{\uppsi_{(i)}}\rp]\rp|dr^*\\
&\leq  \int_{-\infty}^\infty B\tilde{\varepsilon}_1 b(\varepsilon_1)\Lambda w \lp|\uppsi_{(k+1)}\rp|^2dr^* +B\int_{-\infty}^\infty k\sum_{i=0}^{k}\lp|\Re\lp[\mathfrak{G}_i\overline{\uppsi_{(i)}}\rp]\rp|dr^*\,. 
\end{split}\label{eq:basic-estimate-2-a}\\
\begin{split}
\int_{-\infty}^\infty b(\varepsilon_1)w\Lambda\lp|\uppsi_{(k)}'\rp|^2dr^*
&\leq  \int_{-\infty}^\infty B\tilde{\varepsilon}_1 b(\varepsilon_1)\Lambda w \lp(\lp|\uppsi_{(k+1)}'\rp|^2+\lp|\uppsi_{(k+1)}\rp|^2\rp)dr^* \\
&\qquad+B\int_{-\infty}^\infty k\sum_{i=0}^{k}\lp(\lp|\Re\lp[\mathfrak{G}_i\overline{\uppsi_{(i)}}\rp]\rp|+\lp|\Re\lp[\mathfrak{G}_i'\overline{\uppsi_{(i)}}'\rp]\rp|\rp)dr^*\,. 
\end{split}\label{eq:basic-estimate-2-a-prime}
\end{align}

Second, assume for some $\varepsilon_1, \varepsilon_2\in(0,1)$ sufficiently small, $\omega^2\leq B \varepsilon_1^{-1} \varepsilon_2\Lambda_{\rm high}$, 
\begin{align}
\Lambda-2am\omega-\varepsilon_1\frac{a^2}{Mr_+}m^2-1-\varepsilon_2\Lambda_{\rm high}\geq b(\varepsilon_1,\varepsilon_2)\Lambda \text{~~and~~} \tilde{\varepsilon}_1:=\frac{1}{b(\varepsilon_1)\varepsilon_1 \Lambda_{\rm high}}\ll 1\,;
\label{eq:basic-estimate-2-condition}
\end{align}
then, we have
\begin{align}
\begin{split}
&\int_{-\infty}^\infty b(\varepsilon_1,\varepsilon_2)w\Lambda\lp|\uppsi_{(k)}\rp|^2dr^*\\
&\quad\leq \int_{-\infty}^\infty Bw\lp(\frac{\omega^2}{\varepsilon_2\Lambda_{\rm high}} \omega^2+\frac{2\varepsilon_1^{-1}}{r}\rp)\lp|\uppsi_{(k+1)}\rp|^2dr^*+B\int_{-\infty}^\infty \sum_{i=0}^{k}  \lp|\Re\lp[\mathfrak{G}_i\overline{\uppsi_{(i)}}\rp]\rp|dr^*\\
&\quad\leq \int_{-\infty}^\infty Bw\lp(\frac{\omega^2}{\varepsilon_2\Lambda_{\rm high}}+\frac{2\tilde\varepsilon_1 b(\varepsilon_1,\varepsilon_2)\Lambda}{r}\rp)\lp|\uppsi_{(k+1)}\rp|^2dr^*+B\int_{-\infty}^\infty \sum_{i=0}^{k}  \lp|\Re\lp[\mathfrak{G}_i\overline{\uppsi_{(i)}}\rp]\rp|dr^*\,.
\end{split}\label{eq:basic-estimate-2-b}\\
\begin{split}
&\int_{-\infty}^\infty b(\varepsilon_1,\varepsilon_2)w\Lambda\lp|\uppsi_{(k)}'\rp|^2dr^*\\
&\quad\leq \int_{-\infty}^\infty Bw\lp(\frac{\omega^2}{\varepsilon_2\Lambda_{\rm high}}+\frac{2\tilde\varepsilon_1 b(\varepsilon_1,\varepsilon_2)\Lambda}{r}\rp)\lp(\lp|\uppsi_{(k+1)}\rp|^2+\lp|\uppsi_{(k+1)}'\rp|^2\rp)dr^*\\
&\quad\qquad+B\int_{-\infty}^\infty \sum_{i=0}^{k}  \lp(\lp|\Re\lp[\mathfrak{G}_i\overline{\uppsi_{(i)}}\rp]\rp|+\lp|\Re\lp[\mathfrak{G}_i'\overline{\uppsi_{(i)}}'\rp]\rp|\rp)dr^*\,.
\end{split}\label{eq:basic-estimate-2-b-prime}
\end{align}
\end{lemma}
\begin{proof}
Consider the radial ODE \eqref{eq:transformed-k-separated}. Dropping subscripts and using the definition of $\uppsi_{(k+1)}$ \eqref{eq:transformed-transport-separated}, we find that \eqref{eq:transformed-k-separated} can be rewritten as
\begin{align*}
&\lp(\sign s \frac{d}{dr^*}-i\omega+\frac{iam}{r^2+a^2}\rp)\lp(w\uppsi_{(k+1)}\rp)-\sign s (|s|-k)w'\uppsi_{(k+1)}+w(\Lambda-2am\omega)\uppsi_{(k)}\\
&\quad\qquad-\frac{2arw}{r^2+a^2}\sign s(2|s|-2k-1) im \uppsi_{(k)}+U\uppsi_{(k)}\\
&\quad= \mathfrak{G}_{(k)}+aw\sum_{i=0}^{k-1}\lp(im c_{s,\,k,\,i}^\Phi+c^{\rm id}_{s,\,k,\,i}\rp)\uppsi_{(i)}\,,  \numberthis\label{eq:ode-constraint}
\end{align*}
where $ac^{\rm id}_{s,\,k,\,i}$ should be replaced by $c^{\rm id}_{s,\,k,\,i}$ if $|s|\neq 1$ and $(k,i)=(1,0)$, and where
\begin{equation}
\begin{split}
U_{(k)}(r)&:=w\left\{|s|+k(2|s|-k-1)+a^2 \frac{\Delta}{(r^2+a^2)^2}[1-2s-2k(2|s|-k-1)]\rp.\\
&\qquad\lp.+\frac{2Mr(r^2-a^2)}{(r^2+a^2)^2}[1-3|s|+2s^2-3k(2|s|-k-1)]\right\}\,,
\end{split} \label{eq:basic-estimate-2-intermediate-1}
\end{equation}
which, we recall from the proof of Lemma~\ref{lemma:properties-potential-low-omega}\ref{it:properties-potential-low-omega-compact-region}, can be bounded from below by $\frac18 w$. Multiplying \eqref{eq:basic-estimate-2-intermediate-1} by $\overline{\uppsi_{(k)}}$ and taking real parts, we obtain the identity (noting again the change mentioned above for $(k,i)=(1,0)$ and $|s|\neq 1$)
\begin{align*}
w(\Lambda-2am\omega)\lp|\uppsi_{(k)}\rp|^2&= -w^2\lp|\uppsi_{(k+1)}\rp|^2-2w\lp(\omega-\frac{am}{r^2+a^2}\rp)\Im\lp[\uppsi_{(k+1)}\overline{\uppsi_{(k)}}\rp]-U_{(k)}\lp|\uppsi_{(k)}\rp|^2\\
&\qquad+\sign s (|s|-k)w'\Re\lp[\uppsi_{(k+1)}\overline{\uppsi_{(k)}}\rp]+\frac12 \lp(-\sign s w\Re\lp[\overline{\uppsi_{(k)}}\uppsi_{(k+1)}\rp]\rp)'\numberthis \label{eq:basic-estimate-2-intermediate-2}\\
&\qquad -\Re\lp[\mathfrak{G}_k\overline{\uppsi_{(k)}}\rp]-aw\sum_{i=0}^{k-1}\Re\lp[\overline{\uppsi_{(k)}}\lp(im c_{s,\,k,\,i}^\Phi+c^{\rm id}_{s,\,k,\,i}\rp)\uppsi_{(i)}\rp]\,. 
\end{align*}

After integration in $r^*$, given the asymptotics of $\uppsi_{(k)}$ laid out in Lemma~\ref{lemma:uppsi-general-asymptotics} (see also Definition~\ref{def:uhor-uout}), we find that the boundary terms vanish. Applying Cauchy--Schwarz on all terms and bounding some of the expressions from above, we find that for any $\varepsilon_1>0$,
\begin{align*}
&\int_{-\infty}^\infty w(\Lambda-2am\omega)\lp|\uppsi_{(k)}\rp|^2dr^*\\
&\quad\leq  \int_{-\infty}^\infty w\lp[\varepsilon_1 \lp(\omega^2+\frac{a^2}{2Mr_+}m^2\rp)+1\rp]\lp|\uppsi_{(k)}\rp|^2dr^*+\int_{-\infty}^\infty\lp[\frac{2Mr_+\varepsilon_1^{-1}}{(r^2+a^2)^2}+\varepsilon_1^{-1}+\frac{s^2}{4}\lp(\frac{w'}{w}\rp)^2\rp]w \lp|\uppsi_{(k+1)}\rp|^2dr^*\\
&\quad\qquad-\int_{-\infty}^\infty\Re\lp[\mathfrak{G}_k\overline{\uppsi_{(k)}}\rp]dr^*+\int_{-\infty}^\infty \lp\{\varepsilon_1\frac{a^2}{2Mr_+}m^2w\lp|\uppsi_{(k)}\rp|^2 + kC \sum_{i=0}^{k-1}w\varepsilon_1^{-1}\lp|\uppsi_{(i)}\rp|^2\rp\}dr^*\\
&\quad\leq  \int_{-\infty}^\infty w\lp\{\varepsilon_1 \lp(\omega^2+\frac{a^2}{Mr_+}m^2\rp)+1\rp\}\lp|\uppsi_{(k)}\rp|^2dr^*+\int_{-\infty}^\infty\lp[\lp(\varepsilon_1^{-1}+s^2\rp)\frac{1}{r^2+a^2}+\varepsilon_1^{-1}\rp] w \lp|\uppsi_{(k+1)}\rp|^2dr^*\\
&\quad\qquad-\int_{-\infty}^\infty\Re\lp[\mathfrak{G}_k\overline{\uppsi_{(k)}}\rp]dr^*+\int_{-\infty}^\infty  kC \sum_{i=0}^{k-1}w\varepsilon_1^{-1}\lp|\uppsi_{(i)}\rp|^2dr^*\,,
\end{align*}
where $C:=\max_{0\leq i,k\leq |s|} 4M^2\lp(\lp\lVert c_{s,\,k,\,i}^\Phi\rp\rVert_\infty^2+\lp\lVert c_{s,\,k,\,i}^{\rm id}\rp\rVert_\infty^2\rp)$. Alternatively, if we let the $\omega^2$ weight fall on $\uppsi_{(k+1)}$, for some $\varepsilon_1,\varepsilon_2>0$,
\begin{align*}
&\int_{-\infty}^\infty w(\Lambda-2am\omega)\lp|\uppsi_{(k)}\rp|^2dr^*\\
&\quad\leq  \int_{-\infty}^\infty w\lp(\varepsilon_2\Lambda_{\rm high}+\varepsilon_1 \frac{a^2}{Mr_+}m^2+1\rp)\lp|\uppsi_{(k)}\rp|^2dr^*+\int_{-\infty}^\infty\lp[\lp(\varepsilon_1^{-1}+s^2\rp)\frac{1}{r^2+a^2}+\frac{\omega^2}{\varepsilon_2\Lambda_{\rm high}}\rp]w \lp|\uppsi_{(k+1)}\rp|^2dr^*\\
&\quad\qquad-\int_{-\infty}^\infty\Re\lp[\mathfrak{G}_k\overline{\uppsi_{(k)}}\rp]dr^*+\int_{-\infty}^\infty   kC \sum_{i=0}^{k-1}w\varepsilon_1^{-1}\lp|\uppsi_{(i)}\rp|^2dr^*\,.
\end{align*}

Now let $|s|\geq 2$. Assume $\Lambda\geq \Lambda_{\rm high}$ is sufficiently large that
$$\quad\tilde{\varepsilon}_1:=\lp[\varepsilon_1b(\varepsilon_1)\Lambda_{\rm high}\rp]^{-1}\ll \min\{1,|s|C\}\,.$$
Then, since the inequality
\begin{align*}
\int_{-\infty}^\infty b(\varepsilon_1)w\Lambda\lp|\uppsi_{(k)}\rp|^2
&\leq  \int_{-\infty}^\infty B\varepsilon_1^{-1} w \lp|\uppsi_{(k+1)}\rp|^2dr^* -\int_{-\infty}^\infty\Re\lp[\mathfrak{G}_k\overline{\uppsi_{(k)}}\rp]dr^*+\int_{-\infty}^\infty   kC \sum_{i=0}^{k-1}w \varepsilon^{-1}_1\lp|\uppsi_{(i)}\rp|^2dr^*\,,
\end{align*}
holds for all $k=0,...,|s|-1$, we can iterate it along the hierarchy to eliminate its dependence on $|\uppsi_{(i)}|^2$ for $i<k$. To illustrate the procedure, we start at $k=0$, which already has no such terms, and show how to obtain close an estimate for $|\uppsi_{(1)}|^2$:
\begin{align*}
\int_{-\infty}^\infty b(\varepsilon_1)w\Lambda\lp|\uppsi_{(0)}\rp|^2
&\leq \int_{-\infty}^\infty B\varepsilon_1^{-1}w\lp|\uppsi_{(1)}\rp|^2dr^*-\int_{-\infty}^\infty\Re\lp[\mathfrak{G}_0\overline{\uppsi_{(0)}}\rp]dr^*\,,\\
\int_{-\infty}^\infty b(\varepsilon_1)w\Lambda\lp|\uppsi_{(1)}\rp|^2
&\leq  \int_{-\infty}^\infty B\varepsilon_1^{-1} w \lp|\uppsi_{(2)}\rp|^2dr^* -\int_{-\infty}^\infty\Re\lp[\mathfrak{G}_1\overline{\uppsi_{(1)}}\rp]dr^*+\int_{-\infty}^\infty   C  \tilde{\varepsilon}_1  w b(\varepsilon_1)\Lambda \lp|\uppsi_{(0)}\rp|^2dr^*\\
&\leq \int_{-\infty}^\infty B \varepsilon_1^{-1} w \lp|\uppsi_{(2)}\rp|^2dr^* +\int_{-\infty}^\infty   B  \tilde{\varepsilon}_1^2 b(\varepsilon_1)\Lambda  \lp|\uppsi_{(1)}\rp|^2dr^*\\
&\qquad-\int_{-\infty}^\infty\lp\{\Re\lp[\mathfrak{G}_1\overline{\uppsi_{(1)}}\rp]-kC \tilde{\varepsilon}_1\Re\lp[\mathfrak{G}_0\overline{\uppsi_{(0)}}\rp]\rp\}dr^*\,,
\end{align*}
where, in the last inequality, the term in $|\uppsi_{(1)}|^2$ on the right hand side can clearly be absorbed into the left hand side. This concludes the proof of \eqref{eq:basic-estimate-2-a}. Estimate~\ref{eq:basic-estimate-2-b} follows similarly.

Finally, note that we can commute \eqref{eq:ode-constraint} with $\p_{r^*}$ to obtain
\begin{align*}
&\lp(\sign s \frac{d}{dr^*}-i\omega+\frac{iam}{r^2+a^2}\rp)\lp(w\uppsi_{(k+1)}'\rp)-\sign s (|s|-k-1)w'\uppsi_{(k+1)}'+w(\Lambda-2am\omega)\uppsi_{(k)}'\\
&\quad\qquad-\frac{2arw}{r^2+a^2}\sign s(2|s|-2k-1) im \uppsi_{(k)}'+U\uppsi_{(k)}'-\sign s(|s|-k-1) \lp(w'\uppsi_{(k)}\rp)'\\
&\quad\qquad +\lp\{-i\lp[w\lp(\omega-\frac{am}{r^2+a^2}\rp)\rp]'+w'(\Lambda-2am\omega)-\lp(\frac{2arw}{r^2+a^2}\rp)'\sign s(2|s|-2k-1) im +U_{(k)}'\rp\}\uppsi_{(k)}\\
&\quad= \mathfrak{G}_{(k)}'+\sum_{i=0}^{k-1}\lp\{w\lp(ac_{s,\,k,\,i}^\Phi im+ac^{\rm id}_{s,\,k,\,i}\rp)\uppsi_{(i)}'+\lp(wac_{s,\,k,\,i}^\Phi im+wac^{\rm id}_{s,\,k,\,i}\rp)'\uppsi_{(i)}\rp\}\,.
\end{align*}
Thus, repeating the same procedure as above and using the fact that we have already shown the estimates \eqref{eq:basic-estimate-2-a} and \eqref{eq:basic-estimate-2-b}, we can obtain \eqref{eq:basic-estimate-2-a-prime} and \eqref{eq:basic-estimate-2-b-prime}.
\end{proof}

\subsubsection{Time dominated regime and large, comparable, non-superradiant and non-trapped regime}
\label{sec:F-time}

This section concerns the frequency ranges $\mc{F}_{\text{\ClockLogo}}$ and $\mc{F}_{\rm comp,2}$. We will prove

\begin{proposition}[Estimates in $\mc{F}_{\text{\ClockLogo}}$] \label{prop:time-dominated} Fix $s\in\{0,\pm 1,\pm 2\}$ and  $M>0$. For all $\delta\in(0,1]$, for all $E,E_W>0$ such that one of these is sufficiently large, for all $\omega_{\rm high}$ sufficiently large, for all $\varepsilon_{\rm width}$ sufficiently small, and for all $(a,\omega,m,\Lambda)\in\mc{F}_{\text{\ClockLogo}}(\varepsilon_{\rm width},\omega_{\rm high})$, there exist functions $y$ and $\hat{y}$ satisfying the bounds $|y|+|\hat{y}|\leq B$ such that for all smooth $\Psi$ arising from a smooth solution to the radial ODE~\eqref{eq:radial-ODE-alpha} via \eqref{eq:def-psi0-separated} and \eqref{eq:transformed-transport-separated} and itself satisfying the radial ODE~\eqref{eq:radial-ODE-Psi}, 
\begin{enumerate}[font=\normalfont\bfseries, label=\Alph*.]
\item if $\Psi$ has outgoing boundary conditions as in Definition~\ref{def:outgoing-bdry-uppsi}, we have the estimate
\begin{align*}
&\lp[\omega^2\lp|\swei{A}{s}_{\mc{I}^+}\rp|^2+(\omega-m\upomega_+)^2\lp|\swei{A}{s}_{\mc{H}^+}\rp|^2\rp] + b(\delta)\int_{-\infty}^\infty \frac{\Delta}{r^{3+\delta}}\Big[|\Psi'|^2+\omega^2|\Psi|^2\Big]dr^*\\
&\quad\leq \int_{-\infty}^\infty \lp\{2(y+\hat{y})\Re\lp[\mathfrak{G}\overline{\Psi}'\rp]-E\omega\Im\lp[\mathfrak{G}\overline{\Psi}\rp]+E_W\lp(Q^{W,T}\rp)'\rp\}dr^*\,;
\end{align*}
\item if $\Psi$ arises from a solution of the homogeneous Teukolsky radial ODE~\eqref{eq:radial-ODE-alpha} and has the general boundary conditions in Lemma~\ref{lemma:uppsi-general-asymptotics}, we have the estimates:
\begin{equation*} 
\begin{split}
&\omega^2
\lp\{\begin{array}{lr}
\frac{\mathfrak{C}_s}{\mathfrak{D}_s^{\mc I}}\,, &s\leq 0\\
1\,, &s>0
\end{array}\rp\}
\lp|\swei{A}{s}_{\mc{I}^+}\rp|^2+(\omega-m\upomega_+)^2
\lp\{\begin{array}{lr}
1\,, &s\leq 0\\
\frac{\mathfrak{C}_s}{\mathfrak{D}_s^{\mc H}}\,, &s>0
\end{array}\rp\}
\lp|\swei{A}{s}_{\mc{I}^+}\rp|^2
\\
&\quad\leq 4\lp[\omega^2
\lp\{\begin{array}{lr}
1\,, &s\leq 0\\
\frac{\mathfrak{C}_s}{\mathfrak{D}_s^{\mc I}}\,, &s>0
\end{array}\rp\}
\lp|\swei{A}{s}_{\mc{I}^-}\rp|^2+(\omega-m\upomega_+)^2
\lp\{\begin{array}{lr}
\frac{\mathfrak{C}_s}{\mathfrak{D}_s^{\mc H}}\,, &s\leq 0\\
1\,, &s>0
\end{array}\rp\}
\lp|\swei{A}{s}_{\mc{I}^-}\rp|^2\rp]\,,
\end{split}
\end{equation*}
 and 
\begin{align*}
&\lp[\omega^2\lp|\swei{A}{s}_{\mc{I}^+}\rp|^2+(\omega-m\upomega_+)^2\lp|\swei{A}{s}_{\mc{H}^+}\rp|^2\rp] + b(\delta)\int_{-\infty}^\infty \frac{\Delta}{r^{3+\delta}}\Big[|\Psi'|^2+\omega^2|\Psi|^2\Big]dr^*\\
&\quad\leq B(E)\lp[\omega^2\lp|\swei{A}{s}_{\mc{I}^-}\rp|^2+(\omega-m\upomega_+)^2\lp|\swei{A}{s}_{\mc{H}^-}\rp|^2\rp]+\int_{-\infty}^\infty \lp\{2(y+\hat{y})\Re\lp[\mathfrak{G}\overline{\Psi}'\rp]-E\omega\Im\lp[\mathfrak{G}\overline{\Psi}\rp]\rp\}dr^*\,,
\end{align*}
if $a=0$, $s=0$ or, if $s\neq 0$, as long as $s$, $a$, and $(\omega,m,\Lambda)$ further satisfy the requirement that
\begin{align*}
a^2\lp|\frac{\swei{A}{s}_{k,\mc{H}^-}}{\swei{A}{s}_{\mc{H}^-}}\rp|\text{~and~}\frac{\mathfrak{C}_s}{\mathfrak{D}_s^{\mc H}} \text{~~if~}s<0\,, \text{~~or~~}a^2\lp|\frac{\swei{A}{s}_{k,\mc{I}^-}}{\swei{A}{s}_{\mc{I}^-}}\rp|\text{~and~}\frac{\mathfrak{C}_s}{\mathfrak{D}_s^{\mc I}} \text{~~if~}s>0
\end{align*} 
are bounded independently of the frequency parameters for all $0\leq k\leq |s|$.
\end{enumerate}
\end{proposition}

\begin{proposition}[Estimates in $\mc{F}_{\rm comp,2}$] \label{prop:comp2} Fix $s\in\{0,\pm 1,\pm 2\}$ and $M>0$. Then, for all $\beta_3>0$, for all $\varepsilon^{-1}_{\rm width}$ sufficiently large depending on $\beta_3$, for all $E,E_{\rm width}>0$ such that one of these is sufficiently large depending on $\varepsilon_{\rm width}$, for all $\omega_{\rm high}$ sufficiently large depending on $\varepsilon_{\rm width}$, and for all  $(a,\omega,m,\Lambda) \in  \mc{F}_{\rm comp,2}(\omega_{\rm high},\varepsilon_{\rm width},r_0',\beta_3)$ where $r_0'$ is fixed by Lemma~\ref{lemma:V0-trapping}, there exist functions $y$ and $\hat{y}$ satisfying the uniform bounds
$|y| +|\hat{y}|\leq B \,,$
such that for all smooth $\Psi$ arising from a smooth solution to the radial ODE~\eqref{eq:radial-ODE-alpha} via \eqref{eq:def-psi0-separated} and \eqref{eq:transformed-transport-separated} and itself satisfying the radial ODE~\eqref{eq:radial-ODE-Psi}, 
\begin{enumerate}[font=\normalfont\bfseries, label=\Alph*.]
\item  if $\Psi$ has outgoing boundary conditions of Definition~\ref{def:outgoing-bdry-uppsi}, we have the estimate
\begin{align*}
&\lp[\omega^2\lp|\swei{A}{s}_{\mc{I}^+}\rp|^2+(\omega-m\upomega_+)^2\lp|\swei{A}{s}_{\mc{H}^+}\rp|^2\rp] + b(\delta, \varepsilon_{\rm width})\int_{-\infty}^\infty \frac{\Delta}{r^{3+\delta}}\Big[|\Psi'|^2+\omega^2|\Psi|^2\Big]dr^*\\
&\quad\leq \int_{-\infty}^\infty \lp\{\sum_{k=0}^{|s|}2(y+\hat{y})\Re\lp[\mathfrak{G}_{(k)}\overline{\uppsi_{(k)}}'\rp]-E\omega\Im\lp[\mathfrak{G}\overline{\Psi}\rp]+E_W\lp(Q^{W}\rp)'\rp\}dr^*\,,
\end{align*}
where we may drop the terms on the right hand side with $k\neq |s|$ if $|a|\leq {a}_0$ is sufficiently small depending on $\varepsilon_{\rm width}$;
\item  if $\Psi$ arises from a solution of the homogeneous Teukolsky radial ODE~\eqref{eq:radial-ODE-alpha} and has the general boundary conditions in Lemma~\ref{lemma:uppsi-general-asymptotics}, we have the estimate:
\begin{equation*} 
\begin{split}
&\omega^2
\lp\{\begin{array}{lr}
\frac{\mathfrak{C}_s}{\mathfrak{D}_s^{\mc I}}\,, &s\leq 0\\
1\,, &s>0
\end{array}\rp\}
\lp|\swei{A}{s}_{\mc{I}^+}\rp|^2+(\omega-m\upomega_+)^2
\lp\{\begin{array}{lr}
1\,, &s\leq 0\\
\frac{\mathfrak{C}_s}{\mathfrak{D}_s^{\mc H}}\,, &s>0
\end{array}\rp\}
\lp|\swei{A}{s}_{\mc{I}^+}\rp|^2
\\
&\quad\leq \varepsilon_{\rm width}^{-2}\lp[\omega^2
\lp\{\begin{array}{lr}
1\,, &s\leq 0\\
\frac{\mathfrak{C}_s}{\mathfrak{D}_s^{\mc I}}\,, &s>0
\end{array}\rp\}
\lp|\swei{A}{s}_{\mc{I}^-}\rp|^2+(\omega-m\upomega_+)^2
\lp\{\begin{array}{lr}
\frac{\mathfrak{C}_s}{\mathfrak{D}_s^{\mc H}}\,, &s\leq 0\\
1\,, &s>0
\end{array}\rp\}
\lp|\swei{A}{s}_{\mc{I}^-}\rp|^2\rp]\,.
\end{split}
\end{equation*}
\end{enumerate}
\end{proposition}

Let us begin by discussing the specific properties of the frequency parameters in the relevant frequency ranges and its consequences for the behavior of the potential.

\begin{lemma}[Properties of the frequency parameters in $\mc{F}_{\text{\ClockLogo}}$] \label{lemma:time-dominated-frequency-properties} Fix $s\in\mathbb{Z}$ and $M>0$. Let $(\omega,m,\Lambda)\in \mc{F}_{\text{\ClockLogo}}(\varepsilon_{\rm width},\omega_{\rm high})$ be an admissible frequency triple with respect to $a\in[0,M]$. Then, for sufficiently small $\varepsilon_{\rm width}$ and sufficiently large $\omega_{\rm high}$, we have
\begin{enumerate}[label=(\roman*)]
\item $m^2\leq 2\varepsilon_{\rm width}\omega^2$; \label{it:time-dominated-frequency-properties-m2}
\item $|\Lambda|\leq 2\varepsilon_{\rm width}\omega^2$; \label{it:time-dominated-frequency-properties-Lambda-upper-lower}
\item $2\omega^2\geq \omega(\omega-m\upomega_+)\geq  \frac12 \omega^2$; \label{it:time-dominated-frequency-properties-no-superrad}
\item for $|s|=1,2$ and $a=0$ we have \label{it:time-dominated-frequency-properties-Cs-Ds}
\begin{align*}
\frac{\mathfrak{C}_s}{\mathfrak{D}_s^{\mc{I}}}= \frac{\mathfrak{C}_s}{\mathfrak{D}_s^{\mc{H}}}= \frac{\mathfrak{D}_s^{H}}{\mathfrak{D}_s^{\mc{I}}} =1\,.
\end{align*}
\end{enumerate}
\end{lemma}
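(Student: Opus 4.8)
\textbf{Proof plan for Lemma~\ref{lemma:time-dominated-frequency-properties}.}
The plan is to verify each of the four statements by direct inspection of the definition of the frequency range $\mc{F}_{\text{\ClockLogo}}$, which stipulates $\Lambda < \varepsilon_{\rm width}\omega^2$ and $|\omega|\geq \omega_{\rm high}$, together with the admissibility conditions of Definition~\ref{def:admissible-freqs} and the basic angular eigenvalue bounds of Lemma~\ref{lemma:angular-eigenvalues-properties}. Statements \ref{it:time-dominated-frequency-properties-m2} and \ref{it:time-dominated-frequency-properties-Lambda-upper-lower} are consequences of the defining inequality for the range and the two-sided bounds on $\Lambda$; statement \ref{it:time-dominated-frequency-properties-no-superrad} is a quantitative ``no superradiance'' statement that follows once the first two are in hand; and statement \ref{it:time-dominated-frequency-properties-Cs-Ds} is purely a matter of plugging $a=0$ into the explicit formulas for $\mathfrak{C}_s$, $\mathfrak{D}_s^{\mc{I}}$, $\mathfrak{D}_s^{\mc{H}}$ from \eqref{eq:TS-radial-constants}, \eqref{eq:Ds-infinity} and \eqref{eq:Ds-horizon}.

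First I would prove \ref{it:time-dominated-frequency-properties-Lambda-upper-lower}. Admissibility condition \ref{it:admissible-freqs-triple-Lambda-lower-bound} gives $\Lambda \geq \max\{|m|,|s|\}(\max\{|m|,|s|\}+1)-s^2-2|s||a\omega| \geq -s^2-2|s|M|\omega|$, while the range definition gives $\Lambda < \varepsilon_{\rm width}\omega^2$. Combining with condition \ref{it:admissible-freqs-triple-Lambda-upper-bound}, $|\Lambda|\leq \Lambda+s^2+a^2\omega^2+4|s||a\omega| \leq \varepsilon_{\rm width}\omega^2 + s^2 + M^2\omega^2\cdot(\text{wait})$; here one must be careful, since $a^2\omega^2$ is not small on its own. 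The correct route is to use that in $\mc{F}_{\text{\ClockLogo}}$ we are really exploiting $\Lambda$ being small relative to $\omega^2$, so $|\Lambda|\leq \Lambda+s^2+a^2\omega^2+4|s||a\omega|$ must be handled by first noting $\Lambda = \bm\uplambda^{[s],(a\omega)}_{ml}+(a\omega)^2$ by \eqref{eq:def-Lambda}, so that $|\Lambda - a^2\omega^2| = |\bm\uplambda| \leq \Lambda + s^2 + 4|s||a\omega|$ is the content of \ref{it:admissible-freqs-triple-Lambda-upper-bound}; hence $|\Lambda|\leq 2\Lambda+s^2+4|s|M|\omega| \leq 2\varepsilon_{\rm width}\omega^2 + s^2 + 4|s|M|\omega|$, and for $\omega_{\rm high}$ large enough (so that $s^2+4|s|M|\omega| \leq \tfrac12\varepsilon_{\rm width}\omega^2$ once $|\omega|\geq\omega_{\rm high}\gg \varepsilon_{\rm width}^{-1}$) this is at most $2\varepsilon_{\rm width}\omega^2$ (after possibly rescaling $\varepsilon_{\rm width}$ by a harmless constant). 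Statement \ref{it:time-dominated-frequency-properties-m2} then follows from $m^2 \leq \Lambda + s^2 + 2|s||a\omega|$ (a consequence of \ref{it:admissible-freqs-triple-Lambda-lower-bound}, which gives $\Lambda \geq \max\{|m|,|s|\}^2 - s^2 - 2|s|M|\omega|$, hence $m^2 \leq \Lambda + s^2 + 2|s|M|\omega|$) combined with the bound just established and absorbing the $|\omega|$ term into $\varepsilon_{\rm width}\omega^2$ for large $\omega_{\rm high}$.

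For \ref{it:time-dominated-frequency-properties-no-superrad}, write $\omega(\omega-m\upomega_+) = \omega^2 - m\omega\upomega_+$ and estimate $|m\omega\upomega_+| \leq \upomega_+|m||\omega| \leq \tfrac{1}{2M}\sqrt{2\varepsilon_{\rm width}}\,\omega^2$ using \ref{it:time-dominated-frequency-properties-m2} and $\upomega_+ = a/(2Mr_+) \leq 1/(2M)$; for $\varepsilon_{\rm width}$ sufficiently small this is at most $\tfrac12\omega^2$, giving the two-sided bound $\tfrac12\omega^2 \leq \omega(\omega-m\upomega_+)\leq \tfrac32\omega^2 \leq 2\omega^2$. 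Finally, \ref{it:time-dominated-frequency-properties-Cs-Ds} is immediate: setting $a=0$ in \eqref{eq:TS-radial-constants} gives $\mathfrak{C}_1 = (\Lambda+1)^2$ and $\mathfrak{C}_2 = [(\Lambda+2)(\Lambda+4)]^2 + 144M^2\omega^2$, while setting $a=0$ in \eqref{eq:Ds-infinity} and \eqref{eq:Ds-horizon} gives $\mathfrak{D}_{s,k}^{\mc{I}} = \mathfrak{D}_{s,k}^{\mc{H}}$ for each $k$ (the $a$-dependent and $(r_+-M)$-dependent correction terms all vanish, and for $|s|\leq 2$, $r_+ = 2M$ makes the remaining $(r_+-M)$ factors reduce consistently), so in particular $\mathfrak{C}_s = \mathfrak{D}_s^{\mc{I}} = \mathfrak{D}_s^{\mc{H}}$ for $s=1,2$ once one checks the $144M^2\omega^2$ term in $\mathfrak{C}_2$ matches that in $\mathfrak{D}_2^{\mc{I}}$ at $a=0$ (it does, by \eqref{eq:properties-potential-low-omega-intermediate-0}-style bookkeeping). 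The main obstacle is the bookkeeping in \ref{it:time-dominated-frequency-properties-Lambda-upper-lower}: one must resist the temptation to bound $a^2\omega^2$ crudely and instead route through $\bm\uplambda = \Lambda - a^2\omega^2$ and the admissibility inequality \ref{it:admissible-freqs-triple-Lambda-upper-bound}, which is precisely designed to make this work; everything else is routine once that point is handled correctly.
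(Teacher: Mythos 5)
Your treatment of items (i), (iii), and (iv) is essentially the paper's and is fine; in particular the paper leaves (iv) to the reader, and your plan of evaluating the explicit formulas \eqref{eq:TS-radial-constants}, \eqref{eq:Ds-infinity}, \eqref{eq:Ds-horizon} at $a=0$ (hence $r_+=2M$, $r_-=0$) does give $\mathfrak{D}_1^{\mc{I}}=\mathfrak{D}_1^{\mc{H}}=\mathfrak{C}_1=(\Lambda+1)^2$ and $\mathfrak{D}_2^{\mc{I}}=\mathfrak{D}_2^{\mc{H}}=\mathfrak{C}_2=[(\Lambda+2)(\Lambda+4)]^2+144M^2\omega^2$. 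The problem is item (ii), where your workaround for the $a^2\omega^2$ term in condition \ref{it:admissible-freqs-triple-Lambda-upper-bound} does not close. The inequality you assert, $|\bm\uplambda|\leq\Lambda+s^2+4|s||a\omega|$, is not the content of \ref{it:admissible-freqs-triple-Lambda-upper-bound}: that condition reads $|\Lambda|\leq\Lambda+s^2+a^2\omega^2+4|s||a\omega|$, i.e.\ $-\Lambda\leq\tfrac12(s^2+a^2\omega^2+4|s||a\omega|)$, and substituting $\bm\uplambda=\Lambda-a^2\omega^2$ does not remove the $a^2\omega^2$. Nor is the asserted inequality true in general — the admissibility conditions allow $|\bm\uplambda|=|\Lambda-a^2\omega^2|\sim a^2\omega^2$ while $\Lambda$ is much smaller, which is precisely the regime $\mc{F}_{\text{\ClockLogo}}$ is designed to contain. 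Your subsequent ``hence $|\Lambda|\leq 2\Lambda+s^2+4|s|M|\omega|$'' (for negative $\Lambda$ this is $-\Lambda\leq\tfrac13(s^2+4|s|M|\omega|)$) is likewise stronger than anything the admissibility conditions grant: condition \ref{it:admissible-freqs-triple-Lambda-lower-bound} only yields $-\Lambda\leq 2|s|M|\omega|-|s|$, which for $|\omega|$ large exceeds $\tfrac13(s^2+4|s|M|\omega|)$.

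The fix is to not invoke \ref{it:admissible-freqs-triple-Lambda-upper-bound} at all. Condition \ref{it:admissible-freqs-triple-Lambda-lower-bound}, together with $l\geq|s|$ so that $l(l+1)\geq s^2+|s|$, already gives a lower bound on $\Lambda$ that is only \emph{linear} in $|\omega|$: $\Lambda\geq |s|(|s|+1)-s^2-2|s||a\omega|\geq -|s|(2M|\omega|-1)$, i.e.\ $-\Lambda\leq |s|(2M|\omega|-1)$. Combining with $\Lambda<\varepsilon_{\rm width}\omega^2$ from the definition of $\mc{F}_{\text{\ClockLogo}}$, one gets $|\Lambda|=\max\{\Lambda,-\Lambda\}\leq\max\{\varepsilon_{\rm width}\omega^2,\,|s|(2M|\omega|-1)\}\leq 2\varepsilon_{\rm width}\omega^2$ as soon as $\omega_{\rm high}\geq |s|M/\varepsilon_{\rm width}$. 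This is the paper's route; it is both simpler than your detour through $\bm\uplambda$ and actually correct.
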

\begin{proof}
By property \ref{it:admissible-freqs-triple-Lambda-lower-bound} in Definition~\ref{def:admissible-freqs}, we obtain statements \ref{it:time-dominated-frequency-properties-m2} and \ref{it:time-dominated-frequency-properties-Lambda-upper-lower}:
\begin{align*}
m^2&\leq \Lambda+2|s||a\omega|+s^2 \leq 2\varepsilon_{\rm width}\omega^2\,,\\
|\Lambda|&\leq \max\{-\Lambda,\Lambda\} \leq \max\{|s|(2M|\omega|-1),\varepsilon_{\rm width}\omega^2\}\leq 2\varepsilon_{\rm width}\omega^2\,,
\end{align*}
for sufficiently small $\varepsilon_{\rm width}$ and sufficiently large $\omega_{\rm high}$. From these bounds, we conclude
\begin{align*}
\omega(\omega-m\upomega_+)&\geq \omega^2\lp(1-\sqrt{2\varepsilon_{\rm width}}\frac{1}{2M}\rp)\geq \frac12 \omega^2\,,\\
\omega(\omega-m\upomega_+)&\leq \omega^2\lp(1+\sqrt{2\varepsilon_{\rm width}}\frac{1}{2M}\rp)\leq 2\omega^2\,,
\end{align*}
which proves \ref{it:time-dominated-frequency-properties-no-superrad}.
\end{proof}

\begin{lemma}[Properties of the frequency parameters in $\mc{F}_{\rm comp,2}$] \label{lemma:comparable-frequency-properties} Fix $s\in\mathbb{Z}$, $M>0$ and $\beta_3>0$. Let $r_0'$ be defined by Lemma~\ref{lemma:V0-trapping} and $\beta_3>0$. Let $(\omega,m,\Lambda)\in \mc{F}_{\rm comp,2}(\varepsilon_{\rm width}, \omega_{\rm high},r_0',\beta_3)$ be an admissible frequency triple with respect to $a\in[0,M]$. Then, for sufficiently small $\varepsilon_{\rm width}$ and $\omega_{\rm high}^{-2}$ with respect to $\beta_3$,
\begin{enumerate}[label=(\roman*)]
\item $\Lambda\geq \max\{\frac{a_0^2}{M^2},\frac23\}m^2$; \label{it:comparable-frequency-properties-Lambda-m2}
\item $\Lambda\geq \varepsilon_{\rm width}^{1/2}|m\omega|$; \label{it:comparable-frequency-properties-sigma}
\item in $\mc{F}_{\rm comp,\,2a}(\varepsilon_{\rm width}, \omega_{\rm high},r_0',\beta_3,\beta_4)$, if $\omega_{\rm high}$ is sufficiently large depending also on $\beta_4$, the conditions for Lemma~\ref{lemma:basic-estimate-2} hold, namely, for $\varepsilon_1=\varepsilon_{\rm width}\beta_4/3$, we have 
$$\Lambda-2am\omega-\varepsilon_1(m^2+\omega^2)\geq \frac18\Lambda\,, \qquad \tilde{\varepsilon}_1 =\frac{6}{\beta_4\varepsilon_{\rm width}^2\omega_{\rm high}^2}\ll 1\,,$$
\label{it:comparable-frequency-properties-Lambda-nondegenerate}
as long as $\omega_{\rm high}^2$ is sufficiently large depending on $\varepsilon_{\rm width}$;
\item $2\varepsilon_{\rm width}^{1/2}/M\omega^2\leq \omega(\omega-m\upomega_+)\leq  \varepsilon_{\rm width}^{-1} \omega^2$; \label{it:comparable-frequency-properties-no-superrad}
\item for $|s|=1,2$, in $\mc{F}_{\rm comp,\,2a}(\varepsilon_{\rm width}, \omega_{\rm high},r_0',\beta_3,\beta_4)$, if $\omega_{\rm high}$ is sufficiently large (depending on $\varepsilon_{\rm width}$ and $\beta_4$), we have \label{it:comparable-frequency-properties-Cs-Ds}
\begin{align*}
\frac{\mathfrak{C}_s}{\mathfrak{D}_s^{\mc{I}}}\,,\,\, \frac{\mathfrak{C}_s}{\mathfrak{D}_s^{\mc{H}}}\,, \,\, \frac{\mathfrak{D}_s^{H}}{\mathfrak{D}_s^{\mc{I}}} \in\lp[\frac13,\frac53\rp]\,.
\end{align*}
\end{enumerate}
\end{lemma}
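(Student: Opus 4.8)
\textbf{Proof plan for Lemma~\ref{lemma:comparable-frequency-properties}.} The plan is to prove the five assertions in sequence, each by direct inspection of the admissibility conditions in Definition~\ref{def:admissible-freqs} and of the explicit formulas for the Teukolsky--Starobinsky constant and the boundary-term constants in Proposition~\ref{prop:TS-angular-radial-constant} and Lemma~\ref{lemma:uppsi-general-asymptotics}. This is essentially the $\mc{F}_{\rm comp,2}$ analogue of Lemma~\ref{lemma:properties-frequencies-low-omega}, so the structure should mirror that proof, except that now the relevant smallness parameter is $\omega_{\rm high}^{-2}$ (large $\omega$) rather than $\omega_{\rm low}$ (small $\omega$), and the constraints $\varepsilon_{\rm width}\omega^2\leq\Lambda\leq\varepsilon_{\rm width}^{-1}\omega^2$ together with $m\omega\notin(0,m^2\upomega_++\beta_3\Lambda]$ replace the bounds used there.

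For \ref{it:comparable-frequency-properties-Lambda-m2}, I would use admissibility condition \ref{it:admissible-freqs-triple-Lambda-lower-bound}, $\Lambda\geq\max\{|m|,|s|\}(\max\{|m|,|s|\}+1)-s^2-2|s||a\omega|$, together with $|a\omega|\leq M|\omega|$ and $\Lambda\geq\varepsilon_{\rm width}\omega^2$: when $|m|>|s|$ this gives $\Lambda\geq m^2-s^2+|m|-2|s|M|\omega|$, and since $\varepsilon_{\rm width}\omega^2$ dominates the $|\omega|$-linear correction once $\omega_{\rm high}$ is large, one absorbs the error to obtain $\Lambda\geq\max\{\tfrac{a_0^2}{M^2},\tfrac23\}m^2$ (exactly the manipulation in the proof of Lemma~\ref{lemma:properties-frequencies-low-omega}\ref{it:properties-frequencies-low-omega-Lambda-m2}, now with the roles of the two competing terms swapped). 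For \ref{it:comparable-frequency-properties-sigma}, since $\Lambda\geq\varepsilon_{\rm width}\omega^2$ and $|m|\leq\sqrt{2\varepsilon_{\rm width}^{-1}\Lambda}$ from admissibility plus \ref{it:comparable-frequency-properties-Lambda-m2}, one has $|m\omega|\leq|m|\,|\omega|\leq\sqrt{2\varepsilon_{\rm width}^{-1}\Lambda}\cdot\sqrt{\varepsilon_{\rm width}^{-1}\Lambda}=\sqrt{2}\,\varepsilon_{\rm width}^{-1}\Lambda$, so $\Lambda\geq\varepsilon_{\rm width}^{1/2}|m\omega|$ follows provided $\varepsilon_{\rm width}$ is small enough that $\sqrt{2}\,\varepsilon_{\rm width}^{3/2}\leq 1$. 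Assertion \ref{it:comparable-frequency-properties-no-superrad} is immediate from $|m\upomega_+/\omega|\leq |m/\omega|/(2M)\leq\varepsilon_{\rm width}^{-1/2}/(2M)$ (using \ref{it:comparable-frequency-properties-sigma} and $\Lambda\leq\varepsilon_{\rm width}^{-1}\omega^2$) and the defining condition $m\omega\notin(0,m^2\upomega_++\beta_3\Lambda]$, which when combined with $\Lambda\leq\varepsilon_{\rm width}^{-1}\omega^2$ forces either $m\omega\leq 0$ (so $\omega(\omega-m\upomega_+)\geq\omega^2$) or $m\omega$ very large (so $\omega(\omega-m\upomega_+)$ is still quantitatively positive once one tracks the constants); the upper bound $\omega(\omega-m\upomega_+)\leq\varepsilon_{\rm width}^{-1}\omega^2$ is crude.

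For \ref{it:comparable-frequency-properties-Lambda-nondegenerate}, the point is that in $\mc{F}_{\rm comp,2a}$ we have the extra condition $\Lambda-2am\omega\geq\beta_4\Lambda$ by definition. Setting $\varepsilon_1=\varepsilon_{\rm width}\beta_4/3$ and using $m^2+\omega^2\leq(2\varepsilon_{\rm width}+\varepsilon_{\rm width}^{-1})\omega^2\leq 2\varepsilon_{\rm width}^{-1}\Lambda$ (via \ref{it:comparable-frequency-properties-Lambda-m2} and $\Lambda\geq\varepsilon_{\rm width}\omega^2$), one gets $\varepsilon_1(m^2+\omega^2)\leq\tfrac23\beta_4\Lambda$, hence $\Lambda-2am\omega-\varepsilon_1(m^2+\omega^2)\geq\beta_4\Lambda-\tfrac23\beta_4\Lambda=\tfrac13\beta_4\Lambda$; adjusting constants (and subtracting the harmless $1$ since $\Lambda\geq\varepsilon_{\rm width}\omega_{\rm high}^2$ is large) yields the stated $\geq\tfrac18\Lambda$, while $\tilde\varepsilon_1=[b\varepsilon_1\Lambda_{\rm high}]^{-1}$ with $\Lambda_{\rm high}\sim\varepsilon_{\rm width}\omega_{\rm high}^2$ is $\ll 1$ once $\omega_{\rm high}^2$ is large depending on $\varepsilon_{\rm width}$ and $\beta_4$. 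Finally, for \ref{it:comparable-frequency-properties-Cs-Ds}, I would substitute the explicit polynomial expressions \eqref{eq:TS-radial-constants}, \eqref{eq:Ds-infinity}, \eqref{eq:Ds-horizon} and observe that the leading term of each of $\mathfrak{C}_s$, $\mathfrak{D}_s^{\mc I}$, $\mathfrak{D}_s^{\mc H}$ is $(\Lambda-2am\omega+O(1))^{2|s|}$ with the remaining terms controlled by $O(|\omega|)\cdot(\Lambda-2am\omega)^{2|s|-1}$ or by $m^2\cdot(\Lambda-2am\omega)^{2|s|-2}$ (as in the estimates \eqref{eq:properties-potential-low-omega-intermediate-0}--\eqref{eq:properties-low-frequencies-intermediate-1}); since $\Lambda-2am\omega\geq\beta_4\Lambda\geq\beta_4\varepsilon_{\rm width}\omega_{\rm high}^2$ dominates $|\omega|$ and $m^2$ once $\omega_{\rm high}$ is large (using $|\omega|\leq\varepsilon_{\rm width}^{-1/2}\Lambda^{1/2}$ and $m^2\leq\tfrac32\Lambda$), all three ratios converge to $1$ uniformly, so they lie in $[\tfrac13,\tfrac53]$. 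The main obstacle I anticipate is the bookkeeping in \ref{it:comparable-frequency-properties-Cs-Ds}: unlike the $\omega\to 0$ regime of Lemma~\ref{lemma:properties-frequencies-low-omega}, here $|\omega|$ itself is large, so one must be careful that the $M^2\omega^2$ terms appearing in $\mathfrak{C}_2$ and $\mathfrak{D}_{s,2}^{\mc I,\mc H}$ (and the $am\omega$ cross terms) are genuinely lower order relative to $(\Lambda-2am\omega)^4\sim\beta_4^4\Lambda^4\gtrsim\beta_4^4\varepsilon_{\rm width}^2\omega^2\Lambda^2$ — this needs the quantitative gap $\Lambda-2am\omega\gtrsim\Lambda$ from $\mc{F}_{\rm comp,2a}$ and the comparability $\Lambda\sim\omega^2$ from $\mc{F}_{\rm comp,2}$, which is exactly why the lemma restricts to that sub-range for the last two assertions.
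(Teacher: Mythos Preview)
Your plan matches the paper's approach almost exactly: the paper proves (i) by the same case analysis on admissibility condition \ref{it:admissible-freqs-triple-Lambda-lower-bound} combined with $\Lambda\geq\varepsilon_{\rm width}\omega^2$, derives (ii) as an immediate corollary of (i), proves (iii) by the one-line estimate $\Lambda-2am\omega-\varepsilon_1(m^2+\omega^2)\geq\Lambda[\beta_4-\varepsilon_1(2+\varepsilon_{\rm width}^{-1})]$, handles (iv) by splitting into $m\omega\leq 0$ and $m(\omega-m\upomega_+)\geq\beta_3\Lambda$, and for (v) simply refers to the analogous computation in Lemma~\ref{lemma:angular-dominated-frequency-properties}\ref{it:angular-dominated-frequency-properties-Cs-Ds}, exploiting exactly the $\Lambda-2am\omega\gtrsim\Lambda$ gap you identify.

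One arithmetic slip to fix in your treatment of (ii): the bound you write, $|m|\leq\sqrt{2\varepsilon_{\rm width}^{-1}\Lambda}$, is too weak (and your deduction ``provided $\sqrt{2}\,\varepsilon_{\rm width}^{3/2}\leq 1$'' is in the wrong direction --- it would force $\varepsilon_{\rm width}$ large). Use instead the bound $m^2\leq\tfrac32\Lambda$ that you just established in (i); then $|m\omega|\leq\sqrt{3/2}\,\varepsilon_{\rm width}^{-1/2}\Lambda$ directly, which is the desired statement up to a harmless constant.
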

\begin{proof}
We begin by showing that \ref{it:comparable-frequency-properties-Lambda-m2} holds more generally when $\Lambda\geq \varepsilon_{\rm width}\omega^2$ and $|\omega|\geq \omega_{\rm high}$. If $m^2\leq \varepsilon_{\rm width}\omega^2$, of course $\Lambda\geq m^2$. On the other hand, if $m^2>\varepsilon_{\rm width}\omega^2$ then in particular $|m|\gg |s|$, so it will be convenient to recall condition \ref{it:admissible-freqs-triple-Lambda-lower-bound} of Definition~\ref{def:admissible-freqs}:
\begin{align*}
\Lambda\geq l(l+1)-s^2-2|s||a\omega| \,.
\end{align*}
If $\omega_{\rm high}$ is sufficiently large and either $0\leq |a|< \tilde{a}_0\ll M$ sufficiently small or $m^2\geq \varepsilon_{\rm high}^{-1}\omega^2$, for instance, clearly $\Lambda\geq m^2$. Thus, we only have to check the case $|a|\geq \tilde{a}_0$ and $\varepsilon_{\rm width}\omega^2< m^2<\varepsilon_{\rm width}^{-1}\omega^2$. Then,
\begin{align*}
\Lambda\geq l(l+1)-s^2-2|s||a\omega| \geq (1-\epsilon)m^2+(\epsilon \omega_{\rm high}\varepsilon_{\rm width}+\sqrt{\varepsilon_{\rm width}}-2|s|M-s^2\omega_{\rm high}^{-1})|\omega|  \geq (1-\epsilon)m^2\,,
\end{align*}
 for any $\epsilon\leq 4M|s|\omega_{\rm high}^{-1}\varepsilon_{\rm width}^{-1}$ as long as $\omega_{\rm high}\gg \varepsilon_{\rm width}^{-1}$ is sufficiently large. In particular, for sufficiently small $\varepsilon_{\rm width}$ and $\omega_{\rm high}^{-1}$, we can take $\epsilon=\min\lp\{\frac13,\frac{M-a_0}{M}\rp\}$. The proof of \ref{it:comparable-frequency-properties-sigma} now follows easily.

We turn to \ref{it:comparable-frequency-properties-Lambda-nondegenerate}: if $\Lambda-2am\omega\geq \beta_4\Lambda$, then
\begin{align*}
\Lambda-2am\omega-\varepsilon_1(m^2+\omega^2)\geq \Lambda\lp[\beta_4-\varepsilon_1\lp(2+\varepsilon_{\rm width}^{-1}\rp)\rp]\geq \frac12 \beta_4\Lambda
\end{align*}
as long as $\varepsilon_1\leq \varepsilon_{\rm width}\beta_4/3$.

Let us now focus on \ref{it:comparable-frequency-properties-no-superrad}. Either $m\omega\leq 0$, in which case
\begin{align*}
1\geq \frac{\omega}{\omega-m\upomega_+}=\frac{|\omega|}{|\omega|+|m|\upomega_+} \geq \frac{1}{1+\sqrt{\frac32}\varepsilon_{\rm width}^{-1/2}\upomega_+}\geq \frac{2}{M}\varepsilon_{\rm width}^{1/2}\,,
\end{align*}
or $m(\omega-m\upomega_+)\geq \beta_3\Lambda$, in which case
\begin{align*}
1< \frac{\omega}{\omega-m\upomega_+}=\frac{|m\omega|}{|m(\omega-m\upomega_+)|} \leq \frac{|m\omega|}{\beta_3\Lambda}\leq \sqrt{\frac32}\frac{\varepsilon_{\rm width}^{-1/2}}{\beta_3}\leq \varepsilon_{\rm width}^{-1}\,,
\end{align*}
using the fact that we can choose $\varepsilon_{\rm width}<\frac23 \beta_3^2$ for sufficiently small $\varepsilon_{\rm width}$.

Finally, for \ref{it:comparable-frequency-properties-Cs-Ds}, we refer the reader to the proof of the analogous statement in Lemma~\ref{lemma:angular-dominated-frequency-properties}\ref{it:angular-dominated-frequency-properties-Cs-Ds}: by \ref{it:comparable-frequency-properties-Lambda-nondegenerate}, the conclusion follows easily.
\end{proof}

Based on the previous bounds for the frequency parameters, we obtain

\begin{lemma}[Behavior of the potential in $\mc{F}_{\text{\ClockLogo}}$ and $\mc{F}_{\rm comp,2}$] \label{lemma:time-dominated-potential-properties}
Let $(\omega,m,\Lambda)\in \mc{F}_{\text{\ClockLogo}}(\varepsilon_{\rm width},\omega_{\rm high})$ be an admissible frequency triple. Then, for sufficiently small $\varepsilon_{\rm width}$ and sufficiently large $\omega_{\rm high}$, we have
\begin{alignat}{3}
&\mc{V}\leq \frac{\varepsilon_{\rm width}^{1/2}\omega^2}{r^2}\,, &&\qquad \mc{V}'\leq \frac{\varepsilon_{\rm width}^{1/2}\omega^2\Delta}{r^3(r^2+a^2)}\,, &&\qquad \text{~in~} \mc{F}_{\text{\ClockLogo}}\,, \label{eq:time-dominated-potential-bound}\\
&\mc{V}\leq b(\varepsilon_{\rm width})\frac{\omega^2}{r^2}\,,\qquad
&&|\mc{V}'|\leq b(\varepsilon_{\rm width})\frac{\omega^2\Delta}{r^5}\,, &&\qquad \text{~in~} \mc{F}_{\rm comp,2}\,.\label{eq:comparable-potential-bound}
\end{alignat}
\end{lemma}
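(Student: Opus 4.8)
\textbf{Proof plan for Lemma~\ref{lemma:time-dominated-potential-properties}.}
The strategy is to prove both \eqref{eq:time-dominated-potential-bound} and \eqref{eq:comparable-potential-bound} by brute-force inspection of the explicit formula \eqref{eq:radial-ODE-Psi-potentials} for the potential $\mc{V}=\mc{V}_0+\mc{V}_1$, substituting in the frequency-parameter bounds already established in Lemmas~\ref{lemma:time-dominated-frequency-properties} and \ref{lemma:comparable-frequency-properties}. First I would recall that $\mc{V}_1\geq 0$ is frequency-independent and decays like $\Delta/(r^2+a^2)^2\lesssim r^{-4}$ as $r\to\infty$ and like $\Delta$ as $r\to r_+$; since $\mc{V}_1$ carries no $\omega^2$ weight, in either frequency regime it is trivially dominated by the right-hand sides claimed (which are all comparable to $\omega^2$ times a positive $r$-weight) once $\omega_{\rm high}$ is large enough. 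So the real content is the bound on $\mc{V}_0=(4Mram\omega-a^2m^2+\Delta\Lambda)(r^2+a^2)^{-2}$ and its derivative.

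For the $\mc{F}_{\text{\ClockLogo}}$ case, the key inputs are $|\Lambda|\leq 2\varepsilon_{\rm width}\omega^2$ and $m^2\leq 2\varepsilon_{\rm width}\omega^2$ (Lemma~\ref{lemma:time-dominated-frequency-properties}\ref{it:time-dominated-frequency-properties-Lambda-upper-lower}--\ref{it:time-dominated-frequency-properties-m2}). These give, term by term, $|4Mram\omega|\leq 4M r |a\omega| |m|\leq B\varepsilon_{\rm width}^{1/2}\omega^2 r$, $a^2m^2\leq B\varepsilon_{\rm width}\omega^2$ and $\Delta|\Lambda|\leq B\varepsilon_{\rm width}\omega^2 r^2$; dividing by $(r^2+a^2)^2\gtrsim r^4$ produces $|\mc{V}_0|\leq B\varepsilon_{\rm width}^{1/2}\omega^2 r^{-2}$, which after adjusting the absolute constant into the $\varepsilon_{\rm width}^{1/2}$ (possible by shrinking $\varepsilon_{\rm width}$) yields the stated $\mc{V}\leq \varepsilon_{\rm width}^{1/2}\omega^2 r^{-2}$. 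The derivative bound is obtained the same way using the explicit expression \eqref{eq:dV0/dr} for $(r^2+a^2)^3\,d\mc{V}_0/dr$, in which every term is proportional to $(\Lambda+s^2)$ or $am\omega$ and hence is $\leq B\varepsilon_{\rm width}^{1/2}\omega^2$ times a polynomial in $r$ of the right degree; dividing by $(r^2+a^2)^3$ and multiplying by $dr/dr^*=\Delta/(r^2+a^2)$ gives $|\mc{V}'|\leq B\varepsilon_{\rm width}^{1/2}\omega^2\Delta r^{-3}(r^2+a^2)^{-1}$, again absorbing $B$ into $\varepsilon_{\rm width}^{1/2}$. For $\mc{F}_{\rm comp,2}$ one repeats the identical computation, now using $\Lambda\leq\varepsilon_{\rm width}^{-1}\omega^2$, $m^2\leq B(\varepsilon_{\rm width})\Lambda\leq B(\varepsilon_{\rm width})\omega^2$ (from Lemma~\ref{lemma:comparable-frequency-properties}\ref{it:comparable-frequency-properties-Lambda-m2}), and $|am\omega|\leq B(\varepsilon_{\rm width})\omega^2$ (from \ref{it:comparable-frequency-properties-sigma}); this gives $\mc{V}\leq b(\varepsilon_{\rm width})\omega^2 r^{-2}$ and, using \eqref{eq:dV0/dr} and the sharper $r$-weight $\Delta r^{-5}$ available at large $r$ (the $r^3$ leading term in \eqref{eq:dV0/dr} cancels against $r^{-3}$ from $(r^2+a^2)^{-3}$ after multiplying by $\Delta/(r^2+a^2)$, leaving net decay $r^{-5}$), the claimed $|\mc{V}'|\leq b(\varepsilon_{\rm width})\omega^2\Delta r^{-5}$.

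I do not expect a genuine obstacle here: this is a routine estimate of explicit rational functions, and the statement is only used downstream to certify that, in these two frequency ranges, $\omega^2-\mc{V}$ is quantitatively positive everywhere (no trapping), which then licenses a global $y$-type virial current. The one point requiring mild care is bookkeeping the behavior at the horizon versus at infinity separately — at $r\to r_+$ the factor $\Delta$ supplies the decay, at $r\to\infty$ the power $r^{-2}$ (resp.\ $r^{-5}$ for the derivative in $\mc{F}_{\rm comp,2}$) does — so one should present the bound using the uniform weights $r^{-2}$ and $\Delta r^{-3}(r^2+a^2)^{-1}$ (resp.\ $\Delta r^{-5}$) which interpolate correctly between the two ends. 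Finally one notes that all constants $B$ appearing depend only on $M$ and $s$, so they can genuinely be absorbed by choosing $\varepsilon_{\rm width}$ small and $\omega_{\rm high}$ large, as the hypotheses permit.
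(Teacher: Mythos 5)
Your proposal is correct and takes the same approach as the paper: both proofs bound $\mc{V}_0$ and $d\mc{V}_0/dr$ directly from the explicit formulas \eqref{eq:radial-ODE-Psi-potentials} and \eqref{eq:dV0/dr}, using the frequency-parameter bounds of Lemmas~\ref{lemma:time-dominated-frequency-properties} and \ref{lemma:comparable-frequency-properties} to absorb constants into $\varepsilon_{\rm width}^{1/2}$ (resp.\ $b(\varepsilon_{\rm width})$), and then neglect $\mc{V}_1$ by taking $\omega_{\rm high}$ large. The only cosmetic difference is that the paper also cites Lemma~\ref{lemma:V0-trapping} for the $\mc{F}_{\rm comp,2}$ upper bound on $\mc{V}_0$, though the brute-force estimate you give suffices.
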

\begin{proof}
In $\mc{F}_{\text{\ClockLogo}}$, by inspection of \eqref{eq:radial-ODE-Psi-potentials} and \eqref{eq:dV0/dr}, we get from Lemma~\ref{lemma:time-dominated-frequency-properties},
\begin{align*}
\mc{V}_0&=\frac{\Delta\Lambda+4Mram\omega-a^2m^2}{(r^2+a^2)^2}\leq \frac{\Delta +4M |a|r(2\varepsilon_{\rm width})^{-1/2}+2a^2\varepsilon_{\rm width}}{(r^2+a^2)^2}(2\varepsilon_{\rm width})\omega^2 \leq \frac{2\varepsilon_{\rm width}\omega^2}{r^2}\,,\\
\mc{V}_0'&= -2\frac{\Delta}{(r^2+a^2)^4} \lp[\Lambda
(r^3 -3Mr^2+a^2r+Ma^2)-(6Mr^2+Ma^2)am\omega+2ra^2m^2\rp]\\
&\leq \frac{\Delta\lp[(r^3 +3Mr^2+a^2r+Ma^2)+(6Mr^2+Ma^2)(2\varepsilon_{\rm width})^{-1/2}+2ra^2\rp]}{(r^2+a^2)^{4}}4\varepsilon_{\rm width}\omega^2\leq \frac{2\varepsilon_{\rm width}^{1/2}\omega^2\Delta}{r^3(r^2+a^2)} \,,
\end{align*}
for  $\varepsilon_{\rm width}$ sufficiently small and $\omega_{\rm high}$ sufficiently large depending on $M$. 

In $\mc{F}_{\rm comp,2}$, by Lemma~\ref{lemma:V0-trapping}, we have
\begin{align*}
\mc{V}_0\leq b(\varepsilon_{\rm wdith})\frac{\omega^2}{r^2}\,,
\end{align*}
and, making use of the properties in Lemma~\ref{lemma:comparable-frequency-properties} and the fact that $\varepsilon_{\rm width}\leq \Lambda\leq \varepsilon_{\rm width}\omega^2$, we obtain
\begin{align*}
\mc{V}_0'\leq \frac{2\Delta}{(r^2+a^2)^4} \lp[\Lambda
|r^3 -3Mr^2+a^2r+Ma^2|+|6Mr^2+Ma^2|a|m\omega|+2ra^2m^2\rp]\leq b(\varepsilon_{\rm width})\frac{\omega^2\Delta}{r^5}\,.
\end{align*}

In both cases, if  $\omega_{\rm high}$ sufficiently large, the contribution from $\mc{V}_1$ can be neglected, and one obtains \eqref{eq:time-dominated-potential-bound} and \eqref{eq:comparable-potential-bound}, respectively.
\end{proof}

We briefly describe the strategy to prove Propositions~\ref{prop:time-dominated} and \ref{prop:comp2}, which is best illustrated by Figure~\ref{fig:time-dominated}. 

\begin{figure}[htbp]
\centering
\includegraphics[scale=1]{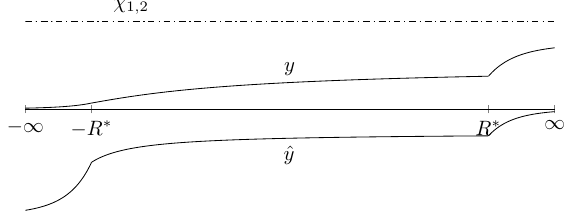}
\caption{Currents in the proof of Theorems~\ref{thm:ode-estimates-Psi-A} and \ref{thm:ode-estimates-Psi-B} for the frequency ranges $\mc{F}_{\text{\ClockLogo}}$ and $\mc{F}_{\rm comp,2}$ (relative size not accurate).}
\label{fig:time-dominated}
\end{figure}

Lemma \ref{lemma:time-dominated-potential-properties} has shown that, in both the time dominated and the large, comparable non-trapped cases, $\omega^2-\mc{V}$ is globally positive and large, prompting us to take advantage of it using a $y$ current. One could become concerned about the term $y\mc{V}'$: for a suitable choice of $y$, one can have $y'(\omega^2-\mc{V})\gg |y\mc{V}'|$ as $r^*\to \pm \infty$, but in a compact region of $r^*$, the latter can affect the positivity of the $y$ current bulk. In the time dominated case, to ensure positivity, we appeal to the smallness of $\Lambda$, hence $|\mc{V}'|$, compared to $\omega^2$; in the regime where $\Lambda$ is comparable to $\omega^2$, we must instead create smallness in the ratio $|y/y'|$ restricted to a compact region. 

One of the main differences with respect to the $s=0$ case (c.f.\ \cite[Propositions 8.4.1 and 8.6.1]{Dafermos2016b}) is the presence of coupling terms when $s\neq 0$. In the time dominated regime, such a presence can easily be overlooked: the frequency dependence of terms coupling terms is such that, as long as the $r$-weights decay sufficiently fast,
they can always be absorbed by noting $\omega^2\gg m^2$ is very large. In the comparable regime, this is no longer the case; however, creating smallness in the ratio $|y/y'|$ will again be sufficient.

Regarding boundary terms, we note that in both the time dominated case and the large, comparable, non-superradiant non-trapped regime, there is no superradiance (by assumption for $\mc{F}_{\rm comp,2}$ and by statement Lemma~\ref{lemma:time-dominated-frequency-properties}\ref{it:time-dominated-frequency-properties-no-superrad} for $\mc{F}_{\text{\ClockLogo}}$), so one can apply global energy currents. In some circumstances, we will be able to add a large multiple of a global Killing $T$ current, capable of dominating the $y$ boundary term: as the coupling errors created by it have strength $m\omega$ in the frequency variables, they can easily be controlled in the time dominated case, when $m\omega\ll \omega^2$, and whenever $\Lambda-2am\omega\geq b\Lambda$ (using the improved estimates in Lemma~\ref{lemma:basic-estimate-2}).  

However, when $m\omega\sim \omega^2\sim \Lambda$ and $\Lambda-2am\omega = o(\Lambda)$, the Killing $T$ current's coupling errors are not obviously small. A natural idea would therefore be to appeal to a Teukolsky--Starobinsky $T$ energy  current, which does not produce coupling errors. Yet, as $\Lambda-2am\omega = o(\Lambda)$, not all of the boundary terms of the Teukolsky--Starobinsky energy current are equally strong as the boundary terms produced by a $y$ current. For outgoing boundary conditions, there is a way out of this conundrum: we first choose $y$ such that it only produces the boundary term that can be absorbed by the Teukolsky--Starobinsky energy current (we impose one of $y(\pm \infty)=0$ depending on the sign of the spin $s$), add the Teukolsky--Starobinsky energy current to control this term and finally control the remaining boundary term by adding a small multiple of the $T$ current. Alternatively, by designing the $y$ currents in this way and applying them to the radial ODEs for $\uppsi_k$, for each $k=0,\dots,|s|$, rather than just the $k=|s|$ ODE, we then control bulk terms of  $\uppsi_k$ which are similar to those produced by the Killing $T$ current's coupling errors. This strategy again requires outgoing boundary conditions; more general boundary conditions are in general not tractable with these methods.

We note that in the time dominated case,  unless $|\Lambda-2am\omega| \geq b |a\omega|$ for some $b>0$, our results only deal with the case of outgoing boundary conditions. Indeed, absent Lemma~\ref{lemma:basic-estimate-2}, we must employ instead Lemma~\ref{lemma:basic-estimate-1}, which produces boundary terms at the levels $k\leq |s|$. By a judicious choice of $r$-weight to apply in Lemma~\ref{lemma:basic-estimate-1}, we can ensure that only \textit{past}  boundary terms are created; however these may not be controlled by the $\Psi$ boundary terms in the entire $\mc{F}_{\text{\ClockLogo}}$ range.

\begin{proof}[Proof of Propositions~\ref{prop:time-dominated} and \ref{prop:comp2}] Let $(\omega,m,\Lambda)$ be any admissible frequency triple with respect to $s$ and $a\in[0,M]$. In both the frequency ranges we intend to study, we will employ $y$ and $Q^T$ currents. We begin by giving explicit currents $y$ and $\hat{y}$ that will be our starting point. For some $R^*\geq 1$ and $C\geq 1$ to be fixed, set
\begin{align*}
y(r^*)&:=\exp\lp(-C\int_{r(r^*)}^\infty \frac{dr}{\Delta} \rp)\mathbbm{1}_{[-R^*,R^*]}+\frac{y(-R^*)(R^*)^{1/2}}{(-r^*)^{1/2}}\mathbbm{1}_{(-\infty,-R^*)}+y(R^*)\lp(2-\frac{(R^*)^{\delta}}{(r^*)^{\delta}}\rp)\mathbbm{1}_{(R^*,\infty)}\,,\\
y'(r^*)&=\frac{y}{2(-r^*)}\mathbbm{1}_{(-\infty,-R^*)}+C\frac{y}{r^2+a^2}\mathbbm{1}_{[-R^*,R^*)}+\delta\frac{y(R^*)(R^*)^{\delta}}{(r^*)^{1+\delta}}>0\,,\\
\hat y(r^*)&:=-\exp\lp(-C(r-r_+)\rp)\mathbbm{1}_{[-R^*,R^*]}+\hat y(-R^*)\lp(2-\frac{(R^*)^{1/2}}{(-r^*)^{1/2}}\rp)\mathbbm{1}_{(-\infty,-R^*)}+\frac{\hat y(R^*)(R^*)^{\delta}}{(r^*)^{\delta}}\mathbbm{1}_{(R^*,\infty)}\,,\\
\hat y'(r^*)&=\frac{\hat y(-R^*)(R^*)^{1/2}}{2(-r^*)^{3/2}}\mathbbm{1}_{(-\infty,-R^*)}+C\frac{\hat y}{r^2+a^2}\mathbbm{1}_{[-R^*,R^*)}+\delta\frac{\hat{y}}{r^*}\mathbbm{1}_{(R^*,\infty)}>0\,,
\end{align*}
so that $y(\infty)=2y(R^*)$, $y(-\infty)=0$, $\hat y(\infty)=0$ and $y(-\infty)=2\hat y(-R^*)$.

When we apply $y$, $\hat{y}$ or $EQ^T$ currents to the equation for $\Psi$, error terms arise due to coupling with $\uppsi_{(k)}$. For $y$, note that 
\begin{align*}
\lp|\frac{wy^2}{y'}\rp|+\lp|\frac{w\hat y^2}{\hat{y}'}\rp|&\leq 4\max\{y(R^*),\hat{y}(r^*)\}\lp(\frac{r^*}{r}\rp)^{1+\delta}\lp(\frac{R}{R^*}\rp)^\delta\frac{1}{\delta R}\mathbbm{1}_{(R^*,\infty)} +\lp(\lp|\frac{wy^2}{y'}\rp|+\lp|\frac{w\hat y^2}{\hat{y}'}\rp|\rp)\mathbbm{1}_{(-\infty,R^*]}\\
&\leq B(M)\lp(\frac{\mathbbm{1}_{[-R^*,R^*]^c}}{\sqrt{R^*}}+\frac{\mathbbm{1}_{[-R^*,R^*]}}{C}\rp)\leq B(M) \max\{R^{-1/2},C^{-1}\}\,,\numberthis\label{eq:time-dominated-y-properties}
\end{align*}
as long as $R$ is large enough that $\sqrt{R}\geq \delta^{-1}$. Thus, the coupling errors generated by the $y$ current are
\begin{align*}
&\sum_{k}^{|s|-1}\int_{-\infty}^\infty 2ayw \Re\lp[\lp(c_{s,|s|,k}^{\rm id}+imc_{s,|s|,k}^{\Phi}\rp)\uppsi_{(k)}\overline{\Psi}'\rp]dr^*\\
&\quad\leq \int_{-\infty}^\infty \lp\{\frac{1}{2}y'|\Psi'|^2+2|s|a^2\max_{k}\lp(\lVert c_{s,|s|,k}^{\rm id}\rVert_\infty^2+\lVert c_{s,|s|,k}^{\Phi}\rVert_\infty^2\rp)\sum_{k=0}^{|s|-1}(1+m^2)w\lp(\frac{wy^2}{y'}\rp)\lp|\uppsi_{(k)}\rp|^2\rp\}dr^*\\
&\quad\leq \int_{-\infty}^\infty\lp\{ \frac{1}{2}y'|\Psi'|^2+\frac{a^2B}{\min\{C,\sqrt{R}\}}(1+m^2) w\lp|\uppsi_{(k)}\rp|^2\rp\}dr^*\,, %\numberthis \label{eq:time-dominated-y-coupling}
\end{align*}
and, similarly,
\begin{align*}
&\sum_{k}^{|s|-1}\int_{-\infty}^\infty 2a\hat{y}w \Re\lp[\lp(c_{s,|s|,k}^{\rm id}+imc_{s,|s|,k}^{\Phi}\rp)\uppsi_{(k)}\overline{\Psi}'\rp]dr^*\leq \int_{-\infty}^\infty\lp\{ \frac{\hat{y}'}{2}|\Psi'|^2+\frac{a^2|s|B(1+m^2)}{\min\{C,\sqrt{R}\}}w\lp|\uppsi_{(k)}\rp|^2\rp\}dr^*\,, %\numberthis \label{eq:time-dominated-y-hat-coupling}
\end{align*}
which we have treated using Cauchy--Schwarz. Similarly, for the coupling terms due to adding $EQ^T$, we have, for some $\varepsilon>0$,
\begin{align*}
&\sum_{k}^{|s|-1}\int_{-\infty}^\infty E aw \Im\lp[\lp(c_{s,|s|,k}^{\rm id}+imc_{s,|s|,k}^{\Phi}\rp)\omega\uppsi_{(k)}\overline{\Psi}\rp]dr^*\\
&\quad\leq \int_{-\infty}^\infty E |s|B(M,|s|)\lp[A^2w(|\omega|+\varepsilon^{-1} m^2)\lp|\uppsi_{(k)}\rp|^2+w(|\omega|+\epsilon\omega^2)|\Psi|^2\rp]dr^*\,. %\numberthis\label{eq:time-dominated-QT-coupling}
\end{align*} 

If we repeatedly apply estimate \eqref{eq:basic-estimate-1}, from Lemma~\ref{lemma:basic-estimate-1}, with
\begin{gather*}
c=c_+:=\frac{r-r_+}{r}\Rightarrow \frac{w^2c^2}{c'}=\frac{(r-r_+)^2}{(r^2+a^2)r_+}w\quad \text{if~} s>0\,,\qquad
c=c_-:=-\frac{1}{r}\Rightarrow \frac{w^2c^2}{c'}\leq\frac{w}{r^2+a^2}\quad \text{if~} s<0\,,
\end{gather*}
then, by Lemma~\ref{lemma:uppsi-general-asymptotics}, the boundary terms at $\mc{I}^+$ and $\mc{H}^+$ vanish, so we obtain
\begin{align*}
&\int_{-\infty}^\infty \lp\{2a\{y,\hat{y}\}w \Re\lp[\lp(c_{s,k,k}^{\rm id}+imc_{s,k,k}^{\Phi}\rp)\uppsi_{(k)}\overline{\Psi}'\rp]-E aw \Im\lp[\lp(c_{s,k,k}^{\rm id}+imc_{s,k,k}^{\Phi}\rp)\omega\uppsi_{(k)}\overline{\Psi}\rp]\rp\}dr^*\\
&\quad\leq \int_{-\infty}^\infty \lp\{\frac{\{y',\hat y'\}}{4}|\Psi'|^2+w\lp[a^2B\max\{C^{-1},R^{-1/2},E\varepsilon^{-1},E\}(m^2+|\omega|)+E (\epsilon\omega^2+|\omega|)\rp]\lp|\Psi\rp|^2\rp\}dr^* \numberthis \label{eq:time-dominated-basic-estimate-1}\\
&\quad\qquad +a^2|s|B(M,|s|)\max\{C^{-1},R^{-1/2},E\varepsilon^{-1},E,(E\varepsilon^{-1})^{|s|},E^{|s|}\}\sum_{k=0}^{|s|-1}(|\omega|+m^2)\lp|A_{k,\,\{\mc{I}^-,\mc{H}^-\}}^{[s]}\rp|^2\,,
\end{align*}
where the alternatives are obtained by choosing the $y$ current and having $s\geq 0$ or having the $\hat{y}$ current and $s<0$, respectively.

On the other hand, we can apply Lemma~\ref{lemma:basic-estimate-2} if the conditions for  \eqref{eq:basic-estimate-2-a} hold. Then, we obtain
\begin{align*}
&\int_{-\infty}^\infty \lp\{2a\{y,\hat{y}\}w \Re\lp[\lp(c_{s,k,k}^{\rm id}+imc_{s,k,k}^{\Phi}\rp)\uppsi_{(k)}\overline{\Psi}'\rp]-E aw \Im\lp[\lp(c_{s,k,k}^{\rm id}+imc_{s,k,k}^{\Phi}\rp)\omega\uppsi_{(k)}\overline{\Psi}\rp]\rp\}dr^*\\
&\quad\leq \int_{-\infty}^\infty \lp\{\frac14\{y',\hat y'\}|\Psi'|^2+w\lp[a^2|s|B\max\{C^{-1},R^{-1/2},E\varepsilon^{-1},E\}\tilde{\epsilon}_1\Lambda+E (\epsilon\omega^2+|\omega|)\rp]\lp|\Psi\rp|^2\rp\}dr^*\numberthis \label{eq:time-dominated-basic-estimate-2}\\
&\quad\qquad +|s|\sum_{k=0}^{|s|-1}\int_{-\infty}^\infty \lp|\Re\lp[\mathfrak{G}_{(k)}\uppsi_{(k)}\rp]\rp|dr^*\,. 
\end{align*}

\medskip

\noindent \textit{Time dominated case.} Let us first consider the case of time dominated frequencies,  $\mc{F}_{\text{\ClockLogo}}(\varepsilon_{\rm width},\omega_{\rm high})$. Let $C=1$. By Lemma~\ref{lemma:time-dominated-potential-properties},
\begin{align*}
y'\omega^2-(y\mc{V})'&\geq y'\omega^2\lp[1-\frac{\varepsilon_{\rm width}^{1/2}}{r^2}-\frac{\varepsilon_{\rm width}^{1/2}\Delta}{r^3(r^2+a^2)}\lp(\frac12 |r^*|\mathbbm{1}_{(-\infty,-R^*)}+(r^2+a^2)\mathbbm{1}_{[-R^*,R^*]}+6\frac{|r^*|^{3/2}}{(R^*)^{1/2}}\rp)\rp]\\
&\geq \frac12 y'\omega^2\,,
\end{align*}
for sufficiently small $\varepsilon_{\rm width}$, and, likewise,
\begin{align*}
\hat{y}'\omega^2-(\hat{y}\mc{V})'&\geq \frac12 \hat{y}'\omega^2\,.
\end{align*}

Suppose $s>0$. Rescale $y$ so that $y(\infty)=1$. Note that Lemma~\ref{lemma:basic-estimate-2} cannot be applied, as $\Lambda-2am\omega$ may not have the necessary positivity. However, for some $\varepsilon\in(0,1)$, there are sufficiently large $\omega_{\rm high}$ and $\varepsilon_{\rm width}^{-1}$ so that
\begin{align*}
 \max\{R^{-1/2},E\varepsilon^{-1}\}(1+m^2) \leq E\varepsilon \omega^2\lp[\max\{R^{-1/2}E^{-1},\varepsilon^{-1}\}(\omega_{\rm high}^{-2}+\varepsilon_{\rm width})\rp] \leq B E\varepsilon \omega^2\,,
\end{align*}
hence, by our previous considerations, in particular \eqref{eq:time-dominated-basic-estimate-1}, $Q^y+EQ^T+E_WQ^{W,T}$ currents give the estimate
\begin{align*}
&\int_{-\infty}^\infty\frac14\lp( y'|\Psi|^2+y'\omega^2|\Psi|^2\rp)dr^*\\
&\quad\leq \int_{-\infty}^\infty B\lp(E\varepsilon^{-1}(m^2+|\omega|)+E\varepsilon\omega^2\rp)w\lp|\Psi\rp|^2dr^* +\int_{-\infty}^\infty \lp\{2y\Re\lp[\mathfrak{G}\overline{\Psi}'\rp]-E\omega\Im\lp[\mathfrak{G}\overline{\Psi}\rp]\rp\}dr^*\\
&\quad\qquad-\omega(\omega-m\upomega_+)\lp(E+E_W\frac{\mathfrak{C}_s}{\mathfrak{D}_s^{\mc H}}\rp)\lp|\swei{A}{s}_{\mc{H}^+}\rp|^2+\omega^2(2-E-E_W)\lp|\swei{A}{s}_{\mc{I}^+}\rp|^2 + \omega(\omega-m\upomega_+)(E+E_W)\lp|\swei{A}{s}_{\mc{H}^-}\rp|^2\\
&\quad\qquad +\lp[\frac{a^2(|\omega|+m^2)}{\omega^2}|s|B(M,|s|)(E\varepsilon^{-1})^{|s|}\sum_{k=0}^{|s|-1}\frac{\lp|A_{k,\,\mc{I}^-}^{[s]}\rp|^2}{\lp|\swei{A}{s}_{\mc{I}^-}\rp|^2}+2+E+E_W\frac{\mathfrak{C}_s}{\mathfrak{D}_s^{\mc I}}\rp]\omega^2\lp|\swei{A}{s}_{\mc{I}^-}\rp|^2
\,,
\end{align*}
where we have used \eqref{eq:time-dominated-basic-estimate-1} with $\varepsilon\in(0,1)$. The first term on the right hand side can be absorbed into the left hand side if we make $\epsilon $, $\varepsilon_{\rm width}$ and $\omega_{\rm high}^{-1}$ small enough depending on $E$; we fix $\epsilon$ in this way. Choosing $E>0$ and $\max\{E_W,E\}\geq 4$, we obtain
\begin{align*}
&\omega^2\lp|\swei{A}{s}_{\mc{I}^+}\rp|^2+\frac12E(\omega-m\upomega_+)^2\lp|\swei{A}{s}_{\mc{H}^+}\rp|^2+\int_{-\infty}^\infty\frac14\lp( y'|\Psi|^2+y'\omega^2|\Psi|^2\rp)dr^*\\
&\quad\leq \lp(2+E+E_W\frac{\mathfrak{C}_s}{\mathfrak{D}_s^{\mc I}}\rp)\omega^2\lp[1+|s|B(M,|s|)\sum_{k=0}^{|s|-1}\frac{a^2(|\omega|+m^2)}{\omega^2}\frac{\lp|A_{k,\,\mc{I}^-}^{[s]}\rp|^2}{\lp|\swei{A}{s}_{\mc{I}^-}\rp|^2}\rp]\lp|\swei{A}{s}_{\mc{I}^-}\rp|^2 \numberthis \label{eq:time-dominated-intermediate-1}\\
&\quad\qquad+2(E+E_W)(\omega-m\upomega_+)^2\lp|\swei{A}{s}_{\mc{H}^-}\rp|^2+\int_{-\infty}^\infty \lp\{2y\Re\lp[\mathfrak{G}\overline{\Psi}'\rp]-E\omega\Im\lp[\mathfrak{G}\overline{\Psi}\rp]+E_W(Q^{W,T})'\rp\}dr^*\,.
\end{align*}
If $s<0$, the analogous argument with $\hat{y}'$ gives
\begin{align*}
&E\omega^2\lp|\swei{A}{s}_{\mc{I}^+}\rp|^2+(\omega-m\upomega_+)^2\lp|\swei{A}{s}_{\mc{H}^+}\rp|^2+\int_{-\infty}^\infty\frac14\lp( \hat{y}'|\Psi|^2+\hat{y}'\omega^2|\Psi|^2\rp)dr^*\\
&\quad\leq 2\lp(2+E+E_W\frac{\mathfrak{C}_s}{\mathfrak{D}_s^{\mc I}}\rp)(\omega-m\upomega_+)^2\lp[1+|s|B(M,|s|)\sum_{k=0}^{|s|-1}\frac{a^2(|\omega|+m^2)}{\omega^2}\frac{\lp|A_{k,\,\mc{H}^-}^{[s]}\rp|^2}{\lp|\swei{A}{s}_{\mc{H}^-}\rp|^2}\rp]\lp|\swei{A}{s}_{\mc{H}^-}\rp|^2\numberthis \label{eq:time-dominated-intermediate-2}\\
&\quad\qquad+(E+E_W)\omega^2\lp|\swei{A}{s}_{\mc{I}^-}\rp|^2+\int_{-\infty}^\infty \lp\{2\hat{y}\Re\lp[\mathfrak{G}\overline{\Psi}'\rp]-E\omega\Im\lp[\mathfrak{G}\overline{\Psi}\rp]+E_W\lp(Q^{W,T}\rp)'\rp\}dr^*\,.
\end{align*}

Note that, since $|\omega|\geq \omega_{\rm high}>1$ and, by Lemma~\ref{lemma:time-dominated-frequency-properties}\ref{it:time-dominated-frequency-properties-m2}, $m^2\ll \omega^2$, as long as
\begin{align*}
\frac{\lp|A_{k,\,\mc{H}^-}^{[s]}\rp|^2}{\lp|\swei{A}{s}_{\mc{H}^-}\rp|^2} \text{~and~}\frac{\mathfrak{C}_s}{\mathfrak{D}_s^{\mc H}} \text{~~if~}s<0\,,\qquad \frac{\lp|A_{k,\,\mc{I}^-}^{[s]}\rp|^2}{\lp|\swei{A}{s}_{\mc{I}^-}\rp|^2} \text{~and~}\frac{\mathfrak{C}_s}{\mathfrak{D}_s^{\mc I}} \text{~~if~}s>0\,,
\end{align*}
are bounded, the right hand side of the estimates \eqref{eq:time-dominated-intermediate-1} and \eqref{eq:time-dominated-intermediate-2} can be replaced by
\begin{align*}
&B(E,E_W)\lp\{\omega^2\lp|\swei{A}{s}_{\mc{I}^-}\rp|^2+B(E,E_W)(\omega-m\upomega_+)^2\lp|\swei{A}{s}_{\mc{H}^-}\rp|^2\rp\}\\
&\qquad+\int_{-\infty}^\infty \lp\{2\hat{y}\Re\lp[\mathfrak{G}\overline{\Psi}'\rp]-E\omega\Im\lp[\mathfrak{G}\overline{\Psi}\rp]+E_W\lp(Q^{W,T}\rp)'\rp\}dr^*\,.
\end{align*}

To conclude, we note that when $\Psi$ arises from a solution to the homogeneous radial ODE~\eqref{eq:radial-ODE-alpha}, then one can easily apply the $T$-type Teukolsky--Starobinsky energy current and conclude, by Lemma~\ref{lemma:time-dominated-potential-properties}\ref{it:time-dominated-frequency-properties-no-superrad}
\begin{align*} 
&\omega^2\frac{\mathfrak{C}_s}{\mathfrak{D}_s^{\mc I}}
\lp|\swei{A}{s}_{\mc{I}^+}\rp|^2+\frac12(\omega-m\upomega_+)^2
\lp|\swei{A}{s}_{\mc{I}^+}\rp|^2\\
&\quad \leq \omega^2\frac{\mathfrak{C}_s}{\mathfrak{D}_s^{\mc I}}
\lp|\swei{A}{s}_{\mc{I}^+}\rp|^2+\omega(\omega-m\upomega_+)
\lp|\swei{A}{s}_{\mc{I}^+}\rp|^2= \omega^2
\lp|\swei{A}{s}_{\mc{I}^-}\rp|^2+\omega(\omega-m\upomega_+)
\frac{\mathfrak{C}_s}{\mathfrak{D}_s^{\mc H}}
\lp|\swei{A}{s}_{\mc{I}^-}\rp|^2\\
&\quad\leq \omega^2
\lp|\swei{A}{s}_{\mc{I}^-}\rp|^2+2(\omega-m\upomega_+)^2
\frac{\mathfrak{C}_s}{\mathfrak{D}_s^{\mc H}}
\lp|\swei{A}{s}_{\mc{I}^-}\rp|^2\,,
\end{align*}
for $s\leq 0$ and similarly for $s>0$. However, in general, one does not have control over the constants $\mathfrak{C}_s/\mathfrak{D}_s^{\mc H}$ and $\mathfrak{C}_s/\mathfrak{D}_s^{\mc I}$ in this range.

\medskip
\noindent \textit{Large comparable frequencies case.} Now let $r_0'$ be as in Lemma~\ref{lemma:V0-trapping} and $\beta_1$ as in Lemma~\ref{lemma:derivative-V0-r+}. Consider the case where $(\omega,m,\Lambda)\in\mc{F}_{\rm comp}(\varepsilon_{\rm width},\omega_{\rm high})$. Let $C,R^*$ be large; by Lemma~\ref{lemma:time-dominated-potential-properties}, we have
\begin{align*}
y'(\omega^2-\mc{V})-y\mc{V}'
&\geq y'(\omega^2-\mc{V})\lp(1-\frac{B(\varepsilon_{\rm width})\Delta}{C r^3}\mathbbm{1}_{[-R^*,R^*]}-\frac{B(\varepsilon_{\rm width})}{\delta (R^*)^2}\mathbbm{1}_{[R^*,\infty)}-\frac{B(\varepsilon_{\rm width})}{R^*}\mathbbm{1}_{(-\infty,R^*]}\rp)\\
&\geq \frac12 y'(\omega^2-\mc{V})\,,\\
\hat y'(\omega^2-\mc{V})-\hat y\mc{V}'&\geq \frac12 \hat y'(\omega^2-\mc{V})\,,
\end{align*}
as long as $C$ and $R^*$ are sufficiently large depending on $\varepsilon_{\rm width}$. As in the time dominated case, we present the proof for $s\geq 0$, where we use the current $y$ and pick up boundary terms at $\mc{I}^-$ when using the estimate from \eqref{eq:time-dominated-basic-estimate-1}; the $s<0$ case is completely analogous, but uses $\hat{y}$ and picks up boundary terms at $\mc{H}^-$ in the aforementioned estimate.

In the $s\geq 0$ case, we consider $Q^y+EQ^T+E_WQ^{W,T}$ currents for some $E,E_W>0$. In the case $(\omega,m,\Lambda)\in\mc{F}_{\rm comp,2a}$, where Lemma~\ref{lemma:basic-estimate-2} is available by Lemma~\ref{lemma:comparable-frequency-properties}\ref{it:comparable-frequency-properties-Lambda-nondegenerate}, we can make use of the improved estimate \eqref{eq:time-dominated-basic-estimate-2} to bound the coupling terms arising due to $Q^y+EQ^T$ currents:
\begin{align*}
&\int_{-\infty}^\infty\frac12\lp( \frac12y'|\Psi|^2+y'(\omega^2-\mc{V})|\Psi|^2\rp)dr^*\\
&\quad\leq \int_{-\infty}^\infty \lp[B(M,|s|)\max\{C^{-1},R^{-1/2},E\varepsilon^{-1},E\}\tilde{\varepsilon}_1\Lambda+E (\epsilon\omega^2+|\omega|)\rp]w\lp|\Psi\rp|^2dr^*\\
&\quad\qquad +|s|\sum_{k=0}^{|s|-1}\int_{-\infty}^\infty \lp|\Re\lp[\mathfrak{G}_{(k)}\uppsi_{(k)}\rp]\rp|dr^*+\int_{-\infty}^\infty \lp\{2y\Re\lp[\mathfrak{G}\overline{\Psi}'\rp]-E\omega\Im\lp[\mathfrak{G}\overline{\Psi}\rp]+E_W\lp(Q^{W,T}\rp)'\rp\}dr^*\\
&\quad\qquad-\omega(\omega-m\upomega_+)\lp(E+E_W\frac{\mathfrak{C}_s}{\mathfrak{D}_s^{\mc H}}\rp)\lp|\swei{A}{s}_{\mc{H}^+}\rp|^2+\omega^2(2-E-E_W)\lp|\swei{A}{s}_{\mc{I}^+}\rp|^2\\
&\quad\qquad+\lp(E+E_W\rp)\omega(\omega-m\upomega_+)\lp|\swei{A}{s}_{\mc{H}^-}\rp|^2+\lp(2+E+E_W\frac{\mathfrak{C}_s}{\mathfrak{D}_s^{\mc I}}\rp)\omega^2\lp|\swei{A}{s}_{\mc{I}^-}\rp|^2 \,.
\end{align*}
The first term on the right hand side can be absorbed into the left hand side by choosing  $\epsilon$ sufficiently small depending on $E$ and by choosing $\omega_{\rm high}$ sufficiently large depending on $\varepsilon_{\rm width}$ and $E$, so that $\tilde\varepsilon_1$ is sufficiently small (see Lemma~\ref{lemma:comparable-frequency-properties}\ref{it:comparable-frequency-properties-Lambda-nondegenerate}). Then, setting $\max\{E,E_W\}\geq 12\varepsilon_{\rm width}^{-1/2}/M$, we obtain
\begin{align*}
&\omega^2\lp|\swei{A}{s}_{\mc{I}^+}\rp|^2+(\omega-m\upomega_+)^2\lp|\swei{A}{s}_{\mc{H}^+}\rp|^2+\int_{-\infty}^\infty\frac14\lp( y'|\Psi|^2+y'(\omega^2-\mc{V})|\Psi|^2\rp)dr^*\\
&\quad\leq |s|\sum_{k=0}^{|s|-1}\int_{-\infty}^\infty \lp|\Re\lp[\mathfrak{G}_{(k)}\uppsi_{(k)}\rp]\rp|dr^*+\int_{-\infty}^\infty \lp\{2y\Re\lp[\mathfrak{G}\overline{\Psi}'\rp]-E\omega\Im\lp[\mathfrak{G}\overline{\Psi}\rp]+E_W\lp(Q^{W,T}\rp)'\rp\}dr^*\\
&\quad\qquad+(E+E_W)\varepsilon_{\rm width}^{-1}(\omega-m\upomega_+)^2\lp|\swei{A}{s}_{\mc{H}^-}\rp|^2+\lp(2+E+\frac53E_W\rp)\omega^2\lp|\swei{A}{s}_{\mc{I}^-}\rp|^2 \,.
\end{align*}

Now let us turn to the case $(\omega,m,\Lambda)\in\mc{F}_{\rm comp,2b}$, or more generally to $(\omega,m,\Lambda)\in\mc{F}_{\rm comp,2}$ as a whole. If $|a|\leq a_0$ is sufficiently small depending on $\varepsilon_{\rm width}$, then this smallness is enough to show that the errors in \eqref{eq:time-dominated-basic-estimate-1} can be absorbed. Otherwise, for general $|a|\leq M$, we can apply Cauchy--Schwarz,
\begin{align*}
&\int_{-\infty}^\infty \lp\{2ay w \Re\lp[\lp(c_{s,s,k}^{\rm id}+imc_{s,s,k}^{\Phi}\rp)\uppsi_{(k)}\overline{\Psi}'\rp]-E aw \Im\lp[\lp(c_{s,s,k}^{\rm id}+imc_{s,k,k}^{\Phi}\rp)\omega\uppsi_{(k)}\overline{\Psi}\rp]\rp\}dr^*\\
&\quad\leq \int_{-\infty}^\infty \lp(\frac18 y'|\Psi'|^2+\frac18 y'(\omega^2-\mc V)|\Psi|^2\rp)dr^*\\
&\quad\qquad+ \int_{-\infty}^\infty B\lp[\lp(\frac{wy}{y'}\rp)^2c(\varepsilon_{\rm width})+E^2\frac{w^2\omega^2(m^2+1)}{(y')^2(\omega^2-\mc V)(\omega^2-\Re\mc V_{(k)})}\rp]y'(\omega^2-\Re\mc V_{(k)})|\uppsi_{(k)}|^2dr^*\,,
\end{align*}
noting that $\Re\mc V_{(k)}$ clearly enjoys the same properties as $\mc V$, see Lemma~\ref{lemma:time-dominated-potential-properties}. While estimate \eqref{eq:time-dominated-basic-estimate-2} is unavailable, we can instead consider a hierarchy of currents as in Section~\ref{sec:bounded-smallness}: applying the same current $y$ to the equations \eqref{eq:transformed-k-separated} with $k<|s|$, we deduce that 
\begin{align*}
&\int_{-\infty}^\infty \frac{1}{4}\lp(y'|\uppsi_{(k)}'|^2+y'(\omega^2-\Re\mc V_{(k)})|\uppsi_{(k)}|^2\rp)dr^*\\
&\quad \leq B(\varepsilon_{\rm width})\int_{-\infty}^\infty \frac{w}{r}(|s|-k)|\uppsi_{(k+1)}|^2dr^*+ B\sum_{j=0}^{k-1}\int_{-\infty}^\infty \lp(\frac{wy}{y'}\rp)^2c(\varepsilon_{\rm width}) y'(\omega^2-\Re\mc V_{(j)})|\uppsi_{(j)}|^2 dr^* \\
&\quad\qquad  + \int_{-\infty}^\infty 2y\Re[\mathfrak{G}_{(k)}\overline{\uppsi_{(k)}}']dr^*+ 2\omega^2|\swei{A}{s}_{k,\mc I^-}|^2 \\
&\quad\leq \frac{B(\varepsilon_{\rm width})}{\omega^2_{\rm high}}\int_{-\infty}^\infty y'(\omega^2-\Re\mc V_{(k+1)})(|s|-k)|\uppsi_{(k+1)}|^2dr^*\\
&\quad\qquad + B(\varepsilon_{\rm width})\sum_{j=0}^k\lp(\omega^2|\swei{A}{s}_{j,\mc I^-}|^2 +\int_{-\infty}^\infty y\Re[\mathfrak{G}_{(j)}\overline{\uppsi_{(j)}}']dr^*\rp)
\end{align*}
for sufficiently large $R^*$ and $C$. Notice that we have appealed to Lemma~\ref{lemma:h-y-identity-k} and applied Cauchy--Schwarz to absorb terms involving $\uppsi_k$ to the left hand side. Now, fix some $\max\{E,E_W\}\geq 12\varepsilon_{\rm width}^{-1/2}/M$; by then choosing $\omega_{\rm high}$ sufficiently high depending on $R^*$, $C$ and $\varepsilon_{\rm width}$, we deduce that 
\begin{align*}
&\int_{-\infty}^\infty B\lp[\lp(\frac{wy}{y'}\rp)^2+E^2\frac{w^2\omega^2(m^2+1)}{(y')^2(\omega^2-\mc V)^2}\rp]y'(\omega^2-\Re\mc V_{(k)})|\uppsi_{(k)}|^2dr^*\\
&\quad  \leq \frac{B(\varepsilon_{\rm width},R^*,C)}{\omega^2_{\rm high}}\int_{-\infty}^\infty y'(\omega^2-\mc V)|\Psi|^2dr^*+ B(\varepsilon_{\rm width})\sum_{j=0}^k\lp(\omega^2|\swei{A}{s}_{j,\mc I^-}|^2 +\int_{-\infty}^\infty y\Re[\mathfrak{G}_{(j)}\overline{\uppsi_{(j)}}']dr^*\rp)  \\
&\quad  \leq \int_{-\infty}^\infty \frac18 y'(\omega^2-\mc V)|\Psi|^2dr^*+ B(\varepsilon_{\rm width})\sum_{j=0}^k\lp(\omega^2|\swei{A}{s}_{j,\mc I^-}|^2 +\int_{-\infty}^\infty y\Re[\mathfrak{G}_{(j)}\overline{\uppsi_{(j)}}']dr^*\rp)\,.
\end{align*}
and hence we obtain
\begin{align*}
&\omega^2\lp|\swei{A}{s}_{\mc{I}^+}\rp|^2+(\omega-m\upomega_+)^2\lp|\swei{A}{s}_{\mc{H}^+}\rp|^2+\int_{-\infty}^\infty\frac14\lp( y'|\Psi|^2+y'(\omega^2-\mc{V})|\Psi|^2\rp)dr^*\\
&\quad\leq \int_{-\infty}^\infty \lp\{\sum_{k=0}^{|s|-1}B(\varepsilon_{\rm width})y\Re\lp[\mathfrak{G}_{(k)}\overline{\uppsi_{(k)}}'\rp]+2y\Re\lp[\mathfrak{G}\overline{\Psi}'\rp]-E\omega\Im\lp[\mathfrak{G}\overline{\Psi}\rp]+E_W\lp(Q^{W,T}\rp)'\rp\}dr^*\\
&\quad\qquad+(E+E_W)\varepsilon_{\rm width}^{-1}(\omega-m\upomega_+)^2\lp|\swei{A}{s}_{\mc{H}^-}\rp|^2+\omega^2\lp(2+E+E_W\frac{\mathfrak C_s}{\mathfrak D_s^{\mc I}}+B(\varepsilon_{\rm width})\sum_{j=0}^{|s|-1}\lp|\frac{\swei{A}{s}_{k,\mc I^+}}{\swei{A}{s}_{\mc I^+}}\rp|\rp)\lp|\swei{A}{s}_{\mc{I}^-}\rp|^2 \,.
\end{align*}
Note that, in $(\omega,m,\Lambda)\in\mc{F}_{\rm comp,2b}$, the coefficients
\begin{align*}
\frac{\lp|A_{k,\,\mc{I}^-}^{[s]}\rp|^2}{\lp|\swei{A}{s}_{\mc{I}^-}\rp|^2}
\end{align*} 
may not be bounded, hence the restriction in Proposition~\ref{prop:comp2} to outgoing boundary conditions. 

To conclude, we note once again that when $\Psi$ arises from a solution to the inhomogeneous radial ODE~\eqref{eq:radial-ODE-alpha}, then one can easily apply the $T$-type Teukolsky--Starobinsky energy current and conclude, by Lemma~\ref{lemma:comparable-frequency-properties}\ref{it:comparable-frequency-properties-no-superrad}
\begin{align*} 
&\omega^2\frac{\mathfrak{C}_s}{\mathfrak{D}_s^{\mc I}}
\lp|\swei{A}{s}_{\mc{I}^+}\rp|^2+2\varepsilon_{\rm width}^{1/2}/M(\omega-m\upomega_+)^2
\lp|\swei{A}{s}_{\mc{I}^+}\rp|^2\\
&\quad \leq \omega^2\frac{\mathfrak{C}_s}{\mathfrak{D}_s^{\mc I}}
\lp|\swei{A}{s}_{\mc{I}^+}\rp|^2+\omega(\omega-m\upomega_+)
\lp|\swei{A}{s}_{\mc{I}^+}\rp|^2= \omega^2
\lp|\swei{A}{s}_{\mc{I}^-}\rp|^2+\omega(\omega-m\upomega_+)
\frac{\mathfrak{C}_s}{\mathfrak{D}_s^{\mc H}}
\lp|\swei{A}{s}_{\mc{I}^-}\rp|^2\\
&\quad\leq \omega^2
\lp|\swei{A}{s}_{\mc{I}^-}\rp|^2+\varepsilon_{\rm width}^{-1}(\omega-m\upomega_+)^2
\frac{\mathfrak{C}_s}{\mathfrak{D}_s^{\mc H}}
\lp|\swei{A}{s}_{\mc{I}^-}\rp|^2\,,
\end{align*}
for $s\leq 0$ and similarly for $s>0$, in the entire range $\mc{F}_{\rm comp,2}$. However, in general in this frequency range, one does not have control over the constants $\mathfrak{C}_s/\mathfrak{D}_s^{\mc H}$ and $\mathfrak{C}_s/\mathfrak{D}_s^{\mc I}$.
\end{proof}

\subsubsection{Angular dominated and large, non-trapped superradiant regimes}
\label{section:F-angular-superradiant}

In this section, we consider the frequency ranges $\mc{F}_{\sun}$ and $\mc{F}_{\measuredangle}\backslash\mc{F}_{\measuredangle, 3}$. We will show
\begin{proposition}[Estimates in $\mc{F}_{\measuredangle,1}\cup\mc{F}_{\measuredangle,2}$] \label{prop:angular-dominated} Fix $s\in\{0,\pm 1,\pm 2\}$, $M>0$ and $a_0\in[0,M)$. Then, for all $\delta\in(0,1]$,  for all $\beta_1=\beta_1(a_0)>0$ and $\beta_2>0$ such that Lemma~\ref{lemma:derivative-V0-r+} hold, for all $E,E_W$ such that one of these is sufficiently large, for all $\varepsilon^{-1}_{\rm width}$ sufficiently large depending on $a_0$, $\beta_1$, $\beta_2$, for all  $\omega_{\rm high}$ sufficiently large depending on $a_0$, $\beta_1$, $\beta_2$, $E$ and $E_W$, and for all  $(a,\omega,m,\Lambda) \in \mc{F}_{\measuredangle,1}(\omega_{\rm high},\varepsilon_{\rm width},a_0,\beta_1)\cup\mc{F}_{\measuredangle,1}(\omega_{\rm high},\varepsilon_{\rm width},a_0,\beta_2)$,  there exist functions $f$, $y$, $\hat{y}$, $h$ and $\chi_1,\chi_2$ satisfying the uniform bounds
$|f| +|y|+|\hat{y}|+|h|+|\chi_1|+|\chi_2|\leq B(\varepsilon_{\rm width},\omega_{\rm high}) \,,$
such that, for all smooth $\Psi$ arising from a smooth solution to the radial ODE~\eqref{eq:radial-ODE-alpha} via \eqref{eq:def-psi0-separated} and \eqref{eq:transformed-transport-separated} and itself satisfying the radial ODE~\eqref{eq:radial-ODE-Psi}, if $\Psi$ has the general boundary conditions of Lemma~\ref{lemma:uppsi-general-asymptotics} or the outgoing boundary conditions of Definition \ref{def:outgoing-bdry-uppsi}, we have the estimates
\begin{align*}
&\omega^2\lp|\swei{A}{s}_{\mc{I}^+}\rp|^2+(\omega-m\upomega_+)^2\lp|\swei{A}{s}_{\mc{H}^+}\rp|^2 + b(\delta)\int_{-\infty}^\infty \frac{\Delta}{r^{3+\delta}}\Big[|\Psi'|^2+\lp(\frac{\Lambda}{r^2}+\omega^2\rp)|\Psi|^2\Big]dr^*\\
&\quad\leq B(E,E_W)\lp[\omega^2\lp|\swei{A}{s}_{\mc{I}^-}\rp|^2+(\omega-m\upomega_+)^2\lp|\swei{A}{s}_{\mc{H}^-}\rp|^2\rp]-E\int_{-\infty}^\infty\lp[\omega\chi_2+(\omega-m\upomega_+)\chi_1\rp]\Im\lp[\mathfrak{G}\overline{\Psi}\rp]dr^*\\
&\qquad\quad+\int_{-\infty}^\infty \lp\{E_W\lp[\chi_1\lp(Q^{W,K}\rp)'+\chi_2\lp(Q^{W,T}\rp)'\rp]+ 2(y+\hat{y}+f)\Re\lp[\mathfrak{G}\overline{\Psi}'\rp]+(h+f')\Re\lp[\mathfrak{G}\overline{\Psi}\rp]\rp\}dr^*\\
&\quad\qquad +|s|\sum_{k=0}^{|s|-1}\int_{-\infty}^\infty\lp|\Re\lp[\mathfrak{G}_{(k)}\uppsi_{(k)}\rp]\rp|\,,
\end{align*}
and
\begin{align*}
\lp|\frac{\mathfrak{D}_s^\mc{H}}{\mathfrak{D}_s^\mc{I}}\rp|+\lp|\frac{\mathfrak{D}_s^\mc{H}}{\mathfrak{D}_s^\mc{I}}\rp|^{-1}+\lp|\frac{\mathfrak{C}_s}{\mathfrak{D}_s^\mc{I}}\rp|+\lp|\frac{\mathfrak{C}_s}{\mathfrak{D}_s^\mc{I}}\rp|^{-1}+\lp|\frac{\mathfrak{C}_s}{\mathfrak{D}_s^\mc{H}}\rp|+\lp|\frac{\mathfrak{C}_s}{\mathfrak{D}_s^\mc{H}}\rp|^{-1}\leq B\,.
\end{align*}
\end{proposition}

\begin{proposition}[Estimates in $\mc{F}_{\sun,1}\cup\mc{F}_{\sun,2}$] \label{prop:superradiant} Fix $s\in\{0,\pm 1,\pm 2\}$, $M>0$ and $a_0\in[0,M)$. Then, for all $\delta\in(0,1]$,  for all $\beta_1=\beta_1(a_0)>0$ and $\beta_2>0$ such that Lemma~\ref{lemma:derivative-V0-r+} hold, for all $E,E_W>0$ such that one of these is sufficiently large, for all $\varepsilon^{-1}_{\rm width}$ sufficiently large depending on $a_0$, $\beta_1$, $\beta_2$, for all $\omega_{\rm high}$ sufficiently large depending on $\varepsilon_{\rm width}$, $a_0$, $\beta_1$, $\beta_2$, $E$ and $E_W$, and for all  $(a,\omega,m,\Lambda) \in \mc{F}_{\sun,1}(\omega_{\rm high},\varepsilon_{\rm width},a_0,\beta_1)\cup\mc{F}_{\sun,1}(\omega_{\rm high},\varepsilon_{\rm width},a_0,\beta_2)$,  there exist functions $f$, $y$, $\hat{y}$, $h$ and $\chi_1,\chi_2$ satisfying the uniform bounds
$|f| +|y|+|\hat{y}|+|h|+|\chi_1|+|\chi_2|\leq B(\varepsilon_{\rm width},\omega_{\rm high}) \,,$
such that, for all smooth $\Psi$ arising from a smooth solution to the radial ODE~\eqref{eq:radial-ODE-alpha} via \eqref{eq:def-psi0-separated} and \eqref{eq:transformed-transport-separated} and itself satisfying the radial ODE~\eqref{eq:radial-ODE-Psi}, if $\Psi$ has the general boundary conditions of Lemma~\ref{lemma:uppsi-general-asymptotics} or the outgoing boundary conditions of Definition \ref{def:outgoing-bdry-uppsi}, we have the estimates
\begin{align*}
&\omega^2\lp|\swei{A}{s}_{\mc{I}^+}\rp|^2+(\omega-m\upomega_+)^2\lp|\swei{A}{s}_{\mc{H}^+}\rp|^2 + b(\delta)\int_{-\infty}^\infty \frac{\Delta}{r^{3+\delta}}\Big[|\Psi'|^2+\lp(\frac{\Lambda}{r^2}+\omega^2\rp)|\Psi|^2\Big]dr^*\\
&\quad\leq B(E,E_W)\lp[\omega^2\lp|\swei{A}{s}_{\mc{I}^-}\rp|^2+(\omega-m\upomega_+)^2\lp|\swei{A}{s}_{\mc{H}^-}\rp|^2\rp]-E\int_{-\infty}^\infty\lp[\omega\chi_2+(\omega-m\upomega_+)\chi_1\rp]\Im\lp[\mathfrak{G}\overline{\Psi}\rp]dr^*\\
&\qquad\quad+\int_{-\infty}^\infty \lp\{E_W\lp[\chi_1\lp(Q^{W,K}\rp)'+\chi_2\lp(Q^{W,T}\rp)'\rp]+ 2(y+\hat{y}+f)\Re\lp[\mathfrak{G}\overline{\Psi}'\rp]+(h+f')\Re\lp[\mathfrak{G}\overline{\Psi}\rp]\rp\}dr^*\\
&\quad\qquad +|s|\sum_{k=0}^{|s|-1}\int_{-\infty}^\infty\lp|\Re\lp[\mathfrak G_{(k)}\uppsi_{(k)}\rp]\rp|\,,
\end{align*}
and 
\begin{align*}
\lp|\frac{\mathfrak{D}_s^\mc{H}}{\mathfrak{D}_s^\mc{I}}\rp|+\lp|\frac{\mathfrak{D}_s^\mc{H}}{\mathfrak{D}_s^\mc{I}}\rp|^{-1}+\lp|\frac{\mathfrak{C}_s}{\mathfrak{D}_s^\mc{I}}\rp|+\lp|\frac{\mathfrak{C}_s}{\mathfrak{D}_s^\mc{I}}\rp|^{-1}+\lp|\frac{\mathfrak{C}_s}{\mathfrak{D}_s^\mc{H}}\rp|+\lp|\frac{\mathfrak{C}_s}{\mathfrak{D}_s^\mc{H}}\rp|^{-1}\leq B\,.
\end{align*}
\end{proposition}

Let us first discuss the specific properties of the frequency parameters in this range, then proceed to do a careful analysis of the potential. The section concludes with the proof of Theorems~\ref{thm:ode-estimates-Psi-A} and \ref{thm:ode-estimates-Psi-B} for frequency triples in $\mc{F}_{\sun}$ or $\mc{F}_{\measuredangle}\backslash\mc{F}_{\measuredangle,3}$.

\begin{lemma}[Properties of the frequency parameters in $\mc{F}_{\sun}$ and $\mc{F}_{\measuredangle}$] \label{lemma:angular-dominated-frequency-properties} Fix $s\in\mathbb{Z}$, $M>0$ and $a_0\in[0,M)$. Assume $a_0\beta_1<\sqrt{M^2-a_0^2}/(6M)$. Let $(\omega,m,\Lambda)\in \mc{F}_{\text{\normalfont\sun}}\cup\mc{F}_{\rm \measuredangle}$ be an admissible frequency triple with respect to $a\in[0,M]$ such that $a\in[0,a_0]$ if $(\omega,m,\Lambda)\in \mc{F}_{\text{\normalfont\sun,1}}$ and $a\in(a_0,M]$ if $(\omega,m,\Lambda)\in \mc{F}_{\text{\normalfont\sun,2}}$. Then, for sufficiently small $\varepsilon_{\rm width}$ and $\beta_1$,
\begin{enumerate}[label=(\roman*)]
\item $\Lambda\geq m^2$ in $\mc{F}_{\measuredangle}$ and $\mc{F}_{\text{\normalfont\sun,2}}$, but $\Lambda\geq \max\lp\{\frac23,\frac{a_0^2}{M^2}\rp\}$ in $\mc{F}_{\text{\normalfont\sun,1}}$; \label{it:angular-dominated-frequency-properties-Lambda-m2}
\item the conditions for Lemma~\ref{lemma:basic-estimate-2} are satisfied, namely  \label{it:angular-dominated-frequency-properties-Lambda-nondegenerate}
\begin{itemize}
\item in $\mc{F}_{\measuredangle}$, we have $\Lambda\geq \Lambda_{\rm high}:=\varepsilon_{\rm width}^{-1}\omega_{\rm high}^2\gg 1$  and, for $\varepsilon_1=\varepsilon_2=1/8$,
\begin{align*}
\Lambda-2am\omega-\varepsilon_1\frac{a^2}{2Mr_+}m^2-\varepsilon_2\Lambda_{\rm high}\geq\frac12 \Lambda\,, \quad \tilde{\varepsilon}_1=\frac{16}{\varepsilon_1\Lambda_{\rm high}}\ll 1\,;
\end{align*}
\item in $\mc{F}_{\text{\normalfont \sun, 1}}$, we have $\Lambda\geq \Lambda_{\rm high}:=\varepsilon_{\rm width}\omega_{\rm high}^2\gg 1$  and, for $\varepsilon_1=\frac{\sqrt{M^2-a_0}}{3M}\varepsilon_{\rm width}$,
\begin{align*}
\Lambda-2am\omega-\varepsilon_1\lp(\omega^2+m^2\rp)\geq \frac{M^2-a_0}{12M^2} \Lambda\,, \quad \tilde{\varepsilon}_1=\frac{432 M^4}{(M^2-a_0^2)^4\varepsilon_{\rm width}^2\omega_{\rm high}^2}\ll 1\,,
\end{align*}
as long as $\omega_{\rm high}$ is sufficiently large depending on $\varepsilon_{\rm width}$ and $a_0$;

\item if $a_0>0$, in $\mc{F}_{\text{\normalfont \sun, 2}}$, we have $\Lambda\geq \Lambda_{\rm high}:=\varepsilon_{\rm width}\omega_{\rm high}^2\gg 1$  and, for $\varepsilon_1=a_0\beta_2/4\varepsilon_{\rm width}$,
\begin{align*}
\Lambda-2am\omega-\varepsilon_1\lp(\omega^2+m^2\rp)\geq a_0\beta_2 \Lambda\,, \quad \tilde{\varepsilon}_1=\frac{4}{a_0^2\beta_2^2\varepsilon_{\rm width}^2\omega_{\rm high}^2}\ll 1\,,
\end{align*}
as long as $\omega_{\rm high}$ is sufficiently large depending on $\varepsilon_{\rm width}$, $\beta_2$ and $a_0$.
\end{itemize}

\item for $|s|=1,2$, we have, for $\omega_{\rm high}$ sufficiently large (depending on $\varepsilon_{\rm width}$, $a_0$ and $\beta_2$ in the case of $\mc{F}_{\sun}$), \label{it:angular-dominated-frequency-properties-Cs-Ds}
\begin{align*}
\frac{\mathfrak{C}_s}{\mathfrak{D}_s^{\mc{I}}}\,,\,\, \frac{\mathfrak{C}_s}{\mathfrak{D}_s^{\mc{H}}}\,, \,\, \frac{\mathfrak{D}_s^{H}}{\mathfrak{D}_s^{\mc{I}}} \in\lp[\frac13,\frac53\rp]\,.
\end{align*}
\end{enumerate}
\end{lemma}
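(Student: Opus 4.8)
\textbf{Plan of proof for Lemma~\ref{lemma:angular-dominated-frequency-properties}.}

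The proof is a direct, somewhat mechanical, verification that the quantitative conditions imposed on the frequency parameters in the definitions of the ranges $\mc{F}_{\measuredangle,1}$, $\mc{F}_{\measuredangle,2}$, $\mc{F}_{\sun,1}$ and $\mc{F}_{\sun,2}$ (see Section~\ref{sec:frequency-partitions}), together with the admissibility bounds of Definition~\ref{def:admissible-freqs} (and their incarnation in Lemma~\ref{lemma:angular-eigenvalues-properties}), force each of the three claimed inequalities. Concretely, I would proceed statement by statement. For statement \ref{it:angular-dominated-frequency-properties-Lambda-m2}, I would mirror the argument already given for Lemma~\ref{lemma:comparable-frequency-properties}\ref{it:comparable-frequency-properties-Lambda-m2}: if $m^2\leq \varepsilon_{\rm width}^{\pm 1}\omega^2$ (with the sign depending on whether we are in an angular-dominated or a superradiant range) the inequality $\Lambda\geq m^2$ (resp.\ $\Lambda\geq \max\{2/3,a_0^2/M^2\}m^2$) is immediate from $\Lambda\geq \varepsilon_{\rm width}^{\mp 1}\omega^2$; otherwise $|m|\gg|s|$ and one invokes the admissibility bound $\Lambda\geq l(l+1)-s^2-2|s||a\omega|\geq (1-\epsilon)m^2$, choosing $\epsilon=\min\{1/3,(M-a_0)/M\}$, valid once $\omega_{\rm high}\gg \varepsilon_{\rm width}^{-1}$. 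One must split on the sign of $am$ and on $a\leq a_0$ versus $a>a_0$ exactly as in the comparable case, but the logic is identical.

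Statement \ref{it:angular-dominated-frequency-properties-Lambda-nondegenerate} is the heart of the lemma: I must produce, in each of the three subcases, an explicit $\varepsilon_1$ (and, in $\mc{F}_{\measuredangle}$, an $\varepsilon_2$) so that the hypothesis \eqref{eq:basic-estimate-2-condition} (or \eqref{eq:basic-estimate-2-condition-1}) of Lemma~\ref{lemma:basic-estimate-2} holds with a quantitative lower bound $b\Lambda$, and then check that $\tilde{\varepsilon}_1=[b(\varepsilon_1)\varepsilon_1\Lambda_{\rm high}]^{-1}$ is small. For $\mc{F}_{\measuredangle}$, the defining condition $\Lambda\geq \varepsilon_{\rm width}^{-1}\omega^2$ gives $\omega^2\leq \varepsilon_{\rm width}\Lambda$, and $\Lambda\geq \max\{2/3,a_0^2/M^2\}m^2$ (from \ref{it:angular-dominated-frequency-properties-Lambda-m2}) together with $m\omega\leq m^2\upomega_++\beta_1\Lambda$ (or $m^2\upomega_+-\beta_2\Lambda$) controls $2am\omega$: one gets $2am\omega\leq$ (a small multiple of $\Lambda$) $+\,2a m^2\upomega_+$, and the latter is $\leq (a^2/Mr_+)m^2$ up to constants, so absorbing an $\varepsilon_1$-fraction of the $(a^2/2Mr_+)m^2$ term and an $\varepsilon_2\Lambda_{\rm high}$ piece one lands on $\geq\frac12\Lambda$ for $\varepsilon_1=\varepsilon_2=1/8$; smallness of $\tilde{\varepsilon}_1=16/(\varepsilon_1\Lambda_{\rm high})$ is then automatic since $\Lambda_{\rm high}=\varepsilon_{\rm width}^{-1}\omega_{\rm high}^2\gg1$. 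For $\mc{F}_{\sun,1}$, where instead $\omega^2>\varepsilon_{\rm width}\Lambda$ but $a\leq a_0$ and $0<m\omega\leq m^2\upomega_++\beta_1\Lambda$ with $a_0\beta_1$ small, I would write $\Lambda-2am\omega=\Lambda-2am^2\upomega_+-2am(\omega-m\upomega_+)$ and use $\Lambda\geq (a_0^2/M^2)m^2$ plus the smallness assumption $a_0\beta_1<\sqrt{M^2-a_0^2}/(6M)$ — exactly the hypothesis of Lemma~\ref{lemma:derivative-V0-r+} — to get a fixed fraction $(M^2-a_0)/(12M^2)$ of $\Lambda$ after subtracting $\varepsilon_1(\omega^2+m^2)$ with $\varepsilon_1=\frac{\sqrt{M^2-a_0}}{3M}\varepsilon_{\rm width}$; here $\tilde{\varepsilon}_1$ involves $\varepsilon_{\rm width}^2\omega_{\rm high}^2$ in the denominator, so one needs $\omega_{\rm high}$ large depending on $\varepsilon_{\rm width}$. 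The $\mc{F}_{\sun,2}$ case ($a>a_0$, $0<m\omega\leq m^2\upomega_+-\beta_2\Lambda$) is the same computation with the margin now coming directly from $\beta_2$: writing $\Lambda-2am\omega\geq \Lambda-2am^2\upomega_++2a\beta_2\Lambda$ and using $\Lambda\geq m^2$ one isolates $\geq a_0\beta_2\Lambda$ after subtracting $\varepsilon_1(\omega^2+m^2)$ with $\varepsilon_1=a_0\beta_2/(4\varepsilon_{\rm width})$. In all three cases one must be slightly careful that $\omega^2$ is controlled: in $\mc{F}_{\measuredangle}$ by $\varepsilon_{\rm width}\Lambda$ directly, in $\mc{F}_{\sun}$ by $\varepsilon_{\rm width}^{-1}\Lambda$, which is why the $\varepsilon_1$ must carry a factor $\varepsilon_{\rm width}$ there.

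For statement \ref{it:angular-dominated-frequency-properties-Cs-Ds}, I would argue exactly as indicated in the excerpt for the comparable range: by \ref{it:angular-dominated-frequency-properties-Lambda-nondegenerate} we have $\Lambda-2am\omega\geq b\Lambda$ and $\Lambda\to\infty$ with $\omega^2/\Lambda$, $m^2/\Lambda$, $m\omega/\Lambda$ all bounded, so in the explicit formulas \eqref{eq:TS-radial-constants}, \eqref{eq:Ds-infinity}, \eqref{eq:Ds-horizon} for $\mathfrak{C}_s$, $\mathfrak{D}_s^{\mc I}$, $\mathfrak{D}_s^{\mc H}$ (for $|s|\leq 2$) the leading term of each is $(\Lambda-2am\omega+|s|)^{2|s|}$ — more precisely $(\Lambda-2am\omega+1)^2$ for $|s|=1$ and $(\Lambda-2am\omega+2)^2(\Lambda-2am\omega+4)^2$-type expressions for $|s|=2$ — and every other contribution is lower order in $\Lambda$, hence each ratio tends to $1$ as $\Lambda\to\infty$; choosing $\omega_{\rm high}$ (and in the $\mc{F}_{\sun}$ cases also depending on $\varepsilon_{\rm width}$, $a_0$, $\beta_2$) large enough forces all three ratios into $[1/3,5/3]$. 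The routine but slightly delicate bookkeeping is tracking which cross-terms in $\mathfrak{D}_2^{\mc H}$ could a priori be comparable to the leading term when $m^2\sim\Lambda$ and $a\sim M$; I would handle these by the same elementary estimates used in the proof of Lemma~\ref{lemma:properties-frequencies-low-omega}\ref{it:properties-frequencies-low-omega-TS-constants}, exploiting $\Lambda-2am\omega+|s|\geq b\Lambda$ in the denominators. The main obstacle I anticipate is purely combinatorial rather than conceptual: organizing the sign and size analysis of $2am\omega$ across the four regimes (two signs of $am$, two ranges of $a$) cleanly enough that the single hypothesis $a_0\beta_1<\sqrt{M^2-a_0^2}/(6M)$ suffices and that the various constants $\varepsilon_1,\varepsilon_2,\omega_{\rm high},\varepsilon_{\rm width}$ can be chosen in a consistent order; everything else reduces to the already-established Lemmas~\ref{lemma:angular-eigenvalues-properties}, \ref{lemma:derivative-V0-r+} and the explicit constant formulas of Proposition~\ref{prop:TS-angular-radial-constant} and Lemma~\ref{lemma:uppsi-general-asymptotics}.
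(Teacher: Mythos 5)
Your overall architecture (split into the three statements, mirror the argument for Lemma~\ref{lemma:comparable-frequency-properties} in part (i), compute case by case in part (ii), invoke the explicit formulas for $\mathfrak{C}_s$, $\mathfrak{D}_s^{\mc I}$, $\mathfrak{D}_s^{\mc H}$ and let $\Lambda\to\infty$ in part (iii)) matches the paper, and your treatment of parts (ii) and (iii) is essentially the argument actually given. However there is a genuine gap in your treatment of part~\ref{it:angular-dominated-frequency-properties-Lambda-m2} that then propagates into part~\ref{it:angular-dominated-frequency-properties-Lambda-nondegenerate} for $\mc{F}_{\sun,2}$.

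You propose to establish $\Lambda\geq m^2$ in $\mc{F}_{\measuredangle}$ and $\mc{F}_{\sun,2}$ by invoking only the linear admissibility bound $\Lambda\geq l(l+1)-s^2-2|s||a\omega|\geq(1-\epsilon)m^2$ with $\epsilon=\min\{1/3,(M-a_0)/M\}$. That argument is the one used for the weaker claim $\Lambda\geq\max\{2/3,a_0^2/M^2\}m^2$ in Lemma~\ref{lemma:comparable-frequency-properties}\ref{it:comparable-frequency-properties-Lambda-m2} (which is also all that is asserted in $\mc{F}_{\sun,1}$), but it cannot yield the \emph{sharp} bound $\Lambda\geq m^2$ that the lemma claims for $\mc{F}_{\sun,2}$. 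The paper proves the sharp bound in the critical subrange $\mc{F}_{\sun,2}\cap\{\varepsilon_{\rm width}\omega^2<m^2<\varepsilon_{\rm width}^{-1}\omega^2\}$ by a different mechanism: writing $q=a\omega/m=a\upomega_+-a\beta_2\Lambda/m^2$, checking $a_0\sqrt{\varepsilon_{\rm width}}<q\leq a^2/(2Mr_+)\leq 1/2$, and then invoking the \emph{quadratic} admissibility condition \ref{it:admissible-freqs-triple-Lambda-q-bound} of Definition~\ref{def:admissible-freqs} (i.e.\ statement~4 of Lemma~\ref{lemma:angular-eigenvalues-properties}), which gives $\Lambda-2am\omega\geq(q-1)^2m^2$, hence $\Lambda\geq(q^2+1)m^2\geq m^2$. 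This is not a cosmetic difference: in your own outline for part~\ref{it:angular-dominated-frequency-properties-Lambda-nondegenerate} in $\mc{F}_{\sun,2}$ you replace $\frac{a^2m^2}{Mr_+\Lambda}$ by $\frac{a^2}{Mr_+}$, which requires $\Lambda\geq m^2$ \emph{exactly}; with only $\Lambda\geq(1-\epsilon)m^2$ the quantity $1-\frac{a^2m^2}{Mr_+\Lambda}$ can become negative when $a$ is near $M$, and the claimed margin $a_0\beta_2\Lambda$ is lost. (A smaller version of the same issue occurs in $\mc{F}_{\measuredangle}$: there the paper does use the linear bound, but it exploits the extra $+|m|$ in $l(l+1)\geq m^2+|m|$ to absorb the $-s^2-2|s||a\omega|$ error and conclude $\Lambda\geq m^2$ cleanly, rather than just $(1-\epsilon)m^2$.) You should therefore add the $q$-argument via the quadratic admissibility bound for $\mc{F}_{\sun,2}$, and sharpen the $\mc{F}_{\measuredangle}$ case as above.

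One further minor point of accuracy: you describe $a_0\beta_1<\sqrt{M^2-a_0^2}/(6M)$ as ``exactly the hypothesis of Lemma~\ref{lemma:derivative-V0-r+}'', but that lemma actually requires $a_0\beta_1<(M^2-a_0^2)/(6M^2)$, which is the stronger of the two conditions and is what the computation for $\mc{F}_{\sun,1}$ in part~(ii) genuinely uses.
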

\begin{proof}
We start with \ref{it:angular-dominated-frequency-properties-Lambda-m2}. For $\mc{F}_{\text{\normalfont sun}}$, the bound $\Lambda\geq \max\lp\{\frac23,\frac{a_0^2}{M^2}\rp\}$ is shown already in the the proof of Lemma~\ref{lemma:comparable-frequency-properties}\ref{it:comparable-frequency-properties-Lambda-m2}, as well as the improvement that $\Lambda\geq m^2$ if $m^2\leq \varepsilon_{\rm width}\omega^2$ or if $m^2\geq \varepsilon_{\rm high}^{-1}\omega^2$. In the subrange $\mc{F}_{\text{\normalfont sun,2}}\cap \{\varepsilon_{\rm width}\omega^2<m^2<\varepsilon_{\rm high}^{-1}\omega^2\}$, since
\begin{align*}
a_0\sqrt{\varepsilon_{\rm width}}<q:=\frac{a\omega}{m}=a\upomega_+-\frac{a\beta_2\Lambda}{m^2}\leq \frac{a^2}{2Mr_+}\leq \frac{1}{2}\,,
\end{align*}
 by item \ref{it:admissible-freqs-triple-Lambda-q-bound} of the admissibility conditions on $\Lambda$ laid out in Definition~\ref{def:admissible-freqs}, we conclude that
\begin{align*}
\Lambda\geq 2am\omega+m^2(q-1)^2=m^2(q^2+1) \geq m^2\,.
\end{align*}

We now prove \ref{it:angular-dominated-frequency-properties-Lambda-m2} for $\mc{F}_{\measuredangle}$, i.e.\ when $\Lambda\geq \max\{\varepsilon_{\rm width}^{-1}\omega^2, \varepsilon_{\rm width}^{-1}\omega_{\rm high}^2\}$. Clearly, the conclusion follows if $m^2\leq \varepsilon_{\rm width}^{-1}\omega^2_{\rm high}$ or if $m^2\leq \varepsilon_{\rm width}^{-1}\omega^2$. The remaining cases can be delt with using condition \ref{it:admissible-freqs-triple-Lambda-lower-bound} from Definition~\ref{def:admissible-freqs}: as long as $\varepsilon_{\rm width}^{-1}$ and $\omega_{\rm high}$ sufficiently large, if $m^2>\varepsilon_{\rm width}^{-1}\omega^2$ and $|\omega|\geq \omega_{\rm high}$,
\begin{align*}
\Lambda-m^2\geq \max\{|m|,|s|\}-s^2-2|s||a\omega|\geq (\varepsilon_{\rm width}^{-1/2}-2|s|M)|\omega|-s^2\geq 0\,;
\end{align*}
if $m^2>\varepsilon_{\rm width}^{-1}\omega_{\rm high}^2$ and $|\omega|< \omega_{\rm high}$,
\begin{align*}
\Lambda-m^2\geq \max\{|m|,|s|\}-s^2-2|s||a\omega|\geq (\varepsilon_{\rm width}^{-1/2}-2|s|M)\omega_{\rm high}-s^2\geq 0\,.
\end{align*}  

For \ref{it:angular-dominated-frequency-properties-Lambda-nondegenerate}, first consider $\mc{F}_\measuredangle$, in which
\begin{align*}
\Lambda-2am\omega -\varepsilon_1m^2-\varepsilon_2\Lambda_{\rm high} &\geq \Lambda\lp(1-2M\varepsilon_{\rm width}^{1/2}\Lambda-\varepsilon_1-\varepsilon_2\rp)\Lambda\geq \frac12 \Lambda\,,
\end{align*}
for $\varepsilon_1,\varepsilon_2\in(0,1/4]$ and $\varepsilon_{\rm width}\leq 1$. In $\mc{F}_{\text{\normalfont \sun,1}}$,
\begin{align*}
\Lambda-2am\omega -\varepsilon_1(\omega^2+m^2)&\geq \lp(1-\frac{a_0^2}{Mr_+}\frac{|m|}{\sqrt{\Lambda}}-2a_0\beta_1-\varepsilon_1\lp(\varepsilon_{\rm width}^{-1}+\frac32\rp)\rp)\Lambda \\
&\geq \lp(1-\frac{M}{r_+}-\frac13\frac{M^2-a_0^2}{M^2}-\varepsilon_1(\varepsilon_{\rm width}^{-1}+2)\rp)\Lambda \geq \frac{M^2-a_0^2}{12M^2}\Lambda\,,
\end{align*}
where we have used $a_0\beta_1<(M^2-a_0^2)/(6M^2)$, as long as $\varepsilon_1\leq \frac{M^2-a_0^2}{36M^2}\varepsilon_{\rm width}$. Finally, in $\mc{F}_{\text{\normalfont \sun,2}}$, if $a_0>0$
\begin{align*}
\Lambda-2am\omega-\varepsilon_1(\omega^2+m^2)\geq \Lambda\lp[1-\frac{a^2}{Mr_+}+2a\beta_2-\varepsilon_1(\varepsilon_{\rm width}^{-1}+1)\rp]\geq a_0\beta_2\Lambda\,,
\end{align*}
as long as $\varepsilon_1\leq a_0\beta_2\varepsilon_{\rm width}/4$.

Finally, we show \ref{it:angular-dominated-frequency-properties-Cs-Ds} regarding the properties of the constants $\mathfrak{C}_s$, $\mathfrak{D}_s^\mc{I}$ and $\mathfrak{D}_s^\mc{H}$, which depend only on $|s|$, $a$ and the frequency triple $(\omega,m,\Lambda)$. We begin with $|s|=1$:
\begin{align*}
\lp|\frac{\mathfrak{C}_1}{\mathfrak{D}_1^\mc{I}}-1\rp| &=\lp|\frac{4a\omega(m-a\omega)}{(\Lambda-2am\omega+1)^2}\rp|\leq 4\lp(M^2\frac{\omega^2}{\Lambda}+\sqrt{\frac32}M\frac{|\omega|}{\sqrt{\Lambda}}\rp)\frac{\Lambda}{(\Lambda-2am\omega+1)^2},,\\
\lp|\frac{\mathfrak{D}_1^{\mc{H}}}{\mathfrak{D}_1^\mc{I}}-1\rp| &=\lp|\frac{a^2m^2/M^2}{(\Lambda-2am\omega+1)^2}\rp|\leq \frac 32 \frac{\Lambda}{(\Lambda-2am\omega+1)^2}\,,\\
\lp|\frac{\mathfrak{C}_1}{\mathfrak{D}_1^\mc{H}}-1\rp| &=\lp|\frac{4a\omega(m-a\omega)-a^2m^2/M^2}{(\Lambda-2am\omega+1)^2+a^2m^2/M^2}\rp|\leq \lp(\frac32 +4M^2\frac{\omega^2}{\Lambda}+4\sqrt{\frac32}M\frac{|\omega|}{\sqrt{\Lambda}}\rp) \frac{\Lambda}{(\Lambda-2am\omega+1)^2}\,,
\end{align*}
and the results follow from using the relation between $\omega$ and $\Lambda$ in each of the frequency ranges,  \ref{it:angular-dominated-frequency-properties-Lambda-nondegenerate} and the largeness of $\omega_{\rm high}$ compared to any other parameters involved in the definition of the ranges. A similar argument can be used to obtain the result in the case $|s|=2$.
\end{proof}

By Lemmas~\ref{lemma:critical-points-V0} and \ref{lemma:derivative-V0-r+},  we conclude that $\mc{V}_0$ has a unique critical point, located at $r^0_{\rm max}$, which is a maximum. With the uniform bounds for $\sigma:=am\omega/(\Lambda+s^2)$  in these regimes,
\begin{align*}
|\sigma|&\leq M \frac{|\omega|}{\sqrt{\Lambda}}\leq M\epsilon_{\rm width}^{1/2}\leq 1 \text{~ in ~} \mc{F}_{\measuredangle}\,, \\
\sigma&\leq \frac{a(m^2\upomega_++\beta_2\Lambda)}{\Lambda}\leq \frac{a^2}{2Mr_+}\frac{m^2}{\Lambda}+a\beta_1\leq \frac34 +\frac16\leq 1\text{~ in ~} \mc{F}_{\sun,1}\,,\\
\sigma&\leq \frac{a(m^2\upomega_++\beta_2\Lambda)}{\Lambda}\leq \frac{a^2}{2Mr_+}\frac{m^2}{\Lambda}\leq \frac12 \text{~ in ~} \mc{F}_{\sun,2}\,,
\end{align*}
we conclude, by statement 3 in Lemma~\ref{lemma:critical-points-V0} and the previous bounds, that $r_{\rm max}^0\leq B$ independently of the frequency parameters. Moreover, the bounds on $\sigma$ provide the uniform bounds
\begin{align}
\lp|\frac{d\mc{V}_0}{dr}\rp|+r\lp|\frac{d^2\mc{V}_0}{dr^2}\rp| \leq \frac{B\Lambda}{r^3}\,, \qquad \lp|\frac{d}{dr}\lp((r^2+a^2)^3\frac{d\mc{V}_0}{dr}\rp)\rp|\leq B\Lambda r^2\,, \label{eq:ang-dominated-V0-bounds}
\end{align}
which show $r^0_{\rm max}-r_+\geq b$ independently of the frequency parameters. 

Moreover, Lemma~\ref{lemma:V0-r0max} demonstrates the the maximum of $\mc{V}_0$ is quantitatively above the trapping energy $\omega^2$. Let $c$ (independent of the frequency parameters in $\mc{F}_{\measuredangle}$, depending on $\beta_1$ and $a_0$ in $\mc{F}_{\sun,1}$, and depending on $\beta_2$ and $a_0$ in $\mc{F}_{\sun,2}$) be the such that
\begin{align*}
\mc{V}_0(r^0_{\rm max})-\omega^2\geq c\Lambda\,
\end{align*}
We conclude from bounds~\eqref{eq:ang-dominated-V0-bounds} that there is some $\delta_1>0$ with the same dependences as $c$ such that 
\begin{align*}
\mc{V}_0(r)-\omega^2\geq \frac{c}{2}\Lambda\,,\quad \forall\, r\in[r_{\rm max}^0-\delta_1,r_{\rm max}^0+\delta_1]
\end{align*}
In the following lemma, which is a generalization of \cite[Lemma 8.3.1]{Dafermos2016b}, we show that a similar statement holds for the full potential $\mc{V}:=\mc{V}_0+\mc{V}_1$, as $\mc{V}_0$ dominates over $\mc{V}_1$, so that the latter cannot bring about a new critical point or drastically change the location of or the value of the maximum.

\begin{lemma}[Behavior of the potential in  $\mc{F}_{\sun}$ and $\mc{F}_{\measuredangle}$] \label{lemma:V-angular-dominated} 
Fix $s\in\mathbb{Z}$, $M>0$, $a_0\in[0,M)$. Let $\beta_1>0$ be small enough that Lemma~\ref{lemma:derivative-V0-r+} holds.  For all sufficiently small $\epsilon_{\rm width}$ depending on $\beta_1$ and $\beta_2$, for all sufficiently large $\omega_{\rm high}$ depending on $\epsilon_{\rm width}$,  for all  $(\omega,m,\Lambda)\in\mc{F}_{\mathrm{\measuredangle}}(\epsilon_{\rm width},\omega_{\rm high},a_0,\beta_1,\beta_2)\cup\mc{F}_{\text{\normalfont \sun}}(\epsilon_{\rm width},\omega_{\rm high},a_0,\beta_1,\beta_2)$  satisfying one of the following three
\begin{align*}
(i)&~0<m\omega<m^2\upomega_++\beta_1\Lambda\,, \,\, 0\leq a\leq a_0\,,\\
(ii)&~0<m\omega<m^2\upomega_+-\beta_2\Lambda\,, \,\, a_0<a\leq M\,,\\
(iii)&~m\omega\leq 0\,,
\end{align*}
the potential $\mc{V}$ has a unique critical point $r_{\rm max}$ that satisfies $|r_{\rm max}-r_{\rm max}^0|\leq B\Lambda^{-1}$ and
\begin{align*}
\mc{V}(r)-\omega^2 \geq b\Lambda \,, &\quad\forall\, r\in(r_{\rm max}-\tilde\delta, r_{\rm max}+\tilde\delta)\,,
\end{align*}
for some $\tilde{\delta}$ and $b$ which depend on $a_0$ and $\beta_1$ in case (i), depend on $a_0$ and $\beta_2$ in case (ii) and are independent of the frequency parameters in case (iii). Moreover, in cases (i) and (ii), there is a constant $b=b(a_0,\beta_1)$ in case (i) and $b=b(a_0,\beta_2)$ in case (ii) such that
\begin{align*}
-(r-r_{\rm max})\frac{d\mc{V}}{dr} \geq b\Lambda \frac{(r-r_{\rm max})^2}{r^4}\,, &\quad \forall\, r\in[r_+,\infty)\,;
\end{align*}
in case (iii), independently of the frequency parameters, we have
\begin{align*}
-(r-r_{\rm max})\frac{d\mc{V}}{dr} \geq b(r-M)\Lambda \frac{(r-r_{\rm max})^2}{r^4}\,, &\quad \forall\, r\in[r_+,\infty)\,.
\end{align*}

\end{lemma}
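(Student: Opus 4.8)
\textbf{Proof strategy for Lemma~\ref{lemma:V-angular-dominated}.}

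The plan is to bootstrap from the already-established facts about $\mc{V}_0$ to the full potential $\mc{V} = \mc{V}_0 + \mc{V}_1$, treating $\mc{V}_1$ as a perturbation. First I would record the relevant bounds on $\mc{V}_1$ from \eqref{eq:radial-ODE-Psi-potentials}: since $\mc{V}_1 = O(\Delta/r^4)$ and its derivatives are $O(w)$-type, one has $|\mc{V}_1| + r|\mc{V}_1'| + r^2|\mc{V}_1''| \leq B r^{-2}$ and, crucially, $|\tfrac{d}{dr}((r^2+a^2)^3 \mc{V}_1')| \leq B r^2$, all with $B$ depending only on $M$ and $s$. In the angular-dominated and superradiant regimes we have established (via Lemmas~\ref{lemma:critical-points-V0}, \ref{lemma:derivative-V0-r+}, \ref{lemma:V0-r0max} and the discussion preceding this lemma) that $\mc{V}_0$ has a unique critical point $r_{\rm max}^0 \in [r_+ + b, B]$ which is a maximum, with $\mc{V}_0(r_{\rm max}^0) - \omega^2 \geq c\Lambda$, together with the uniform bounds \eqref{eq:ang-dominated-V0-bounds} on the first three $r$-derivatives of $\mc{V}_0$, all carrying a factor $\Lambda$ that is large (since $\Lambda \geq \Lambda_{\rm high} \gg 1$ by Lemma~\ref{lemma:angular-dominated-frequency-properties}). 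The $\mc{V}_1$ contributions are $O(1)$ in $\Lambda$, hence negligible.

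The key steps, in order: (1) \emph{Uniqueness of the critical point.} Compute $\tfrac{d}{dr}((r^2+a^2)^3 \mc{V}') = \tfrac{d}{dr}((r^2+a^2)^3 \mc{V}_0') + \tfrac{d}{dr}((r^2+a^2)^3 \mc{V}_1')$; the first term is $\leq -b\Lambda r^2$ on $[r_+, r_1]$ and the analysis in Lemma~\ref{lemma:critical-points-V0} shows $(r^2+a^2)^3\mc{V}_0'$ has exactly one sign change, while the $\mc{V}_1$ piece is $O(r^2)$, so for $\Lambda$ large the sum still has exactly one sign change and $(r^2+a^2)^3\mc{V}'$ is strictly decreasing where $\mc{V}_0''<0$; combined with $\mc{V}'(r_+) = \mc{V}_0'(r_+) + \mc{V}_1'(r_+) \geq b\Lambda - B > 0$ (using Lemma~\ref{lemma:derivative-V0-r+}) and $\mc{V}' < 0$ for large $r$, Rolle's theorem gives a unique critical point $r_{\rm max}$, a maximum. (2) \emph{Location.} Since $\mc{V}_0'(r_{\rm max}) = -\mc{V}_1'(r_{\rm max}) = O(r_{\rm max}^{-3})$ and $\mc{V}_0$ has a nondegenerate maximum at $r_{\rm max}^0$ with $|\mc{V}_0''| \gtrsim \Lambda r^{-4}$ near there (from \eqref{eq:ang-dominated-V0-bounds} and a lower bound on $|\mc{V}_0''(r_{\rm max}^0)|$ obtainable as in \cite{Dafermos2016b}), a Taylor expansion gives $|r_{\rm max} - r_{\rm max}^0| \leq B\Lambda^{-1}$. (3) \emph{Value at the maximum.} From $\mc{V}_0(r_{\rm max}) - \omega^2 \geq \tfrac{c}{2}\Lambda$ on $[r_{\rm max}^0 - \delta_1, r_{\rm max}^0 + \delta_1]$ (established before the lemma) and $|\mc{V}_1| \leq B$, shrinking $\delta_1$ to a $\tilde\delta$ with the same dependences and taking $\Lambda$ large yields $\mc{V}(r) - \omega^2 \geq b\Lambda$ for $r \in (r_{\rm max} - \tilde\delta, r_{\rm max} + \tilde\delta)$. (4) \emph{The coercive derivative bound.} Write $-(r - r_{\rm max})\mc{V}' = -(r - r_{\rm max})\mc{V}_0' - (r - r_{\rm max})\mc{V}_1'$. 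For $\mc{V}_0$ one shows, as in \cite[Lemma 8.3.1]{Dafermos2016b}, that $-(r - r_{\rm max}^0)\mc{V}_0' \geq b\Lambda (r - r_{\rm max}^0)^2 r^{-4}$ in cases (i)--(ii) (and with an extra $(r-M)$ factor in case (iii), reflecting that $\mc{V}_0'(r_+) = 0$ when $m\omega \leq 0$, cf.\ Lemma~\ref{lemma:derivative-V0-r+}); then transfer to $r_{\rm max}$ using step (2) and absorb the $O((r-r_{\rm max})^2 r^{-5})$ error from $\mc{V}_1$ (it is lower order by $\Lambda^{-1}$), using also that $r_{\rm max}$ is bounded above to control the constants.

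The main obstacle I expect is step (4), specifically the sign-definite lower bound on $-(r - r_{\rm max})\mc{V}_0'$ \emph{globally} on $[r_+, \infty)$ rather than just near $r_{\rm max}$. Near the maximum this is just nondegeneracy; away from it, in case (iii) the subtlety is that $\mc{V}_0'$ vanishes to second order at $r_+$ (and one needs the extra $(r-M)$ weight), while in cases (i)--(ii) one must verify, using the explicit cubic \eqref{eq:dV0/dr} and the quantitative sign of $\mc{V}_0'(r_+)$ from Lemma~\ref{lemma:derivative-V0-r+}, that no spurious near-cancellation occurs for $r$ between $r_+$ and $r_{\rm max}$ or for large $r$. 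This is exactly the mechanism isolated in \cite{Dafermos2016b} for $s=0$, and the point is that the $s \neq 0$ corrections enter only through the constant $s^2$ inside $\Lambda + s^2$ and through $\mc{V}_1$, neither of which affects the structure; the uniform bounds on $\sigma = am\omega/(\Lambda + s^2)$ recorded before the lemma are what make all constants frequency-independent (with the stated dependence on $a_0, \beta_1, \beta_2$). Once the $\mc{V}_0$ statements are in hand, the perturbative passage to $\mc{V}$ is routine given that $\Lambda_{\rm high} \gg 1$.
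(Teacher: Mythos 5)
Your proposal is correct in substance and follows the same perturbative strategy as the paper: exploit that $\mc{V}_0$ carries a factor $\Lambda \gg 1$ and $\mc{V}_1$ does not, then transfer the critical-point structure of $\mc{V}_0$ to $\mc{V}$. The paper's organization is a bit cleaner, however: it first produces a single radius $r_1' \in (r_+, r_{\rm max}^0)$ such that $\frac{d\mc{V}_0}{dr} \geq \hat{c}\Lambda$ on $[r_+,r_1']$ and $\frac{d}{dr}\bigl((r^2+a^2)^3\frac{d\mc{V}_0}{dr}\bigr) \leq -\tilde{c}\Lambda r^2$ on $[r_1',\infty)$, a fixed two-region partition from which uniqueness of $r_{\rm max}$, its location, and the global coercivity bound all fall out directly by integrating the second-derivative bound from $r_{\rm max}$ itself. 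Your alternative — prove the coercive bound for $\mc{V}_0$ around $r_{\rm max}^0$ and then "transfer to $r_{\rm max}$" — requires an extra (and slightly delicate) step: in the interval between $r_{\rm max}^0$ and $r_{\rm max}$, which has length $O(\Lambda^{-1})$, the quantities $r-r_{\rm max}$ and $r-r_{\rm max}^0$ have opposite signs, so you cannot simply replace one by the other in the inequality; you would instead need a Taylor expansion of $\mc{V}'$ near $r_{\rm max}$ with a lower bound on $|\mc{V}''(r_{\rm max})|$, which is a genuine (if elementary) additional lemma. The paper sidesteps this by working with $\mc{V}'$ and $r_{\rm max}$ from the outset, integrating the perturbed second-derivative estimate from the zero of $\mc{V}'$ itself.

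One small slip: you wrote that $\frac{d}{dr}\bigl((r^2+a^2)^3\mc{V}_0'\bigr) \leq -b\Lambda r^2$ on $[r_+,r_1]$, but by Lemma~\ref{lemma:critical-points-V0} this quantity is \emph{positive} on $[r_+,r_1)$ when $r_1 > r_+$ exists; the bound $\leq -\tilde{c}\Lambda r^2$ is what you get on $[r_1',\infty)$ (for a slightly larger $r_1' > r_1$, which is exactly why the paper introduces $r_1'$ rather than using $r_1$ directly — the bound degenerates at the zero of the second derivative). The claim that "the sum still has exactly one sign change" under the $O(\Lambda^{-1})$ perturbation is fine, but it hinges on this sign structure being stated correctly. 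Also, strictly, $\mc{V}_1 \sim s^2 r^{-2}$ as $r\to\infty$ (not $r^{-3}$ as the paper's proof writes), but this does not matter: the only fact being used is that $\mc{V}_1$ and its relevant derivatives are $O(1)$ in $\Lambda$.
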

\begin{proof} We begin by refining our estimates on $\mc{V}_0$: we want to show that there exists an $r_1'\in(r_+,r_{\rm max}^0)$ such that
\begin{align}
\begin{split} \label{eq:ang-dominated-r1'}
\frac{d\mc{V}_0}{dr}(r) \geq \frac{\hat{c}}{2}\frac{(r_+^2+a^2)^3}{(r_1^2+a^2)^3} \Lambda\,, \quad \forall\, r\in[r_+,r_1']\,, \\
\frac{d}{dr}\lp((r^2+a^2)^3\frac{d\mc{V}_0}{dr}\rp)\leq -\tilde{c}\Lambda r^2\,, \quad \forall\, r\in[r_1',\infty)\,,
\end{split}
\end{align}
for some $\hat{c}>0$ possibly depending on $a_0,\beta_1,\beta_2$ as in the statement and $\tilde{c}$ independent of the frequency parameters. Recall that, in \cref{lemma:critical-points-V0}, it was seen that $\frac{d}{dr}\lp((r^2+a^2)^3\frac{d\mc{V}_0}{dr}\rp)$ is either non-positive on $[r_+,\infty)$ or there exists a unique point $r_1\in[r_+,r_{\rm max})$ at which it switches from positive to negative. We have two cases
\begin{enumerate}[label=(\alph*)]
\item \textit{$r_1$ exists.} In this case, $(r^2+a^2)^3\frac{d\mc{V}_0}{dr}$ is increasing up to $r_1$, hence
\begin{align*}
\frac{d\mc{V}_0}{dr}(r) \geq \frac{(r_+^2+a^2)^3}{(r_1^2+a^2)^3} \frac{d\mc{V}_0}{dr}(r_+) \geq \frac{\hat{c}}{2}\frac{(r_+^2+a^3)^3}{(r_1^2+a^2)^3} \Lambda\,, \quad \forall\, r\in[r_+,r_1]\,,
\end{align*}
where the final bound comes from \cref{lemma:derivative-V0-r+} so $\hat{c}>0$ inherits the dependence on other parameters from it. Thus, by the bound on $\frac{d}{dr}\lp((r^2+a^2)^3\frac{d\mc{V}_0}{dr}\rp)$ from before and the fact that this is negative in $(r_1,\infty)$ by definition of $r_1$, there must be some $r_1'\in(r_1,r_{\rm max}^0)$ such that \eqref{eq:ang-dominated-r1'} holds.
\item \textit{$r_1$ does not exist.} In this case, the second line of \eqref{eq:ang-dominated-r1'} holds for all $r$ by Lemma~\ref{lemma:critical-points-V0} and the bounds \eqref{eq:ang-dominated-V0-bounds}. Moreover, $(r^2+a^2)^3\frac{d\mc{V}_0}{dr}$ is decreasing until it attains the value 0 at $r^0_{\rm max}$, so there must be $r_1'$ such that the first line of \eqref{eq:ang-dominated-r1'} holds.
\end{enumerate}

Now note that the frequency-independent potential $\mc{V}_1$ satisfies
\begin{align*}
\lp|\mc{V}_1\rp| \leq Br^{-3}\,, \quad \lp|\frac{d\mc{V}_1}{dr}\rp|\leq Br^{-4}\,,\quad \lp|\frac{d}{dr}\lp((r^2+a^2)^3\frac{d\mc{V}_1}{dr}\rp)\rp|\leq Br\,,
\end{align*}
hence if $\omega_{\rm high}$ is sufficiently large, $\mc{V}_0$ dominates. Thus, the full potential cannot have any critical points in $[r_+,r_1']$ and will have a unique maximum, $r_{\rm max}$, in $[r_1',\infty)$ which satisfies $|r_{\rm max}-r_{\rm max}^0|\leq B\Lambda^{-1}$. 
\end{proof}

We briefly recall the strategy to prove Proposition~\ref{prop:angular-dominated}, which is best exemplified by Figure~\ref{fig:angular-dominated}. 

\begin{figure}[htbp]
\centering
\includegraphics[scale=1]{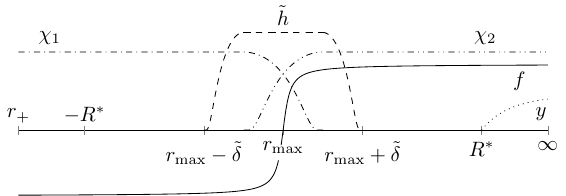}
\caption{Currents in the proof of Theorems~\ref{thm:ode-estimates-Psi-A} and \ref{thm:ode-estimates-Psi-B} for the frequency ranges $\mc{F}_{\sun}$ and $\mc{F}_{\measuredangle}$ (relative size not accurate).}
\label{fig:angular-dominated}
\end{figure}

We generate a positive bulk by means of an $f$ current that changes sign at the unique maximum of the potential and the degeneracy at this maximum is eliminated using an $h$ current that takes advantage of the positivity of $\mc{V}-\omega^2$ in a neighborhood of the maximum of $\mc{V}$ (see Lemma~\ref{lemma:V-angular-dominated}), c.f.\ \cite[Propositions 8.3.1 and 8.5.1]{Dafermos2016b} for $s=0$. We also introduce a $y$ current at large $r$, as well as another one localized near $r=r_+$, to improve the $r$-weights in the bulk terms in the estimate.

As superradiance can occur in the regimes analyzed in this section, we control the boundary terms by localized energy estimates which can be either of the Killing or Teukolsky--Starobinsky kind. The Killing currents can provide this control precisely because the conditions of applicability of  Lemma~\ref{lemma:basic-estimate-2} are met (statement (ii) of Lemma~\ref{lemma:angular-dominated-frequency-properties}), allowing us to easily bound the seemingly strong (in the frequency parameters) coupling errors introduced by these currents. The coupling errors associated with the virial currents $f$, $h$ and $y$ are also easily delt with by Lemma~\ref{lemma:basic-estimate-2}.

\begin{proof}[Proof of Propositions~\ref{prop:angular-dominated}]  
Suppose $(\omega,m,\Lambda)\in \mc{F}_{\text{\normalfont \sun}}(\varepsilon_{\rm width},\omega_{\rm high},a_0)\cup\mc{F}_{\measuredangle}(\varepsilon_{\rm width},\omega_{\rm high})$ is an admissible frequency triple with respect to $a\in[0,M]$ such that $a\in[0,a_0]$ if $(\omega,m,\Lambda)\in \mc{F}_{\text{\normalfont\sun,1}}$ and $a\in(a_0,M]$ if $(\omega,m,\Lambda)\in \mc{F}_{\text{\normalfont\sun,2}}$.

We begin by constructing explicit functions $f$ and $y$: take 
$$f(r)=f_0(r)=\arctan(r-r_{\rm max})\,,$$
as defined in \eqref{eq:standard-current-f0} and, for some $R^*>0$ and $\delta\in(0,1]$, $$y=y_\delta=\lp(1-\frac{(R^*)^{\delta}}{(r^*)^{\delta}}\rp)\mathbbm{1}_{[R^*,\infty)}\,,$$ 
as defined in \eqref{eq:standard-current-y-delta}; recall the properties of these functions as laid out in Section~\ref{sec:virial-current-templates}. Then, using $\mc{V}=\Lambda r^{-2} +O(r^{-3})$ as $r\to \infty$, we have
\begin{align*}
-\lp(y\mc{V}\rp)' = -y\mc{V}'\lp(1-\frac{\delta(R^*)^{\delta}}{(r^*)^{1+\delta}}\frac{\mc{V}}{-\mc{V}'}\rp)\mathbbm{1}_{[R^*,\infty)}\geq -y\mc{V}'\lp[1-\delta\lp(\frac{r}{2r^*}+o(1)\rp)\rp]\mathbbm{1}_{[R^*,\infty)}\geq -(1-\delta)y\mc{V}'\geq 0\,,
\end{align*}
as long as $\delta\in(0,1]$ and $R^*$ is sufficiently large; we fix $R^*$ in this way. We may also find it convenient to define
\begin{align*}
\hat{y}=\lp(1-\frac{(R^*)^{\delta}}{(-r^*)^{1/2}}\rp)\mathbbm{1}_{(-\infty,-R^*]}\,,
\text{~~satisfying~~}
\frac{w\hat{y}^2}{\hat{y}'}\leq B\,.
\end{align*}
When $m\omega<0$,  writing $\hat{\mc{V}}:=\mc{V}-\mc{V}(r_+)$ and noting, as $r\to r_+$,
\begin{align*}
\hat{\mc{V}}= \frac{2}{M}m\upomega_+(m\upomega_+-\omega)(r-r_+)+\Delta\frac{\Lambda-2am\omega}{4M^2r_+^2}+O\lp((r-r_+)^2\rp)\,,
\end{align*}
we deduce
\begin{align*}
-\lp(\hat y\hat{\mc{V}}\rp)' &= -\hat y\hat{\mc{V}}'\lp(1-\frac{(R^*)^{1/2}}{2(-r^*)^{3/2}}\frac{\hat{\mc{V}}}{\hat{\mc{V}}'}\rp)\mathbbm{1}_{(-\infty, -R^*]}\\
&\geq -\hat y\hat{\mc{V}}'\lp[1-\frac12\lp(1+o(1)\rp)\frac{(r-M)\hat{\mc{V}}}{2M^2\hat{\mc{V}}'}\rp]\mathbbm{1}_{[R^*,\infty)}\geq -\frac13 \hat y\hat{\mc{V}}'\geq 0\,.
\end{align*}

Recall, by Lemma~\ref{lemma:V-angular-dominated},
\begin{align*}
 -f\mc{V}'\geq b\frac{(r-M)\Delta(r-r_{\rm max})^2}{r^8}\Lambda \text{~~if~}m\omega\leq 0\,; \qquad -f\mc{V}'\geq b(a_0,\beta_1,\beta_2)\frac{\Delta(r-r_{\rm max})^2}{r^8}\Lambda \text{~~if~}m\omega> 0\,,
\end{align*}
whereas $|f'''|\leq B(r-M)^2w r^{-4}$. For $m\omega\leq 0$, it must be the case that $(\omega,m,\Lambda)\in\mc{F}_{\measuredangle}$, so by taking $\omega_{\rm high}$ and $\varepsilon_{\rm width}^{-1}$ sufficiently large, the error due to $f'''$ can be absorbed away from $r_{\rm max}$. If $m\omega>0$, then as long as $\epsilon_{\rm width}\omega_{\rm high}^2$ is sufficiently large depending on $a_0$ and $\beta_1$ or $\beta_2$, the same conclusion holds. Thus, for $\tilde\delta$ as defined in Lemma~\ref{lemma:V-angular-dominated}, we have
\begin{align*}
 -f\mc{V}'-\frac12f''' \geq -\frac34 f\mc{V}'-B\mathbbm{1}_{\{|r-r_{\rm max}|\leq \tilde{\delta}/2\}}\,,
\end{align*}
if $\omega_{\rm high}$ is sufficiently large possibly depending on the other parameters. By application of these currents, we obtain
\begin{align*}
&\int_{-\infty}^\infty \lp\{(2f'+y')|\Psi'|^2+\lp[-\frac34 f\mc{V}'-B\mathbbm{1}_{\{|r-r_{\rm max}|\leq \tilde{\delta}/2\}} -(1-\delta)y\mc{V}'+y'\omega^2\rp]|\Psi|^2\rp\}dr^* \\
&\quad\leq |s|\sum_{k=0}^{|s|-1}\int_{-\infty}^\infty\lp\{2aw(y+f)\Re\lp[\overline{\Psi}'\lp(c_{s,k,k}^{\rm id}+imc_{s,k,k}^{\rm \Phi}\rp)\uppsi_{(k)}\rp]+awf'\Re\lp[\overline{\Psi}\lp(c_{s,k,k}^{\rm id}+imc_{s,k,k}^{\rm \Phi}\rp)\uppsi_{(k)}\rp]\rp\}dr^*\\
&\quad\qquad+\int_{-\infty}^{\infty} \lp\{2(y+f)\Re\lp[\mathfrak{G}\overline{\Psi}'\rp]+f'\Re\lp[\mathfrak{G}\overline{\Psi}\rp]\rp\}dr^*\numberthis \label{eq:angular-dominated-intermediate-1}\\
&\quad\qquad+4\omega^2\lp[\lp|\swei{A}{s}_{\mc{I}^+}\rp|^2-\lp|\swei{A}{s}_{\mc{I}^-}\rp|^2\rp]+4(\omega-m\upomega_+)^2\lp[\lp|\swei{A}{s}_{\mc{H}^+}\rp|^2-\lp|\swei{A}{s}_{\mc{H}^-}\rp|^2\rp]\,. 
\end{align*}

It is easy to see that, on the left hand side of \eqref{eq:angular-dominated-intermediate-1}, the weight on $\Lambda|\Psi|^2$ vanishes as one approaches the maximum of this potential (moreover, as we were not careful to build $f$ so that $f'''<0$ near $r=r_{\rm max}$, a frequency independent error located there has appeared). Such a degeneration is expected for trapped frequencies; however in the cases of $\mc{F}_{\text{\normalfont \sun}}$ and $\mc{F}_{\measuredangle}\backslash\mc{F}_{\measuredangle,3}$, we can eliminate it by considering an additional $h$ current. With $\tilde\delta$ defined in Lemma~\ref{lemma:V-angular-dominated}, we let $h=A\tilde{h}$ for some $A>0$  and $0\leq \tilde{h}\leq 1$ is a smooth function supported for $r\in[r_{\rm max}-\delta,r_{\rm max}+\delta]$ such that $\tilde{h}=1$ in $r\in[r_{\rm max}-\tilde\delta/2,r_{\rm max}+\tilde\delta/2]$. For the two square brackets  in the expression
\begin{align*}
-f\mc{V}'-\frac12f''' +h(\mc{V}-\omega^2)-\frac12h'' &\geq -\frac12 f\mc{V}'+ \frac12 h(\mc{V}-\omega^2)\\
&\qquad+\lp[-\frac14f\mc{V}'-\frac12 A\tilde{h}''\rp]+\lp[\frac12 A(\mc{V}-\omega^2)-B\rp]\mathbbm{1}_{\{|r-r_{\rm max}|\leq \tilde{\delta}/2\}}\,,
\end{align*}
to become positive, we need $A$ to satisfy
\begin{align*}
B\varepsilon_{\rm width}\omega_{\rm high}^{-2}\leq A\leq b\tilde{\delta}^2\varepsilon_{\rm width}^{-1}\omega_{\rm high}^2 \text{~~if~}m\omega\leq 0\,,\\
B(a_0,\beta_1,\beta_2)\varepsilon_{\rm width}^{-1}\omega_{\rm high}^{-2}\leq A\leq b(a_0,\beta_1,\beta_2)\tilde{\delta}^2(a_0,\beta_1,\beta_2)\varepsilon_{\rm width} \omega_{\rm high}^2 \text{~~if~}m\omega>0\,.
\end{align*}

On the other hand, we also need to correct the signs of the boundary terms in \eqref{eq:angular-dominated-intermediate-1}. Since the frequencies we consider may be superradiant, we will need to use localized energy currents; we choose Killing energy currents $E\chi_1Q^T$ and $E\chi_2Q^T$, where we let $\chi_1\leq 1$ be a smooth function which is 1 in $[r_+,r_{\rm max}-\tilde\delta/2]$ and 0 for $r\geq r_{\rm max}+\tilde\delta/2$ and $\chi_2\leq 1$ be a smooth function which is 1 in $[r_{\rm max}+\tilde\delta/2,\infty)$ and 0 for $r\in[r_+,r_{\rm max}-\tilde\delta/2]$. We generate localization errors
\begin{align*}
&\int_{-\infty}^\infty  [E\chi_1'(\omega-m\upomega_+)+E\chi_2'\omega]\Im\lp[\Psi'\overline{\Psi}\rp]dr^* \\
&\quad\leq BE\tilde\delta^{-1}\int_{-\infty}^\infty \lp[|\Psi'|^2+\lp(\omega^2+(\omega-m\upomega_+)^2\rp)|\Psi|^2\rp]\mathbbm{1}_{|r-r_{\rm max}|\leq \tilde\delta/2]}dr^*\\
&\quad\ll \int_{-\infty}^\infty \lp\{\frac12 (f'+y')|\Psi'|^2+\frac14\lp[- f\mc{V}' +y'\omega^2+h\lp(\mc{V}-\omega^2\rp)\rp]|\Psi|^2\rp\}dr^*\,,
\end{align*}
as long as $\omega_{\rm high}$ is large enough depending on $E$ (and, if $m\omega>0$, depending on $\varepsilon_{\rm width}$, $a_0$ and $\beta_1,\beta_2$) that $E\ll A\tilde{\delta}$. Similarly, if we add localized Teukolsky--Starobinsky-type energy currents, $E_W(\chi_1Q^{W,K}+\chi_2Q^{W,T})$, the errors produced are
\begin{align*}
&BE_W\tilde\delta^{-1}\sum_{k=0}^{|s|}\int_{-\infty}^\infty \lp[|\uppsi_{(k)}'|^2+\lp(\omega^2+(\omega-m\upomega_+)^2\rp)|\uppsi_{(k)}|^2\rp]\mathbbm{1}_{|r-r_{\rm max}|\leq \tilde\delta/2]}dr^*\\
&\quad\leq BE_W\tilde\delta^{-1}\int_{-\infty}^\infty \lp[|\Psi'|^2+\lp(\omega^2+(\omega-m\upomega_+)^2+1\rp)|\Psi|^2\rp]\mathbbm{1}_{|r-r_{\rm max}|\leq \tilde\delta/2]}dr^* \\
&\quad\qquad+BE_W\tilde{\delta}^{-1}|s|\sum_{k=0}^{|s|-1}\int_{-\infty}^\infty (\omega^2+m^2)|\uppsi_{(k)}|^2\mathbbm{1}_{|r-r_{\rm max}|\leq \tilde\delta/2]}dr^*\,,
\end{align*}
where the second line can be treated as for the Killing energy currents.

Putting together the $f$, $h$, $y$ and the  Killing energy currents, \eqref{eq:angular-dominated-intermediate-1} becomes
\begin{align*}
&\int_{-\infty}^\infty \lp\{\frac12(f'+y')|\Psi'|^2+\frac14\lp[-f\mc{V}' + h(\mc{V}-\omega^2)+y'\omega^2+\hat{y}'(\omega-m\upomega_+)\mathbbm{1}_{m\omega\leq 0}\rp]|\Psi|^2\rp\}dr^* \\
&\quad\leq \sum_{k=0}^{|s|-1}\int_{-\infty}^\infty aw\Re\lp[\lp(2(y+f+\hat{y}\mathbbm{1}_{m\omega\leq 0})\overline{\Psi}'+(h+f')\overline{\Psi}\rp)\lp(c_{s,k,k}^{\rm id}+imc_{s,k,k}^{\rm \Phi}\rp)\uppsi_{(k)}\rp]dr^*\\
&\quad\qquad -\sum_{k=0}^{|s|-1}\int_{-\infty}^\infty Eaw\lp[\omega\chi_1+(\omega-m\upomega_+)\chi_2\rp]\Im\lp[\overline{\Psi}\lp(c_{s,k,k}^{\rm id}+imc_{s,k,k}^{\rm \Phi}\rp)\uppsi_{(k)}\rp]dr^* \numberthis \label{eq:angular-dominated-intermediate-2}\\
&\quad\qquad+\int_{-\infty}^{\infty} \lp\{2(y+f+\hat{y}\mathbbm{1}_{m\omega\leq 0})\Re\lp[\mathfrak{G}\overline{\Psi}'\rp]+(h+f')\Re\lp[\mathfrak{G}\overline{\Psi}\rp]-E\lp[\chi_1(\omega-m\upomega_+)+\chi_2\omega\rp]\Im\lp[\mathfrak{G}\overline{\Psi}\rp]\rp\}dr^*\\
&\quad\qquad+\omega^2\lp[(4-E)\lp|\swei{A}{s}_{\mc{I}^+}\rp|^2+(4+E)\lp|\swei{A}{s}_{\mc{I}^-}\rp|^2\rp)+(\omega-m\upomega_+)^2\lp[(4-E)\lp|\swei{A}{s}_{\mc{H}^+}\rp|^2+(4+E)\lp|\swei{A}{s}_{\mc{H}^-}\rp|^2\rp]\,.
\end{align*}
We may also add the Teukolsky--Starobinsky-type energy currents with any $E_W$, at the cost of adding, to \eqref{eq:angular-dominated-intermediate-2},
\begin{align*}
&E_W\omega^2\lp\{\begin{array}{lr}
1&\text{if~} s\geq 0\\
\frac{\mathfrak{C}_s}{\mathfrak{D}_s^{\mc{I}}}&\text{if~} s<0
\end{array}\rp\}\lp|\swei{A}{s}_{\mc{I}^-}\rp|^2+(\omega-m\upomega_+)^2E_W\lp\{\begin{array}{lr}
1&\text{if~} s< 0\\
\frac{\mathfrak{C}_s}{\mathfrak{D}_s^{\mc{H}}}&\text{if~} s\geq 0
\end{array}\rp\}\lp|\swei{A}{s}_{\mc{H}^-}\rp|^2\\
&\qquad +BE_W\tilde{\delta}^{-1}|s|\sum_{k=0}^{|s|-1}\int_{-\infty}^\infty (\omega^2+m^2)|\uppsi_{(k)}|^2\mathbbm{1}_{|r-r_{\rm max}|\leq \tilde\delta/2]}dr^* +\int_{-\infty}^\infty E_W\lp[\chi_1\lp(Q^{W,K}\rp)'+\chi_2\lp(Q^{W,T}\rp)'\rp]dr^*
\end{align*}
Then, choosing either $E\geq 5$ or $E_W\geq 15$, we have the appropriate sign on the boundary terms (see Lemma~\ref{lemma:angular-dominated-frequency-properties}\ref{it:angular-dominated-frequency-properties-Cs-Ds}).

We now focus on the terms that appear for $|s|\neq 0$. By a simple application of Cauchy-Schwarz,
\begin{align*}
& \int_{-\infty}^\infty\lp\{2(y+f+\hat{y})aw\Re\lp[\overline{\Psi}'\lp(c_{s,k,k}^{\rm id}+imc_{s,k,k}^{\rm \Phi}\rp)\uppsi_{(k)}\rp]+(h+f')aw\Re\lp[\overline{\Psi}\lp(c_{s,k,k}^{\rm id}+imc_{s,k,k}^{\rm \Phi}\rp)\uppsi_{(k)}\rp]\rp\}dr^* \\
&\quad\qquad -\int_{-\infty}^\infty Eaw\lp[\omega\chi_1+(\omega-m\upomega_+)\chi_2\rp]\Im\lp[\overline{\Psi}\lp(c_{s,k,k}^{\rm id}+imc_{s,k,k}^{\rm \Phi}\rp)\uppsi_{(k)}\rp]dr^*\\
&\quad\qquad+BE_W\tilde{\delta}^{-1}|s|\sum_{k=0}^{|s|-1}\int_{-\infty}^\infty (m^2+\omega^2)|\uppsi_{(k)}|^2\mathbbm{1}_{|r-r_{\rm max}|\leq \tilde\delta/2]}dr^*\\
&\quad\leq \int_{-\infty}^\infty  \lp\{\frac{1}{4|s|}(y'+f')|\Psi'|^2+\frac{1}{8|s|}w\lp[\frac{1+\varepsilon m^2}{r}+\varepsilon\omega^2\rp]|\Psi|^2\rp\}dr^*+\int_{-\infty}^\infty BE_W\tilde{\delta}^{-1}w(m^2+\omega^2)|\uppsi_{(k)}|^2dr^*\\
&\quad\qquad+ \int_{-\infty}^\infty 4|s|a^2B\lp(\frac{wy^2}{y'}+\frac{w\hat y^2}{\hat y'}+\frac{wf^2}{f'}+f'^2r+h^2r+E^2\varepsilon^{-1}(1+\chi_2^2r)\rp)w(1+m^2)\lp|\uppsi_{(k)}\rp|^2dr^* \\
&\quad\leq \int_{-\infty}^\infty  \frac{1}{4|s|}(y'+f')|\Psi'|^2+\frac{1}{8|s|}\lp[\varepsilon w \omega^2+\lp(\frac32\varepsilon+\frac{1}{\Lambda}\rp)\frac{\Delta \Lambda}{r^5}\rp]|\Psi|^2dr^*\\
&\quad\qquad+\int_{-\infty}^\infty |s|Bw\lp[\varepsilon^{-1}(1+E^2)+E_W\tilde{\delta}^{-1}(1+\omega^2/\Lambda)\rp]\Lambda\lp|\uppsi_{(k)}\rp|^2dr^*\,,
\end{align*}
where we have used the fact that $wy^2/y'$, $w\hat y^2/\hat y'$, $wf^2/f'$, $h^2r$, $f'^2r$ and $\chi_2^2 r$ are bounded in the integration range. Clearly, as long as $\Lambda$ is sufficiently large (in $\mc{F}_{\sun}$, this means requiring $\omega_{\rm high}$ sufficiently large depending on $\varepsilon_{\rm width}$; in $\mc{F}_{\measuredangle}$ requiring $\omega_{\rm high}$ and $\varepsilon_{\rm high}^{-1}$ sufficiently large), if $\varepsilon$ is sufficiently small
\begin{align*}
&\int_{-\infty}^\infty\lp\{\frac{1}{4|s|}(y'+f')|\Psi'|^2+\frac{1}{8|s|}\lp[\varepsilon w \omega^2+\lp(\frac32\varepsilon+\frac{1}{\Lambda}\rp)\frac{\Delta \Lambda}{r^5}\rp]|\Psi|^2\rp\}dr^* \\
&\quad\leq \int_{-\infty}^\infty \lp\{\frac14(f'+y')|\Psi'|^2+\frac18\lp[-f\mc{V}' + h(\mc{V}-\omega^2)+y'\omega^2\rp]|\Psi|^2\rp\}dr^* \,;
\end{align*}
we fix $\varepsilon$ so that the inequality holds. 

The only terms left to estimate are integrals of $B(1+E^2+E_W\tilde{\delta}^{-1}\varepsilon_{\rm width}^{-1})w\Lambda\lp|\uppsi_{(k)}\rp|^2$ for $k<|s|$; to control then, we appeal to Lemma~\ref{lemma:angular-dominated-frequency-properties}\ref{it:angular-dominated-frequency-properties-Lambda-nondegenerate} concerning properties of the frequency triples and to Lemma~\ref{lemma:basic-estimate-2}. 

For $\mc{F}_{\sun}$, iterating \eqref{eq:basic-estimate-2-a} gives
\begin{align*}
\int_{-\infty}^\infty w\Lambda\lp|\uppsi_{(k)}\rp|^2dr^* 
&\leq B\int_{-\infty}^\infty \lp(w\mathbbm{1}_{(-\infty,R^*]}+\frac{y'}{\sqrt{R}}\rp)\omega_{\rm high}^{-1}\Lambda\lp|\Psi\rp|^2dr^* +\int_{-\infty}^\infty k\sum_{i=0}^{k}\lp|\Re\lp[\mathfrak{G}_{(i)}\uppsi_{(i)}\rp]\rp|dr^*\\
&\leq \frac{\omega_{\rm high}^{-1}}{2|s|}\int_{-\infty}^\infty \frac{1}{16} \lp(y'\omega^2- f\mc{V}'+h\lp(\mc{V}-\omega^2\rp)\rp)\lp|\Psi\rp|^2dr^* +\int_{-\infty}^\infty |s|\sum_{i=0}^{|s|-1}\lp|\Re\lp[\mathfrak{G}_{(i)}\uppsi_{(i)}\rp]\rp|dr^*\,, 
\end{align*}
as long as $\omega_{\rm high}$ is sufficiently large depending on $\varepsilon_{\rm width}^{-1}$, $\tilde{a}_0$ and, if $a\in(a_0,M]$, on $\beta_2$ (so that $\tilde{\varepsilon}_1$ is sufficiently small). Choosing $\omega_{\rm high}$ sufficiently high depending on $E$ and $E_W$, errors arising from bulk terms proportional to $B(1+E^2+E_W\tilde{\delta}^{-1})w\Lambda\lp|\uppsi_{(k)}\rp|^2$, for $k<|s|$, can clearly be absorbed into the left hand side of \eqref{eq:angular-dominated-intermediate-2}, concluding the proof in $\mc{F}_{\sun}$.

For $\mc{F}_{\measuredangle,1}$ and $\mc{F}_{\measuredangle,2}$, iterating \eqref{eq:basic-estimate-2-b} gives
\begin{align*}
\int_{-\infty}^\infty w\Lambda\lp|\uppsi_{(k)}\rp|^2dr^* &\leq \int_{-\infty}^\infty w\lp(\frac{\omega^2}{\omega_{\rm high}^2}+\frac{\varepsilon_{\rm width}^{-1}\omega_{\rm high}^{-2}\Lambda}{r}\rp)\lp|\Psi\rp|^2dr^* +\int_{-\infty}^\infty k\sum_{i=0}^{k}\lp|\Re\lp[\mathfrak{G}_{(i)}\uppsi_{(i)}\rp]\rp|dr^* 
\end{align*}
which, if $m\omega>0$, is controlled by
\begin{align*}
\frac{\omega_{\rm high}^{-1}}{2|s|}\int_{-\infty}^\infty \frac{1}{16} \lp(y'\omega^2- f\mc{V}'+h\lp(\mc{V}-\omega^2\rp)\rp)\lp|\Psi\rp|^2dr^* +\int_{-\infty}^\infty |s|\sum_{i=0}^{|s|-1}\lp|\Re\lp[\mathfrak{G}_{(i)}\uppsi_{(i)}\rp]\rp|dr^*\,, 
\end{align*}
and, if $m\omega\leq 0$, is certainly controlled by
\begin{align*}
\frac{\omega_{\rm high}^{-1}}{2|s|}\int_{-\infty}^\infty \frac{1}{16} \lp(y'\omega^2+\hat y'(\omega-m\upomega_+)^2- f\mc{V}'+h\lp(\mc{V}-\omega^2\rp)\rp)\lp|\Psi\rp|^2dr^* +\int_{-\infty}^\infty |s|\sum_{i=0}^{|s|-1}\lp|\Re\lp[\mathfrak{G}_{(i)}\uppsi_{(i)}\rp]\rp|dr^*\,, 
\end{align*}
Hence, if $\omega_{\rm high}$ is sufficiently large with respect to $E$ and with respect to $E_W\tilde{\delta}^{-1}$ (and $\varepsilon_{\rm high}^{-1}$ in $\mc{F}_{\sun}$), the errors due to $B(1+E^2+E_W\tilde\delta^{-1}(1+\omega^2/\Lambda))\Lambda|\uppsi_{(k)}|^2$  can be absorbed into the left hand side of \eqref{eq:angular-dominated-intermediate-2}.
\end{proof}

\subsubsection{Nonsuperradiant trapping regime}

This section concerns the frequency range $\mc{F}_{\rm comp,1}$. We will show

\begin{proposition}[Estimates in $\mc{F}_{\rm comp,1}$] \label{prop:comp1} Fix $s\in\{0,\pm 1, \pm 2\}$ and $M>0$.  Then, for all $\delta\in(0,1]$, for all $\beta_3$, for all $\varepsilon_{\rm width}>0$ sufficiently small depending on $\beta_3$, for all $E,E_W>0$ such that $\max\{E,E_W\}$ is sufficiently large depending on $\varepsilon_{\rm width}$, for all $\omega_{\rm high}$ sufficiently large depending on $\varepsilon_{\rm width}$ and $E$, and for all $(a,\omega,m,\Lambda) \in \mc{F}_{\rm comp,1}(\omega_{\rm high},\varepsilon_{\rm width},\beta_3,r_0')$ where $r_0'=r_0'(\varepsilon_{\rm width})$ is set by Lemma~\ref{lemma:V0-trapping}, there exists a value $r_{\rm trap}\in(r_+,\infty)$ satisfying $b(\varepsilon_{\rm width})\leq r_{\rm trap}-r_+\leq B(\varepsilon_{\rm width})$ and functions $y$, $\tilde{y}$ and $f$ satisfying the uniform bounds
$|y| +|\tilde{y}|+|f|\leq B(\varepsilon_{\rm width}) \,,$
such that, for all smooth $\Psi$ arising from a smooth solution to the radial ODE~\eqref{eq:radial-ODE-alpha} via \eqref{eq:def-psi0-separated} and \eqref{eq:transformed-transport-separated} and itself satisfying the radial ODE~\eqref{eq:radial-ODE-Psi}, if $\Psi$ has the general boundary conditions of Lemma~\ref{lemma:uppsi-general-asymptotics} or the outgoing boundary conditions of Definition \ref{def:outgoing-bdry-uppsi}, we have the estimates
\begin{align*}
&\omega^2\lp|\swei{A}{s}_{\mc{I}^+}\rp|^2+(\omega-m\upomega_+)^2\lp|\swei{A}{s}_{\mc{H}^+}\rp|^2 + b(\delta,\varepsilon_{\rm width})\int_{-\infty}^\infty \frac{\Delta}{r^{3+\delta}}\Big[|\Psi'|^2+\lp(1+\omega^2(1+r^{-1}r_{\rm trap})\rp)|\Psi|^2\Big]dr^*\\
&\quad\leq B(\varepsilon_{\rm width},E,E_W)\lp[\omega^2\lp|\swei{A}{s}_{\mc{I}^-}\rp|^2+(\omega-m\upomega_+)^2\lp|\swei{A}{s}_{\mc{H}^-}\rp|^2\rp]\\
&\quad\qquad+\int_{-\infty}^\infty \lp\{2(y+\hat{y}+f)\Re\lp[\mathfrak{G}\overline{\Psi}'\rp]+f\Re\lp[\mathfrak{G}\overline{\Psi}\rp]-E\omega\Im\lp[\mathfrak{G}\overline{\Psi}\rp] +E_W\lp(Q^{W,T}\rp)'\rp\}dr^*\\
&\quad\qquad+\int_{-\infty}^\infty |s|B(\varepsilon_{\rm width},E)\sum_{k=0}^{|s|-1}\lp|\Re\lp[\mathfrak{G}_{(k)}\uppsi_{(k)}\rp]\rp| dr^*\,,
\end{align*}
and
\begin{align*}
\lp|\frac{\mathfrak{D}_s^\mc{H}}{\mathfrak{D}_s^\mc{I}}\rp|+\lp|\frac{\mathfrak{D}_s^\mc{H}}{\mathfrak{D}_s^\mc{I}}\rp|^{-1}+\lp|\frac{\mathfrak{C}_s}{\mathfrak{D}_s^\mc{I}}\rp|+\lp|\frac{\mathfrak{C}_s}{\mathfrak{D}_s^\mc{I}}\rp|^{-1}+\lp|\frac{\mathfrak{C}_s}{\mathfrak{D}_s^\mc{H}}\rp|+\lp|\frac{\mathfrak{C}_s}{\mathfrak{D}_s^\mc{H}}\rp|^{-1}\leq B\,.
\end{align*}
\end{proposition}

We first discuss the specific properties of the frequency parameters in the trapped nonsuperradiant regime and its consequences for the behavior of the potential:

\begin{lemma}[Properties of the frequency parameters in $\mc{F}_{\rm comp,1}$] \label{lemma:trapping-frequency-properties} Fix $s\in\mathbb{Z}$ and $M>0$. Let $a\in[0,M]$ and $(\omega,m,\Lambda)\in \mc{F}_{\rm comp,1}(\varepsilon_{\rm width}, \omega_{\rm high},r_0',\beta_3)$. Then, for sufficiently small $\varepsilon_{\rm width}$ depending on $\beta_3$ and for sufficiently large $\omega_{\rm high}$ depending on $\varepsilon_{\rm width}$, fixing $r_0'$ by Lemma~\ref{lemma:V0-trapping}, we have
\begin{enumerate}[label=(\roman*)]
\item $\varepsilon_{\rm width}\leq \omega/(\omega-m\upomega_+) \leq \varepsilon_{\rm width}^{-1}$; \label{it:trapping-frequency-properties-K-frequency}
\item $\Lambda\geq \frac23 m^2$; \label{it:trapping-frequency-properties-Lambda-m2}
\item the conditions for Lemma~\ref{lemma:basic-estimate-3} are satisfied; in fact, there is a $\beta_0=\beta_0(\varepsilon_{\rm width})$ such that $\Lambda-2am\omega\geq \beta_0\Lambda$; hence, for $\varepsilon_1=\beta_0\varepsilon_{\rm width}/6$, we have
$$\Lambda-2am\omega-\varepsilon_1(m^2+\omega^2)\geq \frac12 \beta_0\Lambda\,,\quad \tilde{\varepsilon}_1=\frac{12}{\varepsilon_{\rm width}^2\omega_{\rm high}^2\beta_0^2}\ll 1\,,$$
as long as $\omega_{\rm high}$ is sufficiently large depending on $\varepsilon_{\rm width}$; \label{it:trapping-frequency-properties-Lambda-nondegenerate}
\item for $|s|=1,2$, we have, for $\omega_{\rm high}$ sufficiently large (depending on $\varepsilon_{\rm width}$, $a_0$ and $\beta_2$ in the case of $\mc{F}_{\sun}$), \label{it:trapping-frequency-properties-Cs-Ds}
\begin{align*}
\frac{\mathfrak{C}_s}{\mathfrak{D}_s^{\mc{I}}}\,,\,\, \frac{\mathfrak{C}_s}{\mathfrak{D}_s^{\mc{H}}}\,, \,\, \frac{\mathfrak{D}_s^{H}}{\mathfrak{D}_s^{\mc{I}}} \in\lp[\frac13,\frac53\rp]
\end{align*}
\end{enumerate}
\end{lemma}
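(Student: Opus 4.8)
\textbf{Plan of proof for Lemma~\ref{lemma:trapping-frequency-properties}.} The lemma is a collection of arithmetic consequences of the defining conditions of the frequency regime $\mc{F}_{\rm comp,1}$, namely $\varepsilon_{\rm width}\omega^2\leq \Lambda\leq \varepsilon_{\rm width}^{-1}\omega^2$, $|\omega|\geq \omega_{\rm high}$, $m\omega\notin(0,m^2\upomega_++\beta_3\Lambda)$, plus the admissibility constraints of Definition~\ref{def:admissible-freqs}. The plan is to establish the four statements essentially in the order given, reusing wholesale the arguments already employed for the analogous statements in Lemmas~\ref{lemma:angular-dominated-frequency-properties} and \ref{lemma:comparable-frequency-properties}.

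\textbf{Statement \ref{it:trapping-frequency-properties-K-frequency}.} This is exactly the dichotomy argument of Lemma~\ref{lemma:comparable-frequency-properties}\ref{it:comparable-frequency-properties-no-superrad}: either $m\omega\leq 0$, in which case $\omega/(\omega-m\upomega_+)=|\omega|/(|\omega|+|m|\upomega_+)$ and one bounds $|m|$ using $m^2\lesssim \Lambda\leq \varepsilon_{\rm width}^{-1}\omega^2$ to get the lower bound $\gtrsim \varepsilon_{\rm width}^{1/2}/M$, which is $\geq\varepsilon_{\rm width}$ for small $\varepsilon_{\rm width}$; or $m(\omega-m\upomega_+)\geq \beta_3\Lambda$, in which case $\omega/(\omega-m\upomega_+)=m\omega/(m(\omega-m\upomega_+))\leq |m\omega|/(\beta_3\Lambda)\lesssim \varepsilon_{\rm width}^{-1/2}/\beta_3$, which is $\leq\varepsilon_{\rm width}^{-1}$ once $\varepsilon_{\rm width}<\beta_3^2$. (One should note that this relies on $\Lambda\geq\tfrac23 m^2$, i.e.\ statement \ref{it:trapping-frequency-properties-Lambda-m2}, so logically I would prove \ref{it:trapping-frequency-properties-Lambda-m2} first or interleave them.)

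\textbf{Statement \ref{it:trapping-frequency-properties-Lambda-m2}.} This is the same case split on the size of $m^2$ relative to $\omega^2$ that appears in the proof of Lemma~\ref{lemma:comparable-frequency-properties}\ref{it:comparable-frequency-properties-Lambda-m2}: if $m^2\leq\varepsilon_{\rm width}\omega^2$ then $\Lambda\geq\varepsilon_{\rm width}\omega^2\geq m^2$ trivially; if $m^2>\varepsilon_{\rm width}\omega^2$ then $|m|\gg|s|$ and one invokes admissibility condition \ref{it:admissible-freqs-triple-Lambda-lower-bound}, $\Lambda\geq l(l+1)-s^2-2|s||a\omega|\geq m^2+|m|-s^2-2|s|M|\omega|$, and absorbs the $-2|s|M|\omega|$ term using $|m|\geq\varepsilon_{\rm width}^{1/2}|\omega|\geq\varepsilon_{\rm width}^{1/2}\omega_{\rm high}$ with $\omega_{\rm high}$ large. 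The factor $\tfrac23$ is then obtained exactly as in the $\epsilon=\tfrac13$ choice there.

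\textbf{Statement \ref{it:trapping-frequency-properties-Lambda-nondegenerate}.} The point is to produce $\beta_0(\varepsilon_{\rm width})>0$ with $\Lambda-2am\omega\geq\beta_0\Lambda$. Write $\sigma=2am\omega/\Lambda$; I would show $\sigma\leq 1-\beta_0$. If $m\omega\leq 0$ then $\sigma\leq 0$ and $\beta_0=1$ works. If $m(\omega-m\upomega_+)\geq\beta_3\Lambda$, then $2am\omega=2am(\omega-m\upomega_+)+2am^2\upomega_+=2am(\omega-m\upomega_+)+\tfrac{a^2}{Mr_+}m^2\cdot(Mr_+/\!\dots)$; more cleanly, $2am\omega\leq 2am(\omega-m\upomega_+)+\tfrac{a^2}{2Mr_+}\cdot 2m^2$ is the wrong direction, so instead bound $2am^2\upomega_+=\tfrac{a^2}{Mr_+}m^2\leq m^2$ and use $m^2\leq\tfrac32\Lambda$... this overshoots, so one must use $m(\omega-m\upomega_+)\geq\beta_3\Lambda>0$ to \emph{improve}: from this, $m^2\upomega_+<m\omega$, hence $2am\omega<2am\omega$ trivially — the correct route is the one in Lemma~\ref{lemma:comparable-frequency-properties}\ref{it:comparable-frequency-properties-sigma}, noting $\Lambda\geq\varepsilon_{\rm width}^{1/2}|m\omega|$, so $\sigma=2am\omega/\Lambda\leq 2M\varepsilon_{\rm width}^{-1/2}\cdot$... this is $\geq 1$, so again no good. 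The genuine mechanism, as in Lemma~\ref{lemma:comparable-frequency-properties}\ref{it:comparable-frequency-properties-Lambda-nondegenerate}, must be: when $m(\omega-m\upomega_+)\geq\beta_3\Lambda$ one shows $\Lambda-2am\omega=\Lambda-2am^2\upomega_+-2am(\omega-m\upomega_+)$; using $m(\omega-m\upomega_+)>0$ gives no help, so one instead needs $\Lambda\geq m^2$ and the admissibility bound \ref{it:admissible-freqs-triple-Lambda-q-bound} with $q=a\omega/m$ to conclude $\Lambda-2m\nu=\Lambda-2am\omega\gtrsim|q|m^2$ or $(q-1)m^2$, and combine with $\Lambda\leq\varepsilon_{\rm width}^{-1}\omega^2$ to get $\beta_0=\beta_0(\varepsilon_{\rm width},\beta_3)$. \textbf{This is the step I expect to be the main obstacle}: unlike the other three it is not purely a one-line dichotomy, and one must carefully track which admissibility condition (\ref{it:admissible-freqs-triple-Lambda-q-bound} vs.\ \ref{it:admissible-freqs-triple-Lambda-lower-bound}) applies in each subcase of the constraint $m\omega\notin(0,m^2\upomega_++\beta_3\Lambda)$, and verify that the resulting $\beta_0$ depends only on $\varepsilon_{\rm width}$ (and $\beta_3$, which is itself fixed in terms of $\varepsilon_{\rm width}$) and not on the frequency triple. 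Once $\beta_0$ is in hand, the choice $\varepsilon_1=\beta_0\varepsilon_{\rm width}/6$ gives $\varepsilon_1(m^2+\omega^2)\leq\varepsilon_1(\tfrac32+\varepsilon_{\rm width}^{-1})\Lambda\leq\tfrac12\beta_0\Lambda$ for small $\varepsilon_{\rm width}$, and $\tilde\varepsilon_1=1/(\varepsilon_1 b(\varepsilon_1)\Lambda_{\rm high})$ with $\Lambda_{\rm high}=\varepsilon_{\rm width}\omega_{\rm high}^2$ is $\ll1$ once $\omega_{\rm high}$ is large — these are routine.

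\textbf{Statement \ref{it:trapping-frequency-properties-Cs-Ds}.} This follows verbatim from the argument in Lemma~\ref{lemma:angular-dominated-frequency-properties}\ref{it:angular-dominated-frequency-properties-Cs-Ds}: with \ref{it:trapping-frequency-properties-Lambda-nondegenerate} giving $\Lambda-2am\omega\gtrsim\Lambda$ and $\Lambda\gtrsim\omega_{\rm high}^2$ large, one inspects the explicit formulas \eqref{eq:TS-radial-constants}, \eqref{eq:Ds-infinity}, \eqref{eq:Ds-horizon} for $\mathfrak{C}_s,\mathfrak{D}_s^{\mc I},\mathfrak{D}_s^{\mc H}$ and sees that each differs from $(\Lambda-2am\omega+\cdots)^{2|s|}$ by terms that are $O(|\omega|)+O(m^2/\Lambda)\cdot(\cdots)$ relative to the leading term, whence all three ratios lie in $[\tfrac13,\tfrac53]$ for $\omega_{\rm high}$ large enough depending on $\varepsilon_{\rm width}$ (using $m^2/\Lambda\leq\tfrac32$ and $|\omega|/\Lambda\leq1/(\varepsilon_{\rm width}\omega_{\rm high})$). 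I would simply cite that earlier computation rather than reproduce it. Throughout, the only subtlety to flag is the logical ordering: \ref{it:trapping-frequency-properties-Lambda-m2} should be proved before \ref{it:trapping-frequency-properties-K-frequency} and before \ref{it:trapping-frequency-properties-Lambda-nondegenerate}, since both of those invoke $\Lambda\gtrsim m^2$.
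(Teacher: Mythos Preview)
Your treatment of \ref{it:trapping-frequency-properties-K-frequency}, \ref{it:trapping-frequency-properties-Lambda-m2}, and \ref{it:trapping-frequency-properties-Cs-Ds} is correct and matches the paper's approach: these are indeed direct references to the analogous arguments in Lemmas~\ref{lemma:comparable-frequency-properties} and \ref{lemma:angular-dominated-frequency-properties}.

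For \ref{it:trapping-frequency-properties-Lambda-nondegenerate}, however, there is a genuine gap. You correctly recognize that your attempted routes (bounding $\sigma=2am\omega/\Lambda$ directly, invoking $\Lambda\geq\varepsilon_{\rm width}^{1/2}|m\omega|$, or appealing to admissibility condition~\ref{it:admissible-freqs-triple-Lambda-q-bound}) do not close, and you flag this as the main obstacle. The missing idea is that the existence of $\beta_0$ does \emph{not} come from algebraic manipulations of the frequency constraints alone: it comes from the \emph{trapping condition} $r_0'<\infty$ that defines $\mc{F}_{\rm comp,1}$. The paper argues by contradiction using the potential $\mc{V}_0$: after reducing to the nontrivial case $m^2\gtrsim\varepsilon_{\rm width}\Lambda$ and $a\gtrsim\varepsilon_{\rm width}^{1/2}$, one assumes $\Lambda-2am\omega=o(\Lambda)$ and computes, using \ref{it:trapping-frequency-properties-K-frequency} to get $(\omega-m\upomega_+)^2\geq\varepsilon_{\rm width}\Lambda$,
\[
\omega^2-\mc{V}_0(r)=\Bigl(\omega-\frac{am}{r^2+a^2}\Bigr)^2-\frac{\Delta(\Lambda-2am\omega)}{(r^2+a^2)^2}\geq (\omega-m\upomega_+)^2-o(\Lambda)\geq \tfrac12\varepsilon_{\rm width}\Lambda
\]
for all $r\in[r_+,\infty)$. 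But this forces $r_0'=\infty$ by Lemma~\ref{lemma:V0-trapping}, contradicting membership in $\mc{F}_{\rm comp,1}$. Thus $\beta_0=\beta_0(\varepsilon_{\rm width})>0$ must exist. Your purely arithmetic attempts cannot succeed because, absent the $r_0'<\infty$ hypothesis, the conclusion is simply false: the regime $\mc{F}_{\rm comp,2b}$ is characterized precisely by $\Lambda-2am\omega<\beta_4\Lambda$ and satisfies all the same frequency inequalities.
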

\begin{proof}
For \ref{it:trapping-frequency-properties-Lambda-m2} and \ref{it:trapping-frequency-properties-K-frequency}, we direct the reader to the proofs of the analogous results in Lemma~\ref{lemma:comparable-frequency-properties}.

Regarding \ref{it:trapping-frequency-properties-Lambda-nondegenerate}, note that the result follows easily if $m^2<(2M)^{-2}\varepsilon_{\rm width}\Lambda$ or if $a\leq \tilde{a}_0:=\varepsilon_{\rm width}^{1/2}/2$ (see also the proof of Lemma~\ref{lemma:comparable-frequency-properties}\ref{it:comparable-frequency-properties-Lambda-nondegenerate}). Hence, suppose $(2M)^{-2}\varepsilon_{\rm width}\Lambda\leq m^2\leq \Lambda$ and $a\geq \tilde{a}_0$. If $\Lambda-am\omega=o(\Lambda)$, then
\begin{align*}
\omega^2-\mc{V}_0&=\lp(\omega-\frac{am}{r^2+a^2}\rp)^2-\frac{\Delta(\Lambda-2am\omega)}{(r^2+a^2)^2}\geq (\omega-m\upomega_+)^2-\frac{\Delta(\Lambda-2am\omega)}{(r^2+a^2)^2} \\
&\geq \varepsilon_{\rm width}\Lambda -\frac{\Delta(\Lambda-2am\omega)}{(r^2+a^2)^2}\geq \frac12 \varepsilon_{\rm width}\Lambda\,,\quad \forall r\in[r_+,\infty)\,,
\end{align*}
using \ref{it:trapping-frequency-properties-K-frequency}, which contradicts the assumption that $r_0'<\infty$ for $\mc{F}_{\rm comp,1}$. Thus, there must exist a $\beta_0>0$ such that $\Lambda-2am\omega\geq \beta_0\Lambda$; this constant depends on $r_0'$ and hence on $\varepsilon_{\rm width}$. That the conditions of Lemma~\ref{lemma:basic-estimate-3} are satisfied then follows from \ref{it:trapping-frequency-properties-Lambda-m2} and the remaining restriction on the frequencies in $\mc{F}_{\rm comp,1}$.

Finally, for \ref{it:trapping-frequency-properties-Cs-Ds}, the proof is obtained by a similar strategy as in Lemma~\ref{lemma:angular-dominated-frequency-properties}\ref{it:angular-dominated-frequency-properties-Cs-Ds}, making use of \ref{it:trapping-frequency-properties-Lambda-nondegenerate}.
\end{proof}

\begin{lemma}[Behavior of the potential in $\mc{F}_{\rm comp,1}$]\label{lemma:V-trapping}
Fix $s\in\mathbb{Z}$, $M>0$ and $r_0'$ by Lemma~\ref{lemma:V0-trapping}. For all $\beta_3>0$, all  sufficiently small $\epsilon_{\rm width}$ depending on $\beta_3$ and all sufficiently large $\omega_{\rm high}$ depending on $\epsilon_{\rm width}$, fixing $r_0'$ by Lemma~\ref{lemma:V0-trapping}, if $(\omega,m,\Lambda)\in\mc{F}_{\rm comp,1}(\varepsilon_{\rm width},\omega_{\rm high},\beta_3,r_0')$, then there exists an $r_3\in(r_+,\infty)$ satisfying $|r_3-r_+|\geq B(\epsilon_{\rm width})$ and $r_3\leq B(\epsilon_{\rm width})$ such that
\begin{align*}
\mc{V}-\omega^2\leq -b(\epsilon_{\rm width})\Lambda \,, \quad\forall\,r\in[r_+,r_3]\,.
\end{align*}
and the potential $\mc{V}$ has a unique maximum $r_{\rm max}\in[r_3,\infty)$ which satisfies $|r_{\rm max}-r_+|\leq b(\epsilon_{\rm width})$ and $r_{\rm max}\leq B(\varepsilon_{\rm width})$.
\end{lemma}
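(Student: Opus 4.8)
\textbf{Proof proposal for Lemma~\ref{lemma:V-trapping}.}
The plan is to extract the claimed behavior of the full potential $\mc{V} = \mc{V}_0 + \mc{V}_1$ from the already-established properties of $\mc{V}_0$ in the trapped, nonsuperradiant regime, reducing to a perturbative argument in which $\mc{V}_1$ is negligible. First I would invoke Lemma~\ref{lemma:V0-trapping}: since $(\omega,m,\Lambda)\in\mc{F}_{\rm comp,1}$ means $r_0' < \infty$, that lemma provides $c = c(\varepsilon_{\rm width}) = \varepsilon_{\rm width}$ with $\omega^2 - \mc{V}_0(r_+) = (\omega - m\upomega_+)^2 \geq c\Lambda$, and an $r_0' \leq B(\varepsilon_{\rm width})$ with $|r_0' - r_+| \geq b(\varepsilon_{\rm width})$ such that $\mc{V}_0 - \omega^2 \leq -\tfrac14 c\Lambda$ on $[r_+, r_0']$. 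Next I would recall from Lemma~\ref{lemma:trapping-frequency-properties}\ref{it:trapping-frequency-properties-Lambda-m2}--\ref{it:trapping-frequency-properties-Lambda-nondegenerate} that $\Lambda \geq \tfrac23 m^2$ and $\Lambda - 2am\omega \geq \beta_0\Lambda$, so that the hypotheses of Lemma~\ref{lemma:critical-points-V0} and Lemma~\ref{lemma:derivative-V0-r+}(i)--(ii) hold (the nonsuperradiant condition $m\omega\notin(0,m^2\upomega_+ + \beta_3\Lambda]$ puts us into case (iii) of Lemma~\ref{lemma:derivative-V0-r+} when $m\omega \leq 0$, or into a regime where the first derivative at $r_+$ is positive and comparable to $\Lambda$). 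Consequently $\mc{V}_0$ has a unique critical point $r_{\rm max}^0$ which is a maximum, and by the argument in Lemma~\ref{lemma:V0-trapping} this maximum satisfies $r_{\rm max}^0 \leq B(\varepsilon_{\rm width})$; combining with the uniform derivative bounds $|d\mc{V}_0/dr| + r|d^2\mc{V}_0/dr^2| \leq B(\varepsilon_{\rm width})\Lambda r^{-3}$ (these follow from the smallness of $\sigma = am\omega/(\Lambda+s^2)$, exactly as in the bounds displayed after Lemma~\ref{lemma:comparable-frequency-properties}), one gets $r_{\rm max}^0 - r_+ \geq b(\varepsilon_{\rm width})$.

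The core of the argument is then the perturbation: the frequency-independent potential $\mc{V}_1$ from \eqref{eq:radial-ODE-Psi-potentials} satisfies $|\mc{V}_1| \leq B r^{-3}$, $|\mc{V}_1'| \leq B r^{-4}$, and $|\tfrac{d}{dr}((r^2+a^2)^3 \mc{V}_1')| \leq Br$, uniformly in $(a,M)$ with $|a|\le M$. Since on $\mc{F}_{\rm comp,1}$ one has $\Lambda \geq \varepsilon_{\rm width}\omega^2 \geq \varepsilon_{\rm width}\omega_{\rm high}^2$, choosing $\omega_{\rm high}$ large (depending on $\varepsilon_{\rm width}$) makes $\mc{V}_0$ dominate $\mc{V}_1$ in all three of these quantities on the compact $r$-range of interest. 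Therefore: (a) on $[r_+, r_0']$, $\mc{V} - \omega^2 = (\mc{V}_0 - \omega^2) + \mc{V}_1 \leq -\tfrac14 c\Lambda + B \leq -\tfrac18 c\Lambda$ for $\omega_{\rm high}$ large enough, which gives the first claim with $r_3 = r_0'$ and $b(\varepsilon_{\rm width}) = \tfrac18\varepsilon_{\rm width}$; (b) since $(r^2+a^2)^3\mc{V}_0'$ is strictly negative on $[r_0', \infty)$ (more precisely its derivative is bounded away from zero relative to $\Lambda r^2$ past the point where $\mc{V}_0'$ changes sign, as in Lemma~\ref{lemma:V-angular-dominated}), adding the $O(Br)$ perturbation from $\mc{V}_1$ cannot create a new zero there once $\omega_{\rm high}$ is large, so $\mc{V}$ has exactly one critical point $r_{\rm max} \in [r_0', \infty)$; and (c) an implicit-function/quantitative argument using the lower bound on $|\tfrac{d}{dr}((r^2+a^2)^3\mc{V}_0')|$ near $r_{\rm max}^0$ gives $|r_{\rm max} - r_{\rm max}^0| \leq B\Lambda^{-1}$, hence $r_{\rm max} \leq B(\varepsilon_{\rm width})$ and $r_{\rm max} - r_+ \geq b(\varepsilon_{\rm width})$ for $\omega_{\rm high}$ (equivalently $\Lambda$) large.

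I expect the main technical point to be (b)--(c): establishing that $\mc{V}$ inherits a \emph{unique} maximum from $\mc{V}_0$ and locating it. The subtlety is that $\mc{V}_0$ may a priori have had a local minimum $r_{\rm min}^0$ below $r_0'$ as well (Lemma~\ref{lemma:critical-points-V0}(c)), but by Lemma~\ref{lemma:critical-points-V0}(iv) one has $\mc{V}_0(r_{\rm min}^0) > \omega^2$, which is incompatible with $\mc{V}_0 - \omega^2 \leq -\tfrac14 c\Lambda$ on $[r_+, r_0']$; hence no such minimum exists in the relevant range and $r_0' < r_{\rm max}^0$. With that ruled out, the remaining work is the quantitative perturbation estimate, which is the same mechanism already used in the proof of Lemma~\ref{lemma:V-angular-dominated} (splitting $[r_+,\infty)$ at an $r_1'$ where the monotonicity of $(r^2+a^2)^3\mc{V}_0'$ is controlled, then noting $\mc{V}_0$ dominates $\mc{V}_1$); I would essentially cite that argument rather than reproduce it. The bounds $r_{\rm max} \leq B(\varepsilon_{\rm width})$ and $|r_{\rm max} - r_+| \geq b(\varepsilon_{\rm width})$ then follow by combining $|r_{\rm max} - r_{\rm max}^0| \leq B\Lambda^{-1}$ with $\Lambda \gg 1$ and the already-known bounds on $r_{\rm max}^0$.
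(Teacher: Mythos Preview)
Your proposal is essentially the same as the paper's proof: both invoke Lemma~\ref{lemma:V0-trapping} for the behavior of $\mc{V}_0$ on $[r_+,r_0']$, recall the derivative bounds \eqref{eq:trapping-V0-bounds}, observe the uniform bounds $|\mc{V}_1|\leq Br^{-3}$, $|\mc{V}_1'|\leq Br^{-4}$, $|((r^2+a^2)^3\mc{V}_1')'|\leq Br$, and conclude by perturbation that $r_3$ inherits the properties of $r_0'$ and $r_{\rm max}$ sits within $B\Lambda^{-1}$ of $r_{\rm max}^0$.

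Two small remarks. First, your invocation of Lemma~\ref{lemma:derivative-V0-r+} is not quite right: in $\mc{F}_{\rm comp,1}$ the case $m\omega\geq m^2\upomega_++\beta_3\Lambda$ is not covered by any of (i)--(iii) there, so you cannot conclude $\mc{V}_0'(r_+)>0$ in general. This does not matter, since the existence and location of $r_{\rm max}^0$ already follow directly from the proof of Lemma~\ref{lemma:V0-trapping} once $r_0'<\infty$. Second, your argument ruling out a local minimum $r_{\rm min}^0$ via Lemma~\ref{lemma:critical-points-V0}(iv) is unnecessary (and relies on what appears to be a sign typo in that statement---the proof there gives $\mc{V}_0(r_{\rm min}^0)\leq\omega^2$, not $>$). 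The lemma only asserts a unique maximum in $[r_3,\infty)$, so a possible minimum in $[r_+,r_3]$ is irrelevant; the paper simply does not discuss it.
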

\begin{proof}
Recall the bounds on $\mc{V}_0$ from \eqref{eq:trapping-V0-bounds}
\begin{align*} 
\lp|\frac{d\mc{V}_0}{dr}\rp|+r\lp|\frac{d^2\mc{V}_0}{dr^2}\rp| \leq B(\epsilon_{\rm width})\frac{\Lambda}{r^3}\,, \qquad \lp|\frac{d}{dr}\lp((r^2+a^2)^3\frac{d\mc{V}_0}{dr}\rp)\rp|\leq B(\epsilon_{\rm width})\Lambda r^2\,, 
\end{align*}
and the construction of $r_0'$ in Lemma~\ref{lemma:V0-trapping}. Clearly, if $r_0'<\infty$, then it has a maximum at $r^0_{\rm max}\geq r_0'$. Thus, $r_{\rm max}^0$ is bounded away from $r_+$ and, moreover, by Lemma~\ref{lemma:critical-points-V0}, bounded above, with both bounds depending on $\varepsilon_{\rm width}$.

To extend these conclusions to the full potential $\mc{V}$, we appeal to the bounds
\begin{align*}
\lp|\mc{V}_1\rp| \leq Br^{-3}\,, \quad \lp|\frac{d\mc{V}_1}{dr}\rp|\leq Br^{-4}\,,\quad \lp|\frac{d}{dr}\lp((r^2+a^2)^3\frac{d\mc{V}_1}{dr}\rp)\rp|\leq Br\,,
\end{align*}
hence if $\omega_{\rm high}$ is very large, $\Lambda$ is as well, so $\mc{V}_0$ dominates. Hence, we will still be able to obtain an $r_3$ with the same key properties as those of $r_0'$. Since $r_0',r_3<\infty$ by assumption, $\mc{V}$ has a unique maximum, $r_{\rm max}$, in $[r_3,\infty)$ which satisfies $|r_{\rm max}-r_{\rm max}^0|\leq B(\epsilon_{\rm width})\Lambda^{-1}$. 
\end{proof}

We are finally ready to make use of the above information to prove Proposition~\ref{prop:comp1}. We briefly recall the strategy, which is best exemplified by Figure~\ref{fig:nonsuperrad-trapping}.

\begin{figure}[htbp]
\centering
\includegraphics[scale=1]{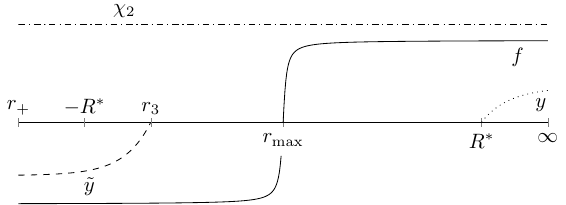}
\caption{Currents in the proof of Theorems~\ref{thm:ode-estimates-Psi-A} and \ref{thm:ode-estimates-Psi-B} for the frequency ranges $\mc{F}_{\rm comp,1}$ (relative size not accurate).}
\label{fig:nonsuperrad-trapping}
\end{figure}

We generate a positive bulk in $[r_3,\infty)$ by constructing an $f$ that changes sign with $\mc{V}'$, so that $-f\mc{V'}\geq 0$  in $[r_3,\infty)$; for $[r_+,r_3]$, we add a $y$ current that takes advantage of the positivity of $\omega^2-\mc{V}$ there (see Lemma~\ref{lemma:V-trapping}), c.f.\ \cite[Proposition 8.6.1]{Dafermos2016b} for $s=0$. The bulk term is thus degenerate at the maximum, located at $r=r_{\rm max}$, of the potential; unlike what occurs in the previous frequency ranges, \textbf{the degeneration cannot be eliminated}\footnote{Strictly speaking, in this regime, close to the maximum, while we cannot hope to have control of $\omega^2|\Psi|^2$, we can actually obtain control over $|\omega|[|\omega|(r-r_{\rm max})+1]|\Psi|^2$ as well as the weaker $[\omega^2(r-r_{\rm max})+1]|\Psi|^2$ in our Theorems~\ref{thm:ode-estimates-Psi-A} and ~\ref{thm:ode-estimates-Psi-fixed-m}.}.

For $s\neq 0$, due to the presence of errors due to coupling to $\uppsi_{(k)}$, $k=0,..,|s|-1$, we must add a $y$ current near $r^*=\pm\infty$ to absorb eventual errors of size $\omega^2$ in frequency and slow decay compared to $f\mc{V}'$. Moreover, a simple application of Cauchy-Schwarz yields errors of the form $m^2|\uppsi_{(k)}|^2$, with an $r$-weight determined by the choice of currents. To absorb them, an option would be appeal to the basic estimate in Lemma~\ref{lemma:basic-estimate-1}, in which case we would require two conditions:
\begin{itemize}
\item that the weight $c(r)$ in our application of Lemma~\ref{lemma:basic-estimate-1} can be taken to have a simple zero at $r=r_{\rm max}$ (note that, if $c'$ is to have the same sign globally, this implies $c$ must approach a nonzero value as $|r^*|\to \infty$, so that we pick up boundary terms of $\uppsi_{(k)}$), so that we can control $c'(r)m^2|\uppsi_{(k)}|^2$ by $(1-r_{\rm max}/r)^2wm^2|\uppsi_{(k)}|^2$;
\item to construct $f$ so that $f/f'$ is small in an arbitrarily large compact range of $r^*$, in order for the error to be small enough to absorb by the left hand side.
\end{itemize}
Instead, we note that by Lemma \ref{lemma:trapping-frequency-properties}\ref{it:trapping-frequency-properties-Lambda-nondegenerate}, we can actually make use of the improved estimates in Lemma~\ref{lemma:basic-estimate-2} and directly control $m^2|\uppsi_{(k)}|^2$ by $|\Psi|^2$ with appropriate $r$-weights.

Crucially, as had been the case for $s=0$ and $a\in[0,a_0]$ in \cite[Proposition 8.6.1]{Dafermos2016b}, the disjointness of trapping and superradiance implies that a global energy current is enough to deal with any boundary terms arising from $f$ and $y$ currents. One option is to use a global Teukolsky--Starobinsky energy current. On the other hand, we may also apply standard $T$-energy current: to deal with errors due to coupling, the final two estimates in Lemma~\ref{lemma:basic-estimate-3} are essential, as they allow us to shift all the frequency weights onto the lower level $\uppsi_{(k)}$ and make use of Lemma~\ref{lemma:basic-estimate-2} to eliminate them as we climb upwards on the hierarchy. 

With the procedure outlined to address the coupling errors, the final errors expressed in terms of $\Psi$ have no frequency weights; thus, they are easily absorbed by $|\Psi'|^2 + (1-r_{\rm max}/r)^2\Lambda|\Psi|^2$ with appropriate $r$-weights.

\begin{proof}
Suppose $(\omega,m,\Lambda)\in \mc{F}_{\rm comp,1}$, i.e.\ $\omega$ is large, of the same size as $\Lambda$ and all frequency triples are quantitatively (depending on $\beta_1$) nonsuperradiant. We set $r_{\rm trap}:=r_{\rm max}$ to be the maximum of the potential, whose existence and uniqueness was established in Lemma~\ref{lemma:V-trapping}.

We start with the $f$ current: let 
$$f=f_0=\arctan(r^*-r_{\rm max}^*)\,,$$
the standard current defined by \eqref{eq:standard-current-f0} in Section~\ref{sec:virial-current-templates}, and recall the properties there laid out. We have
\begin{align*}
-f\mc{V}' -\frac12 f'''\geq  \lp(-f\mc{V}'-\frac12 f'''\rp)\mathbbm{1}_{[r_+,r_3]}+b(\varepsilon_{\rm width})\Lambda\frac{((r-r_{\rm max})^2\Delta}{r^7}\mathbbm{1}_{[r_3,\infty)}\,.
\end{align*}

Additionally, for some $\delta\in(0,1]$ and $R^*\geq \delta^{-2}$, consider 
$$y=y_{\delta}=\lp(1-\frac{(R^*)^{\delta}}{(r^*)^{\delta}}\rp)\mathbbm{1}_{[R^*,\infty)}\,,$$
as defined in \eqref{eq:standard-current-y-delta}, and a current $\tilde{y}$  given by
\begin{alignat*}{3}
\tilde{y}&=\frac12\lp(1-e^{C(r_3-r)}\rp)\mathbbm{1}_{[r_+,r_3]}\,,\qquad&&\tilde{y}'&&= \frac12 Ce^{C(r_3-r)}\frac{\Delta}{r^2+a^2}\mathbbm{1}_{[r_+,r_3]}\,,
\end{alignat*}
where $C$ is large enough so that $\tilde{y}<0$, $\tilde{y}'>0$. Fix $2R^*\geq \delta^{-1}\varepsilon_{\rm width}^{-1/3}$ so that it is sufficiently large to have
\begin{gather*}
y'\omega^2-(y\mc{V})'\geq y'\omega^2\lp[1-\frac{\delta^{-1}\varepsilon_{\rm width}^{-1}}{r^3}\lp(1-\frac{(R^*)^{\delta}}{(r^*)^{\delta}}\rp)\rp]\geq \frac12 y'\omega^2\geq b  y'\Lambda\,,
\end{gather*}
Also note that
\begin{gather*}
\frac{|w\tilde{y}^2|}{\tilde{y}'}\leq \frac{B(M)}{C}\,.
\end{gather*}

Finally, using bounds \eqref{eq:trapping-V0-bounds} on the first derivative of the potential as well as Lemma~\ref{lemma:V-trapping},
\begin{align*}
&\lp[\tilde{y}'(\omega^2-\mc{V})-\tilde{y}\mc{V}'-f\mc{V}'-\frac12f'''\rp]\mathbbm{1}_{[r_+,r_3]}\\
&\quad=\frac12 \tilde{y}'(\omega^2-\mc{V})+B\lp(C e^{C(r_3-r)}b(\varepsilon_{\rm width}-B(\varepsilon_{\rm width})(2-e^{C(r_3-r)})-\frac{B}{\varepsilon_{\rm width}\omega_{\rm high}^2}\rp)\Lambda\frac{\Delta}{r^2}\mathbbm{1}_{[r_+,r_3]}\\
&\quad\geq \frac12 \tilde{y}'(\omega^2-\mc{V})\,,
\end{align*}
if $C=C(\varepsilon_{\rm width})$ is sufficiently large and if $\omega_{\rm high}$ is sufficiently large depending on $\varepsilon_{\rm width}$.

By applying these currents, together with a global $T$ energy current multiplied by some $E>1$ to be fixed, we obtain the identity
\begin{align*}
&\int_{-\infty}^\infty \lp[(2f'+y'+\tilde{y}')|\Psi'|^2+\lp(\frac12 \tilde{y}'(\omega^2-\mc{V})-f\mc{V}'\mathbbm{1}_{[r_3,\infty)}+\frac12 y'\omega^2\rp)|\Psi|^2\rp]dr^*\\
&\quad \leq \int_{-\infty}^\infty \lp[\lp(2f'+\tilde{y}'+y'\rp)|\Psi'|^2+\lp(\tilde{y}'(\omega^2-\mc{V})-(\tilde{y}+f)\mc{V}'+y'\omega^2-(y\mc{V})'\rp)|\Psi|^2\rp]dr^*\\
&\quad\leq \lp[4(\omega-m\upomega_+)-E\omega\rp](\omega-m\upomega_+)\lp|\swei{A}{s}_{\mc{H}^+}\rp|^2+\lp[4(\omega-m\upomega_+)+E\omega\rp](\omega-m\upomega_+)\lp|\swei{A}{s}_{\mc{H}^-}\rp|^2 \numberthis \label{eq:trapping-intermediate-1}\\
&\quad\qquad+(4-E)\omega^2\lp|\swei{A}{s}_{\mc{I}^+}\rp|^2 +(4+E)\omega^2\lp|\swei{A}{s}_{\mc{I}^-}\rp|^2 \\
&\quad\qquad +\int_{-\infty}^\infty \frac12 f'''|\Psi|^2\mathbbm{1}_{\{|r-r_{\rm max}|\leq 2\tilde\delta\}}dr^*+\int_{-\infty}^{\infty} \lp\{2(y+\tilde{y}+f)\Re\lp[\mathfrak{G}\overline{\Psi}'\rp]+f'\Re\lp[\mathfrak{G}\overline{\Psi}\rp]-E\omega\Im\lp[\mathfrak{G}\overline{\Psi}\rp]\rp\}dr^* \\
&\quad\qquad+\sum_{k=0}^{|s|-1} \int_{-\infty}^\infty 2 a w(\tilde{y}+y+f)\Re\lp[\lp(c_{s,k,k}^{\rm id}+imc_{s,k,k}^{\Phi}\rp)\uppsi_{(k)}\overline{\Psi}'\rp]dr^*\\
&\quad\qquad+\sum_{k=0}^{|s|-1} \int_{-\infty}^\infty \lp\{a wf'\Re\lp[\lp(c_{s,k,k}^{\rm id}+imc_{s,k,k}^{\Phi}\rp)\uppsi_{(k)}\overline{\Psi}\rp]-Ea\omega w\Im\lp[(c_{s,s,k}^{id}+imc_{s,s,k}^\Phi)\uppsi_{(k)}\overline{\Psi}\rp]\rp\}dr^*\,.
\end{align*}
We note that, while the weights on $|\Psi'|^2$ and $|\Psi|^2$ on the left hand side are nonnegative by our previous considerations, the latter degenerates at $r=r_{\rm max}$. Unlike in the angular dominated and superradiant regimes analyzed in Section~\ref{section:F-angular-superradiant}, we cannot fully eliminate the degeneration, as $\mc{V}(r_{\rm max})\sim \omega^2$. 

Let us first deal with the boundary terms. Suppose we add a Teukolsky--Starobinsky $T$-type energy current to \eqref{eq:trapping-intermediate-1}. It is clear that, if $\varepsilon_{\rm width}<1$, setting $\max\{E,E_W\}\geq 30\varepsilon_{\rm width}^{-1/2}/M$, we obtain
\begin{align*}
&\lp(4-E-E_W\lp\{1,\frac{\mathfrak{C}_s}{\mathfrak{D}_s^{\mc I}}\rp\}\rp)\omega^2\lp|\swei{A}{s}_{\mc{I}^+}\rp|^2 +\lp(4+E+E_W\lp\{\frac{\mathfrak{C}_s}{\mathfrak{D}_s^{\mc I}},1\rp\}\rp)\omega^2\lp|\swei{A}{s}_{\mc{I}^-}\rp|^2 \\
&\quad\leq -\omega^2\lp|\swei{A}{s}_{\mc{I}^+}\rp|^2 +9\max\{E,E_W\}\omega^2\lp|\swei{A}{s}_{\mc{I}^-}\rp|^2\,,
\end{align*}
where the options given are for $s>0$ and $s\leq 0$, respectively. Under the same convention, and, using Lemma~\ref{lemma:trapping-frequency-properties}\ref{it:trapping-frequency-properties-K-frequency}, as long as $\varepsilon_{\rm width}<1$,
\begin{align*}
&\lp[4-\lp(E+E_W\lp\{\frac{\mathfrak{C}_s}{\mathfrak{D}_s^{\mc I}},1\rp\}\rp)\frac{\omega}{\omega-m\upomega_+}\rp](\omega-m\upomega_+)^2\lp|\swei{A}{s}_{\mc{H}^+}\rp|^2\\
&\qquad+\lp[4+\lp(E+E_W\lp\{1,\frac{\mathfrak{C}_s}{\mathfrak{D}_s^{\mc I}}\rp\}\rp)\frac{\omega}{\omega-m\upomega_+}\rp](\omega-m\upomega_+)^2\lp|\swei{A}{s}_{\mc{H}^-}\rp|^2\\
&\quad\leq-(\omega-m\upomega_+)^2\lp|\swei{A}{s}_{\mc{H}^+}\rp|^2+ 9\max\{E,E_W\}\varepsilon_{\rm width}^{-1}(\omega-m\upomega_+)^2\lp|\swei{A}{s}_{\mc{H}^-}\rp|^2\,.
\end{align*}
The inequalities above have the signs we expect in the final statement.

Finally, we address the coupling terms in the last two lines of \eqref{eq:trapping-intermediate-1} as follows. We begin with an application of \eqref{eq:basic-estimate-3T-trapping-top} and \eqref{eq:basic-estimate-3T-trapping-lower}, from Lemma~\ref{lemma:basic-estimate-3}: for some $\varepsilon\in(0,1)$ sufficiently small and for $0\leq \chi \leq 1$ a smooth bump function supported in a neighborhood of $r_{\rm max}$, of size $\tilde{\delta}=\tilde\delta(\varepsilon_{\rm width})$, with $\chi(r_{\rm max})=1$,
\begin{align*}
&|s|\int_{-\infty}^\infty\sum_{k=0}^{|s|-1}Ea\omega w\Im\lp[(c_{s,s,k}^{id}+imc_{s,s,k}^\Phi)\uppsi_{(k)}\overline{\Psi}\rp]dr^*\\
&\quad \leq B \int_{-\infty}^\infty\lp\{\varepsilon w|\Psi'|^2+w\lp[1+\varepsilon(1-\chi)(\Lambda+\omega^2)\rp]|\Psi|^2+|s|E^2\sum_{k=0}^{|s|-1}w(\varepsilon^{-1}\Lambda+\omega^2)|\uppsi_{(k)}|^2\rp\}dr^*\\
&\quad \leq \int_{-\infty}^\infty \lp\{\frac14 (y'+\tilde{y}'+f')|\Psi'|^2+\lp[\frac{1}{16}\tilde{y}'(\omega^2-\mc{V})-\frac18f\mc{V}'+\frac{1}{16} y'\omega^2+B w\mathbbm{1}_{\{|r-r_{\rm max}|\leq 2 \tilde\delta\}}\rp]|\Psi|^2\rp\}dr^* \\
&\quad\qquad + |s|\sum_{k=0}^{|s|-1}\int_{-\infty}^\infty BE^2w\Lambda|\uppsi_{(k)}|^2dr^*\,;
\end{align*}
we fix $\varepsilon$ so that the last inequality above holds. By applications of Cauchy-Schwarz, and using the properties of the currents laid out above and in \eqref{eq:standard-current-f0-properties} and \eqref{eq:standard-current-y-delta-properties}, we also obtain
\begin{align*}
&2 a w(y+\tilde{y}+f)\Re\lp[\lp(c_{s,k,k}^{\rm id}+imc_{s,k,k}^{\Phi}\rp)\uppsi_{(k)}\overline{\Psi}'\rp]\\
&\quad\leq 2|s|M^2\max_{k\leq |s|-1}\lp(\lp\lVert c_{s,k,k}^{\Phi} \rp\rVert_\infty^2+\lp\lVert c_{s,k,k}^{\rm id} \rp\rVert_\infty^2\rp)(m^2+1)w^2\lp(\frac{f^2}{f'}+\frac{\tilde{y}^2}{\tilde{y}'}+\frac{y^2}{y'}\rp)\lp|\uppsi_{(k)}\rp|^2+\frac{1}{2|s|}(f'+\tilde{y}'+y')|\Psi'|^2\\
&\quad\leq \frac{1}{2|s|}(f'+\tilde{y}'+y')|\Psi'|^2+|s|B(M,s)w\Lambda\lp|\uppsi_{(k)}\rp|^2\,,\\
&2 a wf'\Re\lp[\lp(c_{s,k,k}^{\rm id}+imc_{s,k,k}^{\Phi}\rp)\uppsi_{(k)}\overline{\Psi}\rp]\\
&\quad\leq \frac12 M\max_{k\leq |s|-1}\lp(\lp\lVert c_{s,k,k}^{\Phi} \rp\rVert_\infty+\lp\lVert c_{s,k,k}^{\rm id} \rp\rVert_\infty\rp)\lp(f'w|\Psi|^2+|s|(m^2+1)f'w\lp|\uppsi_{(k)}\rp|^2\rp)\\
&\quad\leq B(M,s) \lp(w|\Psi|^2\mathbbm{1}_{\{|r-r_{\rm max}|\geq 2 \tilde\delta\}}+w|\Psi|^2\mathbbm{1}_{\{|r-r_{\rm max}|\leq 2 \tilde\delta\}}+|s|(m^2+1)w\lp|\uppsi_{(k)}\rp|^2\rp)\\
&\quad\leq \lp[\frac{1}{8}\tilde{y}'(\omega^2-\mc{V})-\frac14f\mc{V}'+\frac18 y'\omega^2\rp]|\Psi|^2+Bw|\Psi|^2\mathbbm{1}_{\{|r-r_{\rm max}|\leq 2 \tilde\delta\}}+|s|Bw\Lambda\lp|\uppsi_{(k)}\rp|^2\,,
\end{align*}
for $C$ sufficiently large and $\delta\in(0,1]$. 

It remains to estimate the terms with $k<|s|$. To do so, we repeatedly apply \eqref{eq:basic-estimate-2-a} from Lemma~\ref{lemma:basic-estimate-2}, since its conditions are met (see Lemma~\ref{lemma:trapping-frequency-properties}\ref{it:trapping-frequency-properties-Lambda-nondegenerate}). Using the notation there, we see that for $s\neq 0$ the errors due to $f$, $y$, $\tilde{y}$ and $T$ currents are controlled by
\begin{align*}
&|s|\sum_{k=0}^{|s|-1} \int_{-\infty}^\infty B\frac{(1+E^2)}{\beta_0}\beta_0 w\Lambda\lp|\uppsi_{(k)}\rp|^2dr^* \\
&\quad\leq \frac{B(1+E^2)}{\varepsilon_{\rm width}\beta_0^2\omega_{\rm high}^2}\int_{-\infty}^\infty \mathbbm{1}_{\{|r-r_{\rm max}|\geq 2 \tilde\delta\}}\lp(w\mathbbm{1}_{(-\infty,R^*]}+\frac{y'}{\sqrt{R}}\rp)\omega^2|\Psi|^2dr^*\\
&\quad\qquad+\frac{B(1+E^2)}{\varepsilon_{\rm width}\beta_0^2}\int_{-\infty}^\infty |\Psi|^2\mathbbm{1}_{\{|r-r_{\rm max}|\leq 2 \tilde\delta\}}dr^*+\frac{B|s|(1+E^2)}{\beta_0}\sum_{k=0}^{|s|-1}\int_{-\infty}^\infty\lp|\Re\lp[\mathfrak{G}_{k}\uppsi_{k}\rp]\rp|dr^* \\
&\quad\leq \int_{-\infty}^\infty \lp\{\lp[\frac{1}{8}\tilde{y}'(\omega^2-\mc{V})-\frac14f\mc{V}'+\frac18y'\omega^2\rp]|\Psi|^2+\frac{B(1+E^2)}{\varepsilon_{\rm width}\beta_0^2}|\Psi|^2\mathbbm{1}_{\{|r-r_{\rm max}|\leq 2 \tilde\delta\}}\rp\}dr^*\\
&\quad\qquad+\frac{B|s|(1+E^2)}{\beta_0}\sum_{k=0}^{|s|-1}\int_{-\infty}^\infty\lp|\Re\lp[\mathfrak{G}_{k}\uppsi_{k}\rp]\rp|dr^*\,,
\end{align*}
for sufficiently large $\omega_{\rm high}$, depending on $\varepsilon_{\rm width}$ and $E$, and sufficiently large $C$; $C$ can now be fixed to satisfy all the previous constraints. 

From \eqref{eq:trapping-intermediate-1}, we obtain
\begin{align*}
&\omega^2\lp|\swei{A}{s}_{\mc{I}^+}\rp|^2+(\omega-m\upomega_+)^2\lp|\swei{A}{s}_{\mc{H}^+}\rp|^2+\int_{-\infty}^\infty \lp[\frac{y'+f'}{2}|\Psi'|^2+\lp(\frac18 \tilde{y}'(\omega^2-\mc{V})-\frac38f\mc{V}'\mathbbm{1}_{[r_3,\infty)}+\frac18 y'\omega^2\rp)|\Psi|^2\rp]dr^*\\
&\quad\leq 5E\omega^2\lp|\swei{A}{s}_{\mc{I}^-}\rp|^2+5E\varepsilon_{\rm width}^{-1}(\omega-m\upomega_+)^2\lp|\swei{A}{s}_{\mc{H}^-}\rp|^2+\int_{-\infty}^\infty \frac{B(1+E^2)}{\varepsilon_{\rm width}\beta_0^2}w|\Psi|^2\mathbbm{1}_{\{|r-r_{\rm max}|\leq 2\tilde\delta\}}dr^* \numberthis \label{eq:trapping-intermediate-2}\\
&\quad\qquad +\int_{-\infty}^{\infty} \lp\{2(y+\hat{y}+f)\Re\lp[\mathfrak{G}\overline{\Psi}'\rp]+f'\Re\lp[\mathfrak{G}\overline{\Psi}\rp]-E\omega\Im\lp[\mathfrak{G}\overline{\Psi}\rp]\rp\}dr^* \\
&\quad\qquad+B(\varepsilon_{\rm width},E)|s|\sum_{k=0}^{|s|-1}\int_{-\infty}^\infty\lp|\Re\lp[\mathfrak{G}_{k}\uppsi_{k}\rp]\rp|dr^*\,.
\end{align*}
It remains to absorb the $|\Psi|^2\mathbbm{1}_{\{|r-r_{\rm max}|\leq 2\tilde\delta\}}$ term into the left hand side \eqref{eq:trapping-intermediate-2}.  To bound it, let $\chi$ be a bump function which is $1$ in $\{r:|r-r_{\rm max}|\leq \tilde\delta\}$ and 0 in $\{r:|r-r_{\rm max}|\geq 2\tilde\delta\}$; then 
\begin{align*}
\int_{-\infty}^\infty \chi |\Psi|^2 dr^*
&\leq \int_{-\infty}^\infty (r-r_{\rm max}-2\tilde\delta) \lp(\chi' |\Psi|^2+2\chi\Re\lp[\Psi'\overline{\Psi}\rp]\rp)dr^*\\
&\leq \int_{-\infty}^\infty\lp[4(r-r_{\rm max}-2\tilde\delta)^2|\Psi'|^2+2|r-r_{\rm max}-2\tilde\delta||\chi'||\Psi|^2\rp]dr^* \\
&\leq B\int_{\supp \chi'}|\Psi|^2dr^*+B\int_{-\infty}^\infty\tilde\delta^2\chi|\Psi'|^2dr^* \,.
\end{align*}
Thus, the remaining bulk error,
\begin{align*}
&\int_{-\infty}^\infty \frac{B(1+E^2)}{\varepsilon_{\rm width}\beta_0^2}w|\Psi|^2\mathbbm{1}_{\{|r-r_{\rm max}|\leq 2\tilde\delta\}}dr^* \\
&\quad\leq \int_{-\infty}^\infty \frac{B(1+E^2)}{\varepsilon_{\rm width}\beta_0^2\omega_{\rm high}^2}w\omega^2|\Psi|^2\mathbbm{1}_{\{|r-r_{\rm max}|\geq 2\tilde\delta\}}dr^* +B\int_{-\infty}^\infty\frac{B(1+E^2)\tilde\delta^2}{\varepsilon_{\rm width}\beta_0^2}|\Psi'|^2\,,
\end{align*}
can be absorbed into the left hand side of \eqref{eq:trapping-intermediate-2} as long as $\tilde\delta\sim \omega_{\rm high}^{-1}$ is sufficiently small depending on $\varepsilon_{\rm width}$ and $E$.
\end{proof}

%------------------------------------

\subsection{Proof of Theorems~\ref{thm:frequency-estimates-big}A and~\ref{thm:frequency-estimates-big}B: combining the frequency estimates of Sections~\ref{sec:bounded-smallness}, \ref{sec:intermediate} and \ref{sec:unbounded}}
\label{sec:combining-ode-estimates}

In this section, we combine the results of the previous sections to establish Theorem~\ref{thm:ode-estimates-Psi-A} and Theorem~\ref{thm:ode-estimates-Psi-fixed-m}.

\begin{proof}[Proof of Theorems~\ref{thm:ode-estimates-Psi-A} and \ref{thm:ode-estimates-Psi-B}]
Recall that $M>0$, $s\in\{0,\pm 1, \pm 2\}$ and $a_0\in[0,M)$ are fixed, and that $\beta_1=\beta_1(a_0)$ is chosen as the largest value for which Lemma~\ref{lemma:derivative-V0-r+} holds. Fix $\beta_2=\beta_2(a_0)>0$ and let $\beta_3=\beta_1(a_0)$. It is easy to see that every $(\omega,m,\Lambda)$ considered in Theorems~\ref{thm:ode-estimates-Psi-A} and \ref{thm:ode-estimates-Psi-B} lies in at least one of the frequency ranges 
\begin{gather*}
\mc{F}_{\text{\ClockLogo}}(\omega_{\rm high},\varepsilon_{\rm width})\,,\,\,\mc{F}_{\measuredangle,1}(\omega_{\rm high},\varepsilon_{\rm width})\,,\,\,\mc{F}_{\measuredangle,2}(\omega_{\rm high},\varepsilon_{\rm width})\,,\,\,\mc{F}_{\sun,1}(\omega_{\rm high},\varepsilon_{\rm width})\,,\,\,\\
\mc{F}_{\sun,1}(\omega_{\rm high},\varepsilon_{\rm width})\,,\,\,\mc{F}_{\rm comp,1}(\omega_{\rm high},\varepsilon_{\rm width},r_0')\,,\,\,\mc{F}_{\rm comp,2}(\omega_{\rm high},\varepsilon_{\rm width},r_0')\,,\,\,\\
\mc{F}_{\rm int,1}(\omega_{\rm high},\varepsilon_{\rm width},\omega_{\rm low})\,,\\
\mc{F}_{\rm low,1a}(\omega_{\rm high},\varepsilon_{\rm width},\omega_{\rm low},\tilde{a}_0)\,,\,\,\mc{F}_{\rm low,1b}(\omega_{\rm high},\varepsilon_{\rm width},\omega_{\rm low},\tilde{a}_0,\tilde{a}_1)\,,\,\,\\
\mc{F}_{\rm low,1c}(\omega_{\rm high},\varepsilon_{\rm width},\omega_{\rm low},\tilde{a}_0,\tilde{a}_1)\,,\,\,\mc{F}_{\rm low,2}(\omega_{\rm high},\varepsilon_{\rm width},\omega_{\rm low},\tilde{a}_0)\,.
\end{gather*}
In order to combine the estimates in each of this regimes, we must therefore fix $E$, $E_W$, $\varepsilon_{\rm width}$, $\omega_{\rm high}$, $r_0'$, $\omega_{\rm low}$, $\tilde{a}_0$ and $\tilde{a}_1$.

First, let us choose $\varepsilon_{\rm width}$ sufficiently small, depending on $a_0$, $\beta_1$ and $\beta_2$, consistent with the requirements of Propositions~\ref{prop:comp2}, \ref{prop:angular-dominated} and \ref{prop:superradiant}, and with the requirement that $\varepsilon_{\rm width}^{-1}\geq s^2+2M|s|$.

Now, if the all the precise boundary relations of Lemma~\ref{lemma:uppsi-general-asymptotics} hold, we can choose $E_W$ to be sufficiently large, depending on the already fixed $\varepsilon_{\rm width}$, in accordance with the statements of Propositions~\ref{prop:low1a}, \ref{prop:low1b}, \ref{prop:low1c}, \ref{prop:low2}, \ref{prop:int1},  \ref{prop:time-dominated}, \ref{prop:comp2}, \ref{prop:angular-dominated}, \ref{prop:superradiant}, and \ref{prop:comp1}. Note that this is certainly the case if one is considering \textit{homogeneous} solutions to the radial ODEs. On the other hand, for  $k=0,\dots,|s|$, $\uppsi_k$ are outgoing in the sense of Definition~\ref{def:outgoing-bdry-teukolsky} but do not necessarily verify all the relations in Lemma~\ref{lemma:uppsi-general-asymptotics}, we choose $E$ to be sufficiently large, depending on the already fixed $\varepsilon_{\rm width}$, as per the statements of the aforementioned propositions. In this case, by Propositions~\ref{prop:low1a}, \ref{prop:low1b} and \ref{prop:low2}, we must pick up the error term \eqref{eq:ode-estimates-outgoing-addition}.

By imposing that $\omega_{\rm high}\geq 1$ is sufficiently large in terms of all the remaining parameters, we can now determine $r_0'$ by Lemma~\ref{lemma:V0-trapping} and Propositions~\ref{prop:comp2}, \ref{prop:comp1}, \ref{prop:angular-dominated}, \ref{prop:superradiant} can be applied; we fix $\omega_{\rm high}$ so that this is the case. Moreover, from Definition~\ref{def:admissible-freqs}\ref{it:admissible-freqs-triple-Lambda-lower-bound}, we can ensure
\begin{align*}
m^2\leq \Lambda+2|s||a\omega|+s^2\leq 2\varepsilon_{\rm width}^{-1}\omega_{\rm high}^2\,,
\end{align*}
whenever $|\omega|\leq \omega_{\rm high}$ and $\Lambda\leq \varepsilon_{\rm width}^{-1}\omega_{\rm high}^2$.

The only parameters left to fix are those used to separate the bounded frequency ranges with smallness. We first choose $\tilde{a}_0$, $\tilde{a}_1$ and $\omega_{\rm low}$ to be sufficiently small that Propositions~\ref{prop:low1a} and  \ref{prop:low1c} can be applied; we fix $\tilde{a}_0$ and $\tilde{a}_1$ in this way. Then, we fix $\omega_{\rm low}$ depending on $\tilde{a}_0$ and $\tilde{a}_1$ so that Propositions~\ref{prop:low1b} and \ref{prop:low2} hold. 

Finally, with $\omega_{\rm low}$, $\omega_{\rm high}$ and $\varepsilon_{\rm width}$ fixed, we choose $R^*$ to be sufficiently large so that Proposition~\ref{prop:int1} applies.
\end{proof}

\begin{proof}[Proof of Theorem~\ref{thm:ode-estimates-Psi-fixed-m}] We need only modify our proof of Theorem~\ref{thm:ode-estimates-Psi-A} in the following manner: we choose $\omega_{\rm high}$ sufficiently large, now depending on $|m|$, so that, in addition to all other constraints, the hypothesis of Lemma~\ref{lemma:critical-points-V0}(iii) hold.
\end{proof}

%\total{citnum}

\bibliographystyle{../../halpha-abbrv-rita}
{\small \bibliography{unpub,../../library}}

\end{document}